\providecommand{\U}[1]{\protect\rule{.1in}{.1in}}
\newtheorem{theorem}{Theorem}
\newtheorem{algorithm}[theorem]{Algorithm}
\newtheorem{corollary}[theorem]{Corollary}
\newtheorem{definition}[theorem]{Definition}
\newtheorem{lemma}[theorem]{Lemma}
\newtheorem{proposition}[theorem]{Proposition}
\newtheorem{remark}[theorem]{Remark}
\newenvironment{proof}[1][Proof]{\noindent\textbf{#1.} }{\ \rule{0.5em}{0.5em}}
\begin{document}

\title{Higher order Influence Functions and Minimax Estimation of Nonlinear Functionals}
\runningtitle{Higher Order Influence Functions}
\author{James Robins$^{a}$, Lingling Li$^{b}$, Eric Tchetgen Tchetgen$^{a}$, Aad van
der Vaart$^{c}$\\$^{a}$Harvard School of Public Health, $^{b}$Harvard Pilgrim Health Care
Institute, $^{c}$University of Leiden }
\keyphrases{U-statistics, Minimax, Influence Functions, Nonparametric, Robust Inference}
\primaryclass{123}
\secondaryclass{456}
\maketitle

\begin{abstract}
Robins et al, 2008, published a theory of higher order influence functions for
inference in semi- and non-parametric models. \ This paper is a comprehensive
manuscript from which Robins et al, was drawn. \ The current paper includes
many results and proofs that were not included in Robins et al due to space
limitation. Particular results contained in the present paper that were not
reported in Robins et al include the following. Given a set of functionals and
their corresponding higher order influence functions, we show how to derive
the higher order influence function of their product. We apply this result to
obtain higher order influence functions and associated estimators for the mean
of a response $Y$ subject to monotone missingness under missing at random.
These results also apply to estimating the causal effect of a time dependent
treatment on an outcome $Y$ in the presence of time-varying confounding.
\ Finally, we include an appendix that contains proofs for all theorems that
were stated without proof in Robins et al, 2008. The initial part of the paper
is closely related to Robins et al,, the latter parts differ.

\ Specifically, we present a theory of point and interval estimation for
nonlinear functionals in parametric, semi-, and non-parametric models based on
higher order influence functions (Robins \cite{robins3}, Sec. 9; Li et al.
\cite{Li}, Tchetgen et al, \cite{tchetgen}, Robins et al, \cite{Vaart2}).
Higher order influence functions are higher order U-statistics. Our theory
extends the first order semiparametric theory of Bickel et al. \cite{bickel1}
and van der Vaart \cite{vaart} by incorporating the theory of higher order
scores considered by Pfanzagl \cite{pfanzagl}, Small and McLeish
\cite{mcleish} and Lindsay and Waterman \cite{lindsay}. The theory reproduces
many previous results, produces new non-$\sqrt{n}$ results, and opens up the
ability to perform optimal non-$\sqrt{n}$ inference in complex high
dimensional models. We present novel rate-optimal point and interval
estimators for various functionals of central importance to biostatistics in
settings in which estimation at the expected $\sqrt{n}$ rate is not possible,
owing to the curse of dimensionality. We also show that our higher order
influence functions have a multi-robustness property that extends the double
robustness property of first order influence functions described by Robins and
Rotnitzky \cite{rotnitzky} and van der Laan and Robins \cite{laan2}.

\end{abstract}

%\newline

\section{Introduction}

Robins et al, 2008, published a theory of higher order influence functions for
inference in semi- and non-parametric models. \ This paper is a comprehensive
manuscript from which Robins et al, was drawn. \ The current paper includes
many results and proofs that were not included in Robins et al due to space
limitation. Particular results contained in the present paper that were not
reported in Robins et al include the following. Given a set of functionals and
their corresponding higher order influence functions, we show how to derive
the higher order influence function of their product. We apply this result to
obtain higher order influence functions and associated estimators for the mean
of a response $Y$ subject to monotone missingness under missing at random.
These results also apply to estimating the causal effect of a time dependent
treatment on an outcome $Y$ in the presence of time-varying confounding.
\ Finally, we include an appendix that contains proofs for all theorems that
were stated without proof in Robins et al, 2008. The initial part of the paper
is closely related to Robins et al,, the latter parts differ.

We have developed a theory of point and interval estimation for nonlinear
functionals $\psi\left(  F\right)  $ in parametric, semi-, and non-parametric
models based on higher order likelihood scores and influence functions that
applies equally to both $\sqrt{n}$ and non-$\sqrt{n}$ problems (Robins 2004,
Sec. 9; Li et al, 2006, Tchetgen et al, 2006, Robins et al, 2007). The theory
reproduces results previously obtained by the modern theory of non-parametric
inference, produces many new non-$\sqrt{n}$ results, and most importantly
opens up the ability to perform non-$\sqrt{n}$ inference in complex high
dimensional models, such as models for the estimation of the causal effect of
time varying treatments in the presence of time varying confounding and
informative censoring. See Tchetgen et al. (2007) for examples of the latter.

Higher order influence functions are higher order U-statistics. Our theory
extends the first order semiparametric theory of Bickel et al. (1993) and van
der Vaart (1991) by incorporating the theory of higher order scores and
Bhattacharrya bases considered by Pfanzagl (1990), Small and McLeish (1994)
and Lindsay and Waterman (1996).

The purpose of this paper is to demonstrate the scope and flexibility of our
methodology by deriving rate-optimal point and interval estimators for various
functionals that are of central importance to biostatistics. We now describe
some of these functionals. We suppose we observe i.i.d copies of a random
vector $O=\left(  Y,A,X\right)  $ with unknown distribution $F$ on each of $n$
study subjects. In this paper, we largely study non-parametric models that
place no restrictions on $F,$ other than bounds on both the $L_{p}$ norms and
on the smoothness of certain density and conditional expectation functions.
The variable $X$ represents a random vector of baseline covariates such as
age, height, weight, hematocrit, and laboratory measures of lung, renal,
liver, brain, and heart function. $X$ is assumed to have compact support and a
density $f_{X}\left(  x\right)  $ with respect to the Lebesgue measure in
$R^{d},\ $where, in typical applications, $d$ is in the range 5 to 100. $A$ is
a binary treatment and $Y$ is a response, higher values of which are
desirable. Then, in the absence of confounding by additional unmeasured
factors, the functional $\psi\left(  F\right)  =E\left\{  E\left[
Y|A=1,X\right]  \right\}  -E\left\{  E\left[  Y|A=0,X\right]  \right\}  $ is
the mean effect of treatment in the total study population. Our results for
$E\left\{  E\left[  Y|A=1,X\right]  \right\}  -E\left\{  E\left[
Y|A=0,X\right]  \right\}  $ follow from results for the functional
$\psi\left(  F\right)  =E\left\{  E\left[  Y|A=1,X\right]  \right\}  $ based
on data $\left(  AY,A,X\right)  $ rather than $\left(  Y,A,X\right)  .$ If $Y$
is missing for some study subjects, and $A$ is now the indicator that takes
the value $1$ when $Y$ is observed and zero otherwise, then the functional
$E\left\{  E\left[  Y|A=1,X\right]  \right\}  $ is the marginal mean of $Y$
under the missing at random assumption that the probability $P\left[
A=0|X,Y\right]  =P\left[  A=0|X\right]  \ $ that $Y$ is missing does not
depend on the unobserved $Y.$

Returning to data $O=\left(  Y,A,X\right)  ,$ the functional
\begin{align*}
\psi\left(  F\right)   &  =E\left\{  Cov\left(  Y,A|X\right)  \right\}
/E\left[  var\left\{  A|X\right\}  \right] \\
&  =E\left[  w\left(  X\right)  \left\{  E\left[  Y|A=1,X\right]  -E\left[
Y|A=0,X\right]  \right\}  \right]
\end{align*}
with $w\left(  X\right)  =var\left\{  A|X\right\}  /E\left[  var\left\{
A|X\right\}  \right]  \ $is the variance weighted average treatment effect.
Our results for $E\left\{  Cov\left(  Y,A|X\right)  \right\}  /E\left[
var\left\{  A|X\right\}  \right]  $ are derived from results for the
functionals $\psi\left(  F\right)  =E\left\{  Cov\left(  Y,A|X\right)
\right\}  $ and $\psi\left(  F\right)  =E\left[  \left\{  E\left(  Y|X\right)
\right\}  ^{2}\right]  .$

We note that Robins and van der Vaart's (2006) construction of an adaptive
confidence set for a regression function $E\left(  Y|X=x\right)  $ depended on
being able to construct a confidence interval for $\psi\left(  F\right)
=E\left[  \left\{  E\left(  Y|X\right)  \right\}  ^{2}\right]  .$ They
constructed an interval for $E\left[  \left\{  E\left(  Y|X\right)  \right\}
^{2}\right]  $ when the marginal distribution of $X$ was known. In this paper,
we construct a confidence interval for $E\left[  \left\{  E\left(  Y|X\right)
\right\}  ^{2}\right]  $ when the marginal of $X$ is unknown and, in Section
\ref{adaptive_section}, use it to obtain an adaptive confidence set for
$E\left(  Y|X=x\right)  $.

The functional $E\left\{  Cov\left(  Y,A|X\right)  \right\}  $ is the
functional $E\left\{  var\left(  Y|X\right)  \right\}  $ in the special case
in which $Y=A$ wp1. Minimax estimation of $var\left(  Y|X\right)  $ has
recently been discussed by Wang et al. (2006) and Cai et al. (2006) in the
setting of non-random $X$.

The function $\gamma\left(  x\right)  =E\left[  Y|A=1,X=x\right]  -E\left[
Y|A=0,X=x\right]  $ is the effect of treatment on the subgroup with $X=x.$ It
is important to estimate the function $\gamma\left(  x\right)  $, in addition
to the average treatment effect in the total population, because treatment
should be given, since beneficial, to those subjects with $\gamma\left(
x\right)  >0$ but withheld, since harmful, from subjects with $\gamma\left(
x\right)  <0.$ We show that one can obtain adaptive confidence sets for
$\gamma\left(  x\right)  $ if one can set confidence intervals for the
functional $\psi\left(  F\right)  =E\left[  \gamma\left(  X\right)
^{2}\right]  $. We construct intervals for $E\left[  \gamma\left(  X\right)
^{2}\right]  $ under the additional assumption that the data $O=\left(
Y,A,X\right)  $ came from a randomized trial. In a randomized trial, in
contrast to an observational study, the randomization probabilities, $P\left(
A=1|X\right)  =E\left(  A|X\right)  $ are known by design. We plan to report
confidence intervals for $E\left[  \gamma\left(  X\right)  ^{2}\right]  $ with
$E\left(  A|X\right)  $ unknown elsewhere.

All of the above functionals $\psi\left(  F\right)  $ have a positive
semiparametric information bound (SIB) and thus a (first order) efficient
influence function with a finite variance. In fact all the functionals
$\psi\left(  F\right)  $ have efficient influence function
\begin{equation}
IF\left(  b\left(  F\right)  ,p\left(  F\right)  ,\psi\left(  F\right)
\right)  \equiv if\left(  O,b\left(  X,F\right)  ,p\left(  X,F\right)
,\psi\left(  F\right)  \right) \label{iif}%
\end{equation}
where $b\left(  x,F\right)  ,p\left(  x,F\right)  $ are monotone functions of
certain conditional expectations, and, for any $b^{\ast}\left(  x\right)
,p^{\ast}\left(  x\right)  ,$%
\[
E_{F}\left[  IF\left(  b^{\ast},p^{\ast},\psi\left(  F\right)  \right)
\right]  =E_{F}\left[  h_{1}\left(  O\right)  \left\{  b^{\ast}\left(
X\right)  -b\left(  X;F\right)  \right\}  \left\{  p^{\ast}\left(  X\right)
-p\left(  X;F\right)  \right\}  \right]
\]
where $h_{1}\left(  O\right)  $ is a known function. \ We refer to functionals
in our class as doubly-robust to indicate that $IF\left(  b\left(  F\right)
,p\left(  F\right)  ,\psi\left(  F\right)  \right)  $ continues to have mean
zero when either (but not both) $p\left(  F\right)  $ is misspecified as
$p^{\ast}$ or $b\left(  F\right)  $ is misspecified as $b^{\ast}.$ \ The
functions $b\left(  x,F\right)  ,p\left(  x,F\right)  ,if\left(  O,b\left(
X,F\right)  ,p\left(  X,F\right)  ,\psi\left(  F\right)  \right)  ,$ and
$h_{1}\left(  O\right)  $ differ depending on the functional $\psi\left(
F\right)  $ of interest.

As the functionals $\psi\left(  F\right)  $ are all closely related, we shall
use $E\left\{  Cov\left(  Y,A|X\right)  \right\}  $ as a prototype in this
introduction. For $\psi\left(  F\right)  \equiv E\left\{  Cov\left(
Y,A|X\right)  \right\}  ,$ $b\left(  X;F\right)  =E_{F}\left(  Y|X\right)  ,$
$p\left(  X;F\right)  =E_{F}\left(  A|X\right)  ,$%
\[
IF\left(  b\left(  F\right)  ,p\left(  F\right)  ,\psi\left(  F\right)
\right)  =\left\{  Y-b\left(  X;F\right)  \right\}  \left\{  A-p\left(
X;F\right)  \right\}  -\psi\left(  F\right)  ,
\]
and $h_{1}\left(  O\right)  \equiv1$.

Whenever a functional $\psi\left(  F\right)  $ has a non-zero SIB, given
sufficiently stringent bounds on $L_{p}$ norms and on smoothness, it is
possible to use the estimated first order influence function to construct
regular estimators and honest asymptotic confidence intervals whose width
shrinks at the usual parametric rate of $n^{-1/2}$ $.$ We recall that, by
definition, regular estimators are $n^{1/2}$-consistent. When $X$ is high
dimensional, the apriori smoothness restrictions on $p\left(  X;F\right)  $
and $b\left(  X;F\right)  $ necessary for point or interval estimators of
$E\left\{  Cov\left(  Y,A|X\right)  \right\}  $ to achieve the parametric rate
of $n^{-1/2}$ are so severe as to be substantively implausible. As a
consequence, we replace the usual approach based on first order influence
functions by one based on higher order influence functions.

To provide quantitative results, we require a measure of the maximal possible
complexity (e.g. smoothness) of $p\left(  \cdot;F\right)  $ and $b\left(
\cdot;F\right)  $ believed substantively plausible.$\ $

$\ $We use H\"{o}lder balls for concreteness, although our methods extend to
other measures of complexity. A function $h\left(  \cdot\right)  $ lies in the
H\"{o}lder ball $H(\beta,C),$ with H\"{o}lder exponent $\beta>0$ and radius
$C>0,$ if and only if $h\left(  \cdot\right)  $ is bounded in supremum norm by
$C$ and all partial derivatives of $h(x)$ up to order $\left\lfloor
\beta\right\rfloor $ exist, and all partial derivatives of order $\left\lfloor
\beta\right\rfloor $ are Lipschitz with exponent $\left(  \beta-\left\lfloor
\beta\right\rfloor \right)  \ $and constant $C$. We make the assumption that
$b\left(  \cdot,F\right)  ,\ p\left(  \cdot,F\right)  \ $lie in given
H\"{o}lder balls $H(\beta_{b},C_{b}),$ $H(\beta_{p},C_{p}).$ Furthermore, it
turns out we must also make assumptions about the complexity of the function
$g\left(  X;F\right)  \equiv E_{F}\left[  h_{1}\left(  O\right)  |X\right]
f_{X}\left(  X\right)  ,$ which we take to lie in a given $H(\beta_{g}%
,C_{g}).$ For $\psi\left(  F\right)  =E\left\{  Cov\left(  Y,A|X\right)
\right\}  ,g\left(  X;F\right)  =f_{X}\left(  X\right)  $

Using higher order influence functions, we construct regular estimators and
honest (i.e uniform over our model) asymptotic confidence intervals for
functionals $\psi\left(  F\right)  $ in our class whose width shrinks at the
usual parametric rate of $n^{-1/2}$ whenever $\beta/d\equiv\frac{\beta
_{b}+\beta_{p}}{2}/d>1/4$ and $\beta_{g}>0.$ This result cannot be improved
on, since even when $g\left(  x\right)  $ is known apriori, $\beta/d>1/4$ is
necessary for a regular estimator to exist.

When $\beta/d\leq1/4$ and $g\left(  x\right)  $ is known apriori$,$ we have
shown using arguments similar to those of Birge and Massart (1995) that the
minimax rate of convergence for an estimator and minimax rate of shrinkage of
a confidence interval is $n^{-\frac{4\beta/d}{4\beta/d+1}}\geq n^{-\frac{1}%
{2}}.$ When $g\left(  x\right)  $ is unknown, we construct point and interval
estimators with this same rate of $n^{-\frac{4\beta/d}{4\beta/d+1}}$ whenever
\begin{equation}
\beta_{g}/d>\beta/d\frac{2\left(  \Delta+1\right)  \left(  1-4\beta/d\right)
}{\left(  \Delta+2\right)  \ \left(  1+4\beta/d\right)  -4\left(
\beta/d\right)  \left(  1-4\beta/d\right)  \left(  \ \Delta+1\right)
},\label{ief}%
\end{equation}
where $\Delta=\left\vert \frac{\beta_{p}}{\beta_{b}}-1\right\vert .$ For
example if $\Delta=0,\beta/d=1/8,$ we require $\beta_{g}/d\ $exceed $1/22$ to
achieve the rate $n^{-\frac{4\beta/d}{4\beta/d+1}}.$ When the previous
inequality does not hold and $\Delta=0,$ we have constructed, in a yet
unpublished paper, estimators that converge at rate
\begin{align}
&  \log\left(  n\right)  n^{-\frac{1}{2}+\frac{\beta_{g}/d}{1+2\beta_{g}%
/d}\frac{\left(  m^{\ast}+1\right)  ^{2}}{2\beta/d}},\text{ with}\label{bc}\\
m^{\ast}  &  \equiv\lceil\left(  \left[  \frac{\beta}{d}\left(  4\frac{\beta
}{d}+\left(  1-4\frac{\beta}{d}\right)  \frac{1+2\beta_{g}/d}{\beta_{g}%
/d}\right)  \right]  ^{1/2}-\left(  1+2\beta/d\right)  \right)  \rceil
.\nonumber
\end{align}
We conjecture that this rate is minimax, possibly up to log factors. In this
paper, however, the estimators we construct are inefficient when the previous
inequality fails to hold, converging at rates less than the conjectured
minimax rate of Eq $\left(  \ref{bc}\right)  $.

Let us return to the case where $Y=A$ {}{}{}{}{}{}{}{}{}${}{}{}{}${}{}{}{}%
{}${}${}${}${}{}${}$wp1. Then $\psi\left(  F\right)  =E\left\{  var\left(
Y|X\right)  \right\}  $ and $p\left(  \cdot\right)  =$ $b\left(  \cdot\right)
$ so $\Delta=0.$ Now, for fixed $\beta,$ Eq $\left(  \ref{bc}\right)  $
converges to $\log\left(  n\right)  n^{-2\beta/d}$ as $\beta_{g}\rightarrow0,$
which agrees (up to a log factor) with the minimax rate of $n^{-2\beta/d}$
given by Wang et al. (2006) and Cai et al. (2006) under the semiparametric
homoscedastic model $var\left(  Y|X\right)  =\sigma^{2}$ with equal-spaced
non-random $X.$ This result might suggest that $X\ $being random rather than
equal-spaced can result in faster rates of convergence only when the density
of $X$ has some smoothness, as quantified here by $\beta_{g}>0.$ But this
suggestion is not correct. Recall that we obtained the rate $\log\left(
n\right)  n^{-2\beta/d}$ for $\psi\left(  F\right)  =E\left\{  var\left(
Y|X\right)  \right\}  $ as $\beta_{g}\rightarrow0$ under a non-parametric
model. In section {\ref{minimax_section}}, we construct a simple estimator of
$\sigma^{2}$ under the homoscedastic model with $X$ random with unknown
density that, for $\beta/d<1/4$, $\beta<1,$ and without smoothness
restrictions on $f_{X}\left(  x\right)  $, converges at the rate
$n^{-\frac{4\beta/d}{4\beta/d+1}},$ which is faster than the equal-spaced
non-random minimax rate of $n^{-2\beta/d}.$

The paper is organized as follows. In Section 2, we define the higher order
(estimation) influence functions of a functional $\psi\left(  F\right)  $ for
$F$ contained in a model $\mathcal{M}$ and prove two fundamental theorems -
the extended information equality theorem and the efficient estimation
influence function theorem. Further, in the context of a parametric model
whose dimension increases with sample size, we outline why estimators based on
higher order influence can outperform those based on first order influence
functions in high-dimensional models. In Section 3, we introduce the class of
functionals we study in the remainder of the paper and describe their
importance in biostatistics. The theory of section 2, however, is not directly
applicable to these functionals because they have first order but not higher
order influence functions. We show that higher order influence functions fail
to exist precisely because the Dirac delta function is not an element of the
Hilbert space $L_{2}$ of square integrable functions. We describe two
approaches to overcoming this difficulty. The first approach is based on
approximating the Dirac delta function by a projection operator onto a
subspace of $L_{2}$ of dimension $k\left(  n\right)  ,$ where $k\left(
n\right)  $ can be as large as the square of the sample size $n.$ The second
approach is based on approximating the functional $\psi\left(  F\right)  $ by
a truncated functional $\widetilde{\psi}_{k\left(  n\right)  }\left(
F\right)  $. The truncated functional has influence functions of all orders,
is equal to $\psi\left(  F\right)  $ if either a $k\left(  n\right)  $
dimensional working parametric model (with $k\left(  n\right)  <n^{2})$ for
the function $b\left(  \cdot\right)  $ or the function $p\left(  \cdot\right)
$ in Eq. $\left(  {\ref{iif}}\right)  $ is correct, and remains close to
$\psi\left(  F\right)  $ even if both working models are misspecified. We then
use higher order influence function based estimators of $\widetilde{\psi
}_{k\left(  n\right)  }\left(  F\right)  $ as estimators of $\psi\left(
F\right)  $. These estimators $\widehat{\psi}_{m,k\left(  n\right)  }$ are
asymptotically normal with variance and bias for $\psi\left(  F\right)  $
depending both on the choice of the dimension $k\left(  n\right)  $ of the
working models and on the order $m$ of the influence function of
$\widetilde{\psi}_{k\left(  n\right)  }\left(  F\right)  .$ We show that these
same estimators $\widehat{\psi}_{m,k\left(  n\right)  }$ can also be obtained
under the approximate Dirac delta function approach. We derive the optimal
estimator $\widehat{\psi}_{m_{opt},k_{opt}\left(  n\right)  }\left(  \beta
_{b},\beta_{p},\beta_{g}\right)  $ in the class as a function of the
H\"{o}lder balls in which the functions $b,p,$ and $g$ are assumed to lie.
Finally we conclude section 3 by showing that the estimators $\widehat{\psi
}_{m,k\left(  n\right)  }$ have a multi-robustness property that extends the
double-robustness property of the first order influence function estimator
$\widehat{\psi}_{1}.$

In Section 4, we consider whether the estimators $\widehat{\psi}%
_{m_{opt},k_{opt}\left(  n\right)  }\left(  \beta_{b},\beta_{p},\beta
_{g}\right)  $ are rate-minimax. We show that whenever $\beta/d\equiv
\frac{\beta_{b}+\beta_{p}}{2}/d>1/4\ $and $\beta_{g}>0$,\newline%
$\widehat{\psi}_{m_{opt},k_{opt}\left(  n\right)  }\left(  \beta_{b},\beta
_{p},\beta_{g}\right)  $ is not only rate minimax but is semiparametric
efficient. Further, by letting the order $m=m\left(  n\right)  $ of the
U-statistic depend on sample size, we construct a single estimator
$\widehat{\psi}_{m\left(  n\right)  ,k\left(  n\right)  }$ that is
semiparametric efficient for all $\beta/d>1/4$ even when $g\left(
\cdot\right)  $ cannot be estimated at an algebraic rate. We show, however,
that when $\beta/d<1/4,\ \widehat{\psi}_{m_{opt},k_{opt}\left(  n\right)
}\left(  \beta_{b},\beta_{p},\beta_{g}\right)  $ does not in general converge
at the minimax rate. In Section 4.1, however, we construct a new estimator
$\widehat{\psi}_{\mathcal{K}_{J}}^{eff}\left(  \beta_{g},\beta_{b},\beta
_{p}\right)  $ that converges at the minimax rate of $n^{-\frac{4\beta
/d}{4\beta/d+1}}$ whenever Eq. $\left(  {\ref{ief}}\right)  $ holds. In
Section 5, we use the results obtained in earlier sections to construct
adaptive confidence intervals for a regression function $E\left[
Y|X=x\right]  $ when the marginal of $X$ is unknown and for the treatment
effect function and optimal treatment regime in a randomized clinical trial.
In Section 6.1, we discuss how to obtain higher order U-statistic point
estimators and confidence intervals for functionals $\tau\left(  F\right)  $
that are implicitly defined as the solution to an equation $\psi\left(
\tau,F\right)  =0.$ In Section 6.2, we define higher order testing influence
functions and efficient scores and describe their relationship to the higher
order estimation influence functions and efficient influence functions of
Section 2. Finally, in Section 6.3, we discuss the relationship between the
higher order U-statistic point estimators of an implicitly defined functional
$\tau\left(  F\right)  $ and higher order testing influence functions.

Before proceeding, several additional comments are in order. In this paper, we
investigate the asymptotic properties of our higher order U-statistic point
and interval estimators. The reader is referred to Li et al (2006) for an
investigation of the finite sample properties of our procedures through
simulation.\ \ In addition, precise regularity conditions are sometimes
omitted from both the statements and the proofs of various theorems. This
reflects the fact that the goal of this paper is to provide a broad overview
of our theory as it currently stands.

Different subject matter experts will clearly disagree as to the maximum
possible complexity of $p\left(  x;F\right)  $,$b\left(  x;F\right)  $ and
$g\left(  x;F\right)  .$ Thus it is important to have methods that adapt to
the actual smoothness of these functions. Elsewhere, we plan to provide point
estimators that optimally adapt to unknown smoothness. In contrast to point
estimators, however, for honest confidence intervals, the degree of possible
adaption to unknown smoothness is small. Therefore we propose that an analyst
should report a mapping from apriori smoothness assumptions encoded in the
exponents and radii of H\"{o}lder balls (or in other measures of complexity)
to the associated optimal $1-\alpha$ honest confidence intervals proposed in
this paper. Such a mapping is finally only useful if substantive experts can
approximately quantify their informal opinions concerning the shape and
wiggliness of $p,b,$and $g$ using the measure of complexity on offer by the
analyst. It is an open question which, if any, complexity measure is suitable
for this purpose.

Finally, most of our mathematical results concern rates of convergence. We
offer only a few results on the constants in front of those rates. This is not
because the constant is less important than the rate in predicting how a
proposed procedure will perform in the moderate sized samples occurring in
practice. Rather, at present, we do not possess the mathematical tools
necessary to obtain useful results concerning constants. A more extended
discussion of the issue is found in Section 3 of Li et al. (2006).

In the following, we use $X_{n}\asymp Y_{n}$ to mean $X_{n}=O_{p}\left(
Y_{n}\right)  $ and $Y_{n}=O_{p}\left(  X_{n}\right)  ;$ {}$X_{n}\sim Y_{n}$
to mean $\frac{X_{n}}{Y_{n}}\overset{P}{\rightarrow}1$ $;$ and $X_{n}\gg
Y_{n}$ $\left(  X_{n}\ll Y_{n}\right)  $ to respectively mean $\frac{Y_{n}%
}{X_{n}}\overset{P}{\rightarrow}0$ $\left(  \frac{X_{n}}{Y_{n}}%
\overset{P}{\rightarrow}0\right)  $ as $n\rightarrow\infty.$

\section{Theory of Higher Order Influence Functions\label{hoif}}

Given $n$ i.i.d observations $\mathbf{O\equiv O}_{n}\mathbf{\equiv}\left\{
O_{i},i=1,...n\right\}  $ from a model
\[
\mathcal{M}\left(  \Theta\right)  =\left\{  F(\cdot;\theta),\theta\in
\Theta\right\}  ,
\]
we consider inference on a nonlinear functional $\psi\left(  \theta\right)  .$
In general, $\psi\left(  \theta\right)  $ can be infinite dimensional but for
now we only consider the one dimensional case. In the following all quantities
can depend on the sample size $n,$ including the support of $O,$ the parameter
space $\Theta,$ and the functional $\psi\left(  \theta\right)  $. We generally
suppress the dependence on $n$ in the notation. We will be particularly
interested in models in which the parameter $\theta$ is infinite dimensional
and $\theta,\Theta,$ and $\psi\left(  \cdot\right)  $ do not depend on $n.$ We
also briefly discuss models in which subvectors of $\theta$ are
finite-dimensional parameters whose dimension $k\left(  n\right)  =n^{1+\rho}$
increases as power $1+\rho$ (often $\rho>0)$ of $n$ and thus $\theta
_{n},\Theta_{n},$ and $\psi_{n}\left(  \cdot\right)  $ depend on $n.$

Our first task is to define higher order influence functions. \ Before
proceeding we recall some facts about $U-$statistics. Consider a function
$b_{m}\left(  o_{1},o_{2},...,o_{m}\right)  \equiv b\left(  o_{1}%
,o_{2},...,o_{m}\right)  $ where we often suppress $b^{\prime}s$ subscript
$m.$ For integers $i_{1},i_{2}...,i_{m}\ $lying in $\left\{  1,...,n\right\}
,$ we define
\[
B_{m,i_{1},...,i_{m}}\mathbf{\equiv}b_{m}\left(  O_{i_{1}},O_{i_{2}%
},...,O_{i_{m}}\right)  \equiv b\left(  O_{i_{1}},O_{i_{2}},...,O_{i_{m}%
}\right)  .
\]
and%
\[
\mathbb{V}_{n}\left[  b_{m\ }\right]  \mathbf{\equiv}\frac{(n-m)!}{n!}%
\sum\limits_{i_{1}\neq i_{2}...\neq i_{m}}B_{m,i_{1},...,i_{m}}.
\]

In an abuse of notation, we will consider the following expressions to be
equivalent
\[
\mathbb{V}_{n}\left[  B_{m\ }\right]  \mathbf{\equiv}\mathbb{V}_{n}\left[
B_{m,,i_{1},...,i_{m}}\right]  \mathbf{\equiv}\mathbb{V}_{n}\left[
b_{m\ }\right]  .
\]
Thus $\mathbb{V}_{n}\left[  b_{m\ }\right]  $ is a $mth$ order U-statistic
with kernel $b_{m}\left(  o_{1},o_{2},...,o_{m}\right)  $. We do not assume
that $b_{m}\left(  o_{1},o_{2},...,o_{m}\right)  $ is symmetric. We will write
$\mathbb{V}_{n}\left[  B_{m}\right]  $ as $\mathbb{B}_{n,m}.$ So, suppressing
the dependence on $n,$ $\ \mathbb{B}_{m}\mathbf{\equiv}\mathbb{V}\left[
B_{m}\right]  $.

Any $\mathbb{B}_{m}\mathbb{\ }$has a unique (up to permutation) decomposition
$\mathbb{B}_{m}=\sum_{s=1}^{m}\mathbb{D}_{s}^{\left(  b\right)  }\left(
\theta\right)  $ under any $F(.;\theta)$ as a sum of degenerate U-statistics
$\mathbb{D}_{s}^{\left(  b\right)  }\left(  \theta\right)  ,$ where degeneracy
of $\mathbb{D}_{s}^{\left(  b\right)  }\left(  \theta\right)  $ means that
$D_{s}^{\left(  b\right)  }\left(  \theta\right)  =d_{s}^{\left(  b\right)
}\left(  O_{i_{1}},O_{i_{2}},...,O_{i_{s}};\theta\right)  $ satisfies
\[
E_{\theta}\left[  d_{s}^{\left(  b\right)  }\left(  o_{i_{1}},...o_{i_{l-1}%
},O_{i_{l}},o_{i_{l+1}}...,o_{i_{s}};\theta\right)  \right]  =0,l=1,....,s
\]
where upper and lower case letters, respectively, denote random variables and
their possible realizations.

Let $\mathcal{U}_{m}\left(  \theta\right)  $ be the Hilbert space of all
$U-$statistics of order $m$ with mean zero and finite variance with inner
product defined by covariances with respect to the $n$-fold product measure
$F^{n}\left(  \cdot;\theta\right)  .$ Note that any $U-$statistic
$\mathbb{B}_{s}$ of order $s$, $s<m,$ is also an mth order $U-$ statistic with
$\mathbb{D}_{l}^{\left(  b\right)  }\left(  \theta\right)  $ identically zero
for $m\geq l>s$ .

Since any two degenerate $U-$ statistics of different orders are uncorrelated,
the $\mathcal{U}_{m}\left(  \theta\right)  $-Hilbert space projection of
$\mathbb{B}_{m}$ on $\mathcal{U}_{l}\left(  \theta\right)  $ is $\sum
_{s=1}^{l}\mathbb{D}_{s}^{\left(  b\right)  }\left(  \theta\right)  $ for
$l<m.$ Thus a $U-$statistic $\mathbb{B}_{m}$ is degenerate $\Leftrightarrow$
$\mathbb{B}_{m}=$ $\mathbb{D}_{m}^{\left(  b\right)  }\left(  \theta\right)  $
$\Leftrightarrow$ $\Pi_{\theta}\left[  \mathbb{B}_{m}|\mathcal{U}_{m-1}\left(
\theta\right)  \right]  =0\Leftrightarrow\mathbb{B}_{m}\in\mathcal{U}%
_{m-1}\left(  \theta\right)  ^{\perp_{m,\theta}},\ $where $\Pi_{\theta}\left[
\cdot|\cdot\right]  \mathbf{\equiv}\Pi_{\theta,m}\left[  \cdot|\cdot\right]  $
is the projection operator of the Hilbert space $\mathcal{U}_{m}\left(
\theta\right)  $ (with the dependence on $m$ suppressed when no ambiguity can
arise) and, for any linear subspace $\mathcal{R}$ of $\mathcal{U}_{m}\left(
\theta\right)  ,$ $\mathcal{R}^{\perp_{m,\theta}}$ is its orthocomplement in
the Hilbert space $\mathcal{U}_{m}\left(  \theta\right)  .$ Given any
$\mathbb{B}_{m}=\mathbb{V}\left[  B_{m}\right]  ,$ $\mathbb{D}_{m}^{\left(
b\right)  }\left(  \theta\right)  $ is explicitly given by $\mathbb{V}\left[
d_{m,\theta}\left\{  B_{m}\right\}  \right]  $ where $d_{m,\theta}\ $\ maps
$\ B_{m\ }\equiv b\left(  O_{i_{1}},O_{i_{2}},..O_{i_{m}}\right)  $ to
\begin{gather}
d_{m,\theta}\left\{  B_{m}\right\}  =b\left(  O_{i_{1}},O_{i_{2}},..O_{i_{m}%
}\right) \label{deg}\\
+\sum_{t=0}^{m-1}\left(  -1\right)  ^{m-t}\sum_{i_{r_{1}}\neq i_{r_{2}}..\neq
i_{r_{t}}}E_{\theta}\left(  b\left(  O_{i_{1}},O_{i_{2}},..O_{i_{m}}\right)
|O_{i_{r_{1}}},O_{i_{r_{2}}},..O_{i_{r_{t}}}\right) \nonumber
\end{gather}

Given a function $g\left(  \zeta\right)  ,\zeta\mathbf{\equiv}\left\{
\zeta_{1},...,\zeta_{r}\right\}  ^{T},$ define for $m=0,1,2,...,$
\[
g_{\backslash\overline{l}_{m}}\left(  \zeta\right)  \mathbf{\equiv
}g_{\backslash l_{1},...l_{m}}\left(  \zeta\right)  \mathbf{\equiv}%
\frac{\partial^{m}g\left(  \zeta\right)  }{\partial\zeta_{l_{1}}%
...\partial\zeta_{l_{m}}}%
\]
with $l_{s}\in\left\{  1,...,r\right\}  $ where the $\backslash$ symbol
denotes differentiation by the variables occurring to its right and the
overbar $\overline{l}_{m}$ denotes the vector $\left(  l_{1},...l_{m}\right)
$. \ Given a sufficiently smooth $r-$dimensional parametric submodel
$\widetilde{\theta}\left(  \zeta\right)  $ mapping $\zeta\in R^{r}$
injectively into $\Theta$, define for $\theta$ in the range of
$\widetilde{\theta}\left(  \cdot\right)  ,$ $\psi_{\backslash\overline{l}_{m}%
}\left(  \theta\right)  \mathbf{\equiv}\left(  \psi\circ\widetilde{\theta
}\right)  _{\backslash l_{1},...l_{m}}\left(  \zeta\right)  |_{\zeta
=\widetilde{\theta}^{-1}\left\{  \theta\right\}  }$ and $f_{\backslash
\overline{l}_{m}}\left(  \mathbf{O}_{n};\theta\right)  \mathbf{\equiv}\left(
f\circ\widetilde{\theta}\right)  _{\backslash l_{1},...l_{m}}\left(
\zeta\right)  |_{\zeta=\widetilde{\theta}^{-1}\left\{  \theta\right\}  },$
where $f\left(  \mathbf{O}_{n}\mathbf{;}\theta\right)  \equiv\prod_{i}%
f(O_{i};\theta)$ is the density of $\mathbf{O}_{n}\ $wrt a dominating measure.
That is $\psi_{\backslash\overline{l}_{m}}\left(  \theta\right)  $ and
$f_{\backslash\overline{l}_{m}}\left(  \mathbf{O}_{n}\mathbf{,}\theta\right)
$ are higher order derivatives of $\psi\left(  \cdot\right)  \ $and $f\left(
\mathbf{O}_{n}\mathbf{;}\cdot\right)  $ under a parametric submodel
$\widetilde{\theta}\left(  \zeta\right)  ,$ where the model $\widetilde{\theta
}$ has been suppressed in the notation. \ An $sth$ order score associated with
the submodel $\widetilde{\theta}\left(  \zeta\right)  $ is defined to be
\[
\widetilde{\mathbb{S}}_{s,\overline{l}_{s}}\left(  \theta\right)  \equiv
f_{\backslash\overline{l}_{s}}\left(  \mathbf{O}_{n}\mathbf{;}\theta\right)
/f\left(  \mathbf{O}_{n}\mathbf{;}\theta\right)
\]
where $\widetilde{\mathbb{S}}_{s,\overline{l}_{s}}\left(  \theta\right)  $ is
a U-statistic of order $s.$ To understand why $\widetilde{\mathbb{S}%
}_{s,\overline{l}_{s}}\left(  \theta\right)  $ is a $U-$statistic we provide
formulae for an arbitrary score $\widetilde{\mathbb{S}}_{s,\overline{l}_{s}%
}\left(  \theta\right)  $ in terms of the subject specific scores
\[
S_{l_{1}...l_{m},j}\left(  \theta\right)  \mathbf{\equiv}f_{/l_{1}...l_{m}%
,j}\left(  O_{j};\theta\right)  /f_{j}\left(  O_{j};\theta\right)
\]
$j=1,...,n$ for $s=1,2,3.$ \ Suppressing the $\theta-$dependence, results in
Waterman and Lindsay (1996) imply%
\[
\widetilde{\mathbb{S}}_{1,\overline{l}_{1}}=\sum_{j}S_{l_{1},j}%
\]

\[
\widetilde{\mathbb{S}}_{2,\overline{l}_{2}}=\sum_{j}S_{l_{1}l_{2},j}%
+\sum_{s\neq j}S_{l_{1},j}S_{l_{2},s}%
\]

\[
\widetilde{\mathbb{S}}_{3,\overline{l}_{3}}=\sum_{j}S_{l_{1}l_{2}l_{3},j}%
+\sum_{s\neq j}S_{l_{1}l_{2},j}S_{l_{3},s}+S_{l_{3}l_{2},j}S_{l_{1}%
,s}+S_{l_{1}l_{3},j}S_{l_{2},s}+\sum_{s\neq j\neq t}S_{l_{1},j}S_{l_{2}%
,s}S_{l_{3},t}.
\]

Note these equations express each $\widetilde{\mathbb{S}}_{m,\overline{l}_{m}%
}$ as a sum of degenerate U-statistics. We now define a $mth$ order estimation
influence function $\mathbb{IF}_{m,\psi\left(  \cdot\right)  }\left(
\theta\right)  \mathbf{\equiv}\mathbb{IF}_{m,\psi}\left(  \theta\right)
\mathbf{\equiv}\mathbb{IF}_{m}\left(  \theta\right)  $ for $\psi\left(
\theta\right)  $ where we suppress the dependence on $\psi$ when no ambiguity
will arise.

\begin{definition}
A U-statistic $\mathbb{IF}_{m}\left(  \theta\right)  $ of order $m$ and finite
variance is said to be an $mth$ order estimation influence function for
$\psi\left(  \theta\right)  $ if (i) $E_{\theta}\left[  \mathbb{IF}_{m}\left(
\theta\right)  \right]  =0,$ $\theta\in\Theta$ and (ii) for $s=1,2,...,m$ and
every suitably smooth and regular (see Appendix) $r$ dimensional parametric
submodel $\widetilde{\theta}\left(  \zeta\right)  ,r=1,2,...m,$%
\[
\psi_{\backslash\overline{l}_{s}}\left(  \theta\right)  =E_{\theta}\left[
\mathbb{IF}_{m}\left(  \theta\right)  \widetilde{\mathbb{S}}_{s,\overline
{l}_{s}}\left(  \theta\right)  \right]  .
\]
\ 
\end{definition}

Estimation influence functions need not always exist, but when they do they
are useful for deriving point estimators of $\psi$ with small bias and for
deriving confidence interval estimators centered on an estimate of $\psi.$ We
will generally refer to estimation influence functions simply as influence
functions. We remark that $\mathbb{IF}_{m}\left(  \theta\right)  $ is an
influence function under the above definition if and only if it is one under
the modified version in which the dimension of the parametric submodel
$\widetilde{\theta}\left(  \zeta\right)  $ is unrestricted. A key result is
the following theorem which is related to results of Small and McLeish (1994).

\begin{theorem}
\label{eiet}\textbf{Extended Information Equality Theorem:} Given a $mth$
order influence function $\mathbb{IF}_{m}\left(  \theta\right)  ,$ for any
smooth and regular submodels $\widetilde{\theta}\left(  \zeta\right)  $ and
$s\leq m,$%
\[
\partial^{s}E_{\theta}\left[  \mathbb{IF}_{m}\left(  \widetilde{\theta}\left(
\zeta\right)  \right)  \right]  /\partial\zeta_{l_{1}}...\partial
\zeta_{l_{_{s}}}|_{\zeta=\widetilde{\theta}^{-1}\left\{  \theta\right\}
}=-\psi_{\backslash\overline{l}_{s}}\left(  \theta\right)
\]
Thus, if the functionals $E_{\theta}\left[  \mathbb{IF}_{m}\left(
\theta^{\ast}\right)  \right]  $ and $-\left[  \psi\left(  \theta^{\ast
}\right)  -\psi\left(  \theta\right)  \right]  $ have bounded Fr\'{e}chet
derivatives w.r.t. $\theta^{\ast}$ to order $m+1$ for a norm $\left\vert
\left\vert \cdot\right\vert \right\vert ,$
\[
E_{\theta}\left[  \mathbb{IF}_{m}\left(  \theta+\delta\theta\right)  \right]
=-\left[  \psi\left(  \theta+\delta\theta\ \right)  -\psi\left(
\theta\right)  \right]  +\text{ }O\left(  \left\vert \left\vert \delta
\theta\ \right\vert \right\vert ^{m+1}\right)
\]
since the functions $E_{\theta}\left[  \mathbb{IF}_{m}\left(  \theta^{\ast
}\right)  \right]  $ and $-\left[  \psi\left(  \theta^{\ast}\right)
-\psi\left(  \theta\right)  \right]  $ of $\theta^{\ast}$ have the same Taylor
expansion around $\theta$ up to order $m.$
\end{theorem}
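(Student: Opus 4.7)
Define the two-variable function
\[
H(\zeta,\zeta^{\ast})\equiv E_{\widetilde{\theta}(\zeta)}\left[  \mathbb{IF}
_{m}(\widetilde{\theta}(\zeta^{\ast}))\right]  =\int\mathbb{IF}_{m}
(\mathbf{o};\widetilde{\theta}(\zeta^{\ast}))\,f(\mathbf{o};\widetilde{\theta
}(\zeta))\,d\mathbf{o},
\]
with $\zeta_{0}\equiv\widetilde{\theta}^{-1}\{\theta\}$; the theorem's target
is $\partial_{\zeta_{l_{1}}^{\ast}}\cdots\partial_{\zeta_{l_{s}}^{\ast}
}H(\zeta_{0},\zeta^{\ast})|_{\zeta^{\ast}=\zeta_{0}}$. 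Property (i) of the
definition supplies the diagonal identity $H(\zeta,\zeta)\equiv0$, while,
using $\widetilde{\mathbb{S}}_{r,\overline{l}_{r}}(\theta)=f_{\backslash
\overline{l}_{r}}/f$, property (ii) can be recast (by pulling derivatives
outside the expectation) as
\[
\partial_{\zeta_{l_{1}}}\cdots\partial_{\zeta_{l_{r}}}H(\zeta,\zeta^{\prime
})|_{\zeta=\zeta^{\prime}}=\psi_{\backslash\overline{l}_{r}}(\widetilde
{\theta}(\zeta^{\prime})),\qquad1\leq r\leq m,
\]
valid for every $\zeta^{\prime}$ in the domain, since property (ii) applies
equally at the shifted base point $\widetilde{\theta}(\zeta^{\prime})$.

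Fix $S=\{l_{1},\ldots,l_{s}\}$ and, for each $A\subseteq S$, let $m_{A}$
denote the mixed partial of $H$ obtained by differentiating in $\zeta_{l}$ for
$l\in A$ and in $\zeta_{l}^{\ast}$ for $l\in S\setminus A$, then evaluating at
$\zeta=\zeta^{\ast}=\zeta_{0}$; in particular $m_{\emptyset}$ is the target.
Differentiating $H(\zeta,\zeta)\equiv0$ exactly $s$ times by the Leibniz rule
across the two slots of $H$ yields
\[
\sum_{A\subseteq S}m_{A}=0,
\]
and for each nonempty $A^{\prime}\subseteq S$, differentiating the rewritten
form of property (ii) a further $s-|A^{\prime}|$ times in $\zeta^{\prime}$
over the indices in $S\setminus A^{\prime}$ (Leibniz on the diagonal again)
gives
\[
\sum_{A\supseteq A^{\prime}}m_{A}=\psi_{\backslash\overline{l}_{s}}(\theta).
\]

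Setting $T_{A^{\prime}}\equiv\sum_{A\supseteq A^{\prime}}m_{A}$, we have
$T_{\emptyset}=0$ and $T_{A^{\prime}}=\psi_{\backslash\overline{l}_{s}}
(\theta)$ for every nonempty $A^{\prime}\subseteq S$. M\"{o}bius inversion on
the Boolean lattice of subsets of $S$ then produces
\[
m_{\emptyset}=\sum_{A^{\prime}\subseteq S}(-1)^{|A^{\prime}|}T_{A^{\prime}
}=T_{\emptyset}+\psi_{\backslash\overline{l}_{s}}(\theta)\sum_{\emptyset\neq
A^{\prime}\subseteq S}(-1)^{|A^{\prime}|}=-\psi_{\backslash\overline{l}_{s}
}(\theta),
\]
using $\sum_{A^{\prime}\subseteq S}(-1)^{|A^{\prime}|}=0$ for $s\geq1$. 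This
establishes the first display of the theorem.

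The second display follows from Taylor's theorem. Under the bounded
Fr\'{e}chet-derivative hypothesis, both $E_{\theta}[\mathbb{IF}_{m}
(\theta+\delta\theta)]$ and $-[\psi(\theta+\delta\theta)-\psi(\theta)]$ admit
Taylor expansions in $\delta\theta$ about $0$ with remainder $O(\Vert
\delta\theta\Vert^{m+1})$; by property (i) and the first display they share
all derivatives to order $m$ along every smooth submodel, and, since bounded
multilinear Fr\'{e}chet derivatives are determined by their directional
action, their Taylor polynomials of degree $m$ coincide. The principal
technical issue is not the combinatorics, which I expect to be mechanical, but
the interchange of differentiation and integration and the simultaneous
appeal to property (ii) at perturbed base points $\widetilde{\theta}
(\zeta^{\prime})$; both are covered by the ``suitably smooth and regular''
qualifier built into the definition of an estimation influence function.
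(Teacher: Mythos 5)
Your proposal is correct, and it rests on exactly the same two ingredients as the paper's own proof --- the mean-zero identity $E_{\widetilde{\theta}(\zeta)}\left[\mathbb{IF}_{m}\left(\widetilde{\theta}(\zeta)\right)\right]\equiv 0$ along the submodel, and the defining relation $\partial_{\zeta_{l_{1}}}\cdots\partial_{\zeta_{l_{r}}}E_{\widetilde{\theta}(\zeta)}\left[\mathbb{IF}_{m}\left(\widetilde{\theta}(\zeta^{\prime})\right)\right]\big|_{\zeta=\zeta^{\prime}}=\psi_{\backslash\overline{l}_{r}}\left(\widetilde{\theta}(\zeta^{\prime})\right)$ invoked at every base point $\widetilde{\theta}(\zeta^{\prime})$ of the submodel --- but it organizes them differently. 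The paper works with the same two-slot bias function $B_{m}\left[\theta^{\dagger},\theta\right]=E_{\theta^{\dagger}}\left[\mathbb{IF}_{m}(\theta)\right]$ and argues by a nested induction: it first proves the auxiliary identity (its Eq.\ (\ref{interinfo0})) that every strictly mixed diagonal partial, with at least one derivative in each slot and total order at most $m$, vanishes, and then inducts on $s$, differentiating the order-$s$ conclusion once more and killing the single cross term with that lemma. You instead assemble all $2^{s}$ diagonal mixed partials $m_{A}$ at once, extract the linear relations $\sum_{A\subseteq S}m_{A}=0$ and $\sum_{A\supseteq A^{\prime}}m_{A}=\psi_{\backslash\overline{l}_{s}}(\theta)$ for $A^{\prime}\neq\emptyset$, and solve for $m_{\emptyset}$ by M\"{o}bius inversion; you never need, and do not obtain, the vanishing of the individual mixed partials. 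Your route buys a one-shot, induction-free derivation; the paper's route buys the stronger intermediate fact that $m_{A}=0$ for every $\emptyset\neq A\subsetneq S$ (with $m_{S}=\psi_{\backslash\overline{l}_{s}}(\theta)$), which makes the final cancellation immediate and is of some independent interest. Two minor points: since the indices $l_{1},\ldots,l_{s}$ may repeat, your subsets should be subsets of the derivative \emph{positions} $\{1,\ldots,s\}$ (each position carrying its index value) rather than of the index values themselves, or the Leibniz expansion miscounts; and the regularity you flag (interchange of differentiation and expectation, plus the defining properties holding identically along the submodel) is precisely what the paper's proof also assumes, while your treatment of the Taylor/Fr\'{e}chet step coincides with the paper's.
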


The proof is in the Appendix. Define the $m$th order tangent space $\Gamma
_{m}\left(  \theta\right)  $ at $\theta$ for the model $\mathcal{M}\left(
\Theta\right)  $ to be the subspace of $\mathcal{U}_{m}\left(  \theta\right)
$ formed by taking the closed linear span of all scores of order $m$ or less
as we vary over all regular parametric submodels $\widetilde{\theta}\left(
\varsigma\right)  $ (whose range includes $\theta)$ of our model
$\mathcal{M}\left(  \Theta\right)  .$ We say a model is (locally)
nonparametric for $mth$ order inference if $\Gamma_{m}\left(  \theta\right)
=\mathcal{U}_{m}\left(  \theta\right)  .$

Given any $mth$ order estimation influence function $\mathbb{IF}_{m}\left(
\theta\right)  ,$ define the mth order efficient estimation influence function
to be
\[
\mathbb{IF}_{m}^{eff}\left(  \theta\right)  =\Pi_{\theta}\left[
\mathbb{IF}_{m}\left(  \theta\right)  |\Gamma_{m}\left(  \theta\right)
\right]
\]
$\ $where $\Pi_{\theta}\left[  \cdot|\cdot\right]  \mathbf{\equiv}\Pi
_{\theta,m}\left[  \cdot|\cdot\right]  $ is the $\mathcal{U}_{m}\left(
\theta\right)  -$projection operator. In the appendix, we prove the following:

\begin{theorem}
\label{eift}\textbf{Efficient Estimation Influence Function Theorem : }

\begin{enumerate}
\item $\mathbb{IF}_{m}^{eff}\left(  \theta\right)  $ is unique in the sense
that for any two mth order influence functions
\[
\Pi_{\theta}\left[  \mathbb{IF}_{m}^{\left(  1\right)  }\left(  \theta\right)
|\Gamma_{m}\left(  \theta\right)  \right]  =\Pi_{\theta}\left[  \mathbb{IF}%
_{m}^{\left(  2\right)  }\left(  \theta\right)  |\Gamma_{m}\left(
\theta\right)  \right]  \text{ a.s.}%
\]
\textbf{\ }

\item $\mathbb{IF}_{m}^{eff}\left(  \theta\right)  $ is a mth order estimation
influence function and has variance less than or equal to any other $mth$
order estimation influence function.

\item $\mathbb{IF}_{m}\left(  \theta\right)  $ is a $mth$ order estimation
influence function if and only if $\ $%
\[
\mathbb{IF}_{m}\left(  \theta\right)  \in\left\{  \mathbb{IF}_{m}^{eff}\left(
\theta\right)  +\mathbb{U}_{m}\left(  \theta\right)  ;\mathbb{U}_{m}\left(
\theta\right)  \in\Gamma_{m}^{^{\perp_{m,\theta}}}\left(  \theta\right)
\right\}
\]
where $\Gamma_{m}^{^{\perp_{m,\theta}}}\left(  \theta\right)  $ is the
ortho-complement of $\Gamma_{m}\left(  \theta\right)  \ $in $\mathcal{U}%
_{m}\left(  \theta\right)  .$

\item If $\mathbb{IF}_{m}\left(  \theta\right)  $ exists then $\mathbb{IF}%
_{s}^{eff}\left(  \theta\right)  \mathbb{\ }$exists for $s<m$ and $\Pi
_{\theta}\left[  \mathbb{IF}_{m}\left(  \theta\right)  |\Gamma_{s}\left(
\theta\right)  \right]  =\mathbb{IF}_{s}^{eff}\left(  \theta\right)  .$

\item If the model $\mathcal{M}\left(  \Theta\right)  $ is (locally)
nonparametric, then

\begin{enumerate}
\item there is at most one $m$th order estimation influence function
$\mathbb{IF}_{m}\left(  \theta\right)  $ for $\psi\left(  \theta\right)  ,$

\item
\[
\mathbb{IF}_{m}\left(  \theta\right)  =\mathbb{IF}_{m-1}\left(  \theta\right)
+\mathbb{IF}_{mm}\left(  \theta\right)
\]
where%
\[
\mathbb{IF}_{m-1}\left(  \theta\right)  =\Pi_{m,\theta}\left[  \mathbb{IF}%
_{m}\left(  \theta\right)  |\mathcal{U}_{m-1}\left(  \theta\right)  \right]
\]
and $\mathbb{IF}_{mm}\left(  \theta\right)  $ is a degenerate $mth$ order
U-statistic and thus
\[
E_{\theta}\left[  \mathbb{IF}_{m-1}\left(  \theta\right)  \mathbb{IF}%
_{mm}\left(  \theta\right)  \right]  =0.
\]

\item (i): Suppose, for a given $m\geq2,$ $\mathbb{IF}_{m-1}\left(
\theta\right)  $ exists and a kernel \newline$if_{m-1,m-1}\left(  o_{i_{1}%
},...,o_{i_{m-1}};\theta\right)  $ of $\mathbb{IF}_{m-1,m-1}\left(
\theta\right)  $ has a first order influence function with kernel
$if_{1,if_{m-1,m-1}\left(  o_{i_{1}},...,o_{i_{m-1}};\cdot\right)  \ }\left(
O_{i_{m}};\theta\right)  $ for all $o_{i_{1}},...,o_{i_{m-1}}$ in a set
$\mathcal{O}_{m-1}$ which has probability $1$ under $F^{\left(  m-1\right)
}\left(  \cdot,\theta\right)  .$ Then $\mathbb{IF}_{m}\left(  \theta\right)  $
exists and
\begin{equation}
m\mathbb{IF}_{m,m}\left(  \theta\right)  =\mathbb{V}\left(  d_{m,\theta
}\left[  if_{1,if_{m-1,m-1}\left(  O_{i_{1}},...,O_{i_{m-1}};\cdot\right)
\ {}}\left(  O_{i_{m}};\theta\right)  \right]  \right) \label{mm1}%
\end{equation}
where the operator $d_{m,\theta}$ is given in $Eq.\left(  \ref{deg}\right)  .$

(ii) Conversely, if $\mathbb{IF}_{m}$ exists then the symmetric kernel
\newline$if_{m-1,m-1}^{sym}\left(  o_{i_{1}},...,o_{i_{m-1}};\theta\right)  $
of $\mathbb{IF}_{m-1,m-1}\left(  \theta\right)  $ has a first order influence
function for all $o_{i_{1}},...,o_{i_{m-1}}$ in a set $\mathcal{O}_{m-1}$
which has probability $1$ under $F^{\left(  m-1\right)  }\left(  \cdot
,\theta\right)  .$ Further
\[
m^{-1}d_{m,\theta}\left[  if_{1,if_{m-1,m-1}^{sym}\left(  O_{i_{1}%
},...,O_{i_{m-1}};\cdot\right)  {}}\left(  O_{i_{m}};\theta\right)  \right]
=if_{m,m}^{sym}\left(  O_{i_{1}},...,O_{i_{m}};\theta\right)  .
\]

\end{enumerate}
\end{enumerate}
\end{theorem}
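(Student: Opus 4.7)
The plan is to treat parts (1)--(4) as direct consequences of the fact that the defining identity
\[
\psi_{\backslash\overline{l}_s}(\theta)=E_\theta\bigl[\mathbb{IF}_m(\theta)\,\widetilde{\mathbb{S}}_{s,\overline{l}_s}(\theta)\bigr]
\]
pins $\mathbb{IF}_m(\theta)$ down only through its $\mathcal{U}_m(\theta)$-inner products with elements of the tangent space $\Gamma_m(\theta)$. For (1), two $m$th order influence functions therefore differ by an element of $\Gamma_m^{\perp_{m,\theta}}(\theta)$, so their $\Gamma_m(\theta)$-projections coincide. For (2), the efficient projection preserves every required inner product, so $\mathbb{IF}_m^{eff}(\theta)$ is itself an influence function, and the Pythagorean identity in $\mathcal{U}_m(\theta)$ yields the variance bound. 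Part (3) is then the combined statement of (1) and (2). For (4), set $\mathbb{IF}^{\ast}:=\Pi_{\theta,m}[\mathbb{IF}_m(\theta)|\mathcal{U}_s(\theta)]$; every score of order $t\le s$ lies in $\mathcal{U}_s(\theta)$, so the defining inner products are preserved and $\mathbb{IF}^{\ast}$ is an $s$th order influence function. Transitivity of nested orthogonal projections then gives
\[
\Pi_{\theta,s}[\mathbb{IF}^{\ast}|\Gamma_s(\theta)]=\Pi_{\theta,m}[\mathbb{IF}_m(\theta)|\Gamma_s(\theta)]=\mathbb{IF}_s^{eff}(\theta).
\]

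For (5a)--(5b), local nonparametricity gives $\Gamma_m(\theta)=\mathcal{U}_m(\theta)$, hence $\Gamma_m^{\perp_{m,\theta}}(\theta)=\{0\}$, so (3) forces uniqueness. For the decomposition, set $\mathbb{IF}_{m-1}(\theta):=\Pi_{\theta,m}[\mathbb{IF}_m(\theta)|\mathcal{U}_{m-1}(\theta)]$ and $\mathbb{IF}_{mm}(\theta):=\mathbb{IF}_m(\theta)-\mathbb{IF}_{m-1}(\theta)$. The latter lies in $\mathcal{U}_{m-1}(\theta)^{\perp_{m,\theta}}$, which the excerpt has already identified with the space of degenerate $m$th order $U$-statistics; orthogonality $E_\theta[\mathbb{IF}_{m-1}(\theta)\mathbb{IF}_{mm}(\theta)]=0$ is immediate from the definition of an orthogonal projection. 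That $\mathbb{IF}_{m-1}(\theta)$ coincides with the unique $(m-1)$th order influence function follows from (4) applied at $s=m-1$ together with (a).

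The real work is in (5c). The plan is to pass to kernels: for each fixed $(o_{i_1},\ldots,o_{i_{m-1}})\in\mathcal{O}_{m-1}$, regard $\tau(\theta):=if_{m-1,m-1}(o_{i_1},\ldots,o_{i_{m-1}};\theta)$ as a real-valued functional of $\theta$, which by hypothesis admits a first-order influence function with kernel $if_{1,\tau}(O_{i_m};\theta)$. For (i), I would take the right-hand side of $(\ref{mm1})$ divided by $m$ as the candidate $\mathbb{IF}_{m,m}(\theta)$, set $\mathbb{IF}_m(\theta):=\mathbb{IF}_{m-1}(\theta)+\mathbb{IF}_{mm}(\theta)$, and verify the two defining properties. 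Mean zero is automatic because $d_{m,\theta}$ outputs a degenerate kernel. The score identity is verified order by order: for $s<m$, the score $\widetilde{\mathbb{S}}_{s,\overline{l}_s}$ lies in $\mathcal{U}_{m-1}(\theta)$, hence is orthogonal to the degenerate $\mathbb{IF}_{mm}$, and Theorem~\ref{eiet} applied to $\mathbb{IF}_{m-1}$ supplies the required derivative of $\psi$; for $s=m$, Fubini together with the conditional first-order identity for $\tau(\theta)$ reduces the claim to an algebraic identity pairing the Waterman--Lindsay expansion of $\widetilde{\mathbb{S}}_{m,\overline{l}_m}(\theta)$ against the $d_{m,\theta}$-degeneration of $if_{1,\tau}$. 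For (ii), given $\mathbb{IF}_m$, symmetrize $\mathbb{IF}_{m,m}$; the conditional first-order influence function for $\tau(\theta)$ is recovered by holding $o_{i_1},\ldots,o_{i_{m-1}}$ fixed in the symmetrized kernel and identifying the corresponding one-dimensional Gateaux derivative of $E_\theta[\tau(\theta)]$, after which the stated identity becomes a symbolic inversion of $d_{m,\theta}$ restricted to symmetric degenerate kernels. The principal obstacle I anticipate is the $s=m$ step of (i): it requires interchanging an $m$-fold differentiation with the $m$-fold product expectation and matching the subject-indexed nested sums in the Waterman--Lindsay formula for $\widetilde{\mathbb{S}}_{m,\overline{l}_m}$ against the terms produced by $d_{m,\theta}$ so that the result lands exactly at $\psi_{\backslash\overline{l}_m}(\theta)$.
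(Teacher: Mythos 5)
Your handling of parts (1)--(4) and (5a)--(5b) is correct and is essentially the paper's own argument: uniqueness of the $\Gamma_{m}(\theta)$-projection via density of linear combinations of scores, the projection preserving the defining inner products (hence (2) and (3) with the Pythagorean variance bound), and, for (4), first projecting onto $\mathcal{U}_{s}(\theta)$ and then using transitivity of nested projections. These are the parts the paper itself dispatches in a few lines.

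The gap is in (5c), which is where the substance of the theorem lies, and you have in effect flagged it yourself. For (i) you propose the right candidate, $\mathbb{IF}_{m}=\mathbb{IF}_{m-1}+m^{-1}\mathbb{V}\bigl(d_{m,\theta}\bigl[if_{1,if_{m-1,m-1}}\bigr]\bigr)$, and correctly dispose of the orders $s<m$ by degeneracy; but the assertion that the $s=m$ identity ``reduces to an algebraic identity pairing the Waterman--Lindsay expansion against the $d_{m,\theta}$-degeneration'' is precisely the statement to be proved, and it is not carried out. Likewise for (ii), ``identifying the one-dimensional Gateaux derivative'' of the kernel is exactly the claim that $if^{sym}_{m-1,m-1}(o_{i_{1}},\ldots,o_{i_{m-1}};\cdot)$ is pathwise differentiable with an $L_{2}$ gradient for almost every $(o_{i_{1}},\ldots,o_{i_{m-1}})$ simultaneously against all scores; this does not follow from a symbolic inversion of $d_{m,\theta}$. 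The paper's proof supplies the devices you are missing. First, it differentiates the order-$(m-1)$ identity $\psi_{\backslash\overline{l}_{m-1}}(\theta)=E_{\theta}[\mathbb{IF}_{m-1}(\theta)\widetilde{\mathbb{S}}_{m-1,\overline{l}_{m-1}}(\theta)]$ in one further coordinate, which reduces the $s=m$ verification to the single identity $E_{\theta}[\mathbb{IF}_{m-1,\backslash l_{m}}\widetilde{\mathbb{S}}_{m-1,\overline{l}_{m-1}}]=E_{\theta}[\mathbb{IF}_{mm}\widetilde{\mathbb{S}}_{m,\overline{l}_{m}}]$. Second, it restricts to submodels whose likelihood is linear in $\zeta$, namely $f(O;\widetilde{\theta}(\zeta))=f(O;\theta)\{1+\sum_{j}\zeta_{j}a_{j}(O)\}$, for which all higher-order subject-specific scores vanish and hence $\widetilde{\mathbb{S}}_{m,\overline{l}_{m}}=\sum_{i_{1}\neq\cdots\neq i_{m}}\prod_{r}a_{r}(O_{i_{r}})$ with arbitrary mean-zero $a_{r}$; this is what makes the term-by-term matching of nested sums that you anticipate unnecessary. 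Third, it invokes the fact that a degenerate $(m-1)$st order kernel orthogonal to every product $\prod_{r=1}^{m-1}a_{r}(O_{i_{r}})$ vanishes a.s., together with Lemma \ref{joel2}, which controls the conditional-expectation correction terms produced by $d_{m,\theta}$ and $d_{m-1,\theta}$; this density-plus-degeneracy step is what yields, for (ii), the a.s.\ existence of $if_{1,if^{sym}_{m-1,m-1}}$ and, for (i), the defining identity for the candidate (the paper in fact proves (ii) first and obtains (i) from a corollary of it applied at order $m-1$, the reverse of your ordering). Without these steps your (5c) is a plan rather than a proof.
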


\begin{remark}
\smallskip Pfanzagl (1990) previously proved part 5.c(i) for $m=2.$ Our
theorem offers a generalization of his result. Note, in part (i) of 5(c), we
can always take the kernel to be the symmetric kernel.

\begin{remark}
\label{proj}Provided one knows how to calculate first order influence
functions, one can obtain $\mathbb{IF}_{2}\left(  \theta\right)
,...,\mathbb{IF}_{m}\left(  \theta\right)  $ recursively using part $\left(
5.c\right)  .$ An example of such a calculation is given in Section
\ref{dhoif_subsection} below. \ Thus part $\left(  5.c\right)  $ has the
interesting implication that even though higher order influence functions are
defined in terms of their inner products with higher order scores
$\widetilde{\mathbb{S}}_{m,\overline{l}_{m}},$ nevertheless, in (locally)
nonparametric models, one can derive all the higher order influence functions
of a functional $\psi\left(  \theta\right)  $ without even knowing how to
compute the scores $\widetilde{\mathbb{S}}_{m,\overline{l}_{m}}$ for any
$m>1.$ In fact, one need not even be aware of the structure of the scores
$\widetilde{\mathbb{S}}_{m,\overline{l}_{m}}$ in terms of the subject-specific
higher order scores $S_{l_{1}...l_{s},j}\left(  \theta\right)  .$ In contrast,
in parametric or semiparametric models whose tangent space $\Gamma_{m}\left(
\theta\right)  $ \ does not equal the set $\mathcal{U}_{m}\left(
\theta\right)  $ of all $m$th order $U-statistics$, one can often (but not
always) still obtain an inefficient influence $\mathbb{IF}_{m}\left(
\theta\right)  $ \ by applying part $\left(  5.c\right)  $ of the Theorem.
However, calculation of the efficient influence function $\mathbb{IF}%
_{m}^{eff}\left(  \theta\right)  =\Pi_{\theta}\left[  \mathbb{IF}_{m}\left(
\theta\right)  |\Gamma_{m}\left(  \theta\right)  \right]  $ by projection
generally requires explicit knowledge of the scores $\widetilde{\mathbb{S}%
}_{m,\overline{l}_{m}}\ $to derive $\Gamma_{m}\left(  \theta\right)  .$ For
this reason,\ it can be considerably more difficult to analyze certain
parametric models (with dimension increasing with sample size) than to analyze
(locally) nonparametric models. \ We will consider derivation of and
projections onto $\Gamma_{m}\left(  \theta\right)  $ in a forthcoming paper.
In the current paper, however, we do calculate $IF_{2}^{eff}\left(
\theta\right)  $ in one model that is not (locally) nonparametric so as to
provide some sense of the issues that arise. Specifically in Section
\ref{minimax_section}, we calculate $IF_{2}^{eff}\left(  \theta\right)  $ for
the functional $E\left[  \left\{  E\left[  Y|X\right]  \right\}  ^{2}\right]
$ in a model that assumes the marginal distribution of $X$ is known.
\end{remark}
\end{remark}

\begin{remark}
\textbf{\ \label{JJ1}:}\ \textbf{Implications of Theorem }$\left(
\ref{eift}\right)  $\textbf{\ for the Variance of Unbiased Estimators:}%
\ Suppose we have $n$ iid draws $\mathbf{O=}\left(  O_{1},...,O_{n}\right)  $
from $F(o;\theta),\theta\in\Theta,\ $and a U-statistic $\widehat{\psi}_{m}$ of
order $m\leq n$ with $var_{\theta}\left[  \widehat{\psi}_{m}\right]  <\infty$
for $\theta\in\Theta\ $satisfying $E_{\theta}\left[  \widehat{\psi}%
_{m}\right]  =\psi\left(  \theta\right)  $ for all $\theta\in\Theta.$ That is,
$\widehat{\psi}_{m}$ is unbiased for $\psi\left(  \theta\right)  $. We will
use Theorem $\left(  \ref{eift}\right)  $ to generalize a number of well-known
results on minimum variance unbiased estimation to arbitrary models. \ \ 

By $E_{\theta}\left[  \widehat{\psi}_{m}\right]  =\psi\left(  \theta\right)
,$ we immediately conclude that, viewing $\widehat{\psi}_{m}\ $as a kth order
U-statistic, $\widehat{\psi}_{m}-\psi\left(  \theta\right)  $ is a $k^{th}$
order estimation influence function for $\psi\left(  \theta\right)  $ for
$n\geq k\geq m.$ By Theorem $\left(  \ref{eift}\right)  ,$ $var_{\theta
}\left[  \widehat{\psi}_{m}\right]  \geq var_{\theta}\left[  \mathbb{IF}%
_{m}^{eff}\left(  \theta\right)  \right]  .$ We refer to $var_{\theta}\left[
\mathbb{IF}_{m}^{eff}\left(  \theta\right)  \right]  $ as the $mth$ order
Bhattacharyya variance bound at $\theta$ for the parameter $\psi\left(
\theta\right)  $ in model $\mathcal{M}\left(  \Theta\right)  ,$ as this
definition, in a precise analogy to Bickel et al. (1993)'s generalization of
the Cramer-Rao variance bound, generalizes Bhattacharyya's (1947) variance
bound to arbitrary semi- and non- parametric models. Indeed our 1st order
Bhattacharyya bound is precisely Bickel et al.'s (1993) generalization of the
Cramer-Rao variance bound.

We shall refer to an mth order U-statistic estimator $\widehat{\psi}_{m}$ as
mth order `unbiased locally efficient' at $\theta^{\ast}$ for $\psi\left(
\theta\right)  $ in model $\mathcal{M}\left(  \Theta\right)  $ if it is
unbiased for $\psi\left(  \theta\right)  $ under the model with variance at
$\theta^{\ast}$ equal to the $mth$ order Bhattacharyya bound at $\theta^{\ast
}.$ If $\widehat{\psi}_{m}$ is `unbiased locally efficient' at $\theta^{\ast}
$ for all $\theta^{\ast}\in\Theta,$ we say it is `unbiased globally
efficient'. By Theorem $\left(  \ref{eift}\right)  ,$ $var_{\theta}\left[
\mathbb{IF}_{k}^{eff}\left(  \theta\right)  \right]  \geq var_{\theta}\left[
\mathbb{IF}_{m}^{eff}\left(  \theta\right)  \right]  $ for $n\geq k>m.$ As a
consequence if an mth order `unbiased locally efficient' estimator
$\widehat{\psi}_{m,eff}$ exists at $\theta^{\ast}$ then, for $n\geq k\geq m,$
$\mathbb{IF}_{k}^{eff}\left(  \theta^{\ast}\right)  =\mathbb{IF}_{m}%
^{eff}\left(  \theta^{\ast}\right)  $ so the $mth$ and $kth$ order
Bhattacharyya bounds are equal at $\theta^{\ast}$ and $\widehat{\psi}_{m,eff}
$ is also kth order `unbiased locally efficient' at $\theta^{\ast}$.

From the fact that for, an unbiased estimator $\widehat{\psi}_{m},$
$\widehat{\psi}_{m}-\psi\left(  \theta\right)  $ is an $mth$ order influence
function, we conclude that the variance of $\widehat{\psi}_{m}$ attains the
the bound $var_{\theta^{\ast}}\left[  \mathbb{IF}_{m}^{eff}\left(
\theta^{\ast}\right)  \right]  $ at $\theta^{\ast}$ if and only if
$\widehat{\psi}_{m}-\psi\left(  \theta^{\ast}\right)  =\mathbb{IF}_{m}%
^{eff}\left(  \theta^{\ast}\right)  ,$ It follows that $\widehat{\psi}_{m}$ is
`unbiased globally efficient' if and only if $\widehat{\psi}_{m}-\psi\left(
\theta\right)  =\mathbb{IF}_{m}^{eff}\left(  \theta\right)  $ for all
$\theta\in\Theta.$ We thus have proved the following theorem in the
$\Rightarrow$ direction. The $\Leftarrow$ direction is immediate.
\end{remark}

\begin{theorem}
\label{global_eff}\textbf{:} In a model $\mathcal{M}\left(  \Theta\right)  ,$
there exists an mth order unbiased globally efficient U-statistic estimator of
$\psi\left(  \theta\right)  ,$ if and only if, for all $\theta\in\Theta,$
$\mathbb{IF}_{m}^{eff}\left(  \theta\right)  +\psi\left(  \theta\right)  $ is
a function $\widehat{\psi}_{m,eff}$ of the data $\mathbf{O,}$ not depending on
$\theta.$ In that case, $\widehat{\psi}_{m,eff}$ is the unique unbiased
globally efficient estimator.
\end{theorem}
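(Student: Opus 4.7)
The proof is essentially packaged inside Remark~\ref{JJ1}; my plan is to spell out that forward argument in full and then dispatch the converse together with uniqueness. The core engine throughout is Theorem~\ref{eift}, especially part~3 (which characterizes all $m$th order influence functions as $\mathbb{IF}_m^{eff}(\theta)$ plus an element of $\Gamma_m(\theta)^{\perp_{m,\theta}}$) and part~2 (the minimum-variance property of $\mathbb{IF}_m^{eff}(\theta)$).

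For the forward direction, suppose $\widehat{\psi}_m$ is $m$th order unbiased globally efficient. First I observe that $\mathbb{IF}_m(\theta) := \widehat{\psi}_m - \psi(\theta)$ is an $m$th order estimation influence function for $\psi$: it is mean-zero at every $\theta$ by unbiasedness, and the inner-product conditions $E_\theta[\mathbb{IF}_m(\theta)\widetilde{\mathbb{S}}_{s,\overline{l}_s}(\theta)] = \psi_{\backslash \overline{l}_s}(\theta)$ follow by differentiating the identity $E_{\widetilde{\theta}(\zeta)}[\widehat{\psi}_m] = \psi(\widetilde{\theta}(\zeta))$ in $\zeta$ under the integral, using the regularity of the submodels built into the definition. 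Global efficiency says $\mathrm{var}_\theta[\mathbb{IF}_m(\theta)] = \mathrm{var}_\theta[\mathbb{IF}_m^{eff}(\theta)]$ for all $\theta$. Writing $\mathbb{IF}_m(\theta) = \mathbb{IF}_m^{eff}(\theta) + \mathbb{U}_m(\theta)$ with $\mathbb{U}_m(\theta) \in \Gamma_m(\theta)^{\perp_{m,\theta}}$ as in Theorem~\ref{eift}(3), orthogonality in $\mathcal{U}_m(\theta)$ yields the Pythagorean decomposition $\mathrm{var}_\theta[\mathbb{IF}_m(\theta)] = \mathrm{var}_\theta[\mathbb{IF}_m^{eff}(\theta)] + \mathrm{var}_\theta[\mathbb{U}_m(\theta)]$, which forces $\mathbb{U}_m(\theta) = 0$ almost surely. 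Hence $\widehat{\psi}_m = \mathbb{IF}_m^{eff}(\theta) + \psi(\theta)$ for every $\theta$; since the left side does not depend on $\theta$, neither does the right side, and we set $\widehat{\psi}_{m,eff} := \widehat{\psi}_m$.

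For the converse, assume $\widehat{\psi}_{m,eff}(\mathbf{O}) = \mathbb{IF}_m^{eff}(\theta) + \psi(\theta)$ holds identically in $\theta$. Taking expectations under an arbitrary $\theta$ and using $E_\theta[\mathbb{IF}_m^{eff}(\theta)] = 0$ gives $E_\theta[\widehat{\psi}_{m,eff}] = \psi(\theta)$, so $\widehat{\psi}_{m,eff}$ is unbiased, and since $\widehat{\psi}_{m,eff} - \psi(\theta) = \mathbb{IF}_m^{eff}(\theta)$ almost surely under $F(\cdot;\theta)$, its variance there equals the $m$th order Bhattacharyya bound, so it is globally efficient. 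Uniqueness is then immediate: any second unbiased globally efficient estimator $\widetilde{\psi}$ satisfies $\widetilde{\psi} - \psi(\theta) = \mathbb{IF}_m^{eff}(\theta) = \widehat{\psi}_{m,eff} - \psi(\theta)$ for every $\theta$ by the forward argument, so the two estimators coincide almost surely under every $F(\cdot;\theta) \in \mathcal{M}(\Theta)$. The only step with any real content is verifying, inside the forward direction, that $\widehat{\psi}_m - \psi(\theta)$ actually passes clause~(ii) of the definition of an estimation influence function; this is a standard interchange of differentiation and integration along parametric submodels, and it goes through because the regularity conditions on the smooth submodels $\widetilde{\theta}(\zeta)$ are exactly the hypotheses of that definition. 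Once that is granted, everything else is the Hilbert-space projection geometry already assembled in Theorem~\ref{eift}.
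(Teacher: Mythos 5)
Your proof is correct and follows essentially the same route as the paper's, which is packaged in Remark~\ref{JJ1}: unbiasedness makes $\widehat{\psi}_m-\psi(\theta)$ an $m$th order influence function, and the decomposition of Theorem~\ref{eift}(3) together with the variance bound forces $\widehat{\psi}_m-\psi(\theta)=\mathbb{IF}_m^{eff}(\theta)$ at every $\theta$, with the converse and uniqueness immediate. Your only addition is to spell out explicitly the Pythagorean step and the differentiation-under-the-integral verification that the paper leaves implicit.
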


In a locally nonparametric model all unbiased mth order estimators are
unbiased globally efficient, as there is a unique $mth$ order influence
function. For example, the usual unbiased estimator $\widehat{\sigma}^{2}%
=\sum_{i=1}^{n}\left\{  X_{i}-\sum_{j=1}^{n}X_{j}/n\right\}  ^{2}/\left(
n-1\right)  $ of the variance of a random variable $X$ is a second order
U-statistic and thus is a kth order unbiased globally efficient U-statistic
for $k\geq2\ $ in the locally nonparametric model consisting of all
distributions under which $\widehat{\sigma}^{2}$ has a finite variance.

In Section \ref{minimax_section} we use the results from this remark to
compare the relative efficiencies of competing rate-optimal unbiased
estimators in a model which is not locally nonparametric.

We now describe the main heuristic idea behind using higher order influence
functions. Technical details are suppressed. \ Consider the estimator
\begin{equation}
\widehat{\psi}_{m}=\psi\left(  \widehat{\theta}\right)  +\mathbb{IF}_{m}%
^{eff}\left(  \widehat{\theta}\right) \label{est}%
\end{equation}
based on a sample size $n,$ where $\widehat{\theta}$ is an initial rate
optimal estimator of $\theta$ from a separate independent training sample.
That is we assume that our actual sample size is $N$ and we randomly split the
$N$ observations into two samples: an analysis sample of size $n$ and a
training sample of size $N-n$ where $\left(  N-n\right)  /N=c^{\ast}$ $,$
$1>c^{\ast}\ >0.\ $We obtain our initial estimate $\widehat{\theta}$ from the
training sample data. Sample splitting has no effect on optimal rates of
convergence, although in the form described here does affect 'constants'.
Throughout the paper, we derive the properties of our estimators conditional
on the data in the training sample. In a later section, we describe how one
can sometimes obtain an optimal constant by choosing $\left(  N-n\right)
/N=N^{-\epsilon},\epsilon>0$ rather than $c^{\ast}.$

\begin{remark}
\label{semi} Note that sample splitting is avoided in most statistical
applications by using modern \textquotedblleft empirical process
theory\textquotedblright\ to prove that `plug-in' estimators such as
$\widehat{\psi}_{m}=\left\{  \psi\left(  \theta\right)  +\mathbb{IF}_{m}%
^{eff}\left(  \theta\right)  \right\}  _{\theta=\widehat{\theta}}$ that
estimate $\theta$ from the same sample used to calculate $\mathbb{IF}%
_{m}^{eff}\left(  \cdot\right)  $ have nice statistical properties. However
empirical process theory is not applicable in our setting because we are
interested in function classes whose size (entropy) is so large that they fail
to be Donsker. For this reason we initially believed that explicit sample
splitting would be difficult to avoid in our methodology. However, in Robins
et al. (2007), we describe a new method, more analogous to the jackknife than
to sample splitting, that effectively allows one to use all the data for
estimator construction.
\end{remark}

Expanding and evaluating conditionally on the training sample (or equivalently
on $\widehat{\theta}),$ we find by Theorem $\ref{eiet}$ that the conditional
bias
\[
E_{\theta}\left[  \widehat{\psi}_{m}-\psi\left(  \theta\right)
|\widehat{\theta}\right]  =\psi\left(  \widehat{\theta}\right)  -\psi\left(
\theta\right)  +E_{\theta}\left[  \mathbb{IF}_{m}^{eff}\left(  \widehat{\theta
}\right)  |\widehat{\theta}\right]
\]
is $O_{p}\left(  ||\widehat{\theta}-\theta||^{m+1}\right)  \ $which decreases
with $m\ $provided $||\widehat{\theta}-\theta||<1.$

In theorem \ref{3.19} below, we show that if
\[
sup_{o\in\mathcal{O}}\left\vert f\left(  o;\widehat{\theta}\right)  -f\left(
o;\theta\right)  \right\vert \rightarrow0
\]
as $||\widehat{\theta}-\theta||\rightarrow0\ ,$ where $f\left(  o;\theta
\right)  $ is the density of $O$ under $\theta$ and $\mathcal{O}$ has
probability one under all $\theta\in\Theta,$ then
\[
var_{\theta}\left[  \widehat{\psi}_{m}|\widehat{\theta}\right]  \equiv
var_{\theta}\left[  \mathbb{IF}_{m}^{eff}\left(  \widehat{\theta}\right)
|\widehat{\theta}\right]  =var_{\widehat{\theta}}\left[  \mathbb{IF}_{m}%
^{eff}\left(  \widehat{\theta}\right)  \right]  \left(  1+O_{p}%
(||\widehat{\theta}-\theta||)\right)
\]
\qquad

Now, by Theorem \ref{eift}, $var_{\widehat{\theta}}\left[  \mathbb{IF}%
_{m}^{eff}\left(  \widehat{\theta}\right)  \right]  $ increases with $m$.
Further, $var_{\widehat{\theta}}\left[  \mathbb{IF}_{1}^{eff}\left(
\widehat{\theta}\right)  \right]  \asymp1/n$, since, conditional on
$\widehat{\theta},$ $\mathbb{IF}_{1}^{eff}\left(  \widehat{\theta}\right)  $
is the sample average of $iid$ random variables.

To proceed further we shall need to be more explicit about the model
$\mathcal{M}\left(  \Theta\right)  .$ For now, we consider finite-dimensional
parametric models whose dimension $k\left(  n\right)  $ increases with sample
size. That is $\theta\equiv\theta_{n}$ depends on $n$ and the dimension of
$\Theta\equiv\Theta_{n}$ is $k\left(  n\right)  $. Suppose $k\left(  n\right)
\asymp n^{\gamma},\gamma\geq0.$ Let $\widehat{\theta}_{n}$ be the maximum
likelihood estimator of $\theta.$ If $k\left(  n\right)  $ increases slower
than the sample size (i.e., $\gamma<1),$ then, a) under regularity conditions,
$||\widehat{\theta}_{n}-\theta_{n}||=O_{p}\left(  \left\{  k\left(  n\right)
/n\right\}  ^{1/2}\right)  =O_{p}\left(  n^{-\frac{1}{2}\left(  1-\gamma
\right)  }\right)  $ with $||\cdot||$ the usual Euclidean norm in $R^{k\left(
n\right)  \text{ }};$ and b) $var_{\widehat{\theta}}\left[  \mathbb{IF}%
_{m}^{eff}\left(  \widehat{\theta}\right)  \right]  ,\ $although increasing
with $m,$ remains order $1/n;$ as a consequence, if $m\ $is chosen greater
than the solution $m^{\ast}$ to $n^{-\frac{m^{\ast}+1}{2}\left(
1-\gamma\right)  }=n^{-1/2},$ the bias of $\widehat{\psi}_{m}$ will be
$o_{p}\left(  n^{-1/2}\right)  ,$ the rate of convergence will be the usual
parametric rate of $n^{-1/2},$ and thus, for $n$ sufficiently large, the
squared bias of $\widehat{\psi}_{m}$ will be less than the variance. As a
consequence, as discussed in section \ref{DR_CI_Section}, we can construct
honest (i.e uniform over $\theta_{n}\in\Theta_{n})$ asymptotic confidence
intervals centered at $\widehat{\psi}_{m^{\ast}}$ with width of order
$n^{-1/2}.$ Here is a concrete example.

\textbf{Example:} Suppose $O=\left(  Y,X\right)  $ with $Y$ Bernouilli and
with $X\ $having a density with respect to the uniform measure $\mu\left(
\cdot\right)  $ on the unit cube $\left[  0,1\right]  ^{d}$ in\ $R^{d}.$
Suppose $\psi=E\left[  \left(  E\left[  Y|X\right]  \right)  ^{2}\right]  .$
Let $\left\{  z_{l}\left(  \cdot\right)  \right\}  \equiv\left\{  z_{l}\left(
x\right)  ;1,2,...\right\}  $ be a countable, linearly independent, sequence
of either spline, polynomial, or compact wavelet basis functions dense in
$L_{2}\left(  \mu\right)  .$ Set $\overline{z}_{k}\left(  x\right)  =\left(
z_{1}\left(  x\right)  ,...z_{k}\left(  x\right)  \right)  ^{T}.$ {}We assume
\[
E\left(  Y|X=x\right)  \in\left\{
\begin{array}
[c]{c}%
b\left(  x;\overline{\eta}_{k^{\ast}\left(  n\right)  }\right)  \equiv\left[
1+\exp\left(  -\overline{\eta}_{k^{\ast}\left(  n\right)  }^{T}\overline
{z}_{k^{\ast}\left(  n\right)  }\left(  x\right)  \right)  \right]  ^{-1};\\
\overline{\eta}_{k^{\ast}\left(  n\right)  }\in\mathcal{N}_{k^{\ast}\left(
n\right)  }%
\end{array}
\right\}  ,
\]

\[
f_{X}\left(  x\right)  \in\left\{
\begin{array}
[c]{c}%
f\left(  x;\overline{\omega}_{k^{\ast\ast}\left(  n\right)  }\right)  \equiv
c\left(  \overline{\omega}_{k^{\ast\ast}\left(  n\right)  }\right)
\exp\left[  \overline{\omega}_{k^{\ast\ast}\left(  n\right)  }^{T}\overline
{z}_{k^{\ast\ast}\left(  n\right)  }\left(  x\right)  \right]  ;\\
\overline{\omega}_{k^{\ast\ast}\left(  n\right)  }\in\mathcal{W}_{k^{\ast\ast
}\left(  n\right)  }%
\end{array}
\right\}
\]
where $c\left(  \overline{\omega}_{k^{\ast\ast}\left(  n\right)  }\right)  $
is a normalizing constant and $\mathcal{N}_{k^{\ast}\left(  n\right)  }$ and
$\mathcal{W}_{k^{\ast\ast}\left(  n\right)  }$ are open bounded subsets of
$R^{k^{\ast}\left(  n\right)  }$ and $R^{k^{\ast\ast}\left(  n\right)  }.$
Hence, $\Theta_{n}=\mathcal{N}_{k\left(  n\right)  }\times\mathcal{W}%
_{k\left(  n\right)  }$ has dimension $k\left(  n\right)  =k^{\ast}\left(
n\right)  +k^{\ast\ast}\left(  n\right)  \ $and $\psi\left(  \theta\right)
=\psi_{n}\left(  \theta_{n}\right)  =\int b^{2}\left(  x;\overline{\eta
}_{k^{\ast}\left(  n\right)  }\right)  f\left(  x;\overline{\omega}%
_{k^{\ast\ast}\left(  n\right)  }\right)  d\mu\left(  x\right)  .$

He (2000) and Portnoy (1988) prove that, under regularity conditions,
$||\widehat{\theta}_{n}-\theta_{n}||=O_{p}\left(  \left\{  k\left(  n\right)
/n\right\}  ^{1/2}\right)  $ when $k\left(  n\right)  =n^{\gamma}\ll n.$ Below
we shall see that $var_{\widehat{\theta}}\left[  \mathbb{IF}_{m}^{eff}\left(
\widehat{\theta}\right)  |\widehat{\theta}\right]  \asymp1/n$ for $n^{\gamma
}\ll n.$

Consider next models whose dimension $k\left(  n\right)  \asymp n^{\gamma}$
increases faster than $n$ (i.e., $\gamma>1).$ In such models, the MLE
$\widehat{\theta}_{n}$ is generally inconsistent and indeed there may exist no
consistent estimator of $\theta_{n}$. In that case, $||\widehat{\theta}%
_{n}-\theta_{n}||\ $fails to be $o_{p}\left(  1\right)  $ and the conditional
bias $E_{\theta}\left[  \widehat{\psi}_{m}-\psi\left(  \theta\right)
|\widehat{\theta}\right]  $ may not decrease with $m.$ In order to guarantee
consistent estimators of $\theta_{n}\ $exist$,$ it is necessary to place
further apriori restrictions on the complexity of $\Theta_{n}.$ Typical
examples of complexity-reducing assumptions would be an $\epsilon-$sparseness
assumption that only $k\left(  n\right)  ^{\epsilon},0<\epsilon<1,$ of the
$k\left(  n\right)  $ parameters are non-zero or a smoothness assumption that
specifies that the rate of decrease of the $j^{th}$component of $\theta_{n}$
is equal to $1/j$ raised to a given (positive) power. Even after imposing such
complexity-reducing assumptions, $\psi\left(  \theta\right)  \equiv\psi
_{n}\left(  \theta_{n}\right)  $ may not be estimable at rate $n^{-1/2}$.

For instance consider the previous example but now with $\gamma^{\ast}$ and
$\gamma^{\ast\ast}$ exceeding $1,$ so $k^{\ast\ast}\left(  n\right)
=n^{\gamma^{\ast\ast}}>>n$, $k^{\ast}\left(  n\right)  =n^{\gamma^{\ast}}>>n$
and $k\left(  n\right)  =k^{\ast\ast}\left(  n\right)  +k^{\ast}\left(
n\right)  \asymp n^{\gamma}>>n\ $with $\gamma=\max\left(  \gamma^{\ast\ast
},\gamma^{\ast\ast}\right)  .$ Consider the norms $\left\Vert \overline{\eta
}_{k^{\ast}\left(  n\right)  }\right\Vert $ =$\left\{  \int b^{2}\left(
x;\overline{\eta}_{k^{\ast}\left(  n\right)  }\right)  d\mu\left(  x\right)
\right\}  ^{1/2},\ $\newline$\left\Vert \overline{\omega}_{k^{\ast\ast}\left(
n\right)  }\right\Vert _{p}=\left\{  \int f\left(  x;\overline{\omega
}_{k^{\ast\ast}\left(  n\right)  }\right)  ^{p}d\mu\left(  x\right)  \right\}
^{1/p}$ and \newline$\left\Vert \theta\right\Vert _{p}=\left\Vert
\overline{\eta}_{k^{\ast}\left(  n\right)  }\right\Vert +\left\Vert
\overline{\omega}_{k^{\ast\ast}\left(  n\right)  }\right\Vert _{p}$. Suppose,
under a particular smoothness assumption, optimal rate estimators
$\widehat{\overline{\eta}}_{k^{\ast}\left(  n\right)  }$ and
$\widehat{\overline{\omega}}_{k^{\ast\ast}\left(  n\right)  }$ of
$\overline{\eta}_{k^{\ast}\left(  n\right)  }$ and $\overline{\omega}%
_{k^{\ast\ast}\left(  n\right)  }$ satisfy $\left\Vert \widehat{\overline
{\eta}}_{k^{\ast}\left(  n\right)  }-\overline{\eta}_{k^{\ast}\left(
n\right)  }\right\Vert =O_{p}\left(  n^{-\gamma_{\eta}}\right)  $ and
$\left\Vert \widehat{\overline{\omega}}_{k^{\ast\ast}\left(  n\right)
}-\overline{\omega}_{k^{\ast\ast}\left(  n\right)  }\right\Vert _{p}%
=O_{p}\left(  n^{-\gamma_{\omega}}\right)  \ $for some $\gamma_{\eta}%
>0,\gamma_{\omega}>0$ and all $p\geq2.$ Hence, $||\widehat{\theta}%
-\theta||_{p}=O_{p}\left(  \max\left\{  n^{-\gamma_{\eta}},n^{-\gamma_{\omega
}}\right\}  \right)  .$ For $\gamma>1,$ based on arguments given later, we
expect that $var_{\widehat{\theta}}\left[  \widehat{\psi}_{m}-\psi\left(
\theta\right)  |\widehat{\theta}\right]  \asymp\frac{n^{\left(  \gamma
-1\right)  \left(  m-1\right)  }}{n}$ and $E_{\theta}\left[  \widehat{\psi
}_{m}-\psi\left(  \theta\right)  |\widehat{\theta}\right]  =O_{p}\left(
\left\Vert \widehat{\overline{\eta}}_{k^{\ast}\left(  n\right)  }%
-\overline{\eta}_{k^{\ast}\left(  n\right)  }\right\Vert ^{2}\left\Vert
\widehat{\overline{\omega}}_{k^{\ast\ast}\left(  n\right)  }-\overline{\omega
}_{k^{\ast\ast}\left(  n\right)  }\right\Vert _{m-1}^{m-1}\right)
=O_{p}\left(  n^{-2\gamma_{\eta}-\left(  m-1\right)  \gamma_{\omega}}\right)
$

$=O_{p}\left(  ||\widehat{\theta}-\theta||_{m-1}^{m+1}\right)  .$

To find the estimator $\widehat{\psi}_{m_{_{best}}}$ in the class
$\widehat{\psi}_{m}$with optimal rate of convergence$,$ let $m^{\ast}%
=1+\frac{1-4\gamma_{\eta}}{\left(  \gamma-1\right)  +2\gamma_{\omega}}$ be the
value of $m$ that equates the order $n^{-4\gamma_{\eta}-2\left(  m-1\right)
\gamma_{\omega}}$ of the squared bias and the order $\frac{n^{\left(
\gamma-1\right)  \left(  m-1\right)  }}{n}$ of the variance. Then $m_{_{best}%
}=\left\lfloor m^{\ast}\right\rfloor $ if the order $n^{-4\gamma_{\eta
}-2\left(  m-1\right)  \gamma_{\omega}}+n^{\left(  \gamma-1\right)  \left(
m-1\right)  -1}$of the mean squared error$\ $at $\left\lfloor m^{\ast
}\right\rfloor $ is less than or equal to that at $\left\lceil m^{\ast
}\right\rceil .$ Otherwise, $m_{_{best}}=\left\lceil m^{\ast}\right\rceil $.
The rate of convergence of $\widehat{\psi}_{m_{_{best}}}$ will often be slower
than $n^{-1/2}.$ Note $m_{_{best}}=1$ whenever $\gamma>2,$ regardless of
$\gamma_{\eta}$ and $\gamma_{\omega}.$

By using the estimator $\widehat{\psi}_{\left\lceil m^{\ast}\right\rceil }$
rather than $\widehat{\psi}_{m_{_{best}}},$we can guarantee that the variance
asymptotically dominates bias and construct honest (i.e uniform over
$\theta_{n}\in\Theta_{n})$ asymptotic confidence intervals centered at
$\widehat{\psi}_{\left\lceil m^{\ast}\right\rceil }.$ Of course, the sample
size $n$ at which, for all $\theta_{n}\in\Theta_{n},$ the finite sample
coverage of the intervals discussed above is close to the asymptotic (i.e.
nominal) coverage is generally unknown and could be very large. For this
reason, a better, but unfortunately as yet technically out of reach, approach
to confidence interval construction is discussed in section
\ref{DR_CI_Section}.

In contrast to the case of parametric models of increasing dimension, in the
infinite dimensional models which we consider in the following section, the
functionals $\psi\left(  \theta\right)  $ of interest have first order
influence functions $\mathbb{IF}_{1}\left(  \theta\right)  $ but do not have
higher order influence functions. As a consequence, an initial 'truncation'
step is needed before we can apply the approach outlined in the preceding paragraph.

Finally, even in the case of parametric models with $k\left(  n\right)  >>n$
and complexity reducing assumptions imposed, , when the minimax rate for
estimation of $\psi\left(  \theta\right)  $ is slower than $n^{-1/2},$ the
optimal estimator $\widehat{\psi}_{m_{_{best}}}$ in the class $\widehat{\psi
}_{m}$ will generally not be rate minimax. See Section 3.2.6 and Sections
4.1.1 for additional discussion.

\section{Inference for a Class of Doubly Robust
Functionals\label{drfunctionals}:}

\subsection{The class of functionals:}

In this Section we consider models in which the parameter $\theta$ is infinite
dimensional and $\theta,\Theta,$ and $\psi\left(  \cdot\right)  $ do not
depend on $n.$ We make the following 4 assumptions $Ai)-Aiv)$:\ 

Ai) The data $O\ $includes a vector $X,\ $where, for all $\ \theta\in\Theta,$
the distribution of $X\ $ is supported on the unit cube $\left[  0,1\right]
^{d}$ ( or more generally a compact set) in\ $R^{d}\ $and has a density
$f\left(  x\right)  $ with respect to the Lebesgue measure. Further
$\Theta=\Theta_{1}\times\Theta_{2}$ where $\theta_{1}\in\Theta_{1}$ governs
the marginal law of $X$ and $\theta_{2}\in\Theta_{2}$ governs the conditional
distribution of $O|X.$

Aii ) The parameter $\theta_{2}$ contains components $b=b\left(  \cdot\right)
$ and $p=p\left(  \cdot\right)  $, $b:\left[  0,1\right]  ^{d}%
\mathcal{\rightarrow R}$ and $p:\left[  0,1\right]  ^{d}\mathcal{\rightarrow
R},$ such that the functional $\psi\left(  \theta\right)  $ of interest has a
first order influence function $\mathbb{IF}_{1,\psi}\left(  \theta\right)
=\mathbb{V}\left[  IF_{_{1,\psi\ }}\left(  \theta\right)  \right]  ,$ where
\begin{gather}
IF_{1,\psi\ }\left(  \theta\right)  =H\left(  b,p\right)  -\psi\left(
\theta\right)  ,\text{ }\label{Hdef}\\
\text{ with }H\left(  b,p\right)  \equiv h\left(  O,b\left(  X\right)
,p\left(  X\right)  \right) \nonumber\\
\equiv b\ \left(  X\right)  p\left(  X\right)  h_{1}\left(  O\right)
+b\left(  X\right)  h_{2}\left(  O\right)  +p\left(  X\right)  h_{3}\left(
O\right)  +h_{4}\left(  O\right) \\
\equiv BPH_{1}+BH_{2}+PH_{3}+H_{4},\nonumber
\end{gather}
and the known functions $h_{1}\left(  \cdot\right)  ,h_{2}\left(
\cdot\right)  ,h_{3}\left(  \cdot\right)  ,h_{4}\left(  \cdot\right)  \ $do
not depend on $\theta.$

Aiii )

a) $\Theta_{2b}\times\Theta_{2p}\subseteq\Theta_{2}$ where $\Theta_{2b} $ and
$\Theta_{2p}$ are the parameter spaces for the functions $b$ and $p$.
Furthermore the sets $\Theta_{2b}$ and $\Theta_{2p}$ are dense in
$L_{2}\left(  F_{X}\left(  x\right)  \right)  $ at each $\theta_{1}^{\ast}%
\in\Theta_{1}.$

or

b) $b^{\ast}\left(  \cdot\right)  =p^{\ast}\left(  \cdot\right)  $,
$h_{3}\left(  O\right)  =h_{2}\left(  O\right)  \ wp1,\ $and $\Theta
_{2b}\subseteq\Theta_{2}$ is dense in $L_{2}\left(  F_{X}\left(  x\right)
\right)  $ at each $\theta_{1}^{\ast}\in\Theta_{1}.$

\textbf{Remark: }Aiiib) can be viewed as a special case of Aiiia) as discussed
in Example 1a below, so we need only prove results under assumption Aiiia).

Assumptions Ai)-Aiii) have a number of important implications that we
summarize in a Theorem and two Lemmas.

\begin{theorem}
\label{DR}Double-Robustness: Assume Ai)-Aiii) hold, and recall $p\ $and $b$
are elements of $\theta.$ Then
\[
E_{\theta}\left[  H\left(  b\ ,p^{\ast}\right)  \right]  =E_{\theta}\left[
H\left(  b^{\ast},p\right)  \right]  =E_{\theta}\left[  H\left(  b,p\right)
\right]  =\psi\left(  \theta\right)
\]
for all $\left(  p^{\ast},b^{\ast}\right)  \in\Theta_{2p}\times\Theta
_{2b},\theta\in\Theta.$
\end{theorem}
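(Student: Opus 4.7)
The plan is to reduce the three equalities to a pair of pointwise orthogonality conditions via the bilinearity of $H$ in $(b,p)$, and then to derive those orthogonalities by perturbing $\theta$ along a one-parameter submodel and invoking the defining properties of the first-order influence function.

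The third equality is immediate from mean-zero: by the definition of an influence function, $E_\theta[\mathbb{IF}_{1,\psi}(\theta)]=0$, which unpacks to $E_\theta[H(b,p)]=\psi(\theta)$. For the first equality, I use the bilinearity $H(b,p^{\ast}) - H(b,p) = (p^{\ast}-p)(X)[b(X)h_1(O)+h_3(O)]$ and iterated expectations to obtain
\[
E_\theta[H(b,p^{\ast})] - \psi(\theta) = E_\theta\bigl[(p^{\ast}-p)(X)\bigl(b(X)u(X)+w(X)\bigr)\bigr],
\]
where $u(x)\equiv E_\theta[h_1(O)\mid X=x]$ and $w(x)\equiv E_\theta[h_3(O)\mid X=x]$. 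Hence it suffices to prove $bu+w=0$ a.s.\ under $F_X$.

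To derive this orthogonality I exploit the freedom granted by Aiiia) to construct a smooth one-parameter submodel $\tilde\theta(\zeta)\in\Theta$ with $\tilde\theta(0)=\theta$ in which only the $p$-component varies, $p_\zeta = p + \zeta\eta$, while $b$ and $\theta_1$ are held fixed and $p_\zeta\in\Theta_{2p}$ for small $|\zeta|$. Writing $S$ for its ordinary first-order score at $\zeta=0$, the pointwise mean-zero identity along the submodel gives $\psi(\tilde\theta(\zeta)) = E_{\tilde\theta(\zeta)}[H(b,p_\zeta)]$. Differentiating the right-hand side at $\zeta=0$ yields
\[
\partial_\zeta \psi(\tilde\theta(\zeta))\big|_{\zeta=0} = E_\theta\bigl[\eta(X)(bu+w)(X)\bigr] + E_\theta[H(b,p)\,S],
\]
while the defining derivative property of $\mathbb{IF}_{1,\psi}$ identifies the same derivative with $E_\theta[(H(b,p)-\psi(\theta))S] = E_\theta[H(b,p)\,S]$, since $E_\theta[S]=0$. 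Subtracting, $E_\theta[\eta(X)(bu+w)(X)]=0$ for every admissible direction $\eta=p^{\ast}-p$. Since $p\in\Theta_{2p}$ and $\Theta_{2p}$ is dense in $L_2(F_X)$, the admissible $\eta$ are dense in $L_2(F_X)$, forcing $bu+w=0$ a.s. The first equality follows; the second equality $E_\theta[H(b^{\ast},p)]=\psi(\theta)$ is then obtained by the symmetric argument, varying the $b$-component along an analogous submodel to produce the companion identity $pu+v=0$ a.s., with $v\equiv E_\theta[h_2(O)\mid X]$.

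The main obstacle I anticipate is the construction and regularity-verification of the ``$p$-only'' (and symmetrically ``$b$-only'') submodels. Aiiia) only promises $\Theta_{2b}\times\Theta_{2p}\subseteq\Theta_2$, leaving room for auxiliary components of $\theta_2$ that must be held fixed as $p$ varies; one must confirm such pinning can be done smoothly so that the submodel qualifies as \emph{suitably smooth and regular} in the sense required by the definition of influence function, thereby validating both the differentiation-under-the-integral step and the identification $\partial_\zeta\psi = E_\theta[IF_{1,\psi}\,S]$. Once the submodel is in place, the remaining algebra is mechanical.
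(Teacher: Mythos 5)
Your proposal is correct and follows essentially the same route as the paper: the paper reduces the theorem, via the bilinear decomposition of $H$, to the conditional orthogonality relations $E_{\theta}\left[ H_{1}B+H_{3}\mid X\right] =E_{\theta}\left[ H_{1}P+H_{2}\mid X\right] =0$ (part 1 of Lemma \ref{condE}), which it proves by exactly your argument—differentiating $\psi(\widetilde{\theta}_{l}(t))=E_{\widetilde{\theta}_{l}(t)}[H(b,p_{l}(t))]$ along paths that perturb only $p$ (resp.\ $b$), matching with the defining property of $\mathbb{IF}_{1,\psi}$, and invoking density of the perturbation directions in $L_{2}(F_{X})$. Your closing caveat about verifying smoothness/regularity of the ``$p$-only'' submodels is a point the paper likewise assumes implicitly under Ai) and Aiiia), so there is no substantive difference.
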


\begin{proof}
: $E_{\theta}\left[  H\left(  b^{\ast},p\right)  \right]  -E_{\theta}\left[
H\left(  b,p\right)  \right]  =E_{\theta}\left[  \left\{  H_{1}p\left(
X\right)  +H_{2}\right\}  \left\{  b\left(  X\right)  -b^{\ast}\left(
X\right)  \right\}  \right]  $ and $E_{\theta}\left[  H\left(  b\ ,p^{\ast
}\right)  \right]  -E_{\theta}\left[  H\left(  b,p\right)  \right]
=E_{\theta}\left[  \left\{  H_{1}b\left(  X\right)  +H_{3}\right\}  \left\{
p\left(  X\right)  -p^{\ast}\left(  X\right)  \right\}  \right]  .$ \ The
theorem then follows from part 1) of the following lemma.
\end{proof}

Theorem \ref{DR} states that $H\left(  \cdot,\cdot\right)  $\ has mean
$\psi\left(  \theta\right)  $\ under $F\left(  \cdot;\theta\right)  $\ even
when $p $\ is misspecified as $p^{\ast}$\ or $b$\ is misspecified as $b^{\ast
}.$\ We refer to the functional $\psi\left(  \theta\right)  $\ as doubly
robust because of this property. The next lemma shows that $H\left(  b^{\ast
},p^{\ast}\right)  $\ is not unbiased if both $b$\ and $p$\ are simultaneously
misspecified. That is, $E_{\theta}\left[  H\left(  b^{\ast},p^{\ast}\right)
\right]  \neq\psi\left(  \theta\right)  .$

\begin{lemma}
\label{condE} Assume Ai)-Aiii) hold. Then

\begin{enumerate}
\item $E_{\theta}\left[  \left\{  H_{1}B+H_{3}\right\}  |X\right]  =E_{\theta
}\left[  \left\{  H_{1}P+H_{2}\right\}  |X\right]  =0$

\item $E_{\theta}\left[  H\left(  b^{\ast},p^{\ast}\right)  \right]
-E_{\theta}\left[  H\left(  b,p\right)  \right]  =E_{\theta}\left[  \left(
B-B^{\ast}\right)  \left(  P-P^{\ast}\right)  H_{1}\right]  $

and $\psi\left(  \theta\right)  \equiv E_{\theta}\left[  H\left(  b,p\right)
\right]  =E_{\theta}\left[  -BPH_{1}+H_{4}\right]  $
\end{enumerate}
\end{lemma}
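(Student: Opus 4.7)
The plan is to prove Part~1 first, since Part~2 follows from it by a short algebraic manipulation. For Part~1 the key idea is to exploit the defining property of a first order influence function on parametric submodels that vary only one of the nuisance functions $b$ or $p$ while holding the other (and the marginal of $X$) fixed.

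Concretely, I would consider a smooth one-dimensional submodel $\widetilde\theta(\varepsilon)$ passing through $\theta$ at $\varepsilon=0$ whose only effect is to change $b$ to $b+\varepsilon\widetilde b$, leaving $f_X$ and $p$ unchanged; assumption Aiii)a) guarantees that $\widetilde b$ can be taken in a subset of $L_2(F_X)$ that is dense. Differentiating
\[
\psi\!\left(\widetilde\theta(\varepsilon)\right)=E_{\widetilde\theta(\varepsilon)}\!\left[(B+\varepsilon\widetilde B)PH_1+(B+\varepsilon\widetilde B)H_2+PH_3+H_4\right]
\]
at $\varepsilon=0$ and applying the product rule yields
\[
\psi_{\backslash 1}(\theta)=E_\theta\!\left[H(b,p)\,S\right]+E_\theta\!\left[\widetilde B\,(PH_1+H_2)\right],
\]
where $S$ is the score at $\varepsilon=0$. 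But by the influence function identity of Aii), $\psi_{\backslash 1}(\theta)=E_\theta[(H(b,p)-\psi)S]=E_\theta[H(b,p)S]$ since $E_\theta[S]=0$. Hence $E_\theta[\widetilde B(PH_1+H_2)]=0$ for all admissible $\widetilde b$, and density in $L_2(F_X)$ forces $E_\theta[PH_1+H_2\mid X]=0$ a.s. The symmetric argument with $p$ perturbed and $b$ held fixed gives $E_\theta[BH_1+H_3\mid X]=0$.

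Part~2 is then pure algebra. Writing $B^*P^*-BP=(B^*-B)(P^*-P)+B(P^*-P)+(B^*-B)P$ yields the decomposition
\[
H(b^*,p^*)-H(b,p)=(B^*-B)(P^*-P)H_1+(B^*-B)(PH_1+H_2)+(P^*-P)(BH_1+H_3),
\]
and taking expectations, conditioning on $X$, and invoking Part~1 kills the last two terms, leaving $E_\theta[H(b^*,p^*)]-E_\theta[H(b,p)]=E_\theta[(B-B^*)(P-P^*)H_1]$. For the second assertion, Part~1 gives $E_\theta[B(PH_1+H_2)]=0$ and $E_\theta[P(BH_1+H_3)]=0$, whence $E_\theta[BH_2]=-E_\theta[BPH_1]$ and $E_\theta[PH_3]=-E_\theta[BPH_1]$, and plugging these into $\psi(\theta)=E_\theta[BPH_1+BH_2+PH_3+H_4]$ produces $\psi(\theta)=E_\theta[-BPH_1+H_4]$.

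The main obstacle is the step where a genuinely ``one-sided'' submodel is constructed: one must exhibit, for each direction $\widetilde b$ in a dense subset of $L_2(F_X)$, a regular parametric submodel whose derivative perturbs $b$ in the direction $\widetilde b$ but leaves $p$ and $f_X$ undisturbed. In the nonparametric setting governed by Ai)--Aiii) such submodels exist by standard tilting constructions that perturb only the conditional factor of $f_{O\mid X}$ responsible for $b$, but the argument has to be done carefully enough to guarantee the resulting $L_2$ tangent is dense; once this is in hand, the rest of the proof is routine.
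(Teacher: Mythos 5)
Your proposal is correct and follows essentially the same route as the paper: perturb $p$ (respectively $b$) along linear paths with a dense set of directions while holding the other function and $F_X$ fixed, use the first order influence function identity to cancel the score term, and conclude Part~1 by density in $L_2(F_X)$; Part~2 is then the same algebraic decomposition of $B^*P^*-BP$. The only cosmetic difference is in the last assertion, where the paper obtains $\psi(\theta)=E_\theta[-BPH_1+H_4]$ by setting $B^*=P^*=0$ in the difference formula rather than substituting $E_\theta[BH_2]=E_\theta[PH_3]=-E_\theta[BPH_1]$ directly.
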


\begin{proof}
Part 1): By assumptions $Ai)$ and $Aiiia)$ we have paths $\widetilde{\theta
}_{l}\left(  t\right)  ,l=1,2,...,$ in our model with $\widetilde{\theta}%
_{l}\left(  0\right)  =\theta$ and$\ p_{l}\left(  t\right)  =p_{l}\left(
x;t\right)  =p\left(  x\right)  +tc_{l}\left(  x\right)  ,b_{l}\left(
x;t\right)  =b\left(  x\right)  ,F_{l}\left(  x;t\right)  =F\left(  x\right)
$ for $l=1,2,..., $ where the sequence $c_{l}\left(  \cdot\right)  $ is dense
in $L_{2}\left[  F_{X}\left(  x\right)  \right]  .$ Let $S_{l}\left(
\theta\right)  \ $be the score for path $\widetilde{\theta}_{l}\left(
t\right)  $ at $t=0.$ Then by $\psi\left(  \widetilde{\theta}_{l}\left(
t\right)  \right)  =E_{\widetilde{\theta}_{l}\left(  t\right)  }\left[
H\left(  b,p_{l}\left(  t\right)  \right)  \right]  $
\begin{align*}
d\psi\left(  \widetilde{\theta}_{l}\left(  t\right)  \right)  /dt_{|t=0}  &
=E_{\theta}\left[  \left\{  H_{1}B+H_{3}\right\}  c_{l}\left(  X\right)
\right] \\
&  +E_{\theta}\left[  H\left(  b,p\right)  S_{l}\left(  \theta\right)
\right]
\end{align*}
By $\mathbb{IF}_{1,\psi}\left(  \theta\right)  $=$H\left(  b,p\right)
-\psi\left(  \theta\right)  ,$%
\[
d\psi\left(  \widetilde{\theta}_{l}\left(  t\right)  \right)  /dt_{|t=0}%
=E_{\theta}\left[  H\left(  b,p\right)  S_{l}\right]
\]
Thus $E\left[  \left\{  H_{1}B+H_{3}\right\}  c_{l}\left(  X\right)  \right]
=0.$ But $\left\{  c_{l}\left(  \cdot\right)  \right\}  $ is dense in
$L_{2}\left[  F_{0}\left(  X\right)  \right]  \ $\ so
\[
E\left[  H_{1}B+H_{3}|X\right]  =0
\]
An analogous argument proves $E_{\theta}\left[  \left\{  H_{1}P+H_{2}\right\}
|X\right]  =0.$ Part 2): $E_{\theta}\left[  H\left(  b^{\ast},p^{\ast}\right)
\right]  -E_{\theta}\left[  H\left(  b,p\right)  \right]  =$%
\begin{align*}
&  E_{\theta}\left[  \left(  B^{\ast}P^{\ast}-BP\right)  H_{1}+\left(
B^{\ast}-B\right)  H_{2}+\left(  P^{\ast}-P\right)  H_{3}\right] \\
&  =E_{\theta}\left[  \left(  B^{\ast}P^{\ast}-BP\right)  H_{1}-\left(
B^{\ast}-B\right)  PH_{1}-\left(  P^{\ast}-P\right)  BH_{1}\right] \\
&  =E_{\theta}\left[  \left(  B-B^{\ast}\right)  \left(  P-P^{\ast}\right)
H_{1}\right]
\end{align*}
where the second equality is by part 1). Choosing $P^{\ast}=B^{\ast}=0$ wp1
completes the proof of the theorem since then $E_{\theta}\left[  H\left(
b^{\ast},p^{\ast}\right)  \right]  =E_{\theta}\left[  H_{4}\right]  $.
\end{proof}

Below we will need the following partial converse to Lemma \ref{condE}.

\begin{lemma}
\label{IF1} Let $\Theta_{2b},\Theta_{2p},\Theta_{1}$ and $\Theta$ and
$H\left(  b,p\right)  $ be as defined in Ai)- Aiiia). Suppose that
\[
E_{\theta}\left[  \left\{  H_{1}B+H_{3}\right\}  |X\right]  =E_{\theta}\left[
\left\{  H_{1}P+H_{2}\right\}  |X\right]  =0
\]
and $\psi\left(  \theta\right)  =E_{\theta}\left[  H\left(  b,p\right)
\right]  .$ Then $\mathbb{V}\left[  H\left(  b,p\right)  -\psi\left(
\theta\right)  \right]  \ $is the first order influence function of
$\psi\left(  \theta\right)  .$
\end{lemma}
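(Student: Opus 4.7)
The plan is to verify that $\mathbb{V}[H(b,p)-\psi(\theta)]$ satisfies the two conditions of Definition~1 for a first-order ($m=1$) influence function. Condition (i), that the mean is zero under $F(\cdot;\theta)$, is immediate from the hypothesis $\psi(\theta)=E_{\theta}[H(b,p)]$.

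For condition (ii), I would take an arbitrary smooth regular $r$-dimensional parametric submodel $\widetilde\theta(\zeta)$ passing through $\theta$, and compute $\partial\psi(\widetilde\theta(\zeta))/\partial\zeta_{l_{1}}$ at $\zeta=\widetilde\theta^{-1}\{\theta\}$. Since $\psi(\widetilde\theta(\zeta))=E_{F_{\zeta}}[B_{\zeta}P_{\zeta}H_{1}+B_{\zeta}H_{2}+P_{\zeta}H_{3}+H_{4}]$, applying the chain rule splits the derivative into three pieces: the contribution from differentiating the marginal law $F_{\zeta}$, which yields $E_{\theta}[H(b,p)\,S_{l_{1}}(\theta)]$ for the subject-specific first-order score $S_{l_{1}}(\theta)$; the contribution from $b_{\zeta}$, namely $E_{\theta}[\dot b(X)(PH_{1}+H_{2})]$; and the contribution from $p_{\zeta}$, namely $E_{\theta}[\dot p(X)(BH_{1}+H_{3})]$, where $\dot b,\dot p$ denote partial derivatives at $\zeta=\widetilde\theta^{-1}\{\theta\}$.

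The key step is to kill the last two pieces. Conditioning on $X$ and pulling $\dot b(X),\dot p(X)$ outside the inner expectation reduces them to $E_{\theta}[\dot b(X)\,E_{\theta}[PH_{1}+H_{2}\mid X]]$ and $E_{\theta}[\dot p(X)\,E_{\theta}[BH_{1}+H_{3}\mid X]]$, both of which vanish by the standing hypothesis. Using $E_{\theta}[S_{l_{1}}]=0$ to recenter then gives $\partial\psi(\widetilde\theta(\zeta))/\partial\zeta_{l_{1}}|_{\zeta=\widetilde\theta^{-1}\{\theta\}}=E_{\theta}[(H(b,p)-\psi(\theta))\,S_{l_{1}}(\theta)]$. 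Translating the single-observation score into the $U$-statistic score $\widetilde{\mathbb{S}}_{1,l_{1}}=\sum_{j}S_{l_{1},j}$ and using independence across subjects (so cross terms vanish because $E_\theta[IF_{1,\psi}(O_i)S_{l_1,j}]=0$ for $i\neq j$) yields exactly the identity required by Definition~1 for $m=s=1$.

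Conceptually, this proof simply reverses the manipulation used in the proof of Lemma~\ref{condE}: there, the known form of $\mathbb{IF}_{1,\psi}$ forced the two conditional-mean identities via a judicious choice of submodel; here, the conditional-mean identities themselves force $\mathbb{V}[H(b,p)-\psi(\theta)]$ to serve as the first-order influence function for every submodel. The only nontrivial point is justifying the three-way chain-rule differentiation under the expectation, which I would absorb into the ``suitably smooth and regular'' qualifier already imposed on submodels in Definition~1 and its Appendix hypotheses (dominated convergence plus integrability of $H(b,p)$ and its $b,p$-partials).
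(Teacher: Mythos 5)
Your proposal is correct and follows essentially the same route as the paper: both split the variation of $\psi(\widetilde\theta(\zeta))$ into the part coming from the law with $(b,p)$ held fixed (whose influence function is the recentered $H(b,p)$) and the part coming from perturbing $(b,p)$ with the law held fixed, and both kill the latter using the hypotheses $E_{\theta}[H_{1}B+H_{3}\mid X]=E_{\theta}[H_{1}P+H_{2}\mid X]=0$. The paper packages this via linearity of influence functions plus derivatives along dense perturbation directions $c_{l},d_{l}$, whereas you carry out the identical cancellation inside one chain-rule computation along an arbitrary regular submodel; the substance is the same.
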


\begin{proof}
: The influence function of the functional $E_{\theta}\left[  H\left(
b^{\ast},p^{\ast}\right)  \right]  $ for known functions $b^{\ast},p^{\ast} $
is $\mathbb{V}\left[  H\left(  b^{\ast},p^{\ast}\right)  -E_{\theta}\left[
H\left(  b^{\ast},p^{\ast}\right)  \right]  \right]  .$ Thus by the linearity
of first order influence functions, the Lemma is true if and only if for each
$\theta_{0}\in\Theta,$ the functional $\tau\left(  b,p\right)  =E_{\theta_{0}%
}\left[  H\left(  b,p\right)  \right]  $ with $\theta_{0}$ fixed has influence
function equal to $0$ wp1 at $\left(  b,p\right)  =\left(  b_{0},p_{0}\right)
\subset\theta_{0}.$ That the influence function is equal to $0$ follows from
the fact that, under the assumptions of the Lemma, for sets $\left\{
c_{l}\left(  \cdot\right)  \right\}  $ and $\left\{  d_{l}\left(
\cdot\right)  \ \right\}  $ dense in $L_{2}\left[  F_{0}\left(  X\right)
\right]  $,
\begin{align*}
&  dE_{\theta_{0}}\left[  H\left(  b_{0}\left(  X\right)  +tc_{l}\left(
X\right)  ,p_{0}\left(  X\right)  +td_{l}\left(  X\right)  \right)  \right]
/dt_{|t=0}\\
&  =E_{\theta}\left[  \left\{  H_{1}b_{0}\left(  X\right)  +H_{3}\right\}
d_{l}\left(  X\right)  \right]  +E_{\theta}\left[  \left\{  H_{1}p_{0}\left(
X\right)  +H_{2}\right\}  c_{l}\left(  X\right)  \right]  =0
\end{align*}

\end{proof}

Results of Ritov and Bickel (1990) and Robins and Ritov (1997) imply it is not
possible to construct honest asymptotic confidence intervals for $\psi\left(
\theta\right)  $ whose width shrinks to $0$ as $n\rightarrow\infty$ if
$b\left(  \cdot\right)  \ $\ and $p\left(  \cdot\right)  $ are too
rough$.$\ Therefore we also place apriori bounds on their roughness. Our
bounds will be based on the following definition.

\begin{definition}
A function $h(\cdot)$ with domain $\left[  0,1\right]  ^{d}$ is said to belong
to a H\"{o}lder ball $H(\beta,C),$ with H\"{o}lder exponent $\beta>0 $ and
radius $C>0,$ if and only if $h\left(  \cdot\right)  $ is uniformly bounded by
$C$, all partial derivatives of $h(\cdot)$ up to order $\left\lfloor
\beta\right\rfloor $ exist and are bounded, and all partial derivatives
$\nabla^{\left\lfloor \beta\right\rfloor }$ of order $\left\lfloor
\beta\right\rfloor $ satisfy
\[
\sup_{x,x+\delta x\in\left[  0,1\right]  ^{d}}\left\vert \nabla^{\left\lfloor
\beta\right\rfloor }h(x+\delta x)-\nabla^{\left\lfloor \beta\right\rfloor
}h(x)\right\vert \leq C||\delta x||^{\beta-\left\lfloor \beta\right\rfloor }.
\]

\end{definition}

We note that the $L_{p},2<p<\infty$ and $L_{\infty}$ rates of convergence for
estimation of a marginal density or conditional expectation $h\left(
\cdot\right)  \in$ $H(\beta,C)$ are $O\left(  n^{-\frac{\beta}{2\beta+d}%
}\right)  $ and $O\left(  \left(  \frac{n}{\log n}\right)  ^{-\frac{\beta
}{2\beta+d}}\right)  $ respectively$.$ We refer to an estimator attaining
these rates as rate optimal.

Aiv) We assume $b\left(  \cdot\right)  ,\ p\left(  \cdot\right)  ,$ and
g$\left(  \cdot\right)  $ lie in given H\"{o}lder balls $H(\beta_{b},C_{b}),$
$H(\beta_{p},C_{p}),$ $H(\beta_{g},C_{g})$ where
\begin{equation}
g\left(  x\right)  \equiv E\left\{  H_{1}|X=x\right\}  f\left(  x\right)
\label{gee}%
\end{equation}

Furthermore we assume $g\left(  X\right)  >\sigma_{g}>0$ wp1. {}Finally we
assume, as can always be arranged by a suitable choice of estimator, that the
initial training sample estimators $\widehat{b}\left(  .\right)
,\widehat{p}\left(  .\right)  ,$ and $\widehat{g}\left(  \cdot\right)  $ are
rate optimal, have more than max$\left\{  \beta_{b},\beta_{g},\beta
_{p}\right\}  $ derivatives, and have $L_{\infty}$ norm bounded by a constant
$c_{\infty}.$ Further inf$_{x\in\left[  0,1\right]  ^{d}}$ $\widehat{g}\left(
x\right)  >\sigma_{g.}.$ The reason for the restrictions on $g\left(
\cdot\right)  $ will become clear below.

The restrictions $Ai)-Aiv)$ are the only restrictions common to all
functionals and models in the class. Additional model and/or functional
specific restrictions will be given below.

To motivate our interest in such a class of functionals and models we provide
a number of examples. In each case, one can use Lemma $\left(  \ref{IF1}%
\right)  $ to verify that the influence function of $\psi\left(
\theta\right)  $ is as given. All but examples 3 and 4 are examples of
(locally) nonparametric models.

\textbf{Example 1:} Suppose $O$=$\left(  A,Y,X\right)  \ $with $A$ and $Y$
univariate random variables.

\textbf{Example 1a:} \textbf{Expected Product of Conditional Expectations:
}Let\textbf{\ }$\psi\left(  \theta\right)  =E_{\theta}\left[  p\left(
X\ \right)  b\left(  X\ \right)  \right]  $ where $b\left(  X\ \right)
=E_{\theta}\left[  Y|X\right]  ,p\left(  X\ \right)  =E_{\theta}\left[
A|X\right]  .$ In this model
\begin{align*}
IF_{1,\psi}\left(  \theta\right)   &  =p\left(  X\ \right)  b\left(
X\ \right)  -\psi\left(  \theta\right) \\
&  +p\left(  X\ \right)  \left\{  Y-b\left(  X\right)  \right\}  +b\left(
X\ \right)  \left\{  A-p\left(  X\right)  \right\}
\end{align*}
so $H_{1}=-1,H_{2}=A,H_{3}=Y,H_{4}=0.$

We also consider the special case of this model where $A=Y$ with probability
one $\left(  w.p.1\right)  $. Then, as in assumption $Aiiib),$ $b\left(
X\ \right)  =p\left(  X\ \right)  $ $w.p.1,$ $H_{2}=H_{3}$ wp1. Then
$\psi\left(  \theta\right)  =E_{\theta}\left[  b^{2}\left(  X\ \right)
\right]  .$ In Section \ref{adaptive_section}, we show how our confidence
interval for $E_{\theta}\left[  b^{2}\left(  X\ \right)  \right]  $ can be
used to obtain an adaptive confidence interval for the regression function
$b\left(  \cdot\ \right)  $.

\textbf{Example 1b :} \textbf{Expected Conditional Covariance}
\[
\psi\left(  \theta\right)  =E_{\theta}\left[  AY\right]  \ -E_{\theta}\left[
p\left(  X\ \right)  b\left(  X\ \right)  \right]  =E_{\theta}\left[
Cov_{\theta}\left\{  Y,A|X\right\}  \right]
\]
has influence function
\[
AY-\left\{  p\left(  X\ \right)  b\left(  X\ \right)  +p\left(  X\ \right)
\left\{  Y-b\left(  X\right)  \right\}  +b\left(  X\ \right)  \left\{
A-p\left(  X\right)  \right\}  \right\}  -\psi\left(  \theta\right)
\]
so $H_{1}=\ 1,H_{2}=-A,H_{3}=-Y,H_{4}=AY$ .

The next example 1c shows that a confidence interval and point estimators for
$E_{\theta}\left[  Cov_{\theta}\left\{  Y,A|X\right\}  \right]  $ can be used
to obtain confidence intervals and point estimator for the variance weighted
average treatment effect in an observational study.

\textbf{Example 1c: }\textbf{Variance-weighted average treatment
effect:}\ Suppose, in an observational study, $O=\left\{  Y^{\ast
},A,X\right\}  $, $A$ is a binary treatment taking values in $\left\{
0,1\right\}  $, $Y^{\ast}$ is a univariate response and $X$ is a vector of
pretreatment covariates. Consider the parameter $\tau\left(  \theta\right)  $
given by:%
\begin{equation}
\tau\left(  \theta\right)  =\frac{E_{\theta}\left[  \ cov_{\theta}(Y^{\ast
},A|X)\right]  }{E_{\theta}\left[  \ var_{\theta}(A|X)\right]  }%
=\frac{E_{\theta}\left[  \ cov_{\theta}(Y^{\ast},A|X)\right]  }{E_{\theta
}\left[  \ \pi\left(  X\right)  \left\{  1-\pi\left(  X\right)  \right\}
\right]  },\text{ }%
\end{equation}
where $\pi\left(  X\right)  =pr\left(  A=1|X\right)  $ is often referred to as
the propensity score. We are interested in $\tau\left(  \theta\ \right)  $ for
several reasons. First, in the absence of confounding by unmeasured factors,
$\tau\left(  \theta\right)  $ is the variance-weighted average treatment
effect since $\tau\left(  \theta\right)  $ can be rewritten as $E_{\theta
}\left[  w_{\theta}(X)\gamma\left(  X;\theta\right)  \ \right]  $ where
$w_{\theta}(X)=\frac{var_{\theta}(A|X)}{E_{\theta}\left[  var_{\theta
}(A|X)\right]  }$ and
\[
\gamma\left(  x;\theta\right)  =E_{\theta}(Y^{\ast}|A=1,X=x)-E_{\theta
}(Y^{\ast}|A=0,X=x)
\]
is the average conditional treatment effect at level $x$ of the covariates.
Second, under the semiparametric model
\begin{equation}
\gamma\left(  X;\theta\right)  =\upsilon\left(  \theta\right)  \text{
}w.p.1\label{sr}%
\end{equation}
that assumes the treatment effect does not depend on $X,$ $\tau\left(
\theta\ \right)  =\upsilon\left(  \theta\right)  .$ However since the model
$\left(  \ref{sr}\right)  $ may not hold and therefore the parameter
$\upsilon\left(  \theta\right)  $ may be undefined, we choose to make
inference on $\tau\left(  \theta\ \right)  $ without imposing $\left(
\ref{sr}\right)  $. \ 

Now if for any $\tau\in R,$ we define $\psi\left(  \tau,\theta\right)  $ to
be
\[
\psi\left(  \tau,\theta\right)  =E_{\theta}\left[  \left\{  Y^{\ast}\left(
\tau\right)  -E_{\theta}\left(  Y^{\ast}\left(  \tau\right)  |X\right)
\right\}  \left\{  A-E_{\theta}\left(  A|X\right)  \right\}  \right]
\]
\ with $Y^{\ast}\left(  \tau\right)  =Y^{\ast}-\tau A,$ it is easy to verify
that $\tau\left(  \theta\right)  $ may also be characterized as the solution
$\tau=\tau\left(  \theta\right)  $ to the equation $\psi\left(  \tau
,\theta\right)  =0$. Thus inference on $\tau\left(  \theta\ \right)  $ is
easily obtained from inference on $\psi\left(  \tau,\theta\right)  .$ In
particular a $(1-\alpha)\ $confidence set for $\tau\left(  \theta\right)  $ is
the set of $\tau\ $such that a $(1-\alpha)\ $CI interval for $\psi\left(
\tau,\theta\right)  $ contains $0.\ $Therefore, with no loss of generality, we
consider the construction of a $(1-\alpha)\ $CI for $\psi\left(
\widetilde{\tau},\theta\right)  $ for a fixed value $\tau=\widetilde{\tau}$,
and write $Y=Y^{\ast}\left(  \widetilde{\tau}\right)  $ and $\psi\left(
\theta\right)  =\psi\left(  \widetilde{\tau},\theta\right)  .$ Thus
$\psi\left(  \theta\right)  =E_{\theta}\left[  Cov_{\theta}\left\{
Y,A|X\right\}  \right]  $ and we are in the setting of Example 1b.

In section \ref{testing_section}$,$ we show the rates at which the width of
the confidence sets for $\psi\left(  \widetilde{\tau},\theta\right)  $ and for
$\tau\left(  \theta\right)  $ shrink with $n$ are equal$.$

\textbf{Example 2a:} \textbf{Missing at Random: }Suppose $O=(AY,A\ ,X\ )$
where $Y$ is an outcome that is not always observed, $A\ $is the binary
missingness indicator, $X$ is a $d-$dimensional vector of always observed
continuous covariates, and let $b\left(  X\right)  =E(Y|A=1,X),\pi\left(
X\right)  =P(A=1|X)$ be the propensity score, and $p\left(  X\right)
=1/\pi\left(  X\right)  $. We suppose $\pi\left(  X\right)  >\sigma>0\ $and
define
\begin{equation}
\psi\left(  \theta\right)  =E_{\theta}\left[  \frac{AY}{\pi\left(  X\right)
}\right]  =E_{\theta}\left[  b\left(  X\right)  \right]
\end{equation}
Interest in $\psi\left(  \theta\right)  $ lies in the fact that $\psi\left(
\theta\right)  $ is the marginal mean of $Y$ under the missing (equivalently,
coarsening) at random (MAR) assumption that $P(A=1|X,Y)=\pi\left(  X\right)
.$ In this model $IF_{1,\psi}\left(  \theta\right)  =Ap\left(  X\right)
(Y-b(X))+b(X)-\psi\left(  \theta\right)  $ so $H_{1}=-A,H_{2}=1,H_{3}%
=AY,H_{4}=0$.

Note that if one has assumed apriori that $f_{X}\left(  \cdot\right)  $ and
$p\left(  X\right)  $ lay in H\"{o}lder balls with respective exponents
$\beta_{f_{X}}$ and $\beta_{p},$ then $\beta_{g}$ would be $\min\left(
\beta_{f_{X}}\ ,\beta_{p}\right)  ,$ since $g\left(  X\right)  =-f_{X}\left(
X\right)  /p\left(  X\right)  $.

\textbf{Example 2b:} \textbf{Missing Not-at Random:} Consider again the
setting of Example 2a but we no longer assume MAR. Rather we assume
\[
P(A=1|X,Y)=\left\{  1+\exp\left\{  -\left[  \gamma\left(  X\right)  +\alpha
Y\right]  \right\}  \right\}  ^{-1}%
\]
may depend on $Y$, where now $\gamma\left(  X\right)  $ is an unknown function
and $\alpha$ is a known constant (to be later varied in a sensitivity
analysis). In this case the marginal mean of $Y$ is given by $\psi\left(
\theta\right)  =E_{\theta}\left(  AY\left[  1+\exp\left\{  -\left[
\gamma\left(  X\right)  +\alpha Y\right]  \right\}  \right]  \right)  .$
Robins and Rotnitzky $\left(  2001\right)  $ proved this model places no
restrictions on $F\left(  o\right)  $ and derived
\[
IF_{1,\psi}\left(  \theta\right)  =A\left\{  1+\exp\left\{  -\alpha Y\right\}
p\left(  X\right)  \right\}  \left\{  Y-b\left(  X\ \right)  \right\}
+b\left(  X\right)  -\psi\left(  \theta\right)
\]
where, now,
\[
b\left(  X\ \right)  =E\left[  Y\exp\left\{  -\alpha Y\right\}  |A=1,X\right]
/E\left[  \exp\left\{  -\alpha Y\right\}  |A=1,X\right]
\]
$\ $ and $p\left(  X\right)  =\exp\left\{  -\gamma\left(  X\right)  \right\}
$. Thus
\[
H_{1}=-\exp\left\{  -\alpha Y\right\}  A,\ H_{2}=\left\{  1-A\right\}
,H_{3}=AY\exp\left\{  -\alpha Y\right\}  ,
\]
and $H_{4}=AY.$ When $\alpha=0$ this provides an alternate parametrization of
Example 2a.

\textbf{Example 3: Marginal Structural Models and The Average Treatment
Effect: }Consider the set-up of Example 1c including the non-identifiable
assumption of no unmeasured confounders, except now $A$ is discrete with
possibly many levels and $f\left(  a|X\right)  >\delta>0$ wp1. A marginal
structural model assumes $E_{f_{X}}\left\{  E_{\theta}(Y^{\ast}%
|A=a,X)\right\}  =d\left(  a,\upsilon\left(  \theta\right)  \right)  ,$ where
$d\left(  a,\upsilon\right)  $ is a known function and $\upsilon\left(
\theta\right)  $ is an unknown vector parameter of dimension $d^{\ast}.$ When
$A$ is dichotomous with $a\in\left\{  0,1\right\}  $ and $d\left(
a,\upsilon\right)  =\upsilon_{1}+\upsilon_{2}a$, then $\upsilon_{2}\left(
\theta\right)  $ is the average treatment effect parameter. Let $f^{\ast
}\left(  a\right)  $ be any density with the same support as $A$ and let
$s^{\ast}\left(  a\right)  $ be a $d^{\ast}$-vector function, both chosen by
the analyst. Then $\upsilon\left(  \theta\right)  \ $is identified as the
(assumed) unique value of $\upsilon$ satisfying
\[
\psi_{\upsilon}\left(  \theta\right)  \equiv E_{\theta}\left[  s\left(
O,A,\upsilon\right)  \frac{f^{\ast}\left(  A\right)  }{f\left(  A|X\right)
}\right]  =0,
\]
where $s\left(  O,a,\upsilon\right)  =\left\{  Y^{\ast}-d\left(
a,\upsilon\right)  \right\}  s^{\ast}\left(  a\right)  $. Thus a
$(1-\alpha)\ $confidence set for $\upsilon\left(  \theta\right)  $ is the set
of vectors $\upsilon\ $such that a $(1-\alpha)\ $CI for $\psi_{\upsilon
}\left(  \theta\right)  $ contains $0.\ $Therefore, with no loss of
generality, we consider the construction of a $(1-\alpha)\ $CI for the
$d-$vector functional $\psi\left(  \theta\right)  \equiv\psi
_{\widetilde{\upsilon}}\left(  \theta\right)  $ for a fixed value
$\widetilde{\upsilon}$ and define $h\left(  O,A\ \right)  \equiv s\left(
O,a,\widetilde{\upsilon}\right)  $ and $b\left(  a,X\ \right)  \equiv
E_{\theta}\left[  h\left(  O,a\ \right)  |A=a,X\right]  .$ Then $\psi
_{\widetilde{\upsilon}}\left(  \theta\right)  $ has influence function
\[
\ IF_{1}\left(  \theta\right)  =\frac{f^{\ast}\left(  A\right)  }{f\left(
A|X\right)  }\left\{  h\left(  O,A\ \right)  -b\left(  A,X\ \right)  \right\}
+\int b\left(  a,X\ \right)  dF^{\ast}\left(  a\right)  -\psi\left(
\theta\right)  .
\]
Next define $p\left(  a,X\right)  =1/f\left(  a|X\right)  ,\psi\left(
\theta,a\right)  =E_{f_{X}}\left[  b\left(  a,X\ \right)  \right]  $. Then
$IF_{1}\left(  \theta\right)  $ is the\ integral
\begin{align*}
\ IF_{1}\left(  \theta\right)   &  =\int dF^{\ast}\left(  a\right)
IF_{1}\left(  a,\theta\right)  ,\\
IF_{1}\left(  a,\theta\right)   &  =H_{1}\left(  a\right)  p\left(
a,X\right)  b\left(  a,X\ \right) \\
&  +H_{2}\left(  a\right)  b\left(  a,X\ \right)  +H_{3}\left(  a\right)
p\left(  a,X\ \right)  -\psi\left(  \theta,a\right)  ,\\
H_{1}\left(  a\right)   &  =-I\left(  A=a\right)  ,H_{2}\left(  a\right)
=1,H_{3}\left(  a\right)  =I\left(  A=a\right)  h\left(  O,a\ \right)  .\
\end{align*}

It follows that $IF_{1}\left(  \theta\right)  $ is a integral over
$a\in\emph{A}$ of influence functions $IF_{1}\left(  a,\theta\right)  $ for
parameters $\psi\left(  \theta,a\right)  $ in our class with $H_{4}\left(
a\right)  =0$. Thus we can estimate $\psi\left(  \theta\right)  $ by $\int
dF^{\ast}\left(  a\right)  \widehat{\psi}\left(  a\right)  ,$ where
$\widehat{\psi}\left(  a\right)  $ is an estimator of $\psi\left(
\theta,a\right)  ,$ If the support of $A$ is of greater cardinality than
$d^{\ast},$ the model is not locally nonparametric. Different choices for
$s^{\ast}\left(  a\right)  $ and $f^{\ast}\left(  a\right)  $ for which
$\left\{  \partial/\partial\upsilon^{T}\right\}  E_{\theta}\left[  s\left(
O,A,\upsilon\right)  \frac{f^{\ast}\left(  A\right)  }{f\left(  A|X\right)
}\right]  $ is invertible may result in difference influence functions. All
yield the same rate of convergence, although the constants differ. See Remark
\ref{proj} above. Extension of our methods to continuous $A$ will be treated elsewhere.

\textbf{Example 4: Confidence Intervals for The Optimal Treatment Strategy in
a Randomized Clinical Trial:} Consider a randomized clinical trial with data
$O=\left\{  Y,Y^{\ast},A,X\right\}  $, $A$ is a binary treatment taking values
in $\left\{  0,1\right\}  $, $Y^{\ast}$ and $Y$ are univariate responses, $X$
is a vector of pretreatment covariates. In a randomized trial, the
randomization probabilities $\pi_{0}\left(  X\right)  =P\left(  A=1|X\right)
$ are known by design. Let $b\left(  x\right)  =E_{\theta}(Y^{\ast
}|A=1,X=x)-E_{\theta}(Y^{\ast}|A=0,X=x\ )$ and $p\left(  x\right)  =E_{\theta
}(Y\ |A=1,X=x)-E_{\theta}(Y\ |A=0,X=x\ )$ be the average treatment effects at
level $X=x$ on $Y^{\ast}$ and $Y.$ We assume $Y\ $\ and $Y^{\ast}$ have been
coded so that positive treatment effects are desirable. Let $\psi\left(
\theta\right)  =E\left[  b\left(  X\right)  p\left(  X\right)  \right]  .$
Because the model is not locally nonparametric there exists more than a single
first order influence function. Indeed, for any given function $c\left(
\cdot\right)  ,$%

\begin{align*}
IF_{1,\psi}\left(  \theta,c\right)  =  &  b\left(  X\right)  p\left(
X\right)  -\psi\left(  \theta\right)  +\left[  b\left(  X\right)  \left\{
Y-Ap\left(  X\right)  \right\}  +p\left(  X\right)  \left\{  Y^{\ast
}-Ab\left(  X\right)  \right\}  \right] \\
&  \times\left\{  A-\pi_{0}\left(  X\right)  \right\}  \sigma_{0}^{-2}\left(
X\right)  +c\left(  X\right)  \left\{  A-\pi_{0}\left(  X\right)  \right\}
\end{align*}
with $\sigma_{0}^{2}\left(  X\right)  =\pi_{0}\left(  X\right)  \left\{
1-\pi_{0}\left(  X\right)  \right\}  $ is an influence function in our class
[provided it is square integrable] with $H_{1}=1-2A\left\{  A-\pi_{0}\left(
X\right)  \right\}  \sigma_{0}^{-2}\left(  X\right)  ,$ \newline%
$H_{2}=\left\{  A-\pi_{0}\left(  X\right)  \right\}  \sigma_{0}^{-2}\left(
X\right)  Y,$ $H_{3}=\left\{  A-\pi_{0}\left(  X\right)  \right\}  \sigma
_{0}^{-2}\left(  X\right)  Y^{\ast},$ \newline$H_{4}=c\left(  X\right)
\left\{  A-\pi_{0}\left(  X\right)  \right\}  $. As $c\left(  \cdot\right)  $
is varied, one obtains all first order influence functions. We do not discuss
the efficient choice of $c\left(  \cdot\right)  $ in this paper.

Our interest lies in the special case where $Y=Y^{\ast}$ wp1 (so there is but
one response of interest) and thus, as in assumption $Aiiib),$ $b=p,H_{2}%
=H_{3}\ $and we construct confidence interval for $\psi\left(  \theta\right)
=E\left[  b^{2}\left(  X\right)  \right]  .$ In Section \ref{adaptive_section}
we describe how we can use a confidence interval for $\psi\left(
\theta\right)  =E\left[  b^{2}\left(  X\right)  \right]  $ to obtain
confidence intervals for the treatment effect function $b\left(  x\right)  $
and, most importantly, for the optimal treatment strategy $d_{opt}\left(
x\right)  =I\left[  b\left(  x\right)  >0\right]  $ under which a subject with
covariate value $x$ is treated if and only if the treatment effect $b\left(
x\right)  $ is positive ( i.e., $d_{opt}\left(  x\right)  =1)$.

\subsection{ Higher Order Influence Functions for Our Model
:\label{DRHOIsection}}

\subsubsection{Dirac Kernels, Truncation Bias, and A Truncated Parameter}

In all of our examples the functions $p\left(  \cdot\right)  $ and $b\left(
\cdot\right)  $ are functions of conditional expectations given the continuous
random variable $X.$ \ It is well known that the associated point-evaluation
functional $p\left(  x\right)  $ and $b\left(  x\right)  $ do not have first
order influence functions. It then follows from part 5c of Theorem \ref{eift}
and the dependence of $\mathbb{IF}_{1,\psi}\left(  \theta\right)
=\mathbb{V}\left[  if_{1,\psi}\left(  O_{i_{1}};\theta\right)  \right]  $ on
$b\left(  \cdot\right)  $ and $p\left(  \cdot\right)  $ evaluated at the point
$X$ that, in none of our examples, does $\psi\left(  \theta\right)  $ have a
second (or higher) order influence function.

As a precise understanding of the reason for the nonexistence of higher order
influence functions for $\psi\left(  \theta\right)  $ is fundamental to our
approach, we now use part 5c of Theorem \ref{eift} to prove that
$\mathbb{IF}_{2,\psi}\left(  \theta\right)  $ does not exist by showing that
the functional $if_{1,\psi}\left(  o;\theta\right)  $ does not have a first
order influence function $\mathbb{V}\left[  if_{1,if_{1,\psi}\left(
o;\cdot\right)  \ }\left(  O;\theta\right)  \right]  .$ In this proof , we do
not assume that $b\left(  \cdot\right)  $ and $p\left(  \cdot\right)  $ are
functions of conditional expectations. Rather we only assume that our
functional satisfies Assumptions A(i-iv). Let $F_{X}$ and $f_{X}=f_{X}\left(
\cdot\right)  $ denote the marginal CDF and density of $X.$

Consider paths (parametric submodels) $\widetilde{\theta}_{l}\left(  t\right)
$ such that $\widetilde{\theta}_{l}\left(  0\right)  =\theta\ $satisfying
\begin{align*}
p_{l}\left(  t\ \right)   &  \equiv p_{l}\left(  x,t\ \right)  \equiv p\left(
x\right)  +tc_{l}\left(  x\right)  ,\\
b_{l}\left(  t\ \right)   &  \equiv b_{l}\left(  x,t\ \right)  \equiv b\left(
x\right)  +ta_{l}\left(  x\right)  ,
\end{align*}
where the sequences $c_{l}\left(  \cdot\right)  $ and $a_{l}\left(
\cdot\right)  ,l=1,2,...,$ are each dense in $L_{2}\left[  F_{X}\left(
x\right)  \right]  .$ Let
\[
s_{l}\left(  O;\theta\right)  =s_{l}\left(  O|X;\theta\right)  +s_{l}\left(
X;\theta\right)  ,
\]
$s_{l}\left(  O|X;\theta\right)  ,$ and $s_{l}\left(  X;\theta\right)  $
denote the overall, conditional, and marginal scores%
\[
\partial lnf\left(  O;\mathbb{\widetilde{\mathbb{\theta}}}_{l}\left(
0\right)  \right)  /\partial t,\text{ }\partial lnf\left(
O|X;\mathbb{\widetilde{\mathbb{\theta}}}_{l}\left(  0\right)  \right)
/\partial t,\text{ }\partial lnf_{X}\left(
X;\mathbb{\widetilde{\mathbb{\theta}}}_{l}\left(  0\right)  \right)  /\partial
t
\]

By linearity, $if_{1,\psi}\left(  o;\theta\right)  $ has an influence function
only if the functionals $b\left(  x\right)  $ and $p\left(  x\right)  $ have
one as well. Now by differentiating the identity
\[E_{\mathbb{\widetilde{\mathbb{\theta}}}_{l}\left(  t\right)  }\left[  \left\{
H_{1}b_{l}\left(  X,t\ \right)  +H_{3}\right\}  |X=x\right]  =0
\]
wrt to $t$ and evaluating at $t=0,$ we have
\[
-E_{\theta}\left[  \left\{  \left\{  H_{1}b\left(  X\ \right)  +H_{3}\right\}
\right\}  s_{l}\left(  O|X\right)  |X=x\right]  =E_{\mathbb{\theta}}\left[
H_{1}|X=x\right]  a_{l}\left(  x\right)
\]
However, by definition, $b\left(  x\right)  $ has an influence function
$\mathbb{V}\left[  if_{1,b\left(  x\right)  }\left(  O;\theta\right)  \right]
$ at $\theta\ $only if for $l=1,2,...,$ both $\partial b_{l}\left(
x,t\ \right)  /\partial t_{|t=0}=a_{l}\left(  x\right)  $ equals $E_{\theta
}\left[  if_{1,b\left(  x\right)  \ }\left(  O;\theta\right)  s_{l}\left(
O;\theta\right)  \right]  $ and $E_{\theta}\left[  if_{1,b\left(  x\right)
\ }\left(  O;\theta\right)  \right]  =0.$ Thus if $if_{1,b\left(  x\right)
}\left(  O;\theta\right)  $ exists, it must satisfy
\begin{align*}
&  -E_{\theta}\left[  \left\{  H_{1}b\left(  X\ \right)  +H_{3}\right\}
s_{l}\left(  O|X\right)  |X=x\right] \\
&  =E_{\theta}\left[  H_{1}|X=x\right]  E_{\theta}\left[  if_{1,b\left(
x\right)  \ }\left(  O;\theta\right)  s_{l}\left(  O;\theta\right)  \right]
\end{align*}
Without loss of generality, suppose $H_{1}\geq0$ wp 1$.$ Now if we could find
a 'kernel' $K_{f_{X},\infty}\left(  x,X\right)  $ such that%
\begin{align}
r\left(  x\right)   &  =E_{f_{X}}\left[  K_{f_{X},\infty}\left(  x,X\ \right)
r\left(  X\ \right)  \right] \nonumber\\
&  \equiv\int K_{f_{X},\infty}\left(  x,x^{\ast}\right)  r\left(  x^{\ast
}\right)  f_{X}\left(  x^{\ast}\right)  dx^{\ast}\text{ for all }r\left(
\cdot\right)  \in L_{2}\left(  F_{X}\right) \label{dirac2}%
\end{align}
then
\[
if_{1,b\left(  x\right)  \ }\left(  O\ ;\theta\right)  \equiv-\left[
\begin{array}
[c]{c}%
\left\{  E_{\theta}\left[  H_{1}|X=x\right]  \right\}  ^{-1/2}K_{f_{X},\infty
}\left(  x,X\ \right) \\
\times\left\{  E_{\theta}\left[  H_{1}|X\right]  \right\}  ^{-1/2}\left\{
H_{1}b\left(  X\ \right)  +H_{3}\right\}
\end{array}
\right]
\]
would be an influence function since
\begin{align*}
&  E_{\theta}\left[  H_{1}|X=x\right]  E_{\theta}\left[
\begin{array}
[c]{c}%
-\left\{  E_{\theta}\left[  H_{1}|X=x\right]  \right\}  ^{-1/2}K_{f_{X}%
,\infty}\left(  x,X\right)  \times\\
\left\{  E_{\theta}\left[  H_{1}|X\right]  \right\}  ^{-1/2}\left\{
H_{1}b\left(  X\ \right)  +H_{3}\right\}  s_{l}\left(  O;\theta\right)
\end{array}
\right] \\
&  =E\left[  H_{1}|X=x\right]  ^{1/2}\ E_{\theta}\left[
\begin{array}
[c]{c}%
-K_{f_{X},\infty}\left(  x,X\right)  \left\{  E_{\theta}\left[  H_{1}%
|X\right]  \right\}  ^{-1/2}\\
\times\left\{  H_{1}b\left(  X\ \right)  +H_{3}\right\}  \left\{  s_{l}\left(
O|X\right)  +s_{l}\left(  X\right)  \right\}
\end{array}
\right] \\
&  =E\left[  H_{1}|X=x\right]  ^{1/2}\ E_{f_{X}}\left\{  E_{\theta}\left[
\begin{array}
[c]{c}%
-K_{f_{X},\infty}\left(  x,X\right)  \left\{  E_{\theta}\left[  H_{1}%
|X\right]  \right\}  ^{-1/2}\times\\
\left\{  H_{1}b\left(  X\ \right)  +H_{3}\right\}  s_{l}\left(  O|X\right)  |X
\end{array}
\right]  \right\} \\
&  =-E_{\theta}\left[  \left(  H_{1}b\left(  X\right)  +H_{3}\right)
s_{l}\left(  O|X\right)  |X=x\right]
\end{align*}
By an analogous argument
\[
if_{1,p\left(  x\right)  \ }\left(  O;\theta\right)  =-\left[
\begin{array}
[c]{c}%
\left\{  E_{\theta}\left[  H_{1}|X=x\right]  \right\}  ^{-1/2}K_{f_{X},\infty
}\left(  x,X\right) \\
\times\left\{  E_{\theta}\left[  H_{1}|X\right]  \right\}  ^{-1/2}\left\{
H_{1}p\left(  X\ \right)  +H_{2}\right\}
\end{array}
\right]
\]
would be an influence function.

Indeed since the sequences $\left\{  c_{l}\left(  \cdot\right)  \right\}
\ $\ and $\left\{  a_{l}\left(  \cdot\right)  \right\}  $ are dense the
existence of such a kernel is also a necessary condition for $if_{1,b\left(
x\right)  \ }\left(  O;\theta\right)  $ \ and $if_{1,p\left(  x\right)
\ }\left(  O;\theta\right)  $ to exist and thus for $if_{1,\psi}\left(
o;\theta\right)  $ to exist. A kernel satisfying Eq.$\left(  \ref{dirac2}%
\right)  $ is referred to as the Dirac delta function wrt to the measure
$dF_{X}\left(  x\right)  $ and would clearly have to satisfy
\begin{equation}
K_{f_{X},\infty}\left(  x_{i_{1}},x_{i2}\right)  =0\text{ }if\text{ }x_{i_{2}%
}\neq x_{i_{1}}\label{diracpoint}%
\end{equation}
were it to exist. Of course a kernel satisfying Eq. (\ref{dirac2}) is known
not to exist in $L_{2}\left[  F_{X}\right]  \times L_{2}\left[  F_{X}\right]
.$ We conclude that $if_{1,\psi}\left(  o;\theta\right)  $ does not have an
influence function and therefore $\mathbb{IF}_{2,2,\psi}\left(  \theta\right)
$ does not exist.

\textbf{A Formal Approach:} To motivate how one might overcome this
difficulty, we note that kernels satisfying Eq. (\ref{dirac2}) exist as
generalized functions or kernels (also known as Schwartz functions or
distributions). We shall "formally' derive higher order influence functions
that appear to be elements of the space of generalized functions. However, we
use these calculations only as motivation for statistical procedures based on
ordinary kernels living in $L_{2}\left[  F_{X}\right]  \times L_{2}\left[
F_{X}\right]  .$ Thus it does not matter whether these formal calculations
could be made rigorous with appropriate redefinitions. Rather we can simply
regard the following as results obtained by applying a "formal calculus" to
part 5c of Theorem \ref{eift} that adds to the usual calculus additional
identities licensed by Eqs. (\ref{dirac2}) and (\ref{diracpoint})$.$

We will need the fact that$,$ for any function $v\left(  x;\theta\right)  ,$
Eq. $\left(  \ref{diracpoint}\right)  \ $implies that
\[
v\left(  x;\theta\right)  K_{f_{X},\infty}\left(  x,X\right)  =v\left(
X;\theta\right)  K_{f_{X},\infty}\left(  x,X\right)  .
\]

We now show that%
\[
\mathbb{IF}_{2,2,\psi}\left(  \theta\right)  \equiv\mathbb{V}\left[
IF_{2,2,\psi,i_{1},i_{2}}\left(  \theta\right)  \right]  =\Pi_{\theta
,2}\left[  \mathbb{V}\left[  if_{1,if_{1,\psi}\left(  O_{i_{1}};\cdot\right)
\ }\left(  O_{i_{2}};\theta\right)  /2\right]  |\mathcal{U}_{1}^{\perp
_{2,\theta}}\left(  \theta\right)  \right]
\]
would formally have U-statistic kernel
\begin{align}
IF_{2,2,\psi,i_{1},i_{2}}\left(  \theta\right)   &  =-\left[
\begin{array}
[c]{c}%
\varepsilon_{b,i_{1}}\left(  \theta\right)  E_{\theta}\left[  H_{1}|X_{i_{1}%
}\right]  ^{-\frac{1}{2}}K_{f_{X},\infty}\left(  X_{i_{1}},X_{i_{2}}\right) \\
E_{\theta}\left[  H_{1}|X_{i_{2}}\right]  ^{-\frac{1}{2}}\varepsilon_{p,i_{2}%
}\left(  \theta\right)
\end{array}
\right]  ,\label{infin2}\\
\text{with }\varepsilon_{b,i_{1}}\left(  \theta\right)   &  =\ \left\{
B_{i_{1}}H_{1,i_{1}}+H_{3,i_{1}}\right\}  ,\text{ }\varepsilon_{p,i_{2}%
}\left(  \theta\right)  =\ \left\{  H_{1,i_{2}}P_{i_{2}}+H_{2,i_{2}}\right\}
.\nonumber
\end{align}
To show Eq \ref{infin2} note, by
\[
\partial H\left(  b,p\right)  /\partial P=\partial\left\{  BPH_{1}%
+BH_{2}+PH_{3}+H_{4}\right\}  /\partial P=BH_{1}+H_{3}%
\]
$\ $and%
\[
\partial H\left(  b,p\right)  /\partial B=PH_{1}+H_{2},
\]
we have
\[
if_{1,if_{1,\psi}\left(  O_{i_{1}};\cdot\right)  \ }\left(  O_{i_{2}}%
;\theta\right)  =Q_{2,b,\overline{i}_{2}}\left(  \theta\right)
+Q_{2,p,\overline{i}_{2}}\left(  \theta\right)  -IF_{1,\psi,i_{2}}\left(
\theta\right)
\]
where%

\begin{align*}
Q_{2,p,\overline{i}_{2}}\left(  \theta\right)   &  \equiv\left\{  B_{i_{1}%
}H_{1,i_{1}}+H_{3,i_{1}}\right\}  if_{1,p\left(  X_{i_{1}}\right)  \ }\left(
O_{i_{2}};\theta\right) \\
&  =-\left\{  B_{i_{1}}H_{1,i_{1}}+H_{3,i_{1}}\right\}  E_{\theta}\left[
H_{1}|X_{i_{1}}\right]  ^{-\frac{1}{2}}\\
&  \times K_{f_{X},\infty}\left(  X_{i_{1}},X_{i_{2}}\right)  E_{\theta
}\left[  H_{1}|X_{i_{2}}\right]  ^{-\frac{1}{2}}\left\{  P_{i_{2}}H_{1,i_{2}%
}+H_{2,i_{2}}\right\} \\
&  =-\varepsilon_{b,i_{1}}\left(  \theta\right)  E_{\theta}\left[
H_{1}|X_{i_{1}}\right]  ^{-\frac{1}{2}}K_{f_{X},\infty}\left(  X_{i_{1}%
},X_{i_{2}}\right)  E_{\theta}\left[  H_{1}|X_{i_{2}}\right]  ^{-\frac{1}{2}%
}\varepsilon_{p,i_{2}}\left(  \theta\right) \\
Q_{2,b,\overline{i}_{2}}\left(  \theta\right)   &  \equiv\left\{  P_{i_{1}%
}H_{1,i_{1}}+H_{2,i_{1}}\right\}  if_{1,b\left(  X_{i_{1}}\right)  \ }\left(
O_{i_{2}};\theta\right) \\
&  =-\varepsilon_{b,i_{2}}\left(  \theta\right)  E_{\theta}\left[
H_{1}|X_{i_{2}}\right]  ^{-\frac{1}{2}}K_{f_{X},\infty}\left(  X_{i_{2}%
},X_{i_{1}}\right)  E_{\theta}\left[  H_{1}|X_{i_{1}}\right]  ^{-\frac{1}{2}%
}\varepsilon_{p,i_{1}}\left(  \theta\right)
\end{align*}

Thus, by part 5c of Theorem \ref{eift} $\ $%
\begin{align*}
\mathbb{IF}_{2,2,\psi}\left(  \theta\right)   &  =\Pi_{\theta,2}\left[
\frac{1}{2}\left\{  \mathbb{Q}_{2,p,\overline{i}_{2}}\left(  \theta\right)
+\mathbb{Q}_{2,b,\overline{i}_{2}}\left(  \theta\right)  +\mathbb{IF}%
_{1,\psi,i_{2}}\right\}  |\mathcal{U}_{1}^{\perp_{2,\theta}}\left(
\theta\right)  \right] \\
&  =\frac{1}{2}\left\{  \mathbb{Q}_{2,p,\overline{i}_{2}}\left(
\theta\right)  +\mathbb{Q}_{2,b,\overline{i}_{2}}\left(  \theta\right)
\right\}  =\mathbb{Q}_{2,p,\overline{i}_{2}}\left(  \theta\right)
\equiv\mathbb{V}\left[  RHS\text{ of Eq}.(\ref{infin2})\right]
\end{align*}
since $\mathbb{IF}_{1,\psi,i_{2}}$ is a function of only one subject's data
and $Q_{2,p,\overline{i}_{2}}\left(  \theta\right)  $ and $Q_{2,b,\overline
{i}_{2}}\left(  \theta\right)  $ are the same up to a permutation that
exchanges $i_{2}$ with $i_{1}.$

To obtain $IF_{3,3,\psi,\overline{i}_{m}}\left(  \theta\right)  ,$ one must
derive the influence function $if_{1,if_{2,2,\psi}\left(  O_{i_{1}},O_{i_{2}%
};\cdot\right)  \ }\left(  O_{i_{3}};\theta\right)  \ $of $if_{2,2,\psi
}\left(  O_{i_{1}},O_{i_{2}};\theta\right)  .$ The formula for $IF_{3,3,\psi
,\overline{i}_{m}}\left(  \theta\right)  $ is given in Eq. $\left(
\ref{imm}\right)  .$ A detailed derivation is given in the Appendix. Here we
simply note that the only essentially new point is that we now require the
influence function of $K_{f_{X},\infty}\left(  X_{i_{1}},X_{i_{2}}\right)  $,
which, as shown next, is given by
\begin{equation}
IF_{1,K_{f_{X},\infty}\left(  X_{i_{1}},X_{i_{2}}\right)  }=-\left\{
\begin{array}
[c]{c}%
K_{f_{X},\infty}\left(  X_{i_{1}},X_{i_{3}}\right)  K_{f_{X},\infty}\left(
X_{i_{3}},X_{i_{2}}\right) \\
-K_{f_{X},\infty}\left(  X_{i_{1}\ },X_{i_{2}}\right)
\end{array}
\right\} \label{kernelif}%
\end{equation}

To see that if Eq.$\left(  \ref{dirac2}\right)  $ held, Eq.$\left(
\ref{kernelif}\right)  $ would hold, note that for any path
$\widetilde{\mathbb{\theta}}\left(  t\right)  $ with
$\widetilde{\mathbb{\theta}}\left(  0\right)  = f_{X}(\cdot),$ $h\left(
x\right)  =E_{\widetilde{\mathbb{\theta}}\left(  t\right)  }\left[
K_{\widetilde{\mathbb{\theta}}\left(  t\right)  ,\infty}\left(  x,X_{i_{1}%
}\right)  h\left(  X_{i_{1}}\right)  \right]  .\ $Differentiating wrt to $t$
and evaluating at $t=0$ , we have
\[
0=E_{\theta}\left[  K_{f_{X},\infty}\left(  x,X\right)  h\left(  X\right)
S\left(  \theta\right)  \right]  +E_{_{\theta}}\left[  \left\{  \frac
{\partial}{\partial t}K_{\widetilde{\mathbb{\theta}}\left(  t\right)  ,\infty
}\left(  x,X_{i_{1}}\right)  _{|t=0}\right\}  h\left(  X_{i_{1}}\right)
\right]
\]
Hence it suffices to show that
\begin{align*}
&  -E_{\theta}\left[  K_{f_{X},\infty}\left(  x,X\right)  h\left(  X\right)
S\left(  \theta\right)  \right] \\
&  =E_{_{\theta}}\left[  \left\{  E_{\theta}\left\{  -K_{f_{X},\infty}\left(
x,X_{i_{2}}\right)  K_{f_{X},\infty}\left(  X_{i_{2}},X_{i_{1}}\right)
S_{i_{2}}\left(  \theta\right)  |X_{i_{1}}\right\}  \right\}  h\left(
X_{i_{1}}\right)  \right]
\end{align*}
But, by Eq.$\left(  \ref{dirac2}\right)  ,$
\begin{align*}
&  E_{_{\theta}}\left[  \left\{  E_{\theta}\left\{  -K_{f_{X},\infty}\left(
x,X_{i_{2}}\right)  K_{f_{X},\infty}\left(  X_{i_{2}},X_{i_{1}}\right)
S_{i_{2}}\left(  \theta\right)  |X_{i_{1}}\right\}  \right\}  h\left(
X_{i_{1}}\right)  \right] \\
&  =E_{_{\theta}}\left[  -K_{f_{X},\infty}\left(  x,X_{i_{1}}\right)
S_{i_{1}}\left(  \theta\right)  h\left(  X_{i_{1}}\right)  \right]  .
\end{align*}

\textbf{\ Feasible Estimators:} These "formal" calculations motivate a
"truncated Dirac" approach to estimate $\psi\left(  \theta\right)  .$ Let
$\left\{  z_{l}\left(  \cdot\right)  \right\}  \equiv\left\{  z_{l}\left(
X\right)  ;1,2,...\right\}  $ be a countable sequence of known basis{}
functions with dense span in $L_{2}\left(  F_{X}\right)  $ and define
$\overline{z}_{k}\left(  X\right)  ^{T}=\left(  z_{1}\left(  X\right)
,...,z_{k}\left(  X\right)  \right)  .$ Define
\[
K_{f_{X},k}\left(  X_{i_{1}},X_{i_{2}}\right)  \equiv\overline{z}_{k}\left(
X_{i_{1}}\right)  ^{T}\left\{  E_{f_{X}}\left[  \overline{z}_{k}\left(
X\ \right)  \overline{z}_{k}\left(  X\right)  ^{T}\right]  \right\}
^{-1}\overline{z}_{k}\left(  X_{i_{2}}\right)
\]
to be the projection kernel in $L_{2}\left(  F_{X}\right)  $ onto the
subspace
\[
lin\left\{  \overline{z}_{k}\left(  X\right)  \right\}  \equiv\left\{
\eta^{T}\overline{z}_{k}\left(  x\ \right)  ;\eta\in R^{k},\eta^{T}%
\overline{z}_{k}\left(  x\ \right)  \in L_{2}\left(  F_{X}\right)  \right\}
\]
spanned by the elements of $\overline{z}_{k}\left(  X\right)  .$ That is, for
any $h\left(  x\right)  ,$%
\begin{align*}
&  \Pi_{f_{X}}\left[  h\left(  X\right)  |lin\left\{  \overline{z}_{k}\left(
x\right)  \right\}  \right] \\
&  =E_{f_{X}}\left[  K_{f_{X},k}\left(  x,X\right)  h\left(  X\right)  \right]
\\
&  =\overline{z}_{k}\left(  x\right)  ^{T}\left\{  E_{f_{X}}\left[
\overline{z}_{k}\left(  X\ \right)  \overline{z}_{k}\left(  X\right)
^{T}\right]  \right\}  ^{-1}E_{f_{X}}\left[  \overline{z}_{k}\left(
X\ \right)  h\left(  X\right)  \right]
\end{align*}
Then we can view $K_{f_{X},k}\left(  x_{i_{1}},x_{i_{2}}\right)  $ as a
truncated at $k$ approximation to $K_{f_{X},\infty}\left(  x_{i_{1}},x_{i_{2}%
}\right)  $ that is in $L_{2}\left[  F_{X}\right]  \times L_{2}\left[
F_{X}\right]  $ and satisfies Eq.$\left(  \ref{dirac2}\right)  $ for all
$r\left(  x\right)  \in lin\left\{  \overline{z}_{k}\left(  X\right)
\right\}  .$ Then a natural idea would be to substitute
\[
IF_{2,2,\psi,i_{1},i_{2}}^{\left(  k\right)  }\left(  \widehat{\theta}\right)
\equiv\left(
\begin{array}
[c]{c}%
-\varepsilon_{b,i_{1}}\left(  \widehat{\theta}\right)  E_{\widehat{\theta}%
}\left[  H_{1}|X_{i_{1}}\right]  ^{-\frac{1}{2}}K_{\widehat{f}_{X},k}\left(
X_{i_{1}},X_{i_{2}}\right) \\
\times E_{\widehat{\theta}}\left[  H_{1}|X_{i_{2}}\right]  ^{-\frac{1}{2}%
}\varepsilon_{p,i_{2}}\left(  \widehat{\theta}\right)
\end{array}
\right)
\]
with, for example,
\[
\varepsilon_{b,i_{1}}\left(  \widehat{\theta}\right)  =\left\{  \widehat{B}%
_{i_{1}}H_{1,i_{1}}+H_{3,i_{1}}\right\}
\]
for the generalized function $IF_{2,2,\psi,i_{1},i_{2}}\left(  \widehat{\theta
}\right)  \ $based on Eqs. \ref{infin2} resulting in the feasible 2nd
U-statistic estimator%
\[
\widehat{\psi}_{2}^{\left(  k\right)  }=\psi\left(  \widehat{\theta}\right)
+\mathbb{IF}_{1,\psi}\left(  \widehat{\theta}\right)  +\mathbb{IF}%
_{2,2,\psi\left(  \theta\right)  }^{(k)}\left(  \widehat{\theta}\right)
\]
where%
\[
\mathbb{IF}_{2,2,\psi}^{(k)}\left(  \widehat{\theta}\right)  \equiv
\mathbb{V}\left[  IF_{2,2,\psi,i_{1},i_{2}}^{\left(  k\right)  }\left(
\widehat{\theta}\right)  \right]
\]

To avoid having to do a matrix inversion it is convenient to choose
$\overline{z}_{k}\left(  X\ \right)  =\overline{\varphi}_{k}\left(
X\ \right)  /\left\{  \widehat{f}_{X}\left(  X\right)  \right\}  ^{1/2}$ where
$\varphi_{1}\left(  X\right)  ,\varphi_{2}\left(  X\right)  ,...$ is a
complete orthonormal basis wrt to Lebesgue measure in $R^{d}.$ Then
$E_{\widehat{f}_{X}}\left[  \overline{z}_{k}\left(  X\ \right)  \overline
{z}_{k}\left(  X\right)  ^{T}\right]  =I_{k\times k}$ so
\begin{align*}
K_{\widehat{f}_{X},k}\left(  X_{i_{1}},X_{i_{2}}\right)   &  =\overline{z}%
_{k}\left(  X_{i_{1}}\ \right)  ^{T}\overline{z}_{k}\left(  X_{i_{2}%
}\ \right)  =\frac{K_{Leb,k}\left(  X_{i_{1}},X_{i_{2}}\right)  }{\left\{
\widehat{f}_{X}\left(  X_{i_{1}}\right)  \widehat{f}_{X}\left(  X_{i_{2}%
}\right)  \right\}  ^{1/2}},\\
\text{where }K_{Leb,k}\left(  X_{i_{1}},X_{i_{2}}\right)   &  \equiv
\overline{\varphi}_{k}\left(  X_{i_{1}}\right)  ^{T}\overline{\varphi}%
_{k}\left(  X_{i_{2}}\right)  .
\end{align*}
This choice corresponds to having taken
\[
K_{f_{X},\infty}\left(  X_{i_{1}},X_{i_{2}}\right)  =K_{Leb,\infty}\left(
X_{i_{1}},X_{i_{2}}\right)  /\left\{  f_{X}\left(  X_{i_{1}}\right)
f_{X}\left(  X_{i_{2}}\right)  \right\}  ^{1/2}%
\]
in our formal calculations where $K_{Leb,\infty}\left(  X_{i_{1}},X_{i_{2}%
}\right)  $ is the Dirac delta function wrt to Lebesgue measure. In that case
with $g\left(  X\right)  \equiv f_{X}\left(  X\right)  E_{\theta}\left[
H_{1}|X\right]  $ and $\widehat{g}\left(  X\right)  \equiv\widehat{f}%
_{X}\left(  X\right)  E_{\widehat{\theta}}\left[  H_{1}|X\right]  ,$
\begin{align}
IF_{2,2,\psi,i_{1},i_{2}}\left(  \theta\right)   &  =-\varepsilon_{b,i_{1}%
}\left(  \theta\right)  g\left(  X_{i_{1}}\right)  ^{-\frac{1}{2}%
}K_{Leb,\infty}\left(  X_{i_{1}},X_{i_{2}}\right)  g\left(  X_{i_{2}}\right)
^{-\frac{1}{2}}\varepsilon_{p,i_{2}}\left(  \theta\right) \label{221}\\
IF_{2,2,\psi,i_{1},i_{2}}^{\left(  k\right)  }\left(  \widehat{\theta}\right)
&  =-\varepsilon_{b,i_{1}}\left(  \widehat{\theta}\right)  \widehat{g}\left(
X_{i_{1}}\right)  ^{-\frac{1}{2}}K_{Leb,k}\left(  X_{i_{1}},X_{i_{2}}\right)
\widehat{g}\left(  X_{i_{2}}\right)  ^{-\frac{1}{2}}\varepsilon_{p,i_{2}%
}\left(  \widehat{\theta}\right) \label{22}%
\end{align}
In the appendix, we show one can proceed by induction to formally obtain that
for $m=3,4,...,$%
\begin{align}
&  IF_{m,m,\psi,\overline{i}_{m}}\left(  \theta\right) \\
&  =\varepsilon_{b,i_{1}}\left(  \theta\right)  g\left(  X_{i_{1}}\right)
^{-\frac{1}{2}}\left[
\begin{array}
[c]{c}%
\sum_{j=0}^{m-2}c(m,j)\times\\
\prod\limits_{s=1}^{j}\frac{H_{1,i_{s+1}}}{g\left(  X_{i_{s+1}}\right)
}K_{Leb,\infty}\left(  X_{i_{s}},X_{i_{s+1}}\right) \\
\times K_{Leb,\infty}\left(  X_{i_{j+1}},X_{i_{m}}\right)
\end{array}
\right]  g\left(  X_{i_{m}}\right)  ^{-\frac{1}{2}}\varepsilon_{p,i_{m}%
}\left(  \theta\right) \label{imm}%
\end{align}
where $c(m,j)=\binom{m-2}{j}\left(  -1\right)  ^{(j+1)},$ which we then use to
obtain $IF_{m,m,\psi,\overline{i}_{m}}^{\left(  k\right)  }\left(
\widehat{\theta}\right)  \ $and $\widehat{\psi}_{m}^{\left(  k\right)
}=\widehat{\psi}_{2}^{\left(  k\right)  }+\sum_{j=3}^{m}\mathbb{IF}_{j,j,\psi
}^{\left(  k\right)  }\left(  \widehat{\theta}\right)  .$

\textbf{Statistical Properties:} We shall prove below that the estimator
$\widehat{\psi}_{m}^{\left(  k\right)  }$ has variance
\[
var_{\theta}\left[  \widehat{\psi}_{m}^{\left(  k\right)  }\right]
\asymp\left(  \frac{1}{n}\max\left[  1,\left(  \frac{k}{n}\right)
^{m-1}\right]  \right)
\]
when $\left\{  \varphi_{l}\left(  X\right)  ;l=1,2,...\right\}  $ is a compact
wavelet basis$.$ (Robins et al. (2007) proves this result for more general
bases). We also prove that the bias
\[
E_{\theta}\left[  \widehat{\psi}_{m}^{\left(  k\right)  }\right]  -\psi\left(
\theta\right)  =TB_{k}\left(  \theta\right)  +EB_{m}\left(  \theta\right)  ,
\]
of $\widehat{\psi}_{m}^{\left(  k\right)  }$ is the sum of a truncation bias
term of order
\[
TB_{k}\left(  \theta\right)  =O_{p}\left(  k^{-\left(  \beta_{b}+\beta
_{p}\right)  /d}\right)
\]
(for a basis $\left\{  \varphi_{l}\left(  X\right)  ;l=1,2,...\right\}  $ that
provides optimal rate approximation for H\"{o}lder balls) and an estimation
bias term of order
\begin{align*}
EB_{m}\left(  \theta\right)   &  =O_{p}\left(  \left\{  P-\widehat{P}\right\}
\left\{  B-\widehat{B}\right\}  \left(  \frac{G-\widehat{G}}{\widehat{G}%
}\right)  ^{m-1}\right) \\
&  =O_{p}\left(  n^{-\frac{\left(  m-1\right)  \beta_{g}}{2\beta_{g}+d}%
-\frac{\beta_{b}}{2\beta_{b}+d}-\frac{\beta_{p}}{2\beta_{p}+d}}\right)  .
\end{align*}
The truncation bias is of this order only if $g$ has smoothness exceeding
$\max\left\{  \beta_{p},\beta_{b}\right\}  $. This restriction on $g$ is
removed later by using kernels based on Eq. $\left(  \ref{ON}\right)  $. Note
this estimation bias is $O_{P}\left(  \left\Vert \theta-\widehat{\theta
}\right\Vert ^{m+1}\right)  .$ It gets its name from the fact that, unlike the
truncation bias, it would be exactly zero if the initial estimator
$\widehat{\theta}$ happened to equal $\theta.$ Thus, {}the U-statistic
estimator $\widehat{\psi}_{m}^{\left(  k\right)  }$ for our functional
$\psi\left(  \theta\right)  $ (which does not admit a second order influence
function) differs from the U-statistic estimators $\widehat{\psi}_{m}$ of Eq.
(\ref{est}) for functionals that admit second order influence functions in
that, owing to truncation bias, the total bias of $\widehat{\psi}_{m}^{\left(
k\right)  }$ is not $O_{p}\left(  \left\Vert \theta-\widehat{\theta
}\right\Vert ^{m+1}\right)  .$ The choice of $k\ $determines the trade-off
between the variance and truncation bias. As $k\rightarrow\infty$ with $n$
fixed$,$ $var_{\theta}\left[  \widehat{\psi}_{m}^{\left(  k\right)  }\right]
\rightarrow\infty$ and $TB_{k}\left(  \theta\right)  \rightarrow0.$ \ Thus, we
can heuristically view the non-existent estimator $\widehat{\psi}%
_{m}=\widehat{\psi}_{m}^{\left(  k=\infty\right)  }$ as the choice of $k$ that
results in no truncation bias [and therefore a total bias of $O_{p}\left(
\left\Vert \theta-\widehat{\theta}\right\Vert ^{m+1}\right)  ]$ at the expense
of an infinite variance. \ Writing $k=k\left(  n\right)  =n^{\rho},$ the order
of the asymptotic MSE of $\widehat{\psi}_{m}^{\left(  k\right)  }$ is
minimized at the value of $\rho\ $for which order of the variance equals the
order of the sum of the truncation and estimation bias.

\begin{remark}
\label{1proj}The models of examples 1-4 exhibit a spectrum of different
likelihood functions and therefore a spectrum of different first order and
higher order scores. Nonetheless, because the first order influence functions
of the functionals $\psi\left(  \theta\right)  $ share a common structure, we
were able to use part 5c of Theorem \ref{eift} to formally derive
$IF_{m,m,\psi,\overline{i}_{m}}\left(  \theta\right)  $ and, thus, the
feasible $IF_{m,m,\psi,\overline{i}_{m}}^{\left(  k\right)  }\left(
\widehat{\theta}\right)  $ in examples 1-4 in a unified manner without needing
to consult the full likelihood function for any of the models. See Remark
(\ref{proj})\ above for a closely related discussion.
\end{remark}

\textbf{A Critical Non-uniqueness:} We have as yet neglected a critical
non-uniqueness in our definition of $\mathbb{IF}_{m,m,\psi\left(
\theta\right)  }^{(k)}\left(  \widehat{\theta}\right)  $ and thus
$\widehat{\psi}_{m}^{\left(  k\right)  }$ that poses a significant problem for
our "truncated Dirac" approach. \ For instance, when $m=3,$ the two
generalized $U-statistic $ kernels $IF_{3,3,\psi,i_{1},i_{2},i_{3}}\left(
\theta\right)  $ of Eq \ref{imm} and
\begin{align*}
&  IF_{3,3,\psi,i_{1},i_{2},i_{3}}^{\ast}\left(  \theta\right) \\
&  \equiv\frac{\varepsilon_{b,i_{1}}\left(  \theta\right)  }{g\left(
X_{i_{1}}\right)  ^{\frac{1}{2}}}\left[
\begin{array}
[c]{c}%
\frac{H_{1,i_{2}}}{g\left(  X_{i_{2}}\right)  }K_{Leb,\infty}\left(  X_{i_{1}%
},X_{i_{2}}\right) \\
-E_{\theta}\left[  \frac{K_{Leb,\infty}\left(  X_{i_{1}},X_{i_{2}}\right)
}{f\left(  X_{i_{2}}\right)  }|X_{i_{1}}\right]
\end{array}
\right] \\
&  \times K_{Leb,\infty}\left(  X_{i_{1}},X_{i_{3}}\right)  \frac
{\varepsilon_{p,i_{3}}\left(  \theta\right)  }{g\left(  X_{i_{3}}\right)
^{\frac{1}{2}}}%
\end{align*}
are precisely equal, by Eq. $\left(  \ref{diracpoint}\right)  $; nonetheless,
upon truncation, they result in different feasible kernels;
\begin{align*}
&  IF_{3,3,\psi,i_{1},i_{2}}^{\left(  k\right)  }\left(  \widehat{\theta
}\right) \\
&  =\frac{\widehat{\varepsilon}_{b,i_{1}}\left(  \theta\right)  }%
{\widehat{g}\left(  X_{i_{1}}\right)  ^{\frac{1}{2}}}\left[
\begin{array}
[c]{c}%
\frac{H_{1,i_{2}}}{\widehat{g}\left(  X_{i_{2}}\right)  }K_{Leb,k}\left(
X_{i_{1}},X_{i_{2}}\right)  K_{Leb,k}\left(  X_{i_{2}},X_{i_{3}}\right) \\
-K_{Leb,k}\left(  X_{i_{1}},X_{i_{3}}\right)
\end{array}
\right]  \times\frac{\widehat{\varepsilon}_{p,i_{3}}\left(  \theta\right)
}{\widehat{g}\left(  X_{i_{3}}\right)  ^{\frac{1}{2}}}%
\end{align*}
and
\begin{align*}
&  IF_{3,3,\psi,i_{1},i_{2},i_{3}}^{\left(  k\right)  ,\ast}\left(
\widehat{\theta}\right) \\
&  \equiv\frac{\widehat{\varepsilon}_{b,i_{1}}\left(  \theta\right)
}{\widehat{g}\left(  X_{i_{1}}\right)  ^{\frac{1}{2}}}\left[
\begin{array}
[c]{c}%
\frac{H_{1,i_{2}}}{\widehat{g}\left(  X_{i_{2}}\right)  }K_{Leb,k}\left(
X_{i_{1}},X_{i_{2}}\right) \\
-E_{\widehat{\theta}}\left[  \frac{K_{Leb,k}\left(  X_{i_{1}},X_{i_{2}%
}\right)  }{\widehat{f}\left(  X_{i_{2}}\right)  }|X_{i_{1}}\right]
\end{array}
\right] \\
&  \times K_{Leb,k}\left(  X_{i_{1}},X_{i_{3}}\right)  \frac
{\widehat{\varepsilon}_{p,i_{3}}\left(  \theta\right)  }{\widehat{g}\left(
X_{i_{3}}\right)  ^{\frac{1}{2}}}%
\end{align*}
with different orders of bias. For simplicity, we consider the case where
$H_{1}=1$ as in Examples \textbf{1a}-\textbf{1c. \ }Let $\delta B\equiv
B-\widehat{B},$ $\delta P\equiv P-\widehat{P},$ $\delta f=\delta g\equiv
\frac{f}{\widehat{f}}-1,$ and $\overline{Z}_{k}\equiv\frac{\overline{\varphi
}_{k}\left(  X\right)  }{\widehat{f}\left(  X\right)  ^{\frac{1}{2}}}%
=\frac{\overline{\varphi}_{k}\left(  X\right)  }{\widehat{g}\left(  X\right)
^{\frac{1}{2}}},$ then,%
\begin{align*}
&  E_{\theta}\left[  IF_{3,3,\psi,i_{1},i_{2},i_{3}}^{\left(  k\right)  ,\ast
}\left(  \widehat{\theta}\right)  \right] \\
&  =E_{\theta}\left[
\begin{array}
[c]{c}%
\frac{\delta B_{i_{1}}}{\widehat{f}\left(  X_{i_{1}}\right)  ^{\frac{1}{2}}%
}\times\\
E_{\mu}\left[  \left(  \frac{f\left(  X_{i_{2}}\right)  }{\widehat{f}\left(
X_{i_{2}}\right)  }-1\right)  \overline{\varphi}_{k}\left(  X_{i_{2}}\right)
^{T}\right]  \overline{\varphi}_{k}\left(  X_{i_{1}}\right) \\
\times E_{\theta}\left[  \frac{\delta P_{i_{3}}}{\widehat{f}\left(  X_{i_{3}%
}\right)  ^{\frac{1}{2}}}\overline{\varphi}_{k}\left(  X_{i_{3}}\right)
^{T}\right]  \overline{\varphi}_{k}\left(  X_{i_{1}}\right)
\end{array}
\right] \\
&  =E_{\widehat{\theta}}\left[
\begin{array}
[c]{c}%
\left(  \frac{f\left(  X_{i_{1}}\right)  }{\widehat{f}\left(  X_{i_{1}%
}\right)  }-1+1\right)  \widehat{f}\left(  X_{i_{1}}\right)  ^{\frac{1}{2}%
}\delta B_{i_{1}}\times\\
E_{\widehat{\theta}}\left[  \left(  \frac{f\left(  X_{i_{2}}\right)
}{\widehat{f}\left(  X_{i_{2}}\right)  }-1\right)  \widehat{f}\left(
X_{i_{2}}\right)  ^{-\frac{1}{2}}\frac{\overline{\varphi}_{k}\left(  X_{i_{2}%
}\right)  ^{T}}{\widehat{f}\left(  X_{i_{2}}\right)  ^{\frac{1}{2}}}\right]
\frac{\overline{\varphi}_{k}\left(  X_{i_{1}}\right)  }{\widehat{f}\left(
X_{i_{1}}\right)  ^{\frac{1}{2}}}\\
\times E_{\widehat{\theta}}\left[  \left(  \frac{f\left(  X_{i_{3}}\right)
}{\widehat{f}\left(  X_{i_{3}}\right)  }-1+1\right)  \delta P_{i_{3}}%
\frac{\overline{\varphi}_{k}\left(  X_{i_{3}}\right)  ^{T}}{\widehat{f}\left(
X_{i_{3}}\right)  ^{\frac{1}{2}}}\right]  \frac{\overline{\varphi}_{k}\left(
X_{i_{1}}\right)  }{\widehat{f}\left(  X_{i_{1}}\right)  ^{\frac{1}{2}}}%
\end{array}
\right] \\
&  =E_{\widehat{\theta}}\left[
\begin{array}
[c]{c}%
\widehat{f}\left(  X_{i_{1}}\right)  ^{\frac{1}{2}}\delta B_{i_{1}%
}E_{\widehat{\theta}}\left[  \delta f\left(  X_{i_{2}}\right)  \widehat{f}%
\left(  X_{i_{2}}\right)  ^{-\frac{1}{2}}\overline{Z}_{k,i_{2}}^{T}\right]
\overline{Z}_{k,i_{1}}\\
\times E_{\widehat{\theta}}\left[  \delta P_{i_{3}}\overline{Z}_{k,i_{3}}%
^{T}\right]  \overline{Z}_{k,i_{1}}%
\end{array}
\right] \\
&  +O_{p}\left(  \left\{  B-\widehat{B}\right\}  \left\{  P-\widehat{P}%
\right\}  \left\{  G-\widehat{G}\right\}  ^{2}\right)
\end{align*}
and%
\begin{align*}
&  E_{\theta}\left[  IF_{3,3,\psi,i_{1},i_{2},i_{3}}^{\left(  k\right)
}\left(  \widehat{\theta}\right)  \right] \\
&  =E_{\mu}\left[
\begin{array}
[c]{c}%
E_{\theta}\left[  \frac{\delta B_{i_{1}}}{\widehat{f}\left(  X_{i_{1}}\right)
^{\frac{1}{2}}}\overline{\varphi}_{k}\left(  X_{i_{1}}\right)  ^{T}\right]
\overline{\varphi}_{k}\left(  X_{i_{2}}\right)  \left(  \frac{f\left(
X_{i_{2}}\right)  }{\widehat{f}\left(  X_{i_{2}}\right)  }-1\right) \\
\times\overline{\varphi}_{k}\left(  X_{i_{2}}\right)  ^{T}E_{\theta}\left[
\frac{\delta P_{i_{3}}}{\widehat{f}\left(  X_{i_{3}}\right)  ^{\frac{1}{2}}%
}\overline{\varphi}_{k}\left(  X_{i_{3}}\right)  \right]
\end{array}
\right] \\
&  =E_{\widehat{\theta}}\left[
\begin{array}
[c]{c}%
E_{\widehat{\theta}}\left[  \left(  \delta f\left(  X_{i_{1}}\right)
+1\right)  \delta B_{i_{1}}\overline{Z}_{k,i_{1}}^{T}\right] \\
\times\overline{Z}_{k,i_{2}}\left(  \frac{f\left(  X_{i_{2}}\right)
}{\widehat{f}\left(  X_{i_{2}}\right)  }-1\right) \\
\times\overline{Z}_{k,i_{2}}^{T}E_{\widehat{\theta}}\left[  \left(  \delta
f\left(  X_{i_{3}}\right)  +1\right)  \delta P_{i_{3}}\overline{Z}_{k,i_{3}%
}\right]
\end{array}
\right] \\
&  =E_{\widehat{\theta}}\left[
\begin{array}
[c]{c}%
E_{\widehat{\theta}}\left[  \delta B_{i_{1}}\overline{Z}_{k,i_{1}}^{T}\right]
\overline{Z}_{k,i_{2}}\left(  \frac{f\left(  X_{i_{2}}\right)  }%
{\widehat{f}\left(  X_{i_{2}}\right)  }-1\right) \\
\times\overline{Z}_{k,i_{2}}^{T}E_{\widehat{\theta}}\left[  \delta P_{i_{3}%
}\overline{Z}_{k,i_{3}}\right]
\end{array}
\right] \\
&  +O_{p}\left(  \left\{  B-\widehat{B}\right\}  \left\{  P-\widehat{P}%
\right\}  \left\{  G-\widehat{G}\right\}  ^{2}\right)
\end{align*}
That is,
\begin{align*}
&  E_{\theta}\left[  IF_{3,3,\psi,i_{1},i_{2},i_{3}}^{\left(  k\right)  ,\ast
}\left(  \widehat{\theta}\right)  \right]  -E_{\theta}\left[  IF_{3,3,\psi
,i_{1},i_{2},i_{3}}^{\left(  k\right)  }\left(  \widehat{\theta}\right)
\right] \\
&  =E_{\widehat{\theta}}\left[
\begin{array}
[c]{c}%
\widehat{f}\left(  X_{i_{1}}\right)  ^{\frac{1}{2}}\delta B_{i_{1}%
}E_{\widehat{\theta}}\left[  \delta f\left(  X_{i_{2}}\right)  \widehat{f}%
\left(  X_{i_{2}}\right)  ^{-\frac{1}{2}}\overline{Z}_{k,i_{2}}^{T}\right]
\overline{Z}_{k,i_{1}}\\
E_{\widehat{\theta}}\left[  \delta P_{i_{3}}\overline{Z}_{k,i_{3}}^{T}\right]
\overline{Z}_{k,i_{1}}%
\end{array}
\right] \\
&  -E_{\widehat{\theta}}\left[
\begin{array}
[c]{c}%
E_{\widehat{\theta}}\left[  \delta B_{i_{1}}\overline{Z}_{k,i_{1}}^{T}\right]
\overline{Z}_{k,i_{2}}\delta f\left(  X_{i_{2}}\right) \\
\times\overline{Z}_{k,i_{2}}^{T}E_{\widehat{\theta}}\left[  \delta P_{i_{3}%
}\overline{Z}_{k,i_{3}}\right]
\end{array}
\right] \\
&  +O_{p}\left(  \left\{  B-\widehat{B}\right\}  \left\{  P-\widehat{P}%
\right\}  \left\{  G-\widehat{G}\right\}  ^{2}\right)
\end{align*}%
\[
=
\]%
\begin{align*}
&  E_{\widehat{\theta}}\left[  \Pi_{\widehat{\theta}}\left[  \delta
P|\overline{Z}_{k}\right]  \widehat{f}\left(  X\right)  ^{\frac{1}{2}}\delta
B\Pi_{\widehat{\theta}}\left[  \delta f\left(  X\right)  \widehat{f}\left(
X\right)  ^{-\frac{1}{2}}|\overline{Z}_{k}\right]  \right] \\
&  -E_{\widehat{\theta}}\left[  \Pi_{\widehat{\theta}}\left[  \delta
P|\overline{Z}_{k}\right]  \widehat{f}\left(  X\right)  ^{\frac{1}{2}}\delta
f\left(  X\right)  \widehat{f}\left(  X\right)  ^{-\frac{1}{2}}\Pi
_{\widehat{\theta}}\left[  \delta B|\overline{Z}_{k}\right]  \right] \\
&  +O_{p}\left(  \left\{  B-\widehat{B}\right\}  \left\{  P-\widehat{P}%
\right\}  \left\{  G-\widehat{G}\right\}  ^{2}\right)
\end{align*}%
\[
=
\]%
\begin{align*}
&  E_{\widehat{\theta}}\left[
\begin{array}
[c]{c}%
\Pi_{\widehat{\theta}}\left[  \delta P|\overline{Z}_{k}\right]  \widehat{f}%
\left(  X\right)  ^{\frac{1}{2}}\times\\
\left\{
\begin{array}
[c]{c}%
\Pi_{\widehat{\theta}}^{\bot}\left[  \delta B|\overline{Z}_{k}\right]
\Pi_{\widehat{\theta}}\left[  \delta f\left(  X\right)  \widehat{f}\left(
X\right)  ^{-\frac{1}{2}}|\overline{Z}_{k}\right] \\
-\Pi_{\widehat{\theta}}\left[  \delta B|\overline{Z}_{k}\right]
\Pi_{\widehat{\theta}}^{\bot}\left[  \delta f\left(  X\right)  \widehat{f}%
\left(  X\right)  ^{-\frac{1}{2}}|\overline{Z}_{k}\right]
\end{array}
\right\}
\end{array}
\right] \\
&  +O_{p}\left(  \left\{  B-\widehat{B}\right\}  \left\{  P-\widehat{P}%
\right\}  \left\{  G-\widehat{G}\right\}  ^{2}\right)
\end{align*}
where $\Pi_{\widehat{\theta}}\left[  h\left(  X\right)  |\overline{Z}%
_{k}\right]  $ and $\Pi_{\widehat{\theta}}^{\bot}\left[  h\left(  X\right)
|\overline{Z}_{k}\right]  $ respectively denote the projection under $F\left(
\cdot;\widehat{\theta}\right)  $ in $L_{2}\left(  \widehat{F}\right)  $ of
$h\left(  X\right)  $ on the $k$ dimensional linear subspace $lin\left\{
\overline{z}_{k}\left(  X\right)  \right\}  $ spanned by the components of the
vector $\overline{z}_{k}\left(  X\right)  $ and the projection on the
orthocomplement of this subspace.

Since the basis $\left\{  \varphi_{l}\left(  X\right)  ;l=1,2,...\right\}  $
provides optimal rate approximation for H\"{o}lder balls, it is easy to verify
that the difference is of order
\[
O_{p}\left(
\begin{array}
[c]{c}%
n^{-\frac{\beta_{p}/d}{1+2\beta_{p}/d}-\frac{\beta g/d}{1+2\beta_{g}/d}%
}k^{-\beta_{b}/d}+n^{-\frac{\beta_{p}/d}{1+2\beta_{p}/d}-\frac{\beta_{b}%
/d}{1+2\beta_{b}/d}}k^{-\beta_{g}/d}\\
+n^{-\frac{\beta_{p}/d}{1+2\beta_{p}/d}-\frac{\beta_{b}/d}{1+2\beta_{b}%
/d}-\frac{2\beta g/d}{1+2\beta_{g}/d}}%
\end{array}
\right)
\]
provided $g$ has smoothness exceeding $\max\left(  \beta_{p},\beta_{b}\right)
.$

For concreteness, we shall look at an example. Suppose $\beta_{b}/d=\beta
_{p}/d=0.3$ and $\beta_{g}/d=0.1,$ thus, by choosing $k=n^{\frac{5}{6}},$
$\widehat{\psi}_{3}^{\left(  k\right)  \ }$ converges to $\psi\left(
\theta\right)  $ at rate $n^{-\frac{1}{2}}.$ In contrast, the order,
\[
\min_{k}\left(  n^{-\frac{\beta_{p}/d}{1+2\beta_{p}/d}-\frac{\beta
g/d}{1+2\beta_{g}/d}}k^{-\beta_{b}/d}+n^{-\frac{\beta_{p}/d}{1+2\beta_{p}%
/d}-\frac{\beta_{b}/d}{1+2\beta_{b}/d}}k^{-\beta_{g}/d}+\sqrt{\frac{1}{n}%
\max\left(  1,\frac{k^{2}}{n^{2}}\right)  }\right)  ,
\]
of the optimal root mean squares error of $\widehat{\psi}_{3}^{\left(
k\right)  ,\ast}\mathbb{\ }$that uses $IF_{3,3,\psi,\overline{i}_{3}}^{\left(
k\right)  ,\ast}\left(  \widehat{\theta}\right)  $ is $n^{-0.477}\gg
n^{-0.5}.$ \ Thus $\widehat{\psi}_{3}^{\left(  k\right)  ,\ast}\mathbb{\ }%
$converges to $\psi\left(  \theta\right)  $ at a slower rate than
$\widehat{\psi}_{3}^{\left(  k\right)  }$ which uses $IF_{3,3,\psi
,\overline{i}_{3}}^{\left(  k\right)  }\left(  \widehat{\theta}\right)  .$
Nothing in our development up to this point provides any guidance as to which
of the many equivalent generalized U-statistic kernels should be selected for
truncation. To provide some guidance, we introduce an alternative approach to
the estimation of $\psi\left(  \theta\right)  $ based on truncated parameters
that admit higher order influence functions. The class of estimators we derive
using this alternative approach includes members algebraically identical to
the estimators $\widehat{\psi}_{m}^{\left(  k\right)  }$ but does not include
estimators equivalent to less efficient estimators such as $\widehat{\psi}%
_{3}^{\left(  k\right)  ,\ast}$.

\textbf{An Approach based on Truncated Parameters: }We introduce a class of
truncated parameters $\widetilde{\psi}_{k}\left(  \theta\right)  $ that (i)
depend on the sample size through a positive integer index $k=k\left(
n\right)  \ ($which we refer to as the truncation index and will be optimized
below), (ii) have influence functions $\mathbb{IF}_{m,\widetilde{\psi}_{k}%
}\left(  \theta\right)  $ of all orders $m$, (iii) equals $\psi\left(
\theta\right)  $ on a large subset $\Theta_{sub,k}$ of $\Theta$ and (iv) the
initial estimator $\widehat{\theta}\ $\ is an element of $\Theta_{sub,k}\ $so
that the plug-ins $\psi\left(  \widehat{\theta}\right)  \ $and
$\widetilde{\psi}_{k}\left(  \widehat{\theta}\right)  $\ are equal. To prepare
we introduce a simplified notation. For functions $h\left(  o,\cdot\right)  $
or $r\left(  \cdot\right)  $ of $\theta,$ we will often write $h\left(
o,\widehat{\theta}\right)  $ and $r\left(  \widehat{\theta}\right)  $ as
$\widehat{h}\left(  o\right)  $ and $\widehat{r},$ and $E_{\widehat{\theta
}_{\ }}\left[  \cdot\right]  $ as $\widehat{E}\left[  \cdot\right]  $.
Similarly , we often write $h\left(  o,\theta\right)  $ and $r\left(
\theta\right)  $ as $h\left(  o\right)  \ $and $r,$ and $E_{\theta}\left[
\cdot\right]  $ as $E\left[  \cdot\right]  .$ Further we shall introduce
slightly different definitions of truncation and estimation bias.

Define the estimator $\psi_{m,k}\left(  \widehat{\theta}\right)  \equiv
\psi\left(  \widehat{\theta}\right)  +\mathbb{IF}_{m,\widetilde{\psi}_{k}%
}\left(  \widehat{\theta}\right)  \ $or, equivalently, $\widehat{\psi}%
_{m,k}\equiv\widehat{\psi}+\widehat{\mathbb{IF}}_{m,\widetilde{\psi}_{k}}$.
\ Then the conditional bias $E\left[  \widehat{\psi}_{m,k}|\widehat{\theta
}\right]  -\psi$ of $\widehat{\psi}_{m,k}$ is $TB_{k}\ +EB_{m},$ where the
truncation bias $TB_{k}\ =\widetilde{\psi}_{k}\ -\psi\ $ is zero for
$\theta\in\Theta_{sub,k}$ and does not depend on $m$ and the estimation bias
$EB_{m,k}=E\left[  \widehat{\psi}_{m,k}|\widehat{\theta}\right]
-\widetilde{\psi}_{k}$ is $O_{P}\left(  ||\widehat{\theta}-\theta
||^{m+1}\right)  $ by Theorem \ref{eiet}. Since, as we show later, the order
of $EB_{m,k}$ does not depend on $k,$ we will abbreviate $EB_{m,k}$ as
$EB_{m},$ suppressing the dependence on $k.$ Under minimal conditions, the
conditional variance of $\widehat{\psi}_{m,k}$ is of the order of
$var\ \left[  \mathbb{IF}_{m,\widetilde{\psi}_{k}}\right]  $ whenever $k\equiv
k\left(  n\right)  \geq n.$ The rate of convergence of $\widehat{\psi}_{m,k}$
to $\psi$ can depend on the choice of $\widetilde{\psi}_{k}.$\ Nevertheless,
many different choices $\widetilde{\psi}_{k}$ result in estimators
$\widehat{\psi}_{m,k}$ that achieve what we conjecture to be the optimal rate
for estimators in our class of the form $\widehat{\psi}_{m,k}$. We choose,
among all such $\widetilde{\psi}_{k},$ the class that minimizes the
computational complexity of $\widehat{\psi}_{m,k}.$ Specifically for all
$\widetilde{\psi}_{k}$ in our chosen class and all $j,$ $\mathbb{IF}%
_{jj,\widetilde{\psi}_{k}}$ consists of a single term rather than a sum of
many terms. We conjecture this appealing property does not hold for any
$\widetilde{\psi}_{k}$ outside our class. We now describe this choice. The
parameter $\widetilde{\psi}_{k}\ $is defined in terms of $k\left(  n\right)
-$dimensional 'working' linear parametric submodels for $p\left(
\cdot\right)  $ and $b\left(  \cdot\right)  $ depending on unknown parameters
$\overline{\alpha}_{k}\ $and $\overline{\eta}_{K}$ through the basepoints
$\widehat{p}\left(  \cdot\right)  $ and $\widehat{b}\left(  \cdot\right)  ,$
where $\widehat{p}\left(  \cdot\right)  $ and $\widehat{b}\left(
\cdot\right)  $ are initial estimators from the training sample. Specifically
let $\dot{p}\left(  X\right)  $ and $\dot{b}\left(  X\right)  $ be arbitrary
bounded known functions chosen by the analyst satisfying Eqs $\left(
\text{\ref{dot1}}\right)  -\left(  \text{\ref{dot3}}\right)  $ below.%

\begin{gather}
\dot{p}\left(  X\right)  \dot{b}\left(  X\right)  E\left[  H_{1}|X\right]
\geq0\text{ }w.p.1\label{dot1}\\
\left\vert \left\vert \frac{\dot{p}\left(  X\right)  }{\dot{b}\left(
X\right)  }\right\vert \right\vert _{\infty}<C^{\ast},\left\vert \left\vert
\frac{\dot{b}\left(  X\right)  }{\dot{p}\left(  X\right)  }\right\vert
\right\vert _{\infty}<C^{\ast}\label{dot2}\\
\frac{\dot{p}\left(  X\right)  }{\dot{b}\left(  X\right)  }\text{ has at least
}\left\lceil max\left\{  \beta_{b},\ \beta_{p}\right\}  \right\rceil \text{
derivatives}\label{dot3}%
\end{gather}
Particular choices of $\dot{p}\left(  X\right)  $ and $\dot{b}\left(
X\right)  $ can make the form of $\mathbb{IF}_{m,\widetilde{\psi}_{k}}\left(
\widehat{\theta}\right)  $ more aesthetic. The choice has no bearing on the
rate of convergence of the estimator $\widehat{\psi}_{m,k}$ to $\psi\left(
\theta\right)  .$ Often there are fairly natural choices for $\dot{p}\left(
\cdot\right)  $ and $\dot{b}\left(  \cdot\right)  .$ See Remark \ref{Q} below
for examples$.$ Let $\overline{\alpha}_{k},\overline{\eta}_{k}$ be $k-$vectors
of unknown parameters and consider the 'working' linear models
\begin{align}
p^{\ast}\left(  X,\overline{\alpha}_{k}\right)   &  \equiv\widehat{p}\left(
X\right)  +\dot{p}\left(  X\right)  \overline{\alpha}_{k}^{T}\overline{z}%
_{k}\left(  X\right)  \equiv\widehat{P}+\dot{P}\overline{\alpha}_{k}%
^{T}\overline{Z}_{k}\label{worka}\\
b^{\ast}\left(  X,\overline{\eta}_{k}\right)   &  =\widehat{b}\left(
X\right)  +\dot{b}\left(  X\right)  \overline{\eta}_{k}^{T}\overline{z}%
_{k}\left(  X\right)  =\widehat{B}+\dot{B}\overline{\eta}_{k}^{T}\overline
{Z}_{k}\label{workb}%
\end{align}

We define the parameters $\widetilde{\overline{\eta}}_{k}\left(
\theta\right)  $ and $\widetilde{\overline{\alpha}}_{k}\left(  \theta\right)
$ respectively to be the solution to%
\begin{align}
0  &  =E_{\theta}\left[  \partial H\left(  b^{\ast}\left(  X,\overline{\eta
}_{k}\right)  ,p^{\ast}\left(  X,\overline{\alpha}_{k}\right)  \right)
/\partial\overline{\alpha}_{k}\right]  =E_{\theta}\left[  \left\{
H_{1}b^{\ast}\left(  X,\overline{\eta}_{k}\right)  +H_{3}\right\}  \dot
{P}\overline{Z}_{k}\right] \label{eta}\\
0  &  =E_{\theta}\left[  \partial H\left(  b^{\ast}\left(  X,\overline{\eta
}_{k}\right)  ,p^{\ast}\left(  X,\overline{\alpha}_{k}\right)  \right)
/\partial\overline{\eta}_{k}\right]  =E_{\theta}\left[  \left\{  H_{1}p^{\ast
}\left(  X,\overline{\alpha}_{k}\right)  +H_{2}\right\}  \dot{B}\overline
{Z}_{k}\right]  .\label{alpha}%
\end{align}

The solution to $\left(  \text{\ref{eta}}\right)  $ and $\left(
\text{\ref{alpha}}\right)  $ exist in closed form as
\begin{align}
\widetilde{\overline{\eta}}_{k}\left(  \theta\right)   &  =-E_{\theta}\left[
\dot{B}\dot{P}H_{1}\overline{Z}_{k}\overline{Z}_{k}^{T}\right]  ^{-1}%
E_{\theta}\left[  \overline{Z}_{k}\dot{P}\left\{  H_{1}\widehat{B}%
+H_{3}\right\}  \right] \label{eta1}\\
\widetilde{\overline{\alpha}}_{k}\left(  \theta\right)   &  =-E_{\theta
}\left[  \dot{P}\dot{B}H_{1}\overline{Z}_{k}\overline{Z}_{k}^{T}\right]
^{-1}E_{\theta}\left[  \overline{Z}_{k}\dot{B}\left\{  H_{1}\widehat{P}%
+H_{2}\right\}  \right]  .\label{alpha1}%
\end{align}

Next define $\widetilde{b}\left(  \theta\right)  =\widetilde{b}\left(
\cdot,\theta\right)  =b^{\ast}\left(  \cdot,\widetilde{\overline{\eta}}%
_{k}\left(  \theta\right)  \right)  $ \ and $\widetilde{p}\left(
\theta\right)  =\widetilde{p}\left(  \cdot,\theta\right)  =p^{\ast}\left(
\cdot,\widetilde{\overline{\alpha}}_{k}\left(  \theta\right)  \right)  \ $and
\[
\widetilde{\psi}_{k}\left(  \theta\right)  =E_{\theta}\left[  H\left(
\widetilde{b}\left(  \theta\right)  ,\widetilde{p}\left(  \theta\right)
\right)  \right]
\]

Note the models $p^{\ast}\left(  \cdot,\overline{\alpha}_{k}\right)  $ and
$b^{\ast}\left(  \cdot,\overline{\eta}_{k}\right)  $ are used only to define
the truncated parameter $\widetilde{\psi}_{k}\left(  \theta\right)  .$ They
are not assumed to be correctly specified. In particular, the training sample
estimates $\widehat{p},$ $\widehat{b}$ need not be based on the models
$p^{\ast}\left(  \cdot,\overline{\alpha}_{k}\right)  $,$b^{\ast}\left(
\cdot,\overline{\eta}_{k}\right)  .$We now compare our truncated parameter
$\widetilde{\psi}_{k}\left(  \theta\right)  $ with $\psi\left(  \theta\right)
$ and calculate the truncation bias. It is important to keep in mind that
$b,p$ are components of the unknown $\theta\ $while $\dot{p}$,$\dot
{b},\widehat{p},\widehat{b}\ \ $are regarded as known functions.

\begin{theorem}
If our model satisfies $Ai)-Aiii)$ $\ $and
\[
\theta\in\Theta_{sub,k}\ =\left\{  \theta;p\left(  \cdot\right)  =p^{\ast
}\left(  \cdot,\overline{\alpha}_{k}\right)  \text{ for some }\overline
{\alpha}_{k}\text{ or }b\left(  \cdot\right)  =b^{\ast}\left(  \cdot
,\overline{\eta}_{k}\right)  \text{ for some }\overline{\eta}_{k}\right\}
\cap\Theta
\]
then $\widetilde{\psi}_{k}\left(  \theta\right)  =\psi\left(  \theta\right)  $

Further $TB_{k}\left(  \theta\right)  \ =\widetilde{\psi}_{k}\ \left(
\theta\right)  -\psi\left(  \theta\right)  \ =E_{\theta\ }\left[  \left\{
\widetilde{B}\left(  \theta\right)  \ -B\right\}  \left\{  \widetilde{P}%
\left(  \theta\right)  \ -P\right\}  H_{1}\right]  $

\begin{proof}
Immediate from Theorem \ref{DR} and Lemma \ref{condE}.
\end{proof}
\end{theorem}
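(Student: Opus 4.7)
The plan is to reduce both assertions to direct applications of Theorem \ref{DR} and Lemma \ref{condE}, with essentially no new work beyond tracking how $\widetilde{b}(\theta)$ and $\widetilde{p}(\theta)$ relate to $b$ and $p$.

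First I would dispatch the truncation bias formula, which falls out instantly. By definition $\widetilde{\psi}_{k}(\theta) = E_{\theta}[H(\widetilde{b}(\theta),\widetilde{p}(\theta))]$, so applying part 2) of Lemma \ref{condE} with $b^{\ast} = \widetilde{b}(\theta)$ and $p^{\ast} = \widetilde{p}(\theta)$ yields
\[
\widetilde{\psi}_{k}(\theta) - \psi(\theta) = E_{\theta}\left[\left\{B - \widetilde{B}(\theta)\right\}\left\{P - \widetilde{P}(\theta)\right\}H_{1}\right],
\]
which is exactly the announced expression for $TB_{k}(\theta)$.

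For the first assertion, suppose first that $b = b^{\ast}(\cdot, \overline{\eta}_{k}^{0})$ for some $\overline{\eta}_{k}^{0}$, i.e.\ $B = \widehat{B} + \dot{B}\,\overline{Z}_{k}^{T}\overline{\eta}_{k}^{0}$ almost surely. I would substitute this representation into the estimating equation $\left(\ref{eta}\right)$ that defines $\widetilde{\overline{\eta}}_{k}(\theta)$. Part 1) of Lemma \ref{condE} says $E_{\theta}[\{H_{1}B + H_{3}\}\mid X] = 0$, so the tower property gives $E_{\theta}[\{H_{1}B + H_{3}\}\dot{P}\overline{Z}_{k}] = 0$; plugging the assumed form of $B$ into this identity shows that $\overline{\eta}_{k}^{0}$ itself solves $\left(\ref{eta}\right)$. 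Uniqueness of the solution, and hence $\widetilde{\overline{\eta}}_{k}(\theta) = \overline{\eta}_{k}^{0}$ and $\widetilde{b}(\theta) = b$, is guaranteed by the invertibility of $E_{\theta}[\dot{B}\dot{P}H_{1}\overline{Z}_{k}\overline{Z}_{k}^{T}]$ implicit in the closed form $\left(\ref{eta1}\right)$. With $\widetilde{b}(\theta) = b$ in hand, Theorem \ref{DR} (applied with $b^{\ast} = b$ and $p^{\ast} = \widetilde{p}(\theta)$) yields $\widetilde{\psi}_{k}(\theta) = \psi(\theta)$ immediately, since double robustness permits the mis-specification of $p$ as $\widetilde{p}(\theta)$. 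The complementary branch $p = p^{\ast}(\cdot, \overline{\alpha}_{k}^{0})$ is handled by an identical argument with the roles of $b$ and $p$ interchanged, now invoking part 1) of Lemma \ref{condE} in the form $E_{\theta}[\{H_{1}P + H_{2}\}\mid X] = 0$ together with $\left(\ref{alpha}\right)$ and $\left(\ref{alpha1}\right)$.

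The only point needing a modicum of care --- and therefore the main (still small) obstacle --- is the uniqueness/invertibility step. One must verify that $E_{\theta}[\dot{B}\dot{P}H_{1}\overline{Z}_{k}\overline{Z}_{k}^{T}]$ is invertible so that the closed-form expressions $\left(\ref{eta1}\right)$--$\left(\ref{alpha1}\right)$ are well defined and the roots of $\left(\ref{eta}\right)$ and $\left(\ref{alpha}\right)$ are unique. This is a standard Gram-matrix positivity check under the sign condition $\left(\ref{dot1}\right)$ and the linear independence of the components of $\overline{Z}_{k}$ in $L_{2}(F_{X})$, and I would handle it in a single sentence before concluding.
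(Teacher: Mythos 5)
Your proposal is correct and follows essentially the same route as the paper, whose proof simply cites Theorem \ref{DR} and Lemma \ref{condE}: part 2) of Lemma \ref{condE} with $b^{\ast}=\widetilde{b}(\theta)$, $p^{\ast}=\widetilde{p}(\theta)$ gives the truncation-bias formula, and your observation that $\overline{\eta}_{k}^{0}$ (resp.\ $\overline{\alpha}_{k}^{0}$) solves the linear equation $\left(\ref{eta}\right)$ (resp.\ $\left(\ref{alpha}\right)$) via part 1) of the lemma, together with invertibility of $E_{\theta}[\dot{B}\dot{P}H_{1}\overline{Z}_{k}\overline{Z}_{k}^{T}]$, is exactly what makes the paper's ``immediate'' claim go through. (Once $\widetilde{b}(\theta)=b$ or $\widetilde{p}(\theta)=p$ is established, the $TB_{k}$ formula alone already yields $\widetilde{\psi}_{k}(\theta)=\psi(\theta)$, so the appeal to Theorem \ref{DR} is harmless but not even needed.)
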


We know from the above Theorem that $TB_{k}\left(  \theta\right)  =0$ for
$\theta\in\Theta_{sub,k}.$ However to control the truncation bias in forming
confidence intervals for $\psi\left(  \theta\right)  $ we will need to know
how fast sup$_{\theta\in\Theta}\left\{  TB_{k\ }\left(  \theta\right)
\right\}  $ decreases as $k$ increases$.\ $The following theorem is a key step
towards determining an upper bound.

\begin{theorem}
\label{TBformula}Suppose $\dot{b}\left(  X\right)  $ and $\dot{p}\left(
X\right)  $ are chosen so that $\dot{B}\dot{P}E\left[  H_{1}|X\right]  \geq0$
wp1. Let
\[
Q\equiv q\left(  X\right)  =\left\{  \dot{B}\dot{P}E\left[  H_{1}|X\right]
\right\}  ^{1/2}%
\]
and $\Pi\left[  h\left(  Z\right)  |Q\overline{Z}_{k}\right]  $ and
$\Pi^{\perp}\left[  h\left(  X\right)  |Q\overline{Z}_{k}\right]  $ be,
respectively, the projection in $L_{2}\left(  F_{X}\left(  x\right)  \right)
$ of $h\left(  X\right)  $ on the $k$ dimensional linear subspace $lin\left\{
Q\overline{Z}_{k}\right\}  $ spanned by the components of the vector
$Q\overline{Z}_{k}=q\left(  X\right)  \overline{z}_{k}\left(  X\right)  $ and
the projection on the orthocomplement of this subspace. Then if $Ai)-Aiii)$
are satisfied $,$%
\[
TB_{k}=E\left[  \Pi^{\perp}\left[  \left(  \frac{P-\widehat{P}}{\dot{P}%
}\right)  Q|Q\overline{Z}_{k}\right]  \Pi^{\perp}\left[  \left(
\frac{B-\widehat{B}}{\dot{B}}\right)  Q|Q\overline{Z}_{k}\right]  \right]
\]

\begin{remark}
\label{Q} To simplify various formulae it is often convenient and
aesthetically pleasing to have $\widehat{Q}=1.$ We can choose $\dot{B}$ and
$\dot{P}$ to guarantee $\widehat{Q}=1$ wp1$.$ For the functional $\psi\left(
\theta\right)  =E_{\theta}\left[  b\left(  X\right)  p\left(  X\right)
\right]  $ of Example 1a, $H_{1}=-1$ wp1. Thus choosing $\dot{B}$ and $\dot
{P}$ equal to $1$ and $-1$, respectively, $wp1$ makes $\widehat{Q}=1$ wp1. In
the missing data example 2a, the function $H_{1}=-A$ so $\widehat{E}\left[
H_{1}|X\right]  =1/\widehat{P}$ and thus the choice $\dot{B}=-1,\dot
{P}=\widehat{P}\ $makes $\widehat{Q}=1$ wp1. Note since inference on
$\psi\left(  \theta\right)  $ is conditional on the training sample data, we
view the initial estimator $\widehat{p}\left(  \cdot\right)  $ of $p\left(
\cdot\right)  $ from the training sample as known and thus an analyst is free
to choose $\dot{P}$ to be $\widehat{P}.$
\end{remark}
\end{theorem}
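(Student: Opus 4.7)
The plan is to reduce the truncation bias to a simple bilinear expression via Lemma \ref{condE}, and then to recognize the two factors as $L_2(F_X)$-orthogonal projection residuals with weight $Q^2$ of the quantities $Qu$ and $Qv$, where $u=(B-\widehat{B})/\dot{B}$ and $v=(P-\widehat{P})/\dot{P}$.

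First I would apply part 2 of Lemma \ref{condE} (which only requires Ai)--Aiii)) to the pair $(\widetilde{b},\widetilde{p})$ playing the role of $(b^{\ast},p^{\ast})$. Since $\psi(\theta)=E_{\theta}[H(b,p)]$ and $\widetilde{\psi}_{k}(\theta)=E_{\theta}[H(\widetilde{b}(\theta),\widetilde{p}(\theta))]$, this yields immediately
\[
TB_{k}\;=\;E_{\theta}\!\left[\bigl(B-\widetilde{B}\bigr)\bigl(P-\widetilde{P}\bigr)H_{1}\right].
\]
Using the working-model form (\ref{worka})--(\ref{workb}) I write $B-\widetilde{B}=\dot{B}\bigl(u-\widetilde{\overline{\eta}}_{k}^{T}\overline{Z}_{k}\bigr)$ and $P-\widetilde{P}=\dot{P}\bigl(v-\widetilde{\overline{\alpha}}_{k}^{T}\overline{Z}_{k}\bigr)$. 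Conditioning on $X$ pulls $E[H_{1}|X]$ inside and $\dot{B}\dot{P}E[H_{1}|X]=Q^{2}$, so
\[
TB_{k}\;=\;E\!\left[\bigl(Qu-\widetilde{\overline{\eta}}_{k}^{T}Q\overline{Z}_{k}\bigr)\bigl(Qv-\widetilde{\overline{\alpha}}_{k}^{T}Q\overline{Z}_{k}\bigr)\right].
\]

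The central step is to identify the two linear combinations on the right as $L_{2}(F_{X})$-projections of $Qu$ and $Qv$ onto $\mathrm{lin}\{Q\overline{Z}_{k}\}$. I start from the first-order defining equation (\ref{eta}) for $\widetilde{\overline{\eta}}_{k}$:
\[
0\;=\;E_{\theta}\!\left[\bigl\{H_{1}\widetilde{B}+H_{3}\bigr\}\dot{P}\overline{Z}_{k}\right].
\]
Adding and subtracting $B$ inside the braces, and invoking part 1 of Lemma \ref{condE} (i.e.\ $E_{\theta}[H_{1}B+H_{3}\mid X]=0$), the term with $B$ drops out and I obtain
\[
E_{\theta}\!\left[\dot{P}H_{1}\overline{Z}_{k}\bigl(\widetilde{B}-B\bigr)\right]\;=\;0,
\]
which after inserting $\widetilde{B}-B=-\dot{B}(u-\widetilde{\overline{\eta}}_{k}^{T}\overline{Z}_{k})$ and conditioning on $X$ (replacing $H_{1}$ by $E[H_{1}|X]$) yields the normal equations
\[
E\!\left[Q\overline{Z}_{k}(Q\overline{Z}_{k})^{T}\right]\widetilde{\overline{\eta}}_{k}\;=\;E\!\left[(Q\overline{Z}_{k})\,Qu\right].
\]
Hence $\widetilde{\overline{\eta}}_{k}^{T}Q\overline{Z}_{k}=\Pi[Qu\mid Q\overline{Z}_{k}]$, so that $Qu-\widetilde{\overline{\eta}}_{k}^{T}Q\overline{Z}_{k}=\Pi^{\perp}[Qu\mid Q\overline{Z}_{k}]$. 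Applying the same argument to (\ref{alpha}), using the companion identity $E[H_{1}P+H_{2}\mid X]=0$, gives $Qv-\widetilde{\overline{\alpha}}_{k}^{T}Q\overline{Z}_{k}=\Pi^{\perp}[Qv\mid Q\overline{Z}_{k}]$. Substituting back into the expression for $TB_{k}$ yields the formula claimed in the theorem.

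The only nonroutine piece is the identification in the displayed step above: one must observe that the conditions Aiii)--Aiv) together with Lemma \ref{condE}(1) license replacing $H_{1}$ by $E[H_{1}|X]$ inside the normal equations, converting the $\dot{B}\dot{P}H_{1}$-weighted inner product implicit in (\ref{eta1})--(\ref{alpha1}) into the ordinary $L_{2}(F_{X})$ inner product of $Q\overline{Z}_{k}$ against itself. Once this is seen, the rest is straightforward algebra and no delicate analytic issue arises, since the nondegeneracy $\dot{B}\dot{P}E[H_{1}|X]\ge 0$ ensures $Q$ is well defined and the Gram matrix $E[Q\overline{Z}_{k}(Q\overline{Z}_{k})^{T}]$ is invertible under the linear-independence assumption on $\{z_{l}\}$.
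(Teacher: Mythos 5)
Your proof is correct and follows essentially the same route as the paper's: reduce $TB_{k}$ to $E_{\theta}\left[\left(B-\widetilde{B}\right)\left(P-\widetilde{P}\right)H_{1}\right]$ via Lemma \ref{condE}, then use Lemma \ref{condE}(1) and conditioning on $X$ to identify $Q\left(\widetilde{B}-\widehat{B}\right)/\dot{B}$ and $Q\left(\widetilde{P}-\widehat{P}\right)/\dot{P}$ as the $L_{2}(F_{X})$-projections onto $lin\{Q\overline{Z}_{k}\}$, so that the residuals are the $\Pi^{\perp}$ terms. The only cosmetic difference is that you derive the normal equations from the defining relations (\ref{eta})--(\ref{alpha}) whereas the paper starts from their closed-form solutions (\ref{eta1})--(\ref{alpha1}); these are equivalent.
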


\textbf{\ Examples continued.} In \textbf{Example} \textbf{1a, }recall
$\psi=E\left[  BP\right]  .$ Choose $\dot{B}=-\dot{P}=1$ $wp1$ so
$\widehat{Q}=Q=1,$ and take $\widehat{B}\in lin\left\{  \overline{Z}%
_{k}\right\}  .$ Then
\begin{align*}
\widetilde{B}  &  =\widehat{B}+\Pi\left[  \left(  B-\widehat{B}\right)
|\overline{Z}_{k}\right]  =\Pi\left[  B|\overline{Z}_{k}\right] \\
\widetilde{P}  &  =\Pi\left[  P|\overline{Z}_{k}\right]  ,\\
TB_{k}  &  =E\left\{  \left[  \Pi^{\perp}\left[  B|\overline{Z}_{k}\right]
\Pi^{\perp}\left[  P|\overline{Z}_{k}\right]  \right]  \right\}  ,\\
\widetilde{\psi}_{k}  &  =\psi-TB_{k}=E\left\{  \Pi\left[  B|\overline{Z}%
_{k}\right]  \Pi\left[  P|\overline{Z}_{k}\right]  \right\}
\end{align*}

Thus $\widetilde{\psi}_{k}$ appears to be the natural choice for a truncated parameter.

In the \textbf{Example 2a} with $\psi=E\left[  B\right]  ,\dot{B}=-1,\dot
{P}=\widehat{P}=1/\widehat{\pi},$ $\widehat{Q}=1,Q=\left\{  \widehat{P}%
/P\right\}  ^{1/2}=\left\{  \frac{\pi}{\widehat{\pi}}\right\}  ^{1/2}%
,\widehat{\pi}\equiv\widehat{\pi}\left(  X\right)  ,\pi\equiv\pi\left(
X\right)  ,$ we obtain%
\[
TB_{k}=E\left[
\begin{array}
[c]{c}%
\Pi^{\perp}\left[  \widehat{\pi}\left(  \frac{1}{\pi}-\frac{1}{\widehat{\pi}%
}\right)  \left\{  \frac{\pi}{\widehat{\pi}}\right\}  ^{1/2}|\left\{
\frac{\pi}{\widehat{\pi}}\right\}  ^{1/2}\overline{Z}_{k}\right] \\
\times\Pi^{\perp}\left[  \left\{  \frac{\pi}{\widehat{\pi}}\right\}
^{1/2}\left(  B-\widehat{B}\right)  |\left\{  \frac{\pi}{\widehat{\pi}%
}\right\}  ^{1/2}\overline{Z}_{k}\right]
\end{array}
\right]
\]

Thus the truncated parameter $\widetilde{\psi}_{k}=\psi-TB_{k}$ does not seem
to be a particular natural or obvious choice. The complexity of
$\widetilde{\psi}_{k}$ is not simply due to the fact that we chose $\dot
{P}=\widehat{P}$ rather than $\dot{P}=1$ as we now demonstrate.

In \textbf{Example 2a} with $\dot{B}=-1,\dot{P}=1\ ,\widehat{Q}=\widehat{\pi
}^{1/2},$ $Q=\pi^{1/2},$%
\[
TB_{k}=E\left[
\begin{array}
[c]{c}%
\Pi^{\perp}\left[  \left(  \frac{1}{\pi}-\frac{1}{\widehat{\pi}}\right)
\pi^{1/2}|\pi^{1/2}\overline{Z}_{k}\right]  \times\\
\Pi^{\perp}\left[  \left\{  \frac{\pi}{\widehat{\pi}}\right\}  ^{1/2}\left(
B-\widehat{B}\right)  |\pi^{1/2}\overline{Z}_{k}\right]
\end{array}
\right]
\]

Nonetheless we will see that, for either choice of $\left(  \dot{B},\dot
{P}\right)  ,$ the parameter $\widetilde{\psi}_{k}$ will result in estimators
with good properties. \ 

\begin{remark}
Henceforth, given $\left(  \beta_{p},\beta_{b},\beta_{g}\right)  ,$ $\left\{
\varphi_{l}\left(  X\right)  ,l=1,2,...\right\}  $ will always denote a
complete orthonormal basis wrt to Lebesgue measure in $R^{d}$ or in the unit
cube in $R^{d}$ that provides optimal rate approximation for H\"{o}lder balls
$H\left(  \beta^{\ast},C\right)  ,\beta^{\ast}\leq\lceil\max\left(  \beta
_{p},\beta_{b},\beta_{g}\right)  \rceil$, i.e.%
\begin{equation}
sup_{h\in H\left(  \beta^{\ast},C\right)  }inf_{\varsigma_{_{l}}}\int_{R^{d}%
}\left(  h\left(  x\right)  -\sum_{l=1}^{k}\varsigma_{_{l}}\varphi_{_{l}%
}\ \left(  x\right)  \right)  ^{2}dx=O\left(  k^{-2\beta^{\ast}/d}\right)
\label{appr}%
\end{equation}
The basis consisting of $d-fold$ tensor products of univariate orthonormal
polynomials satisfies $\left(  \text{\ref{appr}}\right)  $ for all
$\beta^{\ast}.$ The basis consisting of $d-fold$ tensor products of a
univariate Daubechies compact wavelet basis with mother wavelet $\varphi
_{w}\left(  u\right)  $ satisfying
\[
\int_{R^{1}}u^{m}\varphi_{w}\left(  u\right)  du=0,m=0,1,...,M
\]
also satisfies $\left(  \text{\ref{appr}}\right)  $ for $\beta^{\ast}<M+1.$
\end{remark}

\begin{theorem}
\label{TBrate}Suppose that $Ai)-Aiv)$ are satisfied, that $\dot{b}\left(
X\right)  $ and $\dot{p}\left(  X\right)  $ satisfy $\left(  \text{\ref{dot1}%
}\right)  -\left(  \text{\ref{dot3}}\right)  $ and in the remainder of the
paper, unless stated otherwise, we take
\begin{equation}
\overline{z}_{k}\left(  X\right)  \equiv E\left\{  \widehat{Q}^{2}%
\overline{\varphi}_{k}\left(  X\right)  \overline{\varphi}_{k}\left(
X\right)  ^{T}\right\}  ^{-1/2}\overline{\varphi}_{k}\left(  X\right)
\label{ON}%
\end{equation}
where recall that $\widehat{Q}^{2}=\left\{  \dot{B}\dot{P}\widehat{E}\left[
H_{1}|X\right]  \right\}  .$ Then
\[
sup_{\theta\in\Theta}\left\{  TB_{k}^{2}\left(  \theta\right)  \right\}
=O_{p}\left(  k^{-2\left(  \beta_{b}+\beta_{p}\right)  /d}\right)
\]

\end{theorem}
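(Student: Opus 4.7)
The plan is to combine the formula from Theorem \ref{TBformula} with Cauchy--Schwarz in $L_2(F_X)$ and the optimal approximation property (\ref{appr}) of the basis $\{\overline{\varphi}_l\}$. First I apply Cauchy--Schwarz to the representation of $TB_k(\theta)$ given in Theorem \ref{TBformula}, obtaining
\[
|TB_k(\theta)| \leq \left\|\Pi^{\perp}\left[Q r_p\,\big|\,Q\overline{Z}_k\right]\right\|_{L_2(F_X)}\cdot \left\|\Pi^{\perp}\left[Q r_b\,\big|\,Q\overline{Z}_k\right]\right\|_{L_2(F_X)},
\]
where $r_p \equiv (P-\widehat{P})/\dot{P}$ and $r_b \equiv (B-\widehat{B})/\dot{B}$. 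It therefore suffices to show that the two projection residuals have $L_2(F_X)$ norms of order $O_p(k^{-\beta_p/d})$ and $O_p(k^{-\beta_b/d})$ respectively.

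Next I recast each residual as the residual of an ordinary weighted approximation problem. Because the orthonormalizing matrix in (\ref{ON}) is invertible, $lin\{\overline{Z}_k\}=lin\{\overline{\varphi}_k\}$, so $lin\{Q\overline{Z}_k\}=Q\cdot lin\{\overline{\varphi}_k\}$ in $L_2(F_X)$. A short direct calculation then shows
\[
\Pi^{\perp}\left[Q r_p\,\big|\,Q\overline{Z}_k\right] = Q\bigl(r_p - \Pi^{(Q^2 dF_X)}\left[r_p\,\big|\,\overline{\varphi}_k\right]\bigr),
\]
where $\Pi^{(Q^2 dF_X)}$ denotes the $L_2(Q^2 f_X(x)\,dx)$-projection onto $lin\{\overline{\varphi}_k\}$. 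Hence
\[
\left\|\Pi^{\perp}\left[Q r_p\,\big|\,Q\overline{Z}_k\right]\right\|_{L_2(F_X)}^2 = \int \bigl(r_p-\Pi^{(Q^2 dF_X)}[r_p\,|\,\overline{\varphi}_k]\bigr)^2 Q^2 f_X\,dx.
\]
Under Aiv) the weight $Q^2 f_X = \dot{B}\dot{P}\, g(X)$ is bounded above (since $g\in H(\beta_g,C_g)$ and $\dot{B},\dot{P}$ are bounded) and below (by $\sigma_g$ times a positive constant coming from (\ref{dot1})--(\ref{dot2})). Combining with the best-approximation property of Hilbert-space projection gives
\[
\left\|\Pi^{\perp}\left[Q r_p\,\big|\,Q\overline{Z}_k\right]\right\|_{L_2(F_X)}^2 \leq C \inf_{h\in lin\{\overline{\varphi}_k\}}\int_{[0,1]^d}(r_p-h)^2\,dx,
\]
and analogously for $r_b$.

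Finally I would invoke (\ref{appr}): if $r_p\in H(\beta_p,C')$ for some $C'=O_p(1)$ uniformly over $\theta\in\Theta$, then the infimum above is $O_p(k^{-2\beta_p/d})$, and similarly for $r_b$. To verify the H\"older regularity, note that $P\in H(\beta_p,C_p)$ by Aiv), and $\widehat{P}$ has more than $\beta_p$ bounded derivatives with $L_\infty$ norm bounded by $c_\infty$ (also Aiv)), so $P-\widehat{P}\in H(\beta_p,\tilde C)$ with $\tilde C$ deterministic and finite. Dividing by $\dot{P}$ preserves the H\"older class up to enlarging the constant, using that $\dot{P}$ and $1/\dot{P}$ are bounded and sufficiently smooth by (\ref{dot2})--(\ref{dot3}) (and writing $1/\dot{P} = (1/\dot{B})\cdot(\dot{B}/\dot{P})$ when necessary); the analogous statement holds for $r_b$. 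Multiplying the two bounds yields $TB_k^2(\theta)=O_p(k^{-2(\beta_b+\beta_p)/d})$ uniformly over $\theta\in\Theta$.

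The step I expect to be most delicate is the last one --- pinning down a H\"older ball for $r_p$ and $r_b$ with constants that are uniform in $\theta$ and $O_p(1)$ in the training-sample randomness, and verifying that the quotient rule for H\"older functions indeed transfers the $\beta_p$-regularity of $P-\widehat{P}$ through division by $\dot{P}$. This is where assumptions (\ref{dot2})--(\ref{dot3}) on the analyst-chosen weights enter essentially; for the natural choices in Remark \ref{Q} (e.g.\ $\dot{P}=1$ or $\dot{P}=\widehat{P}$) the verification is immediate, but for general admissible $(\dot{P},\dot{B})$ it requires a short Leibniz-type estimate exploiting that $\dot{P}/\dot{B}$ is smooth and both $\dot{P},\dot{B}$ are bounded away from zero on the relevant sets.
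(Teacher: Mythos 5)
Your proposal is correct and follows essentially the same route as the paper's own proof: Cauchy--Schwarz applied to the representation of $TB_k$ from Theorem \ref{TBformula}, reduction of each projection residual to a weighted $L_2$ best-approximation problem over $lin\{\overline{\varphi}_k\}$ with a weight (involving $Q^2 f_X$) bounded in sup norm, and then the optimal-rate approximation property (\ref{appr}) for H\"older balls. The only real difference is at the step you flag as delicate: rather than verifying H\"older regularity of the quotients $(P-\widehat{P})/\dot{P}$ and $(B-\widehat{B})/\dot{B}$, the paper absorbs $\dot{P}$, $\dot{B}$ into the sup-norm-bounded weight and approximates $P-\widehat{P}$ and $B-\widehat{B}$ directly (their membership in H\"older balls being immediate from Aiv)), which is exactly the role the conditions (\ref{dot1})--(\ref{dot3}) are meant to play.
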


\subsubsection{Derivation of the Higher Order Influence Functions of the
Truncated Parameter\label{dhoif_subsection}}

We begin by proving that the first order influence functions of
$\widetilde{\psi}_{k}\ $and $\psi$ are identical except with $\widetilde{b}%
\left(  \theta\right)  ,\widetilde{p}\left(  \theta\right)  ,\widetilde{\psi
}_{k}\left(  \theta\right)  $ replacing $b,p,\psi\left(  \theta\right)  .$

\begin{theorem}
\label{FOIF}%
\[
\mathbb{IF}_{1,\widetilde{\psi}_{k}}\left(  \theta\right)  =\mathbb{V}\left[
IF_{1,\widetilde{\psi}_{k},i_{1}}\,\left(  \theta\right)  \right]
\]
with%
\[
IF_{1,\widetilde{\psi}_{k}}\left(  \theta\right)  =H\left(  \widetilde{b}%
\left(  \theta\right)  ,\widetilde{p}\left(  \theta\right)  \right)
-\widetilde{\psi}_{k}\left(  \theta\right)
\]

\end{theorem}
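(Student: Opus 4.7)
The plan is to verify the definition of a first order influence function directly for the candidate kernel $H(\widetilde{b}(\theta),\widetilde{p}(\theta))-\widetilde{\psi}_k(\theta)$. I cannot simply invoke Lemma \ref{IF1}, because that would require $E_\theta[\{H_1\widetilde{B}+H_3\}\mid X]=0$ and $E_\theta[\{H_1\widetilde{P}+H_2\}\mid X]=0$, which the truncated projections $\widetilde{b}(\theta),\widetilde{p}(\theta)$ do not generally satisfy---only the weaker integrated orthogonality conditions (\ref{eta})-(\ref{alpha}) hold. The mean-zero condition $E_\theta[H(\widetilde{b},\widetilde{p})-\widetilde{\psi}_k]=0$ is immediate from the definition of $\widetilde{\psi}_k(\theta)$, so the substance is in the pathwise-derivative identity.

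Fix a smooth, regular one-dimensional submodel $\theta(t)$ with $\theta(0)=\theta$ and score $S=\partial_t\log f(O;\theta(t))|_{t=0}$. Writing
\[
\widetilde{\psi}_k(\theta(t))=E_{\theta(t)}\bigl[H\bigl(b^{\ast}(\cdot,\widetilde{\overline{\eta}}_k(\theta(t))),\,p^{\ast}(\cdot,\widetilde{\overline{\alpha}}_k(\theta(t)))\bigr)\bigr]
\]
and applying the chain rule at $t=0$, the derivative splits into three contributions: (a) the variation of the law only, which gives $E_\theta[H(\widetilde{b},\widetilde{p})\,S]$; (b) the contribution from $\widetilde{\overline{\eta}}_k$ moving with $t$, equal to $E_\theta[\partial_{\overline{\eta}_k}H(\widetilde{b},\widetilde{p})]^T\,\partial_t\widetilde{\overline{\eta}}_k(\theta(t))|_{t=0}$; and (c) the analogous term for $\widetilde{\overline{\alpha}}_k$.

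By the linearity of the working models (\ref{worka})-(\ref{workb}) and the bilinear-plus-linear form of $H$, I have $\partial_{\overline{\eta}_k}H=(p^{\ast}H_1+H_2)\dot{B}\overline{Z}_k$ and $\partial_{\overline{\alpha}_k}H=(b^{\ast}H_1+H_3)\dot{P}\overline{Z}_k$. Evaluated at $(\widetilde{b},\widetilde{p})$, the coefficient vectors $E_\theta[(H_1\widetilde{P}+H_2)\dot{B}\overline{Z}_k]$ and $E_\theta[(H_1\widetilde{B}+H_3)\dot{P}\overline{Z}_k]$ both vanish identically---this is exactly the defining system (\ref{eta})-(\ref{alpha}) for the truncated projections. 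Hence contributions (b) and (c) cancel regardless of the values of $\partial_t\widetilde{\overline{\eta}}_k$ and $\partial_t\widetilde{\overline{\alpha}}_k$, and
\[
\partial_t\widetilde{\psi}_k(\theta(t))|_{t=0}=E_\theta[H(\widetilde{b},\widetilde{p})\,S]=E_\theta\bigl[\bigl\{H(\widetilde{b},\widetilde{p})-\widetilde{\psi}_k(\theta)\bigr\}S\bigr],
\]
where the last equality uses $E_\theta[S]=0$. Together with the mean-zero property this establishes the claim.

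The essential conceptual point---and really the only content---is that the system (\ref{eta})-(\ref{alpha}) is exactly the Neyman orthogonality condition that makes first-order perturbations of the nuisance parameters $\widetilde{\overline{\eta}}_k$ and $\widetilde{\overline{\alpha}}_k$ irrelevant to the pathwise derivative. Once that is noticed, the identity is essentially automatic. The residual work is routine: smoothness of $\theta\mapsto(\widetilde{\overline{\eta}}_k(\theta),\widetilde{\overline{\alpha}}_k(\theta))$ (needed to apply the chain rule) follows from the closed forms (\ref{eta1})-(\ref{alpha1}) together with invertibility of $E_\theta[\dot{B}\dot{P}H_1\overline{Z}_k\overline{Z}_k^T]$, which is guaranteed by (\ref{dot1}) and the basis normalization (\ref{ON}); and the exchange of differentiation and expectation is justified under the regularity already built into Ai)-Aiv).
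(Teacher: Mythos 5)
Your proof is correct and is essentially the paper's own argument: the paper likewise decomposes the influence function of $\widetilde{\psi}_{k}(\theta)=E_{\theta}[H(\widetilde{b}(\theta),\widetilde{p}(\theta))]$ via the chain rule into the fixed-nuisance term $H(\widetilde{b},\widetilde{p})-\widetilde{\psi}_{k}$ plus terms weighted by $E_{\theta}[\partial H/\partial\overline{\eta}_{k}]$ and $E_{\theta}[\partial H/\partial\overline{\alpha}_{k}]$, which vanish precisely by the defining equations (\ref{eta})--(\ref{alpha}). Your explicit one-dimensional-submodel phrasing and the observation that Lemma \ref{IF1} cannot be invoked directly are just a more spelled-out version of the same orthogonality argument.
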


\begin{proof}
Since $\widetilde{\psi}_{k}\left(  \theta\right)  =E_{\theta}\left[  H\left(
\widetilde{b}\left(  \theta\right)  ,\widetilde{p}\left(  \theta\right)
\right)  \right]  $,
\begin{align*}
IF_{1,\widetilde{\psi}_{k}}\left(  \theta\right)   &  =H\left(  \widetilde{b}%
\left(  \theta\right)  ,\widetilde{p}\left(  \theta\right)  \right)
-\widetilde{\psi}_{k}\left(  \theta\right) \\
&  +E\left[  \partial H\left(  b^{\ast}\left(  X,\widetilde{\overline{\eta}%
}_{k}\left(  \theta\right)  \right)  ,p^{\ast}\left(  X,\widetilde{\overline
{\alpha}}_{k}\left(  \theta\right)  \right)  \right)  /\partial\overline{\eta
}_{k}^{T}\right]  IF_{1,\widetilde{\overline{\eta}}_{k}\left(  \cdot\right)
}\left(  \theta\right) \\
&  +E\left[  H\left(  b^{\ast}\left(  X,\widetilde{\overline{\eta}}_{k}\left(
\theta\right)  \right)  ,p^{\ast}\left(  X,\widetilde{\overline{\alpha}}%
_{k}\left(  \theta\right)  \right)  \right)  /\partial\overline{\alpha}%
_{k}^{T}\right]  IF_{1,\widetilde{\overline{\alpha}}_{k}\left(  \cdot\right)
}\left(  \theta\right)
\end{align*}
But, by definition of $\widetilde{\overline{\eta}}_{k}\left(  \theta\right)  $
and $\widetilde{\overline{\alpha}}_{k}\left(  \theta\right)  ,$ both
expectations are zero.
\end{proof}

Note that $\widetilde{\overline{\eta}}_{k}\left(  \theta\right)  $ and
$\widetilde{\overline{\alpha}}_{k}\left(  \theta\right)  $ are not maximizers
of the expected log-likelihood for $\overline{\alpha}_{k}$ and $\overline
{\eta}_{k}.$ This choice was deliberate. Had we defined $\widetilde{\overline
{\eta}}_{k}\left(  \theta\right)  $ and $\widetilde{\overline{\alpha}}%
_{k}\left(  \theta\right)  $ as the maximizers of the expected log-likelihood,
then $\mathbb{IF}_{1,\widetilde{\psi}_{k}}\left(  \theta\right)  $ would have
had additional terms since the expectations in the preceding proof would not
be zero. The existence of these extra terms would translate to many extra
terms in $\mathbb{IF}_{m,\widetilde{\psi}_{k}}\left(  \theta\right)  $ for
large $m$ leading to computational difficulties. Similarly had we chosen
models $p^{\ast}\left(  X,\overline{\alpha}_{k}\right)  \equiv\Phi\left(
\widehat{P}+\dot{P}\overline{\alpha}_{k}^{T}\overline{Z}_{k}\right)  $ and
$b^{\ast}\left(  X,\overline{\eta}_{k}\right)  =\Phi\left(  \widehat{B}%
+\dot{B}\overline{\eta}_{k}^{T}\overline{Z}_{k}\right)  $ with $\Phi\left(
\cdot\right)  $ a non-linear inverse-link function, $\mathbb{IF}%
_{m,\widetilde{\psi}_{k}}\left(  \theta\right)  $ would also have had many
extra terms without an improvement in the rate of convergence.

The following is proved in the Appendix.

\begin{theorem}
\label{DRHOIF}$\mathbb{IF}_{m,\widetilde{\psi}_{k}}\ =\mathbb{IF}%
_{1,\widetilde{\psi}_{k}}\ +\sum_{j=2}^{m}\mathbb{IF}_{jj,\widetilde{\psi}%
_{k}}\ $ where $\mathbb{IF}_{jj,\widetilde{\psi}_{k}}$=$\mathbb{V}\left[
IF_{jj,\widetilde{\psi}_{k},\overline{i}_{j}}\,\right]  $ is a jth order
degenerate $U-$statistic given by
\begin{align*}
IF_{22,\widetilde{\psi}_{k},\overline{i}_{2}}  &  =-\left\{
\begin{array}
[c]{c}%
\left[  \left(  H_{1}\widetilde{P}+H_{2}\right)  \dot{B}\overline{Z}_{k}%
^{T}\right]  _{i_{1}}\left\{  E\left[  \dot{P}\dot{B}H_{1}\overline{Z}%
_{k}\overline{Z}_{k}^{T}\right]  \right\}  ^{-1}\\
\times\left[  \overline{Z}_{k}\left(  H_{1}\widetilde{B}+H_{3}\right)  \dot
{P}\right]  _{i_{2}}%
\end{array}
\right\} \\
IF_{jj,\widetilde{\psi}_{k},\overline{i}_{j}}  &  =\left(  -1\right)
^{j-1}\left[  \left(  H_{1}\widetilde{P}+H_{2}\right)  \dot{B}\overline{Z}%
_{k}^{T}\right]  _{i_{1}}\\
\times &  \left[
\begin{array}
[c]{c}%
%TCIMACRO{\dprod \limits_{s=3}^{j}}%
%BeginExpansion
{\displaystyle\prod\limits_{s=3}^{j}}
%EndExpansion
\left\{  E\left[  \dot{P}\dot{B}H_{1}\overline{Z}_{k}\overline{Z}_{k}%
^{T}\right]  \right\}  ^{-1}\\
\left\{  \left(  \dot{P}\dot{B}H_{1}\overline{Z}_{k}\overline{Z}_{k}%
^{T}\right)  _{i_{s}}-E\left[  \dot{P}\dot{B}H_{1}\overline{Z}_{k}\overline
{Z}_{k}^{T}\right]  \right\}
\end{array}
\right] \\
&  \times\left\{  E\left[  \dot{P}\dot{B}H_{1}\overline{Z}_{k}\overline{Z}%
_{k}^{T}\right]  \right\}  ^{-1}\left[  \overline{Z}_{k}\left(  H_{1}%
\widetilde{B}+H_{3}\right)  \dot{P}\right]  _{i_{2}}%
\end{align*}

\end{theorem}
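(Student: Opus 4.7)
The plan is to argue by induction on $m$, invoking part 5c(i) of Theorem \ref{eift}, which gives the recursion
\[
m\,\mathbb{IF}_{m,m,\widetilde{\psi}_k}(\theta)=\mathbb{V}\bigl(d_{m,\theta}[\,if_{1,\,if_{m-1,m-1,\widetilde{\psi}_k}(O_{i_1},\ldots,O_{i_{m-1}};\,\cdot\,)}(O_{i_m};\theta)\,]\bigr).
\]
The base $m=1$ is Theorem \ref{FOIF}: $if_{1,1}(o;\theta)=H(\widetilde{b},\widetilde{p})|_{o}-\widetilde{\psi}_k(\theta)$. Throughout, abbreviate
\[
M(\theta)=E_\theta[\dot{P}\dot{B}H_1\overline{Z}_k\overline{Z}_k^T],\qquad X_i=(\dot{P}\dot{B}H_1\overline{Z}_k\overline{Z}_k^T)_i,
\]
\[
L_i=[(H_1\widetilde{P}+H_2)\dot{B}\overline{Z}_k^T]_i,\qquad R_i=[\overline{Z}_k(H_1\widetilde{B}+H_3)\dot{P}]_i,
\]
so that the asserted formula reads $IF_{jj,\overline{i}_j}=(-1)^{j-1}L_{i_1}\Bigl\{\prod_{s=3}^{j}M^{-1}(X_{i_s}-M)\Bigr\}M^{-1}R_{i_2}$.

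The key preliminary computation is the first order influence functions of the $\theta$-dependent ingredients. Implicit differentiation of (\ref{eta}) along a submodel with score $S(O)$, then using $b^\ast(X,\widetilde{\overline{\eta}}_k(\theta))=\widetilde{B}$ to simplify the arising expectations, gives
\[
IF_{1,\widetilde{\overline{\eta}}_k}(\theta)(O)=-M^{-1}\overline{Z}_k\dot{P}(H_1\widetilde{B}+H_3)\big|_{O}=-M^{-1}R(O),
\]
and symmetrically $IF_{1,\widetilde{\overline{\alpha}}_k}(\theta)(O)=-M^{-1}L(O)^{T}$. That $\widetilde{B},\widetilde{P}$ (rather than $\widehat{B},\widehat{P}$) appear here is a direct consequence of the orthogonality identities (\ref{eta})--(\ref{alpha}) defining $\widetilde{\overline{\eta}}_k$ and $\widetilde{\overline{\alpha}}_k$; this is exactly what will allow the induction to close into a single sandwich. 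Standard matrix calculus also gives $IF_{1,M^{-1}}(\theta)(O)=-M^{-1}(X_O-M)M^{-1}$.

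For the $m=2$ case, differentiate $if_{1,1}$ in $\theta$ and take the first-order IF at $O_{i_2}$ by the chain rule; substituting the preliminary formulas produces $-L_{i_1}M^{-1}R_{i_2}-L_{i_2}M^{-1}R_{i_1}$ plus a contribution from $-\widetilde{\psi}_k(\theta)$ that lies in $\mathcal{U}_1$. Applying $d_{2,\theta}/2$ kills the $\mathcal{U}_1$ part and symmetrizes the two remaining terms (which differ only by $i_1\leftrightarrow i_2$) into $-L_{i_1}M^{-1}R_{i_2}$, matching the claim. For the inductive step $m-1\to m$, assume the formula at $j=m-1$. The only $\theta$-dependent factors of $if_{m-1,m-1}$ are $L_{i_1}$, $R_{i_2}$, and each internal $M^{-1}$ and $-M$ (the $X_{i_s}$ are $\theta$-free). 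The product rule yields three families of contributions when taking the IF at $O_{i_m}$: differentiating $L_{i_1}$ through its $\widetilde{P}_{i_1}$-slot produces a scalar $-\overline{Z}_{k,i_1}^{T}M^{-1}L_{i_m}^{T}$ that can be re-expressed as an internal $M^{-1}X_{i_m}M^{-1}$ insertion adjacent to $L_{i_1}$; symmetrically for $R_{i_2}$ with $X_{i_m}$ placed adjacent to $R_{i_2}$; and differentiating any of the $m-2$ internal $M^{-1}$'s inserts $M^{-1}(X_{i_m}-M)M^{-1}$ at that slot. After applying $d_{m,\theta}$, the subtracted $-M$ pieces from centering align with the $-M$ pieces produced, so that the end-type and internal-type contributions both become $M^{-1}(X_{i_m}-M)$ insertions at the $m$ available positions; symmetrization over $i_3,\ldots,i_m$ combined with the $1/m$ prefactor reproduces exactly the single sandwich formula with one extra factor, with the sign flip from $(-1)^{m-2}$ to $(-1)^{m-1}$ coming from the minus signs in $IF_{1,M^{-1}}$, $IF_{1,\widetilde{\overline{\eta}}_k}$, and $IF_{1,\widetilde{\overline{\alpha}}_k}$.

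The main obstacle is the combinatorial bookkeeping in the inductive step: the product rule produces many terms, and one must verify carefully that after $d_{m,\theta}$ centers everything, the end contributions from $L_{i_1}$ and $R_{i_2}$ can be re-expressed and combined with the internal contributions from the $M^{-1}$'s so that the total collapses to the claimed single-product form rather than a sum over many structurally different pieces. The orthogonality conditions (\ref{eta})--(\ref{alpha}) are essential at every step — they are precisely what makes $IF_{1,\widetilde{\overline{\eta}}_k}$ and $IF_{1,\widetilde{\overline{\alpha}}_k}$ take the clean $-M^{-1}R$ and $-M^{-1}L^{T}$ forms, and without that cleanliness the induction would break down into an unmanageable proliferation of residual terms.
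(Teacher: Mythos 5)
Your proposal follows the paper's own route: the base case is Theorem \ref{FOIF}, the recursion is part 5c(i) of Theorem \ref{eift}, the workhorse computations are the first order influence functions of $\widetilde{\overline{\eta}}_{k}$ and $\widetilde{\overline{\alpha}}_{k}$ (whose clean forms $-M^{-1}R$ and $-M^{-1}L^{T}$, in your notation, are indeed forced by the defining orthogonality conditions (\ref{eta})--(\ref{alpha})) together with $IF_{1,M^{-1}}=-M^{-1}(X-M)M^{-1}$, and the induction closes by counting the end-type and internal $M^{-1}$-type contributions against the $1/m$ prefactor after symmetrization, with the extra minus sign advancing $(-1)^{m-2}$ to $(-1)^{m-1}$. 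This is exactly how the Appendix proof of Theorem \ref{DRHOIF} proceeds, including the two-term symmetrization at $m=2$ and the re-expression of the $\widetilde{P}$- and $\widetilde{B}$-derivative terms as sandwich insertions.

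The one place where your bookkeeping as stated would not close is the fourth family of product-rule terms: those obtained by differentiating the centering means $-E_{\theta}\left[\dot{P}\dot{B}H_{1}\overline{Z}_{k}\overline{Z}_{k}^{T}\right]$ sitting inside each factor $(X_{i_{s}}-M)$. These do not ``align with the $-M$ pieces produced'' by the degenerating operator: differentiating that mean replaces the entire factor $(X_{i_{s}}-M)$ by $-(X_{i_{m+1}}-M)$ (working in the step $m\rightarrow m+1$), so the resulting term no longer involves unit $i_{s}$ at all and cannot be rewritten as an insertion into the single sandwich. The correct disposal, and the paper's, is that precisely because each such term depends on only $m$ of the $m+1$ arguments, it is an element of $\mathcal{U}_{m}\left(\theta\right)$ and is therefore annihilated by the projection onto the orthocomplement of $\mathcal{U}_{m}\left(\theta\right)$ in $\mathcal{U}_{m+1}\left(\theta\right)$, equivalently by $d_{m+1,\theta}$. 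With that single observation inserted, your induction is the paper's proof: the surviving $m+1$ contributions (from $\widetilde{P}$, $\widetilde{B}$, and the $m-1$ copies of $M^{-1}$) become, after the projection centers the uncentered $X$ factors and after permutation of unit indices, $m+1$ copies of the claimed kernel, and dividing by $m+1$ gives the stated formula.
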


\subsubsection{The Estimator $\protect\widehat{\psi}_{m,k}\equiv
\protect\widehat{\psi}+\protect\widehat{\mathbb{IF}}%
_{m,\protect\widetilde{\psi}_{k}}$ and its Estimation Bias}

We can now calculate the estimator $\widehat{\psi}_{m,k}\equiv\widehat{\psi
}+\widehat{\mathbb{IF}}_{m,\widetilde{\psi}_{k}}\ $by substitution of
$\widehat{\theta}$ for $\theta$ in $\mathbb{IF}_{m,\widetilde{\psi}_{k}}%
\equiv\mathbb{IF}_{m,\widetilde{\psi}_{k}}\left(  \theta\right)  $ to obtain
the following.

\begin{theorem}
Suppose $\left(  \ref{ON}\right)  $ holds . Then $\widehat{\psi}%
_{m,k}=\widehat{\psi}+\widehat{\mathbb{IF}}_{1,\widetilde{\psi}_{k}}%
+\sum_{j=2}^{m}\widehat{\mathbb{IF}}_{jj,\widetilde{\psi}_{k}}$ where
\begin{align*}
\widehat{\psi}+\widehat{\mathbb{IF}}_{1,\widetilde{\psi}_{k}} &
=\widehat{B}\widehat{P}H_{1}+\widehat{B}H_{2}+\widehat{P}H_{3}+H_{4}\\
\widehat{IF}_{22,\widetilde{\psi}_{k},\overline{i}_{2}} &  =-\left[  \left(
H_{1}\widehat{P}+H_{2}\right)  \dot{B}\overline{Z}_{k}^{T}\right]  _{i_{1}%
}\left[  \overline{Z}_{k}\left(  H_{1}\widehat{B}+H_{3}\right)  \dot
{P}\right]  _{i_{2}}\\
\widehat{IF}_{jj,\widetilde{\psi}_{k},\overline{i}_{j}} &  =\left(  -1\right)
^{j-1}\left\{
\begin{array}
[c]{c}%
\left[  \left(  H_{1}\widehat{P}+H_{2}\right)  \dot{B}\overline{Z}_{k}%
^{T}\right]  _{i_{1}}\left[
%TCIMACRO{\dprod \limits_{s=3}^{j}}%
%BeginExpansion
{\displaystyle\prod\limits_{s=3}^{j}}
%EndExpansion
\left\{
\begin{array}
[c]{c}%
\left(  \dot{P}\dot{B}H_{1}\overline{Z}_{k}\overline{Z}_{k}^{T}\right)
_{i_{s}}\\
-I_{k\times k}%
\end{array}
\right\}  \right] \\
\times\left[  \overline{Z}_{k}\left(  H_{1}\widehat{B}+H_{3}\right)  \dot
{P}\right]  _{i_{2}}%
\end{array}
\right\}
\end{align*}

\end{theorem}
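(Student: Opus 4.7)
The plan is to obtain $\widehat{\psi}_{m,k}$ by direct substitution of the training-sample estimate $\widehat{\theta}$ into the closed-form expressions for $\mathbb{IF}_{1,\widetilde{\psi}_{k}}(\theta)$ and $\mathbb{IF}_{jj,\widetilde{\psi}_{k}}(\theta)$ supplied by Theorems~\ref{FOIF} and~\ref{DRHOIF}. The entire proof reduces to verifying two algebraic simplifications that the orthonormalization~\eqref{ON} enables; once those are in hand, everything else is a mechanical collapse of the general formulas.

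First I would check that \eqref{ON} implies
\[
\widehat{E}\bigl[\dot{P}\dot{B}H_{1}\overline{Z}_{k}\overline{Z}_{k}^{T}\bigr]=I_{k\times k}.
\]
Conditioning on $X$ rewrites the left-hand side as $E_{\widehat{f}_{X}}[\widehat{Q}^{2}\overline{Z}_{k}\overline{Z}_{k}^{T}]$; inserting $\overline{Z}_{k}=M^{-1/2}\overline{\varphi}_{k}$ with $M=E_{\widehat{f}_{X}}[\widehat{Q}^{2}\overline{\varphi}_{k}\overline{\varphi}_{k}^{T}]$ collapses this to $M^{-1/2}MM^{-1/2}=I_{k\times k}$. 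Consequently, every occurrence of $\{E_{\theta}[\dot{P}\dot{B}H_{1}\overline{Z}_{k}\overline{Z}_{k}^{T}]\}^{-1}$ in Theorem~\ref{DRHOIF}, once $\theta$ is set to $\widehat{\theta}$, becomes the identity and drops out.

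Next I would establish the key identity $\widetilde{\overline{\alpha}}_{k}(\widehat{\theta})=\widetilde{\overline{\eta}}_{k}(\widehat{\theta})=0$, so that $\widehat{\widetilde{B}}=\widehat{B}$, $\widehat{\widetilde{P}}=\widehat{P}$, and $\widehat{\widetilde{\psi}}_{k}=\widehat{\psi}$. By the closed forms \eqref{eta1}--\eqref{alpha1} and the first step, these reduce to the two vanishing statements $\widehat{E}[\overline{Z}_{k}\dot{P}\{H_{1}\widehat{B}+H_{3}\}]=0$ and $\widehat{E}[\overline{Z}_{k}\dot{B}\{H_{1}\widehat{P}+H_{2}\}]=0$. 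Applying Lemma~\ref{condE} part~1 under the law $F(\cdot;\widehat{\theta})$, where $\widehat{b},\widehat{p}$ now play the roles of the true $b,p$, yields $\widehat{E}[H_{1}\widehat{B}+H_{3}\mid X]=\widehat{E}[H_{1}\widehat{P}+H_{2}\mid X]=0$ a.s., and iterated expectation then delivers both identities. Theorem~\ref{DR} applied at $\widehat{\theta}$ gives $\widehat{\widetilde{\psi}}_{k}=\widehat{E}[H(\widehat{b},\widehat{p})]=\widehat{\psi}$.

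The theorem now follows by assembling these simplifications. Theorem~\ref{FOIF} gives $\widehat{\psi}+\widehat{\mathbb{IF}}_{1,\widetilde{\psi}_{k}}=\widehat{\psi}+H(\widehat{b},\widehat{p})-\widehat{\psi}=\widehat{B}\widehat{P}H_{1}+\widehat{B}H_{2}+\widehat{P}H_{3}+H_{4}$. For each $j\geq2$, plugging $\widehat{\theta}$ into the Theorem~\ref{DRHOIF} formula replaces every $\{E[\dot{P}\dot{B}H_{1}\overline{Z}_{k}\overline{Z}_{k}^{T}]\}^{-1}$ by $I_{k\times k}$ and every $\widetilde{B},\widetilde{P}$ by $\widehat{B},\widehat{P}$, collapsing the expression to exactly the stated product. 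The main obstacle I anticipate is conceptual rather than computational: one must carefully justify applying Lemma~\ref{condE} ``at $\widehat{\theta}$'' — this is legitimate because $\widehat{\theta}\in\Theta$ and the lemma is asserted for every element of $\Theta$, but it requires keeping in mind that $b,p$ in the lemma always denote the components of whichever parameter the expectation is being taken under, not of the unknown truth. Once that point is granted, the rest is substitution.
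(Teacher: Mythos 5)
Your proposal is correct and follows essentially the same route as the paper's own proof: apply Lemma \ref{condE} under $F(\cdot;\widehat{\theta})$ to get $\widetilde{\overline{\eta}}_{k}(\widehat{\theta})=\widetilde{\overline{\alpha}}_{k}(\widehat{\theta})=0$ (hence $\widetilde{B}(\widehat{\theta})=\widehat{B}$, $\widetilde{P}(\widehat{\theta})=\widehat{P}$), and use Eq.\ (\ref{ON}) to reduce $\widehat{E}[\dot{P}\dot{B}H_{1}\overline{Z}_{k}\overline{Z}_{k}^{T}]$ to $I_{k\times k}$, after which Theorems \ref{FOIF} and \ref{DRHOIF} collapse to the stated formulas. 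Your extra care in writing out the $M^{-1/2}MM^{-1/2}=I_{k\times k}$ collapse and the cancellation $\widetilde{\psi}_{k}(\widehat{\theta})=\psi(\widehat{\theta})$ only makes explicit what the paper leaves implicit.
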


\begin{proof}
By Lemma \ref{condE} $E_{\widehat{\theta}}\left[  \left\{  H_{1}%
\widehat{B}+H_{3}\right\}  \dot{P}\overline{Z}_{k}\right]  =E_{\widehat{\theta
}}\left[  \left\{  H_{1}\widehat{P}+H_{2}\right\}  \dot{B}\overline{Z}%
_{k}\right]  =0.$ Thus by Eqs. $\left(  \ref{eta1}\right)  $and $\left(
\ref{alpha1}\right)  $ $\widetilde{\overline{\eta}}_{k}\left(  \widehat{\theta
}\right)  $= $\widetilde{\overline{\alpha}}_{k}\left(  \widehat{\theta
}\right)  =0$ so $\widetilde{B}\left(  \widehat{\theta}\right)  =\widehat{B}$
\ and $\widetilde{P}\left(  \widehat{\theta}\right)  =\widehat{P}$ . Further,
by Eq.$\left(  \ref{ON}\right)  ,$ $\widehat{E}\left[  \dot{P}\dot{B}%
H_{1}\overline{Z}_{k}\overline{Z}_{k}^{T}\right]  =\widehat{E}\left[  \dot
{P}\dot{B}\widehat{E}\left[  H_{1}|X\right]  \overline{Z}_{k}\overline{Z}%
_{k}^{T}\right]  =\widehat{E}\left[  \widehat{Q}^{2}\overline{Z}_{k}%
\overline{Z}_{k}^{T}\right]  =\int\overline{\varphi}_{k}\left(  x\right)
\overline{\varphi}_{k}\left(  x\right)  ^{T}=I_{k\times k}$ .
\end{proof}

It follows that by our judicious choice of $\overline{Z}_{k}$ in Eq.$\left(
\ref{ON}\right)  ,$ we have avoided the need to invert a $k\times k$ matrix to
compute $\widehat{\psi}_{m,k}.$

\begin{remark}
The reader can easily check that when we take
\[
\overline{Z}_{k}=\overline{\varphi}_{k}\left(  X\right)  /\left\{
\widehat{f_{X}}\left(  X\right)  \widehat{E}\left[  H_{1}|X\right]  \right\}
^{1/2}%
\]
$\ \dot{B}=\dot{P}=1$ and $H_{1}\geq0$ wp 1$,$ $\widehat{IF}%
_{j,j,\widetilde{\psi}_{k},\overline{i}_{2}}$ is precisely the same as
$IF_{j,j,\psi,i_{1},i_{2}}^{\left(  k\right)  }\left(  \widehat{\theta
}\right)  $ of equation {}(\ref{22}) in Section 3.2.1.
\end{remark}

To make our procedures less abstract, we provide explicit expressions for
$\widehat{IF}_{jj,\widetilde{\psi}_{k},\overline{i}_{j}}$ in examples 1a and 2a.

\textbf{Example 1a} \textbf{continued: } $\psi=E\left[  BP\right]  ,$ $\dot
{B}=-\dot{P}=1$ $wp1,$ $\widehat{Q}=1,H_{1}=-1,$ $\widehat{E}\left[
\overline{Z}_{k}\overline{Z}_{ki_{s}}^{T}\right]  =I_{k\times k}.$ Then
\[
\dot{P}\left\{  H_{1}\widehat{B}+H_{3}\right\}  =Y-\widehat{B},\dot{B}\left\{
H_{1}\widehat{P}+H_{2}\right\}  =A-\widehat{P}%
\]
and thus
\begin{align*}
\widehat{IF}_{22,\widetilde{\psi}_{k},\overline{i}_{2}} &  =-\left[  \left(
A-\widehat{P}\right)  \overline{Z}_{k}^{T}\right]  _{i_{1}}\left[
\overline{Z}_{k}\left(  Y-\widehat{B}\right)  \right]  _{i_{2}}\\
\widehat{IF}_{jj,\widetilde{\psi}_{k},\overline{i}_{j}} &  =\left(  -1\right)
^{j-1}\left[  \left(  A-\widehat{P}\right)  \overline{Z}_{k}^{T}\right]
_{i_{1}}\left[
%TCIMACRO{\dprod \limits_{s=3}^{j}}%
%BeginExpansion
{\displaystyle\prod\limits_{s=3}^{j}}
%EndExpansion
\left\{  \overline{Z}_{ki_{s}}\overline{Z}_{ki_{s}}^{T}-I_{k\times k}\right\}
\right]  \left[  \overline{Z}_{k}\left(  Y-\widehat{B}\right)  \right]
_{i_{2}}%
\end{align*}

\textbf{Example 2a continued:} $H_{1}=-A,$ $\dot{B}=-1,\dot{P}=\widehat{P}%
=1/\widehat{\pi},\widehat{Q}=1$, $\psi=E\left[  B\right]  ,$ $Q=\left\{
\widehat{P}/P\right\}  ^{1/2}=\left\{  \frac{\pi}{\widehat{\pi}}\right\}
^{1/2} $ and $\overline{Z}_{k}=\overline{\varphi}_{k}\left\{  \widehat{f}%
\left(  X\right)  \right\}  ^{-1/2},$ so $\widehat{E}\left[  \overline{Z}%
_{k}\overline{Z}_{ki_{s}}^{T}\right]  =I_{k\times k}.$

Then $\dot{P}\left\{  H_{1}\widehat{B}+H_{3}\right\}  =\frac{A}{\widehat{\pi}%
}\left(  Y-\widehat{B}\right)  ,$ $\dot{B}\left\{  H_{1}\widehat{P}%
+H_{2}\right\}  =\left(  \frac{A}{\widehat{\pi}}-1\right)  ,$ so%
\begin{align*}
\widehat{IF}_{22,\widetilde{\psi}_{k},\overline{i}_{2}}  &  =-\left[  \left(
\frac{A}{\widehat{\pi}}-1\right)  \overline{Z}_{k}^{T}\right]  _{i_{1}}\left[
\overline{Z}_{k}\frac{A}{\widehat{\pi}}\left(  Y-\widehat{B}\right)  \right]
_{i_{2}}\\
\widehat{IF}_{jj,\widetilde{\psi}_{k},\overline{i}_{j}}  &  =\left(
-1\right)  ^{j-1}\left[  \left(  \frac{A}{\widehat{\pi}}-1\right)
\overline{Z}_{k}^{T}\right]  _{i_{1}}\left[
%TCIMACRO{\dprod \limits_{s=3}^{j}}%
%BeginExpansion
{\displaystyle\prod\limits_{s=3}^{j}}
%EndExpansion
\left\{  \frac{A}{\widehat{\pi}}\overline{Z}_{k}\overline{Z}_{k}%
^{T}-I_{k\times k}\right\}  _{i_{s}}\right]  \left[  \overline{Z}_{k}\frac
{A}{\widehat{\pi}}\left(  Y-\widehat{B}\right)  \right]  _{i_{2}}%
\end{align*}

Consider Example 2a with $\dot{B}=-1,\dot{P}=1\ ,\widehat{Q}=\widehat{\pi
}^{1/2}$, $\widehat{E}\left[  \widehat{Q}^{2}\overline{Z}_{k}\overline
{Z}_{ki_{s}}^{T}\right]  =I_{k\times k},$ ${}\dot{P}\left\{  H_{1}%
\widehat{B}+H_{3}\right\}  =\left[  A\left(  Y-\widehat{B}\right)  \right]  ,$
$\dot{B}\left\{  H_{1}\widehat{P}+H_{2}\right\}  =\left(  \frac{A}%
{\widehat{\pi}}-1\right)  ,$ so
\begin{align*}
\widehat{IF}_{22,\widetilde{\psi}_{k},\overline{i}_{2}} &  =-\left[  \left(
\frac{A}{\widehat{\pi}}-1\right)  \overline{Z}_{k}^{T}\right]  _{i_{1}}\left[
\overline{Z}_{k}A\left(  Y-\widehat{B}\right)  \right]  _{i_{2}}\\
\widehat{IF}_{jj,\widetilde{\psi}_{k},\overline{i}_{j}} &  =\left(  -1\right)
^{j-1}\left[  \left(  \frac{A}{\widehat{\pi}}-1\right)  \overline{Z}_{k}%
^{T}\right]  _{i_{1}}\left[
%TCIMACRO{\dprod \limits_{s=3}^{j}}%
%BeginExpansion
{\displaystyle\prod\limits_{s=3}^{j}}
%EndExpansion
\left\{  A\overline{Z}_{k}\overline{Z}_{k}^{T}-I_{k\times k}\right\}  _{i_{s}%
}\right]  \left[  \overline{Z}_{k}A\left(  Y-\widehat{B}\right)  \right]
_{i_{2}}%
\end{align*}

Our next theorem, proved in the appendix, derives the estimation bias
$EB_{m}=E\left[  \widehat{\psi}_{m,k}\right]  -\widetilde{\psi}_{k}.$

\begin{theorem}
\label{EBrate}Suppose $\left(  \text{\ref{dot1}}\right)  -\left(
\text{\ref{dot3}}\right)  $ and $Ai)-Aiv)$ hold then%
\begin{gather}
EB_{m}=\left(  -1\right)  ^{m-1}\left\{
\begin{array}
[c]{c}%
E\left[  Q^{2}\left(  \frac{B-\widehat{B}}{\dot{B}}\right)  \overline{Z}%
_{k}^{T}\right]  \left\{  E\left[  Q^{2}\overline{Z}_{k}\overline{Z}_{k}%
^{T}\right]  -I_{k\times k}\right\}  ^{m-1}\\
\times\left\{  E\left[  Q^{2}\overline{Z}_{k}\overline{Z}_{k}^{T}\right]
\right\}  ^{-1}E\left[  \overline{Z}_{k}Q^{2}\left(  \frac{P-\widehat{P}}%
{\dot{P}}\right)  \right]
\end{array}
\right\} \label{EB3}\\
|EB_{m}|\nonumber\\
\leq\left\{
\begin{array}
[c]{c}%
\left\vert \left\vert \left\{  \frac{\dot{B}}{\dot{P}}G\right\}
^{1/2}\right\vert \right\vert _{\infty}\left\vert \left\vert \left\{
\frac{\dot{P}}{\dot{B}}G\right\}  ^{1/2}\right\vert \right\vert _{\infty
}\left\vert \left\vert \delta g\right\vert \right\vert _{\infty}^{m-1}\left(
1+o_{p}\left(  1\right)  \right)  \times\\
\left\{  \int\left(  p\left(  X\right)  -\widehat{p}\left(  X\right)  \right)
^{2}dX\right\}  ^{1/2}\left\{  \int\left(  b\left(  X\right)  -\widehat{b}%
\left(  X\right)  \right)  ^{2}dX\right\}  ^{1/2}%
\end{array}
\right\} \label{EB4}\\
=O_{P}\left(  \left(  \frac{\log n}{n}\right)  ^{\frac{\left(  m-1\right)
\beta_{g}}{d+2\beta_{g}}}n^{-\left(  \frac{\beta_{b}}{d+2\beta_{b}}%
+\frac{\beta_{p}}{d+2\beta_{p}}\right)  }\right) \label{EB6}%
\end{gather}

for $m\geq1,$\ where $\delta g=\frac{g\left(  X\right)  -\widehat{g}\left(
X\right)  }{\widehat{g}\left(  X\right)  }$
\end{theorem}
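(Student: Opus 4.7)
The plan is to directly evaluate $E_{\theta}[\widehat{\psi}_{m,k}]-\widetilde{\psi}_{k}$ term by term, exploit the factorization of U-statistics across distinct indices, use Lemma \ref{condE}(1) to turn each ``boundary'' factor into a difference $B-\widehat{B}$ or $P-\widehat{P}$, collapse the resulting geometric series in $M-I$, and finally bound operator norms using the key identity $M-I=\widehat{E}[\widehat{Q}^{2}\delta g\,\overline{Z}_{k}\overline{Z}_{k}^{T}]$.

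First I would introduce the shorthand $\alpha=E_{\theta}[Q^{2}(B-\widehat{B})/\dot{B}\,\overline{Z}_{k}]$, $\beta=E_{\theta}[Q^{2}(P-\widehat{P})/\dot{P}\,\overline{Z}_{k}]$, and $M=E_{\theta}[Q^{2}\overline{Z}_{k}\overline{Z}_{k}^{T}]$. Applying Lemma \ref{condE}(1) gives $E_{\theta}[(H_{1}\widehat{B}+H_{3})\dot{P}\,\overline{Z}_{k}]=-\alpha$ and $E_{\theta}[(H_{1}\widehat{P}+H_{2})\dot{B}\,\overline{Z}_{k}^{T}]=-\beta^{T}$, while $E_{\theta}[\dot{P}\dot{B}H_{1}\overline{Z}_{k}\overline{Z}_{k}^{T}]=M$. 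Because the kernel of $\widehat{IF}_{jj,\widetilde{\psi}_{k},\overline{i}_{j}}$ is a product over distinct indices, its expectation factorizes into $(-1)^{j-1}\beta^{T}(M-I)^{j-2}\alpha$ for $j\ge 2$. Meanwhile Lemma \ref{condE}(2) combined with the expression $TB_{k}=E[(B-\widetilde{B})(P-\widetilde{P})H_{1}]$ and the closed forms (\ref{eta1})--(\ref{alpha1}) for $\widetilde{\overline{\eta}}_{k}$ and $\widetilde{\overline{\alpha}}_{k}$ yield $E_{\theta}[H(\widehat{b},\widehat{p})]-\widetilde{\psi}_{k}=\beta^{T}M^{-1}\alpha$. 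Summing the finite geometric series via the identity $\sum_{\ell=0}^{m-2}(I-M)^{\ell}=M^{-1}[\,I-(-1)^{m-1}(M-I)^{m-1}\,]$ telescopes the first-order and higher-order contributions to give
\[
EB_{m}=(-1)^{m-1}\,\beta^{T}(M-I)^{m-1}M^{-1}\alpha,
\]
which is exactly (\ref{EB3}).

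To obtain (\ref{EB4}), I compute $M-I$ by subtracting the normalization $\widehat{E}[\widehat{Q}^{2}\overline{Z}_{k}\overline{Z}_{k}^{T}]=I_{k\times k}$ from $M$; using $Q^{2}f_{X}-\widehat{Q}^{2}\widehat{f}_{X}=\dot{B}\dot{P}(g-\widehat{g})=\widehat{Q}^{2}\widehat{f}_{X}\delta g/\dot{B}\dot{P}\cdot\dot{B}\dot{P}$ one arrives at $M-I=\widehat{E}[\widehat{Q}^{2}\delta g\,\overline{Z}_{k}\overline{Z}_{k}^{T}]$, which immediately yields $\|M-I\|_{op}\le\|\delta g\|_{\infty}$ and hence $\|(M-I)^{m-1}M^{-1}\|_{op}\le\|\delta g\|_{\infty}^{m-1}(1+o_{p}(1))$. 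Applying Cauchy--Schwarz in the $M^{-1}$ inner product gives $|EB_{m}|\le(\alpha^{T}M^{-1}\alpha)^{1/2}(\beta^{T}M^{-1}\beta)^{1/2}\|\delta g\|_{\infty}^{m-1}(1+o_{p}(1))$, and since $\alpha^{T}M^{-1}\alpha$ is the squared $L_{2}(F_{X})$-norm of the projection $\Pi[Q(B-\widehat{B})/\dot{B}\mid Q\overline{Z}_{k}]$, it is bounded above by $E_{\theta}[Q^{2}(B-\widehat{B})^{2}/\dot{B}^{2}]\le\|\{(\dot{P}/\dot{B})G\}^{1/2}\|_{\infty}^{2}\int(b-\widehat{b})^{2}dx$, with the analogous bound for $\beta$. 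Assembling the two pieces gives (\ref{EB4}); substituting the assumed rate-optimal rates $\|b-\widehat{b}\|_{L_{2}}=O_{P}(n^{-\beta_{b}/(d+2\beta_{b})})$, $\|p-\widehat{p}\|_{L_{2}}=O_{P}(n^{-\beta_{p}/(d+2\beta_{p})})$, and $\|\delta g\|_{\infty}=O_{P}((\log n/n)^{\beta_{g}/(d+2\beta_{g})})$ from assumption Aiv delivers (\ref{EB6}).

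The main obstacle I foresee is the bookkeeping in the collapse of the series to the compact form (\ref{EB3}): one must show that the sum of the first-order plug-in bias $\beta^{T}M^{-1}\alpha$ with the factorized expectations $(-1)^{j-1}\beta^{T}(M-I)^{j-2}\alpha$ produces exactly $(-1)^{m-1}\beta^{T}(M-I)^{m-1}M^{-1}\alpha$, with all the truncation-bias terms from $TB_{k}$ and the contributions involving $\widetilde{\overline{\eta}}_{k}$, $\widetilde{\overline{\alpha}}_{k}$ cancelling correctly. Once this algebraic identity is secured, the norm bound via $\|M-I\|_{op}\le\|\delta g\|_{\infty}$ and the reduction of $\alpha^{T}M^{-1}\alpha$ to an $L_{2}$ norm of $b-\widehat{b}$ are routine applications of the projection interpretation already employed in Theorem \ref{TBformula}.
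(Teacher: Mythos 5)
Your proposal is correct and follows essentially the same route as the paper: the expectation of each degenerate term factorizes into $(-1)^{j-1}\beta^{T}(M-I)^{j-2}\alpha$ via Lemma \ref{condE}, the series collapses to $(-1)^{m-1}\beta^{T}(M-I)^{m-1}M^{-1}\alpha$ (you sum the geometric series directly, the paper runs the equivalent induction), and the bound (\ref{EB4}) rests on the same key identity $M-I=\widehat{E}\left[\widehat{Q}^{2}\,\delta g\,\overline{Z}_{k}\overline{Z}_{k}^{T}\right]$ together with Cauchy--Schwarz and the norm-one property of projections, followed by substitution of the assumed rates. Your finite-dimensional operator-norm packaging of the last step merely replaces the paper's iteration of the projection operator $\widehat{R}(H)=\widehat{\Pi}\left[\delta g\,H\,|\,\widehat{Q}\overline{Z}_{k}\right]$; the content is the same.
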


\begin{remark}
\label{pnorm}At the cost of a longer proof we could have used H\"{o}lder's
inequality repeatedly to control $\delta g$ in the $L_{p}$ norm $\left\vert
\left\vert \delta g\right\vert \right\vert _{m+1}$ with $p=m+1$ to show that
$|EB_{m}|=O_{P}\left(  \left\vert \left\vert \delta g\right\vert \right\vert
_{m+1}^{m-1}\left\vert \left\vert b\left(  \cdot\right)  -\widehat{b}\left(
\cdot\right)  \right\vert \right\vert _{m+1}\left\vert \left\vert p\left(
\cdot\right)  -\widehat{p}\left(  \cdot\right)  \right\vert \right\vert
_{m+1}\right)  .$ Thus, $|EB_{m}|$ is $O_{P}\left(  \left\vert \left\vert
\theta-\widehat{\theta}\ \right\vert \right\vert ^{m+1}\right)  ,$ consistent
with the form of the bias given in our fundamental theorem \ref{eiet}.
\end{remark}

\begin{remark}
\textbf{An alternate derivation of }$\widehat{\psi}_{m,k}.$ The above
derivation of $\widehat{\psi}_{m,k}$ required that one have facility in
calculating higher order influence functions $\mathbb{IF}_{m,\widetilde{\psi
}_{k}},\ $as done in the proof of Theorem \ref{DRHOIF} in the appendix.
However, there exists an alternate derivation of $\widehat{\psi}_{m,k}$ that
does not require one learn how to calculate influence functions. Specifically,
we know from Theorems \ref{eiet} and \ref{eift} that in a (locally)
nonparametric model $\widehat{\mathbb{IF}}_{jj,\widetilde{\psi}_{k}},j\geq2$
is the unique $j^{th}$ order U-statistic that is degenerate under
$\widehat{\theta}$ and satisfies
\begin{equation}
EB_{j-1}+E\left[  \widehat{\mathbb{IF}}_{jj,\widetilde{\psi}_{k}%
}|\widehat{\theta}\right]  \equiv EB_{j}=O_{p}\left(  \left\vert \left\vert
\widehat{\theta}-\theta\right\vert \right\vert ^{j+1}\right)
\end{equation}

with $EB_{1}=E\left[  \widehat{\psi}_{1}|\widehat{\theta}\right]
-\widetilde{\psi}_{k}.$ In fact, we first derived $\widehat{\psi}_{m,k}$ by
beginning with $\widehat{\psi}_{1}=\widehat{\psi}+\widehat{\mathbb{IF}%
}_{1,\widetilde{\psi}_{k}}$, calculating $EB_{1}=E\left[  \widehat{\psi}%
_{1}|\widehat{\theta}\right]  -\widetilde{\psi}_{k},$ and then, recursively
for $j=2,...,$ finding $\widehat{\mathbb{IF}}_{jj,\widetilde{\psi}_{k}}$
satisfying the above equation. For explicit details see the Appendix.In fact
if one did not even know how to derive $\mathbb{IF}_{1,\widetilde{\psi}_{k}},$
one could begin the recursion by obtaining $\widehat{\mathbb{IF}%
}_{1,\widetilde{\psi}_{k}}$ as the unique first order U-statistic with mean
zero under $\widehat{\theta}$ satisfying $\widehat{\psi}-\widetilde{\psi}%
_{k}+E\left[  \widehat{\mathbb{IF}}_{1,\widetilde{\psi}_{k}}|\widehat{\theta
}\right]  =O_{p}\left(  \left\vert \left\vert \widehat{\theta}-\theta
\right\vert \right\vert ^{2}\right)  $.
\end{remark}

\subsubsection{The Variance of $\protect\widehat{\psi}_{m,k}\equiv
\protect\widehat{\psi}+\protect\widehat{\mathbb{IF}}%
_{m,\protect\widetilde{\psi}_{k}}$ using compact
wavelets\label{variance_subsection}}

In this section, we derive the order of the variance of $\widehat{\psi}_{m,k}
$ when the orthonormal system $\left\{  \varphi_{j}\left(  X\right)  \right\}
$ used to construct our $U$-statistics are a compact wavelet basis. First
consider the case where $X$ is univariate; without loss of generality, assume
that $\ X\sim Uniform[0,1].$ {}Because we are primarily interested in
convergence rates, the fact that $X$ may not follow the uniform distribution
will not affect the rate results given below, but can influence the size of
the constants. We use $\phi_{j}\left(  X\right)  $ in place of $\varphi
_{j}\left(  X\right)  $ to indicate univariate basis functions.

Let $k^{\ast},$ $k$ be integer powers of two with $k>k^{\ast}$\ . Denote by
$\overline{\phi}\left(  X\right)  \equiv\overline{\phi}_{1}^{k}\left(
X\right)  $ the $k-$ dimensional basis vector whose first $k^{\ast}$
components $\overline{\phi}_{1}^{k^{\ast}}\left(  X\right)  $ are the
$k^{\ast}-$vector of level $\log_{2}k^{\ast}$ scaled and translated versions
of a compactly supported 'father' wavelet (Mallat, 1998) and whose last
$k-k^{\ast}$ components $\overline{\phi}_{k^{\ast}+1}^{k}\left(  X\right)  $
are the associated compact mother wavelets between levels $\log_{2}k^{\ast}$
and $\log_{2}k$. In particular, one may use periodic wavelets, folded wavelets
or Daubechies' boundary wavelets with enough vanishing moments to obtain the
optimal approximation rate of $O\left(  k^{-2\beta/d}\right)  $ for
$\beta=\max\left(  \beta_{g},\beta_{p},\beta_{b}\right)  $ $.$ \ The
multiresolution analysis (MRA) property of wavelets allows us to decompose the
vector space spanned by the $\log_{2}\left(  k\right)  $-level father wavelets
$\mathcal{V}_{\log_{2}\left(  k\right)  }$ into the direct sum of the subspace
spanned by $\log_{2}\left(  k^{\ast}\right)  $-level father wavelets
$\mathcal{V}_{\log_{2}\left(  k^{\ast}\right)  }=\left\{  a^{T}\overline{\phi
}_{1}^{k^{\ast}}\left(  X\right)  :a\in R^{k^{\ast}}\right\}  $ and the span
of mother wavelets for each level between $\log_{2}\left(  k^{\ast}\right)  $
and $\log_{2}\left(  k\right)  -1$ which we respectively write as
\begin{align*}
\mathcal{W}_{\log_{2}\left(  k^{\ast}\right)  }  &  =\left\{  a^{T}%
\overline{\phi}_{k^{\ast}+1}^{2k^{\ast}}\left(  X\right)  :a\in R^{k^{\ast}%
}\right\}  ,\\
\mathcal{W}_{\log_{2}\left(  k_{0}\right)  +1}  &  =\left\{  a^{T}%
\overline{\phi}_{2k^{\ast}+1}^{4k^{\ast}}\left(  X\right)  :a\in R^{2k^{\ast}%
}\right\}  ,\\
&  \vdots\\
\mathcal{W}_{\log_{2}\left(  k\right)  -1}  &  =\left\{  a^{T}\overline{\phi
}_{\frac{k}{2}+1}^{k}\left(  X\right)  :a\in R^{\frac{k}{2}}\right\}  .
\end{align*}
Then for any integer $s$ with $\log_{2}\left(  k^{\ast}\right)  +1\leq s,$ we
have%
\[
\mathcal{V}_{s}=\mathcal{V}_{\log_{2}\left(  k^{\ast}\right)  }\oplus\left(
\bigoplus\limits_{v=\log_{2}\left(  k^{\ast}\right)  }^{s-1}\mathcal{W}%
_{v}\right)
\]
As $s\rightarrow\infty,$ the resulting basis system is dense in $L_{2}\left(
X\right)  $ $\left(  \text{Mallat, 1998}\right)  .$ Since, in fact, $X$ is
$d-$dimensional we require a generalization that allows for multivariate
tensor wavelet basis functions. \ In fact, suppose $X^{T}=\left(
X^{1},...,X^{d}\right)  $ is now multivariate, and we again assume $X\sim
Uniform$ on $\left[  0,1\right]  ^{d}.$ \ Given $d$ univariate vector spaces
\[
\mathcal{V}_{1,\log_{2}\left(  k\right)  },\mathcal{V}_{2,\log_{2}\left(
k\right)  },...,\mathcal{V}_{d,\log_{2}\left(  k\right)  }%
\]
respectively spanned by vectors $\overline{\phi}_{1}^{k}\left(  X^{1}\right)
,\overline{\phi}_{1}^{k}\left(  X^{2}\right)  ,...,\overline{\phi}_{1}%
^{k}\left(  X^{d}\right)  ,$ so that for $1\leq r\leq d,$%
\[
\mathcal{V}_{r,\log_{2}\left(  k^{\ast}\right)  }\subset\mathcal{V}%
_{r,\log_{2}\left(  k^{\ast}\right)  +1}\subset...\subset\mathcal{V}%
_{r,\log_{2}\left(  k\right)  -1}\subset\mathcal{V}_{r,\log_{2}\left(
k\right)  }%
\]
and
\[
\mathcal{V}_{r,\log_{2}\left(  k\right)  }=\mathcal{V}_{r,\log_{2}\left(
k^{\ast}\right)  }\oplus\left(  \bigoplus\limits_{v=\log_{2}\left(  k^{\ast
}\right)  }^{\log_{2}\left(  k\right)  -1}\mathcal{W}_{r,v}\right)
\]
One may define $d$ dimensional tensor vector spaces
\[
\mathcal{Y}_{d,\log_{2}\left(  k^{\ast}\right)  },\mathcal{Y}_{d,\log
_{2}\left(  k^{\ast}\right)  +1},...,\mathcal{Y}_{d,\log_{2}\left(  k\right)
}%
\]
such that%
\[
\mathcal{Y}_{d,\log_{2}\left(  k^{\ast}\right)  }\subset\mathcal{Y}%
_{d,\log_{2}\left(  k^{\ast}\right)  +1}\subset...\subset\mathcal{Y}%
_{d,\log_{2}\left(  k\right)  }%
\]
where for $s\geq0,$
\[
\mathcal{Y}_{d,\log_{2}\left(  k_{0}\right)  +s}\mathcal{=}\bigotimes
\limits_{1\leq r\leq d}\mathcal{V}_{r,\log_{2}\left(  k_{0}\right)  +s}%
\]
As $s\rightarrow\infty,$ the resulting tensor basis system is dense in
$L_{2}\left(  X\right)  \left(  \text{Mallat, 1998}\right)  .$ \ 

Next, suppose that we have a set of multivariate basis functions
\[
\left\{  \overline{\varphi}_{1}^{k_{j}}\left(  X\right)
,j=0,1,...,2m\right\}
\]
such that for each $k_{j},$ $\overline{\varphi}_{1}^{k_{j}}\left(  X\right)  $
spans $\bigotimes\limits_{1\leq r\leq d}\mathcal{V}_{r,\log_{2}\left(
k_{j,r}\right)  }$ where $\prod\limits_{r=1}^{d}k_{j,r}=k_{j}$. Define
$||\cdot||_{2}$ as the $L_{2}$ norm with respect to the Lebesgue measure.
\ The following theorem is key to our derivation of the order of the variance
of $\widehat{\psi}_{m,k}$

\begin{theorem}
\label{var_multi}For $m\geq0,$ \
\begin{align*}
&  \left\Vert \overline{\varphi}_{1}^{k_{1}}\left(  X_{i_{1}}\right)  ^{T}%
%TCIMACRO{\dprod \limits_{j=1}^{m}}%
%BeginExpansion
{\displaystyle\prod\limits_{j=1}^{m}}
%EndExpansion
\left\{  \overline{\varphi}_{1}^{k_{j}}\left(  X_{i_{j+1}}\right)
\overline{\varphi}_{1}^{k_{j+1}}\left(  X_{i_{j+1}}\right)  ^{T}\right\}
\overline{\varphi}_{1}^{k_{m+1}}\left(  X_{i_{m+2}}\right)  \right\Vert
_{2}^{2}\\
&  =E\left(
%TCIMACRO{\dprod \limits_{j=1}^{m+1}}%
%BeginExpansion
{\displaystyle\prod\limits_{j=1}^{m+1}}
%EndExpansion
K_{\left(  1,k_{j}\right)  }\left(  X_{i_{j}},X_{i_{j+1}}\right)  \right)
^{2}\\
&  \asymp\Pi_{j=1}^{m+1}k_{j}%
\end{align*}

\end{theorem}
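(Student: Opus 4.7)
The plan is to prove the algebraic identity first and then establish the asymptotic order by iterated integration against Lebesgue measure.

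\emph{Reducing to a product of kernels.} Associating the matrix multiplications from left to right, each inner contraction $\overline{\varphi}_{1}^{k_j}(X_{i_j})^T \overline{\varphi}_{1}^{k_j}(X_{i_{j+1}})$ is a scalar equal to the reproducing kernel $K_{(1,k_j)}(X_{i_j},X_{i_{j+1}})$, while each intervening outer product $\overline{\varphi}_{1}^{k_{j+1}}(X_{i_{j+1}})^T$ merely relabels the row vector that enters the next contraction. Telescoping through all $m+1$ contractions identifies the entire scalar with $\prod_{j=1}^{m+1} K_{(1,k_j)}(X_{i_j},X_{i_{j+1}})$. Since $X_{i_1},\dots,X_{i_{m+2}}$ are iid uniform on $[0,1]^d$, their joint law coincides with Lebesgue measure, so $\|\cdot\|_2^{2} = E[(\cdot)^{2}]$, which yields the first equality of the theorem.

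\emph{Computing the expectation.} For the asymptotic order I would integrate the variables out one at a time. Orthonormality of the basis gives
\[
\int K_{(1,k)}(x,y)^{2}\,dy \;=\; K_{(1,k)}(x,x) \;=\; \sum_{l=1}^{k} \varphi_{l}(x)^{2},
\]
and for tensor products of compactly supported wavelets (periodized, folded, or Daubechies boundary-corrected) this diagonal is $\asymp k$ uniformly in $x$. Integrating first over $x_{1}$ replaces $K_{(1,k_1)}(x_{1},x_{2})^{2}$ by $K_{(1,k_1)}(x_{2},x_{2}) \asymp k_{1}$; this factor is essentially constant in $x_{2}$ and can be pulled past the next integration $\int K_{(1,k_2)}(x_{2},x_{3})^{2}\,dx_{2} = K_{(1,k_2)}(x_{3},x_{3}) \asymp k_{2}$. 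Iterating along the chain (and integrating $x_{m+2}$ symmetrically from the other end) contributes one factor $\asymp k_{j}$ per variable, producing the claimed order $\asymp \prod_{j=1}^{m+1} k_{j}$.

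\emph{Main obstacle.} The upper bound $\lesssim \prod_{j} k_{j}$ follows cleanly from $K_{(1,k)}(x,x) \le Ck$ and iterated integration. The delicate point is the matching lower bound: I need $K_{(1,k_j)}(x,x) \ge c k_{j}$ uniformly in $x$, so that pulling the diagonal factors past subsequent integrations does not degrade the lower estimate. For wavelets this reduces to a scale-invariant partition-of-unity-type inequality $\sum_{l} \phi(2^{j}x - l)^{2} \ge c > 0$ on the father wavelet, which holds for the standard compact wavelet bases and their tensor products (see Mallat 1998). This uniform two-sided control on the diagonal is essentially the only nonroutine ingredient; without it (for example, for Fourier or polynomial bases) a substantially different argument would be required.
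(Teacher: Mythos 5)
Your proof is correct for the theorem as stated, but it takes a genuinely different route from the paper's. You exploit that every kernel appearing here, $K_{(1,k_j)}$, is the projection kernel onto a full tensor MRA space, so its diagonal satisfies a uniform two-sided bound $c\,k_j\leq K_{(1,k_j)}(x,x)\leq C\,k_j$ (reproduction of constants by the scaling function plus compact support and Cauchy--Schwarz); this lets you integrate the chain out exactly, one variable at a time, via $\int K_{(1,k)}(x,y)^2\,dy=K_{(1,k)}(y,y)$, pulling each diagonal factor out with constants preserved on both sides. The paper instead proves a univariate lemma (Lemma \ref{var_uni}, resting on Lemma \ref{var_kernel}) valid for the more general ``slice'' kernels $K_{(k_0,k_1)}$ built from mother wavelets between two resolution levels, for which no uniform pointwise lower bound on the diagonal is available; there the lower bound is obtained by constructing nested sets of positive Lebesgue measure on which all the diagonals are simultaneously of full order, together with a kernel-approximation step that collapses the chain to the single-point diagonal product $E\bigl[\prod_j K(X,X)\bigr]$, and the multivariate statement is then deduced by tensorization. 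Your argument is cleaner and more elementary for the present statement, whose kernels all start at level one; what the paper's heavier machinery buys is exactly the extension to the between-level kernels $\overline{Z}_{k(0)}^{k(1)}$ that it later needs (e.g., for the variance of $\widehat{\mathbb{U}}_3$ in Section 4.1), where your uniform diagonal lower bound genuinely fails and a positive-measure-set argument of the paper's type is required. One small point to make explicit if you write this up: the lower diagonal bound should be justified by the partition-of-unity identity for the father wavelet together with the bounded number of overlapping translates (this also covers the periodized, folded, and boundary-corrected constructions), and the two-sided uniformity is what licenses pulling the factor $K_{(1,k_j)}(x_{j+1},x_{j+1})$ out of the remaining integrals in an $\asymp$ chain.
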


The following theorem is an immediate consequence of Theorem (\ref{var_multi})
obtained by taking $k_{j}=k^{\ast}=k\ $\ (which implies we use the father
wavelets at level log$_{2}(k)\ $but no mother wavelets.)

\begin{theorem}
\label{var_if}For all $\theta\in\Theta,$
\[
Var_{\theta}\left[  \mathbb{IF}_{1,\widetilde{\psi}_{k}}\left(  \theta\right)
\right]  \asymp\frac{1}{n}%
\]%
\[
Var_{\theta}\left[  \mathbb{IF}_{jj,\widetilde{\psi}_{k}}\left(
\theta\right)  \right]  \asymp\left(  \frac{1}{n}\max\left\{  1,\left(
\frac{k}{n}\right)  ^{j-1}\right\}  \right)
\]
$,$
\[
Var_{\theta}\left[  \mathbb{IF}_{m,\widetilde{\psi}_{k}}\left(  \theta\right)
\right]  \approx Var_{\widehat{\theta}}\left[  \widehat{\mathbb{IF}%
}_{m,\widetilde{\psi}_{k}}|\widehat{\theta}\right]  \asymp\frac{1}{n}%
\max\left\{  1,\left(  \frac{k}{n}\right)  ^{m-1}\right\}
\]

\end{theorem}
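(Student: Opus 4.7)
The plan is to reduce all three claims to Theorem \ref{var_multi}, combined with standard variance formulas for degenerate $U$-statistics and the orthogonality between degenerate $U$-statistics of different orders noted in Section \ref{hoif}.

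First, I would dispose of $\mathbb{IF}_{1,\widetilde{\psi}_k}$. By Theorem \ref{FOIF}, this is the sample average $n^{-1}\sum_i\{H(\widetilde b,\widetilde p)_i - \widetilde\psi_k\}$ of i.i.d.\ bounded variates, with strictly positive variance under $Ai)$--$Aiv)$; hence its variance is $\asymp 1/n$.

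Next, for the $j$th degenerate piece $\mathbb{IF}_{jj,\widetilde\psi_k}$ with $j\ge 2$, I would use the explicit closed-form kernel from Theorem \ref{DRHOIF}. Written schematically, its kernel has the structure
$$
\bigl[(H_1\widetilde P+H_2)\dot B\,\overline Z_k^T\bigr]_{i_1}\ \prod_{s=3}^{j}\bigl[(\dot P\dot B H_1\overline Z_k\overline Z_k^T)_{i_s}-\Sigma_\theta\bigr]\ \bigl[\overline Z_k(H_1\widetilde B+H_3)\dot P\bigr]_{i_2},
$$
up to a bounded correction for the mismatch between the centering $I_{k\times k}$ (the natural centering under $\widehat\theta$) and $\Sigma_\theta = E_\theta[\dot P\dot B H_1\overline Z_k\overline Z_k^T]$, which is controlled by $Aiv)$. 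Using independence across $i_1,\ldots,i_j$, the variance reduces to trace expressions involving matrix chains; passing from $\overline Z_k$ to the underlying orthonormal wavelet basis $\overline\varphi_k$ via \eqref{ON} and applying Theorem \ref{var_multi} with all bandwidths equal to $k$ gives
$$
Var_\theta\bigl[IF_{jj,\widetilde\psi_k,\overline i_j}\bigr]\asymp k^{j-1}.
$$
A standard Hoeffding-style computation for a degenerate $U$-statistic of order $j$ then yields $Var_\theta[\mathbb{IF}_{jj,\widetilde\psi_k}]\asymp k^{j-1}/n^j$.

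For the total variance I would use the orthogonality of degenerate $U$-statistics of different orders (established in the paragraph preceding Definition 2.1): the pieces $\mathbb{IF}_{1,\widetilde\psi_k}$ and $\mathbb{IF}_{jj,\widetilde\psi_k}$, $j=2,\ldots,m$, are uncorrelated, so
$$
Var_\theta[\mathbb{IF}_{m,\widetilde\psi_k}]=\frac{1}{n}+\sum_{j=2}^{m}\frac{k^{j-1}}{n^j}\asymp \frac{1}{n}\max\bigl\{1,(k/n)^{m-1}\bigr\}.
$$
Finally, the approximation $Var_\theta[\mathbb{IF}_{m,\widetilde\psi_k}(\theta)]\approx Var_{\widehat\theta}[\widehat{\mathbb{IF}}_{m,\widetilde\psi_k}|\widehat\theta]$ comes from the argument foreshadowed around Theorem \ref{3.19}: because every quantity in sight is a bounded continuous function of $(\widehat\theta,O)$ and $f(\cdot;\widehat\theta)$ is uniformly close to $f(\cdot;\theta)$ under $Aiv)$, the two variances differ by a $1+o_p(1)$ multiplicative factor.

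The main obstacle is the bookkeeping in the second step: one must verify that after the centering subtractions designed to make $IF_{jj,\widetilde\psi_k}$ degenerate, the dominant $k^{j-1}$ contribution to the kernel's $L_2$ norm still comes from the ``diagonal'' chain of $\overline\varphi_k(X_{i_s})\overline\varphi_k(X_{i_s})^T$ factors that Theorem \ref{var_multi} handles, while all centered cross terms are strictly lower order in $k$. This relies on the uniform $L^\infty$ bounds in $Aiv)$ and the near block-diagonality of $E_\theta[\overline Z_k\overline Z_k^T]$ in the wavelet basis, which together let one replace $\Sigma_\theta$ by $I_{k\times k}$ up to terms whose trace contributions are of order $o(k^{j-1})$.
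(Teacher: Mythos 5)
Your proposal is correct and follows essentially the same route as the paper, which offers no separate argument but simply declares the theorem an immediate consequence of Theorem \ref{var_multi} with all resolution levels set equal to $k$ (father wavelets only); your added bookkeeping — the degenerate-$U$-statistic variance formula $E[(\text{kernel})^2]/\binom{n}{j}$, Hoeffding orthogonality across orders, and passing from $\overline Z_k$ to $\overline\varphi_k$ via Eq.\ (\ref{ON}) — is exactly the intended reduction. Your per-order bound $k^{j-1}/n^j$ is in fact the sharper statement (the paper's $\max\{1,(k/n)^{j-1}\}/n$ form for an individual degenerate piece matters only when $k\ge n$), and it yields the same total, so there is no gap.
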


We now use Theorem $\left(  \ref{var_if}\right)  $ to derive the order of the
conditional variance of $\widehat{\psi}_{m,k}$ given $\widehat{\theta}$.

\begin{theorem}
\label{3.19}If sup$_{o\in\mathcal{O}}\left\vert f\left(  o;\widehat{\theta
}\right)  -f\left(  o;\theta\right)  \right\vert \rightarrow0\ $as
$||\widehat{\theta}-\theta||\rightarrow0,$ then for a fixed $m,$
\begin{align*}
Var_{\theta}\left[  \widehat{\mathbb{\psi}}_{m,\widetilde{\psi}_{k}%
}|\widehat{\theta}\right]   &  =Var_{\theta}\left[  \widehat{\mathbb{IF}%
}_{m,\widetilde{\psi}_{k}}|\widehat{\theta}\right] \\
&  =Var_{\widehat{\theta}}\left[  \widehat{\mathbb{IF}}_{m,\widetilde{\psi
}_{k}}|\widehat{\theta}\right]  \left(  1+o_{p}\left(  1\right)  \right) \\
&  \asymp\left(  \frac{1}{n}\max\left\{  1,\left(  \frac{k}{n}\right)
^{m-1}\right\}  \right)
\end{align*}

\end{theorem}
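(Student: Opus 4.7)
The first equality $Var_\theta[\widehat{\psi}_{m,k}|\widehat{\theta}] = Var_\theta[\widehat{\mathbb{IF}}_{m,\widetilde{\psi}_k}|\widehat{\theta}]$ is immediate because $\widehat{\psi} = \psi(\widehat{\theta})$ is a function of the training sample alone, hence constant given $\widehat{\theta}$. The final asymptotic order $\asymp \tfrac{1}{n}\max\{1,(k/n)^{m-1}\}$ is just Theorem \ref{var_if} applied at the (conditioned-on) parameter value $\widehat{\theta}$. All of the work lies in the middle identity $Var_\theta[\widehat{\mathbb{IF}}_{m,\widetilde{\psi}_k}|\widehat{\theta}] = Var_{\widehat{\theta}}[\widehat{\mathbb{IF}}_{m,\widetilde{\psi}_k}|\widehat{\theta}](1+o_p(1))$, which I plan to obtain via a uniform change of measure.

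Conditional on the training sample, the analysis observations $O_{i_1},\ldots,O_{i_m}$ are i.i.d.\ from $F(\cdot;\theta)$ and the $U$-statistic kernels are fixed functions $g(\cdot\,;\widehat{\theta})$. Writing the Radon--Nikodym factor $\rho(o) = f(o;\theta)/f(o;\widehat{\theta})$,
\[
E_\theta[g(O_{i_1},\ldots,O_{i_m})\mid\widehat{\theta}] \;=\; E_{\widehat{\theta}}\!\left[g(O_{i_1},\ldots,O_{i_m})\prod_{l=1}^{m}\rho(O_{i_l})\;\middle|\;\widehat{\theta}\right].
\]
The sup-norm hypothesis on densities, combined with the lower bound on $\widehat{g}$ (hence $\widehat{f}_X$) from Aiv) and the parallel lower bound on the conditional density of $O|X$ inherited from the model's regularity, gives $\sup_{o\in\mathcal{O}}|\rho(o)-1| = \epsilon_n = o_p(1)$. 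Since $m$ is fixed, $\prod_{l=1}^{m}\rho(O_{i_l}) = 1 + O_p(\epsilon_n)$ uniformly on $\mathcal{O}^m$. Applying this identity to $g^2$ yields the multiplicative comparison $E_\theta[g^2|\widehat{\theta}] = E_{\widehat{\theta}}[g^2|\widehat{\theta}](1+O_p(\epsilon_n))$ regardless of the size of $E_{\widehat{\theta}}[g^2|\widehat{\theta}]$.

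Specializing to the decomposition $\widehat{\mathbb{IF}}_{m,\widetilde{\psi}_k} = \widehat{\mathbb{IF}}_{1,\widetilde{\psi}_k} + \sum_{j=2}^{m}\widehat{\mathbb{IF}}_{j,j,\widetilde{\psi}_k}$ from Theorem \ref{DRHOIF}, I will use that under $\widehat{\theta}$ these summands are mean-zero and pairwise uncorrelated (degeneracy of distinct orders), so the $\widehat{\theta}$-variance decomposes additively across $j$. The change-of-measure identity transports each second moment to $(1+o_p(1))$ of its $\widehat{\theta}$-value; a Cauchy--Schwarz bound on the (now nonzero under $\theta$) cross-covariances shows they are absorbed into the leading factor by the order-by-order variance bound of Theorem \ref{var_if}; and the squared-bias correction $(E_\theta[\widehat{\mathbb{IF}}_{m,\widetilde{\psi}_k}|\widehat{\theta}])^{2} = EB_m^{2}$ is controlled by Theorem \ref{EBrate} to be of strictly smaller order than $\frac{1}{n}\max\{1,(k/n)^{m-1}\}$ in the regimes of $k=k(n)$ of interest. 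The main obstacle is the amplification of the kernel by the wavelet Gram-matrix normalization in Eq.~(\ref{ON}): the $L^2$ mass of $\widehat{IF}_{m,m,\widetilde{\psi}_k,\overline{i}_m}$ diverges with $k$, so an additive closeness of densities would be insufficient. The reason the sup-norm assumption is exactly what the theorem needs is that uniform \emph{multiplicative} closeness of $\rho$ to $1$ transports ratios of expectations regardless of scale.
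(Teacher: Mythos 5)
Your change-of-measure step is sound, but it only covers the part of the comparison that is not the real difficulty, and it is in fact how the paper treats that part: since each kernel depends on a fixed number of observations, the ratio $\rho(o)=f(o;\theta)/f(o;\widehat{\theta})$ satisfies $\sup_{o}\left\vert \rho(o)-1\right\vert =o_{p}(1)$ and one gets $E_{\theta}[(\widehat{IF}_{j,j}^{\left(  s\right)  })^{2}\mid\widehat{\theta}]=E_{\widehat{\theta}}[(\widehat{IF}_{j,j}^{\left(  s\right)  })^{2}]\left(  1+o_{p}(1)\right)  $ for the fully overlapping index pairs. The gap is everything that is \emph{zero} under $\widehat{\theta}$ but not under $\theta$: within each order $j$, the covariances of kernel evaluations sharing $1\leq t<j$ indices, and across orders, $Cov_{\theta}(\widehat{\mathbb{IF}}_{j_{1},j_{1},\widetilde{\psi}_{k}},\widehat{\mathbb{IF}}_{j_{2},j_{2},\widetilde{\psi}_{k}})$. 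Both vanish under $\widehat{\theta}$ by degeneracy, so multiplicative transport from $\widehat{\theta}$ gives nothing (the base value is $0$), and your Cauchy--Schwarz fallback combined with the order-by-order bound of Theorem \ref{var_if} is off by powers of $n$: a pair of $j$-th order kernels sharing $t$ indices enters the variance with weight $n^{-t}$, while Cauchy--Schwarz only bounds its covariance by $E[(\widehat{IF}_{j,j}^{\left(  s\right)  })^{2}]\asymp k^{j-1}$ (Theorem \ref{var_multi}); already for $t=1$, $j=m$, $k\asymp n$ this gives $k^{m-1}/n\asymp n^{m-2}$ against a target of order $1/n$. Nor does writing the covariance under $\widehat{\theta}$ with the density ratio and extracting $\sup\left\vert \prod\rho-1\right\vert =o_{p}(1)$ help: that only upgrades $O(k^{j-1})$ to $o(k^{j-1})$, and a rate-free $o_{p}(1)$ (the hypothesis supplies no rate for $\sup_{o}\left\vert f(o;\widehat{\theta})-f(o;\theta)\right\vert $) cannot bridge a polynomial-in-$n$ discrepancy.

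What is needed, and what the paper's appendix devotes most of its effort to, is the refined estimate $E_{\theta}[\{E_{\theta}(\widehat{IF}_{j,j,\overline{i}_{j}}\mid O_{i_{s_{1}}},\ldots,O_{i_{s_{t}}})\}^{2}]=o(k^{t-1})$ for all $1\leq t<j$ (the two propositions preceding the proof of Lemma \ref{3.19_lemma}), proved by induction from the explicit kernels of Theorem \ref{DRHOIF}: conditioning on a strict subset of arguments contracts the kernel against quantities such as $\widehat{E}[\delta g\,\widehat{Q}^{2}\overline{Z}_{k}\overline{Z}_{k}^{T}]$ and projections of $(P-\widehat{P})/\dot{P}$ onto the span of $\widehat{Q}\overline{Z}_{k}$, whose norms are driven by $\left\Vert \delta g\right\Vert _{\infty}$, $\left\Vert P-\widehat{P}\right\Vert _{\infty}$, $\left\Vert B-\widehat{B}\right\Vert _{\infty}=o_{p}(1)$, with Theorem \ref{var_multi} supplying the $k$-scaling; Lemma \ref{3.19_lemma} then assembles these estimates together with the exact combinatorial near-cancellation between the disjoint-index products and the squared mean, which leaves only $O(n^{-1}\{E_{\theta}[\widehat{IF}_{j,j}^{\left(  s\right)  }]\}^{2})$. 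On that last point, note that $E_{\theta}[\widehat{\mathbb{IF}}_{m,\widetilde{\psi}_{k}}\mid\widehat{\theta}]$ equals $EB_{m}+\widetilde{\psi}_{k}(\theta)-\psi(\widehat{\theta})$, not $EB_{m}$, so dismissing the mean via Theorem \ref{EBrate} (and only "in the regimes of $k(n)$ of interest", a restriction the theorem does not impose) is not the correct accounting. Without an argument delivering the $o(k^{t-1})$ bound --- which exploits the structure of the estimated influence functions rather than just the density-sup hypothesis --- neither the middle $(1+o_{p}(1))$ identity nor even the stated order of $Var_{\theta}[\widehat{\mathbb{IF}}_{m,\widetilde{\psi}_{k}}\mid\widehat{\theta}]$ follows.
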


The proof is in the appendix.

For a given $m,$ the estimator $\widehat{\psi}_{m,k_{opt}\left(  m\right)  }$
that minimizes the maximum asymptotic MSE over the model $\mathcal{M}\left(
\Theta\right)  $ $\ $defined by $Ai)-Aiv)$ among the candidates $\widehat{\psi
}_{m,k}$ uses the value $k_{opt}\left(  m\right)  \equiv k_{opt}\left(
m,n\right)  $ of $k$ that equates the order $\frac{1}{n}\max\left\{  1,\left(
\frac{k}{n}\right)  ^{m-1}\right\}  $ of $Var\left[  \widehat{\mathbb{\psi}%
}_{m,\widetilde{\psi}_{k}}|\widehat{\theta}\right]  \ $to the order
\begin{align*}
&  \max\left[  \left\{  TB_{k}\right\}  ^{2},\left\{  EB_{m}\left(
\theta\right)  \right\}  ^{2}\right]  =\\
&  \max\left[
\begin{array}
[c]{c}%
\left(  \frac{\log n}{n}\right)  ^{\frac{2\left(  m-1\right)  \beta_{g}%
}{d+2\beta_{g}}}n^{-\left(  \frac{2\beta_{b}}{d+2\beta_{b}}+\frac{2\beta_{p}%
}{d+2\beta_{p}}\right)  },\\
k^{-\frac{2\left(  \beta_{b}+\beta_{p}\right)  }{d}}%
\end{array}
\right]
\end{align*}
of the maximal squared bias. \ The estimator $\widehat{\psi}_{m_{opt},k_{opt}%
}\equiv\widehat{\psi}_{m_{opt},k_{opt}\left(  m_{opt}\right)  }$ that
minimizes the maximum asymptotic MSE over the model $\mathcal{M}\left(
\Theta\right)  $ among all candidates $\widehat{\psi}_{m,k}\ $is the estimator
$\widehat{\psi}_{m,k_{opt}\left(  m,n\right)  }$ which minimizes $\frac{1}%
{n}\max\left(  1,\left(  \frac{k_{opt}\left(  m,n\right)  }{n}\right)
^{m-1}\right)  .$

\subsubsection{Distribution Theory and Confidence Interval
Construction\label{DR_CI_Section}}

We derive a consistent estimator of the variance and give the asymptotic
distribution of $\widehat{\psi}_{m,k}\ $\ for any model and functional
satisfying $Ai)-Aiv).$ Let $z_{\alpha}$ be the upper $\alpha-$quantile of a
standard normal, i.e. a $N\left(  0,1\right)  ,$ distribution.

\begin{theorem}
\textbf{\label{HOCI}}:

a) Letting $\widehat{\mathbb{W}}_{1,\widetilde{\psi}_{k}}^{2}=\ n^{-1}%
\mathbb{V}\left[  \left\{  \widehat{IF}_{1,\widetilde{\psi}_{k},i_{1}%
}\right\}  ^{2}\right]  $,
\[
\ \widehat{\mathbb{W}}_{jj,\widetilde{\psi}_{k}}^{2}=\dbinom{n}{j}%
^{-1}\mathbb{V}\left[  \left(  \widehat{IF}_{j,j,\widetilde{\psi}_{k}\left(
\cdot\right)  }^{\left(  s\right)  }\right)  ^{2}\right]  ,
\]
for $j\geq2,$ and
\[
\widehat{\mathbb{W}}_{m,\widetilde{\psi}_{k}}^{2}=\widehat{\mathbb{W}%
}_{1,\widetilde{\psi}_{k}}^{2}+\sum_{j=2}^{m}\widehat{\mathbb{W}%
}_{jj,\widetilde{\psi}_{k}}^{2},
\]
where $\widehat{IF}_{j,j,\widetilde{\psi}_{k}\left(  \cdot\right)  }^{\left(
s\right)  }$ is the symmetric kernel of ${}\widehat{\mathbb{IF}}%
_{jj,\widetilde{\psi}_{k}\left(  \cdot\right)  }.$

we have$,$
\begin{align*}
\widehat{E}\left[  \widehat{\mathbb{W}}_{1,\widetilde{\psi}_{k}}^{2}\right]
&  =\widehat{Var}\left[  \widehat{\mathbb{IF}}_{1,\widetilde{\psi}_{k}%
}|\widehat{\theta}\right] \\
\widehat{E}\left[  \widehat{\mathbb{W}}_{jj,\widetilde{\psi}_{k}}^{2}\right]
&  =\widehat{Var}\left[  \widehat{\mathbb{IF}}_{jj,\widetilde{\psi}_{k}%
}|\widehat{\theta}\right]  ,\\
\widehat{E}\left[  \widehat{\mathbb{W}}_{m,\widetilde{\psi}_{k}}^{2}\right]
&  =\widehat{Var}\left[  \widehat{\mathbb{IF}}_{m,\widetilde{\psi}_{k}%
}|\widehat{\theta}\right]
\end{align*}
where $\widehat{Var}\left[  \cdot\right]  =Var_{\widehat{\theta}}\left[
\cdot\right]  .$

b) Conditional on the training sample,
\[
\left\{  \frac{1}{n}\max\left\{  1,\left(  \frac{k_{opt}\left(  m,n\right)
}{n}\right)  ^{m-1}\right\}  \right\}  ^{-1/2}\left\{  \widehat{\psi
}_{m,k_{opt}\left(  m,n\right)  }-E\left[  \widehat{\psi}_{m,k_{opt}\left(
m,n\right)  }|\widehat{\theta}\right]  \right\}
\]
converges uniformly for $\theta\in\Theta$ to a normal distribution with finite
variance as $n\rightarrow\infty$. The asymptotic variance is uniformly
consistently estimated by
\[
\left\{  \frac{1}{n}\max\left\{  1,\left(  \frac{k}{n}\right)  ^{m-1}\right\}
\right\}  ^{\ ^{-1}}\widehat{\mathbb{W}}_{m,\widetilde{\psi}_{k_{opt}\left(
m,n\right)  }}^{2}%
\]
Thus
\[
\left\{  \widehat{\psi}_{m,k_{opt}\left(  m,n\right)  }-E\left[
\widehat{\psi}_{m,k_{opt}\left(  m,n\right)  }|\widehat{\theta}\right]
\right\}  /\widehat{\mathbb{W}}_{m,\widetilde{\psi}_{k_{opt}\left(
m,n\right)  }}%
\]
is converging in distribution to a standard normal distribution.

c) Define the interval $C_{m,k}=\widehat{\psi}_{m,k}\pm z_{\alpha
}\widehat{\mathbb{W}}_{m,\widetilde{\psi}_{k}}.$ Suppose $k_{opt}\left(
m,n\right)  =n^{\rho_{_{opt}\left(  m,n\right)  }}.$ Then for $k^{\ast
}=n^{\rho_{\ }^{\ast}},$ $\rho_{\ }^{\ast}>\rho_{_{opt}\left(  m,n\right)
},$
\[
sup_{\theta\in\Theta}\left[  \frac{E_{\theta}\left[  \widehat{\psi}%
_{2,k^{\ast}}|\widehat{\theta}\right]  }{\sqrt{Var_{\theta}\left[
\widehat{\psi}_{2,k^{\ast}}|\widehat{\theta}\right]  }}\right]  =o_{p}\left(
1\right)
\]
and $\left\{  \widehat{\psi}_{m,k^{\ast}}-\psi\left(  \theta\right)  \right\}
/\widehat{\mathbb{W}}_{m,\widetilde{\psi}_{k^{\ast}}}$ converges uniformly in
$\theta\in\Theta$ to a $N\left(  0,1\right)  $. \ Moreover, $C_{m,k^{\ast}}$
is a conservative uniform asymptotic $\left(  1-\alpha\right)  $ confidence
interval for $\psi\left(  \theta\right)  $.

d) Suppose we could derive a constant $C_{bias}$ and a constant $N^{\ast}%
\ $\ such that
\begin{align*}
&  \sup_{\theta}\left\vert E_{\theta}\left[  \left\{  \widehat{\psi
}_{m,k_{opt}\left(  m,n\right)  }-\psi\left(  \theta\right)  \right\}
\right]  \right\vert \\
&  =\sup_{\theta}\left\vert \left\{  TB_{k_{opt}\left(  m,n\right)  }\left(
\theta\right)  +EB_{m}\left(  \theta\right)  \right\}  \right\vert \\
&  \leq C_{bias}\left\{  \frac{1}{n}\max\left\{  1,\left(  \frac
{n^{\rho_{_{opt}\left(  m,n\right)  }}}{n}\right)  ^{m-1}\right\}  \right\}
^{1/2}%
\end{align*}
\ for $n>N^{\ast}$ . Then \
\begin{align*}
&  {}BC_{m,k_{opt}\left(  m,n\right)  }\\
&  =\widehat{\psi}_{m,k_{opt}\left(  m,n\right)  }\pm\left\{  z_{\alpha
}\widehat{\mathbb{W}}_{m,\widetilde{\psi}_{k^{\ast}}}+C_{bias}\left\{
\frac{1}{n}\max\left\{  1,\left(  \frac{n^{\rho_{_{opt}\left(  m,n\right)  }}%
}{n}\right)  ^{m-1}\right\}  \right\}  ^{1/2}\right\}
\end{align*}
is a conservative uniform asymptotic $\left(  1-\alpha\right)  $ confidence
interval for $\psi\left(  \theta\right)  .$
\end{theorem}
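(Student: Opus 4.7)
The plan is to establish the four parts by leveraging the decomposition $\widehat{\psi}_{m,k}-\psi(\theta)=TB_k(\theta)+EB_m(\theta)+\big(\widehat{\mathbb{IF}}_{m,\widetilde{\psi}_k}-E[\widehat{\mathbb{IF}}_{m,\widetilde{\psi}_k}\mid\widehat{\theta}]\big)$, where the last term is a sum of conditionally mean-zero, mutually orthogonal degenerate U-statistics of orders $1,\ldots,m$ computed on the analysis sample given the training sample.

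For part (a), since $\widehat{\mathbb{IF}}_{1,\widetilde{\psi}_k}$ is an i.i.d.\ sum of $n$ mean-zero (given $\widehat{\theta}$) terms under $\widehat{\theta}$, $\widehat{E}[\widehat{\mathbb{W}}_{1,\widetilde{\psi}_k}^2]=\widehat{Var}[\widehat{\mathbb{IF}}_{1,\widetilde{\psi}_k}\mid\widehat{\theta}]$ is immediate. For $j\ge 2$, the standard identity $\operatorname{Var}[\mathbb{V}_n[b_j^{sym}]]=\binom{n}{j}^{-1}E\{[b_j^{sym}(O_{i_1},\ldots,O_{i_j})]^2\}$ for degenerate symmetric U-statistics with mean zero (under $\widehat{\theta}$) gives the claimed formula; summing and using orthogonality of the degenerate components of different orders yields the identity for $\widehat{\mathbb{W}}_{m,\widetilde{\psi}_k}^2$.

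For part (b), the key step is a conditional central limit theorem for sums of degenerate U-statistics with kernels depending on a high-dimensional $\widehat{\theta}$. First I would show that, conditional on the training sample, $\widehat{\mathbb{IF}}_{m,\widetilde{\psi}_{k_{opt}}}$ properly scaled by $\widehat{\mathbb{W}}_{m,\widetilde{\psi}_{k_{opt}}}$ is asymptotically $N(0,1)$ by applying a martingale CLT or a Hoeffding-decomposition-based CLT for U-statistics of growing order (e.g., de Jong's CLT for degenerate U-statistics, verified via the fourth moment condition on the kernels). The condition $\sup_{o\in\mathcal{O}}|f(o;\widehat\theta)-f(o;\theta)|\to 0$ combined with Theorem \ref{3.19} is what will allow the transfer of variance calculations between $\theta$ and $\widehat{\theta}$, giving $Var_\theta[\widehat{\mathbb{IF}}_{m,\widetilde{\psi}_k}\mid\widehat{\theta}]=Var_{\widehat{\theta}}[\widehat{\mathbb{IF}}_{m,\widetilde{\psi}_k}\mid\widehat{\theta}](1+o_p(1))$. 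Uniform consistency of $\widehat{\mathbb{W}}_{m,\widetilde{\psi}_k}^2$ for the true conditional variance would then be obtained by bounding the variance of each $\widehat{\mathbb{W}}_{jj,\widetilde{\psi}_k}^2$ relative to its expectation using the wavelet moment bounds in Theorem \ref{var_multi}, combined with a uniform law over $\Theta$ applied to the relevant fourth-order U-statistic functionals. The main obstacle here is that the kernels depend on $k=k(n)$ diverging with $n$, so one must verify the Lindeberg-type condition uniformly over $\theta\in\Theta$; I expect this to be the technically hardest step and will rely on the Hölder ball uniform bounds of Aiv) together with the $L_\infty$ control on $\widehat{b},\widehat{p},\widehat{g}$.

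For parts (c) and (d), by Theorems \ref{TBrate} and \ref{EBrate} the bias at $k^*=n^{\rho^*}$ is bounded by $O_p(k^{*-(\beta_b+\beta_p)/d})+O_p(EB_m)$, while by Theorem \ref{var_if} the standard deviation $\widehat{\mathbb{W}}_{m,\widetilde{\psi}_{k^*}}$ is of order $n^{-1/2}\max\{1,(k^*/n)^{(m-1)/2}\}$. Because $\rho^*>\rho_{opt}(m,n)$ and $\rho_{opt}$ is defined so that the variance and the maximal squared bias are of equal order, choosing $k^*$ larger than $k_{opt}$ strictly increases the variance while strictly decreasing the truncation-bias component and leaves the estimation-bias component unchanged; straightforward algebra then gives $\sup_\theta|E_\theta[\widehat{\psi}_{m,k^*}-\psi(\theta)]|/\widehat{\mathbb{W}}_{m,\widetilde{\psi}_{k^*}}=o_p(1)$, which combined with part (b) yields the claimed convergence and the validity of $C_{m,k^*}$. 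Part (d) is then immediate: under the hypothesized explicit bias bound $C_{bias}$, adding $C_{bias}$ times the standard error to the half-width produces an interval whose miscoverage is asymptotically bounded by $\alpha$ uniformly over $\theta\in\Theta$, since by construction the sum of $z_\alpha\widehat{\mathbb{W}}_{m,\widetilde{\psi}_{k_{opt}}}$ and the bias bound upper-bounds $z_\alpha\widehat{\mathbb{W}}_{m,\widetilde{\psi}_{k_{opt}}}+|TB_{k_{opt}}+EB_m|$, and the centered standardized estimator is asymptotically $N(0,1)$ by part (b).
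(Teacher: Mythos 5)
The paper itself contains no self-contained proof of this theorem: part (a) is dismissed as ``an easy calculation'' --- your argument via the identity $Var_{\widehat{\theta}}\bigl[\mathbb{V}_{n}[b_{j}^{sym}]\bigr]=\binom{n}{j}^{-1}E_{\widehat{\theta}}\bigl[(b_{j}^{sym})^{2}\bigr]$ for kernels that are mean zero and degenerate under $\widehat{\theta}$, plus orthogonality of degenerate components of different orders, is exactly that calculation --- while the asymptotic normality in part (b) is not proved in the paper at all but imported from the companion work (Robins et al., 2007) on degenerate $U$-statistics with kernels depending on $n$; parts (c) and (d) are then read off from Theorems \ref{TBrate}, \ref{EBrate}, \ref{var_if} and \ref{3.19}, much as you do. So for (a), (c), (d) you follow the paper's route. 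Part (b) is where your proposal stops short of a proof: invoking de Jong's CLT or ``a martingale CLT'' off the shelf does not settle the matter, because what is needed is a CLT, conditional on the training sample and uniform over $\theta\in\Theta$, for a sum of degenerate $U$-statistics of orders $1,\dots,m$ whose kernels are built from $k(n)$-dimensional wavelet projection kernels with $k(n)$ possibly exceeding $n$ (up to order $n^{2}$); verifying the fourth-moment/Lindeberg-type conditions for these $n$-dependent kernels, uniformly over the H\"{o}lder balls, is precisely the new content the paper delegates to the companion paper, and your sketch explicitly flags it as the hardest step without carrying it out. Likewise, the uniform consistency of $\widehat{\mathbb{W}}_{m,\widetilde{\psi}_{k}}^{2}$ requires fourth-moment bounds on products of the projection kernels (of the type $E\bigl[K_{k}(X_{1},X_{2})^{4}\bigr]\asymp k^{3}$), which go beyond the second-moment bounds of Theorem \ref{var_multi}; they can be obtained by the same localization argument, but you do not supply them.

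There is also an inaccuracy in your argument for part (c): choosing $k^{\ast}=n^{\rho^{\ast}}$ with $\rho^{\ast}>\rho_{opt}(m,n)$ does \emph{not} ``strictly increase the variance'' in the regime $k_{opt}<k^{\ast}\leq n$, where the conditional variance stays of order $n^{-1}$. In that regime the claim that the standardized bias is $o_{p}(1)$ rests on the bias being of strictly smaller order than $n^{-1/2}$: the truncation bias indeed drops polynomially once $\rho^{\ast}>\rho_{opt}$, but the estimation bias $EB_{m}$ does not depend on $k$, so your ``straightforward algebra'' implicitly needs either $EB_{m}=o_{p}(n^{-1/2})$ or $k_{opt}\geq n$ (in which case the variance genuinely grows with $k$). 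You should separate these two regimes and make the role of $EB_{m}$ explicit; the paper glosses over the same point, but your write-up asserts a mechanism (variance inflation) that is false for $k^{\ast}\leq n$. With that caveat, and granting the external CLT for part (b), your treatment of (a), (c) and (d) is in line with the paper's.
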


Part a) of the theorem is an easy calculation. The asymptotic normality of
$\widehat{\psi}_{m,k_{opt}\left(  m,n\right)  }$ is based on new results on
the asymptotic distribution of higher order $U-statistics$ with kernels
depending on $n$ to be published elsewhere (Robins et al, 2007).

Part c of the theorem implies we obtain a conservative uniform asymptotic
$\left(  1-\alpha\right)  $ confidence interval for any value of $\rho
_{\ }^{\ast}\ $\ exceeding $\rho_{_{opt}\left(  m,n\right)  }.$ However, for
the actual fixed sample size of \ our study, say $n=$5000$,$\ there is no
guarantee the interval of part c based on given difference $\rho_{\ }^{\ast
}-\rho_{_{opt}\left(  m,n\right)  },$ say $.3,$ will provide conservative
finite sample coverage.

Because of this difficulty, a better approach, described in part d, would be
to determine a constant $C_{bias}$ that can be used to bound the maximal bias
under the model at a sample sizes exceeding $N^{\ast},$ with $N^{\ast} $ no
greater than the actual fixed sample size $n$ of the study. Then the interval
$BC_{m,k_{opt}\left(  m,n\right)  }$ will be a honest conservative finite
sample $1-\alpha$\ confidence interval, provided that $\widehat{\psi
}_{m,k_{opt}\left(  m,n\right)  }$\ has nearly converged to its normal limit
at sample size $n.$\ Unfortunately as yet we do not know how to determine the
constants $C_{bias}$\ and $N^{\ast}$ of part d as a function of our model and
of our initial estimator $\widehat{\theta}.$ This\ is an important open problem.

\subsubsection{Models of Increasing Dimension and Multi-Robustness\label{mr}}

\textbf{A Model of Increasing Dimension:} The previous results can also be
used for the analysis of models whose dimension increases with sample size. In
fact, consider the $\mathcal{M}\left(  \Theta_{n^{\eta}}\right)  ,$ $\eta
\ $known, that differs from model $\mathcal{M}\left(  \Theta\right)  $ in
that, rather than assuming $b\left(  x\right)  $ and $p\left(  x\right)  $
live in particular H\"{o}lder balls, we instead assume the working models of
Eqs. \ref{worka} and \ref{workb} are precisely true for $k=n^{\eta},$ so
$\psi\left(  \theta\right)  \equiv\widetilde{\psi}_{n^{\eta}}\left(
\theta\right)  $ and the dimensions of $b\left(  x\right)  $ and $p\left(
x\right)  $ increase as $n^{\eta}$. Valid point and interval estimation for
$\widetilde{\psi}_{n^{\eta}}\left(  \theta\right)  \ $can still be based on
the estimators $\widehat{\psi}_{m,k}$ except now (i) there is truncation bias
only when $k<n^{\eta}$, (ii) the variance remains of the order of $\frac{1}%
{n}\max\left(  1,\left(  \frac{k}{n}\right)  ^{m-1}\right)  ,$ and (iii) the
estimation and trunction bias (when it exists)\ orders will be determined by
any additional complexity reducing restrictions placed on the fraction of
non-zero components or on the rate of decay of the components of the vectors
$\widetilde{\overline{\eta}}_{n^{\eta}}\left(  \theta\right)  \ $and
$\widetilde{\overline{\alpha}}_{n^{\eta}}\left(  \theta\right)  ,$ and, for
estimation bias, by $\beta_{g}$ as well. As a consequence, $m_{opt}$ and
$k_{opt}$ under model $\mathcal{M}\left(  \Theta_{n^{\eta}}\right)  $ will
differ from their values under model $\mathcal{M}\left(  \Theta\right)  .$
Note we need not take $k=n^{\eta}$ as we did in the heurisitic discussion
following Remark $\ref{semi}.$ Indeed $\widehat{\psi}_{m_{_{best}}}$ in that
discussion corresponds to the estimator in the class $\widehat{\psi
}_{m,k=n^{\eta}}$ with the fastest rate of convergence. In general,
$\widehat{\psi}_{m_{_{best}}}$ will have convergence rate slower than
$\widehat{\psi}_{m_{opt},k_{opt}}.$ Furthermore, the discussion in Section
4.1.1 implies that, when $n^{\eta}>>n$ and the minimax rate for estimation of
$\psi\left(  \theta\right)  $ is slower than $n^{-1/2},$ even $\widehat{\psi
}_{m_{opt},k_{opt}}$ will typically fail to converge at the minimax rate when
complexity reducing restrictions have been imposed on $\widetilde{\overline
{\eta}}_{n^{\eta}}\left(  \theta\right)  \ $and $\widetilde{\overline{\alpha}%
}_{n^{\eta}}\left(  \theta\right)  $.

\textbf{Multi-Robustness and a Practical Data Analysis Strategy:
}\label{multirob copy(1)} Conditional on $\widehat{\theta},$ for $m\geq2,$
$EB_{m} $ is zero and thus estimator $\widehat{\psi}_{m,k}$ is unbiased for
$\widetilde{\psi}_{k}$ if $\widehat{p}\left(  \cdot\right)  =p\left(
\cdot\right)  ,\widehat{b}\left(  \cdot\right)  =b\left(  \cdot\right)  ,$ or
$\widehat{g}\left(  \cdot\right)  =g\left(  \cdot\right)  .$ We refer to
$\widehat{\psi}_{m,k}$ as triply-robust for $\widetilde{\psi}_{k},$
generalizing Robins and Rotnitzky (2001) and van der Laan and Robins (2003)
who referred to $\widehat{\psi}_{1}$ as doubly-robust because of its being
unbiased for $\widetilde{\psi}_{k}$ if either $\widehat{p}\left(
\cdot\right)  =p\left(  \cdot\right)  $ or $\widehat{b}\left(  \cdot\right)
=b\left(  \cdot\right)  .$ \ In fact, for $m\geq3,$ we can construct a
modified estimator $\widehat{\psi}_{m,k}^{\operatorname{mod}}$ that is
$m+1-fold$ robust as follows. Let $\widehat{g}_{s}\left(  \cdot\right)  ,$
$s=3,...,m,$ denote $m-2$ additional initial estimators of $g\left(
\cdot\right)  \ $that differ from one another and from $\widehat{g}\left(
\cdot\right)  .$ Define $\widehat{\psi}_{m,k}^{\operatorname{mod}%
}=\widehat{\psi}+\widehat{\mathbb{IF}}_{1,\widetilde{\psi}_{k}}%
+\widehat{\mathbb{IF}}_{22,\widetilde{\psi}_{k},\overline{i}_{j}}+\sum
_{j=3}^{m}\widehat{\mathbb{IF}}_{jj,\widetilde{\psi}_{k}}^{\operatorname{mod}%
},$ where$\mathbb{\ }$
\begin{align*}
\widehat{IF}_{jj,\widetilde{\psi}_{k},\overline{i}_{j}}^{\operatorname{mod}}
&  =\left(  -1\right)  ^{j-1}\left[  \left(  H_{1}\widehat{P}+H_{2}\right)
\dot{B}\overline{Z}_{k}^{T}\right]  _{i_{1}}\left\{  \left(  \dot{P}\dot
{B}H_{1}\overline{Z}_{k}\overline{Z}_{k}^{T}\right)  _{i_{2}}-I_{k\times
k}\right\} \\
\times &  \left[
\begin{array}
[c]{c}%
%TCIMACRO{\dprod \limits_{s=3}^{j-1}}%
%BeginExpansion
{\displaystyle\prod\limits_{s=3}^{j-1}}
%EndExpansion
\left\{  \widehat{E}_{s}\left[  \dot{P}\dot{B}H_{1}\overline{Z}_{k}%
\overline{Z}_{k}^{T}\right]  \right\}  ^{-1}\\
\left\{  \left(  \dot{P}\dot{B}H_{1}\overline{Z}_{k}\overline{Z}_{k}%
^{T}\right)  _{i_{s}}-\widehat{E}_{s}\left[  \dot{P}\dot{B}H_{1}\overline
{Z}_{k}\overline{Z}_{k}^{T}\right]  \right\}
\end{array}
\right] \\
&  \times\left\{  \widehat{E}_{j}\left[  \dot{P}\dot{B}H_{1}\overline{Z}%
_{k}\overline{Z}_{k}^{T}\right]  \right\}  ^{-1}\times\left[  \overline{Z}%
_{k}\left(  H_{1}\widehat{B}+H_{3}\right)  \dot{P}\right]  _{i_{j}}%
\end{align*}
with $\widehat{E}_{s}$ defined like $\widehat{E},$ except with $\widehat{g}%
_{s}\left(  \cdot\right)  $ replacing $\widehat{g}\left(  \cdot\right)  .$ In
the appendix, we prove that $EB_{m}^{\operatorname{mod}}=E\left[
\widehat{\psi}_{m,k}^{\operatorname{mod}}\right]  -\widetilde{\psi}_{k}$ is
\begin{equation}
\left(  -1\right)  ^{m-1}\left\{
\begin{array}
[c]{c}%
E\left[  \dot{B}\dot{P}H_{1}\left(  \frac{P-\widehat{P}}{\dot{P}}\right)
\overline{Z}_{k}^{T}\right]  \left\{  E\left[  \dot{B}\dot{P}H_{1}\overline
{Z}_{k}\overline{Z}_{k}^{T}\right]  -I_{k\times k}\right\} \\
\times%
%TCIMACRO{\dprod \limits_{s=3}^{m}}%
%BeginExpansion
{\displaystyle\prod\limits_{s=3}^{m}}
%EndExpansion
\left\{  \widehat{E}_{s}\left[  \dot{B}\dot{P}H_{1}\overline{Z}_{k}%
\overline{Z}_{k}^{T}\right]  \right\}  ^{-1}\\
\times\left\{  E\left[  \dot{B}\dot{P}H_{1}\overline{Z}_{k}\overline{Z}%
_{k}^{T}\right]  -\widehat{E}_{s}\left[  \dot{B}\dot{P}H_{1}\overline{Z}%
_{k}\overline{Z}_{k}^{T}\right]  \right\} \\
\times\left\{  E\left[  \dot{B}\dot{P}H_{1}\overline{Z}_{k}\overline{Z}%
_{k}^{T}\right]  \right\}  ^{-1}E\left[  \dot{B}\dot{P}H_{1}\left(
\frac{B-\widehat{B}}{\dot{B}}\right)  \right]
\end{array}
\right\} \label{multirob_bias}%
\end{equation}
which is zero if $\widehat{p}\left(  \cdot\right)  =p\left(  \cdot\right)
,\widehat{b}\left(  \cdot\right)  =b\left(  \cdot\right)  ,$ $\widehat{g}%
\left(  \cdot\right)  =g\left(  \cdot\right)  ,$ or if any of the $m-2$
$\widehat{g}_{s}\left(  \cdot\right)  $ equals $g\left(  \cdot\right)  .$ (We
note that if $\widehat{p}\left(  \cdot\right)  =p\left(  \cdot\right)  $ or
$\widehat{b}\left(  \cdot\right)  =b\left(  \cdot\right)  ,$ $\psi
=\widetilde{\psi}_{k}$ and thus $\widehat{\psi}_{m,k}^{\operatorname{mod}}$
and $\widehat{\psi}_{m,k}$ are unbiased for $\psi$ .)

In settings where the dimension $d$ of $X$ is so large ( say $30-50)$ that the
above asymptotic results fail as a guide to the finite sample performance of
our procedures at the moderate sample sizes, say $n=500-5000,$ commonly found
in practice, one might consider, as a practical data analysis strategy, using
the $m+1-fold$ robust estimator $\widehat{\psi}_{m,k}^{\operatorname{mod}}%
\ $with $\widehat{p}\left(  \cdot\right)  ,\widehat{b}\left(  \cdot\right)
,\widehat{g}\left(  \cdot\right)  ,$ and the $\widehat{g}_{s}\left(
\cdot\right)  $ selected by cross-validation as in van der Laan and Dudoit
(2003). Specifically, the training sample is split into two random subsamples
- a candidate estimator subsample of size $n_{c}$ and a validation subsample
of size $n_{v},$ where both $n_{c}/n$ and $n_{v}/n$ are bounded away from $0$
as $n\rightarrow\infty.$ A large number ( e.g., $n^{3})$ candidate parametric
models of various dimensions and functional forms for $p,b,$ and $g$ are fit
to the candidate estimator subsample and the validation sample is used to find
the candidate estimators $\widehat{p}\left(  \cdot\right)  \ $and
$\widehat{b}\left(  \cdot\right)  $ for $p$ and $b $ and the $m-1$ candidate
estimators $\widehat{g}\left(  \cdot\right)  $ and $\widehat{g}_{s}\left(
\cdot\right)  ,$ $s=3,...,m,$ for $g$ with the smallest estimated risks (with
respect to an appropriate risk function such as squared error or
Kullback-Leibler.) An alternative approach would be to use the triply robust
estimator $\widehat{\psi}_{m,k}$ with $\widehat{g}\left(  \cdot\right)  $ the
candidate for $g$ with minimum estimated risk. We plan to explore through
simulation whether $\widehat{\psi}_{m,k}^{\operatorname{mod}}$ outperforms
$\widehat{\psi}_{m,k}$ in the setting of very high dimensional $X.$

\section{Rates of Convergence and Minimaxity\label{minimax_section}}

We consider a generic version in which we only assume a model and functional
satisfying $Ai)-Aiv).$ To examine efficiency issues, we first consider the
estimator $\widehat{\psi}_{1}$ based on the first order influence function and
sample splitting. Without loss of generality we assume $\beta_{p}\geq\beta
_{b}.$ (Otherwise simply interchange $\beta_{p}\ $and $\beta_{b}$ in what
follows.) \ It will be useful to consider the alternative parametrization
\begin{align*}
\beta &  =\frac{\beta_{p}+\beta_{b}}{2},\\
\Delta &  =\left(  \frac{\beta_{p}}{\beta_{b}}-1\right)  \geq0
\end{align*}
The (conditional) variance of $\widehat{\psi}_{1}$ is of the order of $1/n$
and the (conditional) bias of {}$\widehat{\psi}_{1}$ in estimating $\psi$ is
$O_{p}\left(  n^{-\left(  \frac{\beta_{b}}{d+2\beta_{b}}+\frac{\beta_{p}%
}{d+2\beta_{p}}\right)  }\right)  .$ If $\Delta=0$ and thus $\beta_{p}%
=\beta_{b}, $ the bias of ${}\widehat{\psi}_{1}$ is $n^{-\frac{2\beta
}{d+2\beta}\ }$ and $\widehat{\psi}_{1}$ is not $n^{1/2}-$ consistent for
$\psi$ when $\beta/d<1/2.$ At the other extreme, as $\Delta\rightarrow\infty,
$ i.e. $\beta_{b}\rightarrow0,$ the bias of $\widehat{\psi}_{1}$ is
$n^{-\frac{2\beta}{d+4\beta}\ {}}${}which fails to be $n^{1/2}-$%
consistent$\ $for any finite $\beta.$

\textbf{Minimaxity with }$g$\textbf{\ known: }To further examine efficiency
issues, it is instructive to first consider the estimation of $\psi$ with
$g\left(  \cdot\right)  $ known. If $g\left(  \cdot\right)  $ were known, we
could set $\widehat{g}\left(  X\right)  =g\left(  X\right)  $ when calculating
$\widehat{\psi}_{m,k}.$ Then $EB_{2}=0$ and $\widehat{\psi}_{2,k}$ would
therefore be an unbiased estimator of $\widetilde{\psi}_{k}.$ Letting a
superscript $g$ denote the model with $g$ known, it is easy to see that
$\widehat{\psi}_{m_{opt}^{g},k_{opt}^{g}\left(  m_{opt}^{g}\ \right)  }$ would
be $\widehat{\psi}_{2,k_{opt}^{g}\left(  2\ \right)  }$ where $k_{opt}%
^{g}\left(  2\ \right)  $ satisfies $\max\left(  1/n,k/n^{2}\right)  \asymp
Var\left(  \widehat{\psi}_{2,k}\right)  =TB_{k}^{2}=k^{-2\left(  \beta
_{b}+\beta_{p}\right)  /d}=k^{-4\beta/d}.\ $Solving this, we find that
when\ $\beta/d$ is greater than or equal to $1/4,$ we can take $k_{opt}%
^{g}=n^{\frac{1}{4\beta/d}}\leq n$ and $\left\vert \widehat{\psi}%
_{2,k_{opt}^{g}\left(  2\ \right)  }-\psi\right\vert =O_{p}\left(
n^{-\frac{1}{2}}\right)  $ regardless of $\Delta,$ which is, of course, the
minimax rate .

In contrast if $\beta/d$ $<1/4,$ $k_{opt}^{g}\left(  2\right)  =n^{\frac
{2}{1+4\beta/d}}$ and $\left(  \widehat{\psi}_{2,k_{opt}^{g}\left(
2\ \right)  }-\psi\right)  =n^{-\frac{4\beta\ }{4\beta+d}}.\ $In an
unpublished paper, we have proved that this is the minimax rate when $g\left(
\cdot\right)  $ is known.

This raises the question of whether the lower bounds of rate $n^{-\frac{1}{2}%
}$ for $\beta/d$ $\geq1/4$ and/or rate $n^{-\frac{4\beta\ }{4\beta+d}}$\ for
$\beta/d$ $<1/4\ $are still achievable when $g$ is unknown, without
restrictions on the smoothness of $g.$

Before addressing this question, we take the opportunity to compare the
relative efficiencies of competing rate-optimal unbiased estimators in the
case of $g$ known. This discussion will provide further insight into the
results given in Remark \ref{JJ1} for models which are not locally nonparametric.

\textbf{Relative Efficiency of Various Unbiased Estimator with }%
$g$\textbf{\ known}:

For simplicity, we restrict the following discussion to the truncated version
of the parameter $\psi=E\left[  \left\{  b\left(  X\right)  \right\}
^{2}\right]  ,$ with $b\left(  X\right)  =E[Y|X],$ $g\left(  \cdot\right)  $
known, and $Y$ Bernouilli. For this choice of $\psi,$ $g\left(  \cdot\right)
$ is the marginal density of $X.$ In this subsection ,we assume $\widehat{g}%
\left(  X\right)  \ $is chosen equal to the known $g\left(  X\right)  $ so
$E\left[  \overline{Z}_{k}\overline{Z}_{k}^{T}\right]  =I_{k\times k}.$ Also
we choose $\dot{B}=\dot{P}=1$ and take $\widehat{B}=\widehat{b}\left(
X\right)  \in lin\left\{  \overline{Z}_{k}\right\}  ,$ so $\widetilde{B}%
=\Pi\left[  B|\overline{Z}_{k}\right]  =E\left[  B\overline{Z}_{k}^{T}\right]
\overline{Z}_{k}$ and $\widetilde{\psi}_{k}=E\left[  \left\{  \Pi\left[
B|\overline{Z}_{k}\right]  \right\}  ^{2}\right]  $ do not depend on
$\widehat{B}. $ Further we only concern ourselves with efficiency relative to
the $n$ observations in the estimation sample. We thus ignore any efficiency
loss from using $N-n$ observations to construct $\widehat{b}$.

Let $\Theta^{g}=\left\{  b:x\mapsto b\left(  x\right)  \in\left[  0,1\right]
\right\}  \subset\Theta$ denote the subset of $\Theta$ corresponding to the
known $g$, which consists of all functions from the unit cube in $R^{d}$ to
the unit interval. The model $\mathcal{M}\left(  \Theta^{g}\right)  $ is not
locally nonparametric. For example, the $1st$ order tangent space $\Gamma
_{1}\left(  \theta\right)  $ does not include first order scores for $g.$ Its
2nd order tangent space $\Gamma_{2}\left(  b\right)  $ does not contain second
order scores for $g$ or mixed scores for $g$ and $b.$ Rather, $\Gamma
_{2}\left(  b\right)  $ is the closed linear span of the first and second
order scores for $b$. Thus
\[
\Gamma_{2}\left(  b\right)  =\left\{  \mathbb{S}\left(  a,c\right)
;var_{b}\left[  \mathbb{S}\left(  a,c\right)  \right]  \right\}  <\infty
;a\in\mathcal{A},c\in\mathcal{C}\ \}
\]
where%

\[
S_{ij}\left(  a,c\right)  =\left\{  \left(  Y-B\right)  a\left(  X\right)
\right\}  _{i}+\left[  \left(  Y-B\right)  _{i}c\left(  X_{i},X_{j}\right)
\left(  Y-B\right)  _{j}\right]  ,
\]
and $\mathcal{A}$ and $\mathcal{C}$ are the set of one and two dimensional
functions of $x.$ Since, for $\widehat{b}\in lin\left\{  \overline{z}%
_{k}\left(  x\right)  \right\}  $
\begin{align*}
\widehat{\psi}_{2,k}\left(  \widehat{b}\right)   &  \equiv\widehat{\psi}%
_{2,k}\\
&  =\mathbb{V}\left\{
\begin{array}
[c]{c}%
\left[  \widehat{B}^{2}+2\widehat{B}\left(  Y-\widehat{B}\right)  \right]
_{i}\\
+\left[  \left(  Y-\widehat{B}\right)  \overline{Z}_{k}^{T}\right]
_{i}\left[  \overline{Z}_{k}\left(  Y-\widehat{B}\right)  \right]  _{j}%
\end{array}
\right\}
\end{align*}
\newline is unbiased for $\widetilde{\psi}_{k}\left(  b\right)  =E\left[
\left\{  \Pi\left[  B|\overline{Z}_{k}\right]  \right\}  ^{2}\right]  $ in
model $\mathcal{M}\left(  \Theta^{g}\right)  $, we know, by Remark \ref{JJ1},
that $\mathbb{IF}_{2,\widetilde{\psi}_{k}}^{eff}\left(  b\right)  \ $for
$\widetilde{\psi}_{k}\left(  b\right)  $ is the projection $\Pi_{b}\left[
\widehat{\psi}_{2,k}-\widetilde{\psi}_{k}\left(  b\right)  |\Gamma_{2}\left(
\theta\right)  \right]  $ of the 2nd order influence function $\widehat{\psi
}_{2,k}-\widetilde{\psi}_{k}\left(  b\right)  $ onto $\Gamma_{2}\left(
b\right)  .$ Now if $\widehat{\psi}_{2,k}-\widetilde{\psi}_{k}\left(
b\right)  $ was an element of $\Gamma_{2}\left(  b\right)  ,$ $\widehat{\psi
}_{2,k}-\widetilde{\psi}_{k}\left(  b\right)  $ would equal $\mathbb{IF}%
_{2,\widetilde{\psi}_{k}}^{eff}\left(  b\right)  $ and thus be 2nd order
`unbiased locally efficient', at $b\in\Theta^{g},$ as defined earlier in
Remark \ref{JJ1}. However we show below that, when $\widehat{b}\left(
X\right)  =c$ for some $c$ wp1 does not hold, $\widehat{\psi}_{2,k}%
-\widetilde{\psi}_{k}\left(  b\right)  $ is not an element of $\Gamma
_{2}\left(  b\right)  $ for any $b$. Rather, a straightforward calculation
gives
\[
\mathbb{IF}_{2,\widetilde{\psi}_{k}}^{eff}\left(  b\right)  =\mathbb{V}%
\left\{
\begin{array}
[c]{c}%
\left[  2E\left[  B\overline{Z}_{k}^{T}\right]  \overline{Z}_{k}\left(
Y-B\right)  \right]  _{i}\\
+\left[  \left(  Y-B\right)  \overline{Z}_{k}^{T}\right]  _{i}\left[
\overline{Z}_{k}\left(  Y-B\right)  \right]  _{j}%
\end{array}
\right\}  .
\]

Now one can check that $\widetilde{\psi}_{k}\left(  \widehat{b}\right)
+\mathbb{IF}_{2,\widetilde{\psi}_{k}}^{eff}\left(  \widehat{b}\right)  $ is a
function of $\widehat{b},$ so by Theorem \ref{global_eff} of Remark \ref{JJ1},
we conclude no unbiased globally efficient estimator exists. However, we prove
below that $\widetilde{\psi}_{k}\left(  \widehat{b}\right)  +\mathbb{IF}%
_{2,\widetilde{\psi}_{k}}^{eff}\left(  \widehat{b}\right)  $ and
$\widehat{\psi}_{2,k}$ have identical means. It follows that $\widetilde{\psi
}_{k}\left(  \widehat{b}\right)  +\mathbb{IF}_{2,\widetilde{\psi}_{k}}%
^{eff}\left(  \widehat{b}\right)  $ is an unbiased estimator of
$\widetilde{\psi}_{k}\left(  b\right)  =E\left[  \left(  \Pi\left[
B|\overline{Z}_{k}\right]  \right)  ^{2}\right]  $ for any $\widehat{b}\in
lin\left\{  \overline{z}_{k}\left(  x\right)  \right\}  $. Thus, for a given
choice of $\widehat{b}\in lin\left\{  \overline{z}_{k}\left(  x\right)
\right\}  ,$ $\widetilde{\psi}_{k}\left(  \widehat{b}\right)  +\mathbb{IF}%
_{2,\widetilde{\psi}_{k}}^{eff}\left(  \widehat{b}\right)  $ is 2nd order
unbiased locally efficient at $b=\widehat{b}.$ However, one can show using a
proof analogous to that in theorem \ref{3.19}\ that for $k<<n^{2}$
\begin{align*}
&  var_{b}\left[  \widetilde{\psi}_{k}\left(  \widehat{b}\right)
+\mathbb{IF}_{2,\widetilde{\psi}_{k}}^{eff}\left(  \widehat{b}\right)
\right]  /var_{b}\left[  \mathbb{IF}_{2,\widetilde{\psi}_{k}}^{eff}\left(
b\right)  \right] \\
&  =1+o_{P}\left(  \left\vert \left\vert \widehat{b}-b\right\vert \right\vert
_{\infty}\right)  .
\end{align*}

Henceforth we assume that $b$ lies in a Holder ball $H(\beta_{b},C_{b}).$ That
is we consider the submodel $b\in\Theta^{g}\cap H(\beta_{b},C_{b})\ $and
assume $\widehat{b}\left(  x\right)  \in lin\left\{  \overline{z}_{k}\left(
x\right)  \right\}  $ converges to $b$ in sup norm at the optimal rate of
$\left(  \frac{n}{\log n}\right)  ^{-\beta_{b}/\left(  2\beta_{b}+d\right)  }%
$uniformly over $\Theta^{g}\cap H(\beta_{b},C_{b}).$ The submodel and the
model $\Theta^{g}$\ have identical tangent spaces. For all $b\in\Theta^{g}\cap
H(\beta_{b},C_{b}),$\ $\left(  max\left(  n^{-1},k/n^{2}\right)  \right)
^{-1/2}\left\{  \widetilde{\psi}_{k}\left(  \widehat{b}\right)  +\mathbb{IF}%
_{2,\widetilde{\psi}_{k}}^{eff}\left(  \widehat{b}\right)  -\widetilde{\psi
}_{k}\left(  b\right)  \right\}  $\ has an asymptotic distribution with mean
zero and variance equal to $\lim_{n\rightarrow\infty}\left(  max\left(
n^{-1},k/n^{2}\right)  \right)  ^{-1}var_{b}\left[  \mathbb{IF}%
_{2,\widetilde{\psi}_{k}}^{eff}\left(  b\right)  \right]  $\ for all
$b\in\Theta^{g}\cap H(\beta_{b},C_{b}).$\emph{\ }In a slight abuse of
language, we shall refer to $var_{b}\left[  \mathbb{IF}_{2,\widetilde{\psi
}_{k}}^{eff}\left(  b\right)  \right]  $ as the asymptotic variance of
$\left\{  \widetilde{\psi}_{k}\left(  \widehat{b}\right)  +\mathbb{IF}%
_{2,\widetilde{\psi}_{k}}^{eff}\left(  \widehat{b}\right)  -\widetilde{\psi
}_{k}\left(  b\right)  \right\}  .$Thus, as with standard first order theory,
even when no unbiased estimator has finite sample variance that attains the
Bhattacharyya bound for all $b\in\Theta^{g}\cap H(\beta_{b},C_{b})$, there can
exist an unbiased estimator sequence whose asymptotic variance does attain the
bound globally.

We next compare the means and variances of $\widetilde{\psi}_{k}\left(
\widehat{b}\right)  +\mathbb{IF}_{2,\widetilde{\psi}_{k}}^{eff}\left(
\widehat{b}\right)  $ and $\widehat{\psi}_{2,k}.$ Now the two estimators are
algebraically related by
\[
\widehat{\psi}_{2,k}=\left\{  \widetilde{\psi}_{k}\left(  \widehat{b}\right)
+\mathbb{IF}_{2,\widetilde{\psi}_{k}}^{eff}\left(  \widehat{b}\right)
\right\}  +\left\{  \mathbb{V}\left[  \widehat{B}^{2}\right]  -E\left[
\widehat{B}^{2}\right]  \right\}  .
\]
Since $\mathbb{V}\left[  \widehat{B}^{2}\right]  -E\left[  \widehat{B}%
^{2}\right]  $ is unbiased for zero, we conclude that $\widehat{\psi}_{2,k}$
and $\widetilde{\psi}_{k}\left(  \widehat{b}\right)  +\mathbb{IF}%
_{2,\widetilde{\psi}_{k}}^{eff}\left(  \widehat{b}\right)  $ have the same
mean but $var_{b}\left[  \widehat{\psi}_{2,k}\right]  /var_{b}\left[
\mathbb{IF}_{2,\widetilde{\psi}_{k}}^{eff}\left(  b\right)  \right]  >1$
except when $\widehat{b}\left(  X\right)  =b\left(  X\right)  =c\ wp$ $1\ $for
some $c.$ Thus, since $\widetilde{\psi}_{k}\left(  \widehat{b}\right)
+\mathbb{IF}_{2,\widetilde{\psi}_{k}}^{eff}\left(  \widehat{b}\right)  \ $has
asymptotic variance $var_{b}\left[  \mathbb{IF}_{2,\widetilde{\psi}_{k}}%
^{eff}\left(  b\right)  \right]  $ and, except when $\widehat{b}\left(
X\right)  =c+o_{p}\left(  1\right)  ,$ $var\left(  \mathbb{V}\left[
\widehat{B}^{2}\right]  \right)  \asymp\ n^{-1},$ we conclude the asymptotic
variance of $\widehat{\psi}_{2,k}$ attains the bound $var_{b}\left[
\mathbb{IF}_{2,\widetilde{\psi}_{k}}^{eff}\left(  b\right)  \right]  $ when
$k>>n,$ but exceeds the bound when $k\leq n$, except when $\widehat{b}\left(
X\right)  =c+o_{p}\left(  1\right)  $.

Finally, for completeness, Robins and van der Vaart (2006) considered an
alternative particularly simple rate-optimal unbiased estimator of
$\widetilde{\psi}_{k}\left(  b\right)  =E\left[  \left\{  \Pi\left[
B|\overline{Z}_{k}\right]  \right\}  ^{2}\right]  $ given by $\widehat{\psi
}_{RV}=\mathbb{V}\left\{  \left[  Y\overline{Z}_{k}^{T}\right]  _{i}\left[
\overline{Z}_{k}Y\right]  _{j}\right\}  $. The Hoeffding decomposition of
$\widehat{\psi}_{RV}-\widetilde{\psi}_{k}\left(  b\right)  $ is%

\begin{align*}
&  \mathbb{V}\left[  E\left[  B\overline{Z}_{k}^{T}\right]  \overline{Z}%
_{k}Y-\widetilde{\psi}_{k}\left(  b\right)  \right]  +\mathbb{V}\left\{
\left[  Y\overline{Z}_{k}^{T}-E\left[  B\overline{Z}_{k}^{T}\right]  \right]
_{i}\left[  \overline{Z}_{k}Y-E\left[  B\overline{Z}_{k}\right]  \right]
_{j}\right\} \\
&  =\mathbb{IF}_{2,\widetilde{\psi}_{k}}^{eff}\left(  b\right)  +Q+T
\end{align*}
with%

\begin{align*}
Q  &  =\mathbb{V}\left[  \left\{  \Pi\left[  B|\overline{Z}_{k}\right]
B-\psi\right\}  \right] \\
T  &  =\mathbb{V}\left\{
\begin{array}
[c]{c}%
2\left(  B_{i}\overline{Z}_{k,i}^{T}\overline{Z}_{k,j}-\Pi\left[
B|\overline{Z}_{k}\right]  _{j}\right)  \left(  Y-B\right)  _{j}\\
+B_{i}\overline{Z}_{k,i}^{T}\overline{Z}_{k,j}B_{j}-\Pi\left[  B|\overline
{Z}_{k}\right]  _{i}B_{i}-\Pi\left[  B|\overline{Z}_{k}\right]  _{j}B_{j}+\psi
\end{array}
\right\}
\end{align*}
Since, except when $B=c$ wp1, $var_{b}\left(  Q\right)  \ \asymp\ n^{-1}$ and
$var_{b}\left(  T\right)  \asymp k/n^{2},$ we conclude that the asymptotic
variance of $\widehat{\psi}_{RV}$ exceeds the bound $var_{b}\left[
\mathbb{IF}_{2,\widetilde{\psi}_{k}}^{eff}\left(  b\right)  \right]  $
regardless of whether $k>>n$ does or does not hold except when $b\left(
X\right)  =c$ wp1.

\textbf{Minimaxity with Unknown }$g$\textbf{\ and }$\beta/d$\textbf{\ }%
$\geq1/4:$ We now show that the bound $n^{-\frac{1}{2}}$ for $\beta/d$
$\geq1/4$ is achievable for each $\beta_{g}>0.$ Consider the estimator
$\widehat{\psi}_{m,k}$ with $n^{\frac{2}{1+4\beta/d}}\leq\ k\leq n$ and
\[
m\geq1+\left\{  \frac{1}{2}-\frac{\beta_{b}}{d+2\beta_{b}}-\frac{\beta_{p}%
}{d+2\beta_{p}}\right\}  \frac{2\beta_{g}+d}{\beta_{g}}%
\]
so that $EB_{m}=O_{p}\left(  n^{-\left(  \frac{\left(  m-1\right)  \beta_{g}%
}{2\beta_{g}+d}+\frac{\beta_{b}}{d+2\beta_{b}}+\frac{\beta_{p}}{d+2\beta_{p}%
}\right)  }\right)  $ is $O_{p}\left(  n^{-1/2}\right)  .$ Then $Var\left(
\widehat{\psi}_{m,k}\right)  \asymp1/n,$ $TB_{k}^{2}=O_{p}\left(  1/n\right)
$ and $EB_{m}^{2}=O_{p}\left(  1/n\right)  $ so $\widehat{\psi}_{m,k}$ will be
$n^{\frac{1\ }{2}}-$consistent for $\psi.$ If $\ \Delta=0$ and $\beta<1/2,$
the above expression implies that $m\geq\frac{\ d-2\beta}{2\left(
2\beta\ +d\right)  }/\frac{\beta_{g}\ }{\left(  2\beta_{g}+d\right)  }+1$ for
$n^{\frac{1\ }{2}}-$consistency. Similarly, if $\Delta\rightarrow\infty,$ i.e.
$\beta_{b}\rightarrow0,$ it is necessary that $m\geq\frac{d}{2\left(
4\beta\ +d\right)  }/\frac{\beta_{g}\ }{\left(  2\beta_{g}+d\right)  }\ +1$
for $n^{\frac{1}{2}}-$consistency. These results imply that estimators
$\widehat{\psi}_{m,k}$ in our class can always achieve $n^{\frac{1\ }{2}}%
-$consistency whenever $\beta_{g}>0$, but for fixed $\beta<d/2,$ the order $m$
of the required $U-$statistic increases without bound as the smoothness
$\beta_{g}$ of $g$ approaches zero.

\textbf{Efficiency:} We now show that when $\beta/d$ is strictly greater than
$1/4,$ we can construct an unconditional asymptotically linear estimator based
on all $N$ subjects with influence function $N^{-1}\sum_{i=1}^{N}IF_{1,\psi
,i}\left(  \theta\right)  $ by having the number of the $N$ subjects allotted
to the validation sample and analysis sample be $N^{1-\epsilon}$ and
$n=n\left(  \epsilon\right)  =N-N^{1-\epsilon},$ respectively, for
$1>\epsilon>0$. It then follows from van der Vaart$\ (1998) $ that the
estimator is regular and semiparametric efficient. Specifically, suppose
$\beta/d=1/4+\delta,$ $\delta>0.$ Consider the estimator $\widehat{\psi
}_{m^{\ast},k}$ with $m^{\ast}>1+\left\{  \frac{1}{2\left(  1-\epsilon\right)
}-\frac{\beta_{b}}{d+2\beta_{b}}-\frac{\beta_{p}}{d+2\beta_{p}}\right\}
\frac{2\beta_{g}+d}{\beta_{g}}$ so that $EB_{m^{\ast}}=O_{p}\left(
N^{-\left(  1-\epsilon\right)  \left(  \frac{\left(  m^{\ast}-1\right)
\beta_{g}}{2\beta_{g}+d}+\frac{\beta_{b}}{d+2\beta_{b}}+\frac{\beta_{p}%
}{d+2\beta_{p}}\right)  }\right)  $ is $o_{p}\left(  N^{-1/2}\right)  \ $and
$k=n\left(  \epsilon\right)  ^{\frac{1}{1+2\delta}}<n\left(  \epsilon\right)
$ so that $TB_{k}^{2}=o_{p}\left(  1/N\right)  $ and $var\left[
\widehat{IF}_{jj,\widetilde{\psi}_{k}}\right]  =o_{p}\left(  1/N\right)  $ for
$j\geq2.\ $Then, by our previous results,
\[
\widehat{\psi}_{m^{\ast},k}-\psi\left(  \theta\right)  =\ n\left(
\epsilon\right)  ^{-1}\sum_{i=1}^{n\left(  \epsilon\right)  }IF_{1,\psi
,i}\left(  \theta\right)  +o_{p}\left(  N^{-1/2}\right)  .
\]
It remains to show that $N^{-1}\sum_{i=1}^{N}IF_{1,\psi,i}\left(
\theta\right)  -n\left(  \epsilon\right)  ^{-1}\sum_{i=1}^{n\left(
\epsilon\right)  }IF_{1,\psi,i}\left(  \theta\right)  =o_{p}\left(
N^{-1/2}\right)  .$ But the LHS is
\begin{align*}
&  n\left(  \epsilon\right)  ^{-1}\sum_{i=1}^{n\left(  \epsilon\right)
}IF_{1,\psi,i}\left(  \theta\right)  \left\{  \frac{n\left(  \epsilon\right)
}{N}-1\right\}  +N^{-1}\sum_{i=n\left(  \epsilon\right)  +1}^{N\ }%
IF_{1,\psi,i}\left(  \theta\right) \\
&  =O_{p}\left(  n\left(  \epsilon\right)  ^{-1/2}N^{-\epsilon}\right)
+O_{p}\left(  N^{\ \left(  1-\epsilon\right)  /2}N^{-1}\right)  =O_{p}\left(
N^{-1/2}N^{-\epsilon}\right)  +O_{p}\left(  N^{-1/2}N^{-\epsilon/2}\right) \\
&  =o_{p}\left(  N^{-1/2}\right)  .
\end{align*}

\textbf{Adaptivity when }$\beta/d$\textbf{\ }$>1/4$\textbf{\ :} We next prove
that if we let $n\equiv n\left(  \epsilon\right)  =N-N^{1-\epsilon},m\equiv
m\left(  N\right)  =o\left(  N\right)  \ $with $\ln\left(  N\right)  =O\left(
m\left(  N\right)  \right)  $ and $k=n\left(  \epsilon\right)  /\ln\left(
n\right)  ,$ $\widehat{\psi}_{m,k\ }$ will be semiparametric efficient for
each $\beta>1/4$, provided $\left\{  \widehat{g}\left(  X\right)  -g\left(
X\right)  \right\}  =o_{p}\left(  m\left(  N^{\left(  1-\epsilon\right)
}\right)  ^{-2}\right)  .$ Clearly, the truncation bias is $o\left(
N^{-1/2}\right)  .$ The estimation bias $EB_{m\left(  N\right)  }$ is
$O_{p}\left(  m\left(  N^{\left(  1-\epsilon\right)  }\right)  ^{-2\left[
m\left(  N\right)  -1\right]  }N^{\ -\left(  1-\epsilon\right)  \left\{
\frac{\beta_{b}}{d+2\beta_{b}}+\frac{\beta_{p}}{d+2\beta_{p}}\right\}
}\right)  .$ Thus $EB_{m\left(  N\right)  }=o_{p}\left(  N^{-1/2}\right)  $ if
$m\left(  N^{\left(  1-\epsilon\right)  }\right)  ^{-2\left[  m\left(
N\right)  -1\right]  }=o\left(  N^{-\frac{1}{2}+\left(  1-\epsilon\right)
\left\{  \frac{\beta_{b}}{d+2\beta_{b}}+\frac{\beta_{p}}{d+2\beta_{p}%
}\right\}  }\right)  .$ So we require
\[
2\left[  m\left(  N\right)  -1\right]  ln\left\{  m\left(  N^{\left(
1-\epsilon\right)  }\right)  \right\}  /\left[  \frac{1}{2}-\left(
1-\epsilon\right)  \left\{  \frac{\beta_{b}}{d+2\beta_{b}}+\frac{\beta_{p}%
}{d+2\beta_{p}}\right\}  \right]  \ln\left(  N\right)  \rightarrow\infty,
\]
which is satisfied if $\ln\left(  N\right)  =O\left(  m\left(  N\right)
\right)  .$ In the appendix we prove that var$_{\theta}\left[  \widehat{\psi
}_{m,k\ }\right]  =var_{\widehat{\theta}}\left[  \widehat{\psi}_{m,k\ }%
\right]  \left\{  1+o_{p}\left(  1\right)  \right\}  $ provided $\left\{
\widehat{g}\left(  X\right)  -g\left(  X\right)  \right\}  =o_{p}\left(
m\left(  N^{\left(  1-\epsilon\right)  }\right)  ^{-2}\right)  .$ Now
$var_{\widehat{\theta}}\left[  \widehat{\psi}_{m,k\ }\right]  =\frac{1}%
{n}var_{\widehat{\theta}}\left\{  IF_{1,\psi,i}\left(  \widehat{\theta
}\right)  \right\}  \left[  O\left(  \sum_{l=0}^{m\left(  N\right)  }\left\{
\ln n\right\}  ^{-l}\right)  \right]  .{\Huge \ }$But $\sum_{l=0}^{m\left(
N\right)  }\left\{  \ln n\right\}  ^{-l}$ $=O\left(  \frac{1-\left(  \ln
n\right)  ^{-\left[  m\left(  N\right)  +1\right]  }}{1-\left\{  \ln
n\right\}  ^{-1}}\right)  =O\left(  1+\left\{  \ln n\right\}  ^{-1}\right)  ,$
so var$_{\theta}\left[  \widehat{\psi}_{m,k\ }\right]  $ is $n^{-1}%
var_{\widehat{\theta}}\left\{  IF_{1,\psi,i}\left(  \widehat{\theta}\right)
\right\}  \left\{  1+o_{p}\left(  1\right)  \right\}  =n^{-1}var\left\{
\mathbb{IF}_{1,\psi}\right\}  \left\{  1+o_{p}\left(  1\right)  \right\}  .$
The proof of efficiency now proceeds as above.

\textbf{Alternative Estimators when }$\beta/d$\textbf{\ }$>1/4$\textbf{\ :}
When $\beta/d$ $>1/4,$ there actually exist,at least for certain functionals
in our class, $n^{\frac{1\ }{2}}-$consistent estimators of $\psi$ that are
much simpler than our very high order U-statistic estimators.$\ $For example
consider the expected conditional covariance $\psi=E\left[  Cov\left\{
A,Y|X\right\}  \right]  $ of Example 1b with $d=1$.

\textbf{Example 1b (cont)}: Number the study subjects $i=0,...,N-1$ ordered by
their realized values $X_{i},$ where we have not split the sample$.$ Following
Wang et al. (2006), consider the difference -based estimator estimator
\[
\widehat{\psi}_{d}=N^{-1}\sum_{i=0}^{N/2-1}\left\{  Y_{2i}A_{2i}%
+Y_{2i+1}A_{2i+1}-Y_{2i+1}A_{2i}-Y_{2i}A_{2i+1}\right\}
\]
which has conditional mean given $\left\{  X_{1},...,X_{N}\right\}  $ of
\begin{align*}
&  N^{-1}\sum_{i=0}^{N/2-1}Cov\left\{  A,Y|X_{2i}\right\}  +Cov\left\{
A,Y|X_{2i+1}\right\} \\
&  +N^{-1}\sum_{i=0}^{N/2-1}\left(  \left\{  b\left(  X_{i+1}\right)
-b\left(  X_{i}\right)  \right\}  \left\{  p\left(  X_{i+1}\right)  -p\left(
X_{i}\right)  \right\}  \right)
\end{align*}

Hence
\begin{align*}
&  E\left[  \widehat{\psi}_{d}-\psi\right] \\
&  =N^{-1}E\left[  \sum_{i=0}^{N/2-1}\left\{  b\left(  X_{i+1}\right)
-b\left(  X_{i}\right)  \right\}  \left\{  p\left(  X_{i+1}\right)  -p\left(
X_{i}\right)  \right\}  \right] \\
&  =O_{p}\left(  N^{-1}%
%TCIMACRO{\tsum \limits_{i=0}^{\frac{N}{2}-1}}%
%BeginExpansion
{\textstyle\sum\limits_{i=0}^{\frac{N}{2}-1}}
%EndExpansion
E\left\{  X_{i+1}-X_{i}\right\}  ^{2\beta}\right)  =O\left(  N^{-2\beta
}\right)
\end{align*}
by the theory of spacings (Pyke, 1965). But $O\left(  N^{-2\beta}\right)  $ is
$o_{p}\left(  N^{-1/2}\right)  $ when $\beta$ $>1/4.$ The variance of
$\widehat{\psi}_{d}$ is $O\left(  N^{-1}\right)  \ $so $\widehat{\psi}_{d}$ is
$N^{1/2}-$consistent$.$ However, $\frac{var_{\theta}\left(  \widehat{\psi}%
_{d}\right)  }{var_{\theta}\left(  \mathbb{IF}_{1,\psi}\left(  \theta\right)
\right)  }\neq1+o_{p}\left(  1\right)  $ so $\widehat{\psi}_{d}$ is not
(semiparametric) efficient. As discussed by Arellano (2003), by using a $m$th
order rather than a second order difference operator and letting
$m\rightarrow\infty$ at an appropriate rate as $N\rightarrow\infty,$ the $m$th
order estimator $\widehat{\psi}_{d}$ can be made efficient.

\textbf{Minimaxity with Unknown }$g$\textbf{\ and }$\beta/d$\textbf{\ }$<1/4:
$ Consider next whether the lower bound of $n^{-\frac{4\beta\ }{4\beta+d}}%
$\ for $\beta/d$ $<1/4$ is achievable when $g$ is unknown$\ $but $\beta_{g}%
>0$. We will show the next section that the bound $\ n^{-\frac{4\beta}%
{4\beta+d}}$ is achievable provided
\begin{equation}
\frac{2\beta_{g}/d}{2\beta_{g}/d+1}>\frac{4\beta/d\frac{1-4\beta/d}%
{1+4\beta/d}\left(  \ \Delta+1\right)  }{\left(  \Delta+2\right)  },\text{
}\label{NE11}%
\end{equation}
i.e., $\beta_{g}>\frac{2\beta\left(  \ \Delta+1\right)  \left(  1-4\beta
/d\right)  }{\left(  \Delta+2\right)  \ \left(  1+4\beta/d\right)  -4\left(
\beta/d\right)  \left(  1-4\beta/d\right)  \left(  \ \Delta+1\right)  }.$ To
attain the bound $n^{-\frac{4\beta}{4\beta+d}}$ whenever eq.$\left(
\ref{NE11}\right)  $ holds, we introduce new more efficient estimators, owing
to the fact that an estimator $\widehat{\psi}_{m,k}$ in our class can attain
the bound $n^{-\frac{4\beta}{4\beta+d}}$ only in the special case where the
second order estimation bias $EB_{2}=O_{p}\left(  n^{-\left(  \frac{\beta_{g}%
}{2\beta_{g}+d}+\frac{\beta_{b}}{d+2\beta_{b}}+\frac{\beta_{p}}{d+2\beta_{p}%
}\right)  }\right)  $ is less than $n^{-\frac{4\beta}{4\beta+d}}.$

For a fixed $\beta=\left(  \beta_{p}+\beta_{b}\right)  /2,$ the right hand
side of eq.$\left(  \ref{NE11}\right)  $ is minimized over $\Delta\geq0$ at
$\Delta=0.$ At $\Delta=0,$ eq.$\left(  \ref{NE11}\right)  $ reduces to
\begin{align}
\frac{\beta_{g}/d}{2\beta_{g}/d+1}  &  >\frac{1-4\beta/d}{1+4\beta/d}%
\beta/d\text{ }\Rightarrow\label{boundE}\\
\beta_{g}  &  >\frac{\beta\left(  1-4\beta/d\right)  }{1+2\beta/d\ +8\left(
\beta/d\right)  ^{2}}\label{boundE2}%
\end{align}

The right hand side of eq.$\left(  \ref{NE11}\right)  $ increases with
$\Delta$ with asymptote equal to twice the RHS of eq.(\ref{boundE}) {}as
$\Delta\rightarrow\infty.$ Hence, in order to attain the optimal rate
$n^{-\frac{4\beta}{4\beta+d}}$ when $\beta_{p}=2\beta$ and $\beta_{b}=0,$ the
quantity $\frac{\beta_{g}}{2\beta_{g}+d}$ must be twice as large as when
$\beta_{p}=\beta_{b}=\beta.$

In the next section, we construct an estimator with a convergence rate of
$\log\left(  n\right)  n^{-\frac{4\beta}{4\beta+d}}$ at the cut-point
$\frac{\beta_{g}}{1+2\beta g}=\frac{\left(  1-4\beta\right)  \beta}{1+4\beta
}.$ In this paper we do not consider the construction of estimators that are
rate optimal below this cutpoint.

However, for the special case $\Delta=0,$ in an unpublished paper Li et. al.
(2007) have constructed estimators which converge at a rate given in
Eq.$\left(  \ref{bc}\right)  $, whenever inequality $\left(  \ref{NE11}%
\right)  $ fails to hold . We conjecture that this rate is minimax, possibly
only up to $\log$ factors, when inequality $\left(  \ref{NE11}\right)  $ fails
to hold and $\Delta=0.$ At the cut-point $\frac{\beta_{g}}{1+2\beta g}%
=\frac{\left(  1-4\beta\right)  \beta}{1+4\beta},$ we obtain $m^{\ast}=0$ and
thus$\ $Eq.$\left(  \ref{bc}\right)  $ becomes$\ \log\left(  n\right)
n^{-\frac{4\beta}{4\beta+d}},$ in agreement with the rate of the estimator of
Section \ref{case2} below. In the extreme case in which $\beta_{g}%
\rightarrow0$ with $\beta$ remaining fixed, $\log\left(  n\right)
n^{-\frac{1}{2}+\frac{\beta_{g}/d}{1+2\beta_{g}/d}\frac{\left(  m^{\ast
}+1\right)  ^{2}}{2\beta/d}}$ $\rightarrow\log\left(  n\right)  n^{-\frac
{1}{2}+\frac{\beta_{g}}{1+2\beta g}\frac{1}{\beta}\beta\left(  1-4\beta
/d\right)  \frac{1+2\beta_{g}}{2\beta_{g}}}=\log\left(  n\right)
n^{-2\beta/d},$ which agrees (up to a log factor) with the rate of
$n^{-2\beta/d}$ given by the simple estimator of Wang et al. (2006) analyzed
above under "Example 1b (cont)".

\textbf{Improved Rates of Convergence with }$X$\textbf{\ random in {}a
semiparametric model:} We now, as promised in the Introduction, construct an
estimator of $\sigma^{2}$ under the homoscedastic model $E\left[  Y|X\right]
=b\left(  X\right)  ,$ $var\left[  Y|X\right]  =\sigma^{2}$ with $X$ random
with unknown density that, whenever $\beta<\min\left\{  1,d/4\right\}  $ and,
regardless of the smoothness of $f_{X}\left(  x\right)  $, converges at the
rate $n^{-\frac{4\beta/d}{4\beta/d+1}},$ which is faster than equal-spaced
non-random minimax rate of $n^{-2\beta/d}.$ Specifically we divide the support
of $X,$ i.e., the unit cube in $R^{d},$ into $k=k\left(  n\right)  =n^{\gamma
},\gamma>1$ identical subcubes with edge length $k^{-1/d}.$ We continue to
assume the unknown density $f_{X}\left(  x\right)  $ is absolutely continuous
wrt to Lebesgue measure and both it and its inverse are bounded in sup-norm.
Then it is a standard probability calculation that the number of subcubes
containing at least two observations is $O_{p}\left(  n^{2}/k\right)  .$ We
estimate $\sigma^{2}$ in each such subcube by $\left(  Y_{i}-Y_{j}\right)
^{2}/2,$ where, for any subcube with $3$ or more observations,$\ i$ and $j$
are chosen randomly, without replacement. Our final estimator of $\sigma^{2}$
is the average of our subcube-specific estimates $\left(  Y_{i}-Y_{j}\right)
^{2}/2$ over the $O_{p}\left(  n^{2}/k\right)  $ subcubes with at least two
observations. The rate of convergence of the estimator is minimized at
$n^{-\frac{4\beta/d}{4\beta/d+1}}$ by taking $k=n^{\frac{2}{1+4\beta/d}.},$ as
we now show.

We note that $E\left[  \left(  Y_{i}-Y_{j}\right)  ^{2}/2|X_{i},X_{j}\right]
=\sigma^{2}+\left\{  b\left(  X_{i}\right)  -b\left(  X_{j}\right)  \right\}
^{2}/2,$ $\left\vert b\left(  X_{i}\right)  -b\left(  X_{j}\right)
\right\vert =O\left\Vert X_{i}-X_{j}\right\Vert ^{\beta}$ by $\beta<1,$ and
$\left\Vert X_{i}-X_{j}\right\Vert =d^{1/2}O\left(  k^{-1/d}\right)  \ $when
$X_{i}$ and$\ X_{j}$ are in the same subcube. It follows that the estimator
has variance $O_{p}\left(  k/n^{2}\right)  $ and bias of $O\left(
k^{-2\beta/d}\right)  .$ To minimize the convergence rate we equate the orders
of the variance and the squared bias by solving $k/n^{2}=k^{-4\beta/d}$ which
gives $k=n^{\frac{2}{1+4\beta/d}.}.$ Our random design estimator has better
bias control and hence converges faster than the optimal equal-spaced fixed
$X$ estimator, because the random design estimator exploits the $O_{p}\left(
n^{2}/n^{\frac{2}{1+4\beta/d}.}\right)  $ random fluctuations for which
$X^{\prime}s$ corresponding to two different observations are a distance of
$O\left(  \left\{  n^{\frac{2}{1+4\beta/d}.}\right\}  ^{-1/d}\right)  $ apart.
Our estimator will not converge at rate $n^{-\frac{4\beta/d}{4\beta/d+1}}$ to
$E\left[  var\left(  Y|X\right)  \right]  $ in our nonparametric model,
because it then no longer suffices to average estimates of $var\left(
Y|X\right)  $ only over subcubes containing 2 or more observations.

\subsection{More Efficient Estimators}

\subsubsection{Case 1: The estimation bias of the third order estimator is
less than the optimal rate}

In a (locally) nonparametric model $\mathcal{M}\left(  \Theta\right)  ,$ the
estimator $\widehat{\psi}_{m,k}=\widehat{\psi}+\widehat{\mathbb{IF}%
}_{m,\widetilde{\psi}_{k}}$ is essentially the unique $m-th$ order U-statistic
estimator of the truncated parameter $\widetilde{\psi}_{k}$ for which the
leading term in the bias is $O\left(  \left\vert \left\vert \widehat{\theta
}-\theta\right\vert \right\vert ^{m+1}\right)  .$ However, when the minimax
rate of convergence for $\psi$ is slower than $n^{-1/2}$, other $m^{th}$ order
U-statistics estimators will often converge to $\widetilde{\psi}_{k}$ (and
thus $\psi)$ at a faster rate uniformely over the model than does any
estimator $\widehat{\psi}_{m,k}$ (constructed from an estimated higher order
influence function $\widehat{\mathbb{IF}}_{m,\widetilde{\psi}_{k}}$ for
$\widetilde{\psi}_{k})$ by tolerating bias at orders less than $m+1$ in
exchange for a savings in variance.

\begin{remark}
\label{T7}A heuristic understanding as to why this is so can be gained from
the following considerations. The theory of higher order influence functions
as developed in theorems 2.2 and 2.3 is a theory of score functions
(derivatives). \ Thus it can directly incorporate the restriction that a
function, say $b\left(  x\right)  ,$ has an expansion $b\left(  x\right)
=\sum_{l=1}^{\infty}\eta_{l}z_{l}\left(  x\right)  $ for which $\eta_{l}=0$
for $l>k,$ \ as the restriction is equivalent to various scores being equal to
zero. \ However the theory cannot directly incorporate restrictions such as
$\sum_{l=k}^{\infty}\eta_{l}^{2}=k^{-2\beta_{b}}$ or $\eta_{l}\propto
l^{-\left(  \beta_{b}+\frac{1}{2}\right)  }$ that do not imply any
restrictions on score functions. Thus to find an optimal estimator, one must
perform additional \textquotedblleft side calculations\textquotedblright\ to
quantify the estimation and truncation bias of various candidate estimators
under these restrictions. As the assumption that $b\left(  x\right)  $ lies in
a Holder ball can be expressed in terms of such restrictions, this remark is
relevant to a search for an optimal rate estimator.
\end{remark}

We now construct such estimators. We first consider the case where $\beta
_{b},\beta_{b},$ and $\beta_{g}$ are such that the estimation bias $O\left(
n^{-\left(  \frac{\beta_{g}}{2\beta_{g}+d}+\frac{\beta_{b}}{d+2\beta_{b}%
}+\frac{\beta_{p}}{d+2\beta_{p}}\right)  }\right)  $ of the second order
estimator is greater than $O\left(  n^{-\frac{4\beta}{4\beta+d}}\right)  $ but
the estimation bias $O\left(  n^{-\left(  \frac{2\beta_{g}}{2\beta_{g}%
+d}+\frac{\beta_{b}}{d+2\beta_{b}}+\frac{\beta_{p}}{d+2\beta_{p}}\right)
}\right)  $ of the third order estimator is less than $O\left(  n^{-\frac
{4\beta}{4\beta+d}}\right)  .$ That is
\begin{equation}
n^{-\left(  \frac{2\beta_{g}}{2\beta_{g}+d}+\frac{\beta_{b}}{d+2\beta_{b}%
}+\frac{\beta_{p}}{d+2\beta_{p}}\right)  }<n^{-\frac{4\beta}{4\beta+d}%
}<n^{-\left(  \frac{\beta_{g}}{2\beta_{g}+d}+\frac{\beta_{b}}{d+2\beta_{b}%
}+\frac{\beta_{p}}{d+2\beta_{p}}\right)  }\label{fourth}%
\end{equation}
Then the most efficient estimator $\widehat{\psi}_{m,k\ }$ in our class has
rate of convergence slower than $n^{-\frac{4\beta}{4\beta+d}}$ because
$\widehat{\psi}_{2,k_{opt}\left(  2\right)  }$ converges at rate $n^{-\left(
\frac{\beta_{g}}{2\beta_{g}+d}+\frac{\beta_{b}}{d+2\beta_{b}}+\frac{\beta_{p}%
}{d+2\beta_{p}}\right)  }$ determined by the 2nd order estimation bias and,
for $m>3,$ $\widehat{\psi}_{m,k_{opt}\left(  m\right)  }$ converges at a rate
no faster than $n^{-\frac{6\beta}{\left(  d+2\beta\right)  }}=n^{-4\frac
{\beta}{d}3/\left(  \left(  3-1\right)  +4\frac{\beta}{d}\right)  }%
=\min_{\left\{  m;m>3\right\}  }n^{-4\frac{\beta}{d}m/\left(  \left(
m-1\right)  +4\frac{\beta}{d}\right)  }.$ [We obtained $n^{-4\frac{\beta}%
{d}m/\left(  \left(  m-1\right)  +4\frac{\beta}{d}\right)  }$ as $\left(
k^{-4\beta/d}\right)  ^{1/2},$ where $k$ solves the equation $k^{m}%
/n^{m+1}=k^{-4\beta/d}$ that equates the variance $k^{m}/n^{m+1}$ of
$\mathbb{IF}_{m}$ to the squared truncation bias $k^{-4\beta/d}.$]

To describe our more efficient estimator, define for nonnegative integers
$k\left(  0\right)  ,k\left(  1\right)  ,k^{\ast}\left(  0\right)  ,k^{\ast
}\left(  1\right)  $ with $k\left(  0\right)  <k\left(  1\right)  $ and
$k^{\ast}\left(  0\right)  <k^{\ast}\left(  1\right)  $ the $U-$statistic
\[
\widehat{\mathbb{U}}_{3}\left(  _{k\left(  0\right)  ,}^{k\left(  1\right)  ,}%
%TCIMACRO{\QATOP{k^{\ast}\left(  1\right)  }{k^{\ast}\left(  0\right)  }}%
%BeginExpansion
\genfrac{}{}{0pt}{}{k^{\ast}\left(  1\right)  }{k^{\ast}\left(  0\right)  }%
%EndExpansion
\right)  =\mathbb{V}\left(  \widehat{U}_{3}\left(  _{k\left(  0\right)
,}^{k\left(  1\right)  ,}%
%TCIMACRO{\QATOP{k^{\ast}\left(  1\right)  }{k^{\ast}\left(  0\right)  }}%
%BeginExpansion
\genfrac{}{}{0pt}{}{k^{\ast}\left(  1\right)  }{k^{\ast}\left(  0\right)  }%
%EndExpansion
\right)  \right)
\]
with%

\begin{align*}
&  \widehat{\mathbb{U}}_{3}\left(  _{k\left(  0\right)  ,k^{\ast}\left(
0\right)  }^{k\left(  1\right)  ,k^{\ast}\left(  1\right)  }\right) \\
&  =\widehat{\epsilon}_{i_{1}}\overline{Z}_{k\left(  0\right)  ,i_{1}%
}^{k\left(  1\right)  ,T}\left(  \left[  \dot{P}\dot{B}H_{1}\overline
{Z}_{k\left(  0\right)  \ }^{k\left(  1\right)  }\overline{Z}_{k^{\ast}\left(
0\right)  \ }^{k^{\ast}\left(  1\right)  ,T}\right]  _{i_{2}}-I_{\left\{
k\left(  1\right)  -k\left(  0\right)  \right\}  \times\left\{  k^{\ast
}\left(  1\right)  -k^{\ast}\left(  0\right)  \right\}  }\right)  \overline
{Z}_{k^{\ast}\left(  0\right)  ,i_{3}}^{k^{\ast}\left(  1\right)
}\widehat{\Delta}_{i_{3}}\\
&  =\sum_{s_{1}=k\left(  0\right)  +1}^{k\left(  1\right)  }\sum
_{s_{2}=k^{\ast}\left(  0\right)  +1}^{k^{\ast}\left(  1\right)  }\left\{
\begin{array}
[c]{c}%
\widehat{\epsilon}_{i_{1}}z_{s_{1}}\left(  X_{i_{1}}\right)  \times\\
\left\{  \left[  \dot{B}\dot{P}H_{1}\right]  _{i_{2}}z_{s_{1}}\left(
X_{i_{2}}\right)  z_{s_{2}}\left(  X_{i_{2}}\right)  -I\left[  s_{1}%
=s_{2}\right]  \right\}  z_{s_{2}}\left(  X_{i_{3}}\right)  \widehat{\Delta
}_{i_{3}}%
\end{array}
\right\}  ,
\end{align*}

where $\overline{Z}_{k\left(  0\right)  \ }^{k\left(  1\right)  }=\left(
Z_{k\left(  0\right)  +1\ },....,Z_{k\left(  1\right)  \ }\right)
^{T},\widehat{\epsilon}=\left(  H_{1}\widehat{P}+H_{2}\right)  \dot
{B},\widehat{\Delta}=\ \left(  H_{1}\widehat{B}+H_{3}\right)  \dot{P},$

$I_{r\times v}=\left(  I_{ij}\right)  _{r\times v}$ with $I_{ij}=I\left(
i=j\right)  .$

As an example $\widehat{\mathbb{IF}}_{33,\widetilde{\psi}_{k}}%
=\widehat{\mathbb{U}}_{3}\left(  _{0}^{k},_{0}^{k}\right)  .$ We can identify
$\left(  _{k\left(  0\right)  ,}^{k\left(  1\right)  ,}%
%TCIMACRO{\QATOP{k^{\ast}\left(  1\right)  }{k^{\ast}\left(  0\right)  }}%
%BeginExpansion
\genfrac{}{}{0pt}{}{k^{\ast}\left(  1\right)  }{k^{\ast}\left(  0\right)  }%
%EndExpansion
\right)  $ with the rectangle in $R^{2}$ defined by $\left\{  \left(
r_{1},r_{2}\right)  ;k\left(  0\right)  +1\leq r_{1}\leq k\left(  1\right)
,k^{\ast}\left(  0\right)  +1\leq r_{1}\leq k^{\ast}\left(  1\right)
\right\}  $ with $\left(  k\left(  0\right)  +1,k^{\ast}\left(  0\right)
+1\right)  $ and $\left(  k\left(  1\right)  +1,k^{\ast}\left(  1\right)
+1\right)  ,$ respectively, the vertices closest and furthest from the origin.
Thus $\widehat{\mathbb{IF}}_{33,\widetilde{\psi}_{k}}=\widehat{\mathbb{U}}%
_{3}\left(  _{0}^{k},_{0}^{k}\right)  $ is identified with the rectangle
$\left(  _{0}^{k},_{0}^{k}\right)  .$ Indeed we can write
\begin{align*}
&  \widehat{U}_{3}\left(  _{k\left(  0\right)  ,}^{k\left(  1\right)  ,}%
%TCIMACRO{\QATOP{k^{\ast}\left(  1\right)  }{k^{\ast}\left(  0\right)  }}%
%BeginExpansion
\genfrac{}{}{0pt}{}{k^{\ast}\left(  1\right)  }{k^{\ast}\left(  0\right)  }%
%EndExpansion
\right) \\
&  =\sum_{\left(  s_{1},s_{2}\right)  \in\left(  _{k\left(  0\right)
,}^{k\left(  1\right)  ,}%
%TCIMACRO{\QATOP{k^{\ast}\left(  1\right)  }{k^{\ast}\left(  0\right)  }}%
%BeginExpansion
\genfrac{}{}{0pt}{}{k^{\ast}\left(  1\right)  }{k^{\ast}\left(  0\right)  }%
%EndExpansion
\right)  }\left\{
\begin{array}
[c]{c}%
\widehat{\epsilon}_{i_{1}}z_{s_{1}}\left(  X_{i_{1}}\right)  \times\\
\left\{  \left[  \dot{B}\dot{P}H_{1}\right]  _{i_{2}}z_{s_{1}}\left(
X_{i_{2}}\right)  z_{s_{2}}\left(  X_{i_{2}}\right)  -I\left[  s_{1}%
=s_{2}\right]  \right\}  z_{s_{2}}\left(  X_{i_{3}}\right)  \widehat{\Delta
}_{i_{3}}%
\end{array}
\right\}
\end{align*}
where, here and below, $s_{1}$ and $s_{2}$ are restricted to be integers, so
$\left(  s_{1},s_{2}\right)  \in\left(  _{k\left(  0\right)  ,}^{k\left(
1\right)  ,}%
%TCIMACRO{\QATOP{k^{\ast}\left(  1\right)  }{k^{\ast}\left(  0\right)  }}%
%BeginExpansion
\genfrac{}{}{0pt}{}{k^{\ast}\left(  1\right)  }{k^{\ast}\left(  0\right)  }%
%EndExpansion
\right)  $ are the lattice points of the rectangle.

We next study the variance of $\widehat{\mathbb{U}}_{3}\left(  _{k\left(
0\right)  ,}^{k\left(  1\right)  ,}%
%TCIMACRO{\QATOP{k^{\ast}\left(  1\right)  }{k^{\ast}\left(  0\right)  }}%
%BeginExpansion
\genfrac{}{}{0pt}{}{k^{\ast}\left(  1\right)  }{k^{\ast}\left(  0\right)  }%
%EndExpansion
\right)  .$ It follows from Theorem \ref{var_multi} above that the number of
lattice points in $\left(  _{k\left(  0\right)  ,}^{k\left(  1\right)  ,}%
%TCIMACRO{\QATOP{k^{\ast}\left(  1\right)  }{k^{\ast}\left(  0\right)  }}%
%BeginExpansion
\genfrac{}{}{0pt}{}{k^{\ast}\left(  1\right)  }{k^{\ast}\left(  0\right)  }%
%EndExpansion
\right)  $ is proportional to the variance of $\widehat{\mathbb{U}}_{3}\left(
_{k\left(  0\right)  ,}^{k\left(  1\right)  ,}%
%TCIMACRO{\QATOP{k^{\ast}\left(  1\right)  }{k^{\ast}\left(  0\right)  }}%
%BeginExpansion
\genfrac{}{}{0pt}{}{k^{\ast}\left(  1\right)  }{k^{\ast}\left(  0\right)  }%
%EndExpansion
\right)  $ so if $k\left(  0\right)  <<k\left(  1\right)  $ and $k^{\ast
}\left(  0\right)  <<k^{\ast}\left(  1\right)  $ then $\ var\left[
\widehat{\mathbb{U}}_{3}\left(  _{k\left(  0\right)  ,}^{k\left(  1\right)  ,}%
%TCIMACRO{\QATOP{k^{\ast}\left(  1\right)  }{k^{\ast}\left(  0\right)  }}%
%BeginExpansion
\genfrac{}{}{0pt}{}{k^{\ast}\left(  1\right)  }{k^{\ast}\left(  0\right)  }%
%EndExpansion
\right)  \right]  $ and $var\left[  \widehat{\mathbb{U}}_{3}\left(
_{0,}^{k\left(  1\right)  ,}%
%TCIMACRO{\QATOP{k^{\ast}\left(  1\right)  }{0}}%
%BeginExpansion
\genfrac{}{}{0pt}{}{k^{\ast}\left(  1\right)  }{0}%
%EndExpansion
\right)  \right]  $ are both of order $k\left(  1\right)  k^{\ast}\left(
1\right)  /n^{3}.$ Hence the order of the variance of $\widehat{\mathbb{U}%
}_{3}\left(  _{k\left(  0\right)  ,}^{k\left(  1\right)  ,}%
%TCIMACRO{\QATOP{k^{\ast}\left(  1\right)  }{k^{\ast}\left(  0\right)  }}%
%BeginExpansion
\genfrac{}{}{0pt}{}{k^{\ast}\left(  1\right)  }{k^{\ast}\left(  0\right)  }%
%EndExpansion
\right)  $ is determined by the vertex of the rectangle $\left(  _{k\left(
0\right)  ,}^{k\left(  1\right)  ,}%
%TCIMACRO{\QATOP{k^{\ast}\left(  1\right)  }{k^{\ast}\left(  0\right)  }}%
%BeginExpansion
\genfrac{}{}{0pt}{}{k^{\ast}\left(  1\right)  }{k^{\ast}\left(  0\right)  }%
%EndExpansion
\right)  $ furthest from the origin.

In contrast by a theorem in the appendix, the mean $E\left[
\widehat{\mathbb{U}}_{3}\left(  _{k\left(  0\right)  ,}^{k\left(  1\right)  ,}%
%TCIMACRO{\QATOP{k^{\ast}\left(  1\right)  }{k^{\ast}\left(  0\right)  }}%
%BeginExpansion
\genfrac{}{}{0pt}{}{k^{\ast}\left(  1\right)  }{k^{\ast}\left(  0\right)  }%
%EndExpansion
\right)  \right]  $ is
\[
\widehat{E}\left(  \widehat{\Pi}\left[  \delta b|\overline{Z}_{k\left(
0\right)  }^{k\left(  1\right)  }\right]  \delta g\widehat{Q}^{2}\widehat{\Pi
}\left[  \delta p|\overline{Z}_{k^{\ast}\left(  0\right)  }^{k^{\ast}\left(
1\right)  }\right]  \right)  \left(  1+o_{p}\left(  1\right)  \right)
\]
with $\delta b$ $=\dot{P}\widehat{E}\left(  H_{1}|X\right)  \left(
\widehat{B}-B\right)  ,$ $\delta p=\dot{B}\widehat{E}\left(  H_{1}|X\right)
\left(  \widehat{P}-P\right)  ,$ $\delta g=\frac{g\left(  X\right)
-\widehat{g}\left(  X\right)  }{\widehat{g}\left(  X\right)  }$ and
$\widehat{Q}^{2}=\dot{B}\dot{P}\widehat{E}\left(  H_{1}|X\right)  .$ It
follows that if $k\left(  0\right)  <<k\left(  1\right)  $ and $k^{\ast
}\left(  0\right)  <<k^{\ast}\left(  1\right)  $ then $E\left[
\widehat{\mathbb{U}}_{3}\left(  _{k\left(  0\right)  ,}^{k\left(  1\right)  ,}%
%TCIMACRO{\QATOP{k^{\ast}\left(  1\right)  }{k^{\ast}\left(  0\right)  }}%
%BeginExpansion
\genfrac{}{}{0pt}{}{k^{\ast}\left(  1\right)  }{k^{\ast}\left(  0\right)  }%
%EndExpansion
\right)  \right]  $ and $E\left[  \widehat{\mathbb{U}}_{3}\left(  _{k\left(
0\right)  ,}^{\infty,}%
%TCIMACRO{\QATOP{\infty}{k^{\ast}\left(  0\right)  }}%
%BeginExpansion
\genfrac{}{}{0pt}{}{\infty}{k^{\ast}\left(  0\right)  }%
%EndExpansion
\right)  \right]  $ are both of order
\[
O_{p}\left[  k\left(  0\right)  ^{-\beta_{b}}k^{\ast}\left(  0\right)
^{-\beta_{p}}\left(  n/\log n\right)  ^{\frac{-\beta_{g}}{2\beta_{g}+1}%
}\right]  .
\]
To see this for $E\left[  \widehat{\mathbb{U}}_{3}\left(  _{k\left(  0\right)
,}^{k\left(  1\right)  ,}%
%TCIMACRO{\QATOP{k^{\ast}\left(  1\right)  }{k^{\ast}\left(  0\right)  }}%
%BeginExpansion
\genfrac{}{}{0pt}{}{k^{\ast}\left(  1\right)  }{k^{\ast}\left(  0\right)  }%
%EndExpansion
\right)  \right]  $, we $`$sup out' $\left\vert \delta g\widehat{Q}%
^{2}\right\vert $ from $\widehat{E}\left(  \left\vert \widehat{\Pi}\left[
\delta b|\overline{Z}_{k\left(  0\right)  }^{k\left(  1\right)  }\right]
\delta g\widehat{Q}^{2}\widehat{\Pi}\left[  \delta p|\overline{Z}_{k^{\ast
}\left(  0\right)  }^{k^{\ast}\left(  1\right)  }\right]  \right\vert \right)
$ which is
\[
O_{p}\left[  \left(  n/\log n\right)  ^{\frac{-\beta_{g}}{2\beta_{g}+1}%
}\right]  \widehat{E}\left(  \left\vert \widehat{\Pi}\left[  \delta
b|\overline{Z}_{k\left(  0\right)  }^{k\left(  1\right)  }\right]
\widehat{\Pi}\left[  \delta p|\overline{Z}_{k^{\ast}\left(  0\right)
}^{k^{\ast}\left(  1\right)  }\right]  \right\vert \right)  .
\]
We then apply Cauchy Schwartz to $\widehat{E}\left(  \left\vert \widehat{\Pi
}\left[  \delta b|\overline{Z}_{k\left(  0\right)  }^{k\left(  1\right)
}\right]  \widehat{\Pi}\left[  \delta p|\overline{Z}_{k^{\ast}\left(
0\right)  }^{k^{\ast}\left(  1\right)  }\right]  \right\vert \right)  ,$
noting that $\widehat{E}\left(  \left\{  \widehat{\Pi}\left[  \delta
b|\overline{Z}_{k\left(  0\right)  }^{k\left(  1\right)  }\right]  \right\}
^{2}\right)  ^{1/2}=O\left(  k\left(  0\right)  ^{-\beta_{b}}\right)  $. Again
a more careful argument using H\"{o}lder's inequality would show the log
factor is unnecessary. Hence the order of the mean of $\widehat{\mathbb{U}%
}_{3}\left(  _{k\left(  0\right)  ,}^{k\left(  1\right)  ,}%
%TCIMACRO{\QATOP{k^{\ast}\left(  1\right)  }{k^{\ast}\left(  0\right)  }}%
%BeginExpansion
\genfrac{}{}{0pt}{}{k^{\ast}\left(  1\right)  }{k^{\ast}\left(  0\right)  }%
%EndExpansion
\right)  $ is determined by the vertex of the rectangle $\left(  _{k\left(
0\right)  ,}^{k\left(  1\right)  ,}%
%TCIMACRO{\QATOP{k^{\ast}\left(  1\right)  }{k^{\ast}\left(  0\right)  }}%
%BeginExpansion
\genfrac{}{}{0pt}{}{k^{\ast}\left(  1\right)  }{k^{\ast}\left(  0\right)  }%
%EndExpansion
\right)  $ closest to the origin.

\textbf{Motivation:} With this background we are ready to motivate our new
estimator. Recall from Section \ref{DR_CI_Section}, that with $g$ known, the
choice $k_{opt}^{g}\left(  2\right)  =n^{\frac{2}{1+4\beta/d}}$ gives $\left(
\widehat{\psi}_{2,k_{opt}^{g}\left(  2\ \right)  }-\psi\right)  =O_{p}\left(
n^{-\frac{4\beta\ }{4\beta+d}}\right)  $ because the truncation bias
$\left\vert \widetilde{\psi}_{k_{opt}^{g}\left(  2\right)  }-\psi\right\vert $
and variance are of order $n^{-\frac{4\beta\ }{4\beta+d}}$ and the estimation
bias is zero. Any choice of $k$ larger than $k_{opt}^{g}\left(  2\right)  $
will result in a slower rate of convergence.

However, when $g$ is unknown and thus estimated, $\widehat{\psi}%
_{2,k_{opt}^{g}\left(  2\ \right)  }-\psi$ does not attain the optimal rate of
convergence because the estimation bias $n^{-\left(  \frac{\beta_{g}}%
{2\beta_{g}+d}+\frac{\beta_{b}}{d+2\beta_{b}}+\frac{\beta_{p}}{d+2\beta_{p}%
}\right)  }$ exceeds $n^{-\frac{4\beta\ }{4\beta+d}}.$ The estimator
$\widehat{\psi}_{3,k_{opt}^{g}\left(  2\ \right)  }=\widehat{\psi}%
_{2,k_{opt}^{g}\left(  2\ \right)  }+\widehat{\mathbb{U}}_{3}\left(
_{0,}^{k_{opt}^{g}\left(  2\ \right)  ,}%
%TCIMACRO{\QATOP{k_{opt}^{g}\left(  2\ \right)  }{0}}%
%BeginExpansion
\genfrac{}{}{0pt}{}{k_{opt}^{g}\left(  2\ \right)  }{0}%
%EndExpansion
\right)  $ also fails to attain the rate $n^{-\frac{4\beta\ }{4\beta+d}}$
because it has variance of the order of
\[
\frac{k_{opt}^{g}\left(  2\ \right)  }{n}\frac{k_{opt}^{g}\left(  2\ \right)
}{n^{2}}=O\left(  \frac{n^{\frac{2}{1+4\beta/d}}}{n}n^{-\frac{8\beta}%
{4\beta+d}}\right)  ,
\]
which exceeds $O\left(  n^{-\frac{8\beta}{4\beta+d}}\right)  .$ On the other
hand, $\widehat{\psi}_{3,k_{opt}^{g}\left(  2\ \right)  }$ has bias of
$O_{p}\left(  n^{-\frac{4\beta}{4\beta+d}}\right)  $ because the truncation
bias is $O_{p}\left(  n^{-\frac{4\beta}{4\beta+d}}\right)  $ and the
estimation bias $O_{p}\left(  n^{-\left(  \frac{2\beta_{g}}{2\beta_{g}%
+d}+\frac{\beta_{b}}{d+2\beta_{b}}+\frac{\beta_{p}}{d+2\beta_{p}}\right)
}\right)  $ is also $O_{p}\left(  n^{-\frac{4\beta}{4\beta+d}}\right)  $ under
our assumption $\left(  \ref{fourth}\right)  $. Our strategy will be to try to
replace the term $\widehat{\mathbb{U}}_{3}\left(  _{0,}^{k_{opt}^{g}\left(
2\ \right)  ,}%
%TCIMACRO{\QATOP{k_{opt}^{g}\left(  2\ \right)  }{0}}%
%BeginExpansion
\genfrac{}{}{0pt}{}{k_{opt}^{g}\left(  2\ \right)  }{0}%
%EndExpansion
\right)  $ in the estimator $\widehat{\psi}_{3,k_{opt}^{g}\left(  2\ \right)
}=\widehat{\psi}_{2,k_{opt}^{g}\left(  2\ \right)  }+\widehat{\mathbb{U}}%
_{3}\left(  _{0,}^{k_{opt}^{g}\left(  2\ \right)  ,}%
%TCIMACRO{\QATOP{k_{opt}^{g}\left(  2\ \right)  }{0}}%
%BeginExpansion
\genfrac{}{}{0pt}{}{k_{opt}^{g}\left(  2\ \right)  }{0}%
%EndExpansion
\right)  $ by%

\[
\widehat{U}_{3}\left(  \Omega\right)  =\sum_{\left(  s_{1},s_{2}\right)
\in\Omega}\left\{
\begin{array}
[c]{c}%
\widehat{\epsilon}_{i_{1}}z_{s_{1}}\left(  X_{i_{1}}\right)  z_{s_{2}}\left(
X_{i_{3}}\right)  \widehat{\Delta}_{i_{3}}\times\\
\left\{  \left[  \dot{B}\dot{P}H_{1}\right]  _{i_{2}}z_{s_{1}}\left(
X_{i_{2}}\right)  z_{s_{2}}\left(  X_{i_{2}}\right)  -I\left[  s_{1}%
=s_{2}\right]  \right\}
\end{array}
\right\}
\]

where $\Omega$ is a subset of the rectangle $\left(  _{0,}^{k_{opt}^{g}\left(
2\ \right)  ,}%
%TCIMACRO{\QATOP{k_{opt}^{g}\left(  2\ \right)  }{0}}%
%BeginExpansion
\genfrac{}{}{0pt}{}{k_{opt}^{g}\left(  2\ \right)  }{0}%
%EndExpansion
\right)  $ such that $var\left(  \widehat{U}_{3}\left(  \Omega\right)
\right)  \asymp n^{-\frac{8\beta\ }{4\beta+d}}$ but the additional bias
\begin{align*}
&  E\left[  \widehat{\mathbb{U}}_{3}\left(  _{0,}^{k_{opt}^{g}\left(
2\ \right)  ,}%
%TCIMACRO{\QATOP{k_{opt}^{g}\left(  2\ \right)  }{0}}%
%BeginExpansion
\genfrac{}{}{0pt}{}{k_{opt}^{g}\left(  2\ \right)  }{0}%
%EndExpansion
\right)  -\widehat{U}_{3}\left(  \Omega\right)  \right] \\
&  =E\left[  \widehat{\mathbb{U}}_{3}\left(  \left(  _{0,}^{k_{opt}^{g}\left(
2\ \right)  ,}%
%TCIMACRO{\QATOP{k_{opt}^{g}\left(  2\ \right)  }{0}}%
%BeginExpansion
\genfrac{}{}{0pt}{}{k_{opt}^{g}\left(  2\ \right)  }{0}%
%EndExpansion
\right)  \backslash\Omega\right)  \right] \\
&  \equiv E\left[  \sum_{\left(  s_{1},s_{2}\right)  \in\left(  _{0,}%
^{k_{opt}^{g}\left(  2\ \right)  ,}%
%TCIMACRO{\QATOP{k_{opt}^{g}\left(  2\ \right)  }{0}}%
%BeginExpansion
\genfrac{}{}{0pt}{}{k_{opt}^{g}\left(  2\ \right)  }{0}%
%EndExpansion
\right)  \backslash\Omega}\left\{
\begin{array}
[c]{c}%
\widehat{\epsilon}_{i_{1}}z_{s_{1}}\left(  X_{i_{1}}\right)  \left\{
\begin{array}
[c]{c}%
\left[  \dot{B}\dot{P}H_{1}\right]  _{i_{2}}z_{s_{1}}\left(  X_{i_{2}}\right)
z_{s_{2}}\left(  X_{i_{2}}\right) \\
-I\left[  s_{1}=s_{2}\right]
\end{array}
\right\} \\
\times z_{s_{2}}\left(  X_{i_{3}}\right)  \widehat{\Delta}_{i_{3}}%
\end{array}
\right\}  \right]
\end{align*}
is $O_{p}\left(  n^{-\frac{4\beta\ }{4\beta+d}}\right)  .$ This approach will
succeed if we can chose $\Omega$ and thus $\left(  _{0,}^{k_{opt}^{g}\left(
2\ \right)  ,}%
%TCIMACRO{\QATOP{k_{opt}^{g}\left(  2\ \right)  }{0}}%
%BeginExpansion
\genfrac{}{}{0pt}{}{k_{opt}^{g}\left(  2\ \right)  }{0}%
%EndExpansion
\right)  \backslash\Omega$ to be sums of rectangles (whose number does not
increase with $n)$ such that (i) each rectangle in $\left(  _{0,}^{k_{opt}%
^{g}\left(  2\ \right)  ,}%
%TCIMACRO{\QATOP{k_{opt}^{g}\left(  2\ \right)  }{0}}%
%BeginExpansion
\genfrac{}{}{0pt}{}{k_{opt}^{g}\left(  2\ \right)  }{0}%
%EndExpansion
\right)  \backslash\Omega$ has its closest vertex to the origin, say $\left(
k\left(  0\right)  ,k^{\ast}\left(  0\right)  \right)  ,$ satisfying
$\ O_{p}\left[  k\left(  0\right)  ^{-\beta_{b}}k^{\ast}\left(  0\right)
^{-\beta_{p}}n^{\frac{-\beta_{g}}{2\beta_{g}+1}}\right]  \leq n^{-\frac
{4\beta\ }{4\beta+d}}$ and (ii) simultaneously each rectangle in $\Omega$ has
its furthest vertex from the origin, say $\left(  k\left(  1\right)  ,k^{\ast
}\left(  1\right)  \right)  ,$ satisfying $O\left(  k\left(  1\right)
k^{\ast}\left(  1\right)  /n^{3}\right)  =O\left(  n^{-\frac{8\beta\ }%
{4\beta+d}}\right)  .$

We index the vertices of our set of rectangles as follows. Consider a natural
number $J$ and a set of non-negative integers $\mathcal{K}_{J,tot}=\left\{
k_{-2},k_{-1},k_{0},.....,k_{2J},k_{2J+1},k_{2J+2}\right\}  $ satisfying
$0=k_{-2}<k_{0}<k_{2}<,...,<k_{2J-2}<k_{2J}<k_{2J+2}=k_{2J+1}<k_{2J-1}%
<,...,<k_{1}<k_{-1} $

Note the elements with even subscripts increase from $0$ to $2J+2$ while
elements with odd subscripts decrease from $-1$ to $2J-1.$ Further the
smallest element with odd subscript equals the largest element with even
subscript. We will use two such sets $\mathcal{K}_{b,J,tot}$ and
$\mathcal{K}_{p,J,tot}$ with corresponding elements $k_{bl}$ and $k_{pl}%
\ $with $\ k_{b,-1}=k_{p,-1}$.$\ $

$\ $Set for $s\in\left\{  -1,0,..,J\right\}  $
\begin{align}
k_{b,2s+1}  &  =n^{\frac{3d+4\beta}{\left(  d+4\beta\right)  }}/k_{p,2s+2}%
,\label{variance1}\\
k_{p,2s+1}  &  =n^{\frac{3d+4\beta}{\left(  d+4\beta\right)  }}/k_{b,2s+2}%
,\text{ so}\label{variance3}\\
\frac{k_{p,2s+1}k_{b,2s+2}}{n^{3}}  &  =\frac{k_{b,2s+1}k_{p,2s+2}}{n^{3}%
}=n^{-\frac{8\beta}{4\beta+d}}\nonumber
\end{align}
We leave $J$,$\mathcal{K}_{p,J}=\left\{  k_{p,2s},s=0,...,J+1\right\}  ,$ and
$\mathcal{K}_{b,J}=\left\{  k_{b,2s},s=0,...,J+1\right\}  $ unspecified for
now but derive optimal values below.

Let $\Omega=\Omega\left(  \mathcal{K}_{pJ},\mathcal{K}_{bJ}\ \right)  $ be the
union of rectangles%

\[
\Omega\left(  \mathcal{K}_{pJ},\mathcal{K}_{bJ}\right)  =\left\{  \cup
_{s=0}^{J}\left(  _{k_{p,2s-2\ \ ,}k_{b,2s-2\ \ }}^{k_{p,2s-1},k_{b,2s}%
}\right)  \cup\left(  _{k_{p,2s-2\ \ }k_{b,2s\ \ }}^{k_{p,2s},k_{b,2s-1}%
}\right)  \right\}  \cup\left(  _{k_{p,2J\ \ }k_{b,2J\ }}^{k_{p,2J+1}%
k_{b,2J+1}}\right)
\]

The points $\left(  k_{p,2s+1},k_{b,2s+2}\right)  ,\left(  k_{p,2s+2}%
,k_{b,2s+1}\right)  $ for $s=-1,0...,J+1$ lie on a hyperbola $Hy\ $ in $R^{2}$
defined by $Hy\ =\left\{  \left(  r_{1},r_{2}\right)  ;r_{1}r_{2}%
=n^{\frac{3d+4\beta}{\left(  d+4\beta\right)  }}\right\}  $ shown in figure 1
for $J=2.$ The set $\Omega\left(  \mathcal{K}_{pJ},\mathcal{K}_{bJ}\right)
\subset\left(  _{0,}^{k_{opt}^{g}\left(  2\ \right)  ,}%
%TCIMACRO{\QATOP{k_{opt}^{g}\left(  2\ \right)  }{0}}%
%BeginExpansion
\genfrac{}{}{0pt}{}{k_{opt}^{g}\left(  2\ \right)  }{0}%
%EndExpansion
\right)  $ lies below $Hy.$%

%TCIMACRO{\FRAME{ftbpF}{4.9165in}{4.3154in}{0pt}{}{}{Figure}%
%{\special{ language "Scientific Word";  type "GRAPHIC";
%maintain-aspect-ratio TRUE;  display "USEDEF";  valid_file "T";
%width 4.9165in;  height 4.3154in;  depth 0pt;  original-width 2.4621in;
%original-height 2.1577in;  cropleft "0";  croptop "1";  cropright "1";
%cropbottom "0";  tempfilename 'NYNBDB00.wmf';tempfile-properties "XPR";}} }%
%BeginExpansion

%\begin{figure}[ptb]%
%\centering
%\includegraphics[
%natheight=2.157700in,
%natwidth=2.462100in,
%height=4.3154in,
%width=4.9165in
%]%
%{NYNBDB00.wmf}%
%\end{figure}
%EndExpansion

\begin{figure}
\includegraphics[scale=0.5]{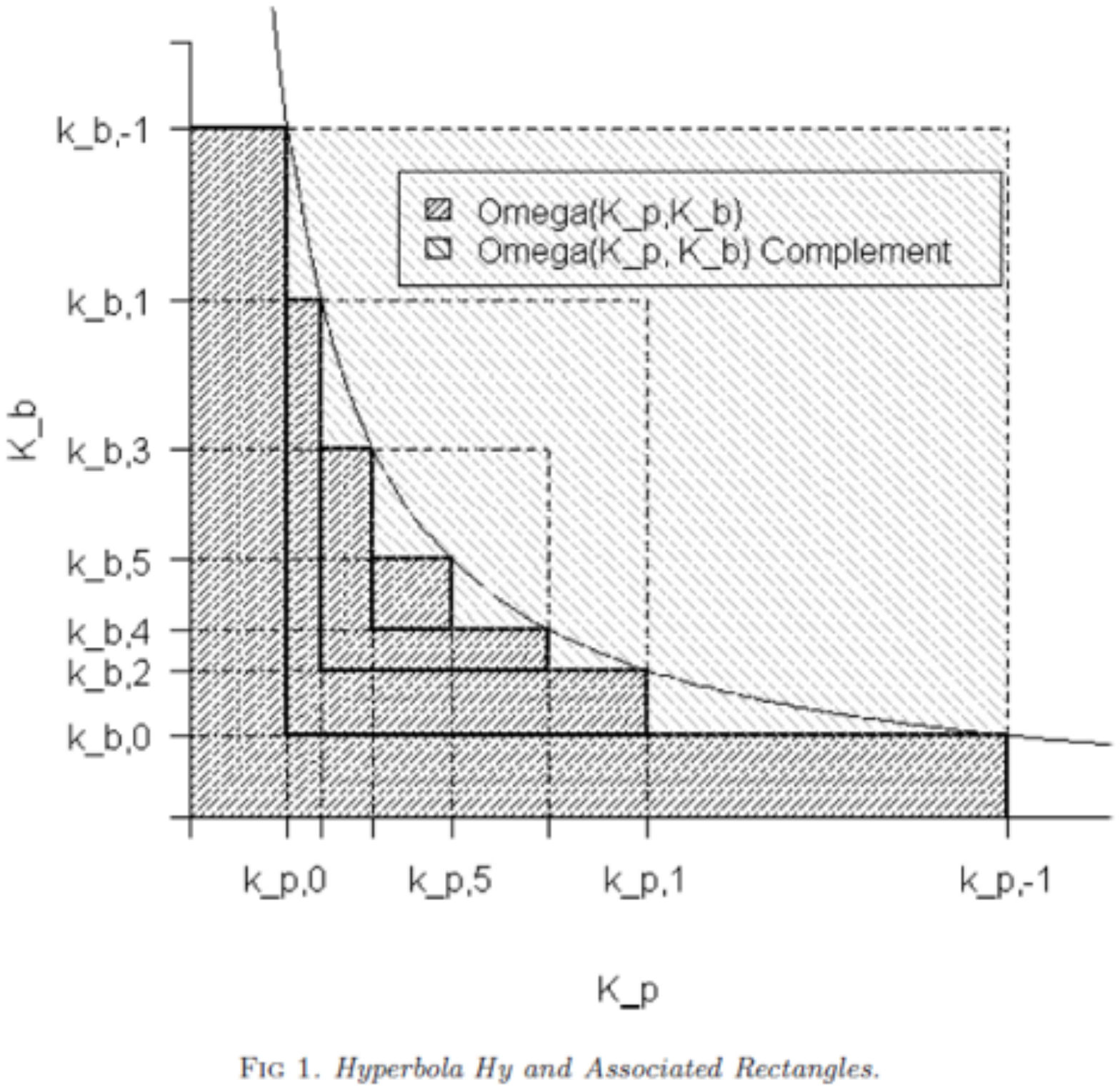}
\end{figure}

Define%
\[
\widehat{\overline{\psi}}_{3,\left(  \mathcal{K}_{pJ},\mathcal{K}_{bJ}\right)
}=\widehat{\psi}_{2,k_{-1}\ }+\widehat{\mathbb{U}}_{3}\left(  \Omega\left(
\mathcal{K}_{pJ},\mathcal{K}_{bJ}\right)  \right)  .
\]

We then have

\begin{theorem}
\label{uy}(i): The estimator $\widehat{\overline{\psi}}_{3,\left(
\mathcal{K}_{pJ},\mathcal{K}_{bJ}\right)  }$ has variance of the order of
\[
\frac{k_{-1}}{n^{2}}+\left(  2J+1\right)  n^{-\frac{8\beta}{4\beta+d}}%
\]
and bias $E\left(  \widehat{\overline{\psi}}_{3,\left(  \mathcal{K}%
_{pJ},\mathcal{K}_{bJ}\right)  }\right)  -\psi$ of order
\begin{gather*}
O_{p}\left\{  n^{-\frac{\beta_{g}}{2\beta_{g}+d}}\left(  \sum_{s=0}^{J}\left(
k_{p,2s+1}^{-\beta_{p}/d}k_{b,2s}^{-\beta_{b}/d}+k_{b,2s+1}^{-\beta_{b}%
/d}k_{p,2s}^{-\beta_{p}/d}\right)  \right)  \right\} \\
+O_{p}\left(  n^{-\left(  \frac{2\beta_{g}}{2\beta_{g}+d}+\frac{\beta_{b}%
}{d+2\beta_{b}}+\frac{\beta_{p}}{d+2\beta_{p}}\right)  }\right)  +O_{p}\left(
k_{-1}^{-\left(  \beta_{p}+\beta_{b}\right)  /d}\right)
\end{gather*}

\end{theorem}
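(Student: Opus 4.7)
The plan is to write $\widehat{\overline{\psi}}_{3,(\mathcal{K}_{pJ},\mathcal{K}_{bJ})} = \widehat{\psi}_{3,k_{-1}} - \widehat{\mathbb{U}}_3\bigl((_0^{k_{-1}},{}_0^{k_{-1}}) \setminus \Omega\bigr)$ where $\widehat{\psi}_{3,k_{-1}} = \widehat{\psi}_{2,k_{-1}} + \widehat{\mathbb{U}}_3(_0^{k_{-1}},{}_0^{k_{-1}})$, and to analyze variance and bias separately, using (a) additivity of variances of degenerate third-order $U$-statistics supported on disjoint lattice rectangles and (b) the mean formula for $\widehat{\mathbb{U}}_3$ stated just above the theorem. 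This is consistent with the ``staircase'' picture in the figure: $\Omega$ consists of the $2J+1$ rectangles lying below the hyperbola $Hy$, while its complement inside $(_0^{k_{-1}},{}_0^{k_{-1}})$ splits into a matching staircase of rectangles lying above $Hy$.

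For the variance, I would first observe that, by construction \eqref{variance1}--\eqref{variance3}, every rectangle in $\Omega$ has its far vertex on $Hy$, i.e.\ with coordinate product exactly $n^{(3d+4\beta)/(d+4\beta)}$. By Theorem~\ref{var_multi} each such piece has variance $\asymp k_{(1)}k^{\ast}_{(1)}/n^3 = n^{-8\beta/(4\beta+d)}$. Since the chosen basis satisfies $\widehat{E}[\overline{Z}_k \overline{Z}_k^T] = I$ from \eqref{ON}, the third-order kernels supported on disjoint index rectangles are orthogonal in $L_2(\widehat{F})$, so the $(2J+1)$ variances add up to $(2J+1)n^{-8\beta/(4\beta+d)}$. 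Cross-covariances with $\widehat{\mathbb{IF}}_{1,\widetilde{\psi}_{k_{-1}}}$ and $\widehat{\mathbb{IF}}_{22,\widetilde{\psi}_{k_{-1}}}$ vanish because different-order degenerate $U$-statistics are uncorrelated. Theorem~\ref{var_if} then contributes $\asymp 1/n + k_{-1}/n^2 \asymp k_{-1}/n^2$ to the total variance (the regime of interest being $k_{-1} \gg n$). Passing from variance under $\widehat{\theta}$ to variance under $\theta$ is handled by the argument in Theorem~\ref{3.19}.

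For the bias, by linearity $E[\widehat{\overline{\psi}}_{3,(\mathcal{K}_{pJ},\mathcal{K}_{bJ})}] - \psi$ equals $E[\widehat{\psi}_{3,k_{-1}}] - \psi$ minus $E[\widehat{\mathbb{U}}_3((_0^{k_{-1}},{}_0^{k_{-1}}) \setminus \Omega)]$. The first piece is $TB_{k_{-1}} + EB_3$, which by Theorems~\ref{TBrate} and \ref{EBrate} yields the two terms $O_p(k_{-1}^{-(\beta_p+\beta_b)/d})$ and $O_p\bigl(n^{-(2\beta_g/(2\beta_g+d) + \beta_b/(d+2\beta_b) + \beta_p/(d+2\beta_p))}\bigr)$. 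For the second piece I would decompose $(_0^{k_{-1}},{}_0^{k_{-1}}) \setminus \Omega$ into the matching staircase of $O(J)$ disjoint rectangles whose vertex closest to the origin is one of the pairs $(k_{p,2s+1},k_{b,2s})$ or $(k_{b,2s+1},k_{p,2s})$; apply to each the mean bound stated just before the theorem, namely $O_p\bigl(k_{(0)}^{-\beta_b/d}\,k_{(0)}^{\ast\,-\beta_p/d}\,(n/\log n)^{-\beta_g/(2\beta_g+d)}\bigr)$, derived via the Cauchy--Schwarz/sup argument sketched there; and sum to get the first displayed bias term.

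The main obstacle, I expect, is twofold. First, the mean-formula step for individual rectangles in the staircase decomposition must be carried out carefully because the kernel of $\widehat{\mathbb{U}}_3$ on a sub-rectangle is not identical to a truncated $\widehat{\mathbb{IF}}_{33}$: the identity summand $I[s_1 = s_2]$ is present only on the diagonal lattice points, so one must track which rectangles straddle the diagonal and which do not, and verify that the off-diagonal pieces genuinely produce the projection-operator expression in $\delta g\,\widehat{Q}^2$ used in the mean bound. Second, one must verify that the orthogonality-based variance sum genuinely avoids a factor-$J^2$ loss, which requires checking (as in Theorem~\ref{3.19}) that the cross-covariances between kernels supported on distinct rectangles in $\Omega$ are smaller than the within-rectangle variances by a factor going to zero, so that their contribution remains negligible when $J$ is allowed to grow with $n$ as in the subsequent optimization.
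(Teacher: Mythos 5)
Your proposal follows essentially the same route as the paper's proof: the same decomposition of $\widehat{\overline{\psi}}_{3,\left(\mathcal{K}_{pJ},\mathcal{K}_{bJ}\right)}$ into $\widehat{\psi}_{2,k_{-1}}$ plus the $2J+1$ rectangles of $\Omega$ (each with far vertex on the hyperbola contributing $n^{-\frac{8\beta}{4\beta+d}}$ to the variance, and $\widehat{\psi}_{2,k_{-1}}$ contributing $k_{-1}/n^{2}$), and the same bias split into $E\left(\widehat{\psi}_{3,k_{-1}}\right)-\psi$ plus the mean of $\widehat{\mathbb{U}}_{3}$ over the complementary staircase of rectangles, bounded via their vertices closest to the origin. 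The paper's proof is simply a terser version of this; your added care about the diagonal indicator terms and cross-rectangle covariances goes beyond what the paper records but does not alter the argument.
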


Proof: Each of the $2J+1$ rectangles whose union is $\Omega\left(
\mathcal{K}_{pJ},\mathcal{K}_{bJ}\right)  $ has $\left(  k_{p,2s+1}%
,k_{b,2s+2}\right)  $ or $\left(  k_{p,2s+2},k_{b,2s+1}\right)  $ for some
$s\in\left\{  -1,0,..,J\right\}  $ as the vertex furthest from the origin and
thus contributes $\frac{k_{p,2s+1}k_{b,2s+2}}{n^{3}}=n^{-\frac{8\beta}%
{4\beta+d}}$ to the variance of $\widehat{\overline{\psi}}_{3,\left(
\mathcal{K}_{pJ},\mathcal{K}_{bJ}\right)  }.$ The variance of $\widehat{\psi
}_{2,k_{-1}}\asymp\frac{k_{-1}}{n^{2}}.$ Now
\begin{align*}
&  E\left(  \widehat{\overline{\psi}}_{3,\left(  \mathcal{K}_{pJ}%
,\mathcal{K}_{bJ},\right)  }\right)  -\psi\\
&  =\left\{  E\left(  \widehat{\psi}_{3,k_{-1}}\right)  -\psi\right\}
\,+\left\{  E\left[  \widehat{\mathbb{U}}_{3}\left\{  \Omega\left(  \left(
\mathcal{K}_{pJ},\mathcal{K}_{bJ}\right)  \right)  \right\}  \right]
-E\left[  \widehat{\mathbb{U}}_{3}\left\{  \left(  _{0,}^{k_{-1},}%
%TCIMACRO{\QATOP{k_{-1}}{0}}%
%BeginExpansion
\genfrac{}{}{0pt}{}{k_{-1}}{0}%
%EndExpansion
\right)  \right\}  \right]  \right\} \\
&  =O_{p}\left(  k_{-1}^{-\left(  \beta_{p}+\beta_{b}\right)  /d}\right)
+O_{p}\left(  n^{-\left(  \frac{2\beta_{g}}{2\beta_{g}+d}+\frac{\beta_{b}%
}{d+2\beta_{b}}+\frac{\beta_{p}}{d+2\beta_{p}}\right)  }\right) \\
&  +E\left[  \widehat{\mathbb{U}}_{3}\left\{  \left(  _{0,}^{k_{-1},}%
%TCIMACRO{\QATOP{k_{-1}}{0}}%
%BeginExpansion
\genfrac{}{}{0pt}{}{k_{-1}}{0}%
%EndExpansion
\right)  \backslash\Omega\left(  \left(  \mathcal{K}_{pJ},\mathcal{K}%
_{bJ}\right)  \right)  \right\}  \right]
\end{align*}
As is evident from Figure 1, $\Omega^{c}(\mathcal{K}_{pJ}, \mathcal{K}_{bJ})
\equiv\left(  _{0,}^{k_{-1},}%
%TCIMACRO{\QATOP{k_{-1}}{0}}%
%BeginExpansion
\genfrac{}{}{0pt}{}{k_{-1}}{0}%
%EndExpansion
\right)  \backslash\Omega\left(  \left(  \mathcal{K}_{pJ},\mathcal{K}%
_{bJ}\right)  \right)  $ is the union of rectangles $\cup_{s=0}^{J}\left\{
\left(  _{k_{p,2s},}^{k_{p,2s-1},}%
%TCIMACRO{\QATOP{\ k_{b,2s-1}}{k_{b,2s+1}}}%
%BeginExpansion
\genfrac{}{}{0pt}{}{\ k_{b,2s-1}}{k_{b,2s+1}}%
%EndExpansion
\right)  \cup\left(  _{k_{p,2s+1},}^{k_{p,2s-1},}%
%TCIMACRO{\QATOP{\ k_{b,2s+1}}{k_{b,2s}}}%
%BeginExpansion
\genfrac{}{}{0pt}{}{\ k_{b,2s+1}}{k_{b,2s}}%
%EndExpansion
\right)  \right\}  \ $which have
\[
\left\{  \left(  k_{p,2s},k_{b,2s+1}\right)  ,\left(  k_{p,2s+1}%
,k_{b,2s}\right)  ;s\in\left\{  -1,0,..,J\right\}  \right\}
\]
as the set of vertices closest to the origin, leading to the expression for
the bias given in the theorem.

\begin{theorem}
\label{uz}Given $\left(  \beta_{b},\beta_{p},\beta_{g}\right)  \ $with
$\beta_{p}\geq\beta_{b}$ so $\Delta\geq0,$ Eq.($\ref{NE11}$) holds if and only
if there exists $J,\mathcal{K}_{pJ},\mathcal{K}_{bJ}$ such that
$\widehat{\overline{\psi}}_{3,\left(  \mathcal{K}_{pJ},\mathcal{K}%
_{bJ},\ \right)  }-\psi=O_{p}\left(  n^{-\frac{4\beta}{4\beta+d}}\right)  .$

If Eq.($\ref{NE11}$) holds, $E\left[  \widehat{\mathbb{U}}_{3}\left\{  \left(
_{0,}^{k_{-1},}%
%TCIMACRO{\QATOP{k_{-1}}{0}}%
%BeginExpansion
\genfrac{}{}{0pt}{}{k_{-1}}{0}%
%EndExpansion
\right)  \backslash\Omega\left(  \left(  \mathcal{K}_{pJ},\mathcal{K}%
_{bJ}\right)  \right)  \right\}  \right]  =O_{p}\left(  n^{-\frac{4\beta
}{4\beta+d}}\right)  $ and thus $\widehat{\overline{\psi}}_{3,\left(
\mathcal{K}_{pJ},\mathcal{K}_{bJ}\right)  }-\psi=O_{p}\left(  n^{-\frac
{4\beta}{4\beta+d}}\right)  ,$ when we choose $J$ to be the smallest integer
such that

$^{{}}\left(  1+\Delta\right)  \left(  J+1\right)  +c^{\ast}\left(  \beta
_{g},\beta,\Delta\right)  \sum_{l=1}^{J+1}\left(  1+\Delta\right)
^{l-1}>\frac{3+4\beta/d}{\ 2\left(  1+4\beta/d\right)  }$ with
\[
c^{\ast}\left(  \beta_{g},\beta,\Delta\right)  =\left(  \frac{2\beta_{g}%
/d}{2\beta_{g}/d+1}\ \right)  \frac{\left(  \Delta+2\right)  }{4\beta
/d\ }-\frac{2\left(  \Delta+2\right)  }{4\beta/d+1}+\frac{3\ +4\beta
/d}{\left(  1+4\beta/d\right)  },
\]

$k_{b,0}=k_{p,0}=n$, $k_{b,2s}=k_{p,2s}=n^{\left(  1+\Delta\right)  s}%
n^{q\sum_{l=1}^{s}\left(  1+\Delta\right)  ^{l-1}}\ $for $s=1,...,J+1,$ with
$q=\left\{  \frac{3+4\beta/d}{\ 2\left(  1+4\beta/d\right)  }-\left(
1+\Delta\right)  \left(  J+1\right)  \right\}  /\sum_{l=1}^{J+1}\left(
1+\Delta\right)  ^{l-1}.$

Note $J$ does not depend on the sample size $n.$
\end{theorem}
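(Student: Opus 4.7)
\emph{Proof plan.} The theorem is proved by applying Theorem~\ref{uy} to the candidate $\widehat{\overline{\psi}}_{3,(\mathcal{K}_{pJ},\mathcal{K}_{bJ})}$ under the proposed symmetric choice $k_{b,j}=k_{p,j}$. I verify that each of the three bias components from Theorem~\ref{uy} and the variance are $O_p(n^{-4\beta/(4\beta+d)})$ (or its square, for the variance) under the stated $J$ and $(\mathcal{K}_{pJ},\mathcal{K}_{bJ})$, and conversely that no choice can attain the rate unless Eq.~(\ref{NE11}) holds.

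The variance is immediate: with $k_{-1}=n^{2/(1+4\beta/d)}$, the contribution $k_{-1}/n^2=n^{-8\beta/(4\beta+d)}$ from $\widehat{\psi}_{2,k_{-1}}$ matches the squared target, and the hyperbola constraints $k_{p,2s+1}k_{b,2s+2}=k_{b,2s+1}k_{p,2s+2}=n^{(3d+4\beta)/(d+4\beta)}$ force each of the $2J+1$ rectangles in $\Omega$ to contribute exactly $n^{-8\beta/(4\beta+d)}$. Since $J$ is chosen independent of $n$, the total variance stays of the same order. The truncation bias $k_{-1}^{-(\beta_p+\beta_b)/d}=n^{-4\beta/(4\beta+d)}$ matches the target by the same choice of $k_{-1}$, and the third-order estimation-bias term $n^{-(2\beta_g/(2\beta_g+d)+\beta_b/(d+2\beta_b)+\beta_p/(d+2\beta_p))}$ is $\le n^{-4\beta/(4\beta+d)}$ exactly when Eq.~(\ref{fourth}) holds; I first verify algebraically that Eq.~(\ref{NE11}) implies Eq.~(\ref{fourth}).

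The crux is the summand-bias term~(a). Setting $r_{j}=\log_{n}k_{j}$ and using the hyperbola identity $r_{2s+1}=R-r_{2s+2}$ with $R=(3+4\beta/d)/(1+4\beta/d)$, I will show that requiring each summand of~(a) to be $O(n^{-4\beta/(4\beta+d)})$ is equivalent to the affine recursion
\[
r_{2s+2}\le(1+\Delta)\,r_{2s}+c^{\ast}(\beta_{g},\beta,\Delta),
\]
with $c^{\ast}$ as in the statement; the identification of $c^{\ast}$ is a direct algebraic manipulation of the exponents. The boundary conditions are $r_{0}=1$ (i.e.\ $k_{0}=n$) and $r_{2J+2}=R/2$, the latter forced by $k_{2J+1}=k_{2J+2}$ on the hyperbola. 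Solving the recursion in closed form gives the admissible one-parameter family $r_{2s}=(1+\Delta)s+q\sum_{l=1}^{s}(1+\Delta)^{l-1}$; the terminal condition then fixes $q$ and the condition $q\le c^{\ast}$ (needed so the recursion is respected at $s=0$) translates into the inequality on $J$ stated in the theorem, with the smallest such $J$ yielding the assertion.

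For the ``only if'' direction I observe that the affine recursion, iterated from $r_{0}=1$, diverges to $+\infty$ in finitely many steps if and only if $r_{0}>-c^{\ast}/\Delta$, i.e.\ $c^{\ast}>-\Delta$, which a short algebraic calculation shows is equivalent to Eq.~(\ref{NE11}). If Eq.~(\ref{NE11}) fails, any admissible chain $\{r_{2s}\}$ is dominated termwise by the saturating one and hence cannot reach the target $R/2$, so the sum~(a) cannot be forced below $n^{-4\beta/(4\beta+d)}$ by any choice of $(J,\mathcal{K}_{pJ},\mathcal{K}_{bJ})$. The main obstacle I expect is the bookkeeping inside Step~(a): the sum contains two summands that are asymmetric under $(\beta_{p},\beta_{b})\mapsto(\beta_{b},\beta_{p})$, so I must check that the symmetric choice $k_{b,j}=k_{p,j}$ is optimal and that the single recursion captures the binding constraint at every $s$; stitching the initial condition $r_{0}=1$ to the closed-form family at $s=1$ and verifying the implied constraint at $s\ge 1$ is automatically satisfied once $q\le c^{\ast}$ is the most delicate accounting.
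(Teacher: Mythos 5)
Your reduction of the omitted-rectangle bias to the affine recursion $r_{2s+2}\le(1+\Delta)r_{2s}+c^{\ast}(\beta_{g},\beta,\Delta)$ in the exponents $r_{j}=\log_{n}k_{j}$, with boundary values $r_{0}=1$ and $r_{2J+2}=\tfrac{3+4\beta/d}{2(1+4\beta/d)}$ forced by the variance/truncation trade-off and the hyperbola, the closed-form family with free parameter $q\le c^{\ast}$, the check that with $\mathcal{K}_{pJ}=\mathcal{K}_{bJ}$ and $\Delta\ge0$ the constraint (\ref{bias}) is implied by (\ref{bias2}), and the fixed-point criterion $c^{\ast}+\Delta>0$ for the converse are exactly the steps of the paper's own proof (its chain of equivalences ending in $0\le c^{\ast}+\Delta$, i.e. eq. (\ref{NE2})); in approach you are aligned with the paper.

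However, one step in your plan is false: you propose to ``verify algebraically that Eq. (\ref{NE11}) implies Eq. (\ref{fourth})''. It does not. At $\Delta=0$ the left inequality of (\ref{fourth}) (third-order estimation bias below the target rate) requires $\frac{2\beta_{g}/d}{2\beta_{g}/d+1}>\frac{2\beta/d}{(1+4\beta/d)(1+2\beta/d)}$, whereas (\ref{NE11}) only requires $\frac{2\beta_{g}/d}{2\beta_{g}/d+1}>\frac{2(\beta/d)(1-4\beta/d)}{1+4\beta/d}$, which is strictly weaker since $(1-4\beta/d)(1+2\beta/d)<1$; for instance $\beta_{b}=\beta_{p}$ with $\beta/d=0.2$ and $\beta_{g}/d=0.05$ satisfies (\ref{NE11}) but violates (\ref{fourth}). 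Section \ref{case2} exists precisely for this regime: (\ref{NE11}) can hold while the third-order estimation bias $n^{-\left(\frac{2\beta_{g}}{2\beta_{g}+d}+\frac{\beta_{b}}{d+2\beta_{b}}+\frac{\beta_{p}}{d+2\beta_{p}}\right)}$ exceeds $n^{-\frac{4\beta}{4\beta+d}}$, in which case $\widehat{\overline{\psi}}_{3,\mathcal{K}_{J}}$ does not attain the rate and the higher-order estimator $\widehat{\psi}_{\mathcal{K}_{J}}^{eff}$ is needed. Theorem \ref{uz} is stated under the standing Case-1 hypothesis (\ref{fourth}), and the paper's proof invokes it explicitly (``when the inequality is strict in (\ref{NE2}) and eq. (\ref{fourth}) holds''). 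So the third-order estimation-bias term in Theorem \ref{uy} must be controlled by assumption, not derived from (\ref{NE11}); as written, your ``if'' direction rests on a lemma that is false, and without (\ref{fourth}) as a hypothesis the stated equivalence itself would fail. The repair is simple — cite (\ref{fourth}) as the ambient assumption of the subsection — after which your argument coincides with the paper's.
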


Proof: From Theorem \ref{uy}, for the variance of $\widehat{\overline{\psi}%
}_{3,\left(  \mathcal{K}_{pJ},\mathcal{K}_{bJ}\right)  }$ to be $O_{p}\left(
n^{-\frac{8\beta}{4\beta+d}}\right)  ,$ $J$ cannot increase with $n.$ Further
for the second order truncation bias $O_{p}\left(  k_{-1}^{-\left(  \beta
_{p}+\beta_{b}\right)  /d}\right)  $ and the square root of the variance
$\frac{k_{-1}}{n^{2}}$ of $\ \widehat{\psi}_{2,k_{-1}}$ both to be
$O_{p}\left(  n^{-\frac{4\beta}{4\beta+d}}\right)  ,\ $we must have
$k_{-1}=k_{opt}^{g}\left(  2\right)  =n^{\frac{2}{1+4\beta/d}}.$ It then
follows from eqs. ($\ref{variance1}$) and (\ref{variance3}) that
$k_{p,0}=k_{b,0}=n.$

In order for $E\left[  \widehat{\mathbb{U}}_{3}\left\{  \left(  _{0,}%
^{k_{-1},}%
%TCIMACRO{\QATOP{k_{-1}}{0}}%
%BeginExpansion
\genfrac{}{}{0pt}{}{k_{-1}}{0}%
%EndExpansion
\right)  \backslash\Omega\left(  \left(  \mathcal{K}_{pJ},\mathcal{K}%
_{bJ}\right)  \right)  \right\}  \right]  =O_{p}\left(  n^{-\frac{4\beta
}{4\beta+d}}\right)  ,$ we require for $s=0,..,J$%

\begin{align}
n^{-\frac{2\beta_{g}}{2\beta_{g}+d}}\left\{  k_{b,2s}^{-2\beta_{b}%
/d}k_{p,2s+1}^{-2\beta_{p}/d}\right\}   &  \leq\ n^{-\frac{8\beta/d}%
{4\beta/d+1}}\label{bias}\\
n^{-\frac{2\beta_{g}}{2\beta_{g}+d}}\left\{  k_{p,2s}^{-2\beta_{p}%
/d}k_{b,2s+1}^{-2\beta_{b}/d}\right\}   &  \leq\ n^{-\frac{8\beta/d}%
{4\beta/d+1}}\label{bias2}%
\end{align}
Substituting for $k_{b,2s+1}$ in eq. (\ref{bias2}) using eq.(\ref{variance1})
and recalling that $\beta_{p}\geq\beta_{b}$ so $\Delta\geq0$, we obtain%

\begin{gather}
n^{-\frac{2\beta_{g}}{2\beta_{g}+d}}k_{p,2s}^{-2\beta_{p}/d}\left\{
\frac{n^{\frac{3d+4\beta}{\left(  d+4\beta\right)  }}}{k_{p,2s+2}}\right\}
^{-2\beta_{b}/d}\leq\ n^{-\frac{8\beta}{4\beta+d}}\label{gy}\\
\Leftrightarrow k_{p,2s+2}^{2\beta_{b}/d}\leq\ n^{\frac{2\beta_{g}/d}%
{2\beta_{g}/d+1}\ }n^{-\frac{8\beta/d}{4\beta/d+1}}k_{p,2s}^{2\beta_{p}%
/d}\left(  n^{\frac{3+4\beta/d}{\left(  1+4\beta/d\right)  }}\right)
^{2\beta_{b}/d}\nonumber\\
\Leftrightarrow k_{p,2s+2}\leq\ n^{\left(  \frac{2\beta_{g}/d}{2\beta_{g}%
/d+1}\ \right)  \frac{1}{2\beta_{b}/d}}n^{-\frac{8\beta/d}{4\beta/d+1}\frac
{1}{2\beta_{b}/d}}k_{p,2s}^{\frac{\beta_{p}}{\beta_{b}}}\left(  \ n^{\frac
{3+4\beta/d}{\left(  1+4\beta/d\right)  }}\right) \nonumber\\
\Leftrightarrow1\leq\frac{k_{p,2s+2}}{k_{p,2s}}\leq\ n^{\left(  \frac
{2\beta_{g}/d}{2\beta_{g}/d+1}\ \right)  \frac{1}{2\beta_{b}/d}}%
n^{-\frac{8\beta/d}{4\beta/d+1}\frac{1}{2\beta_{b}/d}}k_{p,2s}^{\Delta}\left(
\ n^{\frac{3\ +4\beta/d}{\left(  1+4\beta/d\right)  }}\right) \nonumber\\
\Leftrightarrow1\leq\frac{k_{p,2s+2}}{k_{p,2s}}\leq n^{c^{\ast}\left(
\beta_{g},\beta,\Delta\right)  }k_{p,2s}^{\Delta}\label{gx}\\
\Leftrightarrow1\leq n^{c^{\ast}\left(  \beta_{g},\beta,\Delta\right)
}n^{\Delta}\ \nonumber\\
\Leftrightarrow0\leq c^{\ast}\left(  \beta_{g},\beta,\Delta\right)
+\Delta\nonumber
\end{gather}

since $n=k_{0}\leq k_{p,2s}\leq k_{p,2s+2}.$

Solving the last expression for $\frac{2\beta_{g}/d}{2\beta_{g}/d+1}$,we obtain%

\begin{equation}
\frac{2\beta_{g}/d}{2\beta_{g}/d+1}\geq\frac{\frac{1-4\beta/d}{1+4\beta
/d}+\Delta\left\{  \frac{2}{4\beta/d+1}-1\ \right\}  }{\frac{\left(
\Delta+2\right)  }{4\beta/d\ }}=\left\{  \frac{4\beta/d}{\left(
\Delta+2\right)  }\right\}  \left(  \Delta+1\right)  \frac{1-4\beta
/d}{1+4\beta/d}\ \label{NE2}%
\end{equation}
which is equation eq.$\left(  \ref{NE11}\right)  ,$ except with a nonstrict
inequality. We have just deduced that the constraint $\left(  \ref{NE2}%
\right)  $ was due to restriction (\ref{bias2}). We have not yet considered
whether the restriction $\left(  \ref{bias}\right)  $ implies additional
constraints. We now show that it does not. Specifically if we set
$k_{p,2l}=k_{b,2l}$ for all $l\in\left\{  1,2,....,\ J+1\right\}  ,$ then
eq.$\left(  \ref{bias}\right)  $ is true whenever eq.$\left(  \ref{bias2}%
\right)  $ holds because of our assumption that $\Delta\geq0.$ Thus we can set
$\mathcal{K}_{pJ}=\mathcal{K}_{bJ}.$

Thus we have shown that if $\widehat{\overline{\psi}}_{3,\left(
\mathcal{K}_{pJ},\mathcal{K}_{bJ},\ \right)  }-\psi=O_{p}\left(
n^{-\frac{4\beta}{4\beta+d}}\right)  ,$ then $k_{-1}=n^{\frac{2}{1+4\beta/d}%
},$ $\left(  \ref{NE2}\right)  $ holds, and $J$ must not increase with $n.$

We next show that when the inequality is strict in $\left(  \ref{NE2}\right)
\ $and eq.$\left(  \ref{fourth}\right)  $ holds, we can find $\mathcal{K}%
_{J}=\mathcal{K}_{pJ}=\mathcal{K}_{bJ}$ for which $\widehat{\overline{\psi}%
}_{3,\mathcal{K}_{J}\mathcal{\ }}-\psi=O_{p}\left(  n^{-\frac{4\beta}%
{4\beta+d}}\right)  .$ We then complete the proof of the theorem by showing
that when $\left(  \ref{NE2}\right)  $ holds with an equality, there is no
choice of $\mathcal{K}_{J}$ for which $\widehat{\overline{\psi}}%
_{3,\mathcal{K}_{J}\mathcal{\ }}\ $converges at a rate better than
$O_{p}\left(  \left(  \log n\right)  n^{-\frac{4\beta}{4\beta+d}}\right)  $

Suppose the inequality is strict in $\left(  \ref{NE2}\right)  .$ Since
$k_{0}=n,$ eq.$\left(  \ref{gx}\right)  $ applied recursively suggests we
define $k_{2s}=n^{\left(  1+\Delta\right)  s}n^{c^{\ast}\left(  \beta
_{g},\beta,\Delta\right)  \sum_{l=1}^{s}\left(  1+\Delta\right)  ^{l-1}}$ for
$s=1,..,J+1$ and take $k_{2s+1}=\ \frac{n^{\frac{3d+4\beta}{\left(
d+4\beta\right)  }}}{k_{2s+2}}.$ {}However, this will not generally give
$k_{2J+1}=k_{2J+2}=n^{\left\{  \frac{3d+4\beta}{\left(  d+4\beta\right)
}\right\}  \frac{1}{2}}$ as required when $\mathcal{K}_{pJ}=\mathcal{K}_{bJ}.$
Instead we use the modified algorithm given in the statement of the theorem
which insures that $k_{2J+1}=k_{2J+2}=n^{\frac{3+4\beta/d}{\ 2\left(
1+4\beta/d\right)  }}$, as required. Since $J$ is not a function of $n,$ in
order to show $\widehat{\overline{\psi}}_{3,\mathcal{K}_{J}}$ converges at
rate $n^{-\frac{4\beta}{4\beta+d}},$ we only need to check the bias.

Now $\frac{k_{2s+2}}{k_{2s}}=n^{\left(  1+\Delta\right)  }n^{q\left(
1+\Delta\right)  ^{s-1}}=k_{0}^{\left(  1+\Delta\right)  }n^{q\left(
1+\Delta\right)  ^{s-1}}\leq$ $k_{0}^{\left(  1+\Delta\right)  }n^{c^{\ast
}\left(  \beta_{g},\beta,\Delta\right)  \left(  1+\Delta\right)  ^{s-1}}$
since $q\leq c^{\ast}\left(  \beta_{g},\beta,\Delta\right)  \ $so the bias of
$\widehat{\overline{\psi}}_{3,\mathcal{K}_{J}\mathcal{\ }}$ is $O_{P}\left(
n^{-\frac{4\beta}{4\beta+d}}\right)  ,$ as required.

Suppose now the equality holds in eq.$\left(  \ref{NE2}\right)  $ so $c^{\ast
}\left(  \beta_{g},\beta,\Delta\right)  +\Delta=0$ and continue to assume
eq.$\left(  \ref{fourth}\right)  $ holds. We now construct an estimator
$\widehat{\overline{\psi}}_{3,\mathcal{K}_{J}\mathcal{\ }}$ that converges at
rate $O_{P}\left(  n^{-\frac{4\beta}{4\beta+d}}\ln\left(  n\right)  \right)
\ $and show that no estimator in our class $\widehat{\overline{\psi}%
}_{3,\mathcal{K}_{J}\mathcal{\ }}$ converges at a faster rate. We conjecture
this rate is minimax when the equality in eq.$\left(  \ref{NE2}\right)  $
holds. Again $k_{2s+1}=\frac{n^{\frac{3d+4\beta}{\left(  d+4\beta\right)  }}%
}{k_{2s+2}}$ and by the previous arguments, $k_{0}=n,k_{-1}=n^{\frac
{2}{\ \left(  1+4\beta/d\right)  }},$ $k_{2J+1}=k_{2J+2}=\left\{
n^{\frac{3d+4\beta}{\left(  d+4\beta\right)  }}\right\}  ^{1/2}.$ We can
suppose that $k_{2s}=n\left\{  v\left(  n\right)  \right\}  ^{s}.\ $It remains
to determine $v\left(  n\right)  $ and $J=J\left(  n\right)  .$ We
know$\ J\left(  n\right)  $ must satisfy
\begin{align*}
k_{2J\left(  n\right)  +2}  &  =\left\{  n^{\frac{3d+4\beta}{\left(
d+4\beta\right)  }}\right\}  ^{1/2}=n\left\{  v\left(  n\right)  \right\}
^{J\left(  n\right)  +1}\text{ so}\\
v\left(  n\right)   &  =n^{\left(  \frac{3d+4\beta}{2\left(  d+4\beta\right)
}-1\right)  \frac{1}{J\left(  n\right)  +1}}.
\end{align*}
The variance of $\widehat{\overline{\psi}}_{3,\mathcal{K}_{J}\mathcal{\ }}$ is
of order $n^{-\frac{8\beta}{4\beta+d}}J\left(  n\right)  .$ Thus the order of
the bias will still equal that of the variance provided we multiply the RHS of
eq.$\left(  \ref{gy}\right)  $ by $J\left(  n\right)  $. Then eq.$\left(
\ref{gx}\right)  $ becomes $1\leq\frac{k_{p,2s+2}}{k_{p,2s}}\leq n^{c^{\ast
}\left(  \beta_{g},\beta,\Delta\right)  }k_{p,2s}^{\Delta}J\left(  n\right)
^{\frac{1}{2\beta/d\ }}.$ Since, $\frac{k_{p,2s+2}}{k_{p,2s}}=v\left(
n\right)  \ $and $\ n=k_{0}\leq k_{p,2s},$ we substitute $n^{\Delta}%
=k_{0}^{\Delta}$ for $k_{p,2s}^{\Delta}$ in the modified eq.$\left(
\ref{gx}\right)  $ which gives $v\left(  n\right)  =J\left(  n\right)
^{\frac{1}{2\beta/d\ }}.$ Hence $n^{\left(  \frac{3d+4\beta}{2\left(
d+4\beta\right)  }-1\right)  \frac{1}{J\left(  n\right)  +1}}=J\left(
n\right)  ^{\frac{1}{2\beta/d\ }}$ which implies that$.$
\begin{equation}
\frac{\ln\left(  n\right)  }{J\left(  n\right)  }=O\left(  \ln\left[  J\left(
n\right)  \right]  \right) \label{log}%
\end{equation}

To minimize the variance, we want the slowest growing function of $n$ that
satisfies eq.$\left(  \ref{log}\right)  ,$ which is $J\left(  n\right)
=\ln\left(  n\right)  ,$ as claimed.

\subsubsection{ Case 2: The estimation bias of the third order estimator
exceeds the optimal rate\label{case2}}

In this section we no longer assume that the estimation bias $n^{-\left(
\frac{2\beta_{g}}{2\beta_{g}+d}+\frac{\beta_{b}}{d+2\beta_{b}}+\frac{\beta
_{p}}{d+2\beta_{p}}\right)  }\ $of a third order estimator is less than
$n^{-\frac{4\beta}{4\beta+d}}.$\ Then even when eq.$\left(  \ref{NE2}\right)
$ holds with a strict inequality, $\widehat{\overline{\psi}}_{3,\mathcal{K}%
_{J}}$ does not achieve a $n^{-\frac{4\beta}{4\beta+d}}$ rate of convergence
because the fourth order bias $n^{-\left(  \frac{2\beta_{g}}{2\beta_{g}%
+d}+\frac{\beta_{b}}{d+2\beta_{b}}+\frac{\beta_{p}}{d+2\beta_{p}}\right)  }$%
{}exceeds $n^{-\frac{4\beta}{4\beta+d}}.$ \ However, we will now construct an
estimator $\widehat{\psi}_{\mathcal{K}_{J}}^{eff}\equiv\widehat{\psi
}_{\mathcal{K}_{J}}^{eff}\left(  \beta_{g},\beta_{b},\beta_{p}\right)  $ that
under our assumptions $Ai)-Aiv)$ does converge at rate $n^{-\frac{4\beta
}{4\beta+d}}$ whenever $\left(  \beta_{g},\beta_{b},\beta_{p}\right)  $ given
in assumption $Aiv)$ satisfy eq.$\left(  \ref{NE2}\right)  $ with a strict
inequality. Because the estimator is very complicated, we have chosen to only
define the estimator and give its properties in the text. The motivating ideas
for and the formal proofs of these properties are provided in the appendix.

To define the estimator, we need some additional notation. Define%
\begin{align*}
&  \widehat{\mathbb{U}}_{m}\left(  \left(  l\right)  _{k\left(  l,0\right)
}^{k\left(  l,1\right)  },1\leq l\leq m-1\right) \\
&  =\mathbb{V}_{m}\left(  \widehat{\epsilon}_{i_{1}}\overline{Z}_{k\left(
1,0\right)  ,i_{1}}^{k\left(  1,1\right)  T}%
%TCIMACRO{\dprod \limits_{u=2}^{m-1}}%
%BeginExpansion
{\displaystyle\prod\limits_{u=2}^{m-1}}
%EndExpansion
\left(  \dot{B}\dot{P}H_{1}\overline{Z}_{k\left(  u-1,0\right)  }^{k\left(
u-1,1\right)  }\overline{Z}_{k\left(  u,0\right)  }^{k\left(  u,1\right)
T}-I_{k_{u-1}\times k_{u}}\right)  \overline{Z}_{k\left(  m-1,0\right)
}^{k\left(  m-1,1\right)  }\widehat{\Delta}_{i_{m}}\right)
\end{align*}

where , $k_{u}=k\left(  u,1\right)  -k\left(  u,0\right)  ,$ $I_{k_{u-1}\times
k_{u}}=\left(  I_{ij}\right)  _{k_{u-1}\times k_{u}}$ with $I_{ij}=I\left(
i=j\right)  .$

Then define $\widehat{\mathbb{U}}_{m}\left(  _{k\left(  0\right)  }^{k\left(
1\right)  }\right)  $ as $\widehat{\mathbb{U}}_{m}\left(  \left(  l\right)
_{k\left(  0\right)  }^{k\left(  1\right)  },1\leq l\leq m-1\right)  .$
$\widehat{\mathbb{U}}_{m}^{\left(  u\right)  }\left(  _{k^{\ast}\left(
0\right)  }^{k^{\ast}\left(  1\right)  },_{k\left(  0\right)  }^{k\left(
1\right)  }\right)  $\ is defined as $\widehat{\mathbb{U}}_{m}\left(  \left(
l\right)  _{k\left(  l,0\right)  }^{k\left(  l,1\right)  },1\leq l\leq
m-1\right)  $ with $k\left(  l,1\right)  =k\left(  1\right)  ,k\left(
l,0\right)  =k\left(  0\right)  $ for $l\neq u,$ and $k\left(  u,1\right)
=k^{\ast}\left(  1\right)  ,$ $k\left(  u,0\right)  =k^{\ast}\left(  0\right)
.$ Next\ $\widehat{\mathbb{U}}_{m}^{\left(  u,u+1\right)  }\left(  _{k^{\ast
}\left(  0\right)  }^{k^{\ast}\left(  1\right)  },_{k^{\ast\ast}\left(
0\right)  }^{k^{\ast\ast}\left(  1\right)  },_{k\left(  0\right)  }^{k\left(
1\right)  }\right)  $ is defined as $\widehat{\mathbb{U}}_{m}\left(  \left(
l\right)  _{k\left(  l,0\right)  }^{k\left(  l,1\right)  },1\leq l\leq
m-1\right)  $ with $k\left(  l,1\right)  =k\left(  1\right)  ,k\left(
l,0\right)  =k\left(  0\right)  $ for $l\neq u$ and $l\neq u+1,$ $k\left(
u,1\right)  =k^{\ast}\left(  1\right)  , $ $k\left(  u,0\right)  =k^{\ast
}\left(  0\right)  ,$ $k\left(  u+1,1\right)  =k^{\ast\ast}\left(  1\right)
,$ $k\left(  u+1,0\right)  =k^{\ast\ast}\left(  0\right)  .$ We will use this
notation for $m=3,$ even though $\widehat{\mathbb{U}}_{3}^{\left(  1,2\right)
}\left(  _{k^{\ast}\left(  0\right)  }^{k^{\ast}\left(  1\right)  }%
,_{k^{\ast\ast}\left(  0\right)  }^{k^{\ast\ast}\left(  1\right)  },_{k\left(
0\right)  }^{k\left(  1\right)  }\right)  $ does not depend on $k\left(
0\right)  ,k\left(  1\right)  $ and is equal to $\widehat{\mathbb{U}}%
_{3}\left(  _{k^{\ast}\left(  0\right)  }^{k^{\ast}\left(  1\right)
},_{k^{\ast\ast}\left(  0\right)  }^{k^{\ast\ast}\left(  1\right)  }\right)  $
of the previous subsection.

Finally define%
\begin{align*}
\mathbb{H}_{v}^{\ast}  &  =\widehat{\mathbb{U}}_{v}\left(  _{0}^{k_{0}%
}\right)  +%
%TCIMACRO{\dsum \limits_{u=1}^{v-1}}%
%BeginExpansion
{\displaystyle\sum\limits_{u=1}^{v-1}}
%EndExpansion
\widehat{\mathbb{U}}_{v}^{\left(  u\right)  }\left(  _{k_{0}}^{k_{-1}}%
,_{0}^{k_{0}}\right) \\
\mathbb{G}\left(  s,v\right)   &  =%
%TCIMACRO{\dsum \limits_{u=1}^{v-2}}%
%BeginExpansion
{\displaystyle\sum\limits_{u=1}^{v-2}}
%EndExpansion
\left\{  \widehat{\mathbb{U}}_{v}^{\left(  u,u+1\right)  }\left(  _{k_{2s-2}%
}^{k_{2s-1}},_{k_{2s-2}}^{k_{2s}},_{0}^{k_{0}}\right)  +\widehat{\mathbb{U}%
}_{v}^{\left(  u,u+1\right)  }\left(  _{k_{2s-2}}^{k_{2s}},_{k_{2s}}%
^{k_{2s-1}},_{0}^{k_{0}}\right)  \right\} \\
\mathbb{Q}_{v}  &  =%
%TCIMACRO{\dsum \limits_{u=1}^{v-2}}%
%BeginExpansion
{\displaystyle\sum\limits_{u=1}^{v-2}}
%EndExpansion
\widehat{\mathbb{U}}_{v}^{\left(  u,u+1\right)  }\left(  _{k_{2J}}^{k_{2J+1}%
},_{k_{2J}}^{k_{2J+1}},_{0}^{k_{0}}\right)
\end{align*}

\begin{theorem}
\label{beyond4th}Given $\left(  \beta_{g},\beta_{b},\beta_{p}\right)  $
satisfying Eq.\ref{NE2} with a strict inequality$,$ define
\begin{equation}
m\left(  \beta_{g},\beta_{b},\beta_{p}\right)  =int\left\{  \left(
\frac{4\beta}{d+4\beta}-\frac{\beta_{b}}{d+2\beta_{b}}-\frac{\beta_{p}%
}{d+2\beta_{p}}\right)  \left(  2+\frac{d}{\beta_{g}}\right)  +1\right\}
+1\label{m}%
\end{equation}
$\ $to be the smallest integer such that $\left(  \frac{\log n}{n}\right)
^{\frac{\left(  m-1\right)  \beta_{g}}{d+2\beta_{g}}}n^{-\frac{\beta_{b}%
}{d+2\beta_{b}}-\frac{\beta_{p}}{d+2\beta_{p}}}<n^{-\frac{4\beta}{d+4\beta}},$
where $\beta=\frac{\beta_{b}+\beta_{p}}{2}.$ {}Let $\mathcal{K}_{J}$ ,$J,$
$\widehat{\overline{\psi}}_{3,\mathcal{K}_{J}}$ be as in Theorem
\ref{uz}{\Large \ }and define
\begin{align*}
&  \widehat{\psi}_{\mathcal{K}_{J}}^{eff}\left(  \beta_{g},\beta_{b},\beta
_{p}\right) \\
&  =\widehat{\overline{\psi}}_{3,\mathcal{K}_{J}}+%
%TCIMACRO{\dsum \limits_{v=4}^{m\left(  \beta_{g},\beta_{b},\beta_{p}\right)
%}}%
%BeginExpansion
{\displaystyle\sum\limits_{v=4}^{m\left(  \beta_{g},\beta_{b},\beta
_{p}\right)  }}
%EndExpansion
\left(  -1\right)  ^{v-1}\mathbb{H}_{v}^{\ast}+%
%TCIMACRO{\dsum \limits_{s=1}^{J}}%
%BeginExpansion
{\displaystyle\sum\limits_{s=1}^{J}}
%EndExpansion%
%TCIMACRO{\dsum \limits_{v=4}^{m\left(  \beta_{g},\beta_{b},\beta_{p}\right)
%}}%
%BeginExpansion
{\displaystyle\sum\limits_{v=4}^{m\left(  \beta_{g},\beta_{b},\beta
_{p}\right)  }}
%EndExpansion
\left(  -1\right)  ^{v-1}\mathbb{G}\left(  s,v\right) \\
&  +%
%TCIMACRO{\dsum \limits_{v=4}^{m\left(  \beta_{g},\beta_{b},\beta_{p}\right)
%}}%
%BeginExpansion
{\displaystyle\sum\limits_{v=4}^{m\left(  \beta_{g},\beta_{b},\beta
_{p}\right)  }}
%EndExpansion
\left(  -1\right)  ^{v-1}\mathbb{Q}_{v}\\
&  =\mathbb{V}_{n,1}\left(  H_{1}\widehat{B}\widehat{P}+H_{2}\widehat{B}%
+H_{3}\widehat{P}+H_{4}\right)  -\mathbb{H}_{2}^{\ast}\\
&  +%
%TCIMACRO{\dsum \limits_{v=3}^{m\left(  \beta_{g},\beta_{b},\beta_{p}\right)
%}}%
%BeginExpansion
{\displaystyle\sum\limits_{v=3}^{m\left(  \beta_{g},\beta_{b},\beta
_{p}\right)  }}
%EndExpansion
\left(  -1\right)  ^{v-1}\mathbb{H}_{v}^{\ast}+%
%TCIMACRO{\dsum \limits_{s=1}^{J}}%
%BeginExpansion
{\displaystyle\sum\limits_{s=1}^{J}}
%EndExpansion%
%TCIMACRO{\dsum \limits_{v=3}^{m\left(  \beta_{g},\beta_{b},\beta_{p}\right)
%}}%
%BeginExpansion
{\displaystyle\sum\limits_{v=3}^{m\left(  \beta_{g},\beta_{b},\beta
_{p}\right)  }}
%EndExpansion
\left(  -1\right)  ^{v-1}\mathbb{G}\left(  s,v\right)  +%
%TCIMACRO{\dsum \limits_{v=3}^{m\left(  \beta_{g},\beta_{b},\beta_{p}\right)
%}}%
%BeginExpansion
{\displaystyle\sum\limits_{v=3}^{m\left(  \beta_{g},\beta_{b},\beta
_{p}\right)  }}
%EndExpansion
\left(  -1\right)  ^{v-1}\mathbb{Q}_{v}%
\end{align*}

Then%
\begin{align*}
&  E\left(  \widehat{\psi}_{\mathcal{K}_{J}}^{eff}\left(  \beta_{g},\beta
_{b},\beta_{p}\right)  \right)  -\psi\left(  \theta\right) \\
&  =O_{p}\left(  \max\left[
\begin{array}
[c]{c}%
k_{-1}^{2\beta/d},\left(  \frac{\log n}{n}\right)  ^{-\frac{\beta_{g}%
}{d+2\beta_{g}}}k_{2s}^{-\beta_{b}/d}k_{2s+1}^{-\beta_{p}/d},\left(
\frac{\log n}{n}\right)  ^{-\frac{\beta_{g}}{d+2\beta_{g}}}k_{2s+1}%
^{-\beta_{b}/d}k_{2s}^{-\beta_{p}/d},\\
\left(  \frac{\log n}{n}\right)  ^{-\frac{2\beta_{g}}{d+2\beta_{g}}}%
k_{0}^{-2\beta/d},\left(  \frac{\log n}{n}\right)  ^{-\frac{\left(
m-1\right)  \beta_{g}}{d+2\beta_{g}}}n^{-\frac{\beta_{b}}{d+2\beta_{b}}%
-\frac{\beta_{p}}{d+2\beta_{p}}},\\
\left(  \frac{\log n}{n}\right)  ^{-\frac{2\beta_{g}}{d+2\beta_{g}}%
}\underset{1\leq s\leq J}{\max}\left(  k_{2s}^{-\beta_{b}/d}k_{0}^{-\beta
_{p}/d},k_{0}^{-\beta_{b}/d}k_{2s}^{-\beta_{p}/d}\right)
\end{array}
\right]  \right) \\
&  =O_{p}\left(  \max\left[
\begin{array}
[c]{c}%
k_{-1}^{2\beta/d},\left(  \frac{\log n}{n}\right)  ^{-\frac{\beta_{g}%
}{d+2\beta_{g}}}k_{2s}^{-\beta_{b}/d}k_{2s+1}^{-\beta_{p}/d},\left(
\frac{\log n}{n}\right)  ^{-\frac{\beta_{g}}{d+2\beta_{g}}}k_{2s+1}%
^{-\beta_{b}/d}k_{2s}^{-\beta_{p}/d},\\
\left(  \frac{\log n}{n}\right)  ^{-\frac{2\beta_{g}}{d+2\beta_{g}}}%
k_{0}^{-2\beta/d},\left(  \frac{\log n}{n}\right)  ^{-\frac{\left(
m-1\right)  \beta_{g}}{d+2\beta_{g}}}n^{-\frac{\beta_{b}}{d+2\beta_{b}}%
-\frac{\beta_{p}}{d+2\beta_{p}}}%
\end{array}
\right]  \right) \\
&  =O_{p}\left(  n^{-\frac{4\beta}{d+4\beta}}\right)
\end{align*}

\end{theorem}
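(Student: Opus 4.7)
The plan is to decompose the bias of $\widehat{\psi}_{\mathcal{K}_{J}}^{eff}$ into contributions from the geometric regions that were introduced in the construction of $\widehat{\overline{\psi}}_{3,\mathcal{K}_J}$ (namely: the central square $(_0^{k_0})$, the outer "hat-shaped" strip $(_{k_0}^{k_{-1}})$, each of the $2J$ off-diagonal intermediate rectangles indexed by $s=1,\dots,J$, and the terminal hyperbolic rectangle $(_{k_{2J}}^{k_{2J+1}})$), and then to show that the higher-order U-statistic corrections $\mathbb{H}_v^*$, $\mathbb{G}(s,v)$, and $\mathbb{Q}_v$ telescope out the leading powers of $\delta g = (g-\widehat g)/\widehat g$ contributing to the estimation bias in each region, down to a residual of order $(\log n/n)^{(m-1)\beta_g/(d+2\beta_g)} n^{-\beta_b/(d+2\beta_b)-\beta_p/(d+2\beta_p)}$. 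By the defining property of $m(\beta_g,\beta_b,\beta_p)$ in \eqref{m}, this residual is $o(n^{-4\beta/(d+4\beta)})$, matching the target rate.

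First I would verify, by a calculation parallel to Theorem \ref{EBrate} extended to rectangles in the multi-index set, that for each $v$ the term $\widehat{\mathbb{U}}_v(_{k(0)}^{k(1)})$ has expectation of the form
\[
\widehat E\!\left[\,\Pi[\delta b|\overline Z_{k(0)}^{k(1)}]\,\big(\delta g\widehat Q^{2}\big)^{v-2}\,\Pi[\delta p|\overline Z_{k(0)}^{k(1)}]\,\right]\bigl(1+o_p(1)\bigr),
\]
and more generally that $\widehat{\mathbb{U}}_v^{(u)}$ and $\widehat{\mathbb{U}}_v^{(u,u+1)}$ have means of the same product form but with the projection onto $\overline Z_{k(0)}^{k(1)}$ replaced at the distinguished slot(s) by a projection onto $\overline Z_{k^*(0)}^{k^*(1)}$ (resp.\ onto $\overline Z_{k^*(0)}^{k^*(1)}$ and $\overline Z_{k^{**}(0)}^{k^{**}(1)}$). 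Summing $(-1)^{v-1}\widehat{\mathbb{U}}_v$ from $v=2$ to $m$ produces a geometric-series telescoping in the variable $\delta g \widehat Q^2$ whose partial sum $\sum_{v=2}^{m}(-\delta g\widehat Q^2)^{v-2}$ equals $(1+\delta g\widehat Q^2)^{-1}$ up to a remainder of order $(\delta g)^{m-1}$. This remainder, together with the Cauchy--Schwarz and $L_\infty$ bounds on $\delta b, \delta p$, and $\delta g$ used in Theorem \ref{EBrate}, produces the $(\log n/n)^{(m-1)\beta_g/(d+2\beta_g)} n^{-\beta_b/(d+2\beta_b)-\beta_p/(d+2\beta_p)}$ estimation-bias contribution from each region.

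Next I would track the truncation/approximation bias left over in each region, which is exactly the source of the six/seven terms in the displayed $\max[\cdot]$ bound: $k_{-1}^{-2\beta/d}$ from the residual second-order truncation bias of $\widehat{\psi}_{2,k_{-1}}$; the two families $(\log n/n)^{-\beta_g/(d+2\beta_g)}k_{2s}^{-\beta_b/d}k_{2s+1}^{-\beta_p/d}$ and its symmetric partner from each $\mathbb{G}(s,v)$ strip, which by the choice of $\mathcal{K}_J$ and $J$ in Theorem \ref{uz} lie below the hyperbola $Hy$; the $(\log n/n)^{-2\beta_g/(d+2\beta_g)}k_0^{-2\beta/d}$ term from $\mathbb{Q}_v$; and the residual $m$-th order estimation bias. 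By construction of $\mathcal{K}_J$ (via $k_{2s+1}k_{2s+2}=n^{(3d+4\beta)/(d+4\beta)}$) each bias term in the first group is exactly $n^{-4\beta/(d+4\beta)}$ up to log factors, and the last is $o$ of this by the choice of $m$. For the variance, Theorem \ref{var_multi} gives that each $\widehat{\mathbb{U}}_v$ or $\widehat{\mathbb{U}}_v^{(u,u+1)}$ contributes $\prod_{l}k_{l,1}/n^{v}$; because each rectangle used in the construction has its far vertex on or below the hyperbola $Hy$, each contribution is at most $n^{-8\beta/(d+4\beta)}$, and summing over the bounded (in $n$) number $J$ of strips, bounded $v \leq m$, and bounded $u$, the variance remains $O(n^{-8\beta/(d+4\beta)})$.

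The main obstacle I anticipate is the bookkeeping for the $\mathbb{G}(s,v)$ terms: each off-diagonal strip contributes both an $\widehat{\mathbb{U}}_v^{(u,u+1)}$-type higher-order correction (which kills its own estimation bias) and an approximation bias from the omitted lattice cells in $(_0^{k_{-1}})\setminus \Omega(\mathcal K_{pJ},\mathcal K_{bJ})$; I need to show these two effects decouple cleanly after the alternating sum, which requires that the $\widehat{\mathbb{U}}_v^{(u,u+1)}$ corrections at level $s$ do not reintroduce bias at neighboring levels $s\pm 1$. This will follow from the orthogonality of the projection residuals $\Pi^{\perp}[\,\cdot\,|\overline Z_{k(0)}^{k(1)}]$ on disjoint rectangles, together with the $L_\infty$ control on $\delta g$; the detailed verification is essentially algebraic and is deferred to the appendix. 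A secondary but routine obstacle is verifying that the Cauchy--Schwarz bounds used in Theorem \ref{EBrate} remain tight when the inner factor $(\delta g)^{v-2}$ is bounded in $L_\infty$ by the $n/\log n$ uniform rate of $\widehat g$; this is where the $\log n$ factors in the displayed bound arise, and they are harmless under the strict inequality version of \eqref{NE2}.
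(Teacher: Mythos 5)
Your proposal follows essentially the same route as the paper's appendix proof: a region-by-region decomposition of the bias in which the alternating sums of $\mathbb{H}_{v}^{\ast}$, $\mathbb{G}\left(  s,v\right)  $, and $\mathbb{Q}_{v}$ cancel the successive powers of $\delta g$ (the paper packages this cancellation as the four-term groupings $B_{t}$, $B_{t}^{bg}$, $B_{t}^{pg}$, $B_{t}$ bounded in Lemma \ref{cp} by $\left\vert \left\vert \delta g\right\vert \right\vert _{\infty}^{t-2}$ times products of projection residuals $\widehat{\Pi}^{\bot}$), with the order-$m$ residual controlled exactly as you say by the definition of $m\left(  \beta_{g},\beta_{b},\beta_{p}\right)  $ and Eq. (\ref{NE2}), and the variance controlled via Theorem \ref{var_multi} together with the hyperbola constraint built into $\mathcal{K}_{J}$. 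Your geometric-series framing and the displayed form of $E\left[  \widehat{\mathbb{U}}_{v}\right]  $ are heuristic shorthand (the true mean is an iterated-projection expression whose $\delta g$ cross terms are precisely what the next-order corrections remove, and the orthogonality you invoke for the disjoint index blocks is the paper's Lemma \ref{ON2}), but the bounds you rely on are the ones the paper actually uses, so the argument is the same in substance.
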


\begin{theorem}
and
\begin{align*}
&  Var\left(  \widehat{\psi}_{\mathcal{K}_{J}}^{eff}\left(  \beta_{g}%
,\beta_{b},\beta_{p}\right)  \right) \\
&  \asymp\frac{k_{-1}}{n^{2}}+%
%TCIMACRO{\tsum \limits_{s=0}^{J}}%
%BeginExpansion
{\textstyle\sum\limits_{s=0}^{J}}
%EndExpansion
\frac{k_{2s}k_{2s-1}}{n^{3}}+\frac{k_{2J+1}^{2}}{n^{3}}\asymp n^{-\frac
{8\beta}{d+4\beta}}%
\end{align*}

\end{theorem}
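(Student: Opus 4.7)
The plan is to decompose $\widehat{\psi}_{\mathcal{K}_J}^{eff}$ into its finitely many constituent U-statistic pieces and compute the variance of each piece via Theorem \ref{var_multi}. Write
\[
\widehat{\psi}_{\mathcal{K}_J}^{eff} = \widehat{\overline{\psi}}_{3,\mathcal{K}_J} + \sum_{v=4}^{m}(-1)^{v-1}\Bigl\{\mathbb{H}_v^{*} + \sum_{s=1}^{J}\mathbb{G}(s,v) + \mathbb{Q}_v\Bigr\}.
\]
Since $J$ and $m=m(\beta_g,\beta_b,\beta_p)$ do not depend on $n$, the total number of pieces is bounded; hence, modulo cross-covariance control, it suffices to show each piece has variance $O(n^{-8\beta/(d+4\beta)})$ and to identify those that attain this rate to obtain the matching lower bound.

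The core calculation is the following consequence of Theorem \ref{var_multi}: for a $v$th-order ``rectangular'' U-statistic $\widehat{\mathbb{U}}_v$ whose $(v-1)$ intermediate projection matrices have rank-increments $\kappa_1,\ldots,\kappa_{v-1}$, one has
\[
Var_{\widehat{\theta}}\bigl[\widehat{\mathbb{U}}_v\bigr]\;\asymp\;\frac{\prod_{j=1}^{v-1}\kappa_j}{n^{v}}\bigl(1+o_p(1)\bigr),
\]
the argument being identical to that used in Theorem \ref{var_if} once one notes that each rectangular term is degenerate under $\widehat{\theta}$ (being built out of orthogonalized residuals $\widehat{\epsilon}$ and $\widehat{\Delta}$). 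Applying this: (i) for $\widehat{\mathbb{U}}_v(_0^{k_0})$ with $k_0=n$, all $\kappa_j=n$ so variance $\asymp n^{-1}$, dominated by $k_{-1}/n^2$ since $1/n \le k_{-1}/n^2$ in the regime $\beta/d<1/4$; (ii) for $\widehat{\mathbb{U}}_v^{(u)}(_{k_0}^{k_{-1}},{}_0^{k_0})$ one factor has $\kappa=k_{-1}-k_0\asymp k_{-1}$ and the rest are $n$, giving variance $\asymp k_{-1}/n^2$; (iii) for $\widehat{\mathbb{U}}_v^{(u,u+1)}(_{k_{2s-2}}^{k_{2s-1}},{}_{k_{2s-2}}^{k_{2s}},{}_0^{k_0})$, two factors contribute $\kappa\asymp k_{2s-1}, k_{2s}$ and the remaining $v-3$ factors contribute $\kappa=n$, giving variance $\asymp k_{2s-1}k_{2s}/n^3$; (iv) for the $\mathbb{Q}_v$ pieces the analogous computation gives $k_{2J+1}^2/n^3$. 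Combining these with Theorem \ref{uy}'s variance bound for $\widehat{\overline{\psi}}_{3,\mathcal{K}_J}$ yields the claimed decomposition.

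To convert the above into an equivalence $\asymp$ rather than an inequality, I will bound cross-covariances between distinct pieces. Two $\widehat{\mathbb{U}}_v$ kernels built over disjoint rectangles of basis indices have vanishing conditional expectation after integrating any single observation, because the orthonormal wavelet basis satisfies $\widehat{E}[\overline{Z}_k\overline{Z}_k^T]=I$, so the Hoeffding decompositions live in orthogonal subspaces; kernels of different orders decouple by degeneracy. Hence cross-covariance contributions are of smaller order than the diagonal variance terms. The final step is the arithmetic verification that
\[
\frac{k_{-1}}{n^2}=\frac{k_{2s}k_{2s-1}}{n^3}=\frac{k_{2J+1}^{2}}{n^3}=n^{-\frac{8\beta}{d+4\beta}},
\]
which follows immediately from $k_{-1}=n^{2/(1+4\beta/d)}$, from the hyperbola relation $k_{2s-1}k_{2s}=n^{(3d+4\beta)/(d+4\beta)}$ (obtained from Eqs.~(\ref{variance1})--(\ref{variance3}) when $\mathcal{K}_{pJ}=\mathcal{K}_{bJ}$), and from $k_{2J+1}=k_{2J+2}=n^{(3d+4\beta)/(2(d+4\beta))}$. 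Since there are $O(1)$ many terms of each type, the total sum is $\asymp n^{-8\beta/(d+4\beta)}$.

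The main obstacle is the cross-covariance argument in the previous paragraph: although the wavelet orthonormality makes the heuristic clean, a rigorous justification requires tracking which basis indices and which Hoeffding components can contribute when two kernels $\widehat{\mathbb{U}}_v^{(u,u+1)}(\cdot)$ and $\widehat{\mathbb{U}}_{v'}^{(u',u'+1)}(\cdot)$ are multiplied and taking conditional expectations iteratively through $v+v'$ observations. The subtlety is that pieces share the ``long'' factor $\overline{Z}_{0}^{k_0}$ (with $k_0=n$), so one must check that the shared block does not produce a cross-covariance exceeding the pairwise geometric mean of the variances. Once this is in hand (mirroring the treatment of higher-order $U$-statistics referenced in Theorem \ref{HOCI}(b)), the proof is complete.
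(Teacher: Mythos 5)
Your proposal is correct and is essentially the paper's own argument: the paper disposes of the variance claim by invoking Theorem \ref{var_multi} (each rectangular kernel of order $v$ with factor dimensions $\kappa_1,\dots,\kappa_{v-1}$ has second moment $\asymp\prod_j\kappa_j$, hence variance $\asymp\prod_j\kappa_j/n^{v}$), together with the count of $O(1)$ pieces from Theorem \ref{uy} and the arithmetic coming from $k_{-1}=n^{2/(1+4\beta/d)}$, the hyperbola relation of Eqs.~(\ref{variance1})--(\ref{variance3}), and $k_{2J+1}=k_{2J+2}$; your per-piece computations and the identity $k_{-1}/n^{2}=k_{2s-1}k_{2s}/n^{3}=k_{2J+1}^{2}/n^{3}=n^{-8\beta/(d+4\beta)}$ match this exactly.

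The one place you overstate the difficulty is the cross-covariance step you flag as the ``main obstacle.'' It is not needed for the upper bound at all: with a number of summands that is bounded in $n$, Cauchy--Schwarz gives $Var(\sum_i T_i)\le\bigl(\sum_i\sqrt{Var(T_i)}\bigr)^{2}\lesssim\max_i Var(T_i)$, so the $O(n^{-8\beta/(d+4\beta)})$ bound follows from the per-piece rates alone. For the matching lower bound, no delicate tracking of shared blocks is required either: kernels of different U-statistic orders are exactly uncorrelated (all pieces are degenerate of their order under $\widehat{\theta}$, and the transfer from $\widehat{\theta}$ to $\theta$ is the argument of Theorem \ref{3.19}), while the same-order rectangle pieces of a given type sum to a single kernel over the disjoint union of their index rectangles, whose second moment is again given by the lattice-point count of Theorem \ref{var_multi} (indeed Lemma \ref{var_kernel} supplies the lower bound), so one dominant term, e.g.\ the degenerate second-order part with variance $\asymp k_{-1}/n^{2}$, already forces the rate. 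With that observation your proof closes with no remaining gap.
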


\textbf{Inference:} Elsewhere, we prove that $\widehat{\psi}_{\mathcal{K}_{J}%
}^{eff}\left(  \beta_{g},\beta_{b},\beta_{p}\right)  $ is asymptotically
normal. Here, to avoid the problem of unknown 'constants' for confidence
interval construction that we discussed in Section \ref{DR_CI_Section}, we
will construct nearly optimal rather than optimal confidence intervals. We
suppose that Eq. $\left(  \ref{NE2}\right)  $ holds with strict equality for
the $\left(  \beta_{g},\beta_{b},\beta_{p}\right)  \ $associated with the
parameter space $\Theta$. Then there exists $\epsilon>0$ such that for all
$0<\sigma<\epsilon,$ $\left(  \beta_{g},\beta_{b}-\sigma,\beta_{p}%
-\sigma\right)  $ satisfies Eq $\left(  \ref{NE2}\right)  $ with strict
equality$,$
\[
sup_{\theta\in\Theta}\left[  \frac{E_{\theta}\left[  \widehat{\psi
}_{\mathcal{K}_{J}}^{eff}\left(  \beta_{g},\beta_{b}-\sigma,\beta_{p}%
-\sigma\right)  |\widehat{\theta}\right]  }{Var_{\theta}\left[  \widehat{\psi
}_{\mathcal{K}_{J}}^{eff}\left(  \beta_{g},\beta_{b}-\sigma,\beta_{p}%
-\sigma\right)  |\widehat{\theta}\right]  }\right]  =o_{p}\left(  1\right)
\]
and
\[
sup_{\theta\in\Theta}\left\{  Var_{\theta}\left[  \widehat{\psi}%
_{\mathcal{K}_{J}}^{eff}\left(  \beta_{g},\beta_{b}-\sigma,\beta_{p}%
-\sigma\right)  |\widehat{\theta}\right]  \right\}  \asymp n^{-\frac{8\left(
\beta-\sigma\right)  }{d+4\left(  \beta-\sigma\right)  }}.
\]
Let $\widehat{\mathbb{W}}\left[  \widehat{\psi}_{\mathcal{K}_{J}}^{eff}\left(
\beta_{g},\beta_{b}\ ,\beta_{p}\ \right)  \right]  $ be a uniformly consistent
estimator of (the properly standardized) $Var_{\theta}\left[  \widehat{\psi
}_{\mathcal{K}_{J}}^{eff}\left(  \beta_{g},\beta_{b}\ ,\beta_{p}\ \right)
|\widehat{\theta}\right]  $ constructed in the same manner as{\Large \ }in
Section \ref{DR_CI_Section}. Then, for all $\sigma<\epsilon,$
\[
\left\{  \widehat{\psi}_{\mathcal{K}_{J}}^{eff}\left(  \beta_{g},\beta
_{b}-\sigma,\beta_{p}-\sigma\right)  -\psi\left(  \theta\right)  \right\}
\left(  \widehat{\mathbb{W}}\left[  \widehat{\psi}_{\mathcal{K}_{J}}%
^{eff}\left(  \beta_{g},\beta_{b}-\sigma,\beta_{p}-\sigma\right)  \right]
\right)  ^{-1}%
\]
converges uniformly in $\theta\in\Theta$ to a $N\left(  0,1\right)  $.
Moreover,
\[
\ \widehat{\psi}_{\mathcal{K}_{J}}^{eff}\left(  \beta_{g},\beta_{b}%
-\sigma,\beta_{p}-\sigma\right)  \pm z_{\alpha}\widehat{\mathbb{W}}\left[
\widehat{\psi}_{\mathcal{K}_{J}}^{eff}\left(  \beta_{g},\beta_{b}-\sigma
,\beta_{p}-\sigma\right)  \right]
\]
is a conservative uniform asymptotic $\left(  1-\alpha\right)  $ confidence
interval for $\psi\left(  \theta\right)  $ with diameter of the order of
$n^{-\frac{4\left(  \beta-\sigma\right)  }{d+4\left(  \beta-\sigma\right)  }%
}.$

\begin{remark}
If Eq.\ref{NE2} holds with an equality and $\mathcal{K}_{J}$ ,$J,$
$\widehat{\overline{\psi}}_{3,\mathcal{K}_{J}}$ are as in the final paragraph
of the preceding subsection then the proof of Theorem \ref{beyond4th} in the
appendix implies $\widehat{\psi}_{\mathcal{K}_{J}}^{eff}\left(  \beta
_{g},\beta_{b},\beta_{p}\right)  -\psi\left(  \theta\right)  =O_{p}\left(
\left(  \log n\right)  n^{-\frac{4\beta}{d+4\beta}}\right)  $
\end{remark}

\section{Adaptive Confidence Intervals for Regression and Treatment Effect
Functions \ with unknown marginal of $X$:\label{adaptive_section}\ }

In this section we describe how to construct adaptive confidence intervals (i)
for a regression function $b\left(  X\right)  =E\left[  Y|X\right]  $ when the
marginal of $X$ is unknown and (ii) for the treatment effect function and
optimal treatment regime in a randomized clinical trial.

\subsection{Regression Functions:}

\textbf{Example 1a continued: }Consider the case $b=p$, $O=\left(  Y,X\right)
\ $with $b\left(  X\right)  =E\left(  Y|X\right)  .$ As usual, we assume for
all $\theta\in\Theta,$ $b\left(  \cdot\right)  $ and the density $g\left(
\cdot\right)  $ of $X$ are contained in known H\"{o}lder balls $H\left(
\beta_{b},C_{b}\right)  $ and $H\left(  \beta_{g},C_{g}\right)  .$ Redefine
$\psi\left(  \theta\right)  \equiv E_{\theta}\left[  \left(  b\left(
X\right)  -\widehat{b}\left(  X\right)  \right)  ^{2}\right]  $ where
$\widehat{b}\left(  \cdot\right)  $ is an adaptive estimate of $b\left(
\cdot\right)  $ from the training sample and expectations and probabilities
remain conditional on the training sample. Adaptivity of $\widehat{b}\left(
\cdot\right)  \ $implies that if $b\left(  \cdot\right)  \in\theta$ is also
contained in a smaller H\"{o}lder ball $H\left(  \beta^{\ast},C\right)  ,$
$\beta^{\ast}>\beta_{b},C<C_{b}$, then $\widehat{b}\left(  \cdot\right)  $
will converge to $b\left(  \cdot\right)  $ under $F\left(  \cdot
,\theta\right)  $ at rate $O_{p}\left(  n^{-\frac{\beta^{\ast}/d}%
{1+2\beta^{\ast}/d}}\right)  .$ Robins and van der Vaart (2006) showed that,
when the marginal density $g\left(  x\right)  $ of $X$ is known, the key to
constructing optimal (rate) adaptive confidence balls for $b\left(  X\right)
$ was to find a rate optimal estimator of $E_{\theta}\left[  \left(  b\left(
X\right)  -\widehat{b}\left(  X\right)  \right)  ^{2}\right]  .$ We shall show
that their approach fails when the marginal of $X$ is unknown, but that a
modification described below succeeds. Specifically, if $b\left(
\cdot\right)  \in\theta$ lies in a smaller H\"{o}lder ball $H\left(
\beta^{\ast},C\right)  ,$ $\beta^{\ast}>\beta_{b},C<C_{b},$ our modification
results in honest asymptotic confidence balls under $F\left(  \cdot
,\theta\right)  ,$ $\theta\in\Theta, $ whose diameter is (essentially) of the
same order $O_{p}\left(  \max\left\{  n^{-\frac{\beta^{\ast}/d}{1+2\beta
^{\ast}/d}},n^{-\frac{2\beta_{b}}{d+4\beta_{b}}}\right\}  \right)  $ as the
diameter of Robins and van der Vaart's optimal adaptive region or ball,
provided either $\left(  i\right)  $ $\beta_{b}/d>1/4$ and $\beta_{g}/d>0$ or
$\left(  ii\right)  $ $\beta_{b}/d\leq1/4$ and eq.$\left(  \ref{NE11}\right)
$ holds with $\beta=\beta_{b}.$ This order is the maximum of the minimax rate
$n^{-\frac{\beta^{\ast}/d}{1+2\beta^{\ast}/d}}$ of convergence of
$\widehat{b}\left(  X\right)  $ to $b\left(  X\right)  $ were $b\left(
X\right)  $ known to lie in $H\left(  \beta^{\ast},C\right)  $ and the square
root of the minimax rate of convergence of an estimator of $E_{\theta}\left[
\left(  b\left(  X\right)  -\widehat{b}\left(  X\right)  \right)  ^{2}\right]
\ $in the larger model $\mathcal{M}\left(  \Theta\right)  $ with $b\left(
\cdot\right)  $ and $g\left(  \cdot\right)  $ only known to lie in $H\left(
\beta_{b},C_{b}\right)  \ $and $H\left(  \beta_{g},C_{g}\right)  .$

The case where $\beta_{b}/d\leq1/4$ and eq.$\left(  \ref{NE11}\right)  $ does
not hold will be considered elsewhere.

Now, since $E_{\theta}\left[  \widehat{b}\left(  X\right)  b\left(  X\right)
\right]  =E_{\theta}\left[  \widehat{b}\left(  X\right)  Y\right]  ,$
\[
\psi\left(  \theta\right)  \equiv E_{\theta}\left[  \left(  b\left(  X\right)
-\widehat{b}\left(  X\right)  \right)  ^{2}\right]  =E_{\theta}\left[
\left\{  b\left(  X\right)  \right\}  ^{2}\right]  -2E_{\theta}\left[
\widehat{b}\left(  X\right)  b\left(  X\right)  \right]  +E_{\theta}\left[
\left\{  \widehat{b}\left(  X\right)  \right\}  ^{2}\right]
\]
has first order influence function $\mathbb{IF}_{1,\psi}\left(  \theta\right)
=\mathbb{V}\left[  H\left(  b,b\right)  -\psi\left(  \theta\right)  \right]
\ $where
\[
H\left(  b,b\right)  =b^{2}\left(  X\right)  +2b\left(  X\right)  \left[
Y-b\left(  X\right)  \right]  -2\widehat{b}\left(  X\right)  Y+\widehat{b}%
^{2}\left(  X\right)  ,
\]
$\ $so $H_{1}=-1,H_{2}=H_{3}=Y,H_{4}=-2\widehat{b}\left(  X\right)
Y+\widehat{b}^{2}\left(  X\right)  .$ Thus $H\left(  b,b\right)  $ for
$E_{\theta}\left[  b\left(  X\right)  ^{2}\right]  $ differs from $H\left(
b,b\right)  $ for $E_{\theta}\left[  \left(  b\left(  X\right)  -\widehat{b}%
\left(  X\right)  \right)  ^{2}\right]  $ only in $H_{4}.$ Since the
truncation bias $\widetilde{\psi}_{k}\left(  \theta\right)  -\psi\left(
\theta\right)  $, higher order influence functions of $\widetilde{\psi}%
_{k}\left(  \theta\right)  $ and estimation bias do not depend on $H_{4}$, it
follows that $TB_{k}\left(  \theta\right)  ,\mathbb{IF}_{jj,\widetilde{\psi
}_{k}}\left(  \theta\right)  $,$\widehat{\mathbb{W}}_{jj,\widetilde{\psi}_{k}%
}^{2},$ and $EB_{m}\left(  \theta\right)  $ are identical for $\psi\left(
\theta\right)  \equiv E_{\theta}\left[  \left(  b\left(  X\right)
-\widehat{b}\left(  X\right)  \right)  ^{2}\right]  $ and $\psi\left(
\theta\right)  \equiv E_{\theta}\left[  b\left(  X\right)  ^{2}\right]  .$ In
contrast, $IF_{1,\psi}\left(  \widehat{\theta}\right)  $ is identically zero
for $\psi\left(  \theta\right)  \equiv E_{\theta}\left[  \left(  b\left(
X\right)  -\widehat{b}\left(  X\right)  \right)  ^{2}\right]  $ but not for
$\psi\left(  \theta\right)  \equiv E_{\theta}\left[  b\left(  X\right)
^{2}\right]  .$ Thus, by Theorem $\left(  \ref{var_if}\right)  ,$\ for
$\psi\left(  \theta\right)  \equiv E_{\theta}\left[  \left(  b\left(
X\right)  -\widehat{b}\left(  X\right)  \right)  ^{2}\right]  ,$%
\ var$_{\theta}\left[  \widehat{\mathbb{\psi}}_{m,\widetilde{\psi}_{k}%
}\right]  \asymp\frac{1}{n}\left(  \frac{k}{n}\right)  ^{m-1}$\ if
$k>n\ $\ and $m>1,$\ and var$_{\theta}\left[  \widehat{\mathbb{\psi}%
}_{m,\widetilde{\psi}_{k}}\right]  =0$\ if $k\leq n$ and $m=1.$ In the case
when $k\leq n\ $\ and $m>1,$ by the Hoeffding decomposition,%
\[
var_{\theta}\left[  \widehat{\mathbb{\psi}}_{m,\widetilde{\psi}_{k}}\right]
=var_{\theta}\left(  \sum_{s=1}^{m}\left(  \mathbb{D}_{s}^{\left(
\widehat{\mathbb{\psi}}_{m,\widetilde{\psi}_{k}}\right)  }\left(
\theta\right)  \right)  \right)
\]
where $\mathbb{D}_{s}^{\left(  \widehat{\mathbb{\psi}}_{m,\widetilde{\psi}%
_{k}}\right)  }$ is a $s$th order degenerate U-statistic. Further by Theorem
$\left(  \ref{var_if}\right)  ,$ we have%
\[
var_{\theta}\left[  \widehat{\mathbb{\psi}}_{m,\widetilde{\psi}_{k}}\right]
\asymp\max\left(  var_{\theta}\left(  \mathbb{D}_{1}^{\left(
\widehat{\mathbb{\psi}}_{m,\widetilde{\psi}_{k}}\right)  }\right)
,var_{\theta}\left(  \mathbb{D}_{2}^{\left(  \widehat{\mathbb{\psi}%
}_{m,\widetilde{\psi}_{k}}\right)  }\right)  \right)
\]
as $var_{\theta}\left(  \mathbb{D}_{s}^{\left(  \widehat{\mathbb{\psi}%
}_{m,\widetilde{\psi}_{k}}\right)  }\right)  \asymp\frac{1}{n}\left(  \frac
{k}{n}\right)  ^{s-1}=o\left(  \frac{k}{n^{2}}\right)  $ for any $s>2.$
Moreover,
\[
var_{\theta}\left(  \mathbb{D}_{1}^{\left(  \widehat{\mathbb{\psi}%
}_{m,\widetilde{\psi}_{k}}\right)  }\right)  \asymp\frac{\left\vert \left\vert
b\left(  X\right)  -\widehat{b}\left(  X\right)  \right\vert \right\vert
_{2}^{2}}{n}%
\]
since the kernel of $\mathbb{D}_{1}^{\left(  \widehat{\mathbb{\psi}%
}_{m,\widetilde{\psi}_{k}}\right)  }$ is of order $O_{p}\left(  \left\vert
\left\vert b\left(  X\right)  -\widehat{b}\left(  X\right)  \right\vert
\right\vert _{2}\right)  .$ In summary
\begin{align*}
var_{\theta}\left[  \widehat{\mathbb{\psi}}_{m,\widetilde{\psi}_{k}}\right]
&  \asymp\max\left(  \frac{\left\vert \left\vert b\left(  X\right)
-\widehat{b}\left(  X\right)  \right\vert \right\vert _{2}^{2}}{n},\frac
{k}{n^{2}}\right) \\
&  =\max\left(  n^{-\frac{2\beta_{b}/d}{1+2\beta_{b}/d}-1},\frac{k}{n^{2}%
}\right)
\end{align*}
\ if $k\leq n\ $\ and $m>1$ (In contrast, for $\psi\left(  \theta\right)
\equiv E_{\theta}\left[  b\left(  X\right)  ^{2}\right]  ,$\ var$_{\theta
}\left[  \widehat{\mathbb{\psi}}_{m,\widetilde{\psi}_{k}}\right]  \asymp
\max\left(  \frac{1}{n},\frac{k}{n^{2}}\right)  =\frac{1}{n}$\ if $k\leq n$).
Thus if $\beta_{b}/d>1/4,$ $\left(  i\right)  $ $\widehat{\psi}_{m_{opt}%
,k_{opt}\left(  m_{opt}\right)  }$ has $k_{opt}\left(  m_{opt}\right)  $ of
$O\left(  n^{\frac{2}{1+4\beta_{b}/d}}\right)  ,$ where $n^{\frac{2}%
{1+4\beta/d}}<n$ comes from equating the order $k^{-4\beta_{b}/d}$ of
$TB_{k}^{2}\left(  \theta\right)  $ to the order $k/n^{2}=n^{-\frac{8\beta
_{b}/d}{1+4\beta_{b}/d}}<<n^{-1}$ of the variance ($n^{-\frac{2\beta_{b}%
/d}{1+2\beta_{b}/d}-1}\ll n^{-\frac{8\beta_{b}/d}{1+4\beta_{b}/d}}$ for
$\forall$ $\beta_{b}>0$) and $\left(  ii\right)  \ m_{opt}$ is the smallest
integer $m$ such that the order $n^{-\left(  \frac{\left(  m-1\right)
\beta_{g}}{2\beta_{g}+d}+\frac{2\beta_{b}}{d+2\beta_{b}}\right)  }$ of
$EB_{m}=O_{p}\left(  n^{-\left(  \frac{\left(  m-1\right)  \beta_{g}}%
{2\beta_{g}+d}+\frac{2\beta_{b}}{d+2\beta_{b}}\right)  }\right)  \ $is less
than the order $n^{-\frac{4\beta_{b}/d}{1+4\beta_{b}/d}}$ of the standard
error$.$ It follows that , for $\beta_{b}/d>1/4,$ in contrast to $\psi\left(
\theta\right)  \equiv E_{\theta}\left[  b\left(  X\right)  ^{2}\right]  ,$ we
can estimate $\psi\left(  \theta\right)  \equiv E_{\theta}\left[  \left(
b\left(  X\right)  -\widehat{b}\left(  X\right)  \right)  ^{2}\right]  $ at
(the minimax) rate $n^{-\frac{4\beta_{b}/d}{1+4\beta_{b}/d}} $ which is faster
(i.e., less ) than the usual parametric rate of $n^{-1/2}.$

When $\beta_{b}/d<1/4\ $, the minimax rates for $\psi\left(  \theta\right)
\equiv E_{\theta}\left[  \left(  b\left(  X\right)  -\widehat{b}\left(
X\right)  \right)  ^{2}\right]  $ and $\psi\left(  \theta\right)  \equiv
E_{\theta}\left[  b\left(  X\right)  ^{2}\right]  $ are identical and, when
eq.$\left(  \ref{NE11}\right)  $ holds, it follows from Theorem
\ref{beyond4th}{\LARGE \ }that $\widehat{\psi}_{\mathcal{K}_{J}}^{eff}\left(
\beta_{g},\beta_{b},\beta_{b}\right)  $ achieves the minimax rate of
$n^{-\frac{4\beta_{b}/d}{1+4\beta_{b}/d}}\geq n^{-1/2}$.

Henceforth assume either $\left(  i\right)  \beta_{b}/d>1/4$ or $\left(
ii\right)  \beta_{b}/d<1/4$ and eq.$\left(  \ref{NE11}\right)  $ holds. Pick
an $\epsilon$ so that eq.$\left(  \ref{NE11}\right)  $ holds for $\left(
\beta_{g},\beta_{b}-\epsilon,\beta_{p}-\epsilon\right)  .$ Let $0<\sigma
<\epsilon$ and define $\widehat{\psi}_{\ }^{\ast}\equiv\widehat{\psi}\left(
\sigma\right)  =\widehat{\psi}_{m_{opt},\left\{  k_{opt}\left(  m_{opt}%
\right)  \right\}  ^{1+\sigma}}$ and $\widehat{\mathbb{W}}^{\ast}%
\equiv\widehat{\mathbb{W}}^{\ast}\left(  \sigma\right)  =\widehat{\mathbb{W}%
}_{m_{opt},\widetilde{\psi}_{\left\{  k_{opt}\left(  m_{opt}\right)  \right\}
^{1+\sigma}}} $if $\beta_{b}/d>1/4$ and $\widehat{\psi}_{\ }^{\ast
}=\widehat{\psi}_{\mathcal{K}_{J}}^{eff}\left(  \beta_{g},\beta_{b}%
-\sigma,\beta_{p}-\sigma\right)  $ and $\widehat{\mathbb{W}}^{\ast
}=\widehat{\mathbb{W}}\left[  \widehat{\psi}_{\mathcal{K}_{J}}^{eff}\left(
\beta_{g},\beta_{b}-\sigma,\beta_{p}-\sigma\right)  \right]  \ $if $\beta
_{b}/d<1/4.$ Note $\widehat{\mathbb{W}}^{\ast}$ is $O_{p}\left(
n^{-\frac{4\left(  \beta_{b}-\sigma\right)  }{d+4\left(  \beta_{b}%
-\sigma\right)  }}\right)  $ uniformly over $\Theta,$ where $\Theta$ is the
parameter space with smoothness parameters $\left(  \beta_{g},\beta
_{b}\right)  .$ Then, by eq.$\left(  \ref{NE11}\right)  $ and results in
Section \ref{case2}, as $n\rightarrow\infty,$ $\inf_{\theta\in\Theta}%
$pr$_{\theta}\left[  \left\{  \widehat{\psi}_{\ }^{\ast}-\psi\left(
\theta\right)  \right\}  \geq-z_{\alpha}\widehat{\mathbb{W}}^{\ast}\right]
\geq1-\alpha.$\ Thus, if $\psi\left(  \theta\right)  $ were a function of
$\theta$ only through $b\left(  \cdot\right)  \ $so $\psi\left(
\theta\right)  =\psi\left(  b\right)  $, the set
\begin{equation}
\left\{  b^{\ast}\left(  \cdot\right)  ;\psi\left(  \theta\right)
\leq\widehat{\psi}_{\ }^{\ast}+z_{\alpha}\widehat{\mathbb{W}}^{\ast}\right\}
\label{CIa}%
\end{equation}
would be an uniform asymptotic $\left(  1-\alpha\right)  $ confidence region
for $b\left(  \cdot\right)  .$ However, for $\psi\left(  \theta\right)
=E_{\theta}\left[  \left(  b\left(  X\right)  -\widehat{b}\left(  X\right)
\right)  ^{2}\right]  ,$ this approach fails because $\psi\left(
\theta\right)  $ also depends on $\theta$ through the unknown density
$g\left(  x\right)  $ of $X.$ This approach succeeded in Robins and van der
Vaart (2006) because $g\left(  x\right)  $ was assumed known.

We consider two solutions. The first gives (near) optimal adaptive honest
intervals. The second would give honest, but non-optimal, intervals. The first
solution is to replace $\psi\left(  \theta\right)  $ with its empirical mean
$\psi_{emp}\left(  b\right)  \equiv\mathbb{V}\left[  \left\{  b\left(
X\right)  -\widehat{b}\left(  X\right)  \right\}  ^{2}\right]  \ $in
eq.(\ref{CIa})$.$
\[
\psi_{emp}\left(  b\right)  -\psi\left(  \theta\right)  =O_{p}\left(  \left[
\left\{  b\left(  X\right)  -\widehat{b}\left(  X\right)  \right\}
^{2}\right]  n^{-1/2}\right)  =O_{p}\left(  n^{-\left(  \frac{2\beta_{b}%
}{d+2\beta_{b}}+\frac{1}{2}\right)  }\right)
\]
uniformly in $\theta\in\Theta.$ It is straightforward to check that for all
$\beta_{b}>0,$ $n^{-\left(  \frac{2\beta_{b}}{d+2\beta_{b}}+\frac{1}%
{2}\right)  }<<n^{-\frac{4\beta_{b}/d}{1+4\beta_{b}/d}}.$ Thus, for
$\sigma<\epsilon,$ $\left\{  \widehat{\psi}_{\ }^{\ast}-\psi_{emp}\left(
b\right)  \right\}  /\left\{  \widehat{\psi}_{\ }^{\ast}-\psi\left(
\theta\right)  \right\}  =1+o_{p}\left(  1\right)  $ uniformly over $\theta
\in\Theta,$ so $\inf_{\theta\in\Theta}$pr$_{\theta}\left[  \left\{
\widehat{\psi}_{\ }^{\ast}-\psi_{emp}\left(  b\right)  \right\}
\geq-z_{\alpha}\widehat{\mathbb{W}}^{\ast}\right]  \geq1-\alpha$ and
\begin{equation}
\left\{  b^{\ast}\left(  \cdot\right)  ;\mathbb{V}\left[  \left\{  b^{\ast
}\left(  X\right)  -\widehat{b}\left(  X\right)  \right\}  ^{2}\right]
\leq\widehat{\psi}_{\ }^{\ast}+z_{\alpha}\widehat{\mathbb{W}}^{\ast}\right\}
\label{CItrue}%
\end{equation}
is a uniform asymptotic $\left(  1-\alpha\right)  $ confidence region for
$b\left(  \cdot\right)  .$ Moreover, if $b\left(  \cdot\right)  \in\theta$
lies in a smaller H\"{o}lder ball $H\left(  \beta^{\ast},C\right)  ,$
$\beta^{\ast}>\beta_{b},C<C_{b},$ then, under $F\left(  \cdot,\theta\right)
,$ the diameter
\begin{align*}
\left\{  \widehat{\psi}_{\ }^{\ast}+z_{\alpha}\widehat{\mathbb{W}}^{\ast
}\right\}  ^{1/2}  &  =\left\{  \psi\left(  \theta\right)  +O_{p}\left(
n^{-\frac{4\left(  \beta_{b}-\sigma\right)  }{d+4\left(  \beta_{b}%
-\sigma\right)  }}\right)  \right\}  ^{1/2}\\
&  =O_{p}\left(  \max\left\{  n^{-\frac{2\beta^{\ast}/d}{1+2\beta^{\ast}/d}%
},n^{-\frac{4\left(  \beta_{b}-\sigma\right)  }{d+4\left(  \beta_{b}%
-\sigma\right)  }}\right\}  \right)  ^{1/2}\\
&  =O_{p}\left(  \max\left\{  n^{-\frac{\ \beta^{\ast}/d}{1+2\beta^{\ast}/d}%
},n^{-\frac{2\left(  \beta_{b}-\sigma\right)  }{d+4\left(  \beta_{b}%
-\sigma\right)  }}\right\}  \right)
\end{align*}
since $\psi\left(  \theta\right)  =O_{p}\left(  n^{-\frac{2\beta^{\ast}%
/d}{1+2\beta^{\ast}/d}}\right)  $ and $\widehat{\psi}_{\ }^{\ast}-\psi\left(
\theta\right)  $ and $\widehat{\mathbb{W}}^{\ast}$ are $O_{p}\left(
n^{-\frac{4\left(  \beta_{b}-\sigma\right)  }{d+4\left(  \beta_{b}%
-\sigma\right)  }}\right)  .$

The second, non-optimal, solution would be to replace the functional
$\psi\left(  \theta\right)  \equiv E_{\theta}\left[  \left(  b\left(
X\right)  -\widehat{b}\left(  X\right)  \right)  ^{2}\right]  $ with
$\psi\left(  b\right)  =\int\left\{  b\left(  x\right)  -\widehat{b}\left(
x\right)  \right\}  ^{2}dx.$ The functional $\psi\left(  b\right)  $ is the
first functional we have considered that is not in our doubly robust class of
functionals. Arguing as above, if we can construct an asymptotically normal
higher order $U-$ statistic estimator $\widehat{\psi}_{\ }^{\ast}$ that
converges to $\psi\left(  b\right)  $ at rate $n^{-\omega}$ on $\mathcal{M}%
\left(  \Theta\right)  $ and a consistent estimator $\widehat{\mathbb{W}%
}^{\ast}\ $of its standard error, then $\left\{  b^{\ast}\left(  \cdot\right)
;\int\left\{  b\left(  x\right)  -\widehat{b}\left(  x\right)  \right\}
^{2}dx\leq\widehat{\psi}_{\ }^{\ast}+z_{\alpha}\widehat{\mathbb{W}}^{\ast
}\right\}  $ would be an honest adaptive confidence interval of diameter
$O_{p}\left(  \max\left\{  n^{-\frac{\beta^{\ast}/d}{1+2\beta^{\ast}/d}%
},n^{-\omega/2}\right\}  \right)  .$ We conjecture, based on arguments given
elsewhere, that the minimax rate for estimation of $\psi\left(  b\right)
=\int\left\{  b\left(  x\right)  -\widehat{b}\left(  x\right)  \right\}
^{2}dx$ exceeds $O_{p}\left[  n^{-\frac{4\beta_{b}}{d+4\beta_{b}}}\right]  $
whenever $\frac{\beta_{g}/d}{2\beta_{g}/d+1}<\frac{\beta/d}{\left(
1+4\beta/d\right)  \left(  1+2\beta/d\right)  }$. Since $\frac{\beta
/d}{\left(  1+4\beta/d\right)  \left(  1+2\beta/d\right)  }>\frac{1-4\beta
/d}{1+4\beta/d}\beta/d\ $for all $\beta>0,$ it follows that, when the marginal
of $X$ is unknown and $\frac{\beta/d}{\left(  1+4\beta/d\right)  \left(
1+2\beta/d\right)  }>\frac{\beta_{g}/d}{2\beta_{g}/d+1}>\frac{1-4\beta
/d}{1+4\beta/d}\beta/d$, intervals based on $\mathbb{V}\left[  \left\{
b^{\ast}\left(  X\right)  -\widehat{b}\left(  X\right)  \right\}  ^{2}\right]
$ will, but intervals based on $\int\left\{  b\left(  x\right)  -\widehat{b}%
\left(  x\right)  \right\}  ^{2}dx $ will not, have diameter of the same order
as the optimal interval with the marginal of $X$ known.

\subsection{Treatment Effect Functions in a Randomized Trial}

\textbf{Example 4 continued: } Consider the case $b=p$, $Y=Y^{\ast}$ wp1 so we
have data $O=\left\{  Y,A,X\right\}  $, where $A$ is a binary treatment, $Y$
is the response, and $X$ is a vector of prerandomization covariates. The
randomization probabilities $\pi_{0}\left(  X\right)  =P\left(  A=1|X\right)
$ are known by design and $b\left(  x\right)  =E_{\theta}(Y|A=1,X=x)-E_{\theta
}(Y|A=0,X=x\ )$ is the average treatment effects function. For $\theta
\in\Theta,$ $b\left(  \cdot\right)  $ and the density $g\left(  \cdot\right)
$ of $X$ are contained in known H\"{o}lder balls $H\left(  \beta_{b}%
,C_{b}\right)  $ and $H\left(  \beta_{g},C_{g}\right)  .$ Suppose we have an
adaptive estimator $\widehat{b}\left(  \cdot\right)  $ of $b\left(
\cdot\right)  \ $based on the training sample constructed as described below.
Now, since $E_{\theta}\left[  \widehat{b}\left(  X\right)  b\left(  X\right)
\right]  =E_{\theta}\left[  \widehat{b}\left(  X\right)  Y|A=1\right]
-E_{\theta}\left[  \widehat{b}\left(  X\right)  Y|A=0\right]  $ has influence
function $\frac{A}{\pi_{0}\left(  X\right)  }Y\widehat{b}\left(  X\right)
-\frac{1-A}{1-\pi_{0}\left(  X\right)  }Y\widehat{b}\left(  X\right)
-E_{\theta}\left[  \widehat{b}\left(  X\right)  b\left(  X\right)  \right]
=\left(  A-\pi_{0}\left(  X\right)  \right)  \sigma_{0}^{-2}\left(  X\right)
Y\widehat{b}\left(  X\right)  -E_{\theta}\left[  \widehat{b}\left(  X\right)
b\left(  X\right)  \right]  ,$ where $\sigma_{0}^{2}\left(  X\right)  =\pi
_{0}\left(  X\right)  \left\{  1-\pi_{0}\left(  X\right)  \right\}  ,$
$\psi\left(  \theta\right)  \equiv E_{\theta}\left[  \left(  b\left(
X\right)  -\widehat{b}\left(  X\right)  \right)  ^{2}\right]  $ has first
order influence functions, indexed by arbitrary functions $c\left(  x\right)
,$ $\mathbb{IF}_{1,\psi}\left(  \theta,c\right)  \equiv\mathbb{IF}_{1,\psi
}\left(  \theta\right)  =\mathbb{V}\left[  H\left(  b,b\right)  -\psi\left(
\theta\right)  \right]  \ $with
\begin{align*}
H_{1}  &  =1-2A\left\{  A-\pi_{0}\left(  X\right)  \right\}  \sigma_{0}%
^{-2}\left(  X\right)  ,\\
H_{2}  &  =H_{3}=\left\{  A-\pi_{0}\left(  X\right)  \right\}  \sigma_{0}%
^{-2}\left(  X\right)  Y,\\
H_{4}  &  =\left\{  A-\pi_{0}\left(  X\right)  \right\}  c\left(  X\right)
-2\left(  A-\pi_{0}\left(  X\right)  \right)  \sigma_{0}^{-2}\left(  X\right)
Y\widehat{b}\left(  X\right)  +\widehat{b}^{2}\left(  X\right)
\end{align*}
Thus $H\left(  b,b\right)  $ for $E_{\theta}\left[  \left(  b\left(  X\right)
-\widehat{b}\left(  X\right)  \right)  ^{2}\right]  $ differs from $H\left(
b,b\right)  $ for $\psi\left(  \theta\right)  \equiv E_{\theta}\left[
b\left(  X\right)  ^{2}\right]  $ only in $H_{4}.$ It follows that all the
properties of the confidence ball \ref{CItrue} for $b\left(  \cdot\right)
=E_{\theta}(Y|\ X=\cdot)$ in the setting of the last subsection remain true
for $b\left(  \cdot\right)  =E_{\theta}(Y|A=1,X=\cdot)-E_{\theta}\left(
Y|A=0,X=\cdot\right)  $ in the setting of this subsection.

Now define $d_{b^{\ast}}\left(  x\right)  =I\left[  b^{\ast}\left(  x\right)
>0\right]  .$ Then it then follows that an honest $1-\alpha$ uniform
asymptotic confidence set for the optimal treatment regime $d_{opt}\left(
\cdot\right)  =I\left[  b\left(  \cdot\right)  >0\right]  $ is given by
$\left\{  d_{b^{\ast}}\left(  \cdot\right)  ;\mathbb{V}\left[  \left\{
b^{\ast}\left(  X\right)  -\widehat{b}\left(  X\right)  \right\}  ^{2}\right]
\leq\widehat{\psi}_{\ }^{\ast}+z_{\alpha}\widehat{\mathbb{W}}^{\ast}\right\}
.$

\textbf{Adaptive Estimator of The Treatment Effect Function: }One among many
approaches to constructing a rate-adaptive estimator of $b\left(
\cdot\right)  $ is as follows. Split the training sample into two random
subsamples - a candidate estimator subsample of size $n_{c}$ and a validation
subsample of size $n_{v},$ where both $n_{c}/n$ and $n_{v}/n$ are bounded away
from $0$ as $n\rightarrow\infty.$ Noting that $0=\mathbb{E}_{\theta}\left[
\left\{  Y-Ab\left(  X\right)  \right\}  q\left(  X\right)  \left\{  A-\pi
_{0}\left(  X\right)  \right\}  \right]  $ for all $q\left(  \cdot\right)  ,$
we construct candidate estimators of $b\left(  \cdot\right)  $ as follows. For
$s=1,2,...,$ $n-1,$ let $\widehat{\overline{\varkappa}}_{s}$ be the solution,
if any, to the $s$ equations
\[
0=\mathbb{P}_{c}\left[  \left\{  Y-A\overline{\varkappa}_{s}^{T}%
\overline{\varphi}_{s}\left(  X\right)  \right\}  \overline{\varphi}%
_{s}\left(  X\right)  \left\{  A-\pi_{0}\left(  X\right)  \right\}  \right]
\]
where $\varphi_{1}\left(  X\right)  ,\varphi_{2}\left(  X\right)  ,...$ is a
complete basis$\ $wrt to Lebesgue measure in $R^{d}$ that provides optimal
rate approximation for H\"{o}lder balls and $\mathbb{P}_{c}$ is the empirical
measure for the candidate estimator subsample. Our candidates for $b\left(
X\right)  $ are the $\widehat{b}^{\left(  s\right)  }\left(  X\right)
=\overline{\varphi}_{s}\left(  X\right)  ^{T}\widehat{\overline{\varkappa}%
}_{s}. $ Robins (2004) proved that $b\left(  \cdot\right)  $ is the unique
function $b^{\ast}\left(  \cdot\right)  $\textbf{\ }minimizing $Risk\left(
b^{\ast}\right)  \equiv E_{\theta}\left[  \sigma_{0}^{-2}\left(  X\right)
\left\{  Y-\left[  A-\pi_{0}\left(  X\right)  \right]  b^{\ast}\left(
X\right)  \right\}  ^{2}\right]  .$ In fact, the candidate $\widehat{b}%
^{\left(  s\right)  }\left(  X\right)  $ in our set for which $Risk\left(
\widehat{b}^{\left(  s\right)  }\right)  \ $is smallest is also the candidate
that minimizes $E\left[  \left(  b\left(  X\right)  -\widehat{b}^{\left(
s\right)  }\left(  X\right)  \right)  ^{2}\ \right]  $ since $Risk\left(
\widehat{b}^{\left(  s\right)  }\right)  -Risk\left(  b\right)  =E\left[
\left(  b\left(  X\right)  -\widehat{b}^{\left(  s\right)  }\left(  X\right)
\right)  ^{2}\right]  . $ Specifically,
\begin{align*}
&  E\left[
\begin{array}
[c]{c}%
\sigma_{0}^{-2}\left(  X\right)  \left\{  Y-\left[  A-\pi_{0}\left(  X\right)
\right]  \widehat{b}^{\left(  s\right)  }\left(  X\right)  \right\}  ^{2}\\
-\sigma_{0}^{-2}\left(  X\right)  \left\{  Y-\left[  A-\pi_{0}\left(
X\right)  \right]  b\left(  X\right)  \right\}  ^{2}%
\end{array}
\right] \\
&  =E\left[
\begin{array}
[c]{c}%
\sigma_{0}^{-2}\left(  X\right)  \left(  A-\pi_{0}\left(  X\right)  \right)
\left(  b\left(  X\right)  -\widehat{b}^{\left(  s\right)  }\left(  X\right)
\right)  \times\\
\left(  2\left(  Ab\left(  X\right)  -E\left(  Y|A=0,X\right)  \right)
-\left(  A-\pi_{0}\left(  X\right)  \right)  \left(  b\left(  X\right)
+\widehat{b}^{\left(  s\right)  }\left(  X\right)  \right)  \right)
\end{array}
\right] \\
&  =E\left(  \sigma_{0}^{-2}\left(  X\right)  \left(  A-\pi_{0}\left(
X\right)  \right)  A\left(  b\left(  X\right)  -\widehat{b}^{\left(  s\right)
}\left(  X\right)  \right)  ^{2}\right) \\
&  =E\left[  \left(  b\left(  X\right)  -\widehat{b}^{\left(  s\right)
}\left(  X\right)  \right)  ^{2}\right]
\end{align*}

We use these results to select among our candidates by cross-validation. Let
$\widehat{b}\left(  \cdot\right)  $ be the $\widehat{b}^{\left(  s\right)
}\left(  \cdot\right)  $ minimizing $\mathbb{P}_{v}\left[  \sigma_{0}%
^{-2}\left(  X\right)  \left\{  Y-\left[  A-\pi_{0}\left(  X\right)  \right]
\widehat{b}^{\left(  s\right)  }\left(  X\right)  \right\}  ^{2}\right]  $
over $s=1,2,...,n-1 $, where $\mathbb{P}_{v}$ is the validation subsample
empirical measure. If $b\left(  \cdot\right)  $ were known to lie in a
H\"{o}lder ball $H\left(  \beta,C\right)  ,$ it is easy to check that the
candidate $\widehat{b}^{\left(  s\right)  }\left(  \cdot\right)  $ with
$s=\lfloor n^{\frac{1}{2\beta+1}}\rfloor$ obtains the optimal rate of
$n^{\frac{-\beta}{2\beta+1}}$ for estimating $E\left[  \left(  b\left(
X\right)  -\widehat{b}^{\left(  s\right)  }\left(  X\right)  \right)
^{2}\right]  .$ Since the number of candidates at sample size $n$ is less than
$n,$ it then follows at once from van der Laan and Dudoit's (2003) results on
model selection by cross validation that $\widehat{b}\left(  \cdot\right)  $
is adaptive over H\"{o}lder balls.

\section{Testing, Confidence Sets, and Implicitly Defined
Functionals:\label{testing_section}}

In Example $1c$ of section 3.1$,$ we considered the following problem. We were
given a functional $\psi\left(  \tau,\theta\right)  $ indexed by a real number
$\tau$ and the parameter $\theta\in\Theta.$ The implicitly defined-functional
$\tau\left(  \theta\right)  $ was the assumed unique solution to
$0=\psi\left(  \tau,\theta\right)  .$ We noted that a $(1-\alpha)\ $confidence
set for $\tau\left(  \theta\right)  $ is the set of $\tau\ $such that a
$(1-\alpha)\ $CI interval for $\psi\left(  \tau,\theta\right)  $ contains $0.$
In the following subsection we derive the width of the confidence set for
$\tau\left(  \theta\right)  .$ We then generalize the problem in the second
subsection by introducing the notions of the testing tangent space, a testing
influence function, and the higher order efficient testing score. In the final
subsection, we show how the two earlier subsections are related.

\subsection{Confidence Intervals for Implicitly Defined Functionals:\ }

To derive the order of the length of the confidence interval for the parameter
$\tau\left(  \theta\right)  $\ in Example 1c, we can use the next theorem as
follows. \ Assume eq (\ref{NE11})\ holds and $\beta\leq1/4$. Then we can take
the estimator $\widetilde{\psi}\left(  \tau\right)  $\ and rate $n^{-\gamma}$
in the theorem to be the estimator $\widehat{\psi}_{\mathcal{K}_{J}}^{eff}%
$\ and rate $n^{-\frac{4\beta}{4\beta+1}+\sigma}$\ for a very small positive
$\sigma$\ and conclude that the length of the confidence interval for
$\tau\left(  \theta\right)  $\ in Example 1c to be $O_{p}\left(
n^{-\frac{4\beta}{4\beta+1}+\sigma}\right)  .$

\begin{theorem}
\label{J1}\textbf{: }Suppose for an estimator $\widehat{\psi}\left(
\tau\right)  $ and functional $\psi\left(  \tau,\theta\right)  ,$ there is a
scale estimator $\widehat{\mathbb{W}}\left(  \tau\right)  $ such that
$n^{\gamma}\widehat{\mathbb{W}}\left(  \tau\right)  \rightarrow w\left(
\tau,\theta\right)  $ in $\theta-$probability $,w\left(  \tau,\theta\right)
>c^{\ast}>0$ and $\left(  \widehat{\psi}\left(  \tau\right)  -\psi\left(
\tau,\theta\right)  \right)  /\widehat{\mathbb{W}}\left(  \tau\right)  $
converges in law to $N\left(  0,1\right)  \ $uniformly for $\theta\in\Theta$,
$\tau\in\left\{  \tau\left(  \theta\right)  ;\theta\in\Theta\right\}  $. Then,
(i) with $z_{\alpha}$ the $\alpha-$quantile and $\Phi\left(  \cdot\right)  $
the CDF of a $N\left(  0,1\right)  ,$ the confidence set $\mathcal{C}%
_{n}=\left\{  \tau;-z_{1-\alpha/2}<\frac{\widehat{\psi}\left(  \tau\right)
}{\ \widehat{\mathbb{W}}\left(  \tau\right)  }<z_{1-\alpha/2}\right\}  $ is a
uniform asymptotic $1-\alpha$ confidence set for the (assumed) unique solution
$\tau\left(  \theta\right)  $ to $\psi\left(  \tau,\theta\right)  =0;$
$\left(  ii\right)  $ the probability under $\theta$ that a sequence
$\tau=\tau_{n}\ $satisfying $\psi\left(  \tau_{n},\theta\right)
=a_{n}n^{-\rho},a_{n}\rightarrow a\neq0$ is contained in $\mathcal{C}_{n}$
converges to $1$ when $\rho>\gamma,$ is $o\left(  1\right)  $ when
$\rho<\gamma,$ and converges to $\Phi\left(  z_{1-\alpha/2}-\frac{a}{w\left(
\tau\left(  \theta\right)  ,\theta\right)  \ }\right)  -\Phi\left(
-z_{1-\alpha/2}-\frac{a}{w\left(  \tau\left(  \theta\right)  ,\theta\right)
\ }\right)  \ $when $\rho=\gamma.$ $\left(  iii\right)  $ If $\psi\left(
\tau,\theta\right)  $ is uniformly twice continuously differentiable in $\tau$
and\emph{\ }$0<\sigma<\left\vert \psi_{\tau}\left(  \tau\left(  \theta\right)
,\theta\right)  \right\vert <c$\emph{\ }and\emph{\ }$\left\vert \psi_{\tau
^{2}}\left(  \tau\left(  \theta\right)  ,\theta\right)  \right\vert
<c$\emph{\ }for constants\emph{\ }$\left(  \sigma,c\right)  ,$ then $\left(
ii\right)  $ holds for a sequence $\tau=\tau_{n}$ satisfying $\tau_{n}%
-\tau\left(  \theta\right)  =\left\{  \psi_{\tau}\left(  \tau\left(
\theta\right)  ,\theta\right)  \right\}  ^{-1}a_{n}n^{-\rho},a_{n}\rightarrow
a\neq0,\rho>0.$
\end{theorem}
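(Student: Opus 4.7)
The plan is to handle the three parts in order, with each reducing to the pivotal convergence hypothesis after a suitable normalization.

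For part (i), I would evaluate the pivot $\widehat{\psi}(\tau)/\widehat{\mathbb{W}}(\tau)$ at $\tau=\tau(\theta)$. Because $\psi(\tau(\theta),\theta)=0$ by the defining identity, the equality
\begin{equation*}
\frac{\widehat{\psi}(\tau(\theta))}{\widehat{\mathbb{W}}(\tau(\theta))}=\frac{\widehat{\psi}(\tau(\theta))-\psi(\tau(\theta),\theta)}{\widehat{\mathbb{W}}(\tau(\theta))}
\end{equation*}
lets me invoke the hypothesized uniform-in-$\theta$ weak convergence to $N(0,1)$ directly. The coverage probability $P_\theta(\tau(\theta)\in\mathcal{C}_n)$ then converges uniformly over $\Theta$ to $\Phi(z_{1-\alpha/2})-\Phi(-z_{1-\alpha/2})=1-\alpha$.

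For part (ii) I would write
\begin{equation*}
\frac{\widehat{\psi}(\tau_n)}{\widehat{\mathbb{W}}(\tau_n)}=\underbrace{\frac{\widehat{\psi}(\tau_n)-\psi(\tau_n,\theta)}{\widehat{\mathbb{W}}(\tau_n)}}_{\to\,N(0,1)}+\frac{\psi(\tau_n,\theta)}{\widehat{\mathbb{W}}(\tau_n)},
\end{equation*}
and use $n^{\gamma}\widehat{\mathbb{W}}(\tau_n)\to_p w(\tau_n,\theta)$ together with $w>c^{*}>0$ to rewrite the drift as $a_n n^{\gamma-\rho}/w(\tau_n,\theta)+o_p(1)$. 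The three cases then follow immediately from Slutsky's theorem: when $\rho>\gamma$ the drift is $o_p(1)$ and the pivot converges to $N(0,1)$ giving coverage probability $1-\alpha$ (I read the ``$1$'' in the statement as a shorthand or typo for the usual $1-\alpha$ nominal level); when $\rho<\gamma$ the drift diverges in absolute value and the probability of lying in any fixed bounded interval is $o(1)$; when $\rho=\gamma$ the drift converges to $a/w(\tau(\theta),\theta)$ (by continuity of $w$ at $\tau(\theta)$, since $\tau_n\to\tau(\theta)$ in this regime after invoking (iii)), giving the shifted limit law $N(a/w,1)$ and hence the stated $\Phi$-difference.

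For part (iii), I would Taylor-expand $\psi$ in its first argument around $\tau(\theta)$; since $\psi(\tau(\theta),\theta)=0$,
\begin{equation*}
\psi(\tau_n,\theta)=\psi_\tau(\tau(\theta),\theta)\bigl(\tau_n-\tau(\theta)\bigr)+\tfrac{1}{2}\psi_{\tau^2}(\tilde\tau_n,\theta)\bigl(\tau_n-\tau(\theta)\bigr)^2
\end{equation*}
for some $\tilde\tau_n$ between $\tau_n$ and $\tau(\theta)$. Substituting $\tau_n-\tau(\theta)=\{\psi_\tau(\tau(\theta),\theta)\}^{-1}a_n n^{-\rho}$ and using the uniform bounds $\sigma<|\psi_\tau|<c$ and $|\psi_{\tau^2}|<c$ gives $\psi(\tau_n,\theta)=a_n n^{-\rho}+O(n^{-2\rho})$, so for $\rho>0$ the remainder is asymptotically negligible against $n^{-\rho}$. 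Hence the sequence satisfies the hypothesis of (ii) with the same $(a,\rho)$, and (ii) applies.

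The main obstacle is the invocation of the pivotal CLT along the sequence $\tau_n$ in part (ii), since the hypothesis's uniform convergence is only asserted on the image $\{\tau(\theta'):\theta'\in\Theta\}$. For part (iii), this is innocuous because $\tau_n\to\tau(\theta)$ and one may exploit continuity of $w$ and of the limiting pivotal law in $\tau$ near $\tau(\theta)$; in the pure statement (ii), one implicitly needs the hypothesis to extend to the specific sequence under consideration, which in the applications (e.g., Example 1c, where $\psi(\tau,\theta)=\psi(\widetilde\tau,\theta)$ is constructed as a functional in the doubly-robust class for each fixed $\tau$) will hold because the convergence in (ii) can be verified separately at each $\tau_n$ with the same rate $\gamma$. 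I would handle this either by strengthening the stated uniformity to an open neighborhood of the solution curve, or by checking directly that the estimators used in the intended applications satisfy the CLT along such sequences.
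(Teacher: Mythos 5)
Your proposal is correct and follows essentially the same route as the paper's proof: for (i) evaluate the pivot at $\tau(\theta)$ using $\psi(\tau(\theta),\theta)=0$; for (ii) decompose into the asymptotically $N(0,1)$ term plus the drift $\psi(\tau_n,\theta)/\widehat{\mathbb{W}}(\tau_n)$ and pass to the limit $a\lim n^{\gamma-\rho}/w(\tau(\theta),\theta)$; for (iii) Taylor-expand $\psi$ in $\tau$ to reduce to (ii). Your two side remarks are also consistent with the paper: its own computation in the case $\rho>\gamma$ yields the nominal level $1-\alpha$ (matching your reading of the stated ``$1$''), and it likewise uses $n^{\gamma}\widehat{\mathbb{W}}(\tau_n)\rightarrow w(\tau(\theta),\theta)$ and the pivotal limit along the sequence $\tau_n$ without comment, so your flagged extension of the hypothesis is a reasonable tightening rather than a different method.
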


\begin{proof}
(i): That $\mathcal{C}_{n}$ is a uniform asymptotic $1-\alpha$ confidence set
is immediate. (ii): Now
\begin{align*}
&  Pr_{\theta}\left\{  z_{1-\alpha/2}>\frac{\widehat{\psi}\left(  \tau
_{n}\right)  }{\widehat{\mathbb{W}}\left(  \tau_{n}\right)  \ }>-z_{1-\alpha
/2}\right\} \\
&  =Pr_{\theta}\left\{  z_{1-\alpha/2}-\frac{\psi\left(  \tau_{n}%
,\theta\right)  }{\widehat{\mathbb{W}}\left(  \tau_{n}\right)  \ }%
>\frac{\widehat{\psi}\left(  \tau_{n}\right)  -\psi\left(  \tau_{n}%
,\theta\right)  }{\widehat{\mathbb{W}}\left(  \tau_{n}\right)  \ }%
>-z_{1-\alpha/2}-\frac{\psi\left(  \tau_{n},\theta\right)  }%
{\widehat{\mathbb{W}}\left(  \tau_{n}\right)  \ }\right\} \\
&  \underset{n\rightarrow\infty}{\rightarrow}\Phi\left(  z_{1-\alpha/2}%
-\lim_{n\rightarrow\infty}\frac{n^{\gamma\ }\psi\left(  \tau_{n}%
,\theta\right)  }{n^{\gamma\ }\widehat{\mathbb{W}}\left(  \tau_{n}\right)
\ }\right)  -\Phi\left(  -z_{1-\alpha/2}-\lim_{n\rightarrow\infty}%
\frac{n^{\gamma\ }\psi\left(  \tau_{n},\theta\right)  }{n^{\gamma
\ }\widehat{\mathbb{W}}\left(  \tau_{n}\right)  \ }\right) \\
&  =\Phi\left(  z_{1-\alpha/2}-\frac{a\lim_{n\rightarrow\infty}n^{\gamma-\rho
}}{w\left(  \tau\left(  \theta\right)  ,\theta\right)  \ }\right)
-\Phi\left(  -z_{1-\alpha/2}-\frac{a\lim_{n\rightarrow\infty}n^{\gamma-\rho}%
}{w\left(  \tau\left(  \theta\right)  ,\theta\right)  \ }\right)  .
\end{align*}
(iii): Since $\psi\left(  \tau_{n},\theta\right)  =\psi_{\tau}\left(
\tau\left(  \theta\right)  ,\theta\right)  \left(  \tau_{n}-\tau\left(
\theta\right)  \right)  +\frac{1}{2}\psi_{\tau^{2}}\left(  \tau^{\ast}\left(
\theta\right)  ,\theta\right)  \left(  \tau_{n}-\tau\left(  \theta\right)
\right)  ^{2}$ for some $\tau^{\ast}\left(  \theta\right)  $ between
$\tau\left(  \theta\right)  $ and $\tau,$ we have that $\psi\left(  \tau
_{n},\theta\right)  =a_{n}n^{-\rho}+o_{p}\left(  a_{n}n^{-\rho}\right)
=a_{n}\left(  1+o_{p}\left(  1\right)  \right)  n^{-\rho}$ satisfies the
assumption in (ii).
\end{proof}

\textbf{Remark:}\ Under some further regularity conditions, the solution
$\widetilde{\tau}$ to $0=\widetilde{\psi}\left(  \tau\right)  $ is
asymptotically normal with mean $\tau\left(  \theta\right)  $ and variance
$\psi_{\tau}^{-2}\left(  \tau\ ,\theta\right)  \left[  \left\{  w\left(
\tau\left(  \theta\right)  ,\theta\right)  \right\}  ^{2}\right]  $ uniformly
over $\theta\in\Theta$,$\tau\in\left\{  \tau\left(  \theta\right)  ;\theta
\in\Theta\right\}  .$

\subsection{Testing influence functions and a higher order efficient score}

In the following, we repeatedly use definitions from Sec. 2, which might
usefully be reviewed at this point.

\begin{definition}
\textbf{\ m}$^{\mathbf{th}}$\textbf{\ order testing nuisance tangent space,
testing tangent space, testing influence functions, efficient score, efficient
information, and efficient testing variance: }Given a model $\mathcal{M}%
\left(  \Theta\right)  $ with parameter space $\Theta$ and a functional
$\tau\left(  \theta\right)  ,$ define $\mathcal{M}\left(  \Theta\left(
\tau^{\dagger}\right)  \right)  $ to be the submodel with parameter space
$\Theta\left(  \tau^{\dagger}\right)  \equiv\Theta\cap\left\{  \theta
;\tau\left(  \theta\right)  =\tau^{\dagger}\right\}  )$. Thus $\mathcal{M}%
\left(  \Theta\left(  \tau^{\dagger}\right)  \right)  $ is the submodel with
$\tau\left(  \theta\right)  $ equal to $\tau^{\dagger}.$ Define, for
$\theta\in\Theta\left(  \tau^{\dagger}\right)  ,$ the $m$th order (i) testing
nuisance tangent space $\Gamma_{m}^{nuis,test}\left(  \theta,\tau^{\dagger
}\right)  $ to be the $m^{th}$ order tangent space for the submodel
$\mathcal{M}\left(  \Theta\left(  \tau^{\dagger}\right)  \right)  ,$ (ii)
testing tangent space $\Gamma_{m}^{test}\left(  \theta,\tau^{\dagger}\right)
$ to be the closed linear span of $\mathbb{IF}_{1,\tau\left(  \cdot\right)
}\left(  \theta\right)  \cup\Gamma_{m}^{nuis,test}\left(  \theta,\tau
^{\dagger}\right)  ,$ (iiia) set $\Gamma_{m}^{nuis,test,\perp}\left(
\theta,\tau^{\dagger}\right)  \equiv\left\{  \mathbb{IF}_{m,\tau\left(
\cdot\right)  }^{test}\right\}  $ of testing influence functions to be the
orthocomplement of $\Gamma_{m}^{nuis,test}\left(  \theta,\tau^{\dagger
}\right)  $ in $\mathcal{U}_{m}\left(  \theta\right)  $ $,$ (iiib) set
$\Gamma_{m}^{std,nuis,test,\perp}\left(  \theta,\tau^{\dagger}\right)
\equiv\left\{  \mathbb{IF}_{m,\tau\left(  \cdot\right)  }^{std,test}\right\}
$ of standardized testing influence functions to be
\[
\left\{  \mathbb{IF}_{m,\tau\left(  \cdot\right)  }^{std,test}\in\Gamma
_{m}^{nuis,test,\perp}\left(  \theta,\tau^{\dagger}\right)  ;\text{ {}%
}E_{\theta}\left[  \mathbb{IF}_{m,\tau\left(  \cdot\right)  }^{std,test}%
\mathbb{IF}_{1,\tau\left(  \cdot\right)  }^{eff}\left(  \theta\right)
\right]  =var_{\theta}\left[  \mathbb{IF}_{1,\tau\left(  \cdot\right)  }%
^{eff}\left(  \theta\right)  \right]  \right\}  ,
\]
$\left(  iv\right)  $ efficient testing score $\mathbb{ES}_{m}^{test}\left(
\theta\right)  \equiv\mathbb{ES}_{m,\tau\left(  \cdot\right)  }^{test}\left(
\theta\right)  \in\Gamma_{m}^{test}\left(  \theta,\tau^{\dagger}\right)  $ to
be
\[
\mathbb{ES}_{m,\tau\left(  \cdot\right)  }^{test}\left(  \theta\right)
=\mathbb{ES}_{1}^{test}-\Pi_{\theta}\left[  \mathbb{ES}_{1}^{test}|\Gamma
_{m}^{nuis,test}\left(  \theta,\tau^{\dagger}\right)  \right]  \equiv
\Pi_{\theta}\left[  \mathbb{ES}_{1}^{test}\left(  \theta\right)  |\Gamma
_{m}^{nuis,test,\perp}\left(  \theta,\tau^{\dagger}\right)  \right]
\]
where $\mathbb{ES}_{1}^{test}\left(  \theta\right)  \equiv\ \mathbb{ES}%
_{1,\tau\left(  \cdot\right)  }^{test}\left(  \theta\right)  \equiv
var_{\theta}\left\{  \mathbb{IF}_{1,\tau\left(  \cdot\right)  }^{eff}\left(
\theta\right)  \right\}  ^{-1}\mathbb{IF}_{1,\tau\left(  \cdot\right)  }%
^{eff}\left(  \theta\right)  ,$ $\left(  v\right)  $ efficient testing
information to be $var_{\theta}\left\{  \mathbb{ES}_{m}^{test}\left(
\theta\right)  \right\}  ,$ and $\left(  vi\right)  $ the efficient testing
variance to be $\left[  var_{\theta}\left\{  \mathbb{ES}_{m}^{test}\left(
\theta\right)  \right\}  \right]  ^{-1}.$
\end{definition}

Further define, for $\theta\in\Theta,$ the $m$th order (i) estimation nuisance
tangent space $\Gamma_{m}^{nuis}\left(  \theta\right)  $ to be $\Gamma
_{m}^{nuis}\left(  \theta\right)  \equiv$ $\left\{  \mathbb{A}_{m}\in
\Gamma_{m}\left(  \theta\right)  ;E\left[  \mathbb{A}_{m}\mathbb{IF}%
_{m,\tau\left(  \cdot\right)  }^{eff}\left(  \theta\right)  \right]
=0\right\}  ,$ and (ii) efficient estimation variance to be $var_{\theta
}\left[  \mathbb{IF}_{m,\tau\left(  \cdot\right)  }^{eff}\left(
\theta\right)  \right]  $.

\textbf{Remark:} For $m=1,$ the testing and estimation nuisance tangent spaces
$\Gamma_{m}^{nuis,test}\left(  \theta,\tau^{\dagger}\right)  $ and $\Gamma
_{m}^{nuis}\left(  \theta\right)  $ are identical. However for $m>1,$
$\Gamma_{m}^{nuis,test}\left(  \theta,\tau^{\dagger}\right)  $ is generally a
strict subset of $\Gamma_{m}^{nuis}\left(  \theta\right)  .$ For example, if
the model can be parametrized as $\theta=\left(  \tau,\rho\right)  $ and
$\Theta$ is the product of the parameter spaces for $\tau$ and $\rho,$ the
$\Gamma_{m}^{nuis,test}\left(  \theta,\tau^{\dagger}\right)  $ is the space of
mth order scores for $\rho;$ however, $\Gamma_{m}^{nuis}\left(  \theta\right)
$ also includes the mixed scores that have $s$ derivatives in the direction
$\tau$ and $m-s\geq1$ derivatives in $\rho$ directions. It is this strict
inclusion that gives rise to higher order phenomena that do not occur in the
first order theory.

\begin{theorem}
\label{gg}\textbf{: }Suppose\textbf{\ }$\mathbb{ES}_{m}^{test}\left(
\theta\right)  $ exists in $\mathcal{U}_{m}\left(  \theta\right)  .$ Then for
$\theta\in\Theta\left(  \tau^{\dagger}\right)  ,$ (i) the set of estimation
nuisance scores $\Gamma_{m}^{nuis}\left(  \theta\right)  $ includes the set of
testing nuisance scores $\Gamma_{m}^{nuis,test}\left(  \theta,\tau^{\dagger
}\right)  $ with equality of the sets when $m=1$, (ii) $\mathbb{IF}%
_{m,\tau\left(  \cdot\right)  }^{test}\left(  \theta\right)  ,\theta\in
\Theta\left(  \tau^{\dagger}\right)  $ is standardized if and only if
$E\left[  \mathbb{IF}_{m,\tau\left(  \cdot\right)  }^{test}\left(
\theta\right)  \mathbb{ES}_{m}^{test}\left(  \theta\right)  \right]  =1\ $if
and only if $E\left[  \mathbb{IF}_{m,\tau\left(  \cdot\right)  }^{test}\left(
\theta\right)  \mathbb{ES}_{1}^{test}\left(  \theta\right)  \right]  =1,$
(iii)%
\[
\left\{  \mathbb{IF}_{m,\tau\left(  \cdot\right)  }^{std,test}\right\}
=\left\{  E_{\theta}\left[  \mathbb{IF}_{m,\tau\left(  \cdot\right)  }%
^{test}\mathbb{ES}_{1}^{test}\left(  \theta\right)  \right]  ^{-1}%
\mathbb{IF}_{m,\tau\left(  \cdot\right)  }^{test};\text{ }\mathbb{IF}%
_{m,\tau\left(  \cdot\right)  }^{test}\in\left\{  \mathbb{IF}_{m,\tau\left(
\cdot\right)  }^{test}\right\}  \right\}  ,
\]
(iv) the set $\left\{  \mathbb{IF}_{m,\tau\left(  \cdot\right)  }\left(
\theta\right)  \right\}  $ of all mth order estimation influence functions is
contained in $\left\{  \mathbb{IF}_{m,\tau\left(  \cdot\right)  }%
^{std,test}\right\}  $ with equality of the sets when $m=1,$ (v)%
\[
\Pi_{\theta}\left[  \mathbb{IF}_{m,\tau\left(  \cdot\right)  }^{std,test}%
\left(  \theta\right)  |\Gamma_{m}^{test}\left(  \theta,\tau^{\dagger}\right)
\right]  =\left\{  var\left[  \mathbb{ES}_{m}^{test}\left(  \theta\right)
\right]  \right\}  ^{-1}\mathbb{ES}_{m}^{test}\left(  \theta\right)  ,
\]
(vi) $\left\{  var_{\theta}\left[  \mathbb{ES}_{m}^{test}\left(
\theta\right)  \right]  \right\}  ^{-1}\mathbb{ES}_{m}^{test}\left(
\theta\right)  \in\left\{  \mathbb{IF}_{m,\tau\left(  \cdot\right)
}^{std,test}\right\}  $ and has the minimum variance $\left\{  var_{\theta
}\left[  \mathbb{ES}_{m}^{test}\left(  \theta\right)  \right]  \right\}
^{-1}$ among members of $\left\{  \mathbb{IF}_{m,\tau\left(  \cdot\right)
}^{std,test}\right\}  $. In particular $\left\{  var_{\theta}\left[
\mathbb{ES}_{m}^{test}\left(  \theta\right)  \right]  \right\}  ^{-1}\leq
var_{\theta}\left[  \mathbb{IF}_{m,\tau\left(  \cdot\right)  }^{eff}\left(
\theta\right)  \right]  $ with equality when $m=1,$ (vii) Given $\mathbb{IF}%
_{m,\tau\left(  \cdot\right)  }^{test}\left(  \cdot\right)  \in\left\{
\mathbb{IF}_{m,\tau\left(  \cdot\right)  }^{test}\left(  \cdot\right)
\right\}  $,any smooth submodel $\widetilde{\theta}\left(  \zeta\right)  $
with range containing $\theta$ and contained in $\Theta\left(  \tau^{\dagger
}\right)  ,\ $and $s\leq m,$ we have
\[
\partial^{s}E_{\theta}\left[  \mathbb{IF}_{m,\tau\left(  \cdot\right)
}^{test}\left(  \widetilde{\theta}\left(  \zeta\right)  \right)  \right]
/\partial\zeta_{l_{1}}...\partial\zeta_{l_{_{s}}}|_{\zeta=\widetilde{\theta
}^{-1}\left\{  \theta\right\}  }=0.
\]
Thus, if $E_{\theta}\left[  \mathbb{IF}_{m,\tau\left(  \cdot\right)  }%
^{test}\left(  \theta^{\ast}\right)  \right]  $ is Fr\'{e}chet differentiable
w.r.t. $\theta^{\ast}$ to order $m+1$ for a norm $\left\vert \left\vert
\cdot\right\vert \right\vert ,$ $E_{\theta}\left[  \mathbb{IF}_{m,\tau\left(
\cdot\right)  }^{test}\left(  \theta+\delta\theta\right)  \right]  =O\left(
\left\vert \left\vert \delta\theta\ \right\vert \right\vert ^{m+1}\right)  $
for $\theta$ and $\theta+\delta\theta$ in an open neighborhood contained in
$\Theta\left(  \tau^{\dagger}\right)  $, since the Taylor expansion of
$E_{\theta}\left[  \mathbb{IF}_{m,\tau\left(  \cdot\right)  }^{test}\left(
\theta^{\ast}\right)  \right]  \ $around $\theta$ through order $m$ is
identically zero$.$
\end{theorem}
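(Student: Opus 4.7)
The seven claims split into two strata: parts (i)--(vi) are Hilbert--space bookkeeping built on the defining orthogonality of $\Gamma_{m}^{nuis,test}$ and $\mathbb{ES}_{m}^{test}$ together with Theorems \ref{eiet} and \ref{eift}, whereas (vii) requires a genuinely new observation about how testing influence functions behave under perturbation. I would dispatch (i)--(vi) in sequence and then isolate the key reformulation needed for (vii).

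For (i), any $\mathbb{A}_{m}\in\Gamma_{m}^{nuis,test}(\theta,\tau^{\dagger})$ is a closed limit of scores $\widetilde{\mathbb{S}}_{s,\overline{l}_{s}}$ of submodels with range in $\Theta(\tau^{\dagger})$, along which $\tau$ is constant; the defining identity of $\mathbb{IF}_{m,\tau(\cdot)}^{eff}$ then gives $E_{\theta}[\mathbb{IF}_{m,\tau(\cdot)}^{eff}\mathbb{A}_{m}]=\tau_{\backslash\overline{l}_{s}}(\theta)=0$, embedding $\Gamma_{m}^{nuis,test}$ into $\Gamma_{m}^{nuis}$. Equality at $m=1$ is the standard regularity fact that the first--order tangent space to the level set $\{\tau=\tau^{\dagger}\}$ is exactly the $\Gamma_{1}$-orthocomplement of $\mathbb{IF}_{1,\tau}^{eff}$. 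Parts (ii)--(iii) are then immediate: since $\mathbb{ES}_{m}^{test}=\mathbb{ES}_{1}^{test}-\Pi[\mathbb{ES}_{1}^{test}|\Gamma_{m}^{nuis,test}]$ is orthogonal to $\Gamma_{m}^{nuis,test}$, one has $E[\mathbb{IF}^{test}\mathbb{ES}_{m}^{test}]=E[\mathbb{IF}^{test}\mathbb{ES}_{1}^{test}]$ for every testing IF, which equals $1$ precisely under standardization, and the set $\{\mathbb{IF}^{std,test}\}$ is obtained by rescaling. For (iv), Theorem \ref{eift}(4) gives $\Pi_{\theta}[\mathbb{IF}_{m,\tau}|\Gamma_{1}(\theta)]=\mathbb{IF}_{1,\tau}^{eff}$, so $E[\mathbb{IF}_{m,\tau}\mathbb{IF}_{1,\tau}^{eff}]=\mathrm{var}[\mathbb{IF}_{1,\tau}^{eff}]$; combined with the vanishing of $\tau_{\backslash\overline{l}_{s}}$ along paths in $\Theta(\tau^{\dagger})$, this places every estimation IF inside $\{\mathbb{IF}^{std,test}\}$. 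At $m=1$ the reverse inclusion follows by decomposing an arbitrary first--order score as a nuisance score plus the component $\{\tau_{\backslash l}/\mathrm{var}[\mathbb{IF}_{1,\tau}^{eff}]\}\mathbb{IF}_{1,\tau}^{eff}$ and invoking the standardization identity.

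Parts (v)--(vi) exploit the Pythagorean split $\Gamma_{m}^{test}=\Gamma_{m}^{nuis,test}\oplus\mathrm{span}(\mathbb{ES}_{m}^{test})$, which holds because $\mathbb{ES}_{m}^{test}$ is orthogonal to the first summand while adjoining $\mathbb{IF}_{1,\tau}^{eff}$ to $\Gamma_{m}^{nuis,test}$ spans the same closed subspace as adjoining $\mathbb{ES}_{m}^{test}$. Projecting a standardized testing IF onto $\Gamma_{m}^{test}$ kills the $\Gamma_{m}^{nuis,test}$ component and leaves the coefficient $\{\mathrm{var}[\mathbb{ES}_{m}^{test}]\}^{-1}$ on $\mathbb{ES}_{m}^{test}$, giving (v). For (vi), a short calculation using $\mathbb{ES}_{m}^{test}\perp\Gamma_{m}^{nuis,test}$ shows $E[\mathbb{ES}_{m}^{test}\mathbb{ES}_{1}^{test}]=\mathrm{var}[\mathbb{ES}_{m}^{test}]$, so the candidate $\{\mathrm{var}[\mathbb{ES}_{m}^{test}]\}^{-1}\mathbb{ES}_{m}^{test}$ lies in $\{\mathbb{IF}^{std,test}\}$, and a Pythagorean decomposition of any other standardized testing IF against it proves the variance bound; the $m=1$ equality is the algebraic identity $\mathrm{var}[\mathbb{ES}_{1}^{test}]^{-1}=\mathrm{var}[\mathbb{IF}_{1,\tau}^{eff}]$.

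The main obstacle is (vii). The plan is to restrict to the submodel $\mathcal{M}(\Theta(\tau^{\dagger}))$, on which $\tau$ is the trivial constant functional $\tau\equiv\tau^{\dagger}$, and to establish the key claim that $\mathbb{IF}_{m,\tau(\cdot)}^{test}$ is itself an $m$th--order estimation influence function for this constant functional inside the restricted model. Mean zero holds automatically because $\mathbb{IF}^{test}\in\mathcal{U}_{m}(\theta)$, and the required identity $E_{\theta}[\mathbb{IF}^{test}\widetilde{\mathbb{S}}_{s,\overline{l}_{s}}]=0=(\tau^{\dagger})_{\backslash\overline{l}_{s}}(\theta)$ is exactly the orthogonality $\mathbb{IF}^{test}\perp\Gamma_{m}^{nuis,test}$ applied to scores of submodels whose range lies in $\Theta(\tau^{\dagger})$. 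Once this identification is made, Theorem \ref{eiet} applied inside the restricted model delivers both the vanishing of $\partial^{s}E_{\theta}[\mathbb{IF}^{test}(\widetilde{\theta}(\zeta))]/\partial\zeta_{l_{1}}\cdots\partial\zeta_{l_{s}}$ at $\zeta=\widetilde{\theta}^{-1}(\theta)$ and the $O(\|\delta\theta\|^{m+1})$ Taylor remainder on an open neighborhood contained in $\Theta(\tau^{\dagger})$. The hard part is not the computation but the viewpoint shift: recognizing that a testing IF plays, within the level set, exactly the role that an estimation IF plays in the full model for a functional that is identically zero there.
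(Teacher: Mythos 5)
Your proposal is correct and follows essentially the same route as the paper: parts (i)--(vi) rest on exactly the orthogonality facts the paper isolates as Lemmas \ref{a} and \ref{b} (namely $E_{\theta}[\mathbb{IF}_{m,\tau(\cdot)}^{test}\mathbb{ES}_{m}^{test}]=E_{\theta}[\mathbb{IF}_{m,\tau(\cdot)}^{test}\mathbb{ES}_{1}^{test}]$ and the decomposition $\Gamma_{m}^{test}=\Gamma_{m}^{nuis,test}\oplus\{c\,\mathbb{ES}_{m}^{test}\}$), together with part 4 of Theorem \ref{eift} for (iv). Your treatment of (vii) --- regarding $\mathbb{IF}_{m,\tau(\cdot)}^{test}$ as an $m$th order estimation influence function for the constant functional on the restricted model $\mathcal{M}(\Theta(\tau^{\dagger}))$ and invoking Theorem \ref{eiet} there --- is the paper's own argument ("proved analogously to Theorem \ref{eiet}, with all scores lying in the testing nuisance space because the submodel ranges in $\Theta(\tau^{\dagger})$"), merely packaged as a direct application rather than a re-derivation, and the extra details you supply for the $m=1$ equalities in (i) and (iv) are consistent with what the paper leaves implicit.
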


The proof of the Theorem will use the following two lemmas:

\begin{lemma}
\label{a} :For any $\mathbb{IF}_{m,\tau\left(  \cdot\right)  }^{test}\left(
\theta\right)  ,\theta\in\Theta\left(  \tau^{\dagger}\right)  $%
\[
E_{\theta}\left[  \mathbb{IF}_{m,\tau\left(  \cdot\right)  }^{test}\left(
\theta\right)  \mathbb{ES}_{1}^{test}\left(  \theta\right)  \right]
=E_{\theta}\left[  \mathbb{IF}_{m,\tau\left(  \cdot\right)  }^{test}\left(
\theta\right)  \mathbb{ES}_{m}^{test}\left(  \theta\right)  \right]
\]

\end{lemma}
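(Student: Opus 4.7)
The plan is to reduce the identity to the defining projection property of $\mathbb{ES}_m^{test}(\theta)$ as the Hilbert-space projection of $\mathbb{ES}_1^{test}(\theta)$ onto $\Gamma_m^{nuis,test,\perp}(\theta,\tau^\dagger)$. The target equality is simply an orthogonality statement in disguise, so once the right objects are identified no calculation is required.

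First I would rewrite the difference of the two sides as a single covariance. Since every member of $\mathcal{U}_m(\theta)$ has mean zero, expectations of products coincide with covariances, so the claim is equivalent to
\[
E_\theta\!\left[\mathbb{IF}_{m,\tau(\cdot)}^{test}(\theta)\,\bigl(\mathbb{ES}_1^{test}(\theta)-\mathbb{ES}_m^{test}(\theta)\bigr)\right]=0.
\]
Next I would invoke the defining decomposition of $\mathbb{ES}_m^{test}$. By the definition in the Theorem,
\[
\mathbb{ES}_1^{test}(\theta)-\mathbb{ES}_m^{test}(\theta)=\Pi_\theta\!\left[\mathbb{ES}_1^{test}(\theta)\,\big|\,\Gamma_m^{nuis,test}(\theta,\tau^\dagger)\right]\in\Gamma_m^{nuis,test}(\theta,\tau^\dagger).
\]
Meanwhile the hypothesis that $\mathbb{IF}_{m,\tau(\cdot)}^{test}(\theta)\in\Gamma_m^{nuis,test,\perp}(\theta,\tau^\dagger)$ is precisely the requirement, by the definition of the set of testing influence functions, that $\mathbb{IF}_{m,\tau(\cdot)}^{test}(\theta)$ be orthogonal in $\mathcal{U}_m(\theta)$ to every element of $\Gamma_m^{nuis,test}(\theta,\tau^\dagger)$.

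Combining these two facts, the inner product of $\mathbb{IF}_{m,\tau(\cdot)}^{test}(\theta)$ with $\mathbb{ES}_1^{test}(\theta)-\mathbb{ES}_m^{test}(\theta)$ vanishes, giving the claim. The only thing to be slightly careful about is that $\mathbb{ES}_1^{test}(\theta)$ itself lies in $\mathcal{U}_m(\theta)$ (viewed as a first-order, hence also $m$th-order, U-statistic) so that its projection onto $\Gamma_m^{nuis,test}(\theta,\tau^\dagger)$ is well-defined inside the Hilbert space $\mathcal{U}_m(\theta)$ whose inner product is the covariance under $F^n(\cdot;\theta)$; this is already built into the definitions in Section~2 and requires no additional work.

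There is essentially no genuine obstacle; the content of the lemma is just that orthogonal projection preserves inner products with any vector in the target subspace. The only step requiring any attention is notational bookkeeping to make sure that $\mathbb{ES}_1^{test}(\theta)$, originally defined from the first-order efficient estimation influence function, is being treated consistently as an element of the higher-order Hilbert space $\mathcal{U}_m(\theta)$ on which the projection $\Pi_\theta[\cdot\,|\,\Gamma_m^{nuis,test}(\theta,\tau^\dagger)]$ acts.
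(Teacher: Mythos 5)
Your proposal is correct and is essentially the paper's own argument: both rest on the identity $\mathbb{ES}_{m}^{test}(\theta)=\Pi_{\theta}\left[\mathbb{ES}_{1}^{test}(\theta)\,|\,\Gamma_{m}^{nuis,test,\perp}(\theta,\tau^{\dagger})\right]$ together with $\mathbb{IF}_{m,\tau\left(\cdot\right)}^{test}(\theta)\in\Gamma_{m}^{nuis,test,\perp}(\theta,\tau^{\dagger})$, so that the inner product is unchanged by the projection. Your rewriting via $\mathbb{ES}_{1}^{test}-\mathbb{ES}_{m}^{test}\in\Gamma_{m}^{nuis,test}(\theta,\tau^{\dagger})$ is just the complementary formulation of the same one-line projection argument.
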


\begin{proof}%
\begin{align*}
&  E_{\theta}\left[  \mathbb{IF}_{m,\tau\left(  \cdot\right)  }^{test}%
\mathbb{ES}_{m}^{test}\left(  \theta\right)  \right] \\
&  =E_{\theta}\left[  \mathbb{IF}_{m,\tau\left(  \cdot\right)  }^{test}%
\Pi_{\theta}\left[  \mathbb{ES}_{1}^{test}\left(  \theta\right)  |\Gamma
_{m}^{nuis,test,\perp}\left(  \theta,\tau^{\dagger}\right)  \right]  \right]
=E_{\theta}\left[  \mathbb{IF}_{m,\tau\left(  \cdot\right)  }^{test}%
\mathbb{ES}_{1}^{test}\left(  \theta\right)  \right]  ,
\end{align*}
where the last equality holds by $\mathbb{IF}_{m,\tau\left(  \cdot\right)
}^{test}\in\Gamma_{m}^{nuis,test,\perp}\left(  \theta,\tau^{\dagger}\right)  $
\end{proof}

\begin{lemma}
\label{b}\ For any $\mathbb{IF}_{m,\tau\left(  \cdot\right)  }^{test}\left(
\theta\right)  ,\theta\in\Theta\left(  \tau^{\dagger}\right)  ,$
\begin{align*}
&  \Pi_{\theta}\left[  \mathbb{IF}_{m,\tau\left(  \cdot\right)  }%
^{test}\left(  \theta\right)  |\Gamma_{m}^{test}\left(  \theta,\tau^{\dagger
}\right)  \right] \\
&  =E\left[  \mathbb{IF}_{m,\tau\left(  \cdot\right)  }^{test}\left(
\theta\right)  \mathbb{ES}_{m}^{test}\left(  \theta\right)  \right]  \left\{
var\left[  \mathbb{ES}_{m}^{test}\left(  \theta\right)  \right]  \right\}
^{-1}\mathbb{ES}_{m}^{test}\left(  \theta\right) \\
&  =E\left[  \mathbb{IF}_{m,\tau\left(  \cdot\right)  }^{test}\left(
\theta\right)  \mathbb{ES}_{1}^{test}\left(  \theta\right)  \right]  \left\{
var\left[  \mathbb{ES}_{m}^{test}\left(  \theta\right)  \right]  \right\}
^{-1}\mathbb{ES}_{m}^{test}\left(  \theta\right)
\end{align*}

\end{lemma}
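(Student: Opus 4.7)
The plan is to exhibit an orthogonal decomposition of $\Gamma_m^{test}(\theta,\tau^\dagger)$ with respect to the inner product of $\mathcal{U}_m(\theta)$, reduce the projection to a one-dimensional computation, and then apply Lemma \ref{a} for the second equality. Concretely, I would prove that
\[
\Gamma_m^{test}(\theta,\tau^\dagger) \;=\; \Gamma_m^{nuis,test}(\theta,\tau^\dagger)\;\oplus\; \mathrm{span}\bigl(\mathbb{ES}_m^{test}(\theta)\bigr),
\]
with the two summands $\mathcal{U}_m(\theta)$-orthogonal.

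The inclusion $\supseteq$ and the orthogonality are immediate: by definition $\mathbb{ES}_m^{test}(\theta)=\mathbb{ES}_1^{test}(\theta)-\Pi_\theta[\mathbb{ES}_1^{test}(\theta)\,|\,\Gamma_m^{nuis,test}(\theta,\tau^\dagger)]$ lies in $\Gamma_m^{nuis,test,\perp}(\theta,\tau^\dagger)$, while $\mathbb{ES}_1^{test}(\theta)$ is a scalar multiple of $\mathbb{IF}_{1,\tau(\cdot)}^{eff}(\theta)$ and therefore lives in $\Gamma_m^{test}(\theta,\tau^\dagger)$, as does its projection onto $\Gamma_m^{nuis,test}$. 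For $\subseteq$, I would observe that $\Gamma_m^{test}(\theta,\tau^\dagger)$ is generated by $\Gamma_m^{nuis,test}(\theta,\tau^\dagger)$ together with $\mathbb{IF}_{1,\tau(\cdot)}^{eff}(\theta)$; writing
\[
\mathbb{IF}_{1,\tau(\cdot)}^{eff}(\theta) \;=\; \Pi_\theta\bigl[\mathbb{IF}_{1,\tau(\cdot)}^{eff}(\theta)\,\big|\,\Gamma_m^{nuis,test}\bigr] \;+\; \mathrm{var}_\theta\bigl[\mathbb{IF}_{1,\tau(\cdot)}^{eff}(\theta)\bigr]\,\mathbb{ES}_m^{test}(\theta),
\]
each generator sits in the right-hand side, and taking closed linear spans preserves the inclusion since the sum of a closed subspace and a one-dimensional subspace is already closed.

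Given the decomposition, for any $\mathbb{IF}_{m,\tau(\cdot)}^{test}\in\Gamma_m^{nuis,test,\perp}(\theta,\tau^\dagger)$ the orthogonality of the summands yields
\[
\Pi_\theta\bigl[\mathbb{IF}_{m,\tau(\cdot)}^{test}\,\big|\,\Gamma_m^{test}\bigr] \;=\; \Pi_\theta\bigl[\mathbb{IF}_{m,\tau(\cdot)}^{test}\,\big|\,\Gamma_m^{nuis,test}\bigr] \;+\; \Pi_\theta\bigl[\mathbb{IF}_{m,\tau(\cdot)}^{test}\,\big|\,\mathrm{span}(\mathbb{ES}_m^{test})\bigr].
\]
The first term is zero by the defining property of a testing influence function, and the second is the standard one-dimensional projection formula, which produces $\{\mathrm{var}_\theta[\mathbb{ES}_m^{test}(\theta)]\}^{-1}E_\theta[\mathbb{IF}_{m,\tau(\cdot)}^{test}\mathbb{ES}_m^{test}(\theta)]\mathbb{ES}_m^{test}(\theta)$. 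This is the first displayed identity. The second identity then follows instantly from Lemma \ref{a}, which identifies $E_\theta[\mathbb{IF}_{m,\tau(\cdot)}^{test}\mathbb{ES}_m^{test}(\theta)]$ with $E_\theta[\mathbb{IF}_{m,\tau(\cdot)}^{test}\mathbb{ES}_1^{test}(\theta)]$.

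There is no real obstacle; the argument is essentially bookkeeping with orthogonal projections. The only point that deserves a line of care is the verification that the closed linear span defining $\Gamma_m^{test}$ coincides with $\Gamma_m^{nuis,test}+\mathrm{span}(\mathbb{ES}_m^{test})$ rather than merely being contained in its closure, which as indicated above holds because adding a one-dimensional subspace to a closed subspace yields a closed subspace.
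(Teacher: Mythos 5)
Your argument is correct and follows essentially the same route as the paper's proof: the orthogonal decomposition $\Gamma_m^{test}(\theta,\tau^\dagger)=\Gamma_m^{nuis,test}(\theta,\tau^\dagger)\oplus\{c\,\mathbb{ES}_m^{test}(\theta):c\in R^1\}$, the one-dimensional projection formula, and then Lemma \ref{a} for the second equality. The only difference is that you verify the decomposition (and its closedness) explicitly, which the paper asserts without comment; that is a harmless elaboration, not a different approach.
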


\begin{proof}
$\Gamma_{m}^{test}\left(  \theta,\tau^{\dagger}\right)  =\left\{
c\mathbb{ES}_{m}^{test}\left(  \theta\right)  ;\text{ }c\in R^{1}\right\}
\oplus\Gamma_{m}^{nuis,test}\left(  \theta,\tau^{\dagger}\right)  .$ Thus, by
$\mathbb{IF}_{m,\tau\left(  \cdot\right)  }^{test}\left(  \theta\right)
\in\Gamma_{m}^{nuis,test,\perp}\left(  \theta,\tau^{\dagger}\right)  ,$
\begin{align*}
\Pi_{\theta}\left[  \mathbb{IF}_{m,\tau\left(  \cdot\right)  }^{test}\left(
\theta\right)  |\Gamma_{m}^{test}\left(  \theta,\tau^{\dagger}\right)
\right]   &  =\Pi_{\theta}\left[  \mathbb{IF}_{m,\tau\left(  \cdot\right)
}^{test}\left(  \theta\right)  |\left\{  c\mathbb{ES}_{m}^{test}\left(
\theta\right)  ;c\in R^{1}\right\}  \right] \\
&  =E\left[  \mathbb{IF}_{m,\tau\left(  \cdot\right)  }^{test}\left(
\theta\right)  \mathbb{ES}_{m}^{test}\left(  \theta\right)  \right]  \left\{
var\left[  \mathbb{ES}_{m}^{test}\left(  \theta\right)  \right]  \right\}
^{-1}\mathbb{ES}_{m}^{test}\left(  \theta\right)  .
\end{align*}
Now apply Lemma \ref{a}.
\end{proof}

\begin{proof}
(Theorem \ref{gg}) (i) is immediate from the definitions. (ii) and (iiii)
follow from
\begin{align*}
E\left[  \mathbb{IF}_{m,\tau\left(  \cdot\right)  }^{test}\left(
\theta\right)  \mathbb{ES}_{m}^{test}\left(  \theta\right)  \right]   &
=1\Leftrightarrow E\left[  \mathbb{IF}_{m,\tau\left(  \cdot\right)  }%
^{test}\left(  \theta\right)  \mathbb{ES}_{1}^{test}\left(  \theta\right)
\right]  =1\\
&  \Leftrightarrow E_{\theta}\left[  \mathbb{IF}_{m,\tau\left(  \cdot\right)
}^{test}\mathbb{IF}_{1,\tau\left(  \cdot\right)  }^{eff}\left(  \theta\right)
\right]  =var_{\theta}\left[  \mathbb{IF}_{1,\tau\left(  \cdot\right)  }%
^{eff}\left(  \theta\right)  \right]  ,
\end{align*}
where we have used Lemma \ref{a}. For (iv), note $\left\{  \mathbb{IF}%
_{m,\tau\left(  \cdot\right)  }\left(  \theta\right)  \right\}  \subset
\left\{  \mathbb{IF}_{m,\tau\left(  \cdot\right)  }^{test}\right\}  $ follows
from the fact that every smooth submodel through $\theta$ in model
$\mathcal{M}\left(  \Theta\left(  \tau^{\dagger}\right)  \right)  $ is a
smooth submodel through $\theta$ in model $\mathcal{M}\left(  \Theta\right)  $
. Thus it remains to prove that $\mathbb{IF}_{m,\tau\left(  \cdot\right)
}\left(  \theta\right)  $ is standardized. But, by Part 4 of Theorem
\ref{eift}, $E_{\theta}\left[  \mathbb{IF}_{m,\tau\left(  \cdot\right)
}\left(  \theta\right)  \mathbb{IF}_{1,\tau\left(  \cdot\right)  }%
^{eff}\left(  \theta\right)  \right]  =var_{\theta}\left[  \mathbb{IF}%
_{1,\tau\left(  \cdot\right)  }^{eff}\left(  \theta\right)  \right]  $. {}(v)
follows at once from Lemma \ref{a} and Part (ii). {}For (vi), note that
$\left\{  var_{\theta}\left[  \mathbb{ES}_{m}^{test}\left(  \theta\right)
\right]  \right\}  ^{-1}\mathbb{ES}_{m}^{test}\left(  \theta\right)
\in\left\{  \mathbb{IF}_{m,\tau\left(  \cdot\right)  }^{std,test}\right\}  $
by definition. Thus
\[
var_{\theta}\left\{  E_{\theta}\left[  \mathbb{IF}_{m,\tau\left(
\cdot\right)  }^{test}\mathbb{ES}_{m}^{test}\left(  \theta\right)  \right]
^{-1}\mathbb{IF}_{m,\tau\left(  \cdot\right)  }^{test}\right\}  \geq\left\{
var_{\theta}\left[  \mathbb{ES}_{m}^{test}\left(  \theta\right)  \right]
\right\}  ^{-1}%
\]
follows from (v). The result then follows from part (iii). Part (vii) is
proved analogously to Theorem \ref{eiet} except now all scores lie in
$\Gamma_{m}^{nuis}\left(  \theta\right)  $ by range $\widetilde{\theta}\left(
\zeta\right)  $ in $\Theta\left(  \tau^{\dagger}\right)  .$
\end{proof}

In the case of (locally) nonparametric models, we can explicitly characterize
$\Gamma_{m}^{test,\perp}\left(  \theta,\tau^{\dagger}\right)  $. Let $\left\{
\mathbb{U}_{j,j}^{test,\perp}\left(  \theta,\tau^{\dagger}\right)  \right\}  $
be the set of all $\mathbb{U}_{j,j}^{test,\perp}\left(  \theta,\tau^{\dagger
}\right)  =\mathbb{V}\left[  U_{j,j}^{test,\perp}\left(  \theta,\tau^{\dagger
}\right)  \right]  $ with the $U_{j,j}^{test,\perp}\left(  \theta
,\tau^{\dagger}\right)  =\sum_{l=1}^{\infty}c_{l}IF_{1,\tau\left(
\cdot\right)  ,i_{1}}^{eff}\left(  \theta\right)
%TCIMACRO{\dprod \limits_{s=2}^{j}}%
%BeginExpansion
{\displaystyle\prod\limits_{s=2}^{j}}
%EndExpansion
h_{l,s}\left(  O_{i_{s}};\theta\right)  \in\mathcal{U}_{j}\left(
\theta\right)  ,$ indexed by constants $c_{l}\in R^{1},$ and functions
$h_{l,s}\left(  O_{i_{s}};\theta\right)  $ satisfying $E_{\theta}\left[
h_{l,s}\left(  O_{i_{s}};\theta\right)  \right]  =0.$ We remark that the
subset of $\mathcal{U}_{j}\left(  \theta\right)  $ comprised of all $j$th
order degenerate U-statistics can be written $\left\{  \mathbb{V}\left[
\sum_{l=1}^{\infty}\
%TCIMACRO{\dprod \limits_{s=1}^{j}}%
%BeginExpansion
{\displaystyle\prod\limits_{s=1}^{j}}
%EndExpansion
h_{l,s}\left(  O_{i_{s}};\theta\right)  \right]  \right\}  $. Thus $\left\{
\mathbb{U}_{j,j}^{test,\perp}\left(  \theta,\tau^{\dagger}\right)  \right\}  $
simply restricts one of the functions $h_{l,s\ }\left(  O\ ;\theta\right)  $
to be $c_{l}IF_{1,\tau\left(  \cdot\right)  \ }^{eff}.$

\begin{theorem}
\label{ff} If the model $\mathcal{M}\left(  \Theta\right)  $ is (locally)
nonparametric, then $\Gamma_{m}^{test,\perp}\left(  \theta,\tau^{\dagger
}\right)  =\left\{  \sum_{j=2}^{m}\mathbb{U}_{j,j}^{test,\perp}\left(
\theta,\tau^{\dagger}\right)  ;\text{ }\mathbb{U}_{j,j}^{test,\perp}\left(
\theta,\tau^{\dagger}\right)  \in\left\{  \mathbb{U}_{j,j}^{test,\perp}\left(
\theta,\tau^{\dagger}\right)  \right\}  \right\}  $ .
\end{theorem}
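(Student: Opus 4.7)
The plan is to prove the two set inclusions separately, exploiting: the Hoeffding decomposition of $\mathcal{U}_m(\theta)$ into mutually orthogonal subspaces of degenerate $j$th order U-statistics; the automatic orthogonality between any first order U-statistic and any degenerate $j$th order U-statistic with $j\ge 2$; and the fact that local nonparametricity lets one realize essentially any mean-zero element of $L_2(O;\theta)$ as the first order score of some regular parametric submodel, subject only to whatever derivative constraints on $\tau$ the submodel must satisfy.

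For the forward direction $\{\sum_{j=2}^{m}\mathbb{U}_{j,j}^{test,\perp}\}\subseteq \Gamma_m^{test,\perp}$, I would first note that each $\mathbb{U}_{j,j}^{test,\perp}$ is a degenerate $j$th order U-statistic with $j\ge 2$, so is automatically orthogonal in $\mathcal{U}_m(\theta)$ to the first order U-statistic $\mathbb{IF}_{1,\tau}^{eff}$. The work is orthogonality to $\Gamma_m^{nuis,test}(\theta,\tau^{\dagger})$. Given a $k$th order score $\widetilde{\mathbb{S}}_{k,\overline{l}_k}$ of a regular submodel of $\mathcal{M}(\Theta(\tau^{\dagger}))$ with $k\le m$, only its $j$th order Hoeffding component contributes to the inner product with $\mathbb{U}_{j,j}^{test,\perp}$. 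Using the Waterman--Lindsay expansion, this $j$th order piece is, modulo already-orthogonal lower order Hoeffding terms, a sum of symmetrized products of subject-specific scores $\prod_{s=1}^{j}S_{\overline{l}_{r_s},i_s}$ (possibly with mixed higher derivative indices). Pairing the factor $IF_{1,\tau,i_1}^{eff}$ in the kernel of $\mathbb{U}_{j,j}^{test,\perp}$ against the various factors in any such product and summing over the $j!$ index assignments yields terms each containing a scalar of the form $E_\theta[IF_{1,\tau}^{eff}\cdot(\text{subject-specific score})]$. By the first order information identity every such scalar equals some derivative $\tau_{\backslash\overline{l}_r}(\theta)$, which vanishes because the submodel lies in $\Theta(\tau^{\dagger})$.

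For the reverse inclusion $\Gamma_m^{test,\perp}\subseteq\{\sum_{j=2}^{m}\mathbb{U}_{j,j}^{test,\perp}\}$, take $\mathbb{A}\in\Gamma_m^{test,\perp}$ and Hoeffding-decompose $\mathbb{A}=\sum_{j=1}^{m}\mathbb{A}_{jj}$; orthogonality of Hoeffding components lets each $\mathbb{A}_{jj}$ be treated separately. Orthogonality of $\mathbb{A}_{11}$ to $\mathbb{IF}_{1,\tau}^{eff}$ together with orthogonality to the first order U-statistics in $\Gamma_m^{nuis,test}$, which in the locally nonparametric case is $\{\mathbb{V}[S]:S\in(IF_{1,\tau}^{eff})^{\perp}\cap L_2^0(O;\theta)\}$, forces $\mathbb{A}_{11}=0$. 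For $j\ge 2$, the key claim is that the $j$th order degenerate components of scores from $\Gamma_m^{nuis,test}$ span the closed linear hull of symmetrized tensor products $\prod_{s=1}^{j}g_s(O_s)$ with each $g_s\in(IF_{1,\tau}^{eff})^{\perp}\cap L_2^0$. Granting this, $\mathbb{A}_{jj}=\mathbb{V}[K_j]$ with symmetric degenerate $K_j$ orthogonal to $[(IF_{1,\tau}^{eff})^{\perp}]^{\otimes j}$ in $L_2(O^j;\theta^{\otimes j})$; decomposing $L_2^0=\mathrm{span}(IF_{1,\tau}^{eff})\oplus (IF_{1,\tau}^{eff})^{\perp}$ in each tensor slot forces $K_j$ to lie in the direct sum of those tensor summands having at least one slot in $\mathrm{span}(IF_{1,\tau}^{eff})$, which, after symmetrization, is precisely the kernel shape of $\mathbb{U}_{j,j}^{test,\perp}$.

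The hard part will be the spanning claim in the previous paragraph: that the leading $j$th order degenerate part of scores in $\Gamma_m^{nuis,test}$ really realizes all products of $j$ factors drawn from $(IF_{1,\tau}^{eff})^{\perp}$. The subtlety is that the higher derivative constraints $\tau_{\backslash\overline{l}_s}(\theta)=0$ for $s\ge 2$ couple the first order path directions $S_{l_1},\ldots,S_{l_s}$ to the higher path derivatives $S_{\overline{l}_s}$ through inner products against the higher order efficient influence functions for $\tau$. I plan to handle this inductively on $s$: given any prescribed first order directions in $(IF_{1,\tau}^{eff})^{\perp}$, local nonparametricity lets the higher derivative $S_{\overline{l}_s}$ range freely in $L_2^0(O;\theta)$, and enforcing $\tau_{\backslash\overline{l}_s}=0$ imposes only a single real scalar constraint on $S_{\overline{l}_s}$, solvable because $IF_{1,\tau}^{eff}\not\equiv 0$; this scalar adjustment affects only the $s$th order Hoeffding component of the score and not the leading $j$th order product for $j>s$. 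Standard closure-under-$L_2$-limits and symmetrization arguments then complete the identification.
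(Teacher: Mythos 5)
Your reverse inclusion, and your forward argument in the case $m=2$, are essentially the paper's own reasoning: decompose $L_{2}^{0}$ in each tensor slot as $\mathrm{span}\left(  IF_{1,\tau\left(  \cdot\right)  }^{eff}\right)  \oplus\left(  IF_{1,\tau\left(  \cdot\right)  }^{eff}\right)  ^{\perp}$ and use that, by local nonparametricity, the first order scores of $\mathcal{M}\left(  \Theta\left(  \tau^{\dagger}\right)  \right)  $ are exactly the mean-zero functions orthogonal to $IF_{1,\tau\left(  \cdot\right)  }^{eff}$. The gap is in your forward inclusion for $m\geq3$, at the step where you pair the $IF_{1,\tau\left(  \cdot\right)  }^{eff}$ slot of $\mathbb{U}_{j,j}^{test,\perp}$ against the factors of the $j$th order Hoeffding component of a $k$th order nuisance score with $j<k\leq m$. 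Those components contain factors that are \emph{higher order} subject-specific scores; e.g.\ the second order degenerate part of $\widetilde{\mathbb{S}}_{3,\overline{l}_{3}}$ contains $\sum_{s\neq j}S_{l_{1}l_{2},j}S_{l_{3},s}$, so your computation needs $E_{\theta}\left[  IF_{1,\tau\left(  \cdot\right)  }^{eff}S_{l_{1}l_{2}}\right]  =0$. The first order information identity does not give this: it yields $E_{\theta}\left[  IF_{1,\tau}S_{l}\right]  =\tau_{\backslash l}$ only for first derivatives, whereas $E_{\theta}\left[  IF_{1,\tau}^{eff}S_{l_{1}l_{2}}\right]  =\partial^{2}E_{\widetilde{\theta}\left(  \zeta\right)  }\left[  IF_{1,\tau}\left(  \theta\right)  \right]  /\partial\zeta_{l_{1}}\partial\zeta_{l_{2}}$ evaluated at $\theta$, i.e.\ the quadratic term of the expansion $E_{\theta^{\prime}}\left[  IF_{1,\tau}\left(  \theta\right)  \right]  =\tau\left(  \theta^{\prime}\right)  -\tau\left(  \theta\right)  +O\left(  \Vert\theta^{\prime}-\theta\Vert^{2}\right)  $. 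On a submodel inside $\Theta\left(  \tau^{\dagger}\right)  $ the linear term vanishes but the quadratic remainder does not; for the paper's doubly robust functionals it is a product-of-errors term of the type $E\left[  \left(  b^{\prime}-b\right)  \left(  p^{\prime}-p\right)  \right]  $, which is nonzero along many paths that hold $\tau$ fixed. Equating that scalar to $\tau_{\backslash l_{1}l_{2}}=0$ would require the \emph{second} order information identity, which pairs the full score $\widetilde{\mathbb{S}}_{2,\overline{l}_{2}}$ with $\mathbb{IF}_{2,\tau}$, not $S_{l_{1}l_{2}}$ with $IF_{1,\tau}$. So as written your orthogonality computation does not close, and it fails precisely at the delicate content of the theorem for $m\geq3$: that the lower order degenerate components of higher order nuisance scores generate no directions beyond products of first order nuisance scores.

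The paper's proof takes a different route: induction on $m$, asserting $\Gamma_{m+1}^{test}\left(  \theta,\tau^{\dagger}\right)  =\Gamma_{m}^{test}\left(  \theta,\tau^{\dagger}\right)  +\mathcal{U}_{m+1,m+1}^{nuis,test}\left(  \theta\right)  $ with the new top-order piece the closed span of products of first order nuisance scores for $m+1$ distinct subjects, and then identifying the orthocomplement order by order exactly as in your slot decomposition. The point your scalar identity was meant to supply appears there as this structural decomposition of the testing tangent space (the mixed terms are absorbed into the lower order part rather than handled by a pointwise inner-product computation). To repair your forward direction you must either establish that structural claim — that for submodels of $\Theta\left(  \tau^{\dagger}\right)  $ the order-$j$ degenerate components of $\widetilde{\mathbb{S}}_{k,\overline{l}_{k}}$, $j<k\leq m$, contribute nothing outside the span already generated at order $j$ — or exhibit an explicit cancellation across the mixed terms forced by the full set of constraints $\tau_{\backslash\overline{l}_{s}}=0$, $s\leq m$; neither follows from the first order information identity you invoke. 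Your final paragraph's inductive construction of submodels with freely chosen higher derivatives is a sensible way to prove the spanning claim needed for the reverse inclusion (which the paper takes for granted), but it does not address this forward-direction issue.
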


\begin{proof}
\textbf{\ }Since the model is locally nonparametric $\Gamma_{m}^{test}\left(
\theta,\tau^{\dagger}\right)  $ includes the set of all mean zero first order
$U-$statistics $\mathcal{U}_{1}\left(  \theta\right)  \ $and thus any element
of $\Gamma_{m}^{test,\perp}\left(  \theta,\tau^{\dagger}\right)  $ must be a
sum of degenerate $U-statistics$ of orders $2$ through $m.$ We continue by
induction. First we prove the theorem for $m=2.$ $\ $Now, $\Gamma_{2}%
^{test}\left(  \theta,\tau^{\dagger}\right)  =\mathcal{U}_{1}\left(
\theta\right)  +\mathcal{U}_{2,2}^{nuis,test}\left(  \theta\right)  $ where
$\mathcal{U}_{2,2}^{nuis,test}$ is the closed linear span of the 2nd order
degenerate part $\sum_{s\neq j}S_{l_{1},j}S_{l_{2},s}$ of 2nd order scores
$\widetilde{\mathbb{S}}_{2,\overline{l}_{2}}=\sum_{j}S_{l_{1}l_{2},j}%
+\sum_{s\neq j}S_{l_{1},j}S_{l_{2},s}$ in model $\mathcal{M}\left(
\Theta\left(  \tau^{\dagger}\right)  \right)  ,$ where $\sum_{s\neq j}%
S_{l_{1},j}S_{l_{2},s}$ is a sum of products $S_{l_{1},j}S_{l_{2},s}$ of first
order scores in model $\mathcal{M}\left(  \Theta\left(  \tau^{\dagger}\right)
\right)  $ for two different subjects. By model $\mathcal{M}\left(
\Theta\right)  $ being (locally) nonparametric, the set of first order scores
in model $\mathcal{M}\left(  \Theta\left(  \tau^{\dagger}\right)  \right)  $
is precisely the set of random variables $\Gamma_{1}^{nuis,test}\left(
\theta,\tau^{\dagger}\right)  $ orthogonal to $IF_{1,\tau\left(  \cdot\right)
}^{eff}\left(  \theta\right)  .$ But the set of degenerate $U-statistics$ of
order $2$ orthogonal to the product of two scores in $\Gamma_{1}%
^{nuis,test}\left(  \theta,\tau^{\dagger}\right)  $ is clearly $\left\{
\mathbb{U}_{2,2}^{test,\perp}\left(  \theta,\tau^{\dagger}\right)  \right\}
.\ $ Suppose now the theorem is true for $m,m\geq2,$ we show it is true for
$m+1. $ By $\mathcal{M}\left(  \Theta\right)  $ (locally) nonparametric and
the induction assumption, $\Gamma_{m+1}^{test}\left(  \theta,\tau^{\dagger
}\right)  =\Gamma_{m}^{test}\left(  \theta,\tau^{\dagger}\right)
+\mathcal{U}_{m+1,m+1}^{test}\left(  \theta\right)  $ where $\mathcal{U}%
_{m+1,m+1}^{nuis,test}\left(  \theta\right)  $ is the closed linear span of
the sum of products of first order scores in model $\mathcal{M}\left(
\Theta\left(  \tau^{\dagger}\right)  \right)  $ for $m+1$ different subjects.
But $\left\{  \mathbb{U}_{m+1,m+1}^{test,\perp}\left(  \theta,\tau^{\dagger
}\right)  \right\}  $ is the set of set of degenerate $U-statistics$ of order
$m+1$ orthogonal to $\mathcal{U}_{m+1,m+1}^{nuis,test}\left(  \theta\right)
.$
\end{proof}

\subsection{Implicitly defined Functionals and Testing Influence Functions:}

In the following theorem we show that estimation influence functions
$\mathbb{IF}_{m,\psi\left(  \tau,\cdot\right)  }\left(  \theta\right)  $ for
the parameter $\psi\left(  \tau,\cdot\right)  $ evaluated at the solution
$\tau\left(  \theta\right)  $ to $0=\psi\left(  \tau\ ,\theta\right)  $ is
contained in the set $\left\{  \mathbb{IF}_{m,\tau\left(  \cdot\right)
}^{test}\left(  \theta\right)  \right\}  $ of testing influence functions for
$\tau\left(  \theta\right)  .$ We also derive the estimation influence
functions $\mathbb{IF}_{m,\tau\left(  \cdot\right)  }\left(  \theta\right)
=\sum_{s=1}^{m}\mathbb{IF}_{s,s,\tau\left(  \cdot\right)  }\left(
\theta\right)  $ for $\tau\left(  \theta\right)  $ in terms of the estimation
influence functions $\mathbb{IF}_{m,\psi\left(  \tau,\cdot\right)  }\left(
\theta\right)  $ for $\psi\left(  \tau,\cdot\right)  $ and their derivatives
with respect to $\tau.$

\begin{theorem}
\label{tt} Let $\tau\left(  \theta\right)  $ be the assumed unique functional
defined by $0=\psi\left(  \tau\left(  \theta\right)  ,\theta\right)
,\theta\in\Theta.$ Then, for $\theta\in\Theta\left(  \tau^{\dagger}\right)  $,
{}whenever $\mathbb{IF}_{m,\psi\left(  \tau^{\dagger},\cdot\right)  }\left(
\theta\right)  $ and $\mathbb{IF}_{\ m,\tau\left(  \cdot\right)  }\left(
\theta\right)  $ exist , (i) $\mathbb{IF}_{m,\psi\left(  \tau^{\dagger}%
,\cdot\right)  }\left(  \theta\right)  \in\left\{  \mathbb{IF}_{m,\tau\left(
\cdot\right)  }^{test}\left(  \theta\right)  \right\}  ,$ (ii) $\mathbb{IF}%
_{1,\tau\left(  \cdot\right)  }\left(  \theta\right)  =-\psi_{\tau}%
^{-1}\mathbb{IF}_{1,\psi\left(  \tau^{\dagger},\cdot\right)  }\left(
\theta\right)  \in\left\{  \mathbb{IF}_{1,\tau\left(  \cdot\right)
}^{std,test}\left(  \theta\right)  \right\}  $ {}where $\psi_{\tau}%
\equiv\partial\psi\left(  \tau,\theta\right)  /\partial\tau_{|\tau
=\tau^{\dagger}},$ (iii) $\mathbb{IF}_{m,m,\tau\left(  \cdot\right)  }\left(
\theta\right)  =-\psi_{\tau}^{-1}\left\{  \mathbb{IF}_{m,m,\psi\left(
\tau^{\dagger},\cdot\right)  }\left(  \theta\right)  +\mathbb{Q}_{m,m}\left(
\theta\right)  \right\}  ,\ $where $\mathbb{Q}_{m,m}\left(  \theta\right)
\equiv\mathbb{Q}_{m,m,\tau\left(  \cdot\right)  }\left(  \theta\right)
=\mathbb{V}\left\{  Q_{m,m}\left(  \theta\right)  \right\}  \in\left\{
\mathbb{U}_{m,m}^{test,\perp}\left(  \theta,\tau^{\dagger}\right)  \right\}
.$ For $m=2,$
\begin{align}
\mathbb{\ }Q_{2,2}\left(  \theta\right)   &  =\frac{1}{2}\psi_{\backslash
\tau^{2}}IF_{1,\tau\left(  \cdot\right)  ,i_{1}}\left(  \theta\right)
\ IF_{1,\tau\left(  \cdot\right)  ,i_{2}}\ \left(  \theta\right) \nonumber\\
&  +\frac{1}{2}\left[
\begin{array}
[c]{c}%
\left(
\begin{array}
[c]{c}%
\frac{\partial IF_{1,\psi\left(  \tau^{\dagger},\cdot\right)  ,i_{1}}\left(
\theta\right)  }{\partial\tau}\\
-E_{\theta}\left[  \frac{\partial IF_{1,\psi\left(  \tau^{\dagger}%
,\cdot\right)  ,i_{1}}\left(  \theta\right)  }{\partial\tau}\right]
\end{array}
\right)  IF_{1,\tau\left(  \cdot\right)  ,i_{2}}\ \left(  \theta\right) \\
+\left(
\begin{array}
[c]{c}%
\frac{\partial IF_{1,\psi\left(  \tau^{\dagger},\cdot\right)  ,i_{2}}\left(
\theta\right)  }{\partial\tau}\\
-E_{\theta}\left[  \frac{\partial IF_{1,\psi\left(  \tau^{\dagger}%
,\cdot\right)  ,i_{2}}\left(  \theta\right)  }{\partial\tau}\right]
\end{array}
\right)  IF_{1,\tau,i_{1}}\left(  \theta\right)
\end{array}
\right] \label{q22}%
\end{align}
where $\frac{\partial IF_{1,\psi\left(  \tau^{\dagger},\cdot\right)  ,i_{1}%
}\left(  \theta\right)  }{\partial\tau}=\partial IF_{1,\psi\left(  \tau
,\cdot\right)  ,i_{1}}\left(  \theta\right)  /\partial\tau_{|\tau
=\tau^{\dagger}}$. $Q_{m,m}\left(  \theta\right)  $ is given in the appendix
as well as the general formula.
\end{theorem}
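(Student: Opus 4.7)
\noindent
\textbf{Proof proposal for Theorem \ref{tt}.} The plan is to treat the three parts in order, with (i) a direct consequence of the defining property of $\mathbb{IF}_{m,\psi(\tau^{\dagger},\cdot)}$, (ii) a one-step chain-rule computation, and (iii) an inductive higher-order expansion via Fa\`a di Bruno's formula combined with the Hoeffding projection onto $\mathcal{U}_{m-1}^{\perp_{m,\theta}}(\theta)$.

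For part (i), recall that by Definition 2.1, $\mathbb{IF}_{m,\psi(\tau^{\dagger},\cdot)}(\theta)$ satisfies
\[
\psi_{\backslash\overline{l}_{s}}(\tau^{\dagger},\theta) \;=\; E_{\theta}\bigl[\mathbb{IF}_{m,\psi(\tau^{\dagger},\cdot)}(\theta)\,\widetilde{\mathbb{S}}_{s,\overline{l}_{s}}(\theta)\bigr], \qquad s=1,\dots,m,
\]
for every smooth regular submodel $\widetilde{\theta}(\zeta)$ through $\theta$ in $\Theta$. Specialize to submodels whose range lies in $\Theta(\tau^{\dagger})$. On any such submodel, $\tau(\widetilde{\theta}(\zeta))\equiv\tau^{\dagger}$, so $\psi(\tau^{\dagger},\widetilde{\theta}(\zeta))=\psi(\tau(\widetilde{\theta}(\zeta)),\widetilde{\theta}(\zeta))\equiv 0$, and hence every left-hand-side derivative vanishes. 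This forces $E_{\theta}[\mathbb{IF}_{m,\psi(\tau^{\dagger},\cdot)}(\theta)\,\widetilde{\mathbb{S}}_{s,\overline{l}_{s}}(\theta)]=0$ for every score in $\Gamma_{m}^{nuis,test}(\theta,\tau^{\dagger})$, which by definition says $\mathbb{IF}_{m,\psi(\tau^{\dagger},\cdot)}(\theta)\in\Gamma_{m}^{nuis,test,\perp}(\theta,\tau^{\dagger})=\{\mathbb{IF}_{m,\tau(\cdot)}^{test}(\theta)\}$.

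For part (ii), take any smooth submodel $\widetilde{\theta}(\zeta)$ in $\Theta$ (not restricted to $\Theta(\tau^{\dagger})$) and differentiate the identity $\psi(\tau(\widetilde{\theta}(\zeta)),\widetilde{\theta}(\zeta))\equiv 0$ once in $\zeta_{l_{1}}$ at $\zeta=\widetilde{\theta}^{-1}\{\theta\}$ to obtain $\psi_{\tau}\,\tau_{\backslash l_{1}}(\theta)+\psi_{\backslash l_{1}}(\tau^{\dagger},\theta)=0$, whence $\tau_{\backslash l_{1}}(\theta)=-\psi_{\tau}^{-1}\psi_{\backslash l_{1}}(\tau^{\dagger},\theta)$. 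Combining with the defining property of $\mathbb{IF}_{1,\psi(\tau^{\dagger},\cdot)}$ shows that $-\psi_{\tau}^{-1}\mathbb{IF}_{1,\psi(\tau^{\dagger},\cdot)}(\theta)$ is a first-order estimation influence function for $\tau(\cdot)$ at $\theta$. Part (iv) of Theorem \ref{gg} (equality case $m=1$) then gives $\mathbb{IF}_{1,\tau(\cdot)}(\theta)\in\{\mathbb{IF}_{1,\tau(\cdot)}^{std,test}(\theta)\}$.

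For part (iii), the plan is to iterate the differentiation above. Fa\`a di Bruno applied $m$ times to $\psi(\tau(\widetilde{\theta}(\zeta)),\widetilde{\theta}(\zeta))\equiv 0$ produces
\[
\psi_{\tau}\,\tau_{\backslash\overline{l}_{m}}(\theta)+\psi_{\backslash\overline{l}_{m}}(\tau^{\dagger},\theta)+R_{m,\overline{l}_{m}}(\theta)=0,
\]
where $R_{m,\overline{l}_{m}}(\theta)$ is a sum, indexed by set partitions of $\{l_{1},\dots,l_{m}\}$, of products of lower-order derivatives $\tau_{\backslash\overline{l}_{t}}$ with mixed partial derivatives of $\psi$ in $\tau$ and $\zeta$. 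Using the defining relations for $\mathbb{IF}_{m,\tau(\cdot)}$ and $\mathbb{IF}_{m,\psi(\tau^{\dagger},\cdot)}$ and, inductively, the formulas already established at orders $1,\dots,m-1$, each lower-order $\tau_{\backslash\overline{l}_{t}}$ can be replaced by $E_{\theta}[\mathbb{IF}_{t,\tau(\cdot)}\,\widetilde{\mathbb{S}}_{t,\overline{l}_{t}}]$. Rewriting $R_{m,\overline{l}_{m}}(\theta)$ as $E_{\theta}[\mathbb{Q}_{m}(\theta)\,\widetilde{\mathbb{S}}_{m,\overline{l}_{m}}(\theta)]$ for a suitable $m$th order $U$-statistic $\mathbb{Q}_{m}$, and then Hoeffding-projecting onto $\mathcal{U}_{m-1}^{\perp_{m,\theta}}(\theta)$ to extract the degenerate part $\mathbb{Q}_{m,m}$, delivers the identity $\mathbb{IF}_{m,m,\tau(\cdot)}=-\psi_{\tau}^{-1}\{\mathbb{IF}_{m,m,\psi(\tau^{\dagger},\cdot)}+\mathbb{Q}_{m,m}\}$. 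For $m=2$ only two partitions contribute: the $\tau_{\backslash l_{1}}\tau_{\backslash l_{2}}$ term with coefficient $\tfrac{1}{2}\psi_{\backslash\tau^{2}}$, and the mixed $\tau_{\backslash l_{1}}\cdot\partial/\partial\tau$ on $\psi_{\backslash l_{2}}(\tau^{\dagger},\theta)$ term (plus its transpose). Replacing $\tau_{\backslash l_{s}}$ by inner products against $IF_{1,\tau(\cdot),i_{s}}$, subtracting $E_{\theta}[\cdot]$ to enforce degeneracy, and symmetrizing yields the explicit formula $Q_{2,2}(\theta)$ in Eq.\ref{q22}. Finally, since every summand of $\mathbb{Q}_{m,m}$ contains a factor $IF_{1,\tau(\cdot),i_{s}}=-\psi_{\tau}^{-1}IF_{1,\psi(\tau^{\dagger},\cdot),i_{s}}$ (which is a scalar multiple of $IF_{1,\tau(\cdot)}^{eff}$ after projection onto $\Gamma_{1}^{test}$) together with mean-zero single-observation functions on the other coordinates, Theorem \ref{ff}'s characterization of $\Gamma_{m}^{test,\perp}$ as spanned by exactly such products shows $\mathbb{Q}_{m,m}\in\{\mathbb{U}_{m,m}^{test,\perp}(\theta,\tau^{\dagger})\}$.

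The main obstacle is the combinatorial bookkeeping in part (iii): the Fa\`a di Bruno partition sum must be matched with the Hoeffding decomposition of the relevant higher-order $U$-statistics so that the degenerate $m$th-order piece is cleanly separated from lower-order pieces that will instead contribute to $\mathbb{IF}_{s,s,\tau(\cdot)}$ for $s<m$. Verifying that the subtractions of $E_{\theta}[\cdot]$ terms (already visible in Eq.\ref{q22}) produce a genuinely degenerate kernel at every order, and that no cross-terms leak between orders, is where the induction has to be handled with care.
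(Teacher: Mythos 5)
Your parts (i) and (ii) are the paper's own argument: restricting to submodels with range in $\Theta\left(\tau^{\dagger}\right)$ kills every derivative of $\psi\left(\tau\left(\cdot\right),\cdot\right)$, and the defining representation property of $\mathbb{IF}_{m,\psi\left(\tau^{\dagger},\cdot\right)}$ then places it in $\Gamma_{m}^{nuis,test,\perp}\left(\theta,\tau^{\dagger}\right)$; one differentiation of the implicit identity plus Part (iv) of Theorem \ref{gg} gives (ii), exactly as in the paper. For part (iii), however, you take a genuinely different route. The paper never expands the scalar identity $\psi\left(\tau\left(\theta\right),\theta\right)=0$ to order $m$; it works at the level of kernels, starting from $\psi_{\tau}if_{1,\tau\left(\cdot\right)}\left(o;\theta\right)=-if_{1,\psi\left(\tau^{\dagger},\cdot\right)}\left(o;\theta\right)$ (eq. (\ref{if1_tau})), differentiating this kernel identity along one-dimensional submodels to exhibit a first order influence function of $if_{1,\tau\left(\cdot\right)}\left(o;\cdot\right)$, and then invoking part 5c of Theorem \ref{eift} and the degeneracy operator $d_{2,\theta}$ to read off $\mathbb{IF}_{2,2,\tau\left(\cdot\right)}$, with the general $m$ obtained by iterating the same kernel-level differentiation inductively (Lemma \ref{testing} in the appendix); the partition combinatorics you would generate with Fa\`{a} di Bruno then emerge automatically, and the recursion certifies by construction that the output is a degenerate influence-function component. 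Your route — Fa\`{a} di Bruno applied to the scalar identity, followed by lifting the derivative identities to $U$-statistic identities and Hoeffding-projecting — is more elementary in spirit and does reproduce Eq. (\ref{q22}) at $m=2$, but it front-loads exactly the step that part 5c spares the paper: rewriting a product of expectations such as $\prod_{j}E_{\theta}\left[\mathbb{IF}_{t_{j},\tau\left(\cdot\right)}\widetilde{\mathbb{S}}_{t_{j},\overline{l}_{t_{j}}}\right]$ as a single inner product $E_{\theta}\left[\mathbb{Q}_{m}\widetilde{\mathbb{S}}_{m,\overline{l}_{m}}\right]$ valid for every submodel requires the Waterman--Lindsay product structure of higher order scores (products of lower order scores over distinct subjects) together with centering of each factor so that no contribution leaks into orders below $m$ — this is precisely the bookkeeping you flag as the main obstacle, and it is where your induction must do the work that the paper's recursion does automatically. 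Two minor points: the membership claim $\mathbb{Q}_{m,m}\in\left\{\mathbb{U}_{m,m}^{test,\perp}\left(\theta,\tau^{\dagger}\right)\right\}$ only needs the product structure with one factor equal to $IF_{1,\tau\left(\cdot\right)}^{eff}$ and the remaining factors mean zero, so Theorem \ref{ff} is not required, but you do need to justify identifying $IF_{1,\tau\left(\cdot\right)}=-\psi_{\tau}^{-1}IF_{1,\psi\left(\tau^{\dagger},\cdot\right)}$ with the efficient $IF_{1,\tau\left(\cdot\right)}^{eff}$ appearing in the definition of that set (immediate in locally nonparametric models, and otherwise to be argued via part (ii) and Theorem \ref{gg}).
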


\begin{proof}
(i) For $r\leq m,$ consider any suitably smooth $r\ $dimensional parametric
submodel $\widetilde{\theta}\left(  \zeta\right)  $ with range containing
$\theta$ and contained in $\Theta\left(  \tau^{\dagger}\right)  .$ Let
$\widetilde{\mathbb{S}}_{s\backslash\overline{l}_{s}}\left(  \theta\right)  $
be any associated $sth-order$ score $s\leq m$. By definition of $\tau\left(
\theta\right)  $, $\psi\left(  \tau\left(  \theta\left(  \zeta\right)
\right)  ,\theta\left(  \zeta\right)  \right)  =0.$ Hence, $0=\partial^{s}%
\psi\left(  \tau\left(  \theta\left(  \zeta\right)  \right)  ,\theta\left(
\zeta\right)  \right)  /\partial\zeta_{l_{1}}...\partial\zeta_{l_{_{s}}%
|\zeta=\widetilde{\theta}^{-1}\left(  \theta\right)  }.$ Now we expand the RHS
using the chain rule and note that the only non-zero term is the term
$\psi_{\backslash\overline{l}_{s}}\left(  \tau^{\dagger},\theta\right)  $ in
which all $s-$derivatives are taken with respect to the second $\theta\left(
\zeta\right)  $ in $\psi\left(  \tau\left(  \theta\left(  \zeta\right)
\right)  ,\theta\left(  \zeta\right)  \right)  ;$ all other terms include
derivatives of $\tau\left(  \theta\left(  \zeta\right)  \right)  ,\ $which are
zero by range $\widetilde{\theta}\left(  \zeta\right)  \subset\Theta\left(
\tau^{\dagger}\right)  .$ Further $\psi_{\backslash\overline{l}_{s}}\left(
\tau^{\dagger},\theta\right)  =E_{\theta}\left[  \mathbb{IF}_{m,\psi\left(
\tau^{\dagger},\cdot\right)  }\left(  \theta\right)  \widetilde{\mathbb{S}%
}_{s\backslash\overline{l}_{s}}\left(  \theta\right)  \right]  $ by the
definition of the estimation influence function $\mathbb{IF}_{m,\psi\left(
\tau^{\dagger},\cdot\right)  }\left(  \theta\right)  $. We conclude that
$\mathbb{IF}_{m,\psi\left(  \tau^{\dagger},\cdot\right)  }\left(
\theta\right)  $ is in $\Gamma_{m}^{nuis,test}\left(  \theta,\tau^{\dagger
}\right)  ^{\perp}.$ (ii) $\mathbb{IF}_{1,\tau\left(  \cdot\right)  }$
$=-\psi_{\tau}^{-1}\mathbb{IF}_{1,\psi\left(  \tau^{\dagger},\cdot\right)  }$
is straightforward. That $\mathbb{IF}_{1,\tau\left(  \cdot\right)  }$ is
contained in $\left\{  \mathbb{IF}_{1,\tau\left(  \cdot\right)  }%
^{std,test}\right\}  $ follows by Part (iv) of Theorem \ref{gg}. (iii) See
appendix for proof.
\end{proof}

\subsection{"Inefficiency" of the Efficient Score}

We now provide an example to show that, contrary to what one might expect
based on Part (vi) of Theorem \ref{gg}, inference concerning $\tau\left(
\theta\right)  $ may be more efficient when based on an 'inefficient' member
of the set $\left\{  \mathbb{IF}_{m,\tau\left(  \cdot\right)  }^{test}\left(
\theta\right)  \right\}  $ such as $\mathbb{IF}_{m,\psi\left(  \tau^{\dagger
},\cdot\right)  }\left(  \theta\right)  $ than when based on the efficient
score $\mathbb{ES}_{m,\tau\left(  \cdot\right)  }^{test}\left(  \theta\right)
.$ Without loss of generality, it is sufficient to consider the case $m=2$. In
the following example it is $\widetilde{\tau}_{k}\left(  \theta\right)  \ $and
$\widetilde{\psi}_{k}\left(  \tau^{\dagger},\theta\right)  $ that play the
role of $\tau\left(  \theta\right)  $ and $\psi\left(  \tau^{\dagger}%
,\theta\right)  $ in the preceding theorem, because $\widetilde{\tau}%
_{k}\left(  \theta\right)  $ and $\widetilde{\psi}_{k}\left(  \tau^{\dagger
},\theta\right)  $ have, but $\tau\left(  \theta\right)  $ and $\psi\left(
\tau^{\dagger},\theta\right)  $ do not have, higher order estimation and
testing influence functions.

\textbf{Example 1c (continued):} In this example, with $Y^{\ast}\left(
\tau\right)  \equiv Y^{\ast}-\tau A,$ $A\ $and\emph{\ }$Y^{\ast}$%
\emph{\ }binary,
\[
\psi\left(  \tau,\theta\right)  =E_{\theta}\left[  \left\{  Y^{\ast}\left(
\tau\right)  -E_{\theta}\left(  Y^{\ast}\left(  \tau\right)  |X\right)
\right\}  \left\{  A-E_{\theta}\left(  A|X\right)  \right\}  \right]
\]
and $\tau\left(  \theta\right)  $ satisfies $\psi\left(  \tau\left(
\theta\right)  ,\theta\right)  =0$. Let $\widetilde{\tau}_{k}\left(
\theta\right)  $ satisfy $\widetilde{\psi}_{k}\left(  \widetilde{\tau}%
_{k}\left(  \theta\right)  ,\theta\right)  =0$ where $\widetilde{\psi}%
_{k}\left(  \tau,\theta\right)  =E_{\theta}\left[  Y^{\ast}\left(
\tau\right)  A\right]  -E_{\theta}\left\{  \left[  \Pi_{\theta}\left[
B\left(  \tau\right)  |\overline{Z}_{k}\right]  \Pi_{\theta}\left[
P|\overline{Z}_{k}\right]  \right]  \right\}  $ is defined in Section 3.1 with
$\tau$ a real-valued index and $B\left(  \tau\right)  =b\left(  X,\tau\right)
=E_{\theta}\left(  Y^{\ast}\left(  \tau\right)  |X\right)  $. Note
$\widetilde{\psi}_{k,\tau}\left(  \tau,\theta\right)  \equiv\partial
\widetilde{\psi}_{k}\left(  \tau,\theta\right)  /\partial\tau=-\left\{
E_{\theta}\left[  A^{2}\right]  -E_{\theta}\left[  \left\{  \Pi_{\theta
}\left[  P|\overline{Z}_{k}\right]  \right\}  ^{2}\right]  \right\}  ,$
$\psi_{\tau}\left(  \tau,\theta\right)  =-E_{\theta}\left[  var_{\theta
}\left(  A|X\right)  \right]  ,$ $\widetilde{\psi}_{k,\tau^{2}}\left(
\tau,\theta\right)  =\psi_{\tau^{2}}\left(  \tau,\theta\right)  =0.$ Below we
freely use results of Theorems \ref{TBrate}, \ref{DRHOIF}, and \ref{EBrate}.
We suppose that $0<\sigma<var_{\theta}\left(  A|X\right)  $ and $E_{\theta
}\left[  A^{2}\right]  <c$ for some $\left(  \sigma,c\right)  ,$ $\beta
=\frac{\beta_{p}+\beta_{b}}{2}<1/4.$ Choose $k=k_{opt}\left(  2\right)
n^{2\sigma\ }=n^{\frac{2}{1+4\beta}+2\sigma},\sigma>0$ so the truncation bias
of $\widehat{\psi}_{2,k\ }\left(  \tau\right)  \equiv\psi_{2,k\ }\left(
\tau,\widehat{\theta}\right)  $ is $O_{p}\left(  n^{-\frac{4\beta}{4\beta+d}%
}\right)  \ $and $n^{-\frac{4\beta}{4\beta+d}}\ll var_{\theta}\left[
\widehat{\psi}_{2,k\ }\left(  \tau\right)  \right]  \asymp k/n^{2}%
=n^{-2\left(  \frac{4\beta}{4\beta+d}+\sigma\right)  }.$ We assume the given
$\left(  \beta_{g},\beta_{b},\beta_{p}\right)  $ are such that the order
$O_{p}\left[  n^{-\left(  \frac{\beta_{g}}{2\beta_{g}+d}+\frac{\beta_{b}%
}{d+2\beta_{b}}+\frac{\beta_{p}}{d+2\beta_{p}}\right)  }\right]  $ of the
estimation bias of $\widehat{\psi}_{2,k\ }\left(  \tau\right)  $ is
$O_{p}\left(  n^{-\frac{4\beta}{4\beta+d}}\right)  .$ Then $\left\vert
\widehat{\psi}_{2,k\ }\left(  \tau\right)  -\widetilde{\psi}_{k}\left(
\tau,\theta\right)  \right\vert $ and $\left\vert \widehat{\psi}%
_{2,k\ }\left(  \tau\right)  -\psi\left(  \tau,\theta\right)  \ \right\vert $
are $O_{p}\left(  n^{-\frac{4\beta}{4\beta+d}+\sigma}\right)  $ which just
exceeds the minimax rate $O_{p}\left(  n^{-\frac{4\beta}{4\beta+d}\ }\right)
$ for $\sigma$ very small.

Our goal is to compare the coverage and length of confidence intervals for
$\widetilde{\tau}_{k}\left(  \theta\right)  $ and $\tau\left(  \theta\right)
$ based on$\ $\emph{\ }%
\begin{align*}
C_{1-\alpha,\widetilde{\psi}_{k}\left(  \tau\right)  }  &  \equiv\left\{
\tau;-z_{1-\alpha/2}<\frac{\psi_{2,k}\left(  \tau,\widehat{\theta}\right)
}{\mathbb{W}_{2,\widetilde{\psi}_{k}\left(  \tau\right)  }\left(
\widehat{\theta}\right)  }<z_{1-\alpha/2}\right\}  ,\\
C_{1-\alpha,2,\widetilde{\tau}_{k}}  &  \equiv\left\{  \tau;-z_{1-\alpha
/2}<\frac{\tau_{2,k}\left(  \widehat{\theta}\right)  -\tau}{\mathbb{W}%
_{2,\widetilde{\tau}_{k}}\left(  \widehat{\theta}\right)  }<z_{1-\alpha
/2}\right\}  ,\\
C_{1-\alpha,2,ES}  &  \equiv\left\{  \tau;-z_{1-\alpha/2}<\frac{\mathbb{ES}%
_{2,\widetilde{\tau}_{k}}^{test}\left(  \widehat{\theta}\left(  \tau\right)
\right)  }{\mathbb{W}_{2,\widetilde{\tau}_{k}}^{ES}\left(  \widehat{\theta
}\left(  \tau\right)  \right)  }<z_{1-\alpha/2}\right\}  ,
\end{align*}
where $\mathbb{W}_{2,\widetilde{\psi}_{k}\left(  \tau\right)  }\left(
\widehat{\theta}\right)  ,\mathbb{W}_{2,\widetilde{\tau}_{k}}\left(
\widehat{\theta}\right)  ,\mathbb{W}_{2,\widetilde{\tau}_{k}}^{ES}\left(
\widehat{\theta}\left(  \tau\right)  \right)  $ are appropriate variance
estimators, $\widehat{\theta}$ is our usual split sample initial estimator,
and $\widehat{\theta}\left(  \tau^{\dagger}\right)  $ is an initial split
sample estimator depending on $\tau^{\dagger}$ that satisfies $\widetilde{\psi
}_{k}\left(  \tau,\widehat{\theta}\left(  \tau^{\dagger}\right)  \right)  =0$
if $\tau=\tau^{\dagger}$ , i.e., $\tau\left[  \widehat{\theta}\left(
\tau^{\dagger}\right)  \right]  =\tau^{\dagger},$ We assume that if
$\tau\left(  \theta\right)  =\tau^{\dagger}$ then the convergence rate under
$\theta$ of our estimator of $b\left(  X,\tau^{\ast}\right)  $ for any
$\tau^{\ast}$ remains $n^{-\frac{\beta_{b}}{d+2\beta_{b}}}$. Now the
assumption $0<\sigma<E_{\theta}\left[  var_{\theta}\left(  A|X\right)
\right]  , $ $E_{\theta}\left[  A^{2}\right]  <c$ implies $\left\vert
\widetilde{\tau}_{k}\left(  \theta\right)  -\tau\left(  \theta\right)
\right\vert /\left\vert \widetilde{\psi}_{k}\left(  \tau,\theta\right)
-\psi\left(  \tau,\theta\right)  \right\vert \ $is uniformly bounded away from
zero and infinity. It then follows from earlier results on $\widehat{\psi
}_{2,k\ }\left(  \tau\right)  ,$ the assumption $0<\sigma<E_{\theta}\left[
var_{\theta}\left(  A|X\right)  \right]  ,E_{\theta}\left[  A^{2}\right]  <c,$
and Theorem \ref{tt} that $C_{1-\alpha,\widetilde{\psi}_{k}\left(
\tau\right)  }$ is a uniform asymptotic $1-\alpha$ confidence interval for
both $\tau\left(  \theta\right)  $ and $\widetilde{\tau}_{k}\left(
\theta\right)  $ of length $O_{p}\left(  n^{-\frac{4\beta}{4\beta+d}+\sigma
}\right)  .$

The next theorem gives explicit formulae for $\psi_{2,k}\left(  \tau
,\widehat{\theta}\right)  ,$ $\mathbb{ES}_{2,\widetilde{\tau}_{k}}%
^{test}\left(  \widehat{\theta}\left(  \tau\right)  \right)  ,$ and
$\tau_{2,k}\left(  \widehat{\theta}\right)  .$ Using these formulae we
calculate the biases and variances necessary to compare the coverage of the
three intervals.

This comparison requires each of our three candidate procedures to be on the
same scale. Therefore we used standardized versions of the relevant statistics.

\begin{theorem}
\label{aa} Suppose the assumptions described in the preceding example hold. Then

(i)%
\begin{align*}
\psi_{2,k}\left(  \tau,\widehat{\theta}\right)   &  =\widetilde{\psi}_{k,\tau
}\left(  \tau,\widehat{\theta}\right)  +\mathbb{IF}_{2,\widetilde{\psi}%
_{k}\left(  \tau,\cdot\right)  }\left(  \widehat{\theta}\right) \\
&  =\widetilde{\psi}_{k,\tau}\left(  \tau,\widehat{\theta}\right)
+\mathbb{V}\left[  \left(  Y^{\ast}\left(  \tau\right)  -\widehat{b}\left(
X,\tau\right)  \right)  \left\{  A-\widehat{p}\left(  X\right)  \right\}
\right] \\
&  +\mathbb{V}\left[  \left\{  \left[  Y^{\ast}\left(  \tau\right)
-\widehat{b}\left(  X,\tau\right)  \right]  \overline{Z}_{k}^{T}\right\}
_{i_{1}}\left\{  \overline{Z}_{k}\left[  A-\widehat{p}\left(  X\right)
\right]  \right\}  _{i_{2}}\right]
\end{align*}
where $\widehat{b}\left(  X,\tau\right)  =\widehat{B}\left(  \tau\right)
=E_{\widehat{\theta}}\left(  Y^{\ast}\left(  \tau\right)  \ |X\right)  ,$
$\widehat{p}\left(  X\right)  =\widehat{P}=E_{\widehat{\theta}}\left(
A|X\right)  ;$

(ii): Let $\widehat{\epsilon}$ denote $Y-\widehat{b}\left(  X\right)  ,$ and
$\widehat{\Delta}$ denote $A-\widehat{p}\left(  X\right)  .$ Thus,%
\begin{align*}
&  \mathbb{ES}_{2,\widetilde{\tau}_{k}}^{test}\left(  \widehat{\theta}\left(
\tau^{\dagger}\right)  \right) \\
&  =v\left(  \widehat{\theta}\left(  \tau^{\dagger}\right)  \right)  \left\{
var_{\widehat{\theta}\left(  \tau^{\dagger}\right)  }\left[  \mathbb{IF}%
_{2,\widetilde{\psi}_{k}\left(  \tau^{\dagger},\cdot\right)  }\left(
\widehat{\theta}\left(  \tau^{\dagger}\right)  \right)  \right]  -var\left[
\mathbb{U}_{2,2,\widetilde{\tau}_{k}\left(  \cdot\right)  }^{\ast,test,\perp
}\left(  \widehat{\theta}\left(  \tau^{\dagger}\right)  ,\tau^{\dagger
}\right)  \right]  \right\}  ^{-1}\\
&  \times\left\{  \mathbb{IF}_{2,\widetilde{\psi}_{k}\left(  \tau^{\dagger
},\cdot\right)  }\left(  \widehat{\theta}\left(  \tau^{\dagger}\right)
\right)  -\mathbb{U}_{2,2,\widetilde{\tau}_{k}\left(  \cdot\right)  }%
^{\ast,test,\perp}\left(  \widehat{\theta}\left(  \tau^{\dagger}\right)
,\tau^{\dagger}\right)  \right\}
\end{align*}
\newline where%
\begin{align*}
&  U_{2,2,\widetilde{\tau}_{k}\left(  \cdot\right)  ,ij}^{\ast,test,\perp
}\left(  \widehat{\theta}\left(  \tau^{\dagger}\right)  ,\tau^{\dagger}\right)
\\
&  =\left(  E_{\widehat{\theta}}\left[  \widehat{\epsilon}_{i}^{2}%
\widehat{\Delta}_{i}^{2}\right]  \right)  ^{-1}\times\widehat{\epsilon}%
_{i}\widehat{\Delta}_{i}\\
&  \left\{
\begin{array}
[c]{c}%
-\left\{
\begin{array}
[c]{c}%
\left(  E_{\widehat{\theta}}\left[  \widehat{\epsilon}_{i}^{2}\widehat{\Delta
}_{i}^{2}\right]  \right)  ^{-1}E_{\widehat{\theta}}\left[  \widehat{\epsilon
}\widehat{\Delta}^{2}\overline{Z}_{k}^{T}\right] \\
\times E_{\widehat{\theta}}\left[  \widehat{\epsilon}^{2}\widehat{\Delta
}\overline{Z}_{k}^{T}\right]  \widehat{\epsilon}_{j}\widehat{\Delta}_{j}%
\end{array}
\right\} \\
+E_{\widehat{\theta}}\left[  \widehat{\epsilon}_{i}^{2}\widehat{\Delta}%
_{i}\overline{Z}_{k,i}^{T}\right]  \overline{Z}_{k,j}\widehat{\Delta}_{j}\\
+E_{\widehat{\theta}}\left[  \widehat{\epsilon}_{i}\widehat{\Delta}_{i}%
^{2}\overline{Z}_{k,i}^{T}\right]  \overline{Z}_{k,j}\widehat{\epsilon}_{j}%
\end{array}
\right\}
\end{align*}
and
\[
v\left(  \theta\right)  =E_{\theta}\left[  var_{\theta}\left(  A|X\right)
\right]
\]
Also,
\begin{align}
&  var_{\widehat{\theta}\left(  \tau^{\dagger}\right)  }\left\{
\mathbb{ES}_{2,\widetilde{\tau}_{k}\left(  \cdot\right)  }^{test}\left(
\widehat{\theta}\left(  \tau^{\dagger}\right)  \right)  \right\}
^{-1}\mathbb{ES}_{2,\widetilde{\tau}_{k}\left(  \cdot\right)  }^{test}\left(
\widehat{\theta}\left(  \tau^{\dagger}\right)  \right) \nonumber\\
&  =v\left(  \widehat{\theta}\left(  \tau^{\dagger}\right)  \right)
^{-1}\left\{  \mathbb{IF}_{2,\widetilde{\psi}_{k}\left(  \tau^{\dagger}%
,\cdot\right)  }\left(  \widehat{\theta}\left(  \tau^{\dagger}\right)
\right)  -\mathbb{U}_{2,2,\widetilde{\tau}_{k}\left(  \cdot\right)  }%
^{\ast,test,\perp}\left(  \widehat{\theta}\left(  \tau^{\dagger}\right)
,\tau^{\dagger}\right)  \right\} \label{joel}%
\end{align}
(iii)
\[
\tau_{2,k\ }\left(  \widehat{\theta}\right)  \equiv\widetilde{\tau}_{k}\left(
\widehat{\theta}\right)  +\mathbb{IF}_{1,\widetilde{\tau}_{k}\left(
\cdot\right)  }\left(  \widehat{\theta}\right)  +\mathbb{IF}%
_{2,2,\widetilde{\tau}_{k}\left(  \cdot\right)  }\left(  \widehat{\theta
}\right)  ,
\]
where
\[
\mathbb{IF}_{1,\widetilde{\tau}_{k}\left(  \cdot\right)  }\left(
\widehat{\theta}\right)  =\mathbb{IF}_{1,\widetilde{\tau}_{k}\left(
\cdot\right)  }\left(  \widehat{\theta}\right)  =\mathbb{V}\left\{  v\left(
\widehat{\theta}\right)  ^{-1}\left[  \left\{  Y-\widehat{b}\left(  X\right)
\right\}  \left\{  A-\widehat{p}\left(  X\right)  \right\}  \right]  \right\}
\]
with $Y=Y^{\ast}\left(  \tau\left(  \widehat{\theta}\right)  \right)
,\widehat{b}\left(  X\right)  =\widehat{b}\left(  X,\tau\left(
\widehat{\theta}\right)  \ \right)  ,$

${}${}${}${}$\mathbb{IF}_{2,2,\widetilde{\tau}_{k}\left(  \cdot\right)
}\left(  \widehat{\theta}\right)  =v\left(  \widehat{\theta}\right)
^{-1}\left[  \mathbb{IF}_{2,2,\widetilde{\psi}_{k}\left(  \tau\left(
\widehat{\theta}\right)  ,\cdot\right)  }\left(  \widehat{\theta}\right)
+\mathbb{Q}_{2,2,\widetilde{\tau}_{k}\left(  \cdot\right)  }\left(
\widehat{\theta}\right)  \right]  $ ${}$where
\begin{align*}
&  Q_{2,2,\widetilde{\tau}_{k}\left(  \cdot\right)  ,\overline{i}_{2}%
}\ \left(  \widehat{\theta}\right) \\
&  =-\frac{1}{2}v\left(  \widehat{\theta}\right)  ^{-1}\left[
\begin{array}
[c]{c}%
\left[  \left\{  A-\widehat{p}\left(  X\right)  \right\}  _{i_{1}}%
^{2}-v\left(  \widehat{\theta}\right)  \right]  \left[  \left\{
Y-\widehat{b}\left(  X\right)  \right\}  \left\{  A-\widehat{p}\left(
X\right)  \right\}  \right]  _{i_{2}}+\\
\left[  \left\{  A-\widehat{p}\left(  X\right)  \right\}  _{i_{2}}%
^{2}-v\left(  \widehat{\theta}\right)  \right]  \left[  \left\{
Y-\widehat{b}\left(  X\right)  \right\}  \left\{  A-\widehat{p}\left(
X\right)  \right\}  \right]  _{i_{1}}%
\end{array}
\right]
\end{align*}

\end{theorem}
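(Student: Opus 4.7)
The three parts are handled separately. Parts (i) and (iii) reduce to substituting the Example 1b structural constants $H_1=1$, $H_2=-A$, $H_3=-Y^*(\tau)$, $H_4=AY^*(\tau)$ (inherited from Example 1b by replacing $Y$ with $Y^*(\tau)=Y^*-\tau A$) into the general formulas of Theorems \ref{DRHOIF} and \ref{tt}. Part (ii) is the main technical content: one must evaluate the projection defining $\mathbb{ES}_2^{test}$ in closed form at $\widehat{\theta}(\tau^\dagger)$.

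For (i), I take $\dot{B}=\dot{P}=1$, which satisfies \eqref{dot1}--\eqref{dot3} since $H_1\equiv 1$; then $\widehat{Q}=1$ and the orthonormalization \eqref{ON} gives $\widehat{E}[\overline{Z}_k\overline{Z}_k^T]=I_{k\times k}$. The explicit formulas immediately following Theorem \ref{DRHOIF} yield $\widehat{\psi}+\widehat{\mathbb{IF}}_{1,\widetilde{\psi}_k(\tau,\cdot)}(\widehat{\theta}) = \mathbb{V}[(Y^*(\tau)-\widehat{b}(X,\tau))(A-\widehat{p}(X))]$ and, after one use of the $\mathbb{V}$-symmetrization in the index pair $(i_1,i_2)$, $\widehat{\mathbb{IF}}_{22,\widetilde{\psi}_k(\tau,\cdot)}(\widehat{\theta}) = \mathbb{V}[\{(Y^*(\tau)-\widehat{b}(X,\tau))\overline{Z}_k^T\}_{i_1}\{\overline{Z}_k(A-\widehat{p}(X))\}_{i_2}]$. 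Adding the plug-in $\widetilde{\psi}_k(\tau,\widehat{\theta})$ reassembles the displayed formula.

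For (iii) the driving simplification is that $\widetilde{\psi}_k(\tau,\theta)$ is linear in $\tau$---both $Y^*(\tau)$ and $b(X,\tau)=E_\theta[Y^*|X]-\tau p(X)$ are affine in $\tau$---so $\widetilde{\psi}_{k,\tau^2}\equiv 0$ and the first term of \eqref{q22} vanishes. Differentiating $Y^*(\tau)-b(X,\tau)=(Y^*-E_\theta[Y^*|X])-\tau(A-p(X))$ gives $\partial IF_{1,\widetilde{\psi}_k(\tau,\cdot)}/\partial\tau = -(A-p(X))^2 - \widetilde{\psi}_{k,\tau}$, which has mean zero so its centered and uncentered versions coincide. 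Combining with $\mathbb{IF}_{1,\widetilde{\tau}_k} = -\widetilde{\psi}_{k,\tau}^{-1}\mathbb{IF}_{1,\widetilde{\psi}_k}$ from Theorem \ref{tt} (ii), and using that at $\theta=\widehat{\theta}$ the truncation is exact ($\widetilde{b}(\widehat{\theta})=\widehat{b}$, $\widetilde{p}(\widehat{\theta})=\widehat{p}$) so $\widetilde{\psi}_{k,\tau}(\tau,\widehat{\theta})=-v(\widehat{\theta})$, substitution into \eqref{q22} followed by Theorem \ref{tt} (iii) produces the stated forms for $\mathbb{IF}_{1,\widetilde{\tau}_k}(\widehat{\theta})$ and $\mathbb{IF}_{2,2,\widetilde{\tau}_k}(\widehat{\theta})$; the prefactor $-\frac{1}{2}v(\widehat{\theta})^{-1}$ and the bracket $[(A-\widehat{p})^2-v(\widehat{\theta})]$ arise from the sign identity $v-(A-\widehat{p})^2=-[(A-\widehat{p})^2-v]$.

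Part (ii) is the main obstacle. By definition $\mathbb{ES}_2^{test}=\mathbb{ES}_1^{test}-\Pi_\theta[\mathbb{ES}_1^{test}\mid\Gamma_2^{nuis,test}(\theta,\tau^\dagger)]$, and at $\theta=\widehat{\theta}(\tau^\dagger)$ the first order efficient testing score is $\mathbb{ES}_1^{test} = v(\widehat{\theta}(\tau^\dagger))\,(E_{\widehat{\theta}(\tau^\dagger)}[\widehat{\epsilon}^2\widehat{\Delta}^2])^{-1}\mathbb{V}[\widehat{\epsilon}\widehat{\Delta}]$. Since $\mathcal{M}(\Theta)$ is (locally) nonparametric, Theorem \ref{ff} characterizes $\Gamma_2^{nuis,test,\perp}$ as the closed linear span in $\mathcal{U}_2$ of symmetrizations of kernels of the form $c\,IF_{1,\widetilde{\tau}_k}^{eff}(O_{i_1})\,h(O_{i_2})$ with $E[h(O)]=0$. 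I would then compute the orthogonal projection of $\mathbb{IF}_{2,\widetilde{\psi}_k(\tau^\dagger,\cdot)}(\widehat{\theta}(\tau^\dagger))$ onto this subspace: the moment tensors $E_{\widehat{\theta}(\tau^\dagger)}[\widehat{\epsilon}\widehat{\Delta}^2\overline{Z}_k^T]$, $E_{\widehat{\theta}(\tau^\dagger)}[\widehat{\epsilon}^2\widehat{\Delta}\overline{Z}_k^T]$ and the normalizing scalar $(E[\widehat{\epsilon}^2\widehat{\Delta}^2])^{-1}$ appearing in the displayed formula for $\mathbb{U}^{*,test,\perp}$ are precisely the inner-product coefficients arising from this projection. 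Theorem \ref{tt} (i) shows $\mathbb{IF}_{2,\widetilde{\psi}_k(\tau^\dagger,\cdot)}$ is itself a (nonstandardized) testing influence function for $\widetilde{\tau}_k$, so by Lemma \ref{b} its projection onto $\Gamma_2^{test}$ equals a scalar multiple of $\mathbb{ES}_2^{test}$; the residual $\mathbb{IF}_{2,\widetilde{\psi}_k(\tau^\dagger,\cdot)}-\mathbb{U}^{*,test,\perp}$ is therefore proportional to $\mathbb{ES}_2^{test}$, and Lemma \ref{a} pins down the proportionality constant as $v(\widehat{\theta}(\tau^\dagger))$. The variance identity \eqref{joel}, namely $v^{-1}\{var[\mathbb{ES}_2^{test}]\}^{-1}=var[\mathbb{IF}_{2,\widetilde{\psi}_k}]-var[\mathbb{U}^{*,test,\perp}]$, is then the Pythagorean decomposition associated with this orthogonal subtraction. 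Keeping careful track of these normalization constants is where I expect the bookkeeping to be most delicate.
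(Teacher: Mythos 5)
Your handling of (i) and (iii) follows the same route as the paper: (i) is read off from the explicit kernels following Theorem \ref{DRHOIF} with the Example 1b structural functions, and (iii) is precisely the paper's substitution of $\widetilde{\psi}_{k,\backslash\tau^{2}}=0$ and $\partial IF_{1,\widetilde{\psi}_{k}(\tau,\cdot)}(\widehat{\theta})/\partial\tau=-\{A-\widehat{p}(X)\}^{2}$ into Theorem \ref{tt}, with the centering supplying the bracket $[(A-\widehat{p})^{2}-v(\widehat{\theta})]$. One caution on (i): with $H_{1}=1$, $H_{2}=-A$, $H_{3}=-Y^{\ast}(\tau)$ the general kernel $-[(H_{1}\widehat{P}+H_{2})\dot{B}\overline{Z}_{k}^{T}]_{i_{1}}[\overline{Z}_{k}(H_{1}\widehat{B}+H_{3})\dot{P}]_{i_{2}}$ equals $-[(Y^{\ast}(\tau)-\widehat{b})\overline{Z}_{k}^{T}]_{i_{1}}[\overline{Z}_{k}(A-\widehat{p})]_{i_{2}}$ for any admissible $(\dot{B},\dot{P})$ (and this minus sign is forced by the requirement that the second-order term cancel $EB_{1}$ for the conditional-covariance functional), so your claim to "reassemble the displayed formula" with a plus sign involves an untracked sign that you should resolve explicitly rather than assert.

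The genuine gap is in (ii). Your skeleton is the paper's: by Theorem \ref{tt}(i), Lemma \ref{a} and Lemma \ref{b}, $\{var[\mathbb{ES}_{2}^{test}]\}^{-1}\mathbb{ES}_{2}^{test}$ equals $E_{\widehat{\theta}(\tau^{\dagger})}[\mathbb{IF}_{2,\widetilde{\psi}_{k}(\tau^{\dagger},\cdot)}\mathbb{ES}_{1}^{test}]^{-1}$ times the projection of $\mathbb{IF}_{2,\widetilde{\psi}_{k}(\tau^{\dagger},\cdot)}$ onto $\Gamma_{2}^{test}$, and the constant is $v(\widehat{\theta}(\tau^{\dagger}))$ (note that Lemma \ref{a} alone does not give this value; you still need the explicit evaluation $E_{\widehat{\theta}(\tau^{\dagger})}[\mathbb{IF}_{2,\widetilde{\psi}_{k}}\mathbb{ES}_{1}^{test}]=v$, using $\widetilde{\psi}_{k}(\tau^{\dagger},\widehat{\theta}(\tau^{\dagger}))=0$). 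But two things need repair. First, Theorem \ref{ff} characterizes $\Gamma_{2}^{test,\perp}$, not $\Gamma_{2}^{nuis,test,\perp}$; only after subtracting the projection onto $\Gamma_{2}^{test,\perp}$ does the residual lie in $\Gamma_{2}^{test}$ and hence become a multiple of $\mathbb{ES}_{2}^{test}$. Second, and more importantly, the whole content of part (ii) is the identity $\Pi_{\widehat{\theta}}[\mathbb{IF}_{2,\widetilde{\psi}_{k}}\,|\,\Gamma_{2}^{test,\perp}]=\mathbb{U}_{2,2,\widetilde{\tau}_{k}(\cdot)}^{\ast,test,\perp}$, which you assert ("the moment tensors are precisely the inner-product coefficients") but do not derive. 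Because the elements of $\Gamma_{2}^{test,\perp}$ are the symmetrized U-statistics $\mathbb{V}[IF_{1,\widetilde{\tau}_{k}}^{eff}(O_{i})h(O_{j})]$, the unknown $h$ enters the normal equations in both index slots; solving that equation (as the paper does in the appendix, solving for $h^{\ast}$ and the scalar $c_{h}(\widehat{\theta})$) is exactly what produces, in addition to the two terms $E_{\widehat{\theta}}[\widehat{\epsilon}^{2}\widehat{\Delta}\overline{Z}_{k}^{T}]\overline{Z}_{k,j}\widehat{\Delta}_{j}$ and $E_{\widehat{\theta}}[\widehat{\epsilon}\widehat{\Delta}^{2}\overline{Z}_{k}^{T}]\overline{Z}_{k,j}\widehat{\epsilon}_{j}$ that a naive coefficient-matching would give, the extra term proportional to $\widehat{\epsilon}_{j}\widehat{\Delta}_{j}$ whose coefficient couples the two moment vectors $E_{\widehat{\theta}}[\widehat{\epsilon}\widehat{\Delta}^{2}\overline{Z}_{k}]$ and $E_{\widehat{\theta}}[\widehat{\epsilon}^{2}\widehat{\Delta}\overline{Z}_{k}]$ through $(E_{\widehat{\theta}}[\widehat{\epsilon}^{2}\widehat{\Delta}^{2}])^{-1}$. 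Without carrying out that computation you have not established the stated form of $\mathbb{U}^{\ast,test,\perp}$, hence neither Eq. (\ref{joel}) nor the variance identity obtained by taking $var_{\widehat{\theta}(\tau^{\dagger})}$ of both sides (your Pythagorean remark is fine once the projection formula is actually in hand).
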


\begin{proof}
The proof of $\left(  i\right)  $ was given earlier. The proofs of $\left(
ii\right)  $ and (iii) are in the appendix.
\end{proof}

\begin{theorem}
\label{bb}$.$\textbf{\ }Suppose $\widetilde{\tau}_{k}\left(  \theta\right)
=\tau^{\dagger}$ and the assumptions of the preceding theorem hold. {}{}${}
${}{}${}{}{}{}{}{}$Then

(i) $var_{\theta}\left[  \mathbb{U}_{2,2,\widetilde{\tau}_{k}\left(
\cdot\right)  }^{\ast,test,\perp}\left(  \widehat{\theta}\left(  \tau
^{\dagger}\right)  ,\tau^{\dagger}\right)  \right]  =o\left(  \frac{1}%
{n}\right)  ,$%
\begin{align*}
&  var_{\theta}\left[  v\left(  \widehat{\theta}\right)  ^{-1}\psi
_{2,k\ }\left(  \tau,\widehat{\theta}\right)  \right]  \times\left[
var_{\theta}\left\{  var_{\widehat{\theta}\left(  \tau^{\dagger}\right)
}\left\{  \mathbb{ES}_{2,\widetilde{\tau}_{k}\left(  \cdot\right)  }%
^{test}\left(  \widehat{\theta}\left(  \tau^{\dagger}\right)  \right)
\right\}  ^{-1}\mathbb{ES}_{2,\widetilde{\tau}_{k}\left(  \cdot\right)
}^{test}\left(  \widehat{\theta}\left(  \tau^{\dagger}\right)  \right)
\right\}  \right]  ^{-1}\\
&  =1+o_{p}\left(  1\right)
\end{align*}

(ii)
\begin{align*}
var_{\theta}\left[  \mathbb{Q}_{2,2,\widetilde{\tau}_{k}\left(  \cdot\right)
}\ \left(  \widehat{\theta}\right)  \right]   &  =o\left(  \frac{1}{n}\right)
,\\
var_{\theta}\left[  v\left(  \widehat{\theta}\right)  ^{-1}\ \psi
_{2,k\ }\left(  \tau,\widehat{\theta}\right)  \right]  /var_{\theta}\left\{
\tau_{2,k\ }\left(  \widehat{\theta}\right)  -\tau^{\dagger}\right\}   &
=1+o_{p}\left(  1\right)
\end{align*}

$\left(  iii\right)  $%
\begin{align*}
v\left(  \widehat{\theta}\right)  ^{-1}E_{\theta}\left[  \psi_{2,k\ }\left(
\tau^{\dagger},\widehat{\theta}\right)  \right]   &  =O_{p}\left\{  \left(
P-\widehat{P}\right)  \left(  B\left(  \tau^{\dagger}\right)  -\widehat{B}%
\left(  \tau^{\dagger}\right)  \right)  \left(  \frac{g\left(  X\right)
}{\widehat{g}\left(  X\right)  }-1\right)  \right\} \\
&  =O_{p}\left(  n^{-\left(  \frac{\beta_{g}}{2\beta_{g}+d}+\frac{\beta_{b}%
}{d+2\beta_{b}}+\frac{\beta_{p}}{d+2\beta_{p}}\right)  }\right)
\end{align*}

(iv)
\begin{align*}
&  E_{\theta}\left[  var_{\widehat{\theta}\left(  \tau^{\dagger}\right)
}\left\{  \mathbb{ES}_{2,\widetilde{\tau}_{k}\left(  \cdot\right)  }%
^{test}\left(  \widehat{\theta}\left(  \tau^{\dagger}\right)  \right)
\right\}  ^{-1}\ \mathbb{ES}_{2,\widetilde{\tau}_{k}\ }^{test}\left(
\widehat{\theta}\left(  \tau^{\dagger}\right)  \right)  \right] \\
&  =O_{p}\left\{  \left(  P-\widehat{P}\right)  \left(  B\left(  \tau
^{\dagger}\right)  -\widehat{B}\left(  \tau^{\dagger}\right)  \right)  \left[
\left(  \frac{g\left(  X\right)  }{\widehat{g}\left(  X\right)  }-1\right)
+\left(  P-\widehat{P}\right)  +\left(  B\left(  \tau^{\dagger}\right)
-\widehat{B}\left(  \tau^{\dagger}\right)  \right)  \right]  \right\} \\
&  =O_{p}\left[  \max\left\{  n^{-\left(  \frac{\beta_{g}}{2\beta_{g}+d}%
+\frac{\beta_{b}}{d+2\beta_{b}}+\frac{\beta_{p}}{d+2\beta_{p}}\right)
},n^{-\left(  \frac{\beta_{b}}{d+2\beta_{b}}+\frac{2\beta_{p}}{d+2\beta_{p}%
}\right)  },n^{-\left(  \frac{2\beta_{b}}{d+2\beta_{b}}+\frac{\beta_{p}%
}{d+2\beta_{p}}\right)  }\right\}  \right]
\end{align*}

( v)
\begin{align*}
E_{\theta}\left[  \tau_{2,k\ }\left(  \widehat{\theta}\right)  -\tau^{\dagger
}\right]   &  =O_{p}\left\{
\begin{array}
[c]{c}%
\left(  P-\widehat{P}\right)  \left(  \frac{g\left(  X\right)  }%
{\widehat{g}\left(  X\right)  }-1\right)  \left(  B-\widehat{B}\right)  +\\
\left(  P-\widehat{P}\right)  ^{2}\left(  P-\widehat{P}\right)  \left(
B-\widehat{B}\right) \\
+\left(  \frac{g\left(  X\right)  }{\widehat{g}\left(  X\right)  }-1\right)
^{2}\left[  \left(  P-\widehat{P}\right)  +\left(  \frac{g\left(  X\right)
}{\widehat{g}\left(  X\right)  }-1\right)  +\left(  B-\widehat{B}\right)
\right]
\end{array}
\right\} \\
&  =O_{p}\left\{  \max\left\{
\begin{array}
[c]{c}%
n^{-\left(  \frac{\beta_{g}}{2\beta_{g}+d}+\frac{\beta_{p}}{d+2\beta_{p}%
}+\frac{\beta_{b}}{d+2\beta_{b}}\right)  },\\
n^{-\left(  \frac{2\beta_{p}}{d+2\beta_{p}}\right)  }n^{-\frac{\beta_{p}%
}{d+2\beta_{p}}}n^{-\frac{\beta_{b}}{d+2\beta_{b}}},\\
n^{-\frac{2\beta_{g}}{2\beta_{g}+d}}\left\{  n^{-\frac{\beta_{b}}{d+2\beta
_{b}}}+n^{-\frac{\beta_{p}}{d+2\beta_{p}}}++n^{-\frac{\beta_{g}}{2\beta_{g}%
+d}}\right\}
\end{array}
\right\}  \right\}
\end{align*}

\end{theorem}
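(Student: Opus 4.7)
I would organize the proof in three passes corresponding to (a) the variance claims in parts (i) and (ii), (b) the bias calculation in part (iii), and (c) the bias calculations in parts (iv) and (v). The driving observation is that both correction terms $\mathbb{U}^{*,test,\perp}_{2,2}$ and $\mathbb{Q}_{2,2,\widetilde{\tau}_{k}(\cdot)}$ are second-order degenerate $U$-statistics whose kernels have bounded $L_{2}$ norm, so each contributes only $O(1/n^{2})$ variance, yet each carries non-negligible bias that changes the rate at which the efficient-score-based and directly inverted estimators converge.

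For (i) and (ii), I would use the standard formula $var_{\theta}[\mathbb{V}(h)]=O(n^{-2}E[h^{2}])$ for a second-order degenerate kernel $h$. Each summand in $U^{*,test,\perp}_{2,2,ij}$ has the schematic form $\widehat{\epsilon}_{i}\widehat{\Delta}_{i}\cdot a^{T}\overline{Z}_{k,j}(\widehat{\epsilon}_{j}\text{ or }\widehat{\Delta}_{j})$ where $a$ is a deterministic vector of the form $E_{\widehat{\theta}}[\widehat{\epsilon}^{2}\widehat{\Delta}\overline{Z}_{k}]$ or similar. By Bessel's inequality for the orthonormal system $\overline{Z}_{k}$ (using $E_{\widehat{\theta}}[\overline{Z}_{k}\overline{Z}_{k}^{T}]=I_{k\times k}$), $\Vert a\Vert^{2}\leq E_{\widehat{\theta}}[(\widehat{\epsilon}^{2}\widehat{\Delta})^{2}]=O(1)$, so $E_{\theta}[h^{2}]=O(1)$ uniformly in $k$, giving $var_{\theta}[\mathbb{U}^{*,test,\perp}_{2,2}]=O(1/n^{2})=o(1/n)$. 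The kernel of $\mathbb{Q}_{2,2,\widetilde{\tau}_{k}(\cdot)}$ is a product of bounded centered random variables, yielding the same $O(1/n^{2})$ bound. Combined with Theorem \ref{var_if}, which gives $var_{\theta}[\mathbb{IF}_{2,\widetilde{\psi}_{k}(\tau,\cdot)}]\asymp k/n^{2}\gg 1/n$ for the choice $k=n^{2/(1+4\beta)+2\sigma}\gg n$, subtracting an $o(1/n)$-variance correction scales the total variance by $1+o_{p}(1)$, which is exactly the two ratio claims.

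For (iii), since $\widetilde{\tau}_{k}(\theta)=\tau^{\dagger}$ implies $\widetilde{\psi}_{k}(\tau^{\dagger},\theta)=0$ by the defining equation, Theorem \ref{EBrate} applied to the functional $\widetilde{\psi}_{k}(\tau^{\dagger},\cdot)$ with the estimator $\widehat{\theta}$ gives $E_{\theta}[\psi_{2,k}(\tau^{\dagger},\widehat{\theta})]-0=EB_{2}$, whose explicit product form $O((B(\tau^{\dagger})-\widehat{B}(\tau^{\dagger}))(P-\widehat{P})(g/\widehat{g}-1))$ is read off from Theorem \ref{EBrate} with $m=2$; the H\"older rate $n^{-(\beta_{g}/(2\beta_{g}+d)+\beta_{b}/(d+2\beta_{b})+\beta_{p}/(d+2\beta_{p}))}$ then follows from $Aiv)$ and boundedness of $v(\widehat{\theta})^{-1}$ (by $v(\theta)>\sigma>0$ and consistency).

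Parts (iv) and (v) are where the essential algebra sits. For (iv), I would start from identity \eqref{joel} of Theorem \ref{aa}(ii): $\{var\,\mathbb{ES}_{2}^{test}\}^{-1}\mathbb{ES}_{2}^{test}=v(\widehat{\theta}(\tau^{\dagger}))^{-1}\{\mathbb{IF}_{2,\widetilde{\psi}_{k}(\tau^{\dagger},\cdot)}-\mathbb{U}^{*,test,\perp}_{2,2}\}$. The expectation of the first summand is $v^{-1}\cdot EB_{2}$ by the logic of (iii). For $E_{\theta}[\mathbb{U}^{*,test,\perp}_{2,2}]$, since $i\neq j$ and the data are iid, each $U$-statistic term factors as $E_{\theta}[\widehat{\epsilon}\widehat{\Delta}]\cdot E_{\theta}[\,\cdot\,]$; by iterated conditioning $E_{\theta}[\widehat{\epsilon}\widehat{\Delta}]=E_{\theta}[(B-\widehat{B})(P-\widehat{P})]+\psi(\tau^{\dagger},\theta)$, and using $\widetilde{\psi}_{k}(\tau^{\dagger},\theta)=0$ together with $\psi(\tau^{\dagger},\theta)=-TB_{k}(\theta)$ (Theorem \ref{TBformula}) gives a factor of order $(B-\widehat{B})(P-\widehat{P})$. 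Multiplying by the remaining $E_{\theta}[\,\cdot\,]$ factor (which, after projecting onto the orthonormal $\overline{Z}_{k}$, is of order $(P-\widehat{P})$ or $(B-\widehat{B})$ by Cauchy-Schwarz and $Aiv)$) produces the additive terms $(P-\widehat{P})^{2}(B-\widehat{B})$ and $(B-\widehat{B})^{2}(P-\widehat{P})$ in the stated maximum. Part (v) is analogous via $\tau_{2,k}(\widehat{\theta})-\tau^{\dagger}=[\widetilde{\tau}_{k}(\widehat{\theta})-\tau^{\dagger}]+v^{-1}\mathbb{IF}_{2,2,\widetilde{\psi}_{k}}(\widehat{\theta})+v^{-1}\mathbb{Q}_{2,2,\widetilde{\tau}_{k}(\cdot)}(\widehat{\theta})$: the first bracket is handled by an implicit-function Taylor expansion of $\widetilde{\psi}_{k}(\cdot,\widehat{\theta})$ around $\tau^{\dagger}$ (Theorem \ref{J1}(iii)), the middle term gives $v^{-1}EB_{2}$, and the direct iid-factorization of $E_{\theta}[\mathbb{Q}_{2,2}]$ yields the additional cross-terms $(P-\widehat{P})^{2}(B-\widehat{B})$ etc.

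The main obstacle will be the careful bookkeeping in (iv) and (v): one must expand every embedded $E_{\widehat{\theta}}[\,\cdot\,]$ inside $\mathbb{U}^{*,test,\perp}_{2,2}$ and $\mathbb{Q}_{2,2}$, recognize each resulting term as an $L_{2}(F_{X})$ projection residual of $B-\widehat{B}$, $P-\widehat{P}$, or $g/\widehat{g}-1$ onto $lin\{\overline{Z}_{k}\}$, and then apply the H\"older-ball approximation bound Eq.\eqref{appr} together with the rates in $Aiv)$ to convert each residual into its $n$-rate, finally taking the maximum. The algebra is lengthy but each step is mechanical; the one genuine subtlety is noting that in the efficient-score case the cross-terms $(P-\widehat{P})^{2}(B-\widehat{B})$ and $(B-\widehat{B})^{2}(P-\widehat{P})$ arise precisely because $\mathbb{ES}_{2}^{test}$ projects away the mixed-score tangent directions and thereby loses the $(g/\widehat{g}-1)$ smoothing present in $EB_{2}$, which is the phenomenon that makes the ``efficient'' score actually inefficient for inference on $\tau(\theta)$.
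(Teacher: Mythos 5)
Your treatment of parts (i)--(iv) is essentially the paper's own route: the $o(1/n)$ variance bounds for the two correction terms, the ratio arguments via $var_{\theta}\left[\psi_{2,k}\left(\tau,\widehat{\theta}\right)\right]\asymp\max\left(1/n,k/n^{2}\right)$, part (iii) read off from Theorem \ref{EBrate} with $\widetilde{\psi}_{k}\left(\tau^{\dagger},\theta\right)=0$, and the iid factorization of $E_{\theta}\left[\mathbb{U}_{2,2,\widetilde{\tau}_{k}\left(\cdot\right)}^{\ast,test,\perp}\right]$ in (iv). One small imprecision there: $\mathbb{U}_{2,2}^{\ast,test,\perp}$ and $\mathbb{Q}_{2,2}$ are degenerate only under $\widehat{\theta}\left(\tau^{\dagger}\right)$, not under $\theta$, so you cannot conclude $var_{\theta}=O\left(n^{-2}\right)$ directly from a degenerate-kernel formula; you must take the Hoeffding decomposition under $\theta$ and observe that the first-order projections have $o_{p}\left(1\right)$ second moments (their kernels carry factors such as $E_{\theta}\left[\widehat{\epsilon}\widehat{\Delta}\right]$ or $E_{\theta}\left[\overline{Z}_{k}\widehat{\Delta}\right]$ contracted with $O(1)$ vectors), which still yields $o\left(1/n\right)$. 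This is a fixable oversight.

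Part (v), however, has a genuine gap. The identity you start from is false: by Theorem \ref{aa}(iii), $\tau_{2,k}\left(\widehat{\theta}\right)=\widetilde{\tau}_{k}\left(\widehat{\theta}\right)+\mathbb{IF}_{1,\widetilde{\tau}_{k}\left(\cdot\right)}\left(\widehat{\theta}\right)+\mathbb{IF}_{2,2,\widetilde{\tau}_{k}\left(\cdot\right)}\left(\widehat{\theta}\right)$, and you have dropped the first-order term. This is not cosmetic: $\widetilde{\tau}_{k}\left(\widehat{\theta}\right)-\tau^{\dagger}$ is of first order in $\widehat{\theta}-\theta$, and it is only the cancellation of this plug-in bias against $E_{\theta}\left[\mathbb{IF}_{1,\widetilde{\tau}_{k}}\left(\widehat{\theta}\right)\right]$, and then of the residual second-order bias against $E_{\theta}\left[\mathbb{IF}_{2,2,\widetilde{\tau}_{k}}\left(\widehat{\theta}\right)\right]$, that leaves a third-order remainder; handling the bracket by a separate Taylor expansion and declaring the second-order term's mean to be "$v^{-1}EB_{2}$" (which is anyway not correct, since $\mathbb{IF}_{2,2,\widetilde{\psi}_{k}}$ is evaluated at $\tau\left(\widehat{\theta}\right)$ and its mean alone is not the estimation bias) destroys exactly this structure. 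The paper instead invokes Theorem \ref{eiet} to write $E_{\theta}\left[\tau_{2,k}\left(\widehat{\theta}\right)-\widetilde{\tau}_{k}\left(\theta\right)\right]=-E_{\theta}\left[\mathbb{IF}_{3,3,\widetilde{\tau}_{k}\left(\cdot\right)}\left(\widehat{\theta}\right)\right]\left(1+o_{p}\left(1\right)\right)$, computes $\mathbb{IF}_{3,3,\widetilde{\tau}_{k}}$ for the implicitly defined parameter (Theorem \ref{tt} and its higher-order extension), and bounds the three surviving terms, e.g.
\[
E_{\theta}\left[d_{1,\theta}\left(\partial IF_{1,\widetilde{\psi}_{k}\left(\tau,\cdot\right)}\left(\widehat{\theta}\right)/\partial\tau\right)\right]E_{\theta}\left[\mathbb{IF}_{2,2,\widetilde{\tau}_{k}\left(\cdot\right)}\left(\widehat{\theta}\right)\right],
\]
which is where the $\left(\frac{g}{\widehat{g}}-1\right)^{2}\left[\left(P-\widehat{P}\right)+\left(\frac{g}{\widehat{g}}-1\right)+\left(B-\widehat{B}\right)\right]$ terms in the stated bound come from. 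Your decomposition has no mechanism to generate these terms (your $\mathbb{Q}_{2,2}$ factorization produces at most $\left[\left(P-\widehat{P}\right)+\left(\frac{g}{\widehat{g}}-1\right)\right]$ times $\left(P-\widehat{P}\right)\left(B-\widehat{B}\right)$-type products), so the proposed route would fail to establish (v) as stated.
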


\begin{proof}
The proof of part (iii) was given earlier. The remaining parts are proved in
the Appendix.
\end{proof}

We conclude from this theorem that the savings in variance that comes with
using $\mathbb{ES}_{2,\widetilde{\tau}_{k}\left(  \cdot\right)  }%
^{test}\left(  \widehat{\theta}\left(  \tau^{\dagger}\right)  \right)
\mathbb{\ }$ rather than $\ \psi_{2,k\ }\left(  \tau,\widehat{\theta}\right)
$ is asymptotically negligible even in regard to constants. {}{}{}{}%
{}Similarly, we conclude that the difference in variance that comes with using
$\psi_{2,k\ }\left(  \tau,\widehat{\theta}\right)  $ rather than
$\ \mathbb{IF}_{2,2,\widetilde{\tau}_{k}\left(  \cdot\right)  }\left(
\widehat{\theta}\right)  $ is asymptotically negligible, again even in regard
to constants. Further, because $var_{\theta}\left[  \mathbb{U}%
_{2,2,\widetilde{\tau}_{k}\left(  \cdot\right)  }^{\ast,test,\perp}\left(
\widehat{\theta}\left(  \tau^{\dagger}\right)  ,\tau^{\dagger}\right)
\right]  $ and $var_{\theta}\left[  \mathbb{Q}_{2,2,\widetilde{\tau}%
_{k}\left(  \cdot\right)  }\ \left(  \widehat{\theta}\right)  \right]  $ are
of the order of $o\left(  \frac{1}{n}\right)  $ as their first order
degenerate kernels are both of order $o_{p}\left(  1\right)  ,$ and
$n^{\frac{4\beta}{4\beta+d}-\sigma}\left\{  \psi_{2,k\ }\left(  \tau
,\widehat{\theta}\right)  -E_{\theta}\left[  \psi_{2,k\ }\left(
\tau,\widehat{\theta}\right)  \right]  \right\}  $ is asymptotically normal,
we conclude that
\begin{align*}
&  n^{\frac{4\beta}{4\beta+d}-\sigma}\left\{  \tau_{2,k\ }\left(
\widehat{\theta}\right)  -E_{\theta}\left[  \tau_{2,k\ }\left(
\widehat{\theta}\right)  \right]  \right\}  ,\\
&  n^{\frac{4\beta}{4\beta+d}-\sigma}\left\{  \mathbb{ES}_{2,\widetilde{\tau
}_{k}\left(  \cdot\right)  }^{test}\left(  \widehat{\theta}\left(
\tau^{\dagger}\right)  \right)  \right\}  ^{-1}\left[  \mathbb{ES}%
_{2,\widetilde{\tau}_{k}\left(  \cdot\right)  }^{test}\left(  \widehat{\theta
}\left(  \tau^{\dagger}\right)  \right)  -E_{\theta}\left[  \mathbb{ES}%
_{2,\widetilde{\tau}_{k}\left(  \cdot\right)  }^{test}\left(  \widehat{\theta
}\left(  \tau^{\dagger}\right)  \right)  \right]  \right]
\end{align*}
and $n^{\frac{4\beta}{4\beta+d}-\sigma}v\left(  \widehat{\theta}\right)
^{-1}\left\{  \psi_{2,k\ }\left(  \tau,\widehat{\theta}\right)  -E_{\theta
}\left[  \psi_{2,k\ }\left(  \tau,\widehat{\theta}\right)  \right]  \right\}
$ are all asymptotically normal with the same asymptotic variance.

It then follows that a necessary condition for the intervals based on
$\psi_{2,k}\left(  \tau^{\dagger},\widehat{\theta}\right)  ,$ $\mathbb{ES}%
_{2,\widetilde{\tau}_{k}\left(  \cdot\right)  }^{test}\left(  \widehat{\theta
}\left(  \tau\right)  \right)  ,$ and $\tau_{2,k}\left(  \widehat{\theta
}\right)  -\tau$ to cover $\widetilde{\tau}_{k}\left(  \theta\right)
=\tau^{\dagger}$ at the nominal $1-\alpha$ level as $n\rightarrow\infty$ is
that
\begin{align*}
&  v\left(  \widehat{\theta}\right)  ^{-1}E_{\theta}\left[  \psi
_{2,k\ }\left(  \tau^{\dagger},\widehat{\theta}\right)  \right]  ,\\
&  var_{\widehat{\theta}\left(  \tau^{\dagger}\right)  }\left\{
\mathbb{ES}_{2,\widetilde{\tau}_{k}\left(  \cdot\right)  }^{test}\left(
\widehat{\theta}\left(  \tau^{\dagger}\right)  \right)  \right\}
^{-1}E_{\theta}\left[  \ \mathbb{ES}_{2,\widetilde{\tau}_{k}\ }^{test}\left(
\widehat{\theta}\left(  \tau^{\dagger}\right)  \right)  \right]
\end{align*}
and $E_{\theta}\left[  \tau_{2,k}\left(  \widehat{\theta}\right)
-\tau^{\dagger}\right]  $ are $O_{p}\left(  n^{-\frac{4\beta}{4\beta+d}%
+\sigma}\right)  .$

Now we know under the assumptions of theorem \ref{bb} that this necessary
condition holds for $v\left(  \widehat{\theta}\right)  ^{-1}E_{\theta}\left[
\psi_{2,k}\left(  \tau^{\dagger},\widehat{\theta}\right)  \right]  $ since
$v\left(  \widehat{\theta}\right)  $ is bounded away from zero and one and, by
assumption, $n^{-\left(  \frac{\beta_{g}}{2\beta_{g}+d}+\frac{\beta_{b}%
}{d+2\beta_{b}}+\frac{\beta_{p}}{d+2\beta_{p}}\right)  }=O_{p}\left(
n^{-\frac{4\beta}{4\beta+d}}\right)  .$ However this necessary condition need
not hold for either
\[
var_{\widehat{\theta}\left(  \tau^{\dagger}\right)  }\left\{  \mathbb{ES}%
_{2,\widetilde{\tau}_{k}\left(  \cdot\right)  }^{test}\left(  \widehat{\theta
}\left(  \tau^{\dagger}\right)  \right)  \right\}  ^{-1}E_{\theta}\left[
\ \mathbb{ES}_{2,\widetilde{\tau}_{k}\ }^{test}\left(  \widehat{\theta}\left(
\tau^{\dagger}\right)  \right)  \right]
\]
or $E_{\theta}\left[  \tau_{2,k}\left(  \widehat{\theta}\right)
-\tau^{\dagger}\right]  .$ For example, consider the following specification
consistent with our assumptions: $\beta_{p}/d=0$ , $\beta_{b}/d=\beta
_{g}/d=1/4.$ Then $\beta/d=1/8,$ so $n^{-\left(  \frac{\beta_{g}}{2\beta
_{g}+d}+\frac{\beta_{b}}{d+2\beta_{b}}+\frac{\beta_{p}}{d+2\beta_{p}}\right)
}=n^{-\frac{4\beta}{4\beta+d}}=n^{-1/3}.$ However, $E_{\theta}\left[
\tau_{2,k}\left(  \widehat{\theta}\right)  -\tau^{\dagger}\right]
\ $converges to zero at rate $n^{-\ \frac{\beta_{b}}{d+2\beta_{b}}}%
=n^{-\frac{1}{6}}$. Next
\begin{align*}
&  var_{\widehat{\theta}\left(  \tau^{\dagger}\right)  }\left\{
\mathbb{ES}_{2,\widetilde{\tau}_{k}\left(  \cdot\right)  }^{test}\left(
\widehat{\theta}\left(  \tau^{\dagger}\right)  \right)  \right\}
^{-1}E_{\theta}\left[  \ \mathbb{ES}_{2,\widetilde{\tau}_{k}\ }^{test}\left(
\widehat{\theta}\left(  \tau^{\dagger}\right)  \right)  \right] \\
&  =O_{p}\left(  n^{-\left(  \frac{\beta_{b}}{d+2\beta_{b}}+\frac{2\beta_{p}%
}{d+2\beta_{p}}\right)  }\right)  =n^{-1/6}>>O_{p}\left(  n^{-\frac{4\beta
}{4\beta+d}+\sigma}\right)  =n^{-1/3+\sigma}%
\end{align*}
for small $\sigma.$ We conclude that the intervals based on $\mathbb{ES}%
_{2,\widetilde{\tau}_{k}\left(  \cdot\right)  }^{test}\left(  \widehat{\theta
}\left(  \tau\right)  \right)  \ $ and $\tau_{2,k\ }\left(  \widehat{\theta
}\right)  -\tau$ fail to cover $\widetilde{\tau}_{k}\left(  \theta\right)
=\tau^{\dagger}$ at the nominal $1-\alpha$ level uniformly over $\Theta$ as
$n\rightarrow\infty$ . We reach the identical conclusion with regard to the
parameter $\tau\left(  \theta\right)  $ because under our assumptions
$\left\vert \tau\left(  \theta\right)  -\widetilde{\tau}_{k}\left(
\theta\right)  \right\vert =O_{p}\left(  n^{-\frac{4\beta}{4\beta+d}+\sigma
}\right)  $

Furthermore, by the argument used in the proof of theorem \ref{tt}, it is easy
to see that the length of each interval is $O_{p}\left(  k/n^{2}\right)
=O_{p}\left(  n^{-\frac{4\beta}{4\beta+d}+\sigma}\right)  .$ It follows that
if we try to improve the coverage of the intervals based on $\mathbb{ES}%
_{2,\widetilde{\tau}_{k}\left(  \cdot\right)  }^{test}\left(  \widehat{\theta
}\left(  \tau\right)  \right)  \ $ and $\tau_{2,k}\left(  \widehat{\theta
}\right)  -\tau$ by further increasing $k,$ the length of the intervals will
increase beyond $O_{p}\left(  n^{-\frac{4\beta}{4\beta+d}+\sigma}\right)  .$
We conclude that the interval based on $\psi_{2,k}\left(  \tau,\widehat{\theta
}\right)  $ is strictly preferred to the other two intervals when $\beta
_{p}/d=0$ , $\beta_{b}/d=\beta_{g}/d=1/4$ and is never worse in terms of
shrinkage rate and coverage than the other two intervals whatever be
$\beta_{p}$, $\beta_{b},$ and $\beta_{g}\ $. We reach the identical conclusion
with regard to the coverage of the parameter $\tau\left(  \theta\right)  $
because, under our assumptions including our choice of $k$, $\left\vert
\tau\left(  \theta\right)  -\widetilde{\tau}_{k}\left(  \theta\right)
\right\vert =O_{p}\left(  n^{-\frac{4\beta}{4\beta+d}\ }\right)  $ and
$n^{-\frac{4\beta}{4\beta+d}\ }<<n^{-\frac{4\beta}{4\beta+d}+\sigma},$ the
order of the interval lengths.

These results translate directly into analogous results concerning the
associated estimators. Under our assumptions the estimator solving
$\psi_{2,k\ }\left(  \tau,\widehat{\theta}\right)  =0$ converges to both
$\tau\left(  \theta\right)  $ and $\widetilde{\tau}_{k}\left(  \theta\right)
$ at rate $O_{p}\left(  n^{-\frac{4\beta}{4\beta+d}+\sigma}\right)  .$ In
contrast the rate of convergence of $\tau_{2,k}\left(  \widehat{\theta
}\right)  $ and the estimator solving $\mathbb{ES}_{2,\widetilde{\tau}%
_{k}\left(  \cdot\right)  }^{test}\left(  \widehat{\theta}\left(  \tau\right)
\right)  =0$ converge to $\tau\left(  \theta\right)  $ and $\widetilde{\tau
}_{k}\left(  \theta\right)  $ at the rates given in (iv) and (v) of theorem
\ref{bb}.

What is the intuition behind the above findings? First note that, as promised
by Theorem \ref{eiet} and part (vii) of the theorem in the last subsection,
the bias away from zero of $var_{\widehat{\theta}\left(  \tau^{\dagger
}\right)  }\left\{  \mathbb{ES}_{2,\widetilde{\tau}_{k}\left(  \cdot\right)
}^{test}\left(  \widehat{\theta}\left(  \tau^{\dagger}\right)  \right)
\right\}  ^{-1}E_{\theta}\left[  \ \mathbb{ES}_{2,\widetilde{\tau}_{k}%
\ }^{test}\left(  \widehat{\theta}\left(  \tau^{\dagger}\right)  \right)
\right]  ,E_{\theta}\left[  \tau_{2,k}\left(  \widehat{\theta}\right)
-\widetilde{\tau}_{k}\left(  \theta\right)  \right]  ,$ and $v\left(
\widehat{\theta}\right)  ^{-1}E_{\theta}\left[  \psi_{2,k}\left(
\tau^{\dagger},\widehat{\theta}\right)  \right]  $ are all $O_{p}\left(
\left\vert \left\vert \widehat{\theta}-\theta\ \right\vert \right\vert
^{3}\right)  .$ However the nature and convergence rate of the $O_{p}\left(
\left\vert \left\vert \widehat{\theta}-\theta\ \right\vert \right\vert
^{3}\right)  $ term can vary markedly between estimators, attaining a minimum
for $E_{\theta}\left[  \psi_{2,k}\left(  \tau^{\dagger},\widehat{\theta
}\right)  \right]  .$ Now it is not surprising that, for the same order of
variance, the order of $E_{\theta}\left[  \tau_{2,k}\left(  \widehat{\theta
}\right)  -\widetilde{\tau}_{k}\left(  \theta\right)  \right]  $ often exceeds
that of $E_{\theta}\left[  \psi_{2,k}\left(  \tau^{\dagger},\widehat{\theta
}\right)  \right]  .$ Confidence intervals for $\widetilde{\tau}_{k}\left(
\theta\right)  $ based on $\tau_{2,k}\left(  \widehat{\theta}\right)  $ are
centered at (i.e are symmetric around) $\tau_{2,k}\left(  \widehat{\theta
}\right)  ,$ which is a quite stringent constraint on the form of the
interval. In that sense, intervals based on $\tau_{2,k}\left(  \widehat{\theta
}\right)  $ are a higher order generalization of the first order asymptotic
Wald intervals for $\widetilde{\tau}_{k}\left(  \theta\right)  .$ It is well
known that when $\widetilde{\tau}_{k}\left(  \theta\right)  $ is an implicit
parameter that sets a functional such as $\widetilde{\psi}_{k}\left(
\tau,\theta\right)  $ to zero, first-order Wald confidence intervals are often
outperformed in finite samples by confidence sets obtained by inverting a
'score-like' test based on first order 'estimating functions' for the
functional that depend on the parameter $\widetilde{\tau}_{k}$ and,
frequently, on estimated nuisance parameters as well, although this fact is
not reflected in the first order asymptotics. Our example is higher order
version of this phenomenon, where the benefit of the interval $C_{1-\alpha
,\widetilde{\psi}_{\ k}\left(  \tau\right)  }$ obtained by inverting tests
based on the estimating function $\psi_{2,k}\left(  \tau,\widehat{\theta
}\right)  $ for the functional $\widetilde{\psi}_{k}\left(  \tau
,\theta\right)  $ is clearly and quantitatively revealed by the asymptotics.
Note that, like first order Wald intervals, the interval based on $\tau
_{2,k}\left(  \widehat{\theta}\right)  $ will differ from the interval for
$\widetilde{\tau}_{k}\left(  \theta\right)  $ based on applying an inverse
nonlinear monotone transform $h^{-1}\left(  \cdot\right)  $ to the end points
of a Wald interval for the transformed parameter $h\left\{  \widetilde{\tau
}_{k}\left(  \theta\right)  \right\}  $ that is centered on $h\left(
\tau\right)  _{2,k}\left(  \widehat{\theta}\right)  \equiv h\left(
\widetilde{\tau}_{k}\left(  \widehat{\theta}\right)  \right)  +\mathbb{IF}%
_{2,h\left(  \widetilde{\tau}_{k}\left(  \cdot\right)  \right)  }\left(
\widehat{\theta}\right)  .$ In contrast, like first order score-based
intervals, the intervals based on $\psi_{2,k}\left(  \tau,\widehat{\theta
}\right)  $ and $\mathbb{ES}_{2,\widetilde{\tau}_{k}\left(  \cdot\right)
}^{test}\left(  \widehat{\theta}\left(  \tau^{\dagger}\right)  \right)  $ are
invariant to monotone transformations of the parameter $\widetilde{\tau}%
_{k}\left(  \theta\right)  $.

More interesting and perhaps more surprising is that, for the same order of
variance, the order of $E_{\theta}\left[  var_{\widehat{\theta}\left(
\tau^{\dagger}\right)  }\left\{  \mathbb{ES}_{2,\widetilde{\tau}_{k}\left(
\cdot\right)  }^{test}\left(  \widehat{\theta}\left(  \tau^{\dagger}\right)
\right)  \right\}  ^{-1}\ \mathbb{ES}_{2,\widetilde{\tau}_{k}\ }^{test}\left(
\widehat{\theta}\left(  \tau^{\dagger}\right)  \right)  \right]  $ exceeds
that of $E_{\theta}\left[  \psi_{2,k}\left(  \tau^{\dagger},\widehat{\theta
}\right)  \right]  .$ The surprise derives from a failure to recognize that
the theorem \ref{gg} is simply too general to help select among competing
procedures . For example, this theorem implies that under law $\widehat{\theta
}\left(  \tau^{\dagger}\right)  ,$ (a) the variance $var_{\widehat{\theta
}\left(  \tau^{\dagger}\right)  }\left\{  \mathbb{ES}_{2,\widetilde{\tau}%
_{k}\left(  \cdot\right)  }^{test}\left(  \widehat{\theta}\left(
\tau^{\dagger}\right)  \right)  \right\}  ^{-1}$ of $\left[
var_{\widehat{\theta}\left(  \tau^{\dagger}\right)  }\left\{  \mathbb{ES}%
_{2,\widetilde{\tau}_{k}\left(  \cdot\right)  }^{test}\left(  \widehat{\theta
}\left(  \tau^{\dagger}\right)  \right)  \right\}  ^{-1}E_{\theta}\left[
\ \mathbb{ES}_{2,\widetilde{\tau}_{k}\ }^{test}\left(  \widehat{\theta}\left(
\tau^{\dagger}\right)  \right)  \right]  \right]  $ is less (and generally
strictly less ) than the variance of $v\left(  \widehat{\theta}\left(
\tau^{\dagger}\right)  \right)  ^{-1}E_{\theta}\left[  \psi_{2,k}\left(
\tau^{\dagger},\widehat{\theta}\left(  \tau^{\dagger}\right)  \right)
\right]  , $ while (b) both have bias of $O_{p}\left(  \left\vert \left\vert
\widehat{\theta}\left(  \tau^{\dagger}\right)  -\theta\ \right\vert
\right\vert ^{3}\right)  .$ At first blush, this might suggest that the
estimator solving $\mathbb{ES}_{2,\widetilde{\tau}_{k}\ }^{test}\left(
\widehat{\theta}\left(  \tau\right)  \right)  =0$ would likely have the same
bias but smaller variance than the estimator solving $\psi_{2,k}\left(
\tau,\widehat{\theta}\right)  =0. $ But we have seen that just the opposite is
true. The reason is that the difference between the variances in (a) is
negligible in the sense that their ratio is $1+o_{p}\left(  1\right)  $, while
the $O_{p}\left(  \left\vert \left\vert \widehat{\theta}\left(  \tau^{\dagger
}\right)  -\theta\ \right\vert \right\vert ^{3}\right)  $ biases are often of
quite different orders with that of $v\left(  \widehat{\theta}\left(
\tau^{\dagger}\right)  \right)  ^{-1}E_{\theta}\left[  \psi_{2,k}\left(
\tau^{\dagger},\widehat{\theta}\left(  \tau^{\dagger}\right)  \right)
\right]  $ always a minimum.

More generally, whenever the functional $\psi\left(  \tau,\theta\right)  $\ is
in our doubly robust class, Eq \ref{NE11} holds so $\widehat{\psi
}_{\mathcal{K}_{J}}^{eff}$\ is rate minimax (or near minimax if $\sigma$ is
chosen positive), and the suppositions of Theorem \ref{J1} hold for
$\widetilde{\psi}\left(  \tau\right)  =$\ $\widehat{\psi}_{\mathcal{K}_{J}%
}^{eff}\left(  \tau\right)  ,$\ Theorem \ref{J1} then implies the width of the
interval estimator for $\tau\left(  \theta\right)  $\ based on $\widehat{\psi
}_{\mathcal{K}_{J}}^{eff}\left(  \tau\right)  $\ converges to zero at the
convergence rate of $\widehat{\psi}_{\mathcal{K}_{J}}^{eff}\left(
\tau\right)  $\ to $\psi\left(  \tau,\theta\right)  .$

\section{Monotone Missing Data and Other Complex Functionals}

\subsection{Derivation of Higher Order Influence Functions for product of
functionals}

In this section, we discuss the construction of higher order influence
functions for a more general class of functionals than the one thus far
considered. {}An important application of this construction is in the
derivation of higher order influence functions in monotone missing data
problems. \ To begin, we must learn to construct higher order influence
functions for a functional with the product form:%
\[
\psi\left(  \theta;\zeta\right)  =\prod\limits_{s=1}^{\zeta}\psi_{s}\left(
\theta\right)
\]
here $\psi_{s}\left(  \theta\right)  ,$ $s=1,\ldots,\zeta,$ are known to be
higher order pathwise differentiable functionals and $\zeta$ is a known constant.

The following lemma gives the general form of higher order influence functions
of $\psi\left(  \theta;\zeta\right)  $ as a function of influence functions of
$\left\{  \psi_{s}\left(  \theta\right)  :s=1,\ldots,\zeta\right\}  .$
\ Before stating our lemma, we need additional notation. \ Given an ordered
set of $m$ positive integers $\overline{\mathbf{i}}_{m}=\left\{  i_{1}%
,i_{2},...,i_{m}\right\}  ,$ for any $r$ non-negative integers $\left\{
t_{1},t_{2},...,t_{r}\right\}  $, satisfying $\sum_{s\leq r}t_{s}=m,$ we
define $\Upsilon=\left(  \overline{i}_{1},_{t_{1}},\overline{i}_{2},_{t_{2}%
},...,\overline{i}_{r},_{t_{r}}\right)  $ to be an ordered partition of degree
$m$ and order $r$ if and only if for $1\leq s^{\ast}\leq r$ and $t_{s^{\ast}%
}>0$ we have$:$
\[
\overline{i}_{s^{\ast}},_{t_{s^{\ast}}}=\left\{
\begin{array}
[c]{c}%
i_{j(s^{\ast})+1},i_{j(s^{\ast})+2},...,i_{j(s^{\ast})+t_{s^{\ast}}}:\\
j(s^{\ast})=\\
\left\{
\begin{array}
[c]{c}%
\sum\limits_{q=1}^{s^{\ast}-1}t_{q}\text{ \ \ \ \ \ for }1<s^{\ast}\leq r\\
0\text{\ \ \ \ \ \ \ \ \ for \ \ }s^{\ast}=1
\end{array}
\right.
\end{array}
\right\}
\]
and for all $t_{s^{\ast}}=0$ we have $\overline{i}_{s^{\ast}},_{t_{s^{\ast}}%
}=\varnothing.$ \ Any such ordered partition satisfies%
\[
\overline{\mathbf{i}}_{m}=\bigcup\limits_{s=1}^{r}\overline{i}_{s},_{t_{s}}%
\]

\begin{lemma}
\label{HOIPROD}Let $if_{\psi_{s}\left(  \theta\right)  ;j,j}\left(  o_{i_{1}%
},...,o_{i_{j}};\theta\right)  =IF_{\psi_{s}\left(  \theta\right)
;j,j;\overline{\mathbf{i}}_{j}}\left(  \theta\right)  $ for $j\geq1$ be the
$jth$ order influence function of $\psi_{s}\left(  \theta\right)  ,$ and
define $IF_{\psi_{s}\left(  \theta\right)  ;0,0;\overline{\mathbf{i}}_{0}%
}\left(  \theta\right)  \equiv\psi_{s}\left(  \theta\right)  .$ Then the $mth$
order influence function of $\psi\left(  \theta;\zeta\right)  $ is given by:%
\begin{align*}
\mathbb{IF}_{\psi\left(  \theta;\zeta\right)  ,m}\left(  \theta\right)   &  =%
%TCIMACRO{\dsum \limits_{j=1}^{m}}%
%BeginExpansion
{\displaystyle\sum\limits_{j=1}^{m}}
%EndExpansion
\mathbb{IF}_{\psi\left(  \theta;\zeta\right)  ,j,j}\left(  \theta\right) \\
&  =%
%TCIMACRO{\dsum \limits_{j=1}^{m}}%
%BeginExpansion
{\displaystyle\sum\limits_{j=1}^{m}}
%EndExpansion
\mathbb{V}\left[  IF_{\psi\left(  \theta;\zeta\right)  ,j,j,\overline
{\mathbf{i}}_{j}}\left(  \theta\right)  \right]
\end{align*}
where
\[
\mathbb{V}\left[  IF_{\psi\left(  \theta;\zeta\right)  ,j,j,\overline
{\mathbf{i}}_{j}}\left(  \theta\right)  \right]  =\mathbb{V}\left[
%TCIMACRO{\dsum \limits_{\left\{  t_{1},...t_{\zeta}\right\}  \in
%\Upsilon_{\zeta;j}}}%
%BeginExpansion
{\displaystyle\sum\limits_{\left\{  t_{1},...t_{\zeta}\right\}  \in
\Upsilon_{\zeta;j}}}
%EndExpansion%
%TCIMACRO{\dprod \limits_{s=1}^{\zeta}}%
%BeginExpansion
{\displaystyle\prod\limits_{s=1}^{\zeta}}
%EndExpansion
IF_{\psi_{s}\left(  \theta\right)  ;t_{s},t_{s};\overline{i}_{s,t_{s}}}\left(
\theta\right)  \right]
\]%
\[
\Upsilon_{\zeta;j}=\left\{  t_{1},...t_{\zeta}:\sum\limits_{s=1}^{\zeta}%
t_{s}=j,t_{s}\geq0\right\}
\]

\end{lemma}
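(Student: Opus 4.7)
My strategy is to reduce the general $\zeta$-factor case to the two-factor case $\psi_{1}\psi_{2}$ by induction on $\zeta$, and to handle the two-factor case itself by induction on $m$ using part 5(c) of Theorem \ref{eift}. A complementary verification via the Extended Information Equality Theorem \ref{eiet} then pins down the formula.

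\textbf{Reduction to $\zeta=2$.} Writing $\psi(\theta;\zeta+1)=\psi(\theta;\zeta)\cdot\psi_{\zeta+1}(\theta)$, if the two-factor version of the lemma is established and the formula is known for $\psi(\theta;\zeta)$ with $j$-th order degenerate kernel
\[
IF_{\psi(\theta;\zeta);j,j,\overline{\mathbf{i}}_{j}}(\theta)=\sum_{(t_{1},\ldots,t_{\zeta})\in\Upsilon_{\zeta;j}}\prod_{s=1}^{\zeta}IF_{\psi_{s}(\theta);t_{s},t_{s};\overline{i}_{s,t_{s}}}(\theta),
\]
then applying the two-factor rule to $\psi(\theta;\zeta)\cdot\psi_{\zeta+1}(\theta)$ writes the $j$-th order kernel of the product as a sum over decompositions $j=j'+j''$ of a product of an order-$j'$ kernel for $\psi(\theta;\zeta)$ and an order-$j''$ kernel for $\psi_{\zeta+1}(\theta)$. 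Expanding the former via the inductive hypothesis produces a sum indexed by pairs $(t_{1},\ldots,t_{\zeta};t_{\zeta+1})$ with $\sum_{s}t_{s}=j$, which is precisely $\Upsilon_{\zeta+1;j}$; the block-consecutive indexing convention for $\overline{i}_{s,t_{s}}$ ensures the disjoint index assignments on the two sides match after symmetrization inside $\mathbb{V}[\cdot]$.

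\textbf{Base case $\zeta=2$.} I will argue by induction on $m$. For $m=1$ the assertion is the familiar product rule $IF_{1,\psi_{1}\psi_{2}}=\psi_{1}\,IF_{1,\psi_{2}}+\psi_{2}\,IF_{1,\psi_{1}}$, which follows from Lemma \ref{IF1} (or directly from differentiating $\psi_{1}(\theta)\psi_{2}(\theta)$ along a submodel and using the defining equations for $IF_{1,\psi_{s}}$). For the inductive step from $m-1$ to $m$, invoke part 5(c)(i) of Theorem \ref{eift}: the degenerate piece $\mathbb{IF}_{m,m,\psi_{1}\psi_{2}}(\theta)$ is (up to the factor $1/m$ and degeneration by $d_{m,\theta}$) obtained by taking the first-order influence function, in $\theta$, of the kernel $if_{m-1,m-1,\psi_{1}\psi_{2}}(\cdot;\theta)$. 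The inductive hypothesis expresses that kernel as $\sum_{(t_{1},t_{2})\in\Upsilon_{2;m-1}}\prod_{s=1}^{2}IF_{\psi_{s};t_{s},t_{s}}$; applying the first-order product rule in $\theta$ to each such product distributes the new index $i_{m}$ onto exactly one of the two factors, raising its order from $t_{s}$ to $t_{s}+1$. Collecting terms and using $d_{m,\theta}$ to form the degenerate part gives a sum indexed precisely by $\Upsilon_{2;m}$, as claimed.

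\textbf{Cross-check and main obstacle.} As a safety check I will verify that the candidate $\mathbb{IF}_{\psi(\theta;\zeta),m}(\theta)$ satisfies the defining moment conditions of Definition 2.1, by computing $E_{\theta}[\mathbb{IF}_{\psi(\theta;\zeta),m}\,\widetilde{\mathbb{S}}_{s,\overline{l}_{s}}]$ for $s\le m$ using the multivariate Leibniz rule for $\psi_{\backslash\overline{l}_{s}}=(\prod_{r}\psi_{r})_{\backslash\overline{l}_{s}}$ and the known moment identities for the $\mathbb{IF}_{\psi_{s},t_{s}}$. The main obstacle is the combinatorial bookkeeping: one must check that the coefficients of the distinct products $\prod_{s}IF_{\psi_{s};t_{s},t_{s};\overline{i}_{s,t_{s}}}$ that arise from applying $d_{m,\theta}$ and from symmetrizing inside $\mathbb{V}[\cdot]$ collapse to the unit coefficients displayed in the lemma, with no over- or undercounting across the different ordered partitions $(t_{1},\ldots,t_{\zeta})\in\Upsilon_{\zeta;j}$. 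The algebra is routine but tedious; no conceptually new ingredient beyond Theorems \ref{eiet} and \ref{eift} is needed.
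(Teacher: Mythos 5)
Your plan is sound and rests on the same engine as the paper's own proof — part 5(c) of Theorem \ref{eift} applied recursively, the first-order product rule, and a final count of coefficients — but it is organized differently. The paper never reduces to two factors: it inducts directly on the order $j$ for general $\zeta$, writes $(j+1)\mathbb{IF}_{\psi(\theta;\zeta);j+1,j+1}$ as the degenerated first-order influence function of the $j$-th kernel, notes that the (degenerated) first-order influence function of $IF_{\psi_r;t_r,t_r}$ contributes $(t_r+1)\,IF_{\psi_r;t_r+1,t_r+1}$, and then shows that each target product with exponent vector $(\tau_1,\ldots,\tau_\zeta)\in\Upsilon_{\zeta;j+1}$ is hit with total coefficient $\sum_r \tau_r = j+1$, which cancels against the factor $j+1$ to give the unit coefficients. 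That counting is precisely the "combinatorial bookkeeping" you defer, and it is the crux of the paper's argument rather than a routine afterthought; in particular your phrase "raising its order from $t_s$ to $t_s+1$" silently drops the multiplicative factor $(t_s+1)$ that the cancellation requires. The virtue of your organization is that, once you restrict to $\zeta=2$, the counting collapses to the trivial identity $\tau_1+\tau_2=m$, and the induction on $\zeta$ (treating $\psi(\theta;\zeta+1)=\psi(\theta;\zeta)\cdot\psi_{\zeta+1}$) is multiplicity-free, since each $(t_1,\ldots,t_{\zeta+1})\in\Upsilon_{\zeta+1;j}$ arises from exactly one split $(j',j'')=(\sum_{s\le\zeta}t_s,\,t_{\zeta+1})$; you should also note, as the paper implicitly uses, that a product of degenerate kernels on disjoint index blocks (with order-zero factors being the constants $\psi_s(\theta)$) is again degenerate, so the displayed terms really are the $\mathbb{IF}_{j,j}$ pieces. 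Two small points: the $m=1$ base case is just the chain rule for first-order influence functions (your parenthetical), not really an instance of Lemma \ref{IF1}, which concerns the specific doubly robust class; and your proposed cross-check via Definition 2.1 and the multivariate Leibniz rule would, if carried out, constitute a complete direct verification on its own, which is a legitimate (and arguably cleaner) alternative to the recursion.
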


It is easy to generalize this lemma to functionals of the more general form
\[
\psi\left(  \theta\right)  =\psi\left(  \theta;\zeta_{1},\zeta_{2}\right)
=\sum_{1\leq v_{2}\leq\zeta_{1}}\prod\limits_{s=1}^{\zeta_{2}}\psi_{v_{2}%
,s}\left(  \theta\right)
\]
where both $\zeta_{1},\zeta_{2}$ are known constants, \ since the higher order
influence function of a linear combination of functionals is the linear
combination of the influence functions of the functionals.

\subsection{Application to Monotone Missing Data}

\subsubsection{Mapping Higher Order Full Data IFs to Observed Data IFs}

We next turn to the analysis of missing data models. {}Suppose that we have
derived the $mth$ order influence function $\mathbb{IF}_{\widetilde{\psi}%
_{k},m}^{full}\left(  \theta\right)  $ $m\geq2$ for a truncated parameter
$\widetilde{\psi}_{k}$ of a parameter $\psi\left(  \theta\right)  $ in our
doubly robust class of functionals based on i.i.d full data $L_{full}$; then,
according to theorem \ref{HOIPROD} the estimated $\mathbb{IF}_{\widetilde{\psi
}_{k};j,j}^{full}\left(  \widehat{\theta}\right)  ,$ $j\leq m,$ is of the
form
\begin{align*}
\sum_{1\leq l\leq k^{\left(  j-1\right)  }}\mathbb{V}\left[  \prod
\limits_{s=1}^{j}\widehat{U}_{l,s,i_{s}}^{(j)}\right]   &  =\sum_{1\leq l\leq
k^{\left(  j-1\right)  }}\left\{  \mathbb{V}\left(  \prod\limits_{s=1}%
^{j}u_{l,s}^{(j)}\left(  L_{i_{s}}^{full};\widehat{\theta}\right)  \right)
\right\} \\
&  =\sum_{1\leq l\leq k^{\left(  j-1\right)  }}\mathbb{V}\left[
\prod\limits_{s=1}^{j}\widehat{U}_{l,s,i_{s}}^{(j)}\right]
\end{align*}
where $u_{l,s}^{(j)}\left(  L_{i_{s}}^{full};\widehat{\theta}\right)  \equiv$
$\widehat{U}_{l,s,i_{s}}^{(j)}$ depends of one subject's data and has a
functional form which depends on the form of $H\left(  \cdot,\cdot\right)  $
corresponding to the functional of interest. \ For example, $\widehat{U}%
_{1,1,i_{1}}^{(2)}=\left[  \left(  A-\widehat{P}\right)  Z_{1}\right]
_{i_{1}}$corresponds to the first component of the vector contributed by
person $i_{1} $ to $\mathbb{IF}_{\widetilde{\psi}_{k};2,2}^{full}\left(
\widehat{\theta}\right)  $. Note that $E_{\widehat{\theta}}\left[
\widehat{U}_{l,s}^{(j)}\right]  =0.$\ \ \ Next, suppose that for a subject
some of the data may be missing, so that we observe
\[
\left\{  \left(  L_{obs,i}=R_{i}L_{full;i}+(1-R_{i})w(L_{full;i}%
),R_{i}\right)  :i=1,...,n\right\}
\]
where $R=I\left[  L_{obs,i}=L_{full,i}\right]  $ instead of the full data
$L_{full,i}.$ For now, $W_{i}=w(L_{full;i})$ is a known function of the full
data and $W_{i}$ is always observed. {}{}{}Below we shall extend our results
to monotone missing data.

Define
\begin{align*}
\pi\left(  W;\theta\right)   &  =P\left(  R=1|L^{full};\theta\right) \\
B_{l,s}^{(j)}\left(  W\right)   &  =E\left(  \widehat{U}_{l,s}^{(j)}%
|W;\theta\right)  .
\end{align*}
and we suppose $\pi\left(  W;\theta\right)  >\sigma>0.$If\ we know $\pi\left(
W;\theta\right)  ,$ we can use
\[
\mathbb{V}\left[  \prod\limits_{s=1}^{j}\left(  \frac{R_{i_{s}}\widehat{U}%
_{l,s,i_{s}}^{(j)}}{\pi\left(  L_{obs,,i_{s}};\theta\right)  }+\phi\left(
L_{obs,,i_{s}};\theta\right)  \right)  \right]
\]
instead of $\mathbb{V}\left[  \prod\limits_{s=1}^{j}\widehat{U}_{l,s,i_{s}%
}^{(j)}\right]  $, where $\phi\left(  \cdot\right)  $ is an arbitrary function
with finite variance which satisfies $E_{\theta}\left[  \phi\left(
L_{obs}\right)  |L_{full}\right]  =0.$ \ It is easy to verify that this
statistic is a function of the observed data and an unbiased estimator of%
\[
E_{\theta}\left[  \mathbb{V}\left[  \prod\limits_{s=1}^{j}\widehat{U}%
_{l,s,i_{s}}^{(j)}\right]  \right]
\]
so that in fact, in terms of rates, our observed data statistic has the same
bias and variance properties as the original full data higher order influence
function. \ Moreover, the most efficient (in terms of constants) choice for
$\phi\left(  L_{obs};\theta\right)  $ in our class of estimators is $-\left(
\frac{R}{\pi\left(  L_{obs}\right)  }-1\right)  B_{l,s}^{\left(  j\right)
}\left(  L_{obs};\theta\right)  ,$ while $B_{l,s}^{\left(  j\right)  }\left(
L_{obs};\theta\right)  $ is an unknown function to be estimated from the
observed data. \ This last two observations motivate our proposed strategy for
mapping higher order influence functions in the full data model to higher
order influence functions in the observed data model when the missingness
mechanism is unknown. \ First, define
\begin{align*}
\prod\limits_{s=1}^{j}\tau_{l,s}^{(j)}\left(  \theta\right)  \equiv &
E_{\theta}\left[  \mathbb{V}\left[  \prod\limits_{s=1}^{j}\widehat{U}%
_{l,s,i_{s}}^{(j)}\right]  \right] \\
&  =E_{\theta}\left[  \mathbb{V}\left[  \prod\limits_{s=1}^{j}\left(
\frac{R_{i_{s}}\left[  \widehat{U}_{l,s,i_{s}}^{(j)}-B_{l,s}^{(j)}\left(
L_{obs,i_{s}};\theta\right)  \right]  }{\pi\left(  L_{obs,,i_{s}}%
;\theta\right)  }+B_{l,s}^{(j)}\left(  L_{obs,i_{s}};\theta\right)  \right)
\right]  \right]
\end{align*}
which we notice is of the product form that was discussed earlier in this
section. \ In order to construct an $mth$ order influence function for
$\widetilde{\psi}_{k},$ it is now apparent that we must first succeed in
constructing $mth$ order influence functions for the set of parameters
$\left\{  \prod\limits_{s=1}^{j}\tau_{l,s}^{(j)}\left(  \theta\right)  :j\leq
m,l\leq k^{\left(  j-1\right)  }\right\}  $ so as to guarantee an estimation
bias of order $m+1.$ \ \ Now, for each $l$ and $s$, $\tau_{l,s}^{(j)}\left(
\theta\right)  $ is itself a member of our general doubly robust class so that
$\tau_{l,s}^{(j)}\left(  \theta\right)  =E_{\theta}\left[  H^{(j)}%
(P_{l,s},B_{l,s}^{(j)})\right]  $ where $H_{l,s,1}=-R,H_{l,s,2}=1,$
$H_{l,s,3}=R\widehat{U}_{l,s},H_{l,s,4}=0,$and $P_{l,s}=\pi\left(
L_{obs,,};\theta\right)  ^{-1},B_{l,s}^{(j)}=B_{l,s}\left(  L_{obs,}%
;\theta\right)  $. \ Thus, it immediately follows that neither $\left\{
\prod\limits_{s=1}^{j}\tau_{l,s}^{(j)}\left(  \theta\right)  :j\leq m,l\leq
k^{\left(  j-1\right)  }\right\}  $ nor $\widetilde{\psi}_{k}$ have higher
order influence functions. {}We proceed by constructing influence functions
for the set of truncated parameters $\left\{  \prod\limits_{s=1}%
^{j}\widetilde{\tau}_{l,s}^{(j)}\left(  \theta\right)  :j\leq m,l\leq
k^{\left(  j-1\right)  }\right\}  $ where $\widetilde{\tau}_{l,s}^{(j)}\left(
\theta\right)  $ are appropriately truncated versions of $\tau_{l,s}%
^{(j)}\left(  \theta\right)  $ as in section 3.2.2. \ We then define the
truncated parameter $\widetilde{\widetilde{\psi}}_{k,m}$
\[
\widetilde{\widetilde{\psi}}_{k,m}\left(  \theta\right)  =\psi\left(
\widehat{\theta}\right)  +\sum_{1\leq v\leq m}\sum_{1\leq l\leq k^{\left(
v-1\right)  }}\prod\limits_{s=1}^{v}\widetilde{\tau}_{l,s}^{(v)}\left(
\theta\right)
\]
Note that $\widetilde{\widetilde{\psi}}_{k,m}\left(  \theta\right)  $ differs
from the truncated parameters $\widetilde{\psi}_{k}\left(  \theta\right)  $ of
section 3.2.2, in that it depends on the order of the influence function we
plan to base our inference on. The next theorem gives the $mth$ order
influence function of $\widetilde{\widetilde{\psi}}_{k,m}\left(
\theta\right)  .$

\begin{theorem}
\label{map}Let $IF_{\widetilde{\tau}_{l,s}^{(j)}\left(  \theta\right)
;0,0;\overline{i}_{s,0}}\left(  \theta\right)  \equiv\widetilde{\tau}%
_{l,s}^{(j)}\left(  \theta\right)  ,$The $mth$ order influence function of
$\widetilde{\widetilde{\psi}}_{k,m}\left(  \theta\right)  $ is given by%
\[
\mathbb{IF}_{\widetilde{\widetilde{\psi}}_{k,m}\left(  \theta\right)
,m}\left(  \theta\right)  =%
%TCIMACRO{\dsum \limits_{j=1}^{m}}%
%BeginExpansion
{\displaystyle\sum\limits_{j=1}^{m}}
%EndExpansion
\mathbb{V}\left[  IF_{\widetilde{\widetilde{\psi}}_{k,m}\left(  \theta\right)
,j,j,\overline{\mathbf{i}}_{j}}\left(  \theta\right)  \right]
\]
where
\begin{equation}
IF_{\widetilde{\widetilde{\psi}}_{k,m}\left(  \theta\right)  ,j,j,\overline
{\mathbf{i}}_{j}}\left(  \theta\right)  =\sum_{1\leq v\leq j}\sum_{1\leq l\leq
k^{\left(  v-1\right)  }}%
%TCIMACRO{\dsum \limits_{\left\{  t_{1},...t_{v}\right\}  \in\Upsilon_{v;j}}}%
%BeginExpansion
{\displaystyle\sum\limits_{\left\{  t_{1},...t_{v}\right\}  \in\Upsilon_{v;j}%
}}
%EndExpansion%
%TCIMACRO{\dprod \limits_{s=1}^{v}}%
%BeginExpansion
{\displaystyle\prod\limits_{s=1}^{v}}
%EndExpansion
IF_{\widetilde{\tau}_{l,s}\left(  \theta\right)  ;t_{s},t_{s};\overline
{i}_{s,t_{s}}}\left(  \theta\right) \label{sum}%
\end{equation}%
\[
\Upsilon_{v;j}=\left\{  t_{1},...t_{v}:\sum\limits_{s=1}^{v}t_{s}=j,t_{s}%
\geq0\right\}
\]

\end{theorem}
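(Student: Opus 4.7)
The plan is to obtain Theorem~\ref{map} essentially as a direct corollary of Lemma~\ref{HOIPROD} together with two elementary facts: the linearity of the defining inner-product identity for higher-order influence functions (Definition~2.4), and the fact that $\psi(\widehat{\theta})$ is constant in $\theta$ (being a function only of the fixed training sample) and hence contributes the zero $U$-statistic to every order of influence function. Dropping that constant, $\widetilde{\widetilde{\psi}}_{k,m}(\theta)$ is a finite linear combination of products $\prod_{s=1}^{v}\widetilde{\tau}_{l,s}^{(v)}(\theta)$ indexed by $(v,l)$, so by the linearity of the score-derivative identity
\[
\mathbb{IF}_{\widetilde{\widetilde{\psi}}_{k,m},m}(\theta) \;=\; \sum_{v=1}^{m}\sum_{l=1}^{k^{(v-1)}}\mathbb{IF}_{m,\,\prod_{s=1}^{v}\widetilde{\tau}_{l,s}^{(v)}}(\theta),
\]
and the orthogonality of degenerate $U$-statistics of distinct orders (Section~2) lets me read off the $j$th-order piece $\mathbb{IF}_{\widetilde{\widetilde{\psi}}_{k,m},j,j}$ by extracting the $j$th-order piece of each summand separately.

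Next I would invoke Lemma~\ref{HOIPROD} with $\zeta=v$ on each product. The lemma is applicable because every factor $\widetilde{\tau}_{l,s}^{(v)}(\theta)=E_\theta\!\left[H^{(v)}(P_{l,s},B_{l,s}^{(v)})\right]$ is a truncated member of the doubly-robust class of Section~\ref{drfunctionals}, so Theorem~\ref{DRHOIF} provides influence functions of every order, and in particular the degenerate $U$-statistic representation the lemma assumes. Its conclusion identifies the $j$th-order kernel of $\prod_{s=1}^{v}\widetilde{\tau}_{l,s}^{(v)}(\theta)$ as
\[
\sum_{\{t_1,\ldots,t_v\}\in\Upsilon_{v;j}}\;\prod_{s=1}^{v} IF_{\widetilde{\tau}_{l,s}^{(v)};\,t_s,t_s;\,\overline{i}_{s,t_s}}(\theta),
\]
with the convention $IF_{\,\cdot\,;0,0;\overline{i}_{s,0}}\equiv\widetilde{\tau}_{l,s}^{(v)}(\theta)$ collapsing any factor whose partition entry is zero into the scalar value of that functional. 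Reassembling these contributions and summing over $v$ and $l$ yields exactly~\eqref{sum}.

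The main obstacle, and what I expect to require the most care, is combinatorial rather than analytic. First, one must justify that the outer $v$-sum in~\eqref{sum} may be truncated at $v=j$ rather than $v=m$: for $v>j$ the partition set $\Upsilon_{v;j}$ forces at least $v-j$ of the $t_s$ to vanish, and the resulting scalar-weighted kernels must be shown either to vanish after the degeneracy operator $d_{j,\theta}$ of equation~(\ref{deg}) centers each factor, or to be absorbed into smaller-$v$ contributions via the zero-order convention. Second, the allocation of $\overline{\mathbf{i}}_j$ into the $v$ disjoint blocks $\overline{i}_{s,t_s}$ must be verified to yield a genuine $j$th-order $U$-statistic without double-counting; this is where the induction on $\zeta$ underlying Lemma~\ref{HOIPROD} has to be carefully unpacked, since each of its steps invokes the product rule for first-order influence functions and an application of part~5(c) of Theorem~\ref{eift}. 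Neither point is conceptually deep, but the notation is dense and easy to mis-index, so the bulk of a careful proof consists in setting up a bookkeeping scheme that pairs each ordered partition with a unique subset of subject labels.
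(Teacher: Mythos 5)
Your proposal is correct and follows essentially the same route as the paper: the paper's own proof is a one-line appeal to Lemma~\ref{HOIPROD}, applied with $\zeta=v$ to each product $\prod_{s=1}^{v}\widetilde{\tau}_{l,s}^{(v)}(\theta)$, combined with linearity and the convention that the zero-order ``influence function'' is the functional itself (and that $\psi(\widehat{\theta})$, being fixed given the training sample, contributes nothing). The combinatorial bookkeeping you flag (the treatment of terms with $v>j$ via zero entries $t_s=0$ and the allocation of the index blocks $\overline{i}_{s,t_s}$) is exactly what the paper leaves implicit, so your extra care there is compatible with, not a departure from, its argument.
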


\begin{proof}
The proof follows directly from the lemma \ref{HOIPROD} applied to each
element
\[
\left\{  \prod\limits_{s=1}^{v}\widetilde{\tau}_{l,s}^{(v)}\left(
\theta\right)  :1\leq v\leq j,1\leq l\leq k^{\left(  v-1\right)  },1\leq s\leq
v\right\}
\]

\end{proof}

\begin{corollary}
The $mth$ order estimated influence function of $\widetilde{\widetilde{\psi}%
}_{k,m}\left(  \theta\right)  $ is given by%
\[
\mathbb{IF}_{\widetilde{\widetilde{\psi}}_{k,m}\left(  \theta\right)
,m}\left(  \widehat{\theta}\right)  =%
%TCIMACRO{\dsum \limits_{j=1}^{m}}%
%BeginExpansion
{\displaystyle\sum\limits_{j=1}^{m}}
%EndExpansion
\mathbb{V}\left[  IF_{\widetilde{\widetilde{\psi}}_{k,m}\left(  \theta\right)
,j,j,\overline{\mathbf{i}}_{j}}\left(  \widehat{\theta}\right)  \right]
\]
where
\[
IF_{\widetilde{\widetilde{\psi}}_{k,m}\left(  \theta\right)  ,j,j,\overline
{\mathbf{i}}_{j}}\left(  \widehat{\theta}\right)  =\left[  \sum_{1\leq v\leq
m}\sum_{1\leq l\leq k^{\left(  v-1\right)  }}%
%TCIMACRO{\dsum \limits_{\left\{  t_{1},...t_{v}\right\}  \in\Upsilon_{v;j}%
%^{+}}}%
%BeginExpansion
{\displaystyle\sum\limits_{\left\{  t_{1},...t_{v}\right\}  \in\Upsilon
_{v;j}^{+}}}
%EndExpansion%
%TCIMACRO{\dprod \limits_{s=1}^{v}}%
%BeginExpansion
{\displaystyle\prod\limits_{s=1}^{v}}
%EndExpansion
IF_{\widetilde{\tau}_{l,s}\left(  \theta\right)  ;t_{s},t_{s};\overline
{i}_{s,t_{s}}}\left(  \widehat{\theta}\right)  \right]
\]%
\[
\Upsilon_{v;j}^{+}=\left\{  t_{1},...t_{v}:\sum\limits_{s=1}^{v}t_{s}%
=j,t_{s}>0\right\}
\]

\end{corollary}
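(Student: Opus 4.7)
The plan is to obtain the corollary directly from Theorem \ref{map} by evaluating the formula at $\widehat{\theta}$ and observing that (i) the change in the outer summation range is cosmetic and (ii) every partition with some $t_s=0$ contributes zero, so the sum collapses from $\Upsilon_{v;j}$ to $\Upsilon_{v;j}^{+}$.

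First I would dispose of the apparent extension of the outer sum from $1\leq v\leq j$ (Theorem \ref{map}) to $1\leq v\leq m$ (Corollary). Whenever $v>j$, the defining constraints of $\Upsilon_{v;j}^{+}$, namely $\sum_{s=1}^{v}t_s=j$ with every $t_s\geq1$, are incompatible, so $\Upsilon_{v;j}^{+}=\varnothing$ and the extra summands are identically empty. Hence the corollary's claim reduces to showing that, for each $j$ and each $1\leq v\leq j$,
\[
\sum_{\left(t_1,\ldots,t_v\right)\in\Upsilon_{v;j}}\prod_{s=1}^{v}IF_{\widetilde{\tau}_{l,s}(\theta);\,t_s,t_s;\,\overline{i}_{s,t_s}}(\widehat{\theta})\;=\;\sum_{\left(t_1,\ldots,t_v\right)\in\Upsilon_{v;j}^{+}}\prod_{s=1}^{v}IF_{\widetilde{\tau}_{l,s}(\theta);\,t_s,t_s;\,\overline{i}_{s,t_s}}(\widehat{\theta}).
\]

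Next I would split $\Upsilon_{v;j}=\Upsilon_{v;j}^{+}\sqcup\bigl(\Upsilon_{v;j}\setminus\Upsilon_{v;j}^{+}\bigr)$ and treat any partition in the complement. By the convention $IF_{\widetilde{\tau}_{l,s}(\theta);0,0;\overline{i}_{s,0}}(\widehat{\theta})=\widetilde{\tau}_{l,s}^{(v)}(\widehat{\theta})$, each product with at least one $t_s=0$ contains a multiplicative factor of $\widetilde{\tau}_{l,s}^{(v)}(\widehat{\theta})$. Thus the equality above reduces to the claim
\[
\widetilde{\tau}_{l,s}^{(v)}(\widehat{\theta})=0\quad\text{for every index }(l,s,v)\text{ occurring in the construction.}
\]
I would establish this in two steps. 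Recall that $\tau_{l,s}^{(v)}(\theta)=E_{\theta}[\widehat{U}_{l,s}^{(v)}]$ is a doubly-robust functional with $H_{1}^{(l,s,v)}=-R,H_{2}^{(l,s,v)}=1,H_{3}^{(l,s,v)}=R\widehat{U}_{l,s}^{(v)},H_{4}^{(l,s,v)}=0$, and $\widetilde{\tau}_{l,s}^{(v)}$ is its truncation built from the training-sample initial estimates $\widehat{B}_{l,s}^{(v)}(W)=\widehat{E}[\widehat{U}_{l,s}^{(v)}\mid W]$ and $\widehat{P}_{l,s}=1/\widehat{\pi}$. Repeating verbatim the argument of Section \ref{dhoif_subsection} used to show $\widetilde{\overline{\eta}}_{k}(\widehat{\theta})=\widetilde{\overline{\alpha}}_{k}(\widehat{\theta})=0$ and hence $\widetilde{\psi}_{k}(\widehat{\theta})=\widehat{\psi}$, the defining score equations \eqref{eta}--\eqref{alpha} for $\widetilde{\tau}_{l,s}^{(v)}$ vanish automatically at the parameter values produced by $\widehat{\theta}$, so $\widetilde{\tau}_{l,s}^{(v)}(\widehat{\theta})=E_{\widehat{\theta}}[H^{(l,s,v)}(\widehat{B}_{l,s}^{(v)},\widehat{P}_{l,s})]=E_{\widehat{\theta}}[\widehat{U}_{l,s}^{(v)}]$. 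Finally, because $\widehat{U}_{l,s,i_{s}}^{(v)}$ is the $s$-th one-subject factor of a degenerate $v$-th order U-statistic kernel inside $\widehat{\mathbb{IF}}_{\widetilde{\psi}_{k};v,v}^{full}(\widehat{\theta})$ (degeneracy under $\widehat{\theta}$ being part of Theorem \ref{eift}(5b)), every marginal expectation $E_{\widehat{\theta}}[\widehat{U}_{l,s}^{(v)}]=0$, completing the proof of $\widetilde{\tau}_{l,s}^{(v)}(\widehat{\theta})=0$.

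The main obstacle will be the middle step — verifying that the collapse-at-$\widehat{\theta}$ mechanism of Section \ref{dhoif_subsection} really transfers to each auxiliary $\widetilde{\tau}_{l,s}^{(v)}$. One has to identify, for each index $(l,s,v)$, the correct pair of nuisance functions playing the role of $(b,p)$, the basepoints playing the role of $(\widehat{b},\widehat{p})$, and the working-linear directions $(\dot{b},\dot{p})$, and check that the centering conditions $\widehat{E}[(H_{1}^{(l,s,v)}\widehat{B}_{l,s}^{(v)}+H_{3}^{(l,s,v)})\dot{P}\overline{Z}_{k}]=\widehat{E}[(H_{1}^{(l,s,v)}\widehat{P}_{l,s}+H_{2}^{(l,s,v)})\dot{B}\overline{Z}_{k}]=0$ hold by construction of the training-sample estimators. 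Granted these mechanical verifications, the combinatorial telescoping is immediate and the corollary follows.
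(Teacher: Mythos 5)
Your proposal is correct and follows essentially the same route as the paper: the paper's proof is the one-line observation that the corollary is Theorem \ref{map} evaluated at $\widehat{\theta}$ together with the fact that $IF_{\widetilde{\tau}_{l,s}\left(\theta\right);0,0;\overline{i}_{s,0}}\left(\widehat{\theta}\right)\equiv\widetilde{\tau}_{l,s}\left(\widehat{\theta}\right)=0$, which kills every partition containing a $t_{s}=0$ and (together with $\Upsilon_{v;j}^{+}=\varnothing$ for $v>j$) collapses the sum to $\Upsilon_{v;j}^{+}$. Your additional verification that $\widetilde{\tau}_{l,s}\left(\widehat{\theta}\right)=0$ (via $E_{\widehat{\theta}}[\widehat{U}_{l,s}^{(v)}]=0$ and the vanishing of the truncation corrections at $\widehat{\theta}$, as in Section \ref{dhoif_subsection}) simply spells out what the paper dismisses as holding by definition.
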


\begin{proof}
This result follows immediately from theorem \ref{map} and the fact that
\[
IF_{\widetilde{\tau}_{l,s}\left(  \theta\right)  ;0,0;\overline{i}_{s,0}%
}\left(  \widehat{\theta}\right)  \equiv\widetilde{\tau}_{l,s}\left(
\widehat{\theta}\right)  =0
\]
by definition.
\end{proof}

\subsubsection{Two-occasion Monotone Missing Data}

We are now ready to use our theorem to derive higher order influence functions
for a functional $\psi\left(  \theta\right)  $ from monotone missing data.
\ \ We begin with the simple two-occasion case. Let $D_{i}=$ $\left(
L_{0,i},L_{1,i},Y_{i}\right)  \sim F\left(  D_{i},\vartheta\right)  ,$
$i=1,...n,$ be $n$ $i.i.d$ copies constitute the full data. \ The outcome of
interest, $Y_{i},$ is univariate. $\ L_{0,i}$ and $L_{1,i}$ are vectors of
continuous covariates with dimensions $d_{0}$ and $\left(  d_{1}-d_{0}\right)
$ respectively. \ The observed data $O_{i}=\left(  R_{0,i},L_{0,i}%
,R_{0,i}L_{1,i},R_{1,i},R_{0,i}R_{1,i}Y_{i}\right)  \sim F\left(  O_{i}%
,\theta=\left(  \vartheta,\gamma\right)  \right)  ,$ where both $R_{0,i}$ and
$R_{1,i}$ are binary missing indicators. The outcome $Y$ \ is observed if and
only if $R_{0}$ $=R_{1}$ = $1,$ while $L_{1}$ is observed if and only if
$R_{0}$ $=1$. \ The data is assumed to be missing at random, that is:
\[
\Pr\left(  R_{0}=1|D\right)  =\Pr\left(  R_{0}=1|L_{0}\right)  =\pi_{0}\left(
L_{0};\theta\right)
\]
and
\begin{align*}
\Pr\left(  R_{1}=1|R_{0}=1,D\right)   &  =\Pr\left(  R_{1}=1|R_{0}%
=1,L_{0},L_{1}\right) \\
&  =\pi_{1}\left(  L_{0},L_{1};\theta\right)
\end{align*}
\ Under this monotone missing-data pattern, the parameter of interest is given
by
\[
\psi\left(  \theta\right)  =E_{\theta}\left(  Y\right)  =E_{\theta}\left(
\frac{R_{0}R_{1}}{\pi_{0}\left(  L_{0};\theta\right)  \pi_{1}\left(
L_{0},L_{1};\theta\right)  }Y\right)
\]

We impose the following assumptions: $\left(  a1\right)  .$ $B_{0}%
=b_{0}\left(  L_{0};\theta\right)  =E_{\theta}\left[  Y|L_{0}\right]  \in
H\left(  \beta_{b_{0}},C_{B_{0}}\right)  ;$ $\left(  a2\right)  .$
$B_{1}=b_{1}\left(  L_{0},L_{1};\theta\right)  =E_{\theta}\left[
Y|L_{0},L_{1}\right]  \in H\left(  \beta_{b_{1}},C_{B_{1}}\right)  ;$ $\left(
b1\right)  .$ $\pi_{0}\left(  L_{0};\theta\right)  =\Pr\left(  R_{0}%
=1|L_{0}\right)  \in H\left(  \beta_{\pi_{0}},C_{\pi_{0}}\right)  $ and
$0<\sigma_{\pi_{0}}^{l}<\pi_{0}\left(  L_{0};\theta\right)  $ w.p. 1 $\left(
b2\right)  .$ $\ \pi_{1}\left(  L_{0},L_{1};\theta\right)  =\Pr\left(
R_{1}=1|L_{0},L_{1}\right)  \in H\left(  \beta_{\pi_{1}},C_{\pi_{1}}\right)  $
and $0<\sigma_{\pi_{1}}^{l}<\pi_{1}\left(  L_{0},L_{1};\theta\right)  $ w.p.
1$.$ $\left(  c1\right)  .$ The marginal density of $L_{0},$ $f_{0}=f\left(
L_{0};\theta\right)  ,$ falls in a H\"{o}lder ball $H\left(  \beta_{f_{0}%
},C_{f_{0}}\right)  ,$ and $0<\sigma_{f_{0}}<f_{0}\left(  L_{0};\theta\right)
$ w.p. 1, $\left\vert \left\vert f_{0}\right\vert \right\vert _{\infty}\leq
C_{f_{0}}^{\ast}<\infty.$ $\left(  c2\right)  .$ The marginal density of
$\left(  L_{0},L_{1}\right)  ,$ $f_{1}=f\left(  L_{0},L_{1};\theta\right)  ,$
falls in a H\"{o}lder ball $H\left(  \beta_{f_{1}},C_{f_{1}}\right)  ,$ and
$0<\sigma_{f_{1}}<f_{1}\left(  L_{0},L_{1};\theta\right)  $ w.p. 1,
$\left\vert \left\vert f_{1}\right\vert \right\vert _{\infty}\leq C_{f_{1}%
}^{\ast}<\infty.$

We define $g_{0}\left(  L_{0};\theta\right)  =\pi_{0}f_{0}$, $g_{1}\left(
L_{0},L_{1};\theta\right)  =\pi_{0}\pi_{1}f_{1}$ with corresponding H\"{o}lder
exponents $\beta_{g_{0}}=\min\left(  \beta_{f_{0}},\beta_{\pi_{0}}\right)  $
and $\beta_{g_{1}}=\min\left(  \beta_{f_{1}},\beta_{\pi_{0}},\beta_{\pi_{1}%
}\right)  $ respectively.

We next show how to apply theorem \ref{map} in a nested fashion in order to
derive higher order U-statistic estimators $\left\{  \widehat{\psi}_{m}%
:m\geq1\right\}  $ in the observed data model. \ In the first step of our
procedure, we derive higher order influence functions for a truncated
parameter in the artificial missing data problem in which the observed data is
given by $\ O_{i}^{\dag}=\left(  R_{0,i},L_{0,i},R_{0,i}L_{1,i},R_{0,i}%
Y_{i}\right)  $ rather than $O_{i}.$ \ In the second step, we construct
influence functions from a second artificial missing data problem with
$O_{i}^{\dag}$ now the full data and $O_{i}$ the observed data. \ This final
influence function is, in fact, the influence function for a truncated
parameter in the original monotone missing data model.

To follow this nested procedure, we derive a first stage class of estimators
\[
\left\{  \widehat{\psi}_{m}^{\dag}:m\geq1\right\}  ,
\]
which are functions of $\ O_{i}^{\dag}.$ \ In this model, $D_{i}$ is the full
data, $O_{i}^{\dag}$ is the observed data, $R_{0,i}$ is the missing indicator,
$L_{0,i}$ is a vector of always observed covariates, and $\left(  Y_{i}%
,L_{1i}\right)  $ is the outcome which might be missing. \ Since the parameter
of interest $\psi\left(  \theta\right)  $ is the marginal mean of $Y,$ this is
Example 2a\textbf{\ }of section 3.1\textbf{. }Therefore results from this
example may be applied, hence
\[
if_{1,\psi\left(  \theta\right)  }^{F}\left(  D_{i}\right)  =Y_{i}-\psi\left(
\theta\right)
\]
and
\[
if_{jj,\psi\left(  \theta\right)  }^{F}\left(  D_{i}\right)  =0\text{ \ \ for
\ }\forall\text{ {}}j\geq2.
\]

Moreover,
\[
if_{1,\psi\left(  \theta\right)  }\left(  O_{i}^{\dag}\right)  =\frac{R_{0,i}%
}{\pi_{0;i}}\left(  Y_{i}-B_{0;i}\right)  +B_{0;i}-\psi\left(  \theta\right)
\]
so that we can define
\begin{align*}
\widetilde{B}_{0} &  =\widehat{B}_{0}+\overline{Z}_{k_{0}}^{T}E_{\theta
}\left[  \frac{R_{0}}{\widehat{\pi}_{0}}\overline{Z}_{k_{0}}\overline
{Z}_{k_{0}}^{T}\right]  ^{-1}E_{\theta}\left[  \frac{R_{0}}{\widehat{\pi}_{0}%
}\left(  Y-\widehat{B}_{0}\right)  \overline{Z}_{k_{0}}\right] \\
\widetilde{\pi}_{0}^{-1} &  =\widehat{\pi}_{0}^{-1}\left(  1-\overline
{Z}_{k_{0}}^{T}E_{\theta}\left[  \frac{R_{0}}{\widehat{\pi}_{0}}\overline
{Z}_{k_{0}}\overline{Z}_{k_{0}}^{T}\right]  ^{-1}E_{\theta}\left[  \left(
\frac{R_{0}}{\widehat{\pi}_{0}}-1\right)  \overline{Z}_{k_{0}}\right]  \right)
\\
\widetilde{\psi}^{\dag}\left(  \theta\right)   &  =E_{\theta}\left[
\frac{R_{0}}{\widetilde{\pi}_{0}}\left(  Y-\widetilde{B}_{0}\right)
+\widetilde{B}_{0}\right]
\end{align*}
with $\dot{B}=1$ and $\dot{P}=\widehat{\pi}_{0}^{-1}.$ {}

Moreover%
\[
\widehat{\psi}_{m}^{\dag}=\psi\left(  \widehat{\theta}\right)  +\mathbb{V}%
_{n}\left(  \widehat{IF}_{m,\widetilde{\psi}^{\dag}}\right)
\]
\ $\left(  \widehat{B}_{0},\widehat{\pi}_{0},\widehat{f}\left(  L_{0}\right)
\right)  $ are rate optimal nonparametric estimators of $\left(  B_{0},\pi
_{0},f_{0}\right)  $ respectively estimated from the training sample and
$\overline{Z}_{k_{0}}=\overline{z}_{k_{0}}\left(  L_{0}\right)  =\widehat{E}%
\left(  \overline{\varphi}_{k_{0}}\left(  L_{0}\right)  \overline{\varphi
}_{k_{0}}^{T}\left(  L_{0}\right)  \right)  ^{-1/2}\overline{\varphi}_{k_{0}%
}\left(  L_{0}\right)  .$

From theorem \ref{TBrate} and eq.(\ref{EB6}) we also know that:%
\begin{gather*}
E\left(  \widehat{\psi}_{m}^{\dag}\right)  -\psi\left(  \theta\right)
=TB_{k_{0}}+EB_{m}\\
=O_{P}\left(  \max\left[  k_{0}^{-\frac{\beta_{b_{0}}+\beta_{\pi_{0}}}{d_{0}}%
},\left(  \frac{\log n}{n}\right)  ^{\frac{\left(  m-1\right)  \beta_{g_{0}}%
}{d_{0}+2\beta_{g_{0}}}}n^{-\left(  \frac{\beta_{b_{0}}}{d_{0}+2\beta_{b_{0}}%
}+\frac{\beta_{\pi_{0}}}{d_{0}+2\beta_{\pi_{0}}}\right)  }\right]  \right)
\end{gather*}

Next, we proceed with the second step of our procedure, where $O_{i}^{\dag}$
is now the full data and $O_{i}$ becomes the observed data. \ Then,
$O_{i}^{\dag}=O_{i}$ if $R_{0,i}=0$ or $R_{0,i}=R_{1,i}=1.$ Therefore we may
define a new missing indicator
\[
R=\left(  1-R_{0}\right)  +R_{0}R_{1}%
\]
with
\[
\Pr\left(  R_{i}=1|O_{i}\right)  =\left(  1-R_{0,i}\right)  +R_{0,i}\pi_{1,i}%
\]
In contrast with the first phase of our procedure, the full data influence
function now has non-zero higher order contributions; thus we can proceed with
the strategy layed out in the first part of this section. \ We can put
$\widehat{IF}_{jj,\widetilde{\psi}^{\dag}\left(  \theta\right)  ,\overline
{i}_{j}}$ in the format of eq.(\ref{sum}) as
\begin{align*}
&  \widehat{IF}_{jj,\widetilde{\psi}^{\dag}\left(  \theta\right)
,\overline{i}_{j}}\\
&  =\left(  -1\right)  ^{j-1}%
%TCIMACRO{\tsum \limits_{s_{1}=1}^{k_{0}}}%
%BeginExpansion
{\textstyle\sum\limits_{s_{1}=1}^{k_{0}}}
%EndExpansion
..%
%TCIMACRO{\tsum _{s_{j-1}=1}^{k_{0}}}%
%BeginExpansion
{\textstyle\sum_{s_{j-1}=1}^{k_{0}}}
%EndExpansion
\left\{
\begin{array}
[c]{c}%
\left(  \frac{R_{0}}{\widehat{\pi}_{0}}\left(  Y-\widehat{B}_{0}\right)
Z_{s_{1}}\right)  _{i_{1}}\left(  \left(  \frac{R_{0}}{\widehat{\pi}_{0}%
}-1\right)  Z_{s_{j-1}}\right)  _{i_{j}}\times\\%
%TCIMACRO{\tprod \limits_{t=2}^{j-1}}%
%BeginExpansion
{\textstyle\prod\limits_{t=2}^{j-1}}
%EndExpansion
\left(  \frac{R_{0}}{\widehat{\pi}_{0}}Z_{s_{t-1}}Z_{s_{t}}-I\left(
s_{t-1}=s_{t}\right)  \right)  _{i_{t}}%
\end{array}
\right\} \\
&  =\left(  -1\right)  ^{j-1}\sum_{l=1}^{k_{0}^{j-1}}\left\{
\begin{array}
[c]{c}%
\left(  \frac{R_{0}}{\widehat{\pi}_{0}}\left(  Y-\widehat{B}_{0}\right)
Z_{n\left(  l,1\right)  }\right)  _{i_{1}}\left(  \left(  \frac{R_{0}%
}{\widehat{\pi}_{0}}-1\right)  Z_{_{n\left(  l,j-1\right)  }}\right)  _{i_{j}%
}\times\\%
%TCIMACRO{\tprod \limits_{t=2}^{j-1}}%
%BeginExpansion
{\textstyle\prod\limits_{t=2}^{j-1}}
%EndExpansion
\left(  \frac{R_{0}}{\widehat{\pi}_{0}}Z_{n\left(  l,t-1\right)  }Z_{n\left(
l,t\right)  }-I\left(  n\left(  l,t-1\right)  =n\left(  l,t\right)  \right)
\right)  _{i_{t}}%
\end{array}
\right\}
\end{align*}

where $n\left(  l\right)  :\{1,2,...,k_{0}^{j-1}\}\rightarrow\{1,2,...,k_{0}%
\}^{j-1}$ is a one-to-one mapping that indexes all permutations of $\left\{
\left(  s_{1},s_{2},...s_{j-1}\right)  ,\text{ }1\leq s_{t}\leq k_{0}\right\}
,$ and $n\left(  l,t\right)  $ is the $t$th entry of $n\left(  l\right)  .$ We
define \ $\forall$ $j>1.$%
\begin{align*}
\widehat{U}_{l,s}^{\left(  j\right)  }  &  \equiv\left\{
\begin{tabular}
[c]{c}%
$\frac{R_{0}}{\widehat{\pi}_{0}}\left(  Y-\widehat{B}_{0}\right)  Z_{n\left(
l,1\right)  }$ \ \ \ for{}{}{}$s=1$\\
$\left\{
\begin{array}
[c]{c}%
\frac{R_{0}}{\widehat{\pi}_{0}}Z_{n\left(  l,s-1\right)  }Z_{n\left(
l,s\right)  }\\
-I\left(  n\left(  l,s-1\right)  =n\left(  l,s\right)  \right)
\end{array}
\right\}  $ \ for {}{}$1<s<j$\\
$\left(  -1\right)  ^{j-1}\left(  \frac{R_{0}}{\widehat{\pi}_{0}}-1\right)
Z_{_{n\left(  l,j-1\right)  }}$ \ {}for {}$s=j$%
\end{tabular}
\ \right. \\
\tau_{l,s}^{\left(  j\right)  }\left(  \theta\right)   &  \equiv E_{\theta
}\left(  \widehat{U}_{l,s}^{\left(  j\right)  }\right)
\end{align*}

Let $V_{i}=\left(  R_{0,i},L_{0,i},R_{0,i}L_{1,i}\right)  $ which is always
observed. Then
\begin{gather*}
\tau_{l,1}^{\left(  j\right)  }\left(  \theta\right)  =E_{\theta}\left(
\frac{R_{0}}{\widehat{\pi}_{0}}\left(  Y-\widehat{B}_{0}\right)  Z_{n\left(
l,1\right)  }\right) \\
=E_{\theta}\left(  \frac{R}{\pi}\left[
\begin{array}
[c]{c}%
\frac{R_{0}}{\widehat{\pi}_{0}}\left(  Y-\widehat{B}_{0}\right)  Z_{n\left(
l,1\right)  }-\\
E\left(  \frac{R_{0}}{\widehat{\pi}_{0}}\left(  Y-\widehat{B}_{0}\right)
Z_{n\left(  l,1\right)  }|V\right)
\end{array}
\right]  +E_{\theta}\left(  \frac{R_{0}}{\widehat{\pi}_{0}}\left(
Y-\widehat{B}_{0}\right)  Z_{n\left(  l,1\right)  }|V\right)  \right) \\
=E_{\theta}\left(  \frac{R_{1}}{\pi_{1}\left(  \theta\right)  }\frac{R_{0}%
}{\widehat{\pi}_{0}}\left(  YZ_{n\left(  l,1\right)  }-B_{1}\left(
\theta\right)  Z_{n\left(  l,1\right)  }\right)  +\frac{R_{0}}{\widehat{\pi
}_{0}}\left(  B_{1}\left(  \theta\right)  Z_{n\left(  l,1\right)
}-\widehat{B}_{0}Z_{n\left(  l,1\right)  }\right)  \right)
\end{gather*}

Thus, $\tau_{l,1}^{\left(  j\right)  }\left(  \theta\right)  $ falls into the
doubly-robust $H\left(  b,p\right)  $ class of functionals with $B^{\tau
_{l,1}^{\left(  j\right)  }}=B_{1}\left(  \theta\right)  Z_{n\left(
l,1\right)  },$ $P^{\tau_{l,1}^{\left(  j\right)  }}=\pi_{1}^{-1}\left(
\theta\right)  ,$ and corresponding $H_{1}^{\tau_{l,1}^{\left(  j\right)  }%
}=-R_{1}\frac{R_{0}}{\widehat{\pi}_{0}},$ $H_{2}^{\tau_{l,1}^{\left(
j\right)  }}=\frac{R_{0}}{\widehat{\pi}_{0}},$ $H_{3}^{\tau_{l,1}^{\left(
j\right)  }}=R_{1}\frac{R_{0}}{\widehat{\pi}_{0}}YZ_{n\left(  l,1\right)  },$
$H_{4}^{\tau_{l,1}^{\left(  j\right)  }}=-\frac{R_{0}}{\widehat{\pi}_{0}%
}\widehat{B}_{0}Z_{n\left(  l,1\right)  }.$ For any $s>1,$ $\widehat{U}%
_{l,s}^{\left(  j\right)  }=\widehat{u}_{l,s}^{\left(  j\right)  }\left(
O\right)  $ is a known function of the observed data, thus $if_{1,\tau
_{l,s}^{\left(  j\right)  }\left(  \theta\right)  }=\widehat{u}_{l,s}^{\left(
j\right)  }\left(  O\right)  -\tau_{l,s}^{\left(  j\right)  }\left(
\theta\right)  $ and $if_{m,m,\tau_{l,s}^{\left(  j\right)  }\left(
\theta\right)  }=0$ for any $m>1.$ Moreover, choosing $\dot{B}^{\tau
_{l,1}^{\left(  j\right)  }}=1,$ $\dot{P}^{\tau_{l,1}^{\left(  j\right)  }%
}=\widehat{\pi}_{1}^{-1},$ so that:
\begin{align*}
\widetilde{B_{1}Z_{n\left(  l,1\right)  }} &  =\left\{
\begin{array}
[c]{c}%
\widehat{B}_{1}Z_{n\left(  l,1\right)  }+\overline{W}_{k_{1}}^{T}E_{\theta
}\left(  \frac{R_{1}}{\widehat{\pi}_{1}}\frac{R_{0}}{\widehat{\pi}_{0}%
}\overline{W}_{k_{1}}\overline{W}_{k_{1}}^{T}\right)  ^{-1}\\
\times E_{\theta}\left(  \frac{R_{1}}{\widehat{\pi}_{1}}\frac{R_{0}%
}{\widehat{\pi}_{0}}\left(  Y-\widehat{B}_{1}\right)  Z_{n\left(  l,1\right)
}\overline{W}_{k_{1}}\right)
\end{array}
\right\} \\
\widetilde{\pi}_{1}^{-1} &  =\left\{  \widehat{\pi}_{1}^{-1}\left(
\begin{array}
[c]{c}%
1-\overline{W}_{k_{1}}^{T}E_{\theta}\left[  \frac{R_{1}}{\widehat{\pi}_{1}%
}\frac{R_{0}}{\widehat{\pi}_{0}}\overline{W}_{k_{1}}\overline{W}_{k_{1}}%
^{T}\right]  ^{-1}\\
\times E_{\theta}\left[  \frac{R_{0}}{\widehat{\pi}_{0}}\left(  \frac{R_{1}%
}{\widehat{\pi}_{1}}-1\right)  \overline{W}_{k_{1}}\right]
\end{array}
\right)  \right\}
\end{align*}
where $\overline{W}_{k_{1}}=\overline{w}_{k_{1}}\left(  L_{0},L_{1}\right)
=\widehat{E}\left(  \overline{\varphi}_{k_{1}}\left(  L_{0},L_{1}\right)
\overline{\varphi}_{k_{1}}^{T}\left(  L_{0},L_{1}\right)  \right)
^{-1/2}\overline{\varphi}_{k_{1}}\left(  L_{0},L_{1}\right)  ,$ and
$\overline{\varphi}_{k_{1}}\left(  L_{0},L_{1}\right)  $ is a $k_{1}%
-$dimensional vector of tensor product basis for functions of $\left(
L_{0},L_{1}\right)  .$ From theorem \ref{map}, for $1\leq l\leq k_{0}^{j-1}:$%

\begin{align*}
&  \widetilde{\tau}_{l,1}^{\left(  j\right)  }\left(  \theta\right) \\
&  \equiv E_{\theta}\left(
\begin{array}
[c]{c}%
R_{1}\widetilde{\pi}_{1}^{-1}\left(  \theta\right)  \frac{R_{0}}{\widehat{\pi
}_{0}}\left(  YZ_{n\left(  l,1\right)  }-\widetilde{B_{1}\left(
\theta\right)  Z_{n\left(  l,1\right)  }}\right) \\
+\frac{R_{0}}{\widehat{\pi}_{0}}\left(  \widetilde{B_{1}\left(  \theta\right)
Z_{n\left(  l,1\right)  }}-\widehat{B}_{0}Z_{n\left(  l,1\right)  }\right)
\end{array}
\right)
\end{align*}
have higher order influence functions $\left\{  \mathbb{IF}_{m,\widetilde{\tau
}_{l,1}^{\left(  j\right)  }}\left(  \theta\right)  ,\text{ }m\geq1\right\}
$. Let $\widetilde{\tau}_{l,t}^{\left(  j\right)  }\left(  \theta\right)
\equiv\tau_{l,t}^{\left(  j\right)  }\left(  \theta\right)  $ for any $t>1,$
and $\left\{  \mathbb{IF}_{m,\widetilde{\psi}_{j,j}}\left(  \theta\right)
,\text{ }m\geq1\right\}  $ be the higher order influence functions of
$\widetilde{\psi}_{j,j}\left(  \theta\right)  =\sum_{l=1}^{k_{0}^{j-1}}%
%TCIMACRO{\tprod \limits_{t=1}^{j}}%
%BeginExpansion
{\textstyle\prod\limits_{t=1}^{j}}
%EndExpansion
\widetilde{\tau}_{l,t}^{\left(  j\right)  }\left(  \theta\right)  $.

For $j=1,$ define%

\begin{align*}
\tau_{1,1}^{\left(  1\right)  }  &  =E_{\theta}\left(  \widehat{IF}%
_{1,\widetilde{\psi}^{\dag}}\left(  O_{i}^{\dag}\right)  \right) \\
&  =E_{\theta}\left(  \frac{R}{\pi}\left(  \widehat{IF}_{1,\widetilde{\psi
}^{\dag}}\left(  O_{i}^{\dag}\right)  -E_{\theta}\left(  \widehat{IF}%
_{1,\widetilde{\psi}^{\dag}}\left(  O_{i}^{\dag}\right)  |V_{i}\right)
\right)  +E_{\theta}\left(  \widehat{IF}_{1,\widetilde{\psi}^{\dag}}\left(
O_{i}^{\dag}\right)  |V_{i}\right)  \right) \\
&  =E_{\theta}\left(  \frac{R_{1}}{\pi_{1}\left(  \theta\right)  }\frac{R_{0}%
}{\widehat{\pi}_{0}}\left(  Y-B_{1}\left(  \theta\right)  \right)
+\frac{R_{0}}{\widehat{\pi}_{0}}\left(  B_{1}\left(  \theta\right)
-\widehat{B}_{0}\right)  +\widehat{B}_{0}-\psi\left(  \widehat{\theta}\right)
\right)
\end{align*}

$\tau_{1,1}^{\left(  1\right)  }$ also belongs to the $H\left(  \cdot
,\cdot\right)  $ class of models with $B^{\tau_{1,1}^{\left(  1\right)  }%
}=B_{1},$ $P^{\tau_{1,1}^{\left(  1\right)  }}=\pi_{1}^{-1},$ $H_{1}%
=-R_{1}\frac{R_{0}}{\widehat{\pi}_{0}},$ $H_{2}=\frac{R_{0}}{\widehat{\pi}%
_{0}},$ $H_{3}=R_{1}\frac{R_{0}}{\widehat{\pi}_{0}}Y,$ $H_{4}=\left(
1-\frac{R_{0}}{\widehat{\pi}_{0}}\right)  \widehat{B}_{0}-\psi\left(
\widehat{\theta}\right)  .$ We may choose $\dot{B}^{\tau_{1,1}^{\left(
1\right)  }}=B_{1},$ $\dot{P}^{\tau_{1,1}^{\left(  1\right)  }}=\widehat{\pi
}_{1}^{-1}\ {}$so that
\begin{align*}
\widetilde{B}_{1}  &  =\left\{
\begin{array}
[c]{c}%
\widehat{B}_{1}+\overline{W}_{k_{1}}^{T}E_{\theta}\left(  \frac{R_{1}%
}{\widehat{\pi}_{1}}\frac{R_{0}}{\widehat{\pi}_{0}}\overline{W}_{k_{1}%
}\overline{W}_{k_{1}}^{T}\right)  ^{-1}\\
\times E_{\theta}\left(  \frac{R_{1}}{\widehat{\pi}_{1}}\frac{R_{0}%
}{\widehat{\pi}_{0}}\left(  Y-\widehat{B}_{1}\right)  \overline{W}_{k_{1}%
}\right)
\end{array}
\right\} \\
\widetilde{\pi}_{1}^{-1}  &  =\left\{  \widehat{\pi}_{1}^{-1}\left(
\begin{array}
[c]{c}%
1-\overline{W}_{k_{1}}^{T}E_{\theta}\left[  \frac{R_{1}}{\widehat{\pi}_{1}%
}\frac{R_{0}}{\widehat{\pi}_{0}}\overline{W}_{k_{1}}\overline{W}_{k_{1}}%
^{T}\right]  ^{-1}\\
\times E_{\theta}\left[  \frac{R_{0}}{\widehat{\pi}_{0}}\left(  \frac{R_{1}%
}{\widehat{\pi}_{1}}-1\right)  \overline{W}_{k_{1}}\right]
\end{array}
\right)  \right\}
\end{align*}
and%
\[
\widetilde{\tau}_{1,1}^{\left(  1\right)  }\left(  \theta\right)  \equiv
E_{\theta}\left(  R_{1}\widetilde{\pi}_{1}^{-1}\frac{R_{0}}{\widehat{\pi}_{0}%
}\left(  Y-\widetilde{B}_{1}\right)  +\frac{R_{0}}{\widehat{\pi}_{0}}\left(
\widetilde{B}_{1}-\widehat{B}_{0}\right)  +\widehat{B}_{0}-\psi\left(
\widehat{\theta}\right)  \right)
\]
has higher order IFs $\left\{  \mathbb{IF}_{m,\widetilde{\tau}_{1,1}^{\left(
1\right)  }\left(  \theta\right)  },\text{ }m\geq1\right\}  .$

\begin{theorem}
Define $\widetilde{\widetilde{\psi}}_{k_{0},k_{1},m}\left(  \theta\right)
=\psi\left(  \widehat{\theta}\right)  +\widetilde{\tau}_{1,1}^{\left(
1\right)  }\left(  \theta\right)  +%
%TCIMACRO{\tsum \limits_{j=2}^{m}}%
%BeginExpansion
{\textstyle\sum\limits_{j=2}^{m}}
%EndExpansion
\widetilde{\psi}_{j,j}\left(  \theta\right)  ,$ then {}$\forall$ $m^{\ast}\leq
m$%
\begin{align*}
{}\mathbb{IF}_{m^{\ast},\widetilde{\widetilde{\psi}}_{k_{0},k_{1},m}\left(
\theta\right)  }\left(  \widehat{\theta}\right)   &  ={}\mathbb{IF}_{m^{\ast
},\widetilde{\widetilde{\psi}}_{k_{0},k_{1},m^{\ast}}\left(  \theta\right)
}\left(  \widehat{\theta}\right) \\
&  ={}\mathbb{IF}_{1,\widetilde{\widetilde{\psi}}_{k_{0},k_{1},m}\left(
\theta\right)  }\left(  \widehat{\theta}\right)  +{}%
%TCIMACRO{\tsum \limits_{j=2}^{m^{\ast}}}%
%BeginExpansion
{\textstyle\sum\limits_{j=2}^{m^{\ast}}}
%EndExpansion
\mathbb{IF}_{j,j,\widetilde{\widetilde{\psi}}_{k_{0},k_{1},m}\left(
\theta\right)  }\left(  \widehat{\theta}\right)
\end{align*}

with
\[
\widehat{IF}_{1,\widetilde{\widetilde{\psi}}_{k_{0},k_{1},m}\left(
\theta\right)  }\left(  O_{i}\right)  =\left\{
\begin{array}
[c]{c}%
\frac{R_{1,i}}{\widehat{\pi}_{1,i}}\frac{R_{0,i}}{\widehat{\pi}_{0,i}}\left(
Y_{i}-\widehat{B}_{1,i}\right)  +\\
\frac{R_{0,i}}{\widehat{\pi}_{0,i}}\left(  \widehat{B}_{1,i}-\widehat{B}%
_{0,i}\right)  +\widehat{B}_{0,i}-\psi\left(  \widehat{\theta}\right)
\end{array}
\right\}
\]

and $\forall$ \ $j\geq2$%
\begin{align*}
&  \widehat{IF}_{j,j,\widetilde{\widetilde{\psi}}_{k_{0},k_{1},m}\left(
\theta\right)  }\left(  O_{i_{1}},...O_{i_{j}}\right) \\
&  =\left(  -1\right)  ^{j-1}\left[
\begin{array}
[c]{c}%
\left\{
\begin{array}
[c]{c}%
\left[  \frac{R_{0}}{\widehat{\pi}_{0}}\frac{R_{1}}{\widehat{\pi}_{1}}\left(
Y-\widehat{B}_{1}\right)  \right]  _{i_{1}}\overline{W}_{k_{1},i_{1}}%
^{T}\times\\%
%TCIMACRO{\dprod \limits_{s=2}^{j-1}}%
%BeginExpansion
{\displaystyle\prod\limits_{s=2}^{j-1}}
%EndExpansion
\left(  \frac{R_{0}}{\widehat{\pi}_{0}}\frac{R_{1}}{\widehat{\pi}_{1}%
}\overline{W}_{k_{1}}\overline{W}_{k_{1}}^{T}-I\right)  _{i_{s}}\overline
{W}_{k_{1},i_{j}}\left[  \frac{R_{0}}{\widehat{\pi}_{0}}\left(  \frac{R_{1}%
}{\widehat{\pi}_{1}}-1\right)  \right]  _{i_{j}}%
\end{array}
\right\} \\%
%TCIMACRO{\tsum \limits_{t=2}^{j-1}}%
%BeginExpansion
{\textstyle\sum\limits_{t=2}^{j-1}}
%EndExpansion
\left\{
\begin{array}
[c]{c}%
\left(  \frac{R_{0}}{\widehat{\pi}_{0}}\left(  \frac{R_{1}}{\widehat{\pi}_{1}%
}-1\right)  \right)  _{i_{j}}\overline{W}_{k_{1},i_{j}}^{T}%
%TCIMACRO{\tprod \limits_{l=t+1}^{j-1}}%
%BeginExpansion
{\textstyle\prod\limits_{l=t+1}^{j-1}}
%EndExpansion
\left(  \frac{R_{1}}{\widehat{\pi}_{1}}\frac{R_{0}}{\widehat{\pi}_{0}%
}\overline{W}_{k_{1}}\overline{W}_{k_{1}}^{T}-I\right)  _{i_{l}}\\
\times\left[  \overline{W}_{k_{1}}\frac{R_{1}}{\widehat{\pi}_{1}}\frac{R_{0}%
}{\widehat{\pi}_{0}}\left(  Y-\widehat{B}_{1}\right)  \overline{Z}_{k_{0}}%
^{T}\right]  _{i_{1}}\times\\%
%TCIMACRO{\dprod \limits_{s=2}^{t-1}}%
%BeginExpansion
{\displaystyle\prod\limits_{s=2}^{t-1}}
%EndExpansion
\left(  \frac{R_{0}}{\widehat{\pi}_{0}}\overline{Z}_{k_{0}}\overline{Z}%
_{k_{0}}^{T}-I\right)  _{i_{s}}\left(  \frac{R_{0}}{\widehat{\pi}_{0}%
}-1\right)  _{i_{t}}\overline{Z}_{k_{0,}i_{t}}%
\end{array}
\right\} \\
+\left\{
\begin{array}
[c]{c}%
\left[  \frac{R_{1}}{\widehat{\pi}_{1}}\frac{R_{0}}{\widehat{\pi}_{0}}\left(
Y-\widehat{B}_{1}\right)  +\frac{R_{0}}{\widehat{\pi}_{0}}\left(
\widehat{B}_{1}-\widehat{B}_{0}\right)  \right]  _{i_{1}}\overline{Z}%
_{k_{0},i_{1}}^{T}\times\\%
%TCIMACRO{\dprod \limits_{s=2}^{j-1}}%
%BeginExpansion
{\displaystyle\prod\limits_{s=2}^{j-1}}
%EndExpansion
\left(  \frac{R_{0}}{\widehat{\pi}_{0}}\overline{Z}_{k_{0}}\overline{Z}%
_{k_{0}}^{T}-I\right)  _{i_{s}}\left(  \frac{R_{0}}{\widehat{\pi}_{0}%
}-1\right)  _{i_{j}}\overline{Z}_{k_{0,}i_{j}}%
\end{array}
\right\}
\end{array}
\right]
\end{align*}

\end{theorem}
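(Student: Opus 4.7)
The plan is to view $\widetilde{\widetilde{\psi}}_{k_{0},k_{1},m}(\theta)$ as the sum of products $\psi(\widehat{\theta}) + \widetilde{\tau}_{1,1}^{(1)}(\theta) + \sum_{v=2}^{m} \sum_{l=1}^{k_{0}^{v-1}} \prod_{t=1}^{v} \widetilde{\tau}_{l,t}^{(v)}(\theta)$ and apply Theorem \ref{map} (or rather its Corollary) to read off a U-statistic decomposition of $\mathbb{IF}_{m^{\ast},\widetilde{\widetilde{\psi}}_{k_{0},k_{1},m}}$ term by term. The key numerical inputs are: (i) for each $v \geq 2$ and each $s \in \{2,\dots,v\}$, the factor $\widetilde{\tau}_{l,s}^{(v)}(\theta)$ is by construction the expectation of a \emph{known} function of the data, so its only nonzero estimated influence function is first order and has the explicit kernel $\widehat{U}_{l,s}^{(v)}$; (ii) the doubly-robust factors $\widetilde{\tau}_{1,1}^{(1)}$ and $\widetilde{\tau}_{l,1}^{(v)}$ fall in the $H(b,p)$ class, so Theorems \ref{DRHOIF} and \ref{FOIF} yield closed-form expressions for their degenerate $j$-th order IFs in terms of the orthonormalized $\overline{Z}_{k_{0}}$ and $\overline{W}_{k_{1}}$; and (iii) all these factors vanish when evaluated at $\widehat{\theta}$, because $\widehat{E}[\widehat{U}_{l,s}^{(v)}] = 0$ for every $(l,s,v)$ under the orthonormality $\widehat{E}[\overline{Z}_{k_{0}}\overline{Z}_{k_{0}}^{T}] = I$, $\widehat{E}[\frac{R_{1}}{\widehat{\pi}_{1}}\frac{R_{0}}{\widehat{\pi}_{0}}\overline{W}_{k_{1}}\overline{W}_{k_{1}}^{T}] = I$, and the defining orthogonality of $\widehat{B}_{0}, \widehat{B}_{1}$.

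First, to establish $\mathbb{IF}_{m^{\ast},\widetilde{\widetilde{\psi}}_{k_{0},k_{1},m}}(\widehat{\theta}) = \mathbb{IF}_{m^{\ast},\widetilde{\widetilde{\psi}}_{k_{0},k_{1},m^{\ast}}}(\widehat{\theta})$: by the Corollary to Theorem \ref{map}, the $j$-th order degenerate part of $\mathbb{IF}_{m^{\ast}}$ is indexed by partitions $\{t_{1},\dots,t_{v}\}\in\Upsilon_{v;j}^{+}$ with all $t_{s} \geq 1$ and $\sum t_{s}=j$, forcing $v \leq j \leq m^{\ast}$. Hence no product term involving a factor from the block $\widetilde{\psi}_{v,v}$ with $v > m^{\ast}$ can survive, so the additional summands in $\widetilde{\widetilde{\psi}}_{k_{0},k_{1},m} - \widetilde{\widetilde{\psi}}_{k_{0},k_{1},m^{\ast}}$ contribute nothing.

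Next, I would expand $\mathbb{IF}_{1}$ and each $\mathbb{IF}_{jj}$ ($j\geq 2$) in turn. For $\mathbb{IF}_{1}$, the product rule (Lemma \ref{HOIPROD}) writes the first-order IF of $\widetilde{\psi}_{v,v}$ as a sum of products with exactly one factor differentiated; at $\widehat{\theta}$ every other factor is zero by point (iii), so only $v=1$ survives and one recovers the AIPW expression already derived for $\widetilde{\tau}_{1,1}^{(1)}$ — i.e. the stated $\widehat{IF}_{1,\widetilde{\widetilde{\psi}}_{k_{0},k_{1},m}}(O_{i})$. For $\widehat{IF}_{j,j}$ with $j \geq 2$, expand the product-rule sum. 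Again by (iii), the only surviving partitions are those in which at most one factor $\widetilde{\tau}_{l,s}^{(v)}$ carries $t_{s} = 0$ — wait, actually all $t_{s} \geq 1$ in $\Upsilon_{v;j}^{+}$. Then observation (i) forces $t_{s} = 1$ for every $s \geq 2$, so the only nontrivial choice is $t_{1} \in \{1,\dots, j-v+1\}$ with $t_{2} = \dots = t_{v} = 1$. Substituting the explicit IFs of $\widetilde{\tau}_{l,1}^{(v)}$ from Theorem \ref{DRHOIF} and of $\widetilde{\tau}_{l,s}^{(v)} = E_{\theta}[\widehat{U}_{l,s}^{(v)}]$ for $s \geq 2$ (namely $\widehat{U}_{l,s}^{(v)} - \widehat{\tau}_{l,s}^{(v)} = \widehat{U}_{l,s}^{(v)}$ at $\widehat{\theta}$), and then collapsing the sums over the permutation index $n(l,\cdot)$ via the $\overline{Z}_{k_{0}}\overline{Z}_{k_{0}}^{T}$ and $\overline{W}_{k_{1}}\overline{W}_{k_{1}}^{T}$ matrix products, should yield exactly the three blocks displayed in the statement: the $v=1$ block (containing all $\overline{W}_{k_{1}}$ factors from the second-step inner functional $\widetilde{\tau}_{1,1}^{(1)}$); the $v=t+1$ middle blocks for $2\leq t \leq j-1$ (mixing $t-1$ copies of $(\overline{Z}_{k_{0}}\overline{Z}_{k_{0}}^{T} - I)$ on the outer/second-step side with $v-t-1$ copies of $(\overline{W}_{k_{1}}\overline{W}_{k_{1}}^{T} - I)$ on the inner/first-step side); and the $v=j$ block (all outer factors). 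The sign $(-1)^{j-1}$ enters from alternating signs in the closed-form expressions from Theorem \ref{DRHOIF}.

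The main obstacle will be the combinatorial/bookkeeping step in the second half: correctly tracking, across values of $v$ and positions $t$ of the single ``higher-order'' factor $\widetilde{\tau}_{l,1}^{(v)}$, which indices get contracted against $(\overline{Z}_{k_{0}}\overline{Z}_{k_{0}}^{T} - I)$ versus $(\overline{W}_{k_{1}}\overline{W}_{k_{1}}^{T} - I)$ factors, and verifying that the signs and normalizations from the nested doubly-robust expansions match exactly. The cleanest route is probably to reindex, for each $j$, by the ``cut point'' $t$ at which the sequence of $\overline{Z}_{k_{0}}$ factors (coming from the second-stage mapping $O^{\dagger} \to O$) ends and the sequence of $\overline{W}_{k_{1}}$ factors (coming from the first-stage inner DR functional) begins, and then recognize each contribution as an instance of the explicit higher-order DR kernel in Theorem \ref{DRHOIF} applied either in the $(\widehat{B}_{0}, \widehat{\pi}_{0}^{-1})$ or $(\widehat{B}_{1}, \widehat{\pi}_{1}^{-1})$ coordinates, with the cut point determining the split. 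Once that identification is made, the three displayed blocks fall out by inspection.
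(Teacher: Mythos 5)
Your proposal is correct and takes essentially the same route as the paper: the paper's proof is a one-line appeal to Theorem \ref{map} and its corollary together with the fact that every factor $\widetilde{\tau}_{l,t}^{\left(j\right)}$ vanishes at $\widehat{\theta}$ and that for $t>1$ these factors are expectations of known functions whose influence functions beyond first order vanish, which is exactly your points (i)--(iii), with the explicit $\overline{Z}_{k_{0}}$/$\overline{W}_{k_{1}}$ blocks supplied by the doubly-robust machinery (Theorems \ref{FOIF} and \ref{DRHOIF}) as you describe. The only cosmetic slip is that for a fixed block $v$ the surviving partition forces $t_{1}=j-v+1$ exactly rather than letting $t_{1}$ range over $\left\{1,\dots,j-v+1\right\}$, but your cut-point bookkeeping would deliver this automatically.
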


\begin{proof}
This follows directly from theorem \ref{map} and the fact that
$\widetilde{\tau}_{l,t}^{\left(  j\right)  }\left(  \widehat{\theta}\right)
=$ $\tau_{l,t}^{\left(  j\right)  }\left(  \widehat{\theta}\right)  =0$ for
$t>1.$
\end{proof}

Suppose the class of estimators $\left\{  \widehat{\psi}_{m}\left(
\mathbf{O}\right)  ,\text{ }m\geq1\right\}  $ are defined as
\begin{align*}
\widehat{\psi}_{m}\left(  \mathbf{O}\right)   &  \equiv\widehat{\psi}%
_{1}\left(  \mathbf{O}\right)  +\sum_{j=2}^{m}\widehat{\psi}_{j,j}\left(
\mathbf{O}\right) \\
\widehat{\psi}_{1}\left(  \mathbf{O}\right)   &  \equiv\psi\left(
\widehat{\theta}\right)  +\mathbb{IF}_{1,\widetilde{\widetilde{\psi}}%
_{k_{0},k_{1},m}\left(  \theta\right)  }\left(  \widehat{\theta}\right) \\
\widehat{\psi}_{j,j}\left(  \mathbf{O}\right)   &  \equiv\mathbb{IF}%
_{j,j,\widetilde{\widetilde{\psi}}_{k_{0},k_{1},m}\left(  \theta\right)
}\left(  \widehat{\theta}\right)
\end{align*}

Then%
\begin{align}
&  E_{\theta}\left(  \widehat{\psi}_{m}\right)  -\psi\left(  \theta\right)
\nonumber\\
&  =\left\{
\begin{array}
[c]{c}%
E_{\theta}\left(  \psi\left(  \widehat{\theta}\right)  +\mathbb{IF}%
_{m,\widetilde{\widetilde{\psi}}_{k_{0},k_{1},m}\left(  \theta\right)
}\left(  \widehat{\theta}\right)  -\widetilde{\widetilde{\psi}}_{k_{0}%
,k_{1},m}\left(  \theta\right)  \right) \\
+\left(  \widetilde{\widetilde{\psi}}_{k_{0},k_{1},m}\left(  \theta\right)
-E_{\theta}\left[  \mathbb{IF}_{m,\widetilde{\psi}^{\dag}\left(
\theta\right)  }\left(  \widehat{\theta}\right)  +\psi\left(  \widehat{\theta
}\right)  \right]  \right) \\
+\left[  E_{\theta}\left(  \mathbb{IF}_{m,\widetilde{\psi}^{\dag}}\left(
\widehat{\theta}\right)  +\psi\left(  \widehat{\theta}\right)  \right)
-\widetilde{\psi}^{\dag}\left(  \theta\right)  \right]  +\left(
\widetilde{\psi}^{\dag}\left(  \theta\right)  -\psi\left(  \theta\right)
\right)
\end{array}
\right\} \label{bsp}%
\end{align}

The following theorem examines the bias and variance of $\widehat{\psi}_{m}.
$Define $\delta P_{t}=\frac{\pi_{t}}{\widehat{\pi}_{t}}-1,$ $\delta
B_{t}=B_{t}-\widehat{B}_{t},$ and $\delta g_{t}=\frac{g_{t}}{\widehat{g}_{t}%
}-1$ for $t=0,1.$ Let $q_{0}=\frac{\pi_{0}}{\widehat{\pi}_{0}}$ and
$q_{01}=\frac{\pi_{0}\pi_{1}}{\widehat{\pi}_{0}\widehat{\pi}_{1}}.$

\begin{theorem}
\label{BTOMM}Suppose conditions $\left(  a1\right)  -\left(  c2\right)  $ hold
then%
\[
E_{\theta}\left(  \widehat{\psi}_{1}|\widehat{\theta}\right)  -\psi\left(
\theta\right)  =E_{\theta}\left[  \frac{R_{0}}{\widehat{\pi}_{0}}\delta
P_{1}\delta B_{1}+\delta P_{0}\delta B_{0}\left\vert \widehat{\theta
}\right\vert \right]
\]
and $\forall$ \ $m>1$%
\[
E_{\theta}\left(  \widehat{\psi}_{m}|\widehat{\theta}\right)  -\psi\left(
\theta\right)  =BI_{m,1}+BI_{m,2}%
\]
where
\begin{align}
&  BI_{m,1}\left(  -1\right)  ^{m-1}\nonumber\\
&  =\left\{
\begin{array}
[c]{c}%
\left\{
\begin{array}
[c]{c}%
E_{\theta}\left\{  \left[  q_{0}\delta B_{0}\right]  \overline{Z}_{k_{0}}%
^{T}\right\}  E_{\theta}\left[  q_{0}\overline{Z}_{k_{0}}\overline{Z}_{k_{0}%
}^{T}\right]  ^{-1}\times\\
\left[  E_{\theta}\left(  q_{0}\overline{Z}_{k_{0}}\overline{Z}_{k_{0}}%
^{T}\right)  -I\right]  ^{m-1}E_{\theta}\left[  \overline{Z}_{k_{0}}\delta
P_{0}\right]
\end{array}
\right\}  _{-\left(  EB_{1}^{\left(  1\right)  }\right)  }\\
+\left\{
\begin{array}
[c]{c}%
E_{\theta}\left\{  \left[  q_{01}\delta B_{1}\right]  \overline{W}_{k_{1}}%
^{T}\right\}  E_{\theta}\left[  q_{01}\overline{W}_{k_{1}}\overline{W}_{k_{1}%
}^{T}\right]  ^{-1}\times\\
\left[  E_{\theta}\left(  q_{01}\overline{W}_{k_{1}}\overline{W}_{k_{1}}%
^{T}\right)  -I\right]  ^{m-1}E_{\theta}\left[  \overline{W}_{k_{1}}%
q_{0}\delta P_{1}\right]
\end{array}
\right\}  _{-\left(  EB_{1}^{\left(  2\right)  }\right)  }\\
+%
%TCIMACRO{\dsum \limits_{j=2}^{m-1}}%
%BeginExpansion
{\displaystyle\sum\limits_{j=2}^{m-1}}
%EndExpansion
\left\{
\begin{array}
[c]{c}%
E_{\theta}\left[  \overline{Z}_{k_{0}}\delta P_{0}^{T}\right]  \left[
E_{\theta}\left(  q_{0}\overline{Z}_{k_{0}}\overline{Z}_{k_{0}}^{T}-I\right)
\right]  ^{j-2}\times\\
E_{\theta}\left[  q_{01}\delta B_{1}\overline{Z}_{k_{0}}\overline{W}_{k_{1}%
}^{T}\right]  E_{\theta}\left[  q_{01}\overline{W}_{k_{1}}\overline{W}_{k_{1}%
}^{T}\right]  ^{-1}\times\\
E_{\theta}\left(  q_{01}\overline{W}_{k_{1}}\overline{W}_{k_{1}}^{T}-I\right)
^{m-j}E_{\theta}\left[  \overline{W}_{k_{1}}q_{0}\delta P_{1}\right]
\end{array}
\right\}  _{-\left(  EB_{jj}^{\left(  2\right)  }\right)  }\\
+E_{\theta}\left\{  \left[  q_{0}\delta P_{1}\delta B_{1}\right]  \overline
{Z}_{k_{0}}^{T}\right\}  \left[  E_{\theta}\left(  q_{0}\overline{Z}_{k_{0}%
}\overline{Z}_{k_{0}}^{T}\right)  -I\right]  ^{m-2}E_{\theta}\left[
\overline{Z}_{k_{0}}\delta P_{0}\right]  _{-\left(  EB_{mm}^{\left(  2\right)
}\right)  }%
\end{array}
\right\} \label{ebm}%
\end{align}%
\begin{align}
&  \left\vert BI_{m,1}\right\vert \nonumber\\
&  \leq\left[
\begin{array}
[c]{c}%
\left\{
\begin{array}
[c]{c}%
\left\vert \left\vert \widehat{f}_{0}\right\vert \right\vert _{\infty
}\left\vert \left\vert q_{0}^{-1/2}\right\vert \right\vert _{\infty}\left\vert
\left\vert q_{0}\right\vert \right\vert _{\infty}\left\vert \left\vert
\frac{\widehat{f}_{0}}{f_{0}}\right\vert \right\vert _{\infty}\times\\
\left\vert \left\vert \delta g_{0}\right\vert \right\vert _{\infty}%
^{m-1}\left\{  \int\left(  b_{0}-\widehat{b}_{0}\right)  ^{2}dL_{0}\right\}
^{1/2}\times\\
\left\vert \left\vert \frac{f_{0}}{\pi_{0}\widehat{\pi}_{0}}\right\vert
\right\vert _{\infty}\left\{  \int\left(  \pi_{0}-\widehat{\pi}_{0}\right)
^{2}dL_{0}\right\}  ^{1/2}%
\end{array}
\right\} \\
+\left\{
\begin{array}
[c]{c}%
\left\vert \left\vert \widehat{f}_{1}\right\vert \right\vert _{\infty
}\left\vert \left\vert q_{01}^{-1/2}\right\vert \right\vert _{\infty
}\left\vert \left\vert q_{01}\right\vert \right\vert _{\infty}\left\vert
\left\vert \frac{\widehat{f}_{1}}{f_{1}}\right\vert \right\vert _{\infty
}\times\\
\left\vert \left\vert \delta g_{1}\right\vert \right\vert _{\infty}%
^{m-1}\left\{  \int\left(  b_{1}-\widehat{b}_{1}\right)  ^{2}dL_{0}%
dL_{1}\right\}  ^{1/2}\\
\times\left\vert \left\vert \frac{\pi_{0}f_{1}}{\widehat{\pi}_{0}\pi
_{1}\widehat{\pi}_{1}}\right\vert \right\vert _{\infty}\left\{  \int\left(
\pi_{1}-\widehat{\pi}_{1}\right)  ^{2}dL_{0}dL_{1}\right\}  ^{1/2}%
\end{array}
\right\} \\
+%
%TCIMACRO{\dsum \limits_{j=2}^{m-1}}%
%BeginExpansion
{\displaystyle\sum\limits_{j=2}^{m-1}}
%EndExpansion
\left\{
\begin{array}
[c]{c}%
\left\vert \left\vert \frac{f_{1}}{\widehat{f}_{1}}\right\vert \right\vert
_{\infty}\left\vert \left\vert \frac{f_{0}^{2}}{\widehat{f}_{0}\widehat{\pi
}_{0}^{2}}\right\vert \right\vert _{\infty}\left\vert \left\vert
\frac{\widehat{f}_{1}}{f_{1}}\right\vert \right\vert _{\infty}\left\vert
\left\vert \frac{\pi_{0}f_{1}}{\widehat{\pi}_{0}\pi_{1}\widehat{\pi}_{1}%
}\right\vert \right\vert _{\infty}\\
\left\vert \left\vert q_{01}\right\vert \right\vert _{\infty}\left\vert
\left\vert \delta B_{1}\right\vert \right\vert _{\infty}\left\vert \left\vert
\delta g_{0}\right\vert \right\vert _{\infty}^{j-2}\left\{  \int\left(
\pi_{0}-\widehat{\pi}_{0}\right)  ^{2}dL_{0}\right\}  ^{1/2}\times\\
\left\vert \left\vert q_{01}^{-1/2}\right\vert \right\vert _{\infty}\left\vert
\left\vert \delta g_{1}\right\vert \right\vert _{\infty}^{m-j}\left\{
\int\left(  \pi_{1}-\widehat{\pi}_{1}\right)  ^{2}dL_{0}dL_{1}\right\}  ^{1/2}%
\end{array}
\right\} \\
+\left\{
\begin{array}
[c]{c}%
\left\vert \left\vert \frac{f_{1}}{\widehat{f}_{1}\widehat{\pi}_{1}^{2}}%
q_{0}\right\vert \right\vert _{\infty}\left\vert \left\vert \delta
B_{1}\right\vert \right\vert _{\infty}\left\{  \int\left(  \pi_{1}%
-\widehat{\pi}_{1}\right)  ^{2}dL_{0}dL_{1}\right\}  ^{1/2}\\
\times\left\vert \left\vert \frac{f_{0}^{2}}{\widehat{f}_{0}\widehat{\pi}%
_{0}^{2}}\right\vert \right\vert \left\vert \left\vert \delta g_{0}\right\vert
\right\vert _{\infty}^{m-2}\left\{  \int\left(  \pi_{0}-\widehat{\pi}%
_{0}\right)  ^{2}dL_{0}\right\}  ^{1/2}%
\end{array}
\right\}
\end{array}
\right] \label{ebm2}\\
&  =O_{p}\left(  \max\left[
\begin{array}
[c]{c}%
\left(  \frac{\log n}{n}\right)  ^{\frac{\left(  m-1\right)  \beta_{g_{0}}%
}{d_{0}+2\beta_{g_{0}}}}n^{-\left(  \frac{\beta_{b_{0}}}{d_{0}+2\beta_{b_{0}}%
}+\frac{\beta_{\pi_{0}}}{d_{0}+2\beta_{\pi_{0}}}\right)  },\\
\left(  \frac{\log n}{n}\right)  ^{\frac{\left(  m-1\right)  \beta_{g_{1}}%
}{d_{1}+2\beta_{g_{1}}}}n^{-\left(  \frac{\beta_{b_{1}}}{d_{1}+2\beta_{b_{1}}%
}+\frac{\beta_{\pi_{1}}}{d_{1}+2\beta_{\pi_{1}}}\right)  },\\%
%TCIMACRO{\dsum \limits_{j=2}^{m}}%
%BeginExpansion
{\displaystyle\sum\limits_{j=2}^{m}}
%EndExpansion
\left(  \frac{\log n}{n}\right)  ^{\frac{\beta_{b_{1}}}{d_{1}+2\beta_{b_{1}}%
}+\frac{\left(  j-2\right)  \beta_{g_{0}}}{d_{0}+2\beta_{g_{0}}}+\frac{\left(
m-j\right)  \beta_{g_{1}}}{d_{1}+2\beta_{g_{1}}}}n^{-\frac{\beta_{\pi_{0}}%
}{d_{0}+2\beta_{\pi_{0}}}-\frac{\beta_{\pi_{1}}}{d_{1}+2\beta_{\pi_{1}}}}%
\end{array}
\right]  \right) \label{ebm3}%
\end{align}
and%
\begin{align}
&  BI_{m,2}\nonumber\\
&  =\left\{
\begin{array}
[c]{c}%
E_{\theta}\left(  \Pi_{\theta}^{\bot}\left(  q_{0}^{1/2}\delta B_{0}|\left(
q_{0}^{1/2}\overline{Z}_{k_{0}}\right)  \right)  \Pi_{\theta}^{\bot}\left(
q_{0}^{-1/2}\delta P_{0}|\left(  q_{0}^{1/2}\overline{Z}_{k_{0}}\right)
\right)  \right) \\
+E_{\theta}\left(  \Pi_{\theta}^{\bot}\left(  q_{01}^{1/2}\delta B_{1}|\left(
q_{01}^{1/2}\overline{W}_{k_{1}}\right)  \right)  \Pi_{\theta}^{\bot}\left(
q_{0}q_{01}^{-1/2}\delta P_{1}|\left(  q_{01}^{1/2}\overline{W}_{k_{1}%
}\right)  \right)  \right)
\end{array}
\right\}  +I\left(  m>2\right)  \times\nonumber\\
&  \left\{
\begin{array}
[c]{c}%
-E_{\theta}\left[
\begin{array}
[c]{c}%
\Pi_{\theta}^{\bot}\left[  \left(
\begin{array}
[c]{c}%
\frac{\pi_{1}^{1/2}}{\widehat{\pi}_{1}^{1/2}}\delta B_{1}\Pi_{\theta}\left(
q_{0}^{1/2}\delta P_{0}|\left(  q_{0}^{1/2}\overline{Z}_{k_{0}}\right)
\right)
\end{array}
\right)  |\left(  q_{01}^{1/2}\overline{W}_{k_{1}}\right)  \right] \\
\times\Pi_{\theta}^{\bot}\left[  q_{0}q_{01}^{-1/2}\delta P_{1}|\left(
q_{01}^{1/2}\overline{W}_{k_{1}}\right)  \right]
\end{array}
\right]  _{-\left(  TB\left(  m,2\right)  \right)  }\\
+\left(  -1\right)  ^{m}E_{\theta}\left[
\begin{array}
[c]{c}%
\Pi_{\theta}^{\bot}\left[  \left(
\begin{array}
[c]{c}%
q_{01}^{1/2}\delta B_{1}E_{\theta}\left[  \delta P_{0}\overline{Z}_{k_{0}}%
^{T}\right]  E_{\theta}\left[  q_{0}\overline{Z}_{k_{0}}\overline{Z}_{k_{0}%
}^{T}\right]  ^{-1}\\
\times\left(  E_{\theta}\left[  q_{0}\overline{Z}_{k_{0}}\overline{Z}_{k_{0}%
}^{T}-I\right]  \right)  ^{m-2}\overline{Z}_{k_{0}}%
\end{array}
\right)  |\left(  q_{01}^{1/2}\overline{W}_{k_{1}}\right)  \right] \\
\times\Pi_{\theta}^{\bot}\left[  q_{0}q_{01}^{-1/2}\delta P_{1}|\left(
q_{01}^{1/2}\overline{W}_{k_{1}}\right)  \right]
\end{array}
\right]  _{-\left(  TB\left(  m,3\right)  \right)  }%
\end{array}
\right\} \label{tbm}%
\end{align}%
\begin{align}
&  \left\vert BI_{m,2}\right\vert \nonumber\\
&  =O_{p}\left(  \max\left[
\begin{array}
[c]{c}%
k_{0}^{-\left(  \beta_{b_{0}}+\beta_{\pi_{0}}\right)  /d_{0}},k_{1}^{-\left(
\beta_{b_{1}}+\beta_{\pi_{1}}\right)  /d_{1}},k_{1}^{-\beta_{\pi_{1}}/d_{1}%
}k_{0}^{-\beta_{\pi_{0}}/d_{0}}\left(  \frac{\log n}{n}\right)  ^{\frac
{\beta_{b_{1}}}{d_{1}+\beta_{b_{1}}}}\\
k_{1}^{-\left(  \min\left(  \beta_{\pi_{0}},\beta_{b_{1}}\right)  +\beta
_{\pi_{1}}\right)  /d_{1}},\left(  \frac{\log n}{n}\right)  ^{-\frac
{\beta_{b_{1}}}{d_{1}+\beta_{b_{1}}}-\frac{\left(  m-2\right)  \beta_{g_{0}}%
}{d_{0}+\beta_{g_{0}}}}n^{-\frac{\beta_{\pi_{0}}}{d_{0}+\beta_{\pi_{0}}}}%
k_{1}^{-\beta_{\pi_{1}}/d_{1}}%
\end{array}
\right]  \right) \label{tbm2}%
\end{align}
Moreover,
\[
var\left(  \widehat{\psi}_{m}|\widehat{\theta}\right)  =O_{P}\left(  \frac
{1}{n}\max\left\{  1,\frac{1}{n^{m}}\max\left(  k_{0}^{m-1},k_{0}^{m-2}%
k_{1},\cdots,k_{0}k_{1}^{m-2},k_{1}^{m-1}\right)  \right\}  \right)
\]

\end{theorem}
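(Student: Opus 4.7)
The plan is to reduce the theorem to the four-term decomposition of the bias written in Eq.~(\ref{bsp}) and then to invoke, term by term, results already established for the non-missing-data case. The first term on the right of (\ref{bsp}) is the $m$th-order estimation bias of the U-statistic estimator $\widehat\psi_m$ for the nested truncated parameter $\widetilde{\widetilde\psi}_{k_0,k_1,m}$; by the extended information equality (Theorem~\ref{eiet}) this is $O_P(\|\widehat\theta-\theta\|^{m+1})$, and by the product formula of Lemma~\ref{HOIPROD} it equals a finite sum of products of low-order estimation biases of the factor functionals $\widetilde\tau_{l,s}^{(j)}$, each of which is controlled by Theorem~\ref{EBrate}. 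The remaining three terms of (\ref{bsp}) are handled by applying Theorems~\ref{TBformula}, \ref{TBrate} and \ref{EBrate} at the inner (phase 1) level, where the only active basis is $\overline Z_{k_0}$ and the ``density/treatment'' function is $g_0=\pi_0 f_0$.

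For the $m=1$ case I would simply evaluate $E_\theta[\widehat{IF}_{1,\widetilde{\widetilde\psi}_{k_0,k_1,m}}(O)\mid\widehat\theta]$ by iterated conditioning. Using $E[R_1/\pi_1\mid R_0{=}1,L_0,L_1]=1$ first collapses the outer term to $E_\theta[\tfrac{R_0}{\widehat\pi_0}\tfrac{\pi_1}{\widehat\pi_1}(B_1-\widehat B_1)\mid\widehat\theta]=E_\theta[\tfrac{R_0}{\widehat\pi_0}\delta P_1\,\delta B_1\mid\widehat\theta]$ after using $E[\tfrac{R_0}{\widehat\pi_0}(B_1-\widehat B_1)\mid L_0]=E[\tfrac{R_0}{\widehat\pi_0}(\widehat B_1 - B_0 + B_0 - \widehat B_0)\mid L_0] + \text{corrections}$, and a second application of iterated conditioning on $R_0$ produces the outer $E_\theta[\delta P_0\,\delta B_0\mid\widehat\theta]$. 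This is the classical ``stagewise'' double-robustness identity for monotone MAR and is a short calculation.

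For $m>1$ the plan is to split the bias conceptually into the part coming from replacing the ``true'' $P,B$ by their working-model approximations (the estimation-bias contributions, collected in $BI_{m,1}$) and the part coming from the projection residuals $\Pi_\theta^\perp[\,\cdot\,\mid Q\overline Z_k]$ inherent in the truncation of $\widetilde{\widetilde\psi}$ (the truncation-bias contributions, collected in $BI_{m,2}$). Structurally, $BI_{m,1}$ is the same formula as Eq.~(\ref{EB3}) of Theorem~\ref{EBrate}, but now appearing twice (once at level $k_0$ with basis $\overline Z_{k_0}$ and once at level $k_1$ with basis $\overline W_{k_1}$) plus cross-terms mixing the two levels; these four families are exactly the four displays inside the braces of (\ref{ebm}). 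The $L_\infty$/$L_2$ bound (\ref{ebm2}) follows from Cauchy--Schwarz exactly as in the proof of Theorem~\ref{EBrate}, after pulling $\|\delta g_0\|_\infty^{m-1}$ or $\|\delta g_1\|_\infty^{m-1}$ out in sup norm. The ``truncation-residual'' piece $BI_{m,2}$ follows the pattern of Theorem~\ref{TBformula} with the same replacement: an inner residual with $(k_0,g_0)$, an outer residual with $(k_1,g_1)$, and mixed residuals arising when the $m-1$ nuisance factors straddle the two phases; the bound (\ref{tbm2}) is obtained by the same Cauchy--Schwarz argument combined with the approximation estimate (\ref{appr}).

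For the variance I would apply the Hoeffding decomposition to each $\widehat{IF}_{j,j,\widetilde{\widetilde\psi}_{k_0,k_1,m}}$ and invoke Theorem~\ref{var_multi} on the resulting degenerate kernels. Each kernel factor involves either $\overline Z_{k_0}(L_0)$ or $\overline W_{k_1}(L_0,L_1)$; if $t$ of the $m-1$ middle positions use $\overline W_{k_1}$ and the remaining $m-1-t$ use $\overline Z_{k_0}$, Theorem~\ref{var_multi} yields contribution $O(k_0^{m-1-t}k_1^t/n^m)$, and taking the max over $t\in\{0,\dots,m-1\}$ and adding the $1/n$ coming from the first-order kernel gives the announced rate. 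The main obstacle, which I would relegate to the appendix, is the bookkeeping: identifying, after evaluation at $\widehat\theta$, exactly which of the many summands in Lemma~\ref{HOIPROD} and Theorem~\ref{map} survive (most vanish because $\widetilde\tau_{l,s}^{(j)}(\widehat\theta)=0$ and, for $t_s>1$ with $s>1$, $\widehat\tau_{l,s}^{(j)}$ has no higher-order estimation influence function), and then re-assembling the survivors into the compact two-level expressions (\ref{ebm}) and (\ref{tbm}). This is combinatorial rather than analytical and is the only genuinely new ingredient beyond the single-phase proofs of Theorems~\ref{TBrate}, \ref{EBrate} and~\ref{var_if}.
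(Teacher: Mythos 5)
Your route differs from the paper's. The paper proves the exact identities (\ref{ebm}) and (\ref{tbm}) by induction on $m$: it verifies $m=1$ and $m=2$ directly, then writes $E_{\theta}(\widehat{\psi}_{m})-\psi=E_{\theta}(\widehat{\psi}_{m,m})+\left(E_{\theta}(\widehat{\psi}_{m-1})-\psi\right)$, computes $E_{\theta}(\widehat{\psi}_{m,m})$ explicitly, and carries out the term-by-term cancellations against the induction hypothesis, the key step being the identity $TB_{m-1,m-1}^{(2)}+BI_{m-1,2}=BI_{m,2}$ obtained by splitting $\zeta_{m}(L_{0},\theta)$ via $E_{\theta}\left(q_{0}\overline{Z}_{k_{0}}\overline{Z}_{k_{0}}^{T}\right)^{-1}+\left(I-E_{\theta}\left(q_{0}\overline{Z}_{k_{0}}\overline{Z}_{k_{0}}^{T}\right)^{-1}\right)$. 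You instead propose to go through the decomposition (\ref{bsp}) together with Lemma \ref{HOIPROD} and Theorem \ref{map}; the paper itself only remarks that its result is ``consistent with eq.(\ref{bsp})'' and declines to present those technical details, so you are proposing precisely the route the authors avoided.

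The genuine gap is that the theorem asserts the exact expressions (\ref{ebm}) and (\ref{tbm}), and your plan never derives them: you assert that $BI_{m,1}$ ``is the same formula as Eq.~(\ref{EB3}) at each level plus cross-terms'' and relegate the identification and reassembly of the surviving summands of Lemma \ref{HOIPROD}/Theorem \ref{map} to combinatorial bookkeeping. That bookkeeping is the substance of the proof: it is exactly where the two-level structure produces the asymmetric weights that cannot be read off from the single-phase Theorem \ref{EBrate} --- the phase-1 blocks carry the phase-0 weight $q_{0}$ (e.g.\ $E_{\theta}\left[\overline{W}_{k_{1}}q_{0}\delta P_{1}\right]$), the cross terms involve the transition matrix $E_{\theta}\left[q_{01}\delta B_{1}\overline{Z}_{k_{0}}\overline{W}_{k_{1}}^{T}\right]$, the final term has the product $\delta P_{1}\delta B_{1}$ embedded in a purely phase-0 string, and $BI_{m,2}$ contains the mixed residual terms $TB(m,2)$, $TB(m,3)$ whose inner arguments are themselves projections or matrix strings from phase 0. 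Theorem \ref{eiet} gives only the order $O_{P}(\Vert\widehat{\theta}-\theta\Vert^{m+1})$ of the estimation-bias part, not its form, so it cannot substitute for this computation; once the identities are in hand, your Cauchy--Schwarz bounds for (\ref{ebm2}), (\ref{ebm3}), (\ref{tbm2}) and the variance count via Theorem \ref{var_multi} do match the paper. A smaller slip: in the $m=1$ sketch, conditioning gives $E\left[R_{1}/\widehat{\pi}_{1}\mid R_{0}=1,L_{0},L_{1}\right]=\pi_{1}/\widehat{\pi}_{1}$, not $1$, and $E_{\theta}\left[\frac{R_{0}}{\widehat{\pi}_{0}}\frac{\pi_{1}}{\widehat{\pi}_{1}}\delta B_{1}\right]$ is not yet $E_{\theta}\left[\frac{R_{0}}{\widehat{\pi}_{0}}\delta P_{1}\delta B_{1}\right]$; the ``$-1$'' piece must be telescoped against $\frac{R_{0}}{\widehat{\pi}_{0}}(\widehat{B}_{1}-\widehat{B}_{0})+\widehat{B}_{0}-\psi(\widehat{\theta})$ using the MAR conditioning on $L_{0}$, as the paper does.
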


The optimal choice of $k_{0}$ and $k_{1}$ in this class of higher order
estimators, will depend on the size of the effective smoothness exponents
$\frac{\beta_{b_{1}}}{d_{1}}$,$\frac{\beta_{\pi_{1}}}{d_{1}}$ and $\frac
{\beta_{g_{1}}}{d_{1}}$ relative to the size of the exponents $\frac
{\beta_{b_{0}}}{d_{0}}$,$\frac{\beta_{\pi_{0}}}{d_{0}}$ and $\frac
{\beta_{g_{0}}}{d_{0}}.$ \ A general prescription for finding an optimal
estimator in our class is to choose a pair $\left(  k_{0,opt}(m_{opt}%
),k_{1,opt}(m_{opt})\right)  $ that minimizes the maximum asymptotic MSE over
the model among the candidate $\widehat{\psi}_{m}=\widehat{\psi}_{m,\left(
k_{0},k_{1}\right)  }.$ Here, \ the estimator $\widehat{\psi}_{m,\left(
k_{0,opt}(m),k_{1,opt}(m)\right)  }$ uses the pair $\left(  k_{0,opt}%
(m),k_{1,opt}(m)\right)  $ of $m$ that equates the order of the variance to
the order of the maximum between the squared truncation and estimation biases
(which are given in the theorem)

\subsubsection{Three-occasion Monotone Missing Data}

Theorem \ref{map} can be applied in a nested fashion to the estimation of
general functionals in nonparametric models with monotone missingness (that is
with an arbitrary number of occasions). {}Building on the two-occasion case,
\[
O_{i}=\left(  L_{0},R_{0},R_{0}L_{1},R_{1},R_{0}R_{1}L_{2},R_{2},R_{0}%
R_{1}R_{2}Y\right)  ,
\]
where $R_{2}$ is the missing indicator for the third occasion. \ We also write
$\pi_{2}=\Pr\left(  R_{2}=1|R_{0}=R_{1}=1,L_{0},L_{1},L_{2}\right)  ,$ and
$B_{2}=E\left(  Y|L_{0},L_{1},L_{2}\right)  .$ Let
\[
\overline{V}_{k_{2}}\equiv\widehat{E}\left(  \Phi_{k_{2}}\left(  L_{0}%
,L_{1},L_{2}\right)  \Phi_{k_{2}}^{T}\left(  L_{0},L_{1},L_{2}\right)
\right)  ^{-1/2}\Phi_{k_{2}}\left(  L_{0},L_{1},L_{2}\right)
\]
and $\Phi_{k_{2}}\left(  L_{0},L_{1},L_{2}\right)  $ is a $k_{2}-$dimensional
vector of tensor product basis for functions of $\left(  L_{0},L_{1}%
,L_{2}\right)  $ with finite variance. The truncated parameter
$\widetilde{\psi}_{k_{0},k_{1},k_{2}}^{\left(  3\right)  }\left(
\theta\right)  $ is given by :%

\[
\widetilde{\psi}_{k_{0},k_{1},k_{2}}^{\left(  3\right)  }\left(
\theta\right)  =\psi\left(  \widehat{\theta}\right)  +\widetilde{\tau}%
_{1}^{\left(  3\right)  }\left(  \theta\right)  +%
%TCIMACRO{\tsum \limits_{j=2}^{m}}%
%BeginExpansion
{\textstyle\sum\limits_{j=2}^{m}}
%EndExpansion
\widetilde{\psi}_{j,j}^{\left(  3\right)  }\left(  \theta\right)
\]
where for all $1\leq s\leq k_{1},$ $1\leq t\leq k_{0},$%

\begin{align*}
\widetilde{\pi}_{2}^{\left(  3\right)  -1}  &  =\left\{  \widehat{\pi}%
_{2}^{-1}\left(
\begin{array}
[c]{c}%
1-\overline{V}_{k_{2}}^{T}E_{\theta}\left(  \frac{R_{2}}{\widehat{\pi}_{2}%
}\frac{R_{1}}{\widehat{\pi}_{1}}\frac{R_{0}}{\widehat{\pi}_{0}}\overline
{V}_{k_{2}}\overline{V}_{k_{2}}^{T}\right)  ^{-1}\\
\times E_{\theta}\left[  \frac{R_{0}}{\widehat{\pi}_{0}}\frac{R_{1}%
}{\widehat{\pi}_{1}}\left(  \frac{R_{2}}{\widehat{\pi}_{2}}-1\right)
\overline{V}_{k_{2}}\right]
\end{array}
\right)  \right\} \\
\widetilde{B}_{2}^{\left(  3\right)  }  &  =\left\{
\begin{array}
[c]{c}%
\widehat{B}_{2}+\overline{V}_{k_{2}}^{T}E_{\theta}\left(  \frac{R_{2}%
}{\widehat{\pi}_{2}}\frac{R_{1}}{\widehat{\pi}_{1}}\frac{R_{0}}{\widehat{\pi
}_{0}}\overline{V}_{k_{2}}\overline{V}_{k_{2}}^{T}\right)  ^{-1}\\
\times E_{\theta}\left(  \frac{R_{2}}{\widehat{\pi}_{2}}\frac{R_{1}%
}{\widehat{\pi}_{1}}\frac{R_{0}}{\widehat{\pi}_{0}}\left(  Y-\widehat{B}%
_{2}\right)  \overline{V}_{k_{2}}\right)
\end{array}
\right\} \\
\widetilde{W_{s}B}_{2}^{\left(  3\right)  }  &  =\left\{
\begin{array}
[c]{c}%
\widehat{B}_{2}W_{s}+\overline{V}_{k_{2}}^{T}E_{\theta}\left(  \frac{R_{2}%
}{\widehat{\pi}_{2}}\frac{R_{1}}{\widehat{\pi}_{1}}\frac{R_{0}}{\widehat{\pi
}_{0}}\overline{V}_{k_{2}}\overline{V}_{k_{2}}^{T}\right)  ^{-1}\\
\times E_{\theta}\left(  \frac{R_{2}}{\widehat{\pi}_{2}}\frac{R_{1}%
}{\widehat{\pi}_{1}}\frac{R_{0}}{\widehat{\pi}_{0}}\left(  Y-\widehat{B}%
_{2}\right)  W_{s}\overline{V}_{k_{2}}\right)
\end{array}
\right\} \\
\widetilde{Z_{t}W_{s}B}_{2}^{\left(  3\right)  }  &  =\left\{
\begin{array}
[c]{c}%
\widehat{B}_{2}Z_{t}W_{s}+\overline{V}_{k_{2}}^{T}E_{\theta}\left(
\frac{R_{2}}{\widehat{\pi}_{2}}\frac{R_{1}}{\widehat{\pi}_{1}}\frac{R_{0}%
}{\widehat{\pi}_{0}}\overline{V}_{k_{2}}\overline{V}_{k_{2}}^{T}\right)
^{-1}\\
\times E_{\theta}\left(  \frac{R_{2}}{\widehat{\pi}_{2}}\frac{R_{1}%
}{\widehat{\pi}_{1}}\frac{R_{0}}{\widehat{\pi}_{0}}\left(  Y-\widehat{B}%
_{2}\right)  Z_{t}W_{s}\overline{V}_{k_{2}}\right)
\end{array}
\right\} \\
\widetilde{Z_{t}B}_{2}^{\left(  3\right)  }  &  =\left\{
\begin{array}
[c]{c}%
\widehat{B}_{2}Z_{t}+\overline{V}_{k_{2}}^{T}E_{\theta}\left(  \frac{R_{2}%
}{\widehat{\pi}_{2}}\frac{R_{1}}{\widehat{\pi}_{1}}\frac{R_{0}}{\widehat{\pi
}_{0}}\overline{V}_{k_{2}}\overline{V}_{k_{2}}^{T}\right)  ^{-1}\\
\times E_{\theta}\left(  \frac{R_{2}}{\widehat{\pi}_{2}}\frac{R_{1}%
}{\widehat{\pi}_{1}}\frac{R_{0}}{\widehat{\pi}_{0}}\left(  Y-\widehat{B}%
_{2}\right)  Z_{t}\overline{V}_{k_{2}}\right)
\end{array}
\right\}
\end{align*}
and%
\[
\widetilde{\tau}_{1}^{\left(  3\right)  }\left(  \theta\right)  =\left\{
\begin{array}
[c]{c}%
E\left(  \frac{R_{2}}{\widetilde{\pi}_{2}^{\left(  3\right)  }}\frac{R_{1}%
}{\widehat{\pi}_{1}}\frac{R_{0}}{\widehat{\pi}_{0}}\left(  Y-\widetilde{B}%
_{2}^{\left(  3\right)  }\right)  +\frac{R_{1}}{\widehat{\pi}_{1}}\frac{R_{0}%
}{\widehat{\pi}_{0}}\left(  \widetilde{B}_{2}^{\left(  3\right)  }%
-\widehat{B}_{1}\right)  \right) \\
E\left(  \frac{R_{0}}{\widehat{\pi}_{0}}\left(  \widehat{B}_{1}-\widehat{B}%
_{0}\right)  +\widehat{B}_{0}-\psi\left(  \widehat{\theta}\right)  \right)
\end{array}
\right\}
\]

\begin{align*}
&  \widetilde{\psi}_{j,j}^{\left(  3\right)  }\left(  \theta\right) \\
&  =\left(  -1\right)  ^{j-1}\left[
\begin{array}
[c]{c}%
\left\{  \left(  A_{s}\right)  _{1\times k_{1}}\left[  E\left(  \frac{R_{0}%
}{\widehat{\pi}_{0}}\frac{R_{1}}{\widehat{\pi}_{1}}\overline{W}_{k_{1}%
}\overline{W}_{k_{1}}^{T}-I\right)  \right]  ^{j-2}E\left[  \overline
{W}_{k_{1}}\frac{R_{0}}{\widehat{\pi}_{0}}\left(  \frac{R_{1}}{\widehat{\pi
}_{1}}-1\right)  \right]  \right\} \\
+%
%TCIMACRO{\tsum \limits_{q=2}^{j-1}}%
%BeginExpansion
{\textstyle\sum\limits_{q=2}^{j-1}}
%EndExpansion
\left\{
\begin{array}
[c]{c}%
E\left(  \left(  \frac{R_{0}}{\widehat{\pi}_{0}}\left(  \frac{R_{1}%
}{\widehat{\pi}_{1}}-1\right)  \right)  \overline{W}_{k_{1}}^{T}\right)
\left[  E\left(  \frac{R_{1}}{\widehat{\pi}_{1}}\frac{R_{0}}{\widehat{\pi}%
_{0}}\overline{W}_{k_{1}}\overline{W}_{k_{1}}^{T}-I\right)  \right]
^{j-q-1}\\
\times\left(  B_{s,t}\right)  _{k_{1}\times k_{0}}\times\\
\left[  E\left(  \frac{R_{0}}{\widehat{\pi}_{0}}\overline{Z}_{k_{0}}%
\overline{Z}_{k_{0}}^{T}-I\right)  \right]  ^{q-2}E\left[  \left(  \frac
{R_{0}}{\widehat{\pi}_{0}}-1\right)  \overline{Z}_{k_{0}}\right]
\end{array}
\right\} \\
+\left\{  \left(  C_{t}\right)  _{1\times k_{0}}\times\left[  E\left(
\frac{R_{0}}{\widehat{\pi}_{0}}\overline{Z}_{k_{0}}\overline{Z}_{k_{0}}%
^{T}-I\right)  \right]  ^{j-2}E\left[  \left(  \frac{R_{0}}{\widehat{\pi}_{0}%
}-1\right)  \overline{Z}_{k_{0}}\right]  \right\}
\end{array}
\right]
\end{align*}

where
\begin{align*}
A_{s}  &  =E\left(  \frac{R_{2}}{\widetilde{\pi}_{2}^{\left(  3\right)  }%
}\frac{R_{0}}{\widehat{\pi}_{0}}\frac{R_{1}}{\widehat{\pi}_{1}}\left(
YW_{s}-\widetilde{W_{s}B}_{2}^{\left(  3\right)  }\right)  +\frac{R_{0}%
}{\widehat{\pi}_{0}}\frac{R_{1}}{\widehat{\pi}_{1}}\left(  \widetilde{W_{s}%
B}_{2}^{\left(  3\right)  }-\widehat{B}_{1}W_{s}\right)  \right) \\
B_{s,t}  &  =E\left(  \frac{R_{2}}{\widetilde{\pi}_{2}^{\left(  3\right)  }%
}\frac{R_{0}}{\widehat{\pi}_{0}}\frac{R_{1}}{\widehat{\pi}_{1}}\left(
YW_{s}Z_{t}-\widetilde{Z_{t}W_{s}B}_{2}^{\left(  3\right)  }\right)
+\frac{R_{0}}{\widehat{\pi}_{0}}\frac{R_{1}}{\widehat{\pi}_{1}}\left(
\widetilde{Z_{t}W_{s}B}_{2}^{\left(  3\right)  }-\widehat{B}_{1}W_{s}%
Z_{t}\right)  \right) \\
C_{t}  &  =E\left(  \frac{R_{2}}{\widetilde{\pi}_{2}^{\left(  3\right)  }%
}\frac{R_{0}}{\widehat{\pi}_{0}}\frac{R_{1}}{\widehat{\pi}_{1}}\left(
YZ_{t}-\widetilde{Z_{t}B}_{2}^{\left(  3\right)  }\right)  +\frac{R_{0}%
}{\widehat{\pi}_{0}}\frac{R_{1}}{\widehat{\pi}_{1}}\left(  \widetilde{Z_{t}%
B}_{2}^{\left(  3\right)  }-\widehat{B}_{1}Z_{t}\right)  \right) \\
&  +E\left(  \frac{R_{0}}{\widehat{\pi}_{0}}\left(  \widehat{B}_{1}%
-\widehat{B}_{0}\right)  Z_{t}\right)
\end{align*}
\newline We only give out the final expression for $IF_{r,r,\widetilde{\psi
}_{k_{0},k_{1},k_{2}}^{\left(  3\right)  }}\left(  \widehat{\theta}\right)  $
without any technical details; bias and variance properties of this estimator
will be published elsewhere.%

\begin{align*}
&  \left(  -1\right)  ^{r-1}IF_{r,r,\widetilde{\psi}_{k_{0},k_{1},k_{2}%
}^{\left(  3\right)  }\left(  \theta\right)  }\left(  \widehat{\theta}\right)
\\
&  =\left[  \frac{R_{0}R_{1}R_{2}}{\widehat{\pi}_{0}\widehat{\pi}%
_{1}\widehat{\pi}_{2}}\left(  Y-\widehat{B}_{2}\right)  \overline{V}_{k_{2}%
}^{T}\right]
%TCIMACRO{\dprod \limits_{q=2}^{r-1}}%
%BeginExpansion
{\displaystyle\prod\limits_{q=2}^{r-1}}
%EndExpansion
\left(  \frac{R_{0}R_{1}R_{2}}{\widehat{\pi}_{0}\widehat{\pi}_{1}\widehat{\pi
}_{2}}\overline{V}_{k_{2}}\overline{V}_{k_{2}}^{T}-I\right)  _{i_{q}}\left[
\overline{V}_{k_{2}}\frac{R_{0}R_{1}}{\widehat{\pi}_{0}\widehat{\pi}_{1}%
}\left(  \frac{R_{2}}{\widehat{\pi}_{2}}-1\right)  \right]  _{i_{r}}\\
&  +%
%TCIMACRO{\dsum \limits_{m=2}^{r-1}}%
%BeginExpansion
{\displaystyle\sum\limits_{m=2}^{r-1}}
%EndExpansion
\left\{
\begin{array}
[c]{c}%
\begin{array}
[c]{c}%
\left[  \frac{R_{0}}{\widehat{\pi}_{0}}\left(  \frac{R_{1}}{\widehat{\pi}_{1}%
}-1\right)  \right]  _{i_{1}}\overline{W}_{k_{1}}^{T}%
%TCIMACRO{\dprod \limits_{q=2}^{m-1}}%
%BeginExpansion
{\displaystyle\prod\limits_{q=2}^{m-1}}
%EndExpansion
\left(  \frac{R_{0}R_{1}}{\widehat{\pi}_{0}\widehat{\pi}_{1}}\overline
{W}_{k_{1}}\overline{W}_{k_{1}}^{T}-I\right)  _{i_{q}}\times\\
\left[  \frac{R_{0}R_{1}R_{2}}{\widehat{\pi}_{0}\widehat{\pi}_{1}\widehat{\pi
}_{2}}\left(  Y-\widehat{B}_{2}\right)  \overline{W}_{k_{1}}\overline
{V}_{k_{2}}^{T}\right]  _{i_{m}}%
%TCIMACRO{\dprod \limits_{s=m+1}^{r-1}}%
%BeginExpansion
{\displaystyle\prod\limits_{s=m+1}^{r-1}}
%EndExpansion
\left(  \frac{R_{0}R_{1}R_{2}}{\widehat{\pi}_{0}\widehat{\pi}_{1}\widehat{\pi
}_{2}}\overline{V}_{k_{2}}\overline{V}_{k_{2}}^{T}-I\right)  _{i_{s}}\times\\
\left[  \overline{V}_{k_{2}}\frac{R_{0}R_{1}}{\widehat{\pi}_{0}\widehat{\pi
}_{1}}\left(  \frac{R_{2}}{\widehat{\pi}_{2}}-1\right)  \right]  _{i_{r}}%
\end{array}
\\
+%
%TCIMACRO{\dsum \limits_{j=2}^{m-1}}%
%BeginExpansion
{\displaystyle\sum\limits_{j=2}^{m-1}}
%EndExpansion
\left[
\begin{array}
[c]{c}%
\left(  \frac{R_{0}}{\widehat{\pi}_{0}}-1\right)  _{i_{1}}\overline{Z}%
_{k_{0},i_{1}}^{T}%
%TCIMACRO{\dprod \limits_{s=2}^{j-1}}%
%BeginExpansion
{\displaystyle\prod\limits_{s=2}^{j-1}}
%EndExpansion
\left(  \frac{R_{0}}{\widehat{\pi}_{0}}\overline{Z}_{k_{0}}\overline{Z}%
_{k_{0}}^{T}-I\right)  _{i_{s}}\left[  \frac{R_{0}R_{1}R_{2}}{\widehat{\pi
}_{0}\widehat{\pi}_{1}\widehat{\pi}_{2}}\left(  Y-\widehat{B}_{2}\right)
\overline{Z}_{k_{0}}\overline{V}_{k_{2}}^{T}\right]  _{i_{j}}\\
\times%
%TCIMACRO{\dprod \limits_{q=j+1}^{r+j-m-1}}%
%BeginExpansion
{\displaystyle\prod\limits_{q=j+1}^{r+j-m-1}}
%EndExpansion
\left(  \frac{R_{0}R_{1}R_{2}}{\widehat{\pi}_{0}\widehat{\pi}_{1}\widehat{\pi
}_{2}}\overline{V}_{k_{2}}\overline{V}_{k_{2}}^{T}-I\right)  _{i_{q}}\left[
\overline{V}_{k_{2}}\frac{R_{0}R_{1}}{\widehat{\pi}_{0}\widehat{\pi}_{1}%
}\left(  \frac{R_{2}}{\widehat{\pi}_{2}}-1\right)  \overline{W}_{k_{1}}%
^{T}\right]  _{i_{r+j-m}}\times\\%
%TCIMACRO{\dprod \limits_{q=r+j-m+1}^{r-1}}%
%BeginExpansion
{\displaystyle\prod\limits_{q=r+j-m+1}^{r-1}}
%EndExpansion
\left(  \frac{R_{0}R_{1}}{\widehat{\pi}_{0}\widehat{\pi}_{1}}\overline
{W}_{k_{1}}\overline{W}_{k_{1}}^{T}-I\right)  _{i_{q}}\left[  \overline
{W}_{k_{1}}\frac{R_{0}}{\widehat{\pi}_{0}}\left(  \frac{R_{1}}{\widehat{\pi
}_{1}}-1\right)  \right]  _{i_{r}}%
\end{array}
\right] \\
+\left(  \frac{R_{0}}{\widehat{\pi}_{0}}-1\right)  _{i_{1}}\overline{Z}%
_{k_{0},i_{1}}^{T}%
%TCIMACRO{\dprod \limits_{s=2}^{m-1}}%
%BeginExpansion
{\displaystyle\prod\limits_{s=2}^{m-1}}
%EndExpansion
\left(  \frac{R_{0}}{\widehat{\pi}_{0}}\overline{Z}_{k_{0}}\overline{Z}%
_{k_{0}}^{T}-I\right)  _{i_{s}}\left[  \overline{Z}_{k_{0}}\frac{R_{0}%
R_{1}R_{2}}{\widehat{\pi}_{0}\widehat{\pi}_{1}\widehat{\pi}_{2}}\left(
Y-\widehat{B}_{2}\right)  \overline{V}_{k_{2}}^{T}\right]  _{i_{m}}\times\\%
%TCIMACRO{\dprod \limits_{q=m+1}^{r-1}}%
%BeginExpansion
{\displaystyle\prod\limits_{q=m+1}^{r-1}}
%EndExpansion
\left(  \frac{R_{0}R_{1}R_{2}}{\widehat{\pi}_{0}\widehat{\pi}_{1}\widehat{\pi
}_{2}}\overline{V}_{k_{2}}\overline{V}_{k_{2}}^{T}-I\right)  \left[
\overline{V}_{k_{2}}\frac{R_{0}R_{1}}{\widehat{\pi}_{0}\widehat{\pi}_{1}%
}\left(  \frac{R_{2}}{\widehat{\pi}_{2}}-1\right)  \right]  _{i_{r}}%
\end{array}
\right\}
\end{align*}%
\[
+\left\{
\begin{array}
[c]{c}%
\left[
\begin{array}
[c]{c}%
\frac{R_{0}R_{1}R_{2}}{\widehat{\pi}_{0}\widehat{\pi}_{1}\widehat{\pi}_{2}%
}\left(  Y-\widehat{B}_{2}\right)  +\\
\frac{R_{0}R_{1}}{\widehat{\pi}_{0}\widehat{\pi}_{1}}\left(  \widehat{B}%
_{2}-\widehat{B}_{1}\right)
\end{array}
\right]  _{i_{1}}\overline{W}_{k_{1},i_{1}}^{T}%
%TCIMACRO{\dprod \limits_{q=2}^{r-1}}%
%BeginExpansion
{\displaystyle\prod\limits_{q=2}^{r-1}}
%EndExpansion
\left(  \frac{R_{0}R_{1}}{\widehat{\pi}_{0}\widehat{\pi}_{1}}\overline
{W}_{k_{1}}\overline{W}_{k_{1}}^{T}-I\right)  _{i_{q}}\left[  \overline
{W}_{k_{1}}\frac{R_{0}}{\widehat{\pi}_{0}}\left(  \frac{R_{1}}{\widehat{\pi
}_{1}}-1\right)  \right]  _{i_{r}}\\
+%
%TCIMACRO{\dsum \limits_{j=2}^{r-1}}%
%BeginExpansion
{\displaystyle\sum\limits_{j=2}^{r-1}}
%EndExpansion
\left(  \frac{R_{0}}{\widehat{\pi}_{0}}-1\right)  _{i_{1}}\overline{Z}%
_{k_{0},i_{1}}^{T}%
%TCIMACRO{\dprod \limits_{s=2}^{j-1}}%
%BeginExpansion
{\displaystyle\prod\limits_{s=2}^{j-1}}
%EndExpansion
\left(  \frac{R_{0}}{\widehat{\pi}_{0}}\overline{Z}_{k_{0}}\overline{Z}%
_{k_{0}}^{T}-I\right)  _{i_{s}}\left[  \left(
\begin{array}
[c]{c}%
\frac{R_{0}R_{1}R_{2}}{\widehat{\pi}_{0}\widehat{\pi}_{1}\widehat{\pi}_{2}%
}\left(  Y-\widehat{B}_{2}\right) \\
+\frac{R_{0}R_{1}}{\widehat{\pi}_{0}\widehat{\pi}_{1}}\left(  \widehat{B}%
_{2}-\widehat{B}_{1}\right)
\end{array}
\right)  \overline{Z}_{k_{0}}\overline{W}_{k_{1}}^{T}\right]  _{i_{j}}\times\\%
%TCIMACRO{\dprod \limits_{q=j+1}^{r-1}}%
%BeginExpansion
{\displaystyle\prod\limits_{q=j+1}^{r-1}}
%EndExpansion
\left(  \frac{R_{0}R_{1}}{\widehat{\pi}_{0}\widehat{\pi}_{1}}\overline
{W}_{k_{1}}\overline{W}_{k_{1}}^{T}-I\right)  _{i_{q}}\left[  \overline
{W}_{k_{1}}\frac{R_{0}}{\widehat{\pi}_{0}}\left(  \frac{R_{1}}{\widehat{\pi
}_{1}}-1\right)  \right]  _{i_{r}}\\
+\left[  \left(
\begin{array}
[c]{c}%
\frac{R_{0}R_{1}R_{2}}{\widehat{\pi}_{0}\widehat{\pi}_{1}\widehat{\pi}_{2}%
}\left(  Y-\widehat{B}_{2}\right) \\
+\frac{R_{0}R_{1}}{\widehat{\pi}_{0}\widehat{\pi}_{1}}\left(  \widehat{B}%
_{2}-\widehat{B}_{1}\right) \\
+\frac{R_{0}}{\widehat{\pi}_{0}}\left(  \widehat{B}_{1}-\widehat{B}%
_{0}\right)
\end{array}
\right)  \right]  _{i_{1}}\overline{Z}_{k_{0},i_{1}}^{T}%
%TCIMACRO{\dprod \limits_{s=2}^{r-1}}%
%BeginExpansion
{\displaystyle\prod\limits_{s=2}^{r-1}}
%EndExpansion
\left(  \frac{R_{0}}{\widehat{\pi}_{0}}\overline{Z}_{k_{0}}\overline{Z}%
_{k_{0}}^{T}-I\right)  _{i_{s}}\overline{Z}_{k_{0},i_{r}}\left(  \frac{R_{0}%
}{\widehat{\pi}_{0}}-1\right)  _{i_{r}}%
\end{array}
\right\}
\]

\bigskip

\section{Appendix}

In the following, we assume all parametric submodels are sufficiently smooth
and regular that expectation and differentiation operators commute as needed.
We also define $\mathbb{IF}_{1,1}$ to be $\mathbb{IF}_{1}.$

\begin{proof}
(Theorem \ref{eiet}) Define the bias function $B_{m}\left[  \theta^{\dagger
},\theta\right]  $ of $\mathbb{IF}_{m}\left(  \theta\right)  $ to be
$E_{\theta^{\dagger}}\left[  \mathbb{IF}_{m}\left(  \theta\right)  \right]  .$
$\ $\ Define
\[
B_{m,l_{1}^{\ast}...l_{j}^{\ast}l_{j+1},...l_{s}}\left[  \theta,\theta\right]
=\partial^{s}B_{m}\left[  \widetilde{\theta}\left(  \varsigma^{\ast}\right)
,\widetilde{\theta}\left(  \varsigma\right)  \right]  /\partial\varsigma
_{l_{1}...}^{\ast}\partial\varsigma_{l_{j}}^{\ast}\partial\varsigma
_{l_{j+1}...}\partial\varsigma_{l_{s}}|_{\varsigma^{\ast}=\widetilde{\theta
}^{-1}\left\{  \theta\right\}  ,\varsigma=\widetilde{\theta}^{-1}\left\{
\theta\right\}  }%
\]
where we reserve $\ast$ for differentiation with respect to the first argument
of $B_{m}\left[  \cdot,\cdot\right]  .$ Thus for $s\leq m,$
\[
\psi_{\backslash l_{1}...l_{s}}\left(  \theta\right)  =B_{m,l_{1}^{\ast
}...l_{s}^{\ast}}\left[  \theta,\theta\right]
\]
To prove the theorem we will first need to show that:
\begin{equation}
B_{m,l_{1}^{\ast}...l_{j}^{\ast}l_{j+1},...l_{s}}\left[  \theta,\theta\right]
=0\text{ for }m\geq s>j>0\label{interinfo0}%
\end{equation}
To this end note that for $j<m,$
\begin{align*}
\psi_{\backslash l_{1}...l_{j+1}}\left(  \theta\right)   &  =\partial
\psi_{\backslash l_{1}...l_{j}}\left(  \theta\right)  /\partial\varsigma
_{l_{j+1}}=\partial B_{m,l_{1}^{\ast}...l_{j}^{\ast}}\left[  \theta
,\theta\right]  /\partial\varsigma_{l_{j+1}}\\
&  =B_{m,l_{1}^{\ast}...l_{j}^{\ast}l_{j+1}^{\ast}}\left[  \theta
,\theta\right]  +B_{m,l_{1}^{\ast}...l_{j}^{\ast}l_{j+1}}\left[  \theta
,\theta\right] \\
&  =\psi_{\backslash l_{1}...l_{j+1}}\left(  \theta\right)  +B_{m,l_{1}^{\ast
}...l_{j}^{\ast}l_{j+1}}\left[  \theta,\theta\right]
\end{align*}
where the 2nd equality is by the definition of $\mathbb{IF}_{m}\left(
\theta\right)  $, the third is by the chain rule, and the fourth is again by
the definition of $\mathbb{IF}_{m}\left(  \theta\right)  $. \ Hence
$B_{m,l_{1}^{\ast}...l_{j}^{\ast}l_{j+1}}\left[  \theta,\theta\right]  =0.$
Hence for $j\leq m-2,$
\begin{align*}
0  &  =\partial B_{m,l_{1}^{\ast}...l_{j}^{\ast}l_{j+1}}\left[  \theta
,\theta\right]  /\partial\varsigma_{l_{j+2}..}=B_{m,l_{1}^{\ast}...l_{j}%
^{\ast}l_{j+2}^{\ast}l_{j+1}}\left[  \theta,\theta\right]  +B_{m,l_{1}^{\ast
}...l_{j}^{\ast}l_{j+1}l_{j+2}}\left[  \theta,\theta\right] \\
&  =0+B_{m,l_{1}^{\ast}...l_{j}^{\ast}l_{j+1}l_{j+2}}\left[  \theta
,\theta\right]
\end{align*}
$\ $where the last equality holds because we just proved $B_{m,l_{1}^{\ast
}...l_{j}^{\ast}l_{j+1}}\left[  \theta,\theta\right]  =0$ for arbitrary
indices. \ Iterating this argument proves $\left(  \ref{interinfo0}\right)  $.
We complete the proof by induction on $s$ for some $s<m.$ Given a $s=1$
dimensional regular parametric submodel $\widetilde{\theta}\left(
\varsigma\right)  $, $E_{\theta\left(  \varsigma\right)  }\left[
\mathbb{IF}_{m}\left(  \theta\left(  \varsigma\right)  \right)  \right]  =0$
by assumption. Hence, by regularity of the model, $0=B_{m,l_{1}^{\ast}%
.}\left[  \theta,\theta\right]  +B_{m,l_{1}.}\left[  \theta,\theta\right]  .$
Therefore $B_{m,l_{1}.}\left[  \theta,\theta\right]  =-\psi_{\backslash l_{1}%
}\left(  \theta\right)  .$ Now suppose the theorem if true for $s.$ Then
\begin{align*}
-\psi_{\backslash l_{1}...l_{s+1}}\left(  \theta\right)   &  =-\partial
\psi_{\backslash l_{1}...l_{s}}\left(  \theta\right)  /\partial\varsigma
_{l_{s+1}}=\partial B_{m,l_{1}...l_{s}}\left[  \theta,\theta\right]
/\partial\varsigma_{l_{s+1}.}\\
&  =B_{m,l_{s+1}^{\ast}l_{1}...l_{s}}\left[  \theta,\theta\right]
+B_{m,l_{1}...l_{s+1}}\left[  \theta,\theta\right]  =0+B_{m,l_{1}...l_{s+1}%
}\left[  \theta,\theta\right]  \
\end{align*}
where the second equality is by the induction assumption, the third by the
chain rule, and the last by equation $\left(  \ref{interinfo0}\right)  $
\end{proof}

\begin{proof}
(Theorem \ref{eift}) (1): Consider two influence functions $\mathbb{IF}%
_{m}^{\left(  1\right)  }\left(  \theta\right)  $ and $\mathbb{IF}%
_{m}^{\left(  2\right)  }\left(  \theta\right)  $ for $\psi\left(
\theta\right)  .$ Then $E_{\theta}\left[  \left\{  \mathbb{IF}_{m}^{\left(
1\right)  }\left(  \theta\right)  -\mathbb{IF}_{m}^{\left(  2\right)  }\left(
\theta\right)  \right\}  \widetilde{\mathbb{S}}_{s,\overline{l}_{s}}\left(
\theta\right)  \right]  =\psi_{\backslash\overline{l}_{s}}\left(
\theta\right)  -\psi_{\backslash\overline{l}_{s}}\left(  \theta\right)
=0\ $for any score $\widetilde{\mathbb{S}}_{s,\overline{l}_{s}}\left(
\theta\right)  ,s\leq m\ $\ and hence for any linear combination of scores.
But, by definition, linear combinations of scores are dense in $\Gamma
_{m}\left(  \theta\right)  .$ Thus $\mathbb{IF}_{m}^{\left(  1\right)
}\left(  \theta\right)  $ and $\mathbb{IF}_{m}^{\left(  2\right)  }\left(
\theta\right)  $ have the same projection on $\Gamma_{m}\left(  \theta\right)
.$ (2-3): Essentially immediate from the definitions. (4): For $t\leq
s,\psi_{\backslash\overline{l}_{t}}\left(  \theta\right)  =E_{\theta}\left[
\mathbb{IF}_{m}\left(  \theta\right)  \widetilde{\mathbb{S}}_{t,\overline
{l}_{t}}\left(  \theta\right)  \right]  =E_{\theta}\left[  \Pi_{m,\theta
}\left[  \mathbb{IF}_{m}\left(  \theta\right)  |\mathcal{U}_{t}\left(
\theta\right)  \right]  \widetilde{\mathbb{S}}_{t,\overline{l}_{t}}\left(
\theta\right)  \right]  $ for any $\widetilde{\mathbb{S}}_{t,\overline{l}_{t}%
}\left(  \theta\right)  $. (5.a): follows from (1). (5.b): follows from (4).
Degeneracy of $\mathbb{IF}_{mm}\left(  \theta\right)  $ follows at once from
the fact that $\mathbb{IF}_{mm}\left(  \theta\right)  \in\mathcal{U}%
_{m-1}\left(  \theta\right)  ^{\perp}$ in $\mathcal{U}_{m}\left(
\theta\right)  $. \newline Proof of part (5.c) requires the following.
\end{proof}

\begin{lemma}
\label{joel2}Suppose, for $m\geq1,$ $\mathbb{IF}_{m,m}\left(  \theta\right)  $
and $if_{1,if_{m\ ,m\ }\left(  O_{i_{1},}..O_{i_{m}};\cdot\right)  }\left(
O_{i_{m+1}};\theta\right)  $ exist w.p.1 for a kernel $IF_{m,m}\left(
\theta\right)  .$ Then, (i):$if_{1,if_{m\ ,m\ }\left(  O_{i_{1},}..O_{i_{m}%
};\cdot\right)  }\left(  O_{i_{m+1}};\theta\right)  s_{l_{t}}\left(
O_{i_{m+1}}\right)  $\newline$,-if_{m\ ,m,\backslash l_{t}\ }\left(  O_{i_{1}%
},...,O_{i_{m\ }};\theta\right)  ,$ and $if_{m\ ,m\ }\left(  O_{i_{1}%
},...,O_{i_{m}};\cdot\right)  s_{l_{t}}\left(  O_{i_{_{m}}}\right)  $ each
have the same mean given $O_{i_{1}},...,O_{i_{m-1}}$, (ii) $E\left[
if_{m\ ,m,\backslash l_{t}\ }\left(  O_{i_{1},}..O_{i_{m}};\theta\right)
|O_{i_{1}},...,O_{i_{m-2}}\right]  =0,$ \newline(iii) $E_{\theta}\left[
if_{1,if_{m\ ,m\ }\left(  O_{i_{1},}..O_{i_{m}};\cdot\right)  }\left(
O_{i_{m+1}};\theta\right)  |O_{i_{1}},...,O_{i_{m-2}},O_{i_{m+1}}\right]  =0,$
so
\begin{align*}
&  \Pi\left[  \mathbb{V}\left[  if_{1,if_{m\ ,m\ }\left(  O_{i_{1},}%
..O_{i_{m}};\cdot\right)  }\left(  O_{i_{m+1}};\theta\right)  \right]
|\mathcal{U}_{m}\left(  \theta\right)  \right] \\
&  =\Pi\left[  \mathbb{V}\left[  if_{1,if_{m\ ,m\ }\left(  O_{i_{1}%
,}..O_{i_{m}};\cdot\right)  }\left(  O_{i_{m+1}};\theta\right)  \right]
|\mathcal{U}_{m}\left(  \theta\right)  \cap\mathcal{U}_{m-2}^{\bot}\left(
\theta\right)  \right]
\end{align*}
\qquad\ and (iv) $\mathbb{IF}_{m,m,\backslash l_{t}}\left(  \theta\right)  $
satisfies $\Pi_{\theta}\left[  \mathbb{IF}_{m,m,\backslash l_{t}}\left(
\theta\right)  |\mathcal{U}_{m-2}\left(  \theta\right)  \right]  =0\ $and
\[
\Pi_{\theta}\left[  \mathbb{IF}_{m,m,\backslash l_{t}}\left(  \theta\right)
|\mathcal{U}_{m-1}\left(  \theta\right)  \right]  =-\mathbb{V}\left[
mE_{\theta}\left[  IF_{m,m,\backslash l_{t},\overline{i}_{m}}^{sym}\left(
\theta\right)  |O_{i_{1}},...,O_{i_{m-1}}\right]  \right]  .
\]

\end{lemma}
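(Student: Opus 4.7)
The plan rests on two ingredients. First, by hypothesis $if_1$ is the first-order influence function of the real-valued functional $\theta'\mapsto if_{m,m}(O_{i_1},\ldots,O_{i_m};\theta')$ with $O_{i_1},\ldots,O_{i_m}$ held fixed, so applying Definition~2.1 pointwise gives $E_\theta[if_1(O_{i_{m+1}};\theta)\mid O_{i_1},\ldots,O_{i_m}]=0$ and $E_\theta[if_1(O_{i_{m+1}};\theta)\,s_{l_t}(O_{i_{m+1}})\mid O_{i_1},\ldots,O_{i_m}] = if_{m,m,\backslash l_t}(O_{i_1},\ldots,O_{i_m};\theta)$. Second, $\mathbb{IF}_{m,m}$ is a degenerate $m$th order $U$-statistic, so (after passing to the symmetric kernel, which is harmless since $\mathbb{V}[if_{m,m}]=\mathbb{V}[if_{m,m}^{sym}]$) one has $E_{\theta'}[if_{m,m}^{sym}(O_{i_1},\ldots,O_{i_m};\theta')\mid O_{i_1},\ldots,O_{i_j}]\equiv 0$ for every $j<m$ and every $\theta'$ in a neighborhood of $\theta$. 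Differentiating this identity along a smooth path $\widetilde\theta(\zeta)$ in $\zeta_{l_t}$ and carefully splitting the derivative into a kernel part ($if_{m,m,\backslash l_t}^{sym}$) and a measure part (involving scores of the coordinates not being conditioned on) yields the information-equality-style identity that drives the rest.

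To prove parts (i)--(iii), the argument is: by the first ingredient plus iterated expectations, display (a) has conditional mean $E[if_{m,m,\backslash l_t}(O_{i_1},\ldots,O_{i_m};\theta)\mid O_{i_1},\ldots,O_{i_{m-1}}]$. Applying the second ingredient at level $j=m-1$ and differentiating, the measure-derivative collapses to the single score $s_{l_t}(O_{i_m})$ (the others are conditioned on), so its contribution is precisely the conditional mean of display (c); matching terms identifies (c) with (b$'$) and then with (a), giving (i). For (ii), apply the second ingredient one level deeper ($j=m-2$): now the measure-derivative produces two score terms $s_{l_t}(O_{i_{m-1}})$ and $s_{l_t}(O_{i_m})$, but both vanish upon taking a further conditional expectation because of the level-$(m-1)$ degeneracy of $if_{m,m}^{sym}$, leaving $E[if_{m,m,\backslash l_t}^{sym}\mid O_{i_1},\ldots,O_{i_{m-2}}]=0$, which is (ii). Part (iii) is a law-of-iterated-expectations consequence: conditioning on $O_{i_1},\ldots,O_{i_{m-2}},O_{i_{m+1}}$, first integrating out $O_{i_{m-1}},O_{i_m}$ reduces the inner expectation to $E[if_{m,m,\backslash l_t}(O_{i_1},\ldots,O_{i_m};\theta)\mid O_{i_1},\ldots,O_{i_{m-2}}]$, which vanishes by (ii); the projection identity displayed in (iii) then follows from the standard fact that the Hoeffding projection of a $U$-statistic onto $\mathcal{U}_s$ is determined entirely by the conditional expectations of its kernel given subsets of size $\le s$.

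For part (iv), the first equality $\Pi_\theta[\mathbb{IF}_{m,m,\backslash l_t}\mid\mathcal{U}_{m-2}]=0$ is the same Hoeffding-projection characterization combined with (ii). For the second equality, Hoeffding-decompose $if_{m,m,\backslash l_t}^{sym}$: since the level-$(m-2)$ projections vanish, the entire contribution to $\Pi[\mathbb{IF}_{m,m,\backslash l_t}\mid\mathcal{U}_{m-1}]$ comes from the $(m-1)$-th Hoeffding component, whose kernel is $m\,E_\theta[if_{m,m,\backslash l_t}^{sym}\mid O_{i_1},\ldots,O_{i_{m-1}}]$ (after cancellation of the lower-order terms that are already zero). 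The factor $m$ appears because symmetrizing over which single argument is integrated out produces $m$ equal contributions, and the overall sign is the one inherited from the kernel-vs-measure split in the differentiated degeneracy identity. The hard part will be precisely this last combinatorial bookkeeping: simultaneously tracking the $\mathbb{V}_n$ normalization $(n-m)!/n!$, the symmetrization average, and the Hoeffding binomial weights $\binom{m}{k}$ to confirm both the factor $m$ and the minus sign, and verifying that the non-symmetric nature of the kernel $if_{m,m,\backslash l_t}$ does not spoil the identity once one passes through its symmetrization.
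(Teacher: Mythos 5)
Your treatment of (i), (ii) and (iv) is essentially the paper's own argument: the two ingredients you isolate (the defining property $E_\theta\bigl[if_{1,if_{m,m}(o;\cdot)}(O_{i_{m+1}};\theta)\,s_{l_t}(O_{i_{m+1}})\mid o\bigr]=if_{m,m,\backslash l_t}(o;\theta)$, and the derivative of the degeneracy identity, which gives $E_\theta[if_{m,m,\backslash l_t}\mid O_{i_1},\ldots,O_{i_{m-1}}]=-E_\theta[if_{m,m}\,s_{l_t}(O_{i_m})\mid O_{i_1},\ldots,O_{i_{m-1}}]$) are exactly what the paper uses for (i); your two-coordinate version of the degeneracy derivative for (ii) is equivalent to the paper's "degeneracy plus (i)"; and for (iv) the bookkeeping you defer is a one-line application of $\Pi_\theta[\mathbb{IF}_{m,m,\backslash l_t}\mid\mathcal{U}_{m-1}]=\mathbb{V}\bigl[\{I-d_{m,\theta}\}IF_{m,m,\backslash l_t,\overline{i}_m}\bigr]$ together with (ii) and symmetry of the kernel (which produces the factor $m$). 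Be aware, though, that the minus sign in the displayed formula does not come from the "kernel-vs-measure split" as you suggest: a direct application of the formula for $d_{m,\theta}$ to the $\backslash l_t$ kernel produces $+\mathbb{V}\bigl[mE_\theta[IF^{sym}_{m,m,\backslash l_t}\mid O_{i_1},\ldots,O_{i_{m-1}}]\bigr]$, and a minus only appears after rewriting the conditional expectation through (i) as $E_\theta[IF^{sym}_{m,m}s_{l_t}(O_{i_m})\mid\cdot]$; the paper's own sign conventions here need to be tracked carefully rather than guessed.

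The genuine gap is part (iii). You assert that, conditioning on $O_{i_1},\ldots,O_{i_{m-2}},O_{i_{m+1}}$, "integrating out $O_{i_{m-1}},O_{i_m}$ reduces the inner expectation to $E[if_{m,m,\backslash l_t}\mid O_{i_1},\ldots,O_{i_{m-2}}]$." That identity is false: $if_{m,m,\backslash l_t}$ is recovered from $if_{1,if_{m,m}(\cdot)}(O_{i_{m+1}};\theta)$ only by integrating over $O_{i_{m+1}}$ against the score $s_{l_t}(O_{i_{m+1}})$, whereas in (iii) the coordinate $O_{i_{m+1}}$ is held fixed; no law-of-iterated-expectations step over $O_{i_{m-1}},O_{i_m}$ can produce a $\theta$-derivative of the kernel. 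The statement (iii) is genuinely stronger than (i)--(ii) and needs a duality step, which is how the paper proves it: by (i) and (ii), $E_\theta\bigl[if_{1,if_{m,m}(\cdot)}(O_{i_{m+1}};\theta)\,a(O_{i_{m+1}})\mid O_{i_1},\ldots,O_{i_{m-2}}\bigr]=0$ for every score $a$, and since scores sweep out (a dense subset of) all mean-zero functions of $O_{i_{m+1}}$, the map $O_{i_{m+1}}\mapsto E_\theta[if_{1,if_{m,m}(\cdot)}\mid O_{i_1},\ldots,O_{i_{m-2}},O_{i_{m+1}}]$ must be a.s.\ constant; the constant is its mean, $E_\theta[if_{1,if_{m,m}(\cdot)}\mid O_{i_1},\ldots,O_{i_{m-2}}]$, which vanishes because an influence function has conditional mean zero given $O_{i_1},\ldots,O_{i_m}$. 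This is precisely where the richness of the score set (the locally nonparametric setting in which the lemma is invoked) enters, and it is absent from your proposal; without it, (iii) and the projection identity it underwrites are unproved.
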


\begin{proof}
(i):By $IF_{m,m}\left(  \theta\right)  $ degenerate$,$\newline$E_{\theta
}\left[  IF_{m,m,\backslash l_{t},\overline{i}_{m}}\left(  \theta\right)
|O_{i_{1}},...,O_{i_{m-1}}\right]  =-E_{\theta}\left[  IF_{m,m,\overline
{i}_{m}}\left(  \theta\right)  s_{l_{t}}\left(  O_{i_{m}}\right)  |O_{i_{1}%
},...,O_{i_{m-1}}\right]  .$ Further, by definition,
\begin{align*}
&  E_{\theta}\left[  if_{1,if_{m\ ,m\ }^{sym}\left(  O_{i_{1},}..O_{i_{m}%
};\cdot\right)  }\left(  O_{i_{m+1}};\theta\right)  s_{l_{t}}\left(
O_{i_{m+1}}\right)  |O_{i_{1}},...,O_{i_{m}}\right] \\
&  =E_{\theta}\left[  IF_{m,m,\backslash l_{t},\overline{i}_{m}}\left(
\theta\right)  |O_{i_{1}},...,O_{i_{m}}\right]  .
\end{align*}
(ii): By $IF_{m,m}\left(  \theta\right)  $ degenerate $0=E_{\theta}\left[
IF_{m,m,\overline{i}_{m}}\left(  \theta\right)  s_{l_{t}}\left(  O_{i_{m}%
}\right)  |O_{i_{1}},...,O_{i_{m-2}}\right]  $ wp1 and so (ii) follows from
(i).\newline(iii): (i) and (ii) imply
\begin{align*}
0  &  =E_{\theta}\left[  if_{1,if_{m\ ,m\ }\left(  O_{i_{1},}..O_{i_{m}}%
;\cdot\right)  }\left(  O_{i_{m+1}};\theta\right)  s_{l_{t}}\left(
O_{i_{m+1}}\right)  |O_{i_{1}},...,O_{i_{m-2}}\right] \\
&  =E_{\theta}\left[
\begin{array}
[c]{c}%
E_{\theta}\left\{  if_{1,if_{m\ ,m\ }\left(  O_{i_{1},}..O_{i_{m}}%
;\cdot\right)  }\left(  O_{i_{m+1}};\theta\right)  |\left(  O_{i_{m+1}%
},O_{i_{1}},...,O_{i_{m-2}}\right)  \right\} \\
\times s_{l_{t}}\left(  O_{i_{m+1}}\right)  |O_{i_{1}},...,O_{i_{m-2}}%
\end{array}
\right]
\end{align*}
But, by $s_{l_{t}}\left(  O_{i_{m+1}}\right)  $ an arbitrary mean zero
function,
\begin{align*}
&  E_{\theta}\left\{  if_{1,if_{m\ ,m\ }\left(  O_{i_{1},}..O_{i_{m}}%
;\cdot\right)  }\left(  O_{i_{m+1}};\theta\right)  |\left(  O_{i_{m+1}%
},O_{i_{1}},...,O_{i_{m-2}}\right)  \right\} \\
&  =E_{\theta}\left\{  if_{1,if_{m\ ,m\ }\left(  O_{i_{1},}..O_{i_{m}}%
;\cdot\right)  }\left(  O_{i_{m+1}};\theta\right)  |O_{i_{1}},...,O_{i_{m-2}%
}\right\}  =0
\end{align*}
(iv): By definition, $\Pi_{\theta}\left[  \mathbb{IF}_{m,m,\backslash l_{t}%
}\left(  \theta\right)  |\mathcal{U}_{m-1}\left(  \theta\right)  \right]
=\mathbb{V}\left[  \left\{  I-d_{m,\theta}\right\}  \left\{
IF_{m,m,\backslash l_{t},\overline{i}_{m}}\left(  \theta\right)  \right\}
\right]  .$ The result follows by Eq. $\left(  \ref{deg}\right)  $ and part (ii).
\end{proof}

\begin{proof}
\textbf{Theorem 5c(ii): }Consider a $m$-dimensional parametric submodel
$f\left(  O;\widetilde{\theta}\left(  \zeta\right)  \right)  =f\left(
O;\theta\right)  \left\{  1+\sum_{l=1}^{m}\zeta_{j}a_{j}\left(  O\right)
\right\}  ,$ $\zeta^{T}=\left(  \zeta_{1},...,\zeta_{m}\right)  ,$ with
$E_{\theta}\left[  a_{l}\left(  O\right)  \right]  =0.$ Since this model is
linear in the $\zeta_{j},$ $f_{/l_{1}...l_{m}}\left(  O_{j};\theta\right)  =0$
for $m>1$. Hence $\widetilde{\mathbb{S}}_{m,\overline{l}_{m}}\left(
\theta\right)  \ $is degenerate of order $m$, i.e., $\widetilde{\mathbb{S}%
}_{m,\overline{l}_{m}}\left(  \theta\right)  \in\mathcal{U}_{m-1}^{\bot
}\left(  \theta\right)  .$ Since $\mathbb{IF}_{m-1}\left(  \theta\right)  $
exists, on setting $l_{s}=s\ $for $s=1,...,m,$
\[
\partial^{m-1}\psi\left(  \widetilde{\theta}\left(  \zeta\right)  \right)
/\prod\limits_{j=1}^{m-1}\partial\zeta_{j|\zeta=0}\equiv\psi_{\backslash
\overline{l}_{m-1}}\left(  \theta\right)  =E_{\theta}\left[  \mathbb{IF}%
_{m-1}\left(  \theta\right)  \widetilde{\mathbb{S}}_{m-1,\overline{l}_{m-1}%
}\left(  \theta\right)  \right]  .
\]
Differentiating the last display with respect to $\zeta_{m}$ and evaluating at
$\zeta=0$, we obtain
\begin{align*}
\psi_{\backslash\overline{l}_{m}}\left(  \theta\right)   &  =E_{\theta}\left[
\mathbb{IF}_{m-1}\left(  \theta\right)  \widetilde{\mathbb{S}}_{m,\overline
{l}_{m}}\left(  \theta\right)  \right]  +E_{\theta}\left[  \mathbb{IF}%
_{m-1_{,}\backslash l_{m}}\left(  \theta\right)  \widetilde{\mathbb{S}%
}_{m-1,\overline{l}_{m-1}}\left(  \theta\right)  \right] \\
&  =E_{\theta}\left[  \mathbb{IF}_{m-1_{,}\backslash l_{m}}\left(
\theta\right)  \widetilde{\mathbb{S}}_{m-1,\overline{l}_{m-1}}\left(
\theta\right)  \right]
\end{align*}
Now $E_{\theta}\left[  \mathbb{IF}_{m-1_{,}\backslash l_{m}}\left(
\theta\right)  \widetilde{\mathbb{S}}_{m-1,\overline{l}_{m-1}}\left(
\theta\right)  \right]  $\newline$=E_{\theta}\left[  \mathbb{IF}%
_{m-2_{,}\backslash l_{m}}\left(  \theta\right)  \widetilde{\mathbb{S}%
}_{m-1,\overline{l}_{m-1}}\left(  \theta\right)  \right]  +E_{\theta}\left[
\mathbb{IF}_{m-1_{,}m-1,\backslash l_{m}}\left(  \theta\right)
\widetilde{\mathbb{S}}_{m-1,\overline{l}_{m-1}}\left(  \theta\right)  \right]
.$ Setting $s_{l_{r}}\left(  O_{i_{r}},\theta\right)  =a_{r}\left(  O_{i_{r}%
}\right)  ,$ $\widetilde{\mathbb{S}}_{m-1,\overline{l}_{m-1}}\left(
\theta\right)  =\sum_{i_{1}\neq...\neq i_{m-1}}\prod\limits_{r=1}^{m-1}%
a_{r}\left(  O_{i_{r}},\theta\right)  $ is degenerate of order $m-1$ so
\begin{align*}
&  E_{\theta}\left[  \mathbb{IF}_{m-1_{,}m-1,\backslash l_{m}}\left(
\theta\right)  \widetilde{\mathbb{S}}_{m-1,\overline{l}_{m-1}}\left(
\theta\right)  \right] \\
&  =\left(  m-1\right)  !E_{\theta}\left(  \left[  if_{m-1,m-1}^{sym}%
,_{\backslash l_{m}}\left(  O_{i_{1},}..O_{i_{m-1}};\theta\right)  \right]
\prod\limits_{r=1}^{m-1}a_{r}\left(  O_{i_{r}},\theta\right)  \right)
\end{align*}
and $E_{\theta}\left[  \mathbb{IF}_{m-2_{,}\backslash l_{m}}\left(
\theta\right)  \widetilde{\mathbb{S}}_{m-1,\overline{l}_{m-1}}\left(
\theta\right)  \right]  =0.$ Hence
\[
\psi_{\backslash\overline{l}_{m}}\left(  \theta\right)  =\left(  m-1\right)
!E_{\theta}\left(  if_{m-1,m-1}^{sym},_{\backslash l_{m}}\left(  O_{i_{1}%
,}..O_{i_{m-1}};\theta\right)  \prod\limits_{r=1}^{m-1}a_{r}\left(  O_{i_{r}%
},\theta\right)  \right)
\]
Now, by the assumed existence of $\mathbb{IF}_{m}\left(  \theta\right)  $, we
also have $\psi_{\backslash\overline{l}_{m}}\left(  \theta\right)  =E_{\theta
}\left[  \mathbb{IF}_{m}\left(  \theta\right)  \widetilde{\mathbb{S}}%
_{m}\left(  \theta\right)  \right]  =m!E_{\theta}\left(  if_{m,m}^{sym}\left(
O_{i_{1},}..O_{i_{m}};\theta\right)  \prod\limits_{r=1}^{m}a_{r}\left(
O_{i_{r}},\theta\right)  \right)  $. It follows that, for any choice of $m-1$
mean zero functions $a_{r}\left(  O\right)  $ under $\theta,$
\begin{align*}
0  &  =E_{\theta}\left(  \left\{
\begin{array}
[c]{c}%
if_{m-1,m-1}^{sym},_{\backslash l_{m}}\left(  O_{i_{1},}..O_{i_{m-1}}%
;\theta\right) \\
-mE_{\theta}\left[  if_{m,m}^{sym}\left(  O_{i_{1},}..O_{i_{m}};\theta\right)
a_{m}\left(  O_{i_{m}},\theta\right)  |O_{i_{1},}..O_{i_{m-1}}\right]
\end{array}
\right\}  \times\prod\limits_{r=1}^{m-1}a_{r}\left(  O_{i_{r}},\theta\right)
\right) \\
&  =E_{\theta}\left(  r\left(  O_{i_{1},}..O_{i_{m-1}};\theta\right)
\prod\limits_{r=1}^{m-1}a_{r}\left(  O_{i_{r}},\theta\right)  \right)
\end{align*}
where
\begin{align*}
&  r\left(  O_{i_{1},}..O_{i_{m-1}};\theta\right) \\
&  \equiv d_{m-1,\theta}\left[  if_{m-1,m-1}^{sym},_{\backslash l_{m}}\left(
O_{i_{1},}..O_{i_{m-1}};\theta\right)  \right]  -mE_{\theta}\left[
if_{m,m}^{sym}\left(  O_{i_{1},}..O_{i_{m}};\theta\right)  a_{m}\left(
O_{i_{m}},\theta\right)  |O_{i_{1},}..O_{i_{m-1}}\right]
\end{align*}
The last equality follows from $if_{m-1,m-1}^{sym},_{\backslash l_{m}}\left(
O_{i_{1},}..O_{i_{m-1}};\theta\right)  -d_{m-1,\theta}\left[  if_{m-1,m-1}%
^{sym},_{\backslash l_{m}}\left(  O_{i_{1},}..O_{i_{m-1}};\theta\right)
\right]  $ orthogonal to $\prod\limits_{r=1}^{m-1}a_{r}\left(  O_{i_{r}%
},\theta\right)  .$ We conclude $r\left(  O_{i_{1},}..O_{i_{m-1}}%
;\theta\right)  =0$ with probability $1$ because $r\left(  O_{i_{1}%
,}..O_{i_{m-1}};\theta\right)  $ is a degenerate U-statistic kernel of order
$m-1$ and all degenerate U-statistics of order $m-1$ have kernels that are the
(possibly infinite) sum of products of $m-1$ mean zero functions. It follows
that, on a set $\mathcal{O}_{m-1}$ which has probability $1$ under $F^{\left(
m-1\right)  }\left(  \cdot,\theta\right)  ,$
\begin{align*}
&  if_{m-1,m-1}^{sym},_{\backslash l_{m}}\left(  o_{i_{1},}...,o_{i_{m-1}%
};\theta\right) \\
&  =E_{\theta}\left[  \left\{  m\times if_{m,m}^{sym}\left(  o_{i_{1}%
,}...,o_{i_{m-1}},O_{i_{m},};\theta\right)  a_{m}\left(  O_{i_{m}}%
,\theta\right)  \right\}  \right] \\
&  +\left\{  I-d_{m-1,\theta}\right\}  \left[  if_{m-1,m-1}^{sym},_{\backslash
l_{m}}\left(  o_{i_{1},}...,o_{i_{m-1}};\theta\right)  \right] \\
&  =E_{\theta}\left[  \left\{
\begin{array}
[c]{c}%
m\times if_{m,m}^{sym}\left(  o_{i_{1},}..o_{i_{m-1}},O;\theta\right) \\
-\sum_{j=1}^{m-1}if_{m-1,m-1}^{sym}\left(  o_{i_{1},}...,o_{i_{j-1}%
},O,o_{i_{j+1}},...,o_{i_{m-1}};\theta\right)
\end{array}
\right\}  a_{m}\left(  O,\theta\right)  \right]
\end{align*}
since, by parts (i) and (ii) of the lemma \ref{joel2} and Eq. $\left(
\ref{deg}\right)  ,$
\begin{align*}
&  \left\{  I-d_{m-1,\theta}\right\}  \left[  if_{m-1,m-1}^{sym},_{\backslash
l_{m}}\left(  o_{i_{1},}...,o_{i_{m-1}};\theta\right)  \right] \\
&  =-E_{\theta}\left[  \sum_{j=1}^{m-1}if_{m-1,m-1}^{sym}\left(  o_{i_{1}%
,}...,o_{i_{j-1}},O,o_{i_{j+1}},...,o_{i_{m-1}};\theta\right)  a_{m}\left(
O,\theta\right)  \right]
\end{align*}
Here $I$ is the identity operator. Now since the model $f\left(
O;\widetilde{\theta}\left(  \zeta\right)  \right)  =f\left(  O;\theta\right)
\left\{  1+\zeta_{m}a_{m}\left(  O\right)  \right\}  $ with $\zeta_{s}=0$ for
$s<m$ has score $a_{m}\left(  O\right)  $ and such scores are dense in the
subspace of L$_{2}\left(  F\left(  \cdot,\theta\right)  \right)  $ with mean
zero, it follows that $if_{m-1,m-1}^{sym}\left(  o_{i_{1},}..o_{i_{m-1}%
};\theta\right)  $ has influence function $m\times if_{m,m}^{sym}\left(
o_{i_{1},}..o_{i_{m-1}},O;\theta\right)  -\sum_{j=1}^{m-1}if_{m-1,m-1}%
^{sym}\left(  o_{i_{1},}...,o_{i_{j-1}},O,o_{i_{j+1}},...,o_{i_{m-1}}%
;\theta\right)  \ $on the set $\mathcal{O}_{m-1}$. Thus $m\times
if_{m,m}^{sym}\left(  o_{i_{1},}..o_{i_{m-1}},O_{i_{m}};\theta\right)
=d_{m,\theta}\left[  if_{1,if_{m-1,m-1}^{sym}\left(  o_{i_{1},}..o_{i_{m-1}%
};\cdot\right)  }\left(  O_{i_{m}};\theta\right)  \right]  .$

Below $f\left(  O;\widetilde{\theta}\left(  \zeta\right)  \right)  \ ,$
$\zeta^{T}=\left(  \zeta_{1},...,\zeta_{s}\right)  $ denotes an arbitrary
smooth $s $ -dimensional parametric submodel and $\zeta_{t}$ denotes an
arbitrary component of $\zeta.$
\end{proof}

\begin{corollary}
For $m\geq2$,
\begin{align}
\Pi_{\theta}\left[  \mathbb{IF}_{m-1,m-1,\backslash l_{t}}\left(
\theta\right)  |\mathcal{U}_{m-2}^{\bot}\left(  \theta\right)  \right]   &
=-\Pi_{\theta}\left[  \mathbb{IF}_{m,m,\backslash l_{t}}\left(  \theta\right)
|\mathcal{U}_{m-1}\left(  \theta\right)  \right] \label{aaa}\\
\mathbb{IF}_{m,\backslash l_{t}}\left(  \theta\right)   &  =\Pi_{\theta
}\left[  \mathbb{IF}_{m,m,\backslash l_{t}}\left(  \theta\right)
|\mathcal{U}_{m-1}^{\bot}\left(  \theta\right)  \right] \label{bbb}%
\end{align}%
\begin{align}
&  E_{\theta}\left[  \mathbb{IF}_{m,\backslash l_{m+1}}\left(  \theta\right)
\widetilde{\mathbb{S}}_{m,\overline{l}_{m}}\left(  \theta\right)  \right]
\nonumber\\
&  =m!E_{\theta}\left(  if_{m,m,\backslash l_{m+1}}^{sym}\left(  O_{i_{1}%
,}..O_{i_{l_{m}}};\theta\right)  \prod\limits_{r=1}^{m}S_{l_{r}}\left(
O_{i_{r}},\theta\right)  \right) \label{ccc}%
\end{align}

\end{corollary}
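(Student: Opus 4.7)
The plan is to prove the three identities sequentially, exploiting the key kernel identity $m\cdot if_{m,m}^{sym}(o_{i_1},\ldots,O_{i_m};\theta) = d_{m,\theta}[if_{1,if_{m-1,m-1}^{sym}(o_{i_1},\ldots,o_{i_{m-1}};\cdot)}(O_{i_m};\theta)]$ established in the proof of Theorem 5c(ii), together with Lemma \ref{joel2}.

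First, I would prove (\ref{aaa}) by differentiating the key identity with respect to $\zeta_t$ along an arbitrary regular submodel and then applying $E_\theta[\,\cdot\,|O_{i_1},\ldots,O_{i_{m-1}}]$. The operator $d_{m,\theta}$ has the property that $d_{m,\theta}[b]$ is degenerate in $O_{i_m}$, so its conditional expectation given $O_{i_1},\ldots,O_{i_{m-1}}$ can be computed term by term using (\ref{deg}). By Lemma \ref{joel2}(iv), $-\Pi_\theta[\mathbb{IF}_{m,m,\backslash l_t}|\mathcal{U}_{m-1}]=\mathbb{V}[m\,E_\theta[if_{m,m,\backslash l_t}^{sym}(O_{i_1},\ldots,O_{i_m};\theta)\,|\,O_{i_1},\ldots,O_{i_{m-1}}]]$. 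The same conditional expectation, evaluated on the left-hand side of the differentiated kernel identity, reproduces the $(m-1)$-th order degenerate Hoeffding component $d_{m-1,\theta}[if_{m-1,m-1,\backslash l_t}^{sym}]$ of the kernel of $\mathbb{IF}_{m-1,m-1,\backslash l_t}$, which is exactly the kernel of $\Pi_\theta[\mathbb{IF}_{m-1,m-1,\backslash l_t}|\mathcal{U}_{m-2}^\bot]$. This gives (\ref{aaa}).

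Next, I would prove (\ref{bbb}) by induction on $m$. The base case $m=1$ is immediate: $\mathbb{IF}_{1,\backslash l_t}=\mathbb{IF}_{1,1,\backslash l_t}$ and $\mathcal{U}_0^\bot$ coincides with $\mathcal{U}_1$. For the inductive step, differentiate the Hoeffding decomposition $\mathbb{IF}_m=\mathbb{IF}_{m-1}+\mathbb{IF}_{m,m}$ to obtain $\mathbb{IF}_{m,\backslash l_t}=\mathbb{IF}_{m-1,\backslash l_t}+\mathbb{IF}_{m,m,\backslash l_t}$, in which $\mathbb{IF}_{m-1,\backslash l_t}\in\mathcal{U}_{m-1}$; hence $\mathbb{IF}_{m,\backslash l_t}$ and $\mathbb{IF}_{m,m,\backslash l_t}$ have the same projection onto $\mathcal{U}_{m-1}^\bot$. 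It therefore suffices to show $\mathbb{IF}_{m,\backslash l_t}\in\mathcal{U}_{m-1}^\bot$, i.e.\ $\mathbb{IF}_{m-1,\backslash l_t}=-\Pi_\theta[\mathbb{IF}_{m,m,\backslash l_t}|\mathcal{U}_{m-1}]$. By the induction hypothesis applied at order $m-1$, $\mathbb{IF}_{m-1,\backslash l_t}=\Pi_\theta[\mathbb{IF}_{m-1,m-1,\backslash l_t}|\mathcal{U}_{m-2}^\bot]$, and by (\ref{aaa}) this equals $-\Pi_\theta[\mathbb{IF}_{m,m,\backslash l_t}|\mathcal{U}_{m-1}]$, closing the induction.

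Finally, (\ref{ccc}) follows from (\ref{bbb}) combined with the Hoeffding decomposition $\widetilde{\mathbb{S}}_{m,\bar{l}_m}=\sum_{s=1}^m\mathbb{D}_s^{(\widetilde{\mathbb{S}})}$, whose top-order degenerate component is $\sum_{i_1\neq\cdots\neq i_m}\prod_{r=1}^m S_{l_r}(O_{i_r},\theta)$. Since (\ref{bbb}) places $\mathbb{IF}_{m,\backslash l_{m+1}}$ in $\mathcal{U}_{m-1}^\bot$, it is orthogonal to every lower-order piece $\mathbb{D}_s^{(\widetilde{\mathbb{S}})}$, $s<m$, and so only the top-order component contributes to the inner product. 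Because differentiation in $\theta$ preserves symmetry, the symmetric kernel of $\mathbb{IF}_{m,\backslash l_{m+1}}$ is $if_{m,m,\backslash l_{m+1}}^{sym}$, and the standard orthogonality formula between two symmetric degenerate $m$-th order U-statistic kernels yields the factor $m!$ and gives (\ref{ccc}). The main obstacle will be the bookkeeping in the first step: one must carefully commute the $\zeta_t$-differentiation through $d_{m,\theta}$ (whose definition in (\ref{deg}) depends on $\theta$) and verify that the additional correction terms land in $\mathcal{U}_{m-2}$ and hence are annihilated by the projection onto $\mathcal{U}_{m-2}^\bot$.
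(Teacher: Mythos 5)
Your proposal is correct and takes essentially the same route as the paper's proof: (\ref{aaa}) from the Theorem 5c(ii) kernel identity combined with Lemma \ref{joel2} and conditional-expectation (Hoeffding) bookkeeping, (\ref{bbb}) by cancelling adjacent projections via (\ref{aaa}) (the paper telescopes the differentiated decomposition $\mathbb{IF}_{m,\backslash l_{t}}=\sum_{j}\mathbb{IF}_{jj,\backslash l_{t}}$, which is exactly your induction written out), and (\ref{ccc}) from (\ref{bbb}) together with orthogonality to the lower-order degenerate components of $\widetilde{\mathbb{S}}_{m,\overline{l}_{m}}$. The only tactical difference is in (\ref{aaa}): rather than differentiating the kernel identity through the $\theta$-dependent operator $d_{m,\theta}$ --- the bookkeeping you flag as the main obstacle --- the paper first uses degeneracy of $\mathbb{IF}_{m,m}$ (Lemma \ref{joel2}, part (i)) to trade the $\zeta_{t}$-derivative for multiplication by the score $s_{l_{t}}\left(O_{i_{m}}\right)$, and only then substitutes the 5c(ii) identity and conditions on $O_{i_{1}},...,O_{i_{m-1}}$, using parts (ii)--(iv) of the lemma to identify the result with the top degenerate Hoeffding component of $\mathbb{IF}_{m-1,m-1,\backslash l_{t}}$, thereby sidestepping the commutation step entirely.
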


\begin{proof}
$\left(  \ref{aaa}\right)  $:By lemma \ref{joel2} and Theorem 5c(ii)$,$
\begin{align*}
&  \Pi_{\theta}\left[  \mathbb{IF}_{m,m,\backslash l_{t}}\left(
\theta\right)  |\mathcal{U}_{m-1}\left(  \theta\right)  \right] \\
&  =\mathbb{V}\left[  mE_{\theta}\left(  IF_{m,m,\overline{i}_{m}}%
^{sym}\left(  \theta\right)  s_{l_{t}}\left(  O_{i_{m}}\right)  |O_{i_{1}%
},..,O_{i_{m-1}}\right)  \right] \\
&  =\mathbb{V}\left[  mE_{\theta}\left(  m^{-1}d_{m,\theta}\left\{
if_{1,if_{m-1,m-1}^{sym}\left(  O_{i_{1},}..O_{i_{m-1}};\cdot\right)  }\left(
O_{i_{m}};\theta\right)  \right\}  s_{l_{t}}\left(  O_{i_{m}}\right)
|O_{i_{1}},..,O_{i_{m-1}}\right)  \right]
\end{align*}
Now, by part (iii) of lemma \ref{joel2} and Eq. $\left(  \ref{deg}\right)  ,$
the RHS is
\begin{align*}
&  \mathbb{V}\left[  E_{\theta}\left(  if_{1,if_{m-1,m-1}^{sym}\left(
O_{i_{1},}..O_{i_{m-1}};\cdot\right)  }\left(  O_{i_{m}};\theta\right)
s_{l_{t}}\left(  O_{i_{m}}\right)  |O_{i_{1}},..,O_{i_{m-1}}\right)  \right]
\\
&  -\mathbb{V}\left\{  E\left[  E\left[  \left(  m-1\right)  E\left[
if_{1,if_{m-1,m-1}^{sym}\left(  O_{i_{1},}..O_{i_{m-1}};\cdot\right)  }\left(
O_{i_{m}};\theta\right)  |O_{i_{m}},O_{i_{1}},..,O_{i_{m-2}}\right]  \right]
s_{l_{t}}\left(  O_{i_{m}}\right)  |O_{i_{1}},..,O_{i_{m-1}}\right]  \right\}
\\
&  =\mathbb{V}\left[  IF_{m-1,m-1,\backslash l_{t}}^{sym}\left(
\theta\right)  \right]  -\mathbb{V}\left\{  \left(  m-1\right)  E_{\theta
}\left[  IF_{m-1,m-1,\backslash l_{t}}^{sym}\left(  \theta\right)  |O_{i_{1}%
},...,O_{i_{m-2}}\right]  \right\}
\end{align*}
On the other hand, by part (iv) of the lemma \ref{joel2}, $\Pi_{\theta}\left[
\mathbb{IF}_{m-1,m-1,\backslash l_{t}}\left(  \theta\right)  |\mathcal{U}%
_{m-2}^{\bot}\left(  \theta\right)  \right]  =\mathbb{V}\left[
IF_{m-1,m-1,\backslash l_{t}}\left(  \theta\right)  \right]  -\mathbb{V}%
\left[  \left(  m-1\right)  E_{\theta}\left[  IF_{m-1,m-1,\backslash
l_{t},\overline{i}_{m-1}}^{sym}\left(  \theta\right)  |O_{i_{1}}%
,...,O_{i_{m-2}}\right]  \right]  . $

$\left(  \ref{bbb}\right)  :$ Write
\begin{align*}
\mathbb{IF}_{m,\backslash l_{t}}\left(  \theta\right)   &  =\Pi_{\theta
}\left[  \mathbb{IF}_{m,m,\backslash l_{t}}\left(  \theta\right)
|\mathcal{U}_{m-1}^{\bot}\left(  \theta\right)  \right]  +\left\{  \Pi\left[
\mathbb{IF}_{2,2,\backslash l_{t}}\left(  \theta\right)  |\mathcal{U}%
_{1}\left(  \theta\right)  \right]  +\mathbb{IF}_{1,\backslash l_{t}}\left(
\theta\right)  \right\} \\
&  +\sum_{j=2}^{m-1}\left\{  \Pi\left[  \mathbb{IF}_{j+1,j+1,\backslash l_{t}%
}\left(  \theta\right)  |\mathcal{U}_{j}\left(  \theta\right)  \right]
+\Pi\left[  \mathbb{IF}_{jj,\backslash l_{t}}\left(  \theta\right)
|\mathcal{U}_{j-1}^{\bot}\left(  \theta\right)  \right]  \right\}
\end{align*}
The RHS is $\Pi_{\theta}\left[  \mathbb{IF}_{m,m,\backslash l_{t}}\left(
\theta\right)  |\mathcal{U}_{m-1}^{\bot}\left(  \theta\right)  \right]  $ by
eq. $\left(  \ref{aaa}\right)  .$

$\left(  \ref{ccc}\right)  :$ $E_{\theta}\left[  \mathbb{IF}_{m,\backslash
l_{m+1}}\left(  \theta\right)  \widetilde{\mathbb{S}}_{m,\overline{l}_{m}%
}\left(  \theta\right)  \right]  =E_{\theta}\left[  \Pi\left[  \mathbb{IF}%
_{m,m,\backslash l_{m+1}}\left(  \theta\right)  |\mathcal{U}_{m-1}^{\bot
}\left(  \theta\right)  \right]  \widetilde{\mathbb{S}}_{m,\overline{l}_{m}%
}\left(  \theta\right)  \right]  $ by eq. $\left(  \ref{bbb}\right)  .$ But
the RHS of this equation is the RHS of eq. $\left(  \ref{ccc}\right)  .$
\end{proof}

\begin{proof}
(\textbf{Theorem} \textbf{5c(i)): }By assumption $\psi_{\backslash\overline
{l}_{m-1}}\left(  \theta\right)  =E_{\theta}\left(  \mathbb{IF}_{m-1}\left(
\theta\right)  \widetilde{\mathbb{S}}_{m-1,\overline{l}_{m-1}}\left(
\theta\right)  \right)  .$ Hence
\[
\psi_{\backslash\overline{l}_{m\ }}\left(  \theta\right)  =E_{\theta}\left(
\mathbb{IF}_{m-1}\left(  \theta\right)  \widetilde{\mathbb{S}}_{m,\overline
{l}_{m}}\left(  \theta\right)  \right)  +E_{\theta}\left[  \mathbb{IF}%
_{m-1,\backslash l_{m}}\left(  \theta\right)  \widetilde{\mathbb{S}%
}_{m-1,\overline{l}_{m-1}}\left(  \theta\right)  \right]
\]
By eq. $\left(  \ref{ccc}\right)  ,$ and the assumption $if_{m-1\ ,m-1\ }%
^{sym}\left(  O_{i_{1},}..O_{i_{m}};\theta\right)  $ has an influence
function, we obtain
\begin{align*}
&  E_{\theta}\left[  \mathbb{IF}_{m-1,\backslash l_{m}}\left(  \theta\right)
\widetilde{\mathbb{S}}_{m-1,\overline{l}_{m-1}}\left(  \theta\right)  \right]
\\
&  =\left(  m-1\right)  !E_{\theta}\left(  if_{1if_{m-1,m-1}^{sym}\left(
O_{i_{1},}..O_{i_{m-1}};\cdot\right)  }\left(  O_{i_{m}},\theta\right)
S_{l_{m}}\left(  O_{i_{m}},\theta\right)  \prod\limits_{r=1}^{m-1}S_{l_{r}%
}\left(  O_{i_{r}},\theta\right)  \right)  .
\end{align*}
We conclude that $\mathbb{IF}_{m,m}$ exists and equals $\mathbb{V}\left[
m^{-1}d_{m,\theta}\left\{  if_{1if_{m-1,m-1}^{sym}\left(  O_{i_{1}%
,}..O_{i_{m-1}};\cdot\right)  }\left(  O_{i_{m}},\theta\right)  \right\}
\right]  .$
\end{proof}

Below is an alternative proof of Theorem 5c (i) and (ii)

\begin{proof}
First we show that for any $j$-dimensional parametric submodel
$\widetilde{\theta}\left(  \zeta\right)  ,$
\[
\frac{\partial\left(  \mathbb{IF}_{j,\psi}\left(  \theta\right)  +\psi\left(
\theta\right)  \right)  }{\partial l_{j}}\in\mathcal{U}_{j-1}^{\bot_{j,\theta
}}\left(  \theta\right)
\]
where $\mathcal{U}_{0}\left(  \theta\right)  =\phi.$ {}From eq
(\ref{interinfo0}) we know that
\[
E_{\theta}\left(  \frac{\partial\left(  \mathbb{IF}_{j,\psi}\left(
\theta\right)  +\psi\left(  \theta\right)  \right)  }{\partial l_{j}%
}\widetilde{\mathbb{S}}_{r,\overline{l}_{r}}\left(  \theta\right)  \right)
=E_{\theta}\left(  \frac{\partial\left(  \mathbb{IF}_{j,\psi}\left(
\theta\right)  \right)  }{\partial l_{r}}\widetilde{\mathbb{S}}_{r,\overline
{l}_{r}}\left(  \theta\right)  \right)  =0
\]
for all $1\leq r<j.$ Since $\mathcal{M}\left(  \Theta\right)  $ is locally
nonparametric, i.e., $\mathcal{U}_{j-1}\left(  \theta\right)  =\Gamma
_{j-1}\left(  \theta\right)  ,$ and
\[
E_{\theta}\left(  \frac{\partial\left(  \mathbb{IF}_{j,\psi}\left(
\theta\right)  +\psi\left(  \theta\right)  \right)  }{\partial l_{j}}\right)
=-E_{\theta}\left(  \mathbb{IF}_{j,\psi}\left(  \theta\right)
\widetilde{\mathbb{S}}_{1,l_{j}}\left(  \theta\right)  \right)  +\psi
_{\backslash l_{j}}=0
\]
therefore we have
\begin{equation}
\frac{\partial\left(  \mathbb{IF}_{j,\psi}\left(  \theta\right)  +\psi\left(
\theta\right)  \right)  }{\partial l_{j}}\in\mathcal{U}_{j-1}^{\bot_{j,\theta
}}\left(  \theta\right) \label{eqn1proof}%
\end{equation}

ii) If $\mathbb{IF}_{m}$ exists, we have $\mathbb{IF}_{m}=\mathbb{V}\left(
IF_{m}\right)  $ with $IF_{m}$ symmetric, such that%
\begin{align*}
\psi_{\backslash\overline{l}_{m}}\left(  \theta\right)   &  =E\left(
\mathbb{IF}_{m}\mathbb{S}_{m,\overline{l}_{m}}\right) \\
&  =E_{\theta}\left(  \mathbb{V}\left[  IF_{m}^{c}\left(  \theta\right)
\right]  \mathbb{D}_{m}^{(\mathbb{S}_{m,\overline{l}_{m}})}\right) \\
&  +E_{\theta}\left(  \mathbb{V}\left(  mE\left[  if_{m}\left(  O_{i_{1}%
},...,O_{i_{m}};\theta\right)  |O_{i_{1}},...O_{i_{m-1}}\right]  ^{c}\right)
\mathbb{D}_{m-1}^{(\mathbb{S}_{m,\overline{l}_{m}})}\right) \\
&  +E\left(  \mathbb{IF}_{m-2}\mathbb{S}_{m,\overline{l}_{m}}\right)
\end{align*}
Therefore, $if_{m-1,m-1}\left(  O_{i_{1}},...,O_{i_{m-1}};\theta\right)
=mE\left[  if_{m}\left(  O_{i_{1}},...,O_{i_{m}};\theta\right)  |O_{i_{1}%
},...O_{i_{m-1}}\right]  ^{c}$ wp1$.$ \ So that the lhs has an influence
function since:
\begin{align*}
&  \left\{  E_{\theta}\left[  if_{m}\left(  O_{i_{1}},...,O_{i_{m}}%
;\theta\right)  |o_{i_{1}},...o_{i_{m-1}}\right]  ^{c}\right\}  _{\backslash
l_{m}}\\
&  =\left\{  E_{\theta}\left[  if_{m}\left(  O_{i_{1}},...,O_{i_{m}}%
;\theta\right)  |o_{i_{1}},...o_{i_{m-1}}\right]  \right\}  _{\backslash
l_{m}}\\
&  +\sum_{t=0}^{m-2}\left(  -1\right)  ^{m-1-t}\sum_{\substack{i_{r_{1}}\neq
i_{r_{2}}..\neq i_{r_{t}} \\\overline{i}_{r_{t}}\subset\overline{i}_{m-1}
}}E_{\theta}\left(  if_{m}\left(  O_{i_{1}},O_{i_{2}},..O_{i_{m}}%
;\theta\right)  |o_{i_{r_{1}}},o_{i_{r_{2}}},..o_{i_{r_{t}}}\right)
_{\backslash l_{m}}\\
&  =\left\{  E_{\theta}\left[  if_{m}\left(  O_{i_{1}},...,O_{i_{m}}%
;\theta\right)  S_{l_{m}}\left(  O_{i_{m}}\right)  |o_{i_{1}},...o_{i_{m-1}%
}\right]  \right\} \\
&  +\sum_{t=0}^{m-2}\left[
\begin{array}
[c]{c}%
\left(  -1\right)  ^{m-1-t}\left(  m-t\right) \\
\sum_{\substack{i_{r_{1}}\neq i_{r_{2}}..\neq i_{r_{t}} \\\overline{i}_{r_{t}%
}\subset\overline{i}_{m-1}}}E\left(
\begin{array}
[c]{c}%
E\left[  if_{m}\left(  O_{i_{1}},...,O_{i_{m}};\theta\right)  |o_{i_{r_{1}}%
},o_{i_{t}},..O_{i_{t+1}}\right]  S_{l_{t+1}}\left(  O_{i_{t+1}}\right) \\
|o_{i_{r_{1}}},o_{i_{t}},..o_{i_{t}}%
\end{array}
\right)
\end{array}
\right]
\end{align*}
by equation \ref{eqn1proof}.

i) if the first order influence function of $if_{m-1,m-1,\theta}\left(
O_{i_{1},}..O_{i_{m-1}}\right)  ,$
\[
if_{1,if_{m-1,m-1,\theta}\left(  O_{i_{1},}..O_{i_{m-1}};\cdot\right)
}\left(  O_{i_{m}};\theta\right)
\]
exists, then
\begin{gather*}
\psi_{\backslash\overline{l}_{m}}\left(  \theta\right)  =E\left(
\mathbb{IF}_{m-1}\mathbb{S}_{m,\overline{l}_{m}}\right)  +\\
E_{\theta}\left(  \left(  m-1\right)  !if_{1,if_{m-1,m-1,\theta}\left(
O_{i_{1},}..O_{i_{m-1}};\cdot\right)  }\left(  O_{i_{m}},\theta\right)
%TCIMACRO{\tprod \limits_{r=1}^{m}}%
%BeginExpansion
{\textstyle\prod\limits_{r=1}^{m}}
%EndExpansion
S_{l_{r}}\left(  O_{i_{r}},\theta\right)  \right) \\
=E_{\theta}\left(  \left(  m-1\right)  !if_{1,if_{m-1,m-1,\theta}\left(
O_{i_{1},}..O_{i_{m-1}};\cdot\right)  }^{c}\left(  O_{i_{m}},\theta\right)
%TCIMACRO{\tprod \limits_{r=1}^{m}}%
%BeginExpansion
{\textstyle\prod\limits_{r=1}^{m}}
%EndExpansion
S_{l_{r}}\left(  O_{i_{r}},\theta\right)  \right)
\end{gather*}

If we switch the order of differentiating ${}l_{m}$ with $l_{j}$ $\left(
1\leq j\leq m-1\right)  ,$ since
\[
if_{m-1,m-1}\left(  O_{i_{1}},..O_{i_{m-1}},\theta\right)
\]
is symmetric, we will have
\begin{gather*}
\psi_{\backslash\overline{l}_{m}}\left(  \theta\right)  =E\left(
\mathbb{IF}_{m-1}\mathbb{S}_{m,\overline{l}_{m}}\right)  +\\
E_{\theta}\left(  \left(  m-1\right)  !if_{1,if_{m-1,m-1,\theta}\left(
O_{i_{1},}.O_{i_{m}}.O_{i_{m-1}};\cdot\right)  }^{c}\left(  O_{i_{j}}%
,\theta\right)
%TCIMACRO{\tprod \limits_{r=1}^{m}}%
%BeginExpansion
{\textstyle\prod\limits_{r=1}^{m}}
%EndExpansion
S_{l_{r}}\left(  O_{i_{r}},\theta\right)  \right)
\end{gather*}

which means
\begin{gather*}
E_{\theta}\left(  \left[
\begin{array}
[c]{c}%
if_{1,if_{m-1,m-1,\theta}\left(  O_{i_{1},}..O_{i_{m-1}};\cdot\right)  }%
^{C}\left(  O_{i_{m}},\theta\right) \\
-if_{1,if_{m-1,m-1,\theta}\left(  O_{i_{1},}.O_{i_{m}}.O_{i_{m-1}}%
;\cdot\right)  }^{C}\left(  O_{i_{j}},\theta\right)
\end{array}
\right]
%TCIMACRO{\tprod \limits_{r=1}^{m}}%
%BeginExpansion
{\textstyle\prod\limits_{r=1}^{m}}
%EndExpansion
S_{l_{r}}\left(  O_{i_{r}},\theta\right)  \right)  =0\\
\Longleftrightarrow if_{1,if_{m-1,m-1,\theta}\left(  O_{i_{1},}..O_{i_{m-1}%
};\cdot\right)  }^{C}\left(  O_{i_{m}},\theta\right)
=if_{1,if_{m-1,m-1,\theta}\left(  O_{i_{1},}.O_{i_{m}}.O_{i_{m-1}}%
;\cdot\right)  }^{C}\left(  O_{i_{j}},\theta\right)  \text{ w.p.1 }%
\end{gather*}

therefore
\[
\psi_{\backslash\overline{l}_{m}}\left(  \theta\right)  =E\left(
\mathbb{IF}_{m-1}\mathbb{S}_{m,\overline{l}_{m}}\right)  +E\left(
\mathbb{IF}_{m,m}\left(  \theta\right)  \mathbb{S}_{m,\overline{l}_{m}%
}\right)
\]

with
\[
\mathbb{IF}_{m,m}\equiv\frac{1}{m}\mathbb{V}\left(  if_{1,if_{m-1,m-1}\left(
O_{i_{1}},...,O_{i_{m-1}};\cdot\right)  \ {}}^{c}\left(  O_{i_{m}}%
;\theta\right)  \right)
\]

\end{proof}

\bigskip

\begin{proof}
\bigskip(Proof of eq. (\ref{imm})) We have proved in the text the following
results that will be used repeatedly throughout the proof:%
\[
v\left(  x;\theta\right)  K_{f_{X},\infty}\left(  x,X\right)  =v\left(
X;\theta\right)  K_{f_{X},\infty}\left(  x,X\right)  ,
\]%
\begin{align*}
&  if_{1,b\left(  x_{i_{1}};\cdot\right)  }\left(  O_{i_{2}};\theta\right) \\
&  =-E_{\theta}\left[  H_{1}|x_{i_{1}}\right]  ^{-\frac{1}{2}}K_{f_{X},\infty
}\left(  x_{i_{1}},X_{i_{2}}\right)  E_{\theta}\left[  H_{1}|X_{i_{2}}\right]
^{-\frac{1}{2}}\varepsilon_{b,i_{2}}\left(  \theta\right) \\
&  =-\frac{K_{Leb,\infty}\left(  x_{i_{1}},X_{i_{2}}\right)  }{g\left(
x_{i_{1}}\right)  ^{\frac{1}{2}}g\left(  X_{i_{2}}\right)  ^{\frac{1}{2}}%
}\varepsilon_{b,i_{2}}\left(  \theta\right)  ,
\end{align*}%
\begin{align*}
&  if_{1,p\left(  x_{i_{1}};\cdot\right)  }\left(  O_{i_{2}};\theta\right) \\
&  =-E_{\theta}\left[  H_{1}|x_{i_{1}}\right]  ^{-\frac{1}{2}}K_{f_{X},\infty
}\left(  x_{i_{1}},X_{i_{2}}\right)  E_{\theta}\left[  H_{1}|X_{i_{2}}\right]
^{-\frac{1}{2}}\varepsilon_{p,i_{2}}\left(  \theta\right) \\
&  =-\frac{K_{Leb,\infty}\left(  x_{i_{1}},X_{i_{2}}\right)  }{g\left(
x_{i_{1}}\right)  ^{\frac{1}{2}}g\left(  X_{i_{2}}\right)  ^{\frac{1}{2}}%
}\varepsilon_{p,i_{2}}\left(  \theta\right)  ,
\end{align*}
and
\[
IF_{1,K_{f_{X},\infty}\left(  X_{i_{1}},X_{i_{2}}\right)  }=-\left\{
K_{f_{X},\infty}\left(  X_{i_{1}},X_{i_{3}}\right)  K_{f_{X},\infty}\left(
X_{i_{3}},X_{i_{2}}\right)  -K_{f_{X},\infty}\left(  X_{i_{1}\ },X_{i_{2}%
}\right)  \right\}  .
\]

In addition, by an analogous argument, we can also show that
\begin{align*}
&  if_{1,g\left(  x;\cdot\right)  }\left(  O\right) \\
&  =IF_{1,E_{\theta}\left[  H_{1}|x\right]  }f_{X}\left(  x\right)
+E_{\theta}\left[  H_{1}|x\right]  IF_{1,f_{X}\left(  x;\cdot\right)  }\\
&  =\left(  H_{1}-E\left[  H_{1}|X\right]  \right)  K_{Leb,\infty}\left(
X,x\right)  +E_{\theta}\left[  H_{1}|x\right]  \left(  K_{Leb,\infty}\left(
X,x\right)  -f_{X}\left(  x\right)  \right) \\
&  =H_{1}K_{Leb,\infty}\left(  X,x\right)  -g\left(  x\right)
\end{align*}
\newline

Now, we are ready to prove that eq. (\ref{imm}) holds for any $m\geq2$ by induction.

i) The case where $m=2$ was proved in the text.

ii) We now assume eq. (\ref{imm}) holds for $m$ for some $m\geq2,$ and shall
prove it is also true for $m+1.$ $\ $By assumption,%
\begin{align*}
&  IF_{m,m,\psi,\overline{i}_{m}}\left(  \theta\right) \\
&  =\varepsilon_{b,i_{1}}\left(  \theta\right)  g\left(  X_{i_{1}}\right)
^{-\frac{1}{2}}\left[
\begin{array}
[c]{c}%
\sum_{j=0}^{m-2}c(m,j)\times\\
\prod\limits_{s=1}^{j}\frac{H_{1,i_{s+1}}}{g\left(  X_{i_{s+1}}\right)
}K_{Leb,\infty}\left(  X_{i_{s}},X_{i_{s+1}}\right) \\
\times K_{Leb,\infty}\left(  X_{i_{j+1}},X_{i_{m}}\right)
\end{array}
\right]  g\left(  X_{i_{m}}\right)  ^{-\frac{1}{2}}\varepsilon_{p,i_{m}%
}\left(  \theta\right)
\end{align*}

Following the results from part 5c) of Theorem \ref{eift}, $\ $%
\ $IF_{m+1,m+1,\psi,\overline{i}_{m}}\left(  \theta\right)  $ exists if and
only if $if_{1,IF_{m,m,\psi,\overline{i}_{m}}\left(  \cdot\right)  }$ exists.
\ By the chain rule,%
\begin{align*}
&  if_{1,if_{m,m,\psi,\overline{i}_{m}}\left(  \mathbf{O}_{i_{m}}\cdot\right)
}\left(  O_{i_{m+1}};\theta\right) \\
&  =\left\{
\begin{array}
[c]{c}%
\left(
\begin{array}
[c]{c}%
H_{1,i_{1}}if_{1,\varepsilon_{b,i_{1}}\left(  \cdot\right)  }\left(
O_{i_{m+1}}\right)  \varepsilon_{p,i_{m}}\left(  \theta\right)  +\\
H_{1,i_{m}}\varepsilon_{b,i_{1}}\left(  \theta\right)  if_{1,}\varepsilon
_{p,i_{m}}\left(  \cdot\right)  \left(  O_{i_{m+1}}\right)
\end{array}
\right)  g\left(  X_{i_{1}}\right)  ^{-\frac{1}{2}}g\left(  X_{i_{m}}\right)
^{-\frac{1}{2}}\\
-\frac{1}{2}\varepsilon_{b,i_{1}}\left(  \theta\right)  \varepsilon_{p,i_{m}%
}\left(  \theta\right)  g\left(  X_{i_{1}}\right)  ^{-\frac{1}{2}}g\left(
X_{i_{m}}\right)  ^{-\frac{1}{2}}\times\\
\left[  \frac{if_{1,g\left(  X_{i_{1}}\right)  }\left(  O_{i_{m+1}}\right)
}{g\left(  X_{i_{1}}\right)  }+\frac{if_{1,g\left(  X_{i_{m}}\right)  }\left(
O_{i_{m+1}}\right)  }{g\left(  X_{i_{m}}\right)  }\right]
\end{array}
\right\}  \times\\
&  \left[
\begin{array}
[c]{c}%
\sum_{j=0}^{m-2}c(m,j)\times\\
\prod\limits_{s=1}^{j}\frac{H_{1,i_{s+1}}}{g\left(  X_{i_{s+1}}\right)
}K_{Leb,\infty}\left(  X_{i_{s}},X_{i_{s+1}}\right) \\
\times K_{Leb,\infty}\left(  X_{i_{j+1}},X_{i_{m}}\right)
\end{array}
\right] \\
&  -\varepsilon_{b,i_{1}}\left(  \theta\right)  \varepsilon_{p,i_{m}}\left(
\theta\right)  g\left(  X_{i_{1}}\right)  ^{-\frac{1}{2}}g\left(  X_{i_{m}%
}\right)  ^{-\frac{1}{2}}\sum_{j=0}^{m-2}c(m,j)\times\\
&  \left[
\begin{array}
[c]{c}%
\sum_{t=1}^{j}\frac{H_{1,i_{t+1}}K_{Leb,\infty}\left(  X_{i_{t}},X_{i_{t+1}%
}\right)  }{g^{2}\left(  X_{i_{t+1}}\right)  }if_{1,g\left(  X_{i_{t+1}%
}\right)  }\left(  O_{i_{m+1}}\right)  \times\\
\prod\limits_{s\neq t}\frac{H_{1,i_{s+1}}}{g\left(  X_{i_{s+1}}\right)
}K_{Leb,\infty}\left(  X_{i_{s}},X_{i_{s+1}}\right)  K_{Leb,\infty}\left(
X_{i_{j+1}},X_{i_{m}}\right)
\end{array}
\right]
\end{align*}%
\[
=
\]%
\begin{align*}
&  \left\{
\begin{array}
[c]{c}%
-\frac{H_{1,i_{1}}K_{Leb,\infty}\left(  X_{i_{1}},X_{i_{m+1}}\right)
}{g\left(  X_{i_{1}}\right)  g\left(  X_{i_{m+1}}\right)  ^{\frac{1}{2}%
}g\left(  X_{i_{m}}\right)  ^{\frac{1}{2}}}\varepsilon_{b,i_{m+1}}\left(
\theta\right)  \varepsilon_{p,i_{m}}\left(  \theta\right) \\
-\varepsilon_{b,i_{1}}\left(  \theta\right)  \frac{H_{1,i_{m}}K_{Leb,\infty
}\left(  X_{i_{m}},X_{i_{m+1}}\right)  }{g\left(  X_{i_{m}}\right)  g\left(
X_{i_{1}}\right)  ^{\frac{1}{2}}g\left(  X_{i_{m+1}}\right)  ^{\frac{1}{2}}%
}\varepsilon_{p,i_{m+1}}\\
-\frac{1}{2}\varepsilon_{b,i_{1}}\left(  \theta\right)  \varepsilon_{p,i_{m}%
}\left(  \theta\right)  g\left(  X_{i_{1}}\right)  ^{-\frac{1}{2}}g\left(
X_{i_{m}}\right)  ^{-\frac{1}{2}}\times\\
\left[  \frac{H_{1,i_{m+1}}K_{Leb,\infty}\left(  X_{i_{m+1}},X_{i_{1}}\right)
}{g\left(  X_{i_{1}}\right)  }+\frac{H_{1,i_{m+1}}K_{Leb,\infty}\left(
X_{i_{m+1}},X_{i_{m}}\right)  }{g\left(  X_{i_{m}}\right)  }-2\right]
\end{array}
\right\} \\
\times &  \left[
\begin{array}
[c]{c}%
\sum_{j=0}^{m-2}c(m,j)\times\\
\prod\limits_{s=1}^{j}\frac{H_{1,i_{s+1}}}{g\left(  X_{i_{s+1}}\right)
}K_{Leb,\infty}\left(  X_{i_{s}},X_{i_{s+1}}\right) \\
\times K_{Leb,\infty}\left(  X_{i_{j+1}},X_{i_{m}}\right)
\end{array}
\right] \\
&  -\varepsilon_{b,i_{1}}\left(  \theta\right)  \varepsilon_{p,i_{m}}\left(
\theta\right)  g\left(  X_{i_{1}}\right)  ^{-\frac{1}{2}}g\left(  X_{i_{m}%
}\right)  ^{-\frac{1}{2}}\sum_{j=0}^{m-2}c(m,j)\times\\
&  \left[
\begin{array}
[c]{c}%
\sum_{t=1}^{j}\frac{H_{1,i_{t+1}}K_{Leb,\infty}\left(  X_{i_{t}},X_{i_{t+1}%
}\right)  }{g^{2}\left(  X_{i_{t+1}}\right)  }\left(  H_{1,i_{m+1}%
}K_{Leb,\infty}\left(  X_{i_{m+1}},X_{i_{t+1}}\right)  -g\left(  X_{i_{t+1}%
}\right)  \right)  \times\\
\prod\limits_{s\neq t}\frac{H_{1,i_{s+1}}}{g\left(  X_{i_{s+1}}\right)
}K_{Leb,\infty}\left(  X_{i_{s}},X_{i_{s+1}}\right)  K_{Leb,\infty}\left(
X_{i_{j+1}},X_{i_{m}}\right)
\end{array}
\right]
\end{align*}%
\[
=
\]
\newline%
\begin{align*}
&  +\sum_{j=0}^{m-2}\left(  -1\right)  ^{j}\binom{m-2}{j}\left(
\begin{array}
[c]{c}%
\varepsilon_{b,i_{m+1}}\left(  \theta\right)  g\left(  X_{i_{m+1}}\right)
^{-\frac{1}{2}}\frac{H_{1,i_{1}}K_{Leb,\infty}\left(  X_{i_{m+1}},X_{i_{1}%
}\right)  }{g\left(  X_{i_{1}}\right)  }\\
\prod\limits_{s=1}^{j}\frac{H_{1,i_{s+1}}}{g\left(  X_{i_{s+1}}\right)
}K_{Leb,\infty}\left(  X_{i_{s}},X_{i_{s+1}}\right)  K_{Leb,\infty}\left(
X_{i_{j+1}},X_{i_{m}}\right) \\
\times g\left(  X_{i_{m}}\right)  ^{-\frac{1}{2}}\varepsilon_{p,i_{m}}\left(
\theta\right)
\end{array}
\right) \\
&  +\sum_{j=0}^{m-2}\left(  -1\right)  ^{j}\binom{m-2}{j}\left(
\begin{array}
[c]{c}%
\varepsilon_{b,i_{1}}\left(  \theta\right)  g\left(  X_{i_{1}}\right)
^{-\frac{1}{2}}\prod\limits_{s=1}^{j}\frac{H_{1,i_{s+1}}}{g\left(  X_{i_{s+1}%
}\right)  }K_{Leb,\infty}\left(  X_{i_{s}},X_{i_{s+1}}\right) \\
\frac{H_{1,i_{m}}K_{Leb,\infty}\left(  X_{i_{j+1}},X_{i_{m}}\right)
}{g\left(  X_{i_{m}}\right)  }K_{Leb,\infty}\left(  X_{i_{m}},X_{i_{m+1}%
}\right)  ^{\frac{1}{2}}\\
\times g\left(  X_{i_{m+1}}\right)  ^{-\frac{1}{2}}\varepsilon_{p,i_{m+1}}%
\end{array}
\right) \\
&  +\frac{1}{2}\sum_{j=0}^{m-2}\left(  -1\right)  ^{j}\binom{m-2}{j}\left\{
\begin{array}
[c]{c}%
\varepsilon_{b,i_{1}}\left(  \theta\right)  \varepsilon_{p,i_{m}}\left(
\theta\right)  \left(
\begin{array}
[c]{c}%
\frac{H_{1,i_{m+1}}K_{Leb,\infty}\left(  X_{i_{m+1}},X_{i_{1}}\right)
}{g\left(  X_{i_{1}}\right)  }\\
+\frac{H_{1,i_{m+1}}K_{Leb,\infty}\left(  X_{i_{m+1}},X_{i_{j+1}}\right)
}{g\left(  X_{i_{m+1}}\right)  }%
\end{array}
\right) \\
\times\prod\limits_{s=1}^{j}\frac{H_{1,i_{s+1}}}{g\left(  X_{i_{s+1}}\right)
}K_{Leb,\infty}\left(  X_{i_{s}},X_{i_{s+1}}\right)  \times\\
K_{Leb,\infty}\left(  X_{i_{j+1}},X_{i_{m}}\right)  g\left(  X_{i_{1}}\right)
^{-\frac{1}{2}}g\left(  X_{i_{m}}\right)  ^{-\frac{1}{2}}%
\end{array}
\right\} \\
&  +IF_{m,m,\psi,\overline{i}_{m}}\left(  \theta\right) \\
&  +\sum_{j=0}^{m-2}\left(  -1\right)  ^{j}\binom{m-2}{j}\left[  \sum
_{t=1}^{j}\left(
\begin{array}
[c]{c}%
\varepsilon_{b,i_{1}}\left(  \theta\right)  \varepsilon_{p,i_{m}}\left(
\theta\right)  g\left(  X_{i_{1}}\right)  ^{-\frac{1}{2}}g\left(  X_{i_{m}%
}\right)  ^{-\frac{1}{2}}\times\\
\frac{H_{1,i_{t+1}}K_{Leb,\infty}\left(  X_{i_{t}},X_{i_{t+1}}\right)
}{g\left(  X_{i_{t+1}}\right)  }H_{1,i_{m+1}}\frac{K_{Leb,\infty}\left(
X_{i_{t+1}},X_{i_{m+1}}\right)  }{g\left(  X_{i_{m+1}}\right)  }\\
\times\prod\limits_{s\neq t}\frac{H_{1,i_{s+1}}}{g\left(  X_{i_{s+1}}\right)
}K_{Leb,\infty}\left(  X_{i_{s}},X_{i_{s+1}}\right)  K_{Leb,\infty}\left(
X_{i_{j+1}},X_{i_{m}}\right)
\end{array}
\right)  \right] \\
&  -\sum_{j=0}^{m-2}\left(  -1\right)  ^{j}\binom{m-2}{j}j\left(
\begin{array}
[c]{c}%
\varepsilon_{b,i_{1}}\left(  \theta\right)  \varepsilon_{p,i_{m}}\left(
\theta\right)  g\left(  X_{i_{1}}\right)  ^{-\frac{1}{2}}g\left(  X_{i_{m}%
}\right)  ^{-\frac{1}{2}}\times\\
\prod\limits_{s=1}^{j}\frac{H_{1,i_{s+1}}}{g\left(  X_{i_{s+1}}\right)
}K_{Leb,\infty}\left(  X_{i_{s}},X_{i_{s+1}}\right)  K_{Leb,\infty}\left(
X_{i_{j+1}},X_{i_{m}}\right)
\end{array}
\right)
\end{align*}

Applying the operator $d_{m+1,\theta},$ which is defined in Eq$.\left(
\ref{deg}\right)  ,$ on the statistic above, it is straightforward to show
that
\begin{gather*}
\mathbb{IF}_{m+1,m+1,\psi,\overline{i}_{m+1}}\left(  \theta\right)  =\frac
{1}{m+1}\left(  \mathbb{V}\left\{  d_{m+1,\theta}\left[  if_{1,IF_{m,m,\psi
,\overline{i}_{m}}\left(  \cdot\right)  }\left(  \theta\right)  \right]
\right\}  \right) \\
=\mathbb{V}\left\{  d_{m+1,\theta}\left[
\begin{array}
[c]{c}%
\varepsilon_{b,i_{1}}\left(  \theta\right)  g\left(  X_{i_{1}}\right)
^{-\frac{1}{2}}\prod\limits_{s=1}^{m-1}\frac{H_{1,i_{s+1}}}{g\left(
X_{i_{s+1}}\right)  }K_{Leb,\infty}\left(  X_{i_{s}},X_{i_{s+1}}\right) \\
\times K_{Leb,\infty}\left(  X_{i_{m}},X_{i_{m+1}}\right)  g\left(
X_{i_{m+1}}\right)  ^{-\frac{1}{2}}\varepsilon_{p,i_{m+1}}\left(
\theta\right)
\end{array}
\right]  \right\} \\
=\mathbb{V}\left\{
\begin{array}
[c]{c}%
\varepsilon_{b,i_{1}}\left(  \theta\right)  g\left(  X_{i_{1}}\right)
^{-\frac{1}{2}}\left[
\begin{array}
[c]{c}%
\sum_{j=0}^{m-1}c(m+1,j)\times\\
\prod\limits_{s=1}^{j}\frac{H_{1,i_{s+1}}}{g\left(  X_{i_{s+1}}\right)
}K_{Leb,\infty}\left(  X_{i_{s}},X_{i_{s+1}}\right) \\
\times K_{Leb,\infty}\left(  X_{i_{j+1}},X_{i_{m+1}}\right)
\end{array}
\right] \\
\times g\left(  X_{i_{m+1}}\right)  ^{-\frac{1}{2}}\varepsilon_{p,i_{m+1}%
}\left(  \theta\right)
\end{array}
\right\}
\end{gather*}

\end{proof}

\begin{proof}
(Theorem \ref{TBformula}) By eq (\ref{eta1}) and eq (\ref{alpha1}) and part a
of the preceding lemma
\begin{align*}
\widetilde{\overline{\eta}}_{k}  &  =-E\left[  \dot{B}\dot{P}H_{1}\overline
{Z}_{k}\overline{Z}_{k}^{T}\right]  ^{-1}E\left[  \overline{Z}_{k}\dot
{P}\left(  \widehat{B}-B\right)  H_{1}\right] \\
\widetilde{\overline{\alpha}}_{k}  &  =-E\ \left[  \dot{B}\dot{P}%
H_{1}\overline{Z}_{k}\overline{Z}_{k}^{T}\right]  ^{-1}E\ \left[  \overline
{Z}_{k}\dot{B}\left(  \widehat{P}-P\right)  H_{1}\right]
\end{align*}
and hence
\begin{align*}
\widetilde{B}-\widehat{B}  &  =\overset{\cdot}{-B}\overline{Z}_{k}%
^{T}E\ \left[  \dot{B}\dot{P}H_{1}\overline{Z}_{k}\overline{Z}_{k}^{T}\right]
^{-1}E\left[  \dot{P}\dot{B}\overline{Z}_{k}\left(  B-\widehat{B}\ \right)
\left\{  \dot{B}\right\}  ^{-1}H_{1}\right] \\
\widetilde{P}-\widehat{P}  &  =-\dot{P}\overline{Z}_{k}^{T}E_{\ }\left[
\dot{B}\dot{P}H_{1}\overline{Z}_{k}\overline{Z}_{k}^{T}\right]  ^{-1}E\left[
\dot{B}\dot{P}\overline{Z}_{k}\left(  P-\widehat{P}\right)  \left\{  \dot
{P}\right\}  ^{-1}H_{1}\right]
\end{align*}
Thus
\begin{align*}
Q\frac{\widetilde{P}-\widehat{P}}{\dot{P}}  &  =-Q\overline{Z}_{k}%
^{T}E_{\theta}\left[  Q^{2}\overline{Z}_{k}\overline{Z}_{k}^{T}\right]
^{-1}E\left[  Q^{2}\overline{Z}_{k}\frac{\left(  P-\widehat{P}\right)  }%
{\dot{P}}\right] \\
&  =\Pi\left[  \frac{P-\widehat{P}}{\dot{P}}Q|Q\overline{Z}_{k}\right] \\
Q\frac{\widetilde{B}-\widehat{B}}{\dot{B}}  &  =-Q\overline{Z}_{k}%
^{T}E_{\theta}\left[  Q^{2}\overline{Z}_{k}\overline{Z}_{k}^{T}\right]
^{-1}E\left[  Q^{2}\overline{Z}_{k}\frac{\left(  B-\widehat{B}\right)  }%
{\dot{B}}\right] \\
&  =\Pi\left[  \left(  \frac{B-\widehat{B}}{\dot{B}}\right)  Q|Q\overline
{Z}_{k}\right]
\end{align*}
and hence
\begin{align*}
Q\left(  \frac{\left(  P-\widetilde{P}\right)  }{\dot{P}}\right)   &
=\Pi^{\perp}\left[  \left(  \frac{P-\widehat{P}}{\dot{P}}\right)
Q|Q\overline{Z}_{k}\right] \\
Q\left(  \frac{B-\widetilde{B}}{\dot{B}}\right)   &  =\Pi^{\perp}\left[
\left(  \frac{B-\widehat{B}}{\dot{B}}\right)  Q|Q\overline{Z}_{k}\right]
\end{align*}
But by the previous theorem,
\[
TB_{k}=E\left[  \left\{  \widetilde{B}\ -B\right\}  \left\{  \widetilde{P}%
\ -P\right\}  H_{1}\right]  =E\left[  Q\left(  \frac{B-\widetilde{B}}{\dot{B}%
}\right)  Q\left(  \frac{\left(  P-\widetilde{P}\right)  }{\dot{P}}\right)
\right]
\]
proving the theorem.
\end{proof}

\begin{proof}
(Theorem \ref{TBrate}) Under our assumptions, the following holds uniformly
for $\theta\in\Theta.$%
\begin{align*}
TB_{k}^{2} &  =\left\{  E\left[  \Pi^{\perp}\left[  \left(  \frac
{P-\widehat{P}}{\dot{P}}\right)  Q|Q\overline{Z}_{k}\right]  \Pi^{\perp
}\left[  \left(  \frac{B-\widehat{B}}{\dot{B}}\right)  Q|Q\overline{Z}%
_{k}\right]  \right]  \right\}  ^{2}\\
&  \leq E\left\{  \Pi^{\perp}\left[  \left(  \frac{P-\widehat{P}}{\dot{P}%
}\right)  Q|Q\overline{Z}_{k}\right]  ^{2}\right\}  \times E\left\{
\Pi^{\perp}\left[  \left(  \frac{B-\widehat{B}}{\dot{B}}\right)
Q|Q\overline{Z}_{k}\right]  ^{2}\right\}
\end{align*}
by Cauchy Shwartz. Now
\begin{align*}
&  E\left\{  \Pi^{\perp}\left[  \left(  \frac{P-\widehat{P}}{\dot{P}}\right)
Q|Q\overline{Z}_{k}\right]  ^{2}\right\} \\
&  =inf_{\varsigma_{_{l}}}\int_{R^{d}}Q^{2}\left(  \frac{p\left(  X\right)
-\widehat{p}\left(  X\right)  }{\dot{p}\left(  X\right)  }-\sum_{l=1}%
^{k}\varsigma_{_{l}}z_{l}\left(  X\right)  \right)  ^{2}f\left(  X\right)
dX\\
&  =inf_{\varsigma_{_{l}}}\int_{R^{d}}Q^{2}\left(  \frac{\left(  p\left(
X\right)  -\widehat{p}\left(  X\right)  \right)  }{\dot{p}\left(  X\right)
}-\sum_{l=1}^{k}\varsigma_{_{l}}^{\ast}\varphi_{l}\left(  X\right)  \right)
^{2}Q^{2}f\left(  X\right)  dX\\
&  =inf_{\varsigma_{_{l}}}\int_{R^{d}}\left(  \left(  p\left(  X\right)
-\widehat{p}\left(  X\right)  \right)  -\sum_{l=1}^{k}\varsigma_{_{l}}^{\ast
}\varphi_{l}\left(  X\right)  \right)  ^{22}Q^{2}\ f\left(  X\right)  dX\\
&  \leq\left\vert \left\vert Q^{2}f\left(  X\right)  \right\vert \right\vert
_{\infty}inf_{\varsigma_{_{l}}}\int_{R^{d}}\left(  \left(  p\left(  X\right)
-\widehat{p}\left(  X\right)  \right)  -\sum_{l=1}^{k}\varsigma_{_{l}}^{\ast
}\varphi_{l}\left(  X\right)  \right)  ^{2}dX\\
&  \leq\left\vert \left\vert Q^{2}f\left(  X\right)  \right\vert \right\vert
_{\infty}O_{p}\left(  k^{-2\beta_{p}/d}\right)  =O_{p}\left(  k^{-2\beta
_{p}/d}\right)
\end{align*}
The last equality follows from the fact that under the stated assumptions
$\left\vert \left\vert Q^{2}f\left(  X\right)  \right\vert \right\vert
_{\infty}=O(1)$ . Similarly $E\left\{  \Pi^{\perp}\left[  \left(
\frac{B-\widehat{B}}{\dot{B}}\right)  Q|Q\overline{Z}_{k}\right]
^{2}\right\}  =O_{p}\left(  k^{-2\beta_{b}/d}\right)  .$\newline
\end{proof}

\begin{proof}
(Theorem \ref{DRHOIF}) By theorem \ref{FOIF}
\begin{align*}
IF_{1,\widetilde{\psi}_{k},i_{i}}\,\left(  \theta\right)   &
=if_{1,\widetilde{\psi}_{k}}\left(  O_{i_{1}},\theta\right) \\
&  =H_{i_{i}}\left(  \widetilde{b}\left(  \theta\right)  ,\widetilde{p}\left(
\theta\right)  \right)  -\widetilde{\psi}_{k}\left(  \theta\right) \\
&  =h\left(  O_{i_{1}},\widetilde{b}\left(  X_{i_{1}},\theta\right)
,\widetilde{p}\left(  X_{i_{1}},\theta\right)  \right)  -\widetilde{\psi}%
_{k}\left(  \theta\right)
\end{align*}
$\ \ $and by part 5.c of theorem \ref{eift}
\[
\mathbb{V}\left[  IF_{22,\widetilde{\psi}_{k},\overline{i}_{2}}\right]
=\frac{1}{2}\left\{  \Pi_{\theta}\left[  \mathbb{V}\left[
IF_{1,if_{1,\widetilde{\psi}_{k}}\left(  O_{i_{1}},\cdot\right)  ,i_{2}%
}\,\left(  \theta\right)  \right]  |\mathcal{U}_{1}^{\perp_{\theta,2}}\left(
\theta\right)  \right]  \right\}  .
\]
Now
\begin{align*}
IF_{1,if_{1,\widetilde{\psi}_{k},}\,\left(  O_{i_{1}},\theta\right)  ,i_{2}%
}\,\left(  \theta\right)   &  =h_{\widetilde{b}}\left(  O_{i_{1}%
},\widetilde{b}\left(  X_{i_{1}},\theta\right)  ,\widetilde{p}\left(
X_{i_{1}},\theta\right)  \right)  IF_{1,\widetilde{b}\left(  X_{i_{1}}%
,\cdot\right)  ,i_{2}}\,\left(  \theta\right) \\
&  +h_{\widetilde{p}}\left(  O_{i_{1}},\widetilde{b}\left(  X_{i_{1}}%
,\theta\right)  ,\widetilde{p}\left(  X_{i_{1}},\theta\right)  \right)
IF_{1,\widetilde{p}\left(  X_{i_{1}},\cdot\right)  ,i_{2}}\,\left(
\theta\right)
\end{align*}
where
\begin{align*}
h_{\widetilde{b}}\left(  O_{i_{1}},\widetilde{b}\left(  X_{i_{1}}%
,\theta\right)  ,\widetilde{p}\left(  X_{i_{1}},\theta\right)  \right)   &
=H_{1_{i_{1}}}\widetilde{p}\left(  X_{i_{1}},\theta\right)  +H_{2_{i_{1}}}\\
h_{\widetilde{p}}\left(  O_{i_{1}},\widetilde{b}\left(  X_{i_{1}}%
,\theta\right)  ,\widetilde{p}\left(  X_{i_{1}},\theta\right)  \right)   &
=H_{1i_{1}}\widetilde{b}\left(  X_{i_{1}},\theta\right)  +H_{3i_{1}}%
\end{align*}%
\begin{align*}
IF_{1,\widetilde{b}\left(  X_{i_{1}},\cdot\right)  ,i_{2}}\,\left(
\theta\right)   &  =IF_{1,b^{\ast}\left(  X_{i_{1}},\widetilde{\overline{\eta
}}_{k}\left(  \cdot\right)  \right)  ,i_{2}}\,\left(  \theta\right) \\
&  =\dot{B}_{i_{1}}\overline{Z}_{ki_{1}}^{T}IF_{1,\widetilde{\overline{\eta}%
}_{k}\left(  \cdot\right)  ,i_{2}}\,\left(  \theta\right) \\
&  =\overset{\cdot}{-B}\overline{Z}_{ki_{1}}^{T}\left\{  E_{\theta}\left[
\dot{P}\dot{B}H_{1}\overline{Z}_{k}\overline{Z}_{k}^{T}\right]  \right\}
^{-1}\left[  \left\{  H_{1}\widetilde{b}\left(  X,\theta\right)
+H_{3}\right\}  \dot{P}\overline{Z}_{k}\right]  _{i_{2}},
\end{align*}
and
\[
IF_{1,\widetilde{p}\left(  X_{i_{1}},\cdot\right)  ,i_{2}}\,\left(
\theta\right)  =-\dot{P}_{i_{1}}\overline{Z}_{ki_{1}}^{T}\left\{  E_{\theta
}\left[  \dot{P}\dot{B}H_{1}\overline{Z}_{k}\overline{Z}_{k}^{T}\right]
\right\}  ^{-1}\left[  \left\{  H_{1}\widetilde{p}\left(  X,\theta\right)
+H_{2}\right\}  \dot{B}\overline{Z}_{k}\right]  _{i_{2}}.
\]%
\begin{align*}
&  IF_{1,if_{1,\widetilde{\psi}_{k},}\,\left(  O_{i_{1}},\cdot\right)  ,i_{2}%
}\,\left(  \theta\right) \\
&  =-\left\{  H_{1}\widetilde{p}\left(  X,\theta\right)  +H_{2}\right\}
_{i_{1}}\dot{B}_{i_{1}}\overline{Z}_{ki_{1}}^{T}\left\{  E_{\theta}\left[
\dot{P}\dot{B}H_{1}\overline{Z}_{k}\overline{Z}_{k}^{T}\right]  \right\}
^{-1}\\
&  \times\left[  \left\{  H_{1}\widetilde{b}\left(  X,\theta\right)
+H_{3}\right\}  \dot{P}\overline{Z}_{k}\right]  _{i_{2}}\\
&  -\left\{  H_{1}\widetilde{b}\left(  X,\theta\right)  +H_{3}\right\}
_{i_{1}}\dot{P}_{i_{1}}\overline{Z}_{ki_{1}}^{T}\left\{  E_{\theta}\left[
\dot{P}\dot{B}H_{1}\overline{Z}_{k}\overline{Z}_{k}^{T}\right]  \right\}
^{-1}\\
&  \times\left[  \left\{  H_{1}\widetilde{p}\left(  X,\theta\right)
+H_{2}\right\}  \dot{B}\overline{Z}_{k}\right]  _{i_{2}}%
\end{align*}
and further
\[
\Pi_{\theta}\left[  \mathbb{V}\left[  IF_{1,if_{1,\widetilde{\psi}_{k}%
,}\,\left(  O_{i_{1}},\cdot\right)  ,i_{2}}\,\left(  \theta\right)  \right]
|\mathcal{U}_{1}\left(  \theta\right)  \right]  =0
\]
since
\[
E_{\theta}\left[  \left\{  H_{1}\widetilde{p}\left(  X,\theta\right)
+H_{2}\right\}  \dot{B}\overline{Z}_{k}\right]  =E_{\theta}\left[  \left\{
H_{1}\widetilde{b}\left(  X,\theta\right)  +H_{3}\right\}  \dot{P}\overline
{Z}_{k}\right]  =0
\]
and thus $IF_{1,if_{1,\widetilde{\psi}_{k},}\,\left(  O_{i_{1}},\cdot\right)
,i_{2}}\,\left(  \theta\right)  $ is degenerate. \ Because
$IF_{1,if_{1,\widetilde{\psi}_{k},}\,\left(  O_{i_{1}},\cdot\right)  ,i_{2}%
}\,\left(  \theta\right)  $ has two terms, it appears that
$IF_{22,\widetilde{\psi}_{k},\overline{i}_{2}}$ will consist of two terms.
\ However by the symmetry upon interchange of $i_{2}$ and $i_{1},$\ and the
permutation invariance of the operator $\mathbb{V}$
\begin{align*}
&  \mathbb{V}\left[  IF_{1,if_{1,\widetilde{\psi}_{k},}\,\left(  O_{i_{1}%
},\cdot\right)  ,i_{2}}\,\left(  \theta\right)  \right] \\
&  =\mathbb{V}\left[
\begin{array}
[c]{c}%
-2\left\{  H_{1}\widetilde{p}\left(  X,\theta\right)  +H_{2}\right\}  _{i_{1}%
}\dot{B}_{i_{1}}\overline{Z}_{ki_{1}}^{T}\left\{  E_{\theta}\left[  \dot
{P}\dot{B}H_{1}\overline{Z}_{k}\overline{Z}_{k}^{T}\right]  \right\}  ^{-1}\\
\times\left[  \overline{Z}_{k}\left\{  H_{1}\widetilde{b}\left(
X,\theta\right)  +H_{3}\right\}  \dot{P}\right]  _{i_{2}}%
\end{array}
\right]
\end{align*}
Thus we can take
\begin{align*}
IF_{22,\widetilde{\psi}_{k},\overline{i}_{2}}  &  =-\left\{  H_{1}%
\widetilde{p}\left(  X,\theta\right)  +H_{2}\right\}  _{i_{1}}\dot{B}_{i_{1}%
}\overline{Z}_{ki_{1}}^{T}\left\{  E_{\theta}\left[  \dot{P}\dot{B}%
H_{1}\overline{Z}_{k}\overline{Z}_{k}^{T}\right]  \right\}  ^{-1}\\
&  \times\left[  \overline{Z}_{k}\left\{  H_{1}\widetilde{b}\left(
X,\theta\right)  +H_{3}\right\}  \dot{P}\right]  _{i_{2}}%
\end{align*}
as was to be proved. \ We now complete the proof of the Theorem by induction.
\ We assume it is true for $IF_{mm,\widetilde{\psi}_{k},\overline{i}_{m\ }}$
and prove it is true for $IF_{\left(  m+1\right)  \left(  m+1\right)
,\widetilde{\psi}_{k},\overline{i}_{m+1}}.$ Now%
\[
\mathbb{V}\left[  IF_{\left(  m+1\right)  \left(  m+1\right)  ,\widetilde{\psi
}_{k},\overline{i}_{m+1}}\left(  \theta\right)  \right]  =\frac{1}%
{m}\mathbb{V}\left[  \Pi_{\theta}\left[  IF_{1,if_{mm,\widetilde{\psi}_{k}%
,}\,\left(  O_{\overline{i}_{m}},\cdot\right)  ,i_{m+1}}\,\left(
\theta\right)  |\mathcal{U}_{m}^{\perp_{\theta,m+1}}\left(  \theta\right)
\right]  \right]
\]
Now by the induction hypothesis,
\begin{align*}
&  if_{mm,\widetilde{\psi}_{k},}\,\left(  O_{\overline{i}_{m}},\theta\right)
\\
&  =\left(  -1\right)  ^{m-1}\left[  \left(  H_{1}\widetilde{P}\left(
\theta\right)  +H_{2}\right)  \dot{B}\overline{Z}_{k}^{T}\right]  _{i_{1}}\\
&  \times\left[
%TCIMACRO{\dprod \limits_{s=3}^{m}}%
%BeginExpansion
{\displaystyle\prod\limits_{s=3}^{m}}
%EndExpansion
\left\{  E_{\theta}\left[  \dot{P}\dot{B}H_{1}\overline{Z}_{k}\overline{Z}%
_{k}^{T}\right]  \right\}  ^{-1}\left\{
\begin{array}
[c]{c}%
\left(  \dot{P}\dot{B}H_{1}\overline{Z}_{k}\overline{Z}_{k}^{T}\right)
_{i_{s}}\\
-E_{\theta}\left[  \dot{P}\dot{B}H_{1}\overline{Z}_{k}\overline{Z}_{k}%
^{T}\right]
\end{array}
\right\}  \right] \\
&  \times\left\{  E_{\theta}\left[  \dot{P}\dot{B}H_{1}\overline{Z}%
_{k}\overline{Z}_{k}^{T}\right]  \right\}  ^{-1}\left[  \overline{Z}%
_{k}\left(  H_{1}\widetilde{B}\left(  \theta\right)  +H_{3}\right)  \dot
{P}\right]  _{i_{2}}%
\end{align*}
The derivatives with respect to the $\theta^{\prime}s$ in $\widetilde{P}%
\left(  \theta\right)  ,\widetilde{B}\left(  \theta\right)  \ $and in the
$m-1$ terms $\left\{  E_{\theta}\left[  \dot{P}\dot{B}H_{1}\overline{Z}%
_{k}\overline{Z}_{k}^{T}\right]  \right\}  ^{-1}$ will each contribute a term
to $\mathbb{V}\left[  IF_{\left(  m+1\right)  \left(  m+1\right)
,\widetilde{\psi}_{k},\overline{i}_{m+1}}\left(  \theta\right)  \right]  .$
\ However differentiating wrt to the $\theta$ in the $m-2$ terms $E_{\theta
}\left[  \dot{P}\dot{B}H_{1}\overline{Z}_{k}\overline{Z}_{k}^{T}\right]  $
will not contribute to $\mathbb{V}\left[  IF_{\left(  m+1\right)  \left(
m+1\right)  ,\widetilde{\psi}_{k},\overline{i}_{m+1}}\left(  \theta\right)
\right]  $ as the contribution from each of these $m-2$ terms to
$IF_{1,if_{mm,\widetilde{\psi}_{k},}\,\left(  O_{\overline{i}_{m}}%
,\cdot\right)  ,i_{m+1}}\,\left(  \theta\right)  $ is only a function of $m$
units' data and is thus an element of $\mathcal{U}_{m}\left(  \theta\right)  $
which is orthogonal to the space $\mathcal{U}_{m}^{\perp_{\theta,m+1}}\left(
\theta\right)  $ that is projected on.. Now
\begin{align*}
&  IF_{1,\left\{  E_{\theta}\left[  \dot{P}\dot{B}H_{1}\overline{Z}%
_{k}\overline{Z}_{k}^{T}\right]  \right\}  ^{-1},i_{m+1}}\left(  \theta\right)
\\
&  =-\left\{  E_{\theta}\left[  \dot{P}\dot{B}H_{1}\overline{Z}_{k}%
\overline{Z}_{k}^{T}\right]  \right\}  ^{-1}\left\{
\begin{array}
[c]{c}%
\left(  \dot{P}\dot{B}H_{1}\overline{Z}_{k}\overline{Z}_{k}^{T}\right)
_{i_{m+1}}\\
-E_{\theta}\left[  \dot{P}\dot{B}H_{1}\overline{Z}_{k}\overline{Z}_{k}%
^{T}\right]
\end{array}
\right\}  \left\{  E_{\theta}\left[  \dot{P}\dot{B}H_{1}\overline{Z}%
_{k}\overline{Z}_{k}^{T}\right]  \right\}  ^{-1}%
\end{align*}
so upon permuting the unit indices, the contribution of each of these $m-1$
terms to $IF_{1,if_{mm,\widetilde{\psi}_{k},}\,\left(  O_{\overline{i}_{m}%
},\cdot\right)  ,i_{m+1}}\,\left(  \theta\right)  $ is%
\begin{align}
&  -\left(  -1\right)  ^{m-1}\left[  \left(  H_{1}\widetilde{P}\left(
\theta\right)  +H_{2}\right)  \dot{B}\overline{Z}_{k}^{T}\right]  _{i_{1}%
}\label{IFm+1}\\
&  \times\left[
%TCIMACRO{\dprod \limits_{s=3}^{m+1}}%
%BeginExpansion
{\displaystyle\prod\limits_{s=3}^{m+1}}
%EndExpansion
\left\{  E_{\theta}\left[  \dot{P}\dot{B}H_{1}\overline{Z}_{k}\overline{Z}%
_{k}^{T}\right]  \right\}  ^{-1}\left\{
\begin{array}
[c]{c}%
\left(  \dot{P}\dot{B}H_{1}\overline{Z}_{k}\overline{Z}_{k}^{T}\right)
_{i_{s}}\\
-E_{\theta}\left[  \dot{P}\dot{B}H_{1}\overline{Z}_{k}\overline{Z}_{k}%
^{T}\right]
\end{array}
\right\}  \right] \nonumber\\
&  \times\left\{  E_{\theta}\left[  \dot{P}\dot{B}H_{1}\overline{Z}%
_{k}\overline{Z}_{k}^{T}\right]  \right\}  ^{-1}\left[  \overline{Z}%
_{k}\left(  H_{1}\widetilde{B}\left(  \theta\right)  +H_{3}\right)  \dot
{P}\right]  _{i_{2}}\nonumber
\end{align}
\ \ which is already degenerate ( i.e., orthogonal to $\mathcal{U}_{m}\left(
\theta\right)  ).$ \ Differentiating with respect to the $\theta^{\prime}s$ of
$\widetilde{P}\left(  \theta\right)  ,\widetilde{B}\left(  \theta\right)
\ $in $IF_{1,if_{mm,\widetilde{\psi}_{k},}\,\left(  O_{\overline{i}_{m}}%
,\cdot\right)  ,i_{m+1}}\,\left(  \theta\right)  $ we obtain
\begin{align*}
&  =\left(  -1\right)  ^{m-1}IF_{1,\widetilde{b}\left(  X_{i_{1}}%
,\cdot\right)  ,i_{m+1}}\,\left(  \theta\right)  \left[  H_{1}\dot{B}%
\overline{Z}_{k}^{T}\right]  _{i_{1}}\\
&  \left[
%TCIMACRO{\dprod \limits_{s=3}^{m}}%
%BeginExpansion
{\displaystyle\prod\limits_{s=3}^{m}}
%EndExpansion
\left\{  E_{\theta}\left[  \dot{P}\dot{B}H_{1}\overline{Z}_{k}\overline{Z}%
_{k}^{T}\right]  \right\}  ^{-1}\left\{  \left(  \dot{P}\dot{B}H_{1}%
\overline{Z}_{k}\overline{Z}_{k}^{T}\right)  _{i_{s}}-E_{\theta}\left[
\dot{P}\dot{B}H_{1}\overline{Z}_{k}\overline{Z}_{k}^{T}\right]  \right\}
\right]  \times\\
&  \left\{  E_{\theta}\left[  \dot{P}\dot{B}H_{1}\overline{Z}_{k}\overline
{Z}_{k}^{T}\right]  \right\}  ^{-1}\left[  \overline{Z}_{k}\left(
H_{1}\widetilde{B}\left(  \theta\right)  +H_{3}\right)  \dot{P}\right]
_{i_{2}}\\
&  +\left(  -1\right)  ^{m-1}\left[  \left(  H_{1}\widetilde{P}\left(
\theta\right)  +H_{2}\right)  \dot{B}\overline{Z}_{k}^{T}\right]  _{i_{1}%
}\times\\
&  \left[
%TCIMACRO{\dprod \limits_{s=3}^{m}}%
%BeginExpansion
{\displaystyle\prod\limits_{s=3}^{m}}
%EndExpansion
\left\{  E_{\theta}\left[  \dot{P}\dot{B}H_{1}\overline{Z}_{k}\overline{Z}%
_{k}^{T}\right]  \right\}  ^{-1}\left\{  \left(  \dot{P}\dot{B}H_{1}%
\overline{Z}_{k}\overline{Z}_{k}^{T}\right)  _{i_{s}}-E_{\theta}\left[
\dot{P}\dot{B}H_{1}\overline{Z}_{k}\overline{Z}_{k}^{T}\right]  \right\}
\right]  \times\\
&  \left[  \overline{Z}_{k}H_{1}\dot{P}\right]  IF_{1,\widetilde{p}\left(
X_{i_{2}},\cdot\right)  ,i_{m+1}}\,\left(  \theta\right)
\end{align*}
Substituting in the above expressions for $IF_{1,\widetilde{b}\left(
X_{i_{1}},\cdot\right)  ,i_{m+1}}\,\left(  \theta\right)  $ and
$IF_{1,\widetilde{p}\left(  X_{i_{2}},\cdot\right)  ,i_{m+1}}\,\left(
\theta\right)  ,$ then projecting on $\mathcal{U}_{m}^{\perp_{\theta,m+1}%
}\left(  \theta\right)  ,$ and again permuting unit indices, we obtain two
identical terms both equal to eq ($\ref{IFm+1})$ . Thus we obtain $m+1$
identical terms in all. Upon dividing by $m+1$, we conclude that
$\mathbb{V}\left[  IF_{\left(  m+1\right)  \left(  m+1\right)
,\widetilde{\psi}_{k},\overline{i}_{m+1}}\left(  \left(  \theta\right)
\right)  \right]  \ $equals $\mathbb{V}$ operating on $\left(  \ref{IFm+1}%
\right)  ,$ proving the theorem.\newline
\end{proof}

\begin{proof}
(Theorem \ref{EBrate}) Equation (\ref{EB6}) follows from eq. (\ref{EB4}) by
our assumption of rate optimality of the initial estimators. We next prove eq.
(\ref{EB3}) by induction. \ For $m=1.$%
\begin{align*}
EB_{1}  &  =E\left[  \widehat{\psi}_{1,k}\right]  -\widetilde{\psi}_{k}\\
&  =E\left[  \widehat{B}\widehat{P}H_{1}+\widehat{B}H_{2}+\widehat{P}%
H_{3}\right]  -E\left[  \widetilde{B}\widetilde{P}H_{1}+\widetilde{B}%
H_{2}+\widetilde{P}H_{3}\right] \\
&  =E\left[  \dot{B}\dot{P}\overline{Z}_{k}^{T}\left(  P-\widehat{P}\right)
\left\{  \dot{P}\right\}  ^{-1}H_{1}\right]  E_{\theta}\left[  \dot{B}\dot
{P}H_{1}\overline{Z}_{k}\overline{Z}_{k}^{T}\right]  ^{-1}\\
&  \times E\left[  \dot{P}\dot{B}\overline{Z}_{k}\left(  B-\widehat{B}%
\ \right)  \left\{  \dot{B}\right\}  ^{-1}H_{1}\right]
\end{align*}

where the last equality follows from
\begin{align*}
\widetilde{B}  &  =\overset{\cdot}{\widehat{B}-B}\overline{Z}_{k}^{T}%
E_{\theta}\left[  \dot{B}\dot{P}H_{1}\overline{Z}_{k}\overline{Z}_{k}%
^{T}\right]  ^{-1}E\left[  \dot{P}\dot{B}\overline{Z}_{k}\left(
B-\widehat{B}\ \right)  \left\{  \dot{B}\right\}  ^{-1}H_{1}\right] \\
\widetilde{P}  &  =\widehat{P}-\dot{P}\overline{Z}_{k}^{T}E_{\theta}\left[
\dot{B}\dot{P}H_{1}\overline{Z}_{k}\overline{Z}_{k}^{T}\right]  ^{-1}E\left[
\dot{B}\dot{P}\overline{Z}_{k}\left(  P-\widehat{P}\right)  \left\{  \dot
{P}\right\}  ^{-1}H_{1}\right]
\end{align*}

Next, we proceed by induction. Assume \ref{EB3} holds for $m-1\geq1,$ we next
show that it holds for $m.$%
\begin{align*}
EB_{m}  &  =EB_{m-1}+E\left[  IF_{mm,\widetilde{\psi}_{k},\overline{i}_{j}%
}\right] \\
&  =\left\{
\begin{array}
[c]{c}%
\left(  -1\right)  ^{m-2}E\left[  Q^{2}\left(  \frac{B-\widehat{B}}{\dot{B}%
}\right)  \overline{Z}_{k}^{T}\right] \\
\left\{  E\left[  Q^{2}\overline{Z}_{k}\overline{Z}_{k}^{T}\right]
-I_{k\times k}\right\}  ^{m-2}\left\{  E\left[  Q^{2}\overline{Z}_{k}%
\overline{Z}_{k}^{T}\right]  \right\}  ^{-1}E\left[  Q^{2}\left(
\frac{P-\widehat{P}}{\dot{P}}\right)  \right]
\end{array}
\right\} \\
&  +\left\{
\begin{array}
[c]{c}%
\left(  -1\right)  ^{m-1}E\left[  Q^{2}\left(  \frac{B-\widehat{B}}{\dot{B}%
}\right)  \overline{Z}_{k}^{T}\right] \\
\times\left\{  E\left[  Q^{2}\overline{Z}_{k}\overline{Z}_{k}^{T}\right]
-I_{k\times k}\right\}  ^{m-2}E\left[  Q^{2}\left(  \frac{P-\widehat{P}}%
{\dot{P}}\right)  \right]
\end{array}
\right\} \\
&  =\left(  -1\right)  ^{m-1}E\left[  Q^{2}\left(  \frac{B-\widehat{B}}%
{\dot{B}}\right)  \overline{Z}_{k}^{T}\right]  \left\{  E\left[
Q^{2}\overline{Z}_{k}\overline{Z}_{k}^{T}\right]  -I_{k\times k}\right\}
^{m-1}\\
&  \times\left\{  E\left[  Q^{2}\overline{Z}_{k}\overline{Z}_{k}^{T}\right]
\right\}  ^{-1}E\left[  Q^{2}\left(  \frac{P-\widehat{P}}{\dot{P}}\right)
\right]
\end{align*}

Finally we prove that \ref{EB3} implies \ref{EB4}. For any random variable $H
$ define
\begin{align*}
\widehat{R}\left(  H\right)   &  =\widehat{\Pi}\left[  \delta gH|\widehat{Q}%
\overline{Z}_{k}\right]  =\widehat{Q}\overline{Z}_{k}^{T}\widehat{E}\left[
\delta g\widehat{Q}\overline{Z}_{k}H\right] \\
\widehat{R}^{t}\left(  H\right)   &  =\widehat{R}\circ\widehat{R}^{t-1}\left(
H\right)  \text{ \ for }t\geq2
\end{align*}

where $g\left(  X\right)  =E\left(  H_{1}|X\right)  f\left(  X\right)  $ and
$\delta g=\frac{g\left(  x\right)  -\widehat{g}\left(  X\right)  }%
{\widehat{g}\left(  X\right)  }=\frac{Q^{2}f-\widehat{Q}^{2}\widehat{f}%
}{\widehat{Q}^{2}\widehat{f}}.$

Then
\begin{align*}
&  \left(  -1\right)  ^{m-1}EB_{m}\\
&  =\left\{
\begin{array}
[c]{c}%
E\left[  Q^{2}\left(  \frac{B-\widehat{B}}{\dot{B}}\right)  \overline{Z}%
_{k}^{T}\right]  \left\{  \widehat{E}\left[  \delta g\text{ }\widehat{Q}%
^{2}\overline{Z}_{k}\overline{Z}_{k}^{T}\right]  \right\}  ^{m-2}\times\\
\widehat{E}\left[  \delta g\text{ }\widehat{Q}\overline{Z}_{k}\frac
{\widehat{Q}}{Q}\Pi\left(  Q\overline{Z}_{k}\left(  \frac{P-\widehat{P}}%
{\dot{P}}\right)  |\left(  Q\overline{Z}_{k}\right)  \right)  \right]
\end{array}
\right\} \\
&  =E\left[  Q\left(  \frac{B-\widehat{B}}{\dot{B}}\right)  \frac
{Q}{\widehat{Q}}\widehat{R}^{m-1}\left(  \frac{\widehat{Q}}{Q}\Pi\left(
Q\left(  \frac{P-\widehat{P}}{\dot{P}}\right)  |\left(  Q\overline{Z}%
_{k}\right)  \right)  \right)  \right]
\end{align*}%
\begin{align*}
&  =\widehat{E}\left[  Q\left(  \frac{B-\widehat{B}}{\dot{B}}\right)
\frac{Qf}{\widehat{Q}\widehat{f}}\widehat{R}^{m-1}\left(  \frac{\widehat{Q}%
}{Q}\Pi\left(  Q\left(  \frac{P-\widehat{P}}{\dot{P}}\right)  |\left(
Q\overline{Z}_{k}\right)  \right)  \right)  \right] \\
&  \leq\left\{  \widehat{E}\left[  Q\left(  \frac{B-\widehat{B}}{\dot{B}%
}\right)  \frac{Qf}{\widehat{Q}\widehat{f}}\right]  ^{2}\right\}  ^{\frac
{1}{2}}\left\{  \widehat{E}\left[  \widehat{R}^{m-1}\left(  \frac{\widehat{Q}%
}{Q}\Pi\left(  Q\left(  \frac{P-\widehat{P}}{\dot{P}}\right)  |\left(
Q\overline{Z}_{k}\right)  \right)  \right)  \right]  ^{2}\right\}  ^{\frac
{1}{2}}%
\end{align*}
by Cauchy Shwartz.

Now
\begin{align*}
&  \widehat{E}\left[  \widehat{R}^{m-1}\left(  \frac{\widehat{Q}}{Q}\Pi\left(
Q\left(  \frac{P-\widehat{P}}{\dot{P}}\right)  |\left(  Q\overline{Z}%
_{k}\right)  \right)  \right)  \right]  ^{2}\\
&  \leq\widehat{E}\left[  \left(  \delta g\right)  ^{2}\left[  \widehat{R}%
^{m-2}\left(  \frac{\widehat{Q}}{Q}\Pi\left(  Q\left(  \frac{P-\widehat{P}%
}{\dot{P}}\right)  |\left(  Q\overline{Z}_{k}\right)  \right)  \right)
\right]  ^{2}\right]
\end{align*}

by the projection operator having operator norm equal to 1
\[
\leq\left\vert \left\vert \delta g\right\vert \right\vert _{\infty}%
^{2}\widehat{E}\left[  \widehat{R}^{m-2}\left(  \frac{\widehat{Q}}{Q}%
\Pi\left(  Q\left(  \frac{P-\widehat{P}}{\dot{P}}\right)  |\left(
Q\overline{Z}_{k}\right)  \right)  \right)  \right]  ^{2}%
\]

Iterating this calculation $m-1$ times we find
\begin{align*}
&  \widehat{E}\left[  \widehat{R}^{m-1}\left(  \frac{\widehat{Q}}{Q}\Pi\left(
Q\left(  \frac{P-\widehat{P}}{\dot{P}}\right)  |\left(  Q\overline{Z}%
_{k}\right)  \right)  \right)  \right]  ^{2}\\
&  \leq\left\vert \left\vert \delta g\right\vert \right\vert _{\infty
}^{2\left(  m-1\right)  }\widehat{E}\left[  \frac{\widehat{Q}}{Q}\Pi\left(
Q\left(  \frac{P-\widehat{P}}{\dot{P}}\right)  |\left(  Q\overline{Z}%
_{k}\right)  \right)  \right]  ^{2}\\
&  \leq\left\vert \left\vert \delta g\right\vert \right\vert _{\infty
}^{2\left(  m-1\right)  }\left\vert \left\vert \frac{\widehat{G}}%
{G}\right\vert \right\vert _{\infty}E\left(  \Pi\left(  Q\left(
\frac{P-\widehat{P}}{\dot{P}}\right)  |\left(  Q\overline{Z}_{k}\right)
\right)  \right)  ^{2}\\
&  \leq\left\vert \left\vert \delta g\right\vert \right\vert _{\infty
}^{2\left(  m-1\right)  }\left\vert \left\vert \frac{\widehat{G}}%
{G}\right\vert \right\vert _{\infty}\int Q^{2}\left(  \frac{P-\widehat{P}%
}{\dot{P}}\right)  ^{2}f\left(  X\right)  dX\\
&  \leq\left\vert \left\vert \delta g\right\vert \right\vert _{\infty
}^{2\left(  m-1\right)  }\left\vert \left\vert \frac{\widehat{G}}%
{G}\right\vert \right\vert _{\infty}\left\vert \left\vert \frac{Q^{2}f}%
{\dot{P}^{2}}\right\vert \right\vert _{\infty}\int\left(  p\left(  X\right)
-\widehat{p}\left(  X\right)  \right)  ^{2}dX\\
&  =\left\vert \left\vert \delta g\right\vert \right\vert _{\infty}^{2\left(
m-1\right)  }\left\vert \left\vert \frac{Q^{2}f}{\dot{P}^{2}}\right\vert
\right\vert _{\infty}\left[  \int\left(  p\left(  X\right)  -\widehat{p}%
\left(  X\right)  \right)  ^{2}dX\right]  \left(  1+o_{p}\left(  1\right)
\right)
\end{align*}

by $\frac{\widehat{G}}{G}=\left(  1+o_{p}\left(  1\right)  \right)  .$ \ Next
\begin{align*}
&  \widehat{E}\left[  Q\left(  \frac{B-\widehat{B}}{\dot{B}}\right)  \frac
{Qf}{\widehat{Q}\widehat{f}}\right]  ^{2}\\
&  =\int\frac{Q^{2}f}{\dot{B}^{2}}\frac{G}{\widehat{G}}\left(  b\left(
X\right)  -\widehat{b}\left(  X\right)  \right)  ^{2}dX\\
&  \leq\left\vert \left\vert \frac{Q^{2}f}{\dot{B}^{2}}\right\vert \right\vert
_{\infty}\left\vert \left\vert \frac{G}{\widehat{G}}\right\vert \right\vert
_{\infty}\int\left(  b\left(  X\right)  -\widehat{b}\left(  X\right)  \right)
^{2}dX\\
&  =\left\vert \left\vert \frac{Q^{2}f}{\dot{B}^{2}}\right\vert \right\vert
_{\infty}\left[  \int\left(  b\left(  X\right)  -\widehat{b}\left(  X\right)
\right)  ^{2}dX\right]  \left(  1+o_{p}\left(  1\right)  \right)
\end{align*}

Then we know%

\[
|EB_{m}|\leq\left\{
\begin{array}
[c]{c}%
\left\vert \left\vert \delta g\right\vert \right\vert _{\infty}^{m-1}%
\left\vert \left\vert \left(  \frac{\dot{B}}{\dot{P}}G\right)  \right\vert
\right\vert _{\infty}^{\frac{1}{2}}\left\vert \left\vert \frac{\dot{P}}%
{\dot{B}}G\right\vert \right\vert _{\infty}^{\frac{1}{2}}\left(
1+o_{p}\left(  1\right)  \right)  \times\\
\left\{  \int\left(  p\left(  X\right)  -\widehat{p}\left(  X\right)  \right)
^{2}dX\right\}  ^{\frac{1}{2}}\left\{  \int\left(  b\left(  X\right)
-\widehat{b}\left(  X\right)  \right)  ^{2}dX\right\}  ^{\frac{1}{2}}%
\end{array}
\right\}
\]

\end{proof}

To prove theorem \ref{var_multi}, we first give the following univariate
result. Suppose that we have a set
\[
\left\{  \overline{\phi}_{k_{j}}^{k_{j+1}}\left(  X\right)  ,\text{
}j=0,1,...,2m\right\}
\]
such that for each $\left(  k_{j},k_{j+1}\right)  $ pair, $\overline{\phi
}_{k_{j}}^{kj+1}\left(  X\right)  $ either spans $\mathcal{V}_{\log_{2}\left(
k^{\ast}\right)  }$ or spans $\bigoplus\limits_{v=\log_{2}\left(
k_{j}-1\right)  }^{\log_{2}\left(  k_{j+1}\right)  -1}\mathcal{W}_{v}$ where
$\log_{2}\left(  k_{j+1}\right)  -1$ and $\log_{2}\left(  k_{j}-1\right)  $
are both nonnegative integers and $\log_{2}\left(  k_{j}-1\right)  \leq
\log_{2}\left(  k_{j+1}\right)  -1.$ Let $K_{\left(  k_{j},k_{j+1}\right)
}\left(  X,Y\right)  \equiv\overline{\phi}_{k_{j}}^{kj+1}\left(  X\right)
^{T}\overline{\phi}_{k_{j}}^{kj+1}\left(  Y\right)  ,$ we first introduce an
important preliminary lemma.

\begin{lemma}
\label{var_kernel}For $m\geq0$ and $1\leq j\leq m+1,$ if\ $\ k^{\ast}\leq
k_{2j-1}\asymp k^{\ast},$ then $k_{2j-2}=1;$ otherwise if $k_{2j-1}\gg
k^{\ast},$ then $\log_{2}\left(  k_{2j-2}-1\right)  $ and $\log_{2}\left(
k_{2j-1}\right)  -1$ are both nonnegative integers and $k_{2j-2}=o\left(
k_{2j-1}\right)  $. \ Thus, \
\[
\left(  E\left[
%TCIMACRO{\dprod \limits_{j=1}^{m+1}}%
%BeginExpansion
{\displaystyle\prod\limits_{j=1}^{m+1}}
%EndExpansion
K_{\left(  k_{2j-2},k_{2j-1}\right)  }\left(  X,X\right)  \right]  \right)
^{-1}\left(  \Pi_{j=1}^{m+1}k_{2j-1}\right)  =O\left(  1\right)
\]
\newline
\end{lemma}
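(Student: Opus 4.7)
The plan is to first dispose of the structural observations in the first sentence of the lemma, then reduce the expectation to a product of diagonal wavelet kernels evaluated pointwise.

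First I would unpack the dichotomy. The definition of $\overline{\phi}_{k_j}^{k_{j+1}}$ from the preceding paragraph fixes its range to be either $\mathcal{V}_{\log_{2}(k^{\ast})}$ (the father-wavelet space at the coarsest level, of dimension $k^{\ast}$) or $\bigoplus_{v=\log_{2}(k_j-1)}^{\log_{2}(k_{j+1})-1}\mathcal{W}_v$ (mother wavelets between two dyadic levels, of dimension $k_{j+1}-k_j$). In the former case $k_{2j-2}=1$ and $k_{2j-1}\asymp k^{\ast}$; in the latter both $\log_{2}(k_{2j-2}-1)$ and $\log_{2}(k_{2j-1})-1$ are nonnegative integers and the MRA nesting forces $k_{2j-2}=o(k_{2j-1})$. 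This reading is forced by the setup so the first sentence is essentially definitional.

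Next I would establish the key pointwise two-sided estimate
\[
K_{(k_{2j-2},k_{2j-1})}(X,X)\;\asymp\;k_{2j-1}-k_{2j-2}\;\asymp\;k_{2j-1}
\]
uniformly in $X\in[0,1]$. Since $X\sim\text{Uniform}[0,1]$, $E[K_{(k_{2j-2},k_{2j-1})}(X,X)]$ equals the dimension of the spanned subspace by orthonormality, giving the scaling of the expectation. The pointwise upper bound follows from the standard fact that at scale $v$ only $O(1)$ translates of a compactly supported wavelet are nonzero at any given $X$ and each is $O(2^{v/2})$, so $K_{\mathcal{W}_v}(X,X)\leq C\cdot 2^v$; summing the geometric series over the dyadic range gives $\leq C\,k_{2j-1}$. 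The matching pointwise lower bound is immediate for periodized wavelets (translation invariance makes the diagonal constant in $X$, hence equal to the dimension) and for Daubechies boundary-corrected wavelets follows from the standard Cohen--Daubechies--Vial analysis.

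Finally, multiplying these pointwise two-sided bounds over $j=1,\dots,m+1$ yields $\prod_{j=1}^{m+1}K_{(k_{2j-2},k_{2j-1})}(X,X)\asymp\prod_{j=1}^{m+1}k_{2j-1}$ pointwise in $X$, so taking expectations preserves the equivalence and the stated ratio is $O(1)$. The main obstacle I anticipate is the uniform pointwise lower bound on the diagonal kernel near the endpoints of $[0,1]$: it is trivial for periodized wavelets but for compactly supported boundary-corrected Daubechies wavelets it requires the standard (but nontrivial) verification that $\sum_k \phi_{v,k}(X)^2\gtrsim 2^v$ survives the boundary correction. Everything else is bookkeeping around orthonormality and the dyadic structure.
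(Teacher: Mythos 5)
There is a genuine gap at the heart of your plan: the claimed uniform pointwise lower bound $K_{(k_{2j-2},k_{2j-1})}(X,X)\gtrsim k_{2j-1}$. For the blocks spanned purely by mother wavelets this is exactly the hard part, and your two justifications do not hold up. For periodized wavelets the diagonal $\sum_{k}\psi_{v,k}^{per}(x)^{2}$ is \emph{not} constant in $x$; shifting $x$ by $2^{-v}$ only permutes the terms, so the diagonal is periodic with period $2^{-v}$ and can vary (and possibly dip to zero) within each period --- unlike the father-wavelet case, where the partition-of-unity identity $\sum_{k}\phi(x-k)=1$ forces positivity, there is no identity preventing $\sum_{k}\psi(y-k)^{2}$ from being small or vanishing at particular points for higher-order Daubechies wavelets. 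Deferring the boundary-corrected case to a ``standard'' verification does not repair this, because the issue is already present in the interior. Since the lemma is precisely the lower bound $E\bigl[\prod_{j}K_{(k_{2j-2},k_{2j-1})}(X,X)\bigr]\gtrsim\prod_{j}k_{2j-1}$, an argument that multiplies unproved pointwise lower bounds and then integrates does not establish it.

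The paper's proof avoids pointwise lower bounds entirely. For a single factor it only shows that the diagonal exceeds a constant times its dimension on a set of Lebesgue measure bounded below (using boundedness and unit $L^{2}$ norm of the generating wavelet); for two factors at comparable scales it restricts to a common sub-block and uses $E[K_{1}K_{2}]\geq E[K^{2}]\geq(E[K])^{2}$; and for well-separated scales it runs a nested construction: on the good set $\widetilde{A}_{s}$ of the coarser block, a fixed positive proportion of the much finer compactly supported wavelets have their entire support inside $\widetilde{A}_{s}$, producing a smaller good set $\widetilde{A}_{s+1}$ of measure bounded below on which both diagonals are simultaneously large, and iterating across the $t$ distinct scale groups. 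If you want to salvage your route, you would either have to prove a genuine uniform lower bound on $\sum_{k}\psi(y-k)^{2}$ for the specific wavelet family (which is not available in general), or replace the pointwise step by a positive-measure/nested-support argument of the kind the paper uses.
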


\begin{proof}
(Lemma \ref{var_kernel}) case 1: $m=0,$%
\begin{align*}
&  E\left[  K_{\left(  k_{0},k_{1}\right)  }\left(  X,X\right)  \right] \\
&  =E\left[  \overline{\phi}_{k_{0}}^{k_{1}}\left(  X\right)  ^{T}%
\overline{\phi}_{k_{0}}^{k_{1}}\left(  X\right)  \right] \\
&  =k_{1}-k_{0}%
\end{align*}

which follows from the orthonormality of wavelets.

case 2: $m=1.$ \ If $k_{1}\asymp k_{3}\asymp k^{\ast},$ then,
\begin{align*}
&  E\left[  K_{\left(  k_{0},k_{1}\right)  }\left(  X,X\right)  K_{\left(
k_{2},k_{3}\right)  }\left(  X,X\right)  \right] \\
&  \geq E\left[  \left(  K_{\left(  1,k^{\ast}\right)  }\left(  X,X\right)
\right)  ^{2}\right]  \geq\left(  E\left[  K_{\left(  1,k^{\ast}\right)
}\left(  X,X\right)  \right]  \right)  ^{2}=\left(  k^{\ast}\right)
^{2}\asymp k_{1}k_{3}.
\end{align*}
Similarly, if $k_{1}\asymp k_{3}\gg k^{\ast},$ and we further assume
$k_{1}<k_{3}$ WLOG, \ then%
\begin{align*}
&  E\left[  K_{\left(  k_{0},k_{1}\right)  }\left(  X,X\right)  K_{\left(
k_{2},k_{3}\right)  }\left(  X,X\right)  \right] \\
&  \geq E\left[  \left(  K_{\left(  \frac{k_{1}}{2}+1,k_{1}\right)  }\left(
X,X\right)  \right)  ^{2}\right]  \geq\left(  E\left[  K_{\left(  \frac{k_{1}%
}{2}+1,k_{1}\right)  }\left(  X,X\right)  \right]  \right)  ^{2}\\
&  =\left(  \frac{k_{1}}{2}\right)  ^{2}\asymp k_{1}k_{3};
\end{align*}
otherwise, WLOG, we assume that $k_{1}=o\left(  k_{3}\right)  ,$ then%
\begin{align*}
&  E\left[  K_{\left(  k_{0},k_{1}\right)  }\left(  X,X\right)  K_{\left(
k_{2},k_{3}\right)  }\left(  X,X\right)  \right] \\
&  =E\left[  \sum_{r=k_{0}}^{k_{1}}\phi_{r}^{2}\left(  X\right)  \sum
_{s=k_{2}}^{k_{3}}\phi_{s}^{2}\left(  X\right)  \right] \\
&  \geq E\left[  \sum_{r=k_{0}}^{k_{1}}\phi_{r}^{2}\left(  X\right)
\sum_{s=\frac{k_{3}}{2}+1}^{k_{3}}\phi_{s}^{2}\left(  X\right)  \right]
\end{align*}

Here we further assume that $k_{1}=k^{\ast}.$ The proof for the case when
$k_{1}>k^{\ast}$ follows similarly.

Let $\phi^{w}\left(  x\right)  $ indicate the correponding compactly supported
father/mother wavelet on $\left[  L_{w},U_{w}\right]  ,$ whose absolute value
is bounded above by $M_{w}.$ By the continuity of $\phi^{w}\left(  x\right)
,$ there exists a set $A,$ which is a union of finite number of disjoint open
intervals, such that $\left\vert \phi^{w}\left(  x\right)  \right\vert $ is
greater than $\sqrt{\frac{1}{2\left(  U_{w}-L_{w}\right)  }}$ on $A$.
moreover, the Lebesgue measure (i.e., the length) of $\ A$ is greater than
$\frac{1}{2M_{w}^{2}}$ because $\phi^{w}\left(  x\right)  $ is bounded and has
unit length. \ Specifically,%
\begin{align*}
1  &  =\int\left(  \phi^{w}\left(  x\right)  \right)  ^{2}dx\\
&  =\int_{\left\vert \phi^{w}\left(  x\right)  \right\vert ^{2}\leq\frac
{1}{2\left(  U_{w}-L_{w}\right)  }}\left(  \phi^{w}\left(  x\right)  \right)
^{2}dx+\int_{\left\vert \phi^{w}\left(  x\right)  \right\vert ^{2}>\frac
{1}{2\left(  U_{w}-L_{w}\right)  }}\left(  \phi^{w}\left(  x\right)  \right)
^{2}dx\\
&  \leq\frac{1}{2\left(  U_{w}-L_{w}\right)  }\left(  U_{w}-L_{w}\right)
+M_{w}^{2}\mu\left(  A\right)  .
\end{align*}

By definition $\left\{  \phi_{r}\left(  X\right)  :1\leq r\leq k^{\ast
}\right\}  $ is a sequence of level $\log_{2}k^{\ast}$ scaled and translated
father wavelets on the unit interval $\left[  0,1\right]  $, \ therefore it is
obvious that the lebesgue measure of the set
\begin{align*}
\widetilde{A}  &  \equiv\left\{  x:\sum_{r=1}^{k^{\ast}}\phi_{r}^{2}\left(
x\right)  >\frac{k^{\ast}}{2\left(  U_{w}-L_{w}\right)  }\right\} \\
&  \supset%
%TCIMACRO{\dbigcup \limits_{r=1}^{k^{\ast}}}%
%BeginExpansion
{\displaystyle\bigcup\limits_{r=1}^{k^{\ast}}}
%EndExpansion
\left\{  x:\frac{1}{k^{\ast}}\phi_{r}^{2}\left(  x\right)  >\frac{1}{2\left(
U_{w}-L_{w}\right)  }\right\}
\end{align*}
is greater than $\frac{1}{2M_{w}^{2}\left(  U_{w}-L_{w}\right)  }.$
Furthermore, for $1\leq r\leq k^{\ast}$, the set $\left\{  x:\frac{1}{k^{\ast
}}\phi_{r}^{2}\left(  x\right)  >\frac{1}{2\left(  U_{w}-L_{w}\right)
}\right\}  $ consists of multiple disjoint open intervals whose lengths are
all of order $\frac{1}{k^{\ast}}.$\ In contrast, the support for any level
$\log_{2}k_{3}-1$ scaled and translated mother wavelet $\phi_{r},$
\ $\frac{k_{3}}{2}+1\leq r\leq k_{3}$, is of order $k_{3}^{-1}\ll\left(
k^{\ast}\right)  ^{-1}$ as $k^{\ast}=o\left(  k_{3}\right)  .$ Therefore, \ at
least $\frac{1}{2}\mu\left(  \widetilde{A}\right)  $ proportion of the level
$\log_{2}k_{3}-1$ scaled and translated mother wavelets $\left(  \phi
_{\frac{k_{3}}{2}+1},...\phi_{k_{3}}\right)  $ have their support inside
$\widetilde{A}.$ Hence, \ \
\begin{align*}
&  E\left[  K_{\left(  k_{0},k_{1}\right)  }\left(  X,X\right)  K_{\left(
k_{2},k_{3}\right)  }\left(  X,X\right)  \right] \\
&  >E\left[  1_{\widetilde{A}}\sum_{r=1}^{k^{\ast}}\phi_{r}^{2}\left(
X\right)  \sum_{s=\frac{k_{3}}{2}+1}^{k_{3}}\phi_{s}^{2}\left(  X\right)
\right] \\
&  \geq\frac{k^{\ast}}{2\left(  U_{w}-L_{w}\right)  }E\left[  1_{\widetilde{A}%
}\sum_{s=\frac{k_{3}}{2}+1}^{k_{3}}\phi_{s}^{2}\left(  X\right)  \right] \\
&  >\frac{k^{\ast}}{2\left(  U_{w}-L_{w}\right)  }\frac{1}{4M_{w}^{2}\left(
U_{w}-L_{w}\right)  }\frac{k_{3}}{2}\\
&  \asymp k_{1}k_{3}.
\end{align*}

case 3: $m>1.$ \ WLOG, we assume that $k_{1}\leq k_{3}...\leq k_{2m+1}$ and
$k_{1}\asymp k_{3}..\asymp k_{2l_{1}-1}\ll k_{2l_{1}+1}\asymp..\asymp
k_{2l_{2}-1}\ll...\ll k_{2l_{t-1}+1}\asymp...\asymp k_{2l_{t}-1}=k_{2m+1}$ for
$1\leq t\leq m+1,$ $0=l_{0}<l_{1}...<l_{t}=m+1,$ then
\begin{align*}
&  E\left[
%TCIMACRO{\dprod \limits_{j=1}^{m+1}}%
%BeginExpansion
{\displaystyle\prod\limits_{j=1}^{m+1}}
%EndExpansion
K_{\left(  k_{2j-2},k_{2j-1}\right)  }\left(  X,X\right)  \right] \\
&  \geq E\left[  \left(  K_{\left(  k_{0},k_{1}\right)  }\right)  ^{l_{1}}%
%TCIMACRO{\dprod \limits_{r=2}^{t}}%
%BeginExpansion
{\displaystyle\prod\limits_{r=2}^{t}}
%EndExpansion
\left(  K_{\left(  \frac{k_{2l_{r-1}+1}}{2}+1,k_{2l_{r-1}+1}\right)  }\right)
^{l_{r}-l_{r-1}}\right] \\
&  >E\left[  \left(  \sum_{r=k_{0}}^{k_{1}}\phi_{r}^{2}\left(  X\right)
\right)  ^{l_{1}}%
%TCIMACRO{\dprod \limits_{r=2}^{t}}%
%BeginExpansion
{\displaystyle\prod\limits_{r=2}^{t}}
%EndExpansion
\left(  \sum_{r=\frac{k_{2l_{r-1}+1}}{2}+1}^{k_{2l_{r-1}+1}}\phi_{r}%
^{2}\left(  X\right)  \right)  ^{l_{r}-l_{r-1}}\right]
\end{align*}

By a similar argument as that in case 2, we can show that there exists a set
$\widetilde{A}_{1},$ which consists of multiple disjoint open intervals, such
that $\sum_{r=k_{0}}^{k_{1}}\phi_{r}^{2}\left(  x\right)  >c_{1}^{\phi}k_{1},$
$\forall$ $x\in\widetilde{A}_{1}$ and $\mu\left(  \widetilde{A}_{1}\right)
>\delta_{1}^{\phi}$ for some positive constants $c_{1}^{\phi},\delta_{1}%
^{\phi}.$ moreover, by the multiresolution analysis (MRA) property of
compactly supported wavelets and the fact that $k_{1}=o\left(  k_{2l_{1}%
+1}\right)  $, it is obvious that at least $\frac{1}{2}\delta_{1}^{\phi}$
proportion of the level $log_{2}\left(  k_{2l_{1}+1}\right)  -1$ scaled and
translated mother wavelets $\left\{  \phi_{\frac{k_{2l_{1}+1}}{2}+1}%
,...\phi_{k_{2l_{1}+1}}\right\}  $ have support inside $\widetilde{A}_{1}. $
Hence we can find a set $\widetilde{A}_{2}\subset\widetilde{A}_{1}$, which is
also a union of disjoint open intervals, such that $\sum_{r=\frac{k_{2l_{1}%
+1}}{2}+1}^{k_{2l_{1}+1}}\phi_{r}^{2}\left(  x\right)  >c_{2}^{\phi}%
k_{2l_{1}+1},$ $\forall$ $x\in\widetilde{A}_{2}$ and $\mu\left(
\widetilde{A}_{2}\right)  >\delta_{2}^{\phi}\frac{\delta_{1}^{\phi}}{2}$ for
some positive constants $c_{2}^{\phi},\delta_{2}^{\phi}.$ \ Furthermore, by
applying this algorithm in a nested fashion $t-1$ times, we can find a
decreasing sequence of sets $\widetilde{A}_{1}\supset\widetilde{A}_{2}%
\supset...\supset\widetilde{A}_{t-2}\supset\widetilde{A}_{t-1},$ such that for
any $1\leq s\leq t-2,$ $\
%TCIMACRO{\dsum \limits_{r=\frac{k_{2l_{s}+1}}{2}+1}^{k_{2l_{s}+1}}}%
%BeginExpansion
{\displaystyle\sum\limits_{r=\frac{k_{2l_{s}+1}}{2}+1}^{k_{2l_{s}+1}}}
%EndExpansion
\phi_{r}^{2}\left(  x\right)  >c_{s+1}^{\phi}k_{2l_{s}+1}, $ \ $\forall$
$x\in\widetilde{A}_{s+1}, $ and \ $\mu\left(  \widetilde{A}_{s+1}\right)
>\delta_{s+1}^{\phi}%
%TCIMACRO{\dprod \limits_{r=1}^{s}}%
%BeginExpansion
{\displaystyle\prod\limits_{r=1}^{s}}
%EndExpansion
\frac{\delta_{r}^{\phi}}{2}$ for some positive constants $\left\{
c_{s+1}^{\phi},\text{ }1\leq s\leq t-2\right\}  $ and $\left\{  \delta
_{r}^{\phi},\text{ }1\leq r\leq t-1\right\}  .$ \ In addition, \ $%
%TCIMACRO{\dprod \limits_{r=1}^{t-1}}%
%BeginExpansion
{\displaystyle\prod\limits_{r=1}^{t-1}}
%EndExpansion
\frac{\delta_{r}^{\phi}}{2}$ proportion of the level $\log_{2}k_{2l_{t-1}%
+1}-1$ scaled and translated mother wavelets $\left\{  \phi_{\frac
{k_{2l_{t-1}+1}}{2}+1},...\phi_{k_{2l_{t-1}+1}}\right\}  $ have support inside
$\widetilde{A}_{t-1}.$

Hence, \
\begin{align*}
&  E\left[
%TCIMACRO{\dprod \limits_{j=1}^{m+1}}%
%BeginExpansion
{\displaystyle\prod\limits_{j=1}^{m+1}}
%EndExpansion
K_{\left(  k_{2j-2},k_{2j-1}\right)  }\left(  X,X\right)  \right] \\
&  >\left(
%TCIMACRO{\dprod \limits_{r=1}^{t-1}}%
%BeginExpansion
{\displaystyle\prod\limits_{r=1}^{t-1}}
%EndExpansion
k_{2l_{r-1}+1}^{l_{r}-l_{r-1}}\right)  E \left[  \left(  1_{A_{t-1}}%
\sum_{r=\frac{k_{2l_{t-1}+1}}{2}+1}^{k_{2l_{t-1}+1}}\phi_{r}^{2}\left(
X\right)  \right)  ^{l_{t}-l_{t-1}}\right] \\
&  \geq\left(
%TCIMACRO{\dprod \limits_{r=1}^{t-1}}%
%BeginExpansion
{\displaystyle\prod\limits_{r=1}^{t-1}}
%EndExpansion
k_{2l_{r-1}+1}^{l_{r}-l_{r-1}}\right)  E\left[  \left(  1_{A_{t-1}}%
\sum_{r=\frac{k_{2l_{t-1}+1}}{2}+1}^{k_{2l_{t-1}+1}}\phi_{r}^{2}\left(
X\right)  \right)  \right]  ^{l_{t}-l_{t-1}}\\
&  >\left(
%TCIMACRO{\dprod \limits_{r=1}^{t-1}}%
%BeginExpansion
{\displaystyle\prod\limits_{r=1}^{t-1}}
%EndExpansion
k_{2l_{r-1}+1}^{l_{r}-l_{r-1}}\right)  \left(  \frac{k_{2l_{t-1}+1}}{2}%
%TCIMACRO{\dprod \limits_{r=1}^{t-1}}%
%BeginExpansion
{\displaystyle\prod\limits_{r=1}^{t-1}}
%EndExpansion
\frac{\delta_{r}^{\phi}}{2}\right)  ^{l_{t}-l_{t-1}}\times%
%TCIMACRO{\dprod \limits_{j=1}^{m+1}}%
%BeginExpansion
{\displaystyle\prod\limits_{j=1}^{m+1}}
%EndExpansion
k_{2j-1}%
\end{align*}

\end{proof}

Now, we are ready to prove the variance results for the univariate case.

\begin{lemma}
\label{var_uni} Suppose the assumptions in Lemma \ref{var_kernel} hold, then
\begin{align*}
&  \left\Vert \overline{\phi}_{k_{0}}^{k_{1}}\left(  X_{i_{1}}\right)  ^{T}%
%TCIMACRO{\dprod \limits_{j=1}^{m}}%
%BeginExpansion
{\displaystyle\prod\limits_{j=1}^{m}}
%EndExpansion
\left\{  \overline{\phi}_{k_{2j-2}}^{k_{2j-1}}\left(  X_{i_{j+1}}\right)
\overline{\phi}_{k_{2j}}^{k_{2j+1}}\left(  X_{i_{j+1}}\right)  ^{T}\right\}
\overline{\phi}_{k_{2m}}^{k_{2m+1}}\left(  X_{i_{m+2}}\right)  \right\Vert
^{2}\\
&  =\left\Vert
%TCIMACRO{\dprod \limits_{j=1}^{m+1}}%
%BeginExpansion
{\displaystyle\prod\limits_{j=1}^{m+1}}
%EndExpansion
K_{\left(  k_{2j-2},k_{2j-1}\right)  }\left(  X_{i_{j}},X_{i_{j+1}}\right)
\right\Vert ^{2}\\
&  \asymp\Pi_{j=1}^{m+1}k_{2j-1}%
\end{align*}

\end{lemma}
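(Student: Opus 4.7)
\textbf{Proof proposal for Lemma \ref{var_uni}.}

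The first equality is purely algebraic: using the reproducing identity $\overline{\phi}_{k_{2j-2}}^{k_{2j-1}}(x)^{T}\overline{\phi}_{k_{2j-2}}^{k_{2j-1}}(y)=K_{(k_{2j-2},k_{2j-1})}(x,y)$, I would telescope the alternating scalar--outer products so that each consecutive pair $\overline{\phi}_{k_{2j-2}}^{k_{2j-1}}(X_{i_{j+1}})\overline{\phi}_{k_{2j}}^{k_{2j+1}}(X_{i_{j+1}})^{T}$ contracts with the neighboring vector factors into two kernels sharing the argument $X_{i_{j+1}}$. Since $X\sim\mathrm{Uniform}[0,1]$, the Lebesgue $L_{2}$--norm coincides with expectation, so $\|\cdot\|^{2}=E\bigl[\prod_{j=1}^{m+1}K_{(k_{2j-2},k_{2j-1})}^{2}(X_{i_{j}},X_{i_{j+1}})\bigr]$ for $X_{i_{1}},\dots,X_{i_{m+2}}$ independent uniform.

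For the \emph{upper bound}, I would peel off variables iteratively using the reproducing property $\int K_{(a,b)}^{2}(x,y)\,dx=K_{(a,b)}(y,y)=\sum_{r\in(a,b]}\phi_{r}^{2}(y)$, combined with the uniform estimate $\sup_{y}K_{(a,b)}(y,y)\lesssim b-a\lesssim b$ that holds for compactly supported wavelets because at each level $\ell$ only $O(1)$ basis functions overlap at any point, and each has $\|\phi\|_{\infty}^{2}\asymp 2^{\ell}$. Integrating out $X_{i_{1}}$ replaces the leftmost factor by $K_{1}(X_{i_{2}},X_{i_{2}})\le C k_{1}$, which pulls out of the remaining integral; the next integration in $X_{i_{2}}$ then yields $K_{2}(X_{i_{3}},X_{i_{3}})\le C k_{3}$, and so on. Iterating $m+1$ times gives $E[\prod K^{2}]\le C^{m+1}\prod_{j=1}^{m+1}k_{2j-1}$.

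For the \emph{lower bound}, I plan to adapt the "diagonal tube" construction from Lemma \ref{var_kernel}. Choose nested sets $\widetilde A_{1}\supset\widetilde A_{2}\supset\cdots\supset\widetilde A_{m+1}\subset[0,1]$ of positive Lebesgue measure on which the dyadic-shift structure of the compactly supported basis functions forces $K_{j}(y,y)\gtrsim k_{2j-1}$ simultaneously for $j=1,\dots,m+1$; this is built just as in Lemma \ref{var_kernel} by passing from father to mother wavelets at successively finer scales and using the fact that a constant fraction of the finer-scale wavelets have supports contained in the coarser set. On the tube $\{(x_{1},\dots,x_{m+2}):x_{j}\in\widetilde A_{m+1}\text{ and }|x_{j+1}-x_{j}|\le c\,k_{2j-1}^{-1}\}$, continuity of the localized wavelet profiles gives $K_{(k_{2j-2},k_{2j-1})}(x_{j},x_{j+1})^{2}\asymp K_{j}(x_{j},x_{j})K_{j}(x_{j+1},x_{j+1})\asymp k_{2j-1}^{2}$. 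The tube has volume $\asymp\prod_{j=1}^{m+1}k_{2j-1}^{-1}$ (after fixing one free coordinate in $\widetilde A_{m+1}$), so the integral is bounded below by $\prod k_{2j-1}^{2}\cdot\prod k_{2j-1}^{-1}=\prod_{j=1}^{m+1}k_{2j-1}$.

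The main obstacle will be the lower bound, specifically verifying that one can choose a single tube on which all kernels $K_{(k_{2j-2},k_{2j-1})}$ are uniformly of order $k_{2j-1}$ despite the fact that some $I_{j}=(k_{2j-2},k_{2j-1}]$ may span several wavelet levels (mixing father and mother contributions) and that the ratios of successive $k_{2j-1}$ may range between $\asymp 1$ and $\gg 1$. The bookkeeping of which levels contribute to which $K_{j}$ and the nested set construction is delicate; the easier direction is to keep only the diagonal contribution $\vec r=\vec r'$ in the wavelet expansion of $\|\prod K\|^{2}$, which is a sum of the non-negative integrals $\int\phi_{r_{j}}^{2}\phi_{r_{j+1}}^{2}$, and evaluate $\sum_{\vec r}\prod_{j}\int\phi_{r_{j}}^{2}\phi_{r_{j+1}}^{2}$ as an iterated sum using $\sum_{r_{j}}\int\phi_{r_{j}}^{2}\phi_{r_{j+1}}^{2}=\int K_{j}(x,x)\phi_{r_{j+1}}^{2}(x)\,dx\asymp k_{2j-1}$, which leads to the same $\prod k_{2j-1}$ lower bound without the tube argument.
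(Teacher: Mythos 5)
Your first equality and the upper bound are fine and essentially identical to the paper's argument (peel off the end variables with the reproducing property and bound each diagonal kernel by $\sup_x K_{(a,b)}(x,x)=O(b)$). The genuine gap is in the lower bound. Your fallback route --- ``keep only the diagonal contribution $\vec r=\vec r'$'' --- is not a valid lower bound. For $m\geq 2$ each interior variable $X_{i_{j+1}}$ couples \emph{four} wavelets, so after integrating out the two end variables one gets, e.g.\ for $m=2$, $E[F^{2}]=\sum_{r,r'\in I_{2}}A_{rr'}B_{rr'}=\mathrm{tr}(AB)$ with $A_{rr'}=\int K_{1}(x,x)\phi_{r}\phi_{r'}\,dx$ and $B_{rr'}=\int K_{3}(y,y)\phi_{r}\phi_{r'}\,dy$. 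The discarded cross terms $r\neq r'$ are not non-negative, and for positive semidefinite $A,B$ one can have $\mathrm{tr}(AB)$ strictly smaller than $\sum_{r}A_{rr}B_{rr}$ (take $A=\left(\begin{smallmatrix}1&1\\1&1\end{smallmatrix}\right)$, $B=\left(\begin{smallmatrix}1&-1\\-1&1\end{smallmatrix}\right)$: $\mathrm{tr}(AB)=0$ while the diagonal sum is $2$). So dropping the off-diagonal terms cannot by itself deliver the bound; moreover, even evaluating your diagonal sum requires $\int K_{j}(x,x)\phi_{r_{j+1}}^{2}(x)\,dx\gtrsim k_{2j-1}$ uniformly in $r_{j+1}$, which is exactly the pointwise/nested-support issue that makes the lower bound hard in the first place. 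Your primary ``tube'' route faces the same unresolved issue you flag yourself: the good sets on which $K_{j}(x,x)\gtrsim k_{2j-1}$ are unions of intervals at the \emph{fine} scale, so membership does not propagate along a chain of constraints whose widths $k_{2j-1}^{-1}$ live at very different scales, and the comparability $K_{j}(x_{j},x_{j+1})^{2}\asymp K_{j}(x_{j},x_{j})K_{j}(x_{j+1},x_{j+1})$ needs Lipschitz control at scale $k_{2j-1}^{-2}$ with a small tube constant; none of this is supplied.

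The paper's proof avoids pointwise lower bounds altogether and proceeds in three steps you are missing: (i) it telescopes $K_{(k_{2j-2},k_{2j-1})}=K_{(1,k_{2j-1})}-K_{(1,k_{2j-2})}$ and uses the already-proved upper bound to show every correction term has squared norm $o\bigl(\prod_{j}k_{2j-1}\bigr)$, reducing the problem to the full kernels $K_{(1,k_{2j-1})}$ (this is eq.\ (\ref{eq1})); (ii) it uses the approximate reproducing property of $K_{(1,k)}$ on the bounded function $h_{X_{i_{3}}}(x)=\frac{1}{k_{1}k_{3}}K_{(1,k_{1})}(x,x)K_{(1,k_{3})}(x,X_{i_{3}})$ to collapse, one variable at a time, $\bigl\Vert\prod_{j}K_{(1,k_{2j-1})}\bigr\Vert^{2}$ onto $E\bigl[\prod_{j}K_{(1,k_{2j-1})}(X,X)\bigr]$ up to an $o\bigl(\prod_{j}k_{2j-1}\bigr)$ error; (iii) it then invokes Lemma \ref{var_kernel}, whose nested-set construction handles exactly the multi-scale simultaneity your tube would need, to bound that single-variable diagonal expectation below by $c\prod_{j}k_{2j-1}$. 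To repair your proposal you would need either this reduction-and-collapse argument or a correct replacement for the invalid diagonal-restriction step.
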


\begin{proof}
(Lemma: \ref{var_uni}) case 1:$\ m=0.$%
\begin{align*}
&  E\left[  \overline{\phi}_{k_{0}}^{k_{1}}\left(  X_{i_{1}}\right)
^{T}\overline{\phi}_{k_{0}}^{k_{1}}\left(  X_{i_{2}}\right)  \right]  ^{2}\\
&  =E\left[  \left(  K_{\left(  k_{0},k_{1}\right)  }\left(  X_{i_{1}%
},X_{i_{2}}\right)  \right)  ^{2}\right] \\
&  =E\left[  \left(  K_{\left(  k_{0},k_{1}\right)  }\left(  X_{i_{1}%
},X_{i_{1}}\right)  \right)  \right] \\
&  =k_{1}-k_{0}%
\end{align*}
which follows from the orthonormality of wavelets. \ 

Next, we prove the lemma for the case where $m\geq1$ in two steps.

a). We first prove that
\[
\left\Vert
%TCIMACRO{\dprod \limits_{j=1}^{m+1}}%
%BeginExpansion
{\displaystyle\prod\limits_{j=1}^{m+1}}
%EndExpansion
K_{\left(  k_{2j-2},k_{2j-1}\right)  }\left(  X_{i_{j}},X_{i_{j+1}}\right)
\right\Vert ^{2}=O\left(  \Pi_{j=1}^{m+1}k_{2j-1}\right)
\]

Consider the case $m=1$
\begin{align*}
&  E\left[  \left(  K_{\left(  k_{0},k_{1}\right)  }\left(  X_{i_{1}}%
,X_{i_{2}}\right)  K_{\left(  k_{2},k_{3}\right)  }\left(  X_{i_{2}},X_{i_{3}%
}\right)  \right)  ^{2}\right] \\
&  =E\left[  \left(  K_{\left(  k_{0},k_{1}\right)  }\left(  X_{i_{2}%
},X_{i_{2}}\right)  K_{\left(  k_{2},k_{3}\right)  }\left(  X_{i_{2}}%
,X_{i_{2}}\right)  \right)  \right] \\
&  \leq||K_{\left(  k_{0},k_{1}\right)  }\left(  X,X\right)  ||_{\infty
}||K_{\left(  k_{2},k_{3}\right)  }\left(  X,X\right)  ||_{\infty}%
\end{align*}
if $\left(  k_{0},k_{1}\right)  =(1,k^{\ast}),$ then $||K_{\left(  k_{0}%
,k_{1}\right)  }\left(  X,X\right)  ||_{\infty}=O\left(  ||\phi_{1}^{2}\left(
X\right)  ||_{\infty}\right)  =O\left(  k_{1}-k_{0}\right)  $. Similarly, if
$k_{0}>k^{\ast}$ and $\log_{2}\left(  k_{0}-1\right)  =\log_{2}\left(
k_{1}\right)  -1,$ then $||K_{\left(  k_{0},k_{1}\right)  }\left(  X,X\right)
||_{\infty}=O\left(  ||\phi_{k_{0}}^{2}\left(  X\right)  ||_{\infty}\right)
=O\left(  k_{1}-k_{0}\right)  ;$ otherwise, if $k_{0}>k^{\ast}$ and $\log
_{2}\left(  k_{0}-1\right)  >\log_{2}\left(  k_{1}\right)  -1,$ then
\begin{align*}
||K_{\left(  k_{0},k_{1}\right)  }\left(  X,X\right)  ||_{\infty}  &
=O\left(  \sum_{t=0}^{int\left(  \log_{2}\left(  \frac{k_{1}}{k_{0}}\right)
\right)  }k_{0}2^{t}\right) \\
&  =O\left(  k_{1}\right)
\end{align*}
Finally, \ for $m>1$%
\begin{align*}
&  E\left[  \left(
%TCIMACRO{\dprod \limits_{j=1}^{m+1}}%
%BeginExpansion
{\displaystyle\prod\limits_{j=1}^{m+1}}
%EndExpansion
K_{\left(  k_{2j-2},k_{2j-1}\right)  }\left(  X_{i_{j}},X_{i_{j+1}}\right)
\right)  ^{2}\right] \\
&  =E\left[  K_{\left(  k_{0},k_{1}\right)  }\left(  X_{i_{1}},X_{i_{2}%
}\right)  ^{2}\left(
%TCIMACRO{\dprod \limits_{j=2}^{m}}%
%BeginExpansion
{\displaystyle\prod\limits_{j=2}^{m}}
%EndExpansion
K_{\left(  k_{2j-2},k_{2j-1}\right)  }\left(  X_{i_{j}},X_{i_{j+1}}\right)
\right)  ^{2}K_{\left(  k_{2m},k_{2m+1}\right)  }\left(  X_{i_{m+1}%
},X_{i_{m+2}}\right)  ^{2}\right] \\
&  =E\left[  K_{\left(  k_{0},k_{1}\right)  }\left(  X_{i_{2}},X_{i_{2}%
}\right)  \left(
%TCIMACRO{\dprod \limits_{j=2}^{m}}%
%BeginExpansion
{\displaystyle\prod\limits_{j=2}^{m}}
%EndExpansion
K_{\left(  k_{2j-2},k_{2j-1}\right)  }\left(  X_{i_{j}},X_{i_{j+1}}\right)
\right)  ^{2}K_{\left(  J_{2m},k_{2m+1}\right)  }\left(  X_{i_{m+2}%
},X_{i_{m+2}}\right)  \right] \\
&  \leq||K_{\left(  k_{0},k_{1}\right)  }\left(  X,X\right)  ||_{\infty
}E\left[  \left(
%TCIMACRO{\dprod \limits_{j=2}^{m}}%
%BeginExpansion
{\displaystyle\prod\limits_{j=2}^{m}}
%EndExpansion
K_{\left(  k_{2j-2},k_{2j-1}\right)  }\left(  X_{i_{j}},X_{i_{j+1}}\right)
\right)  ^{2}\right]  ||K_{\left(  k_{2m},k_{2m+1}\right)  }\left(
X,X\right)  ||_{\infty}\\
&  =||K_{\left(  k_{0},k_{1}\right)  }\left(  X,X\right)  ||_{\infty
}||K_{\left(  k_{2m},k_{2m+1}\right)  }\left(  X,X\right)  ||_{\infty}E\left[
\left(
%TCIMACRO{\dprod \limits_{j=2}^{m}}%
%BeginExpansion
{\displaystyle\prod\limits_{j=2}^{m}}
%EndExpansion
K_{\left(  k_{2j-2},k_{2j-1}\right)  }\left(  X_{i_{j}},X_{i_{j+1}}\right)
\right)  ^{2}\right] \\
&  \leq%
%TCIMACRO{\dprod \limits_{j=1}^{m+1}}%
%BeginExpansion
{\displaystyle\prod\limits_{j=1}^{m+1}}
%EndExpansion
||K_{\left(  k_{2j-2},k_{2j-1}\right)  }\left(  X,X\right)  ||_{\infty}\\
&  =O\left(  \Pi_{j=1}^{m+1}k_{2j-1}\right)
\end{align*}

b). In the second step, we prove that%
\begin{equation}
\frac{\Pi_{j=1}^{m+1}k_{2j-1}}{\left\Vert
%TCIMACRO{\dprod \limits_{j=1}^{m+1}}%
%BeginExpansion
{\displaystyle\prod\limits_{j=1}^{m+1}}
%EndExpansion
K_{\left(  k_{2j-2},k_{2j-1}\right)  }\left(  X_{i_{j}},X_{i_{j+1}}\right)
\right\Vert ^{2}}=O\left(  1\right) \label{var_lower}%
\end{equation}

We shall first show that $\left\Vert
%TCIMACRO{\dprod \limits_{j=1}^{m+1}}%
%BeginExpansion
{\displaystyle\prod\limits_{j=1}^{m+1}}
%EndExpansion
K_{\left(  1,k_{2j-1}\right)  }\left(  X_{i_{j}},X_{i_{j+1}}\right)
\right\Vert ^{2}\asymp%
%TCIMACRO{\dprod \limits_{j=1}^{m+1}}%
%BeginExpansion
{\displaystyle\prod\limits_{j=1}^{m+1}}
%EndExpansion
k_{2j-1},$ and then complete the proof by showing that
\begin{equation}
\left\Vert
%TCIMACRO{\dprod \limits_{j=1}^{m+1}}%
%BeginExpansion
{\displaystyle\prod\limits_{j=1}^{m+1}}
%EndExpansion
K_{\left(  k_{2j-2},k_{2j-1}\right)  }\left(  X_{i_{j}},X_{i_{j+1}}\right)
\right\Vert ^{2}\asymp\left\Vert
%TCIMACRO{\dprod \limits_{j=1}^{m+1}}%
%BeginExpansion
{\displaystyle\prod\limits_{j=1}^{m+1}}
%EndExpansion
K_{\left(  1,k_{2j-1}\right)  }\left(  X_{i_{j}},X_{i_{j+1}}\right)
\right\Vert ^{2}\label{eq1}%
\end{equation}
Specifically,
\begin{align*}
&  \left\Vert
%TCIMACRO{\dprod \limits_{j=1}^{m+1}}%
%BeginExpansion
{\displaystyle\prod\limits_{j=1}^{m+1}}
%EndExpansion
K_{\left(  1,k_{2j-1}\right)  }\left(  X_{i_{j}},X_{i_{j+1}}\right)
\right\Vert ^{2}\\
&  =E\left[  K_{\left(  1,k_{1}\right)  }\left(  X_{i_{2}},X_{i_{2}}\right)
\left(  K_{\left(  1,k_{3}\right)  }\left(  X_{i_{2}},X_{i_{3}}\right)
\right)  ^{2}%
%TCIMACRO{\dprod \limits_{j=3}^{m+1}}%
%BeginExpansion
{\displaystyle\prod\limits_{j=3}^{m+1}}
%EndExpansion
\left(  K_{\left(  1,k_{2j-1}\right)  }\left(  X_{i_{j}},X_{i_{j+1}}\right)
\right)  ^{2}\right] \\
&  =E\left[  K_{\left(  1,k_{1}\right)  }\left(  X_{i_{3}},X_{i_{3}}\right)
K_{\left(  1,k_{3}\right)  }\left(  X_{i_{3}},X_{i_{3}}\right)
%TCIMACRO{\dprod \limits_{j=3}^{m+1}}%
%BeginExpansion
{\displaystyle\prod\limits_{j=3}^{m+1}}
%EndExpansion
\left(  K_{\left(  1,k_{2j-1}\right)  }\left(  X_{i_{j}},X_{i_{j+1}}\right)
\right)  ^{2}\right] \\
&  -E\left[  k_{1}k_{3}\left\{  h_{X_{i_{3}}}\left(  X_{i_{3}}\right)
-\left(  \mathcal{K}_{\left(  1,k_{3}\right)  }\mathcal{\circ}h_{X_{i_{3}}%
}\right)  \left(  X_{i_{3}}\right)  \right\}
%TCIMACRO{\dprod \limits_{j=3}^{m+1}}%
%BeginExpansion
{\displaystyle\prod\limits_{j=3}^{m+1}}
%EndExpansion
\left(  K_{\left(  1,k_{2j-1}\right)  }\left(  X_{i_{j}},X_{i_{j+1}}\right)
\right)  ^{2}\right]
\end{align*}
where $h_{X_{i_{3}}}\left(  x\right)  =\frac{1}{k_{1}k_{3}}K_{\left(
1,k_{1}\right)  }\left(  x,x\right)  K_{\left(  1,k_{3}\right)  }\left(
x,X_{i_{3}}\right)  $ and $\left(  \mathcal{K}_{\left(  1,k_{3}\right)
}\mathcal{\circ}h\right)  \left(  \cdot\right)  =E\left[  h\left(  X\right)
K_{\left(  1,k_{3}\right)  }\left(  X,\cdot\right)  \right]  .$

As previously shown, \ there exists a positive constant $C_{\phi}$ such that
$\underset{x,X_{i_{3}}}{sup}\left\vert h_{X_{i_{3}}}\left(  x\right)
\right\vert \leq C_{\phi}.$ \ By the regularity property of compactly
supported wavelets and the approximation property of the kernel $K_{\left(
1,k_{3}\right)  }\left(  \cdot,\cdot\right)  ,$ it is obvious that $\left\vert
\left\vert h_{X_{i_{3}}}\left(  X_{i_{2}}\right)  -\left(  \mathcal{K}%
_{\left(  1,k_{3}\right)  }\mathcal{\circ}h_{X_{i_{3}}}\right)  \left(
X_{i_{3}}\right)  \right\vert \right\vert _{\infty}=o\left(  1\right)  .$
Thus,
\begin{align*}
&  E\left[
\begin{array}
[c]{c}%
k_{1}k_{3}\left\{  h_{X_{i_{3}}}\left(  X_{i_{2}}\right)  -\left(
\mathcal{K}_{\left(  1,k_{3}\right)  }\mathcal{\circ}h_{X_{i_{3}}}\right)
\left(  X_{i_{3}}\right)  \right\} \\
\times%
%TCIMACRO{\dprod \limits_{j=3}^{m+1}}%
%BeginExpansion
{\displaystyle\prod\limits_{j=3}^{m+1}}
%EndExpansion
\left(  K_{\left(  1,k_{2j-1}\right)  }\left(  X_{i_{j}},X_{i_{j+1}}\right)
\right)  ^{2}%
\end{array}
\right] \\
&  =o\left(  \Pi_{j=1}^{m+1}k_{2j-1}\right)
\end{align*}
\newline

In fact, by arguing similarly as above,
\begin{align*}
&  \left\Vert
%TCIMACRO{\dprod \limits_{j=1}^{m+1}}%
%BeginExpansion
{\displaystyle\prod\limits_{j=1}^{m+1}}
%EndExpansion
K_{\left(  1,k_{2j-1}\right)  }\left(  X_{i_{j}},X_{i_{j+1}}\right)
\right\Vert ^{2}\\
&  =E\left[  \left\{
\begin{array}
[c]{c}%
K_{\left(  1,k_{1}\right)  }\left(  X_{i_{3}},X_{i_{3}}\right)  K_{\left(
1,k_{3}\right)  }\left(  X_{i_{3}},X_{i_{3}}\right) \\
\times%
%TCIMACRO{\dprod \limits_{j=3}^{m+1}}%
%BeginExpansion
{\displaystyle\prod\limits_{j=3}^{m+1}}
%EndExpansion
\left(  K_{\left(  1,k_{2j-1}\right)  }\left(  X_{i_{j}},X_{i_{j+1}}\right)
\right)  ^{2}%
\end{array}
\right\}  \right]  +o\left(  \Pi_{j=1}^{m+1}k_{2j-1}\right) \\
&  =E\left[  \left\{
\begin{array}
[c]{c}%
K_{\left(  1,k_{1}\right)  }\left(  X_{i_{4}},X_{i_{4}}\right)  K_{\left(
1,k_{3}\right)  }\left(  X_{i_{4}},X_{i_{4}}\right)  K_{\left(  1,k_{5}%
\right)  }\left(  X_{i_{4}},X_{i_{4}}\right) \\
\times%
%TCIMACRO{\dprod \limits_{j=4}^{m+1}}%
%BeginExpansion
{\displaystyle\prod\limits_{j=4}^{m+1}}
%EndExpansion
\left(  K_{\left(  1,k_{2j-1}\right)  }\left(  X_{i_{j}},X_{i_{j+1}}\right)
\right)  ^{2}%
\end{array}
\right\}  \right] \\
&  +o\left(  \Pi_{j=1}^{m+1}k_{2j-1}\right) \\
&  =E\left[
%TCIMACRO{\dprod \limits_{j=1}^{m+1}}%
%BeginExpansion
{\displaystyle\prod\limits_{j=1}^{m+1}}
%EndExpansion
K_{\left(  1,k_{2j-1}\right)  }\left(  X,X\right)  \right]  +o\left(
\Pi_{j=1}^{m+1}k_{2j-1}\right) \\
&  \asymp%
%TCIMACRO{\dprod \limits_{j=1}^{m+1}}%
%BeginExpansion
{\displaystyle\prod\limits_{j=1}^{m+1}}
%EndExpansion
k_{2j-1}%
\end{align*}

Now, we prove eq. (\ref{eq1}).
\begin{align*}
&
%TCIMACRO{\dprod \limits_{j=1}^{m+1}}%
%BeginExpansion
{\displaystyle\prod\limits_{j=1}^{m+1}}
%EndExpansion
K_{\left(  k_{2j-2},k_{2j-1}\right)  }\left(  X_{i_{j}},X_{i_{j+1}}\right) \\
&  =K_{\left(  1,k_{1}\right)  }\left(  X_{i_{1}},X_{i_{2}}\right)
%TCIMACRO{\dprod \limits_{j=2}^{m+1}}%
%BeginExpansion
{\displaystyle\prod\limits_{j=2}^{m+1}}
%EndExpansion
K_{\left(  k_{2j-2},k_{2j-1}\right)  }\left(  X_{i_{j}},X_{i_{j+1}}\right) \\
&  -K_{\left(  1,k_{0}\right)  }\left(  X_{i_{1}},X_{i_{2}}\right)
%TCIMACRO{\dprod \limits_{j=2}^{m+1}}%
%BeginExpansion
{\displaystyle\prod\limits_{j=2}^{m+1}}
%EndExpansion
K_{\left(  k_{2j-2},k_{2j-1}\right)  }\left(  X_{i_{j}},X_{i_{j+1}}\right)
\end{align*}%
\[
=
\]%
\begin{align*}
&  K_{\left(  1,k_{1}\right)  }\left(  X_{i_{1}},X_{i_{2}}\right)  \left\{
K_{\left(  1,k_{3}\right)  }\left(  X_{i_{2}},X_{i_{3}}\right)  -K_{\left(
1,k_{2}\right)  }\left(  X_{i_{2}},X_{i_{3}}\right)  \right\} \\
&  \times%
%TCIMACRO{\dprod \limits_{j=2}^{m+1}}%
%BeginExpansion
{\displaystyle\prod\limits_{j=2}^{m+1}}
%EndExpansion
K_{\left(  k_{2j-2},k_{2j-1}\right)  }\left(  X_{i_{j}},X_{i_{j+1}}\right) \\
&  -K_{\left(  1,k_{0}\right)  }\left(  X_{i_{1}},X_{i_{2}}\right)
%TCIMACRO{\dprod \limits_{j=2}^{m+1}}%
%BeginExpansion
{\displaystyle\prod\limits_{j=2}^{m+1}}
%EndExpansion
K_{\left(  k_{2j-2},k_{2j-1}\right)  }\left(  X_{i_{j}},X_{i_{j+1}}\right) \\
&  =%
%TCIMACRO{\dprod \limits_{j=1}^{m+1}}%
%BeginExpansion
{\displaystyle\prod\limits_{j=1}^{m+1}}
%EndExpansion
K_{\left(  1,k_{2j-1}\right)  }\left(  X_{i_{j}},X_{i_{j+1}}\right)
-\sum_{\left\{  \left(  k_{2j-2}^{\ast},k_{2j-1}^{\ast}\right)  \right\}  }%
%TCIMACRO{\dprod \limits_{j=1}^{m+1}}%
%BeginExpansion
{\displaystyle\prod\limits_{j=1}^{m+1}}
%EndExpansion
K_{\left(  k_{2j-2}^{\ast},k_{2j-1}^{\ast}\right)  }\left(  X_{i_{j}%
},X_{i_{j+1}}\right)
\end{align*}
where $\left(  k_{2j-2}^{\ast},k_{2j-1}^{\ast}\right)  $ may be $\left(
1,k_{2j-2}\right)  $ or $\left(  k_{2j-2},k_{2j-1}\right)  ,$ but $%
%TCIMACRO{\dprod \limits_{j=1}^{m+1}}%
%BeginExpansion
{\displaystyle\prod\limits_{j=1}^{m+1}}
%EndExpansion
k_{2j-1}^{\ast}=o\left(
%TCIMACRO{\dprod \limits_{j=1}^{m+1}}%
%BeginExpansion
{\displaystyle\prod\limits_{j=1}^{m+1}}
%EndExpansion
k_{2j-1}\right)  .$ Therefore $\left\vert \left\vert
%TCIMACRO{\dprod \limits_{j=1}^{m+1}}%
%BeginExpansion
{\displaystyle\prod\limits_{j=1}^{m+1}}
%EndExpansion
K_{\left(  k_{2j-2},k_{2j-1}\right)  }\left(  X_{i_{j}},X_{i_{j+1}}\right)
\right\vert \right\vert ^{2}\asymp\left\vert \left\vert
%TCIMACRO{\dprod \limits_{j=1}^{m+1}}%
%BeginExpansion
{\displaystyle\prod\limits_{j=1}^{m+1}}
%EndExpansion
K_{\left(  1,k_{2j-1}\right)  }\left(  X_{i_{j}},X_{i_{j+1}}\right)
\right\vert \right\vert ^{2}$ \ as a direct consequence of the previously
proved result that $\left\vert \left\vert K_{\left(  k_{2j-2}^{\ast}%
,k_{2j-1}^{\ast}\right)  }\left(  X_{i_{j}},X_{i_{j+1}}\right)  \right\vert
\right\vert ^{2}=O\left(
%TCIMACRO{\dprod \limits_{j=1}^{m+1}}%
%BeginExpansion
{\displaystyle\prod\limits_{j=1}^{m+1}}
%EndExpansion
k_{2j-1}^{\ast}\right)  =o\left(
%TCIMACRO{\dprod \limits_{j=1}^{m+1}}%
%BeginExpansion
{\displaystyle\prod\limits_{j=1}^{m+1}}
%EndExpansion
k_{2j-1}\right)  .$ The proof is complete.
\end{proof}

\begin{proof}
(Theorem \ref{var_multi}) We note that since the linear span of $\overline
{\varphi}_{1}^{k_{j}}\left(  X\right)  $ equals $\bigotimes\limits_{1\leq
r\leq l}\mathcal{V}_{r,\log_{2}\left(  k_{j,r}\right)  },$we have
\begin{align*}
&  K_{\left(  1,k_{j}\right)  }\left(  X_{i_{j}},X_{i_{j+1}}\right) \\
&  =\sum_{\left\{
\begin{array}
[c]{c}%
t_{1},...,t_{l}:\\
1\leq t_{u}\leq k_{j,u}\\
u=1,...,l
\end{array}
\right\}  }\prod\limits_{u=1}^{l}\overline{\varphi}_{t_{u}}\left(  X_{i_{j}%
}^{u}\right)  \prod\limits_{u^{\prime}=1}^{l}\overline{\varphi}_{t_{u^{\prime
}}}\left(  X_{i_{j+1}}^{u^{\prime}}\right)
\end{align*}
So that
\begin{align*}
&  K_{\left(  1,k_{j}\right)  }\left(  X_{i_{j}},X_{i_{j+1}}\right) \\
&  =\prod\limits_{u=1}^{l}\sum_{1\leq t_{u}\leq k_{j,u}}\overline{\varphi
}_{t_{u}}\left(  X_{i_{j}}^{u}\right)  \overline{\varphi}_{t_{u}}\left(
X_{i_{j+1}}^{u}\right) \\
&  =\prod\limits_{u=1}^{l}K_{\left(  1,k_{j,u}\right)  }\left(  X_{i_{j}}%
^{u},X_{i_{j+1}}^{u}\right)
\end{align*}
The remainder of the proof then follows from Lemma \ref{var_uni} .
\end{proof}

\begin{proof}
(Theorem \ref{3.19}) \ In the proof of Theorem \ref{3.19}, the following lemma
plays a central role. \ Note that expectations and probabilities remain
conditional on $\widehat{\theta}$ even when it is suppressed in the notation.
\end{proof}

\begin{lemma}
\label{3.19_lemma}Let $\widehat{IF}_{m,m,\widetilde{\psi}_{k}\left(
\cdot\right)  }^{\left(  s\right)  }$ be the symmetric kernel of
${}\ \widehat{\mathbb{IF}}_{m,m,\widetilde{\psi}_{k}\left(  \cdot\right)  },$
then for any $m\geq2$ and $1\leq m_{1}<m_{2},$
\begin{align}
&  Var_{\theta}\left[  \left(  \widehat{\mathbb{IF}}_{m,m,\widetilde{\psi}%
_{k}\left(  \cdot\right)  }\right)  ^{2}\right]  -Var_{\widehat{\theta}%
}\left[  \left(  \widehat{\mathbb{IF}}_{m,m,\widetilde{\psi}_{k}\left(
\cdot\right)  }\right)  ^{2}\right] \nonumber\\
&  =O\left(  n^{-m}\left\{  E_{\theta}\left[  \left(  \widehat{IF}%
_{m,m,\overline{i}_{m}}^{\left(  s\right)  }\right)  ^{2}\right]
-E_{\widehat{\theta}}\left[  \left(  \widehat{IF}_{m,m,\overline{i}_{m}%
}^{\left(  s\right)  }\right)  ^{2}\right]  \right\}  \right) \nonumber\\
&  +O\left(  \frac{1}{n}\left\{  E_{\theta}\left[  \widehat{IF}_{m,m,\overline
{i}_{m}}^{\left(  s\right)  }\right]  \right\}  ^{2}\right) \nonumber\\
&  +o\left(  \max\left(  \frac{1}{n},\frac{k^{m-2}}{n^{m-1}}\right)  \right)
\label{lemma_1}%
\end{align}

and
\begin{align}
&  Cov_{\theta}\left(  \widehat{\mathbb{IF}}_{m_{1},m_{1},\widetilde{\psi}%
_{k}\left(  \cdot\right)  },\text{ }\widehat{\mathbb{IF}}_{m_{2}%
,m_{2},\widetilde{\psi}_{k}\left(  \cdot\right)  }\right) \nonumber\\
&  =O\left(  \frac{1}{n}E_{\theta}\left[  \widehat{IF}_{m_{1},m_{1}%
,\overline{i}_{m}}^{\left(  s\right)  }\right]  E_{\theta}\left[
\widehat{IF}_{m_{2},m_{2},\overline{i}_{m}}^{\left(  s\right)  }\right]
\right)  +o\left(  \max\left(  \frac{1}{n},\frac{k^{m_{1}-2}}{n^{m_{1}-1}%
}\right)  \right) \label{lemma_2}%
\end{align}

\end{lemma}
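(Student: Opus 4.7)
\textbf{Proof proposal for Lemma \ref{3.19_lemma}.} The plan is to expand both $\widehat{\mathbb{IF}}_{m,m,\widetilde{\psi}_k(\cdot)}$ and $\widehat{\mathbb{IF}}_{m_1,m_1,\widetilde{\psi}_k(\cdot)}\cdot\widehat{\mathbb{IF}}_{m_2,m_2,\widetilde{\psi}_k(\cdot)}$ via the Hoeffding decomposition under $\theta$, treating the symmetric kernel $b:=\widehat{IF}_{m,m,\overline{i}_m}^{(s)}$ (or its $m_1,m_2$ analogues) as a fixed function conditional on the training sample. The central observation is that $b$ is fully degenerate under $\widehat{\theta}$: for each $l$ and each fixing of the other arguments, $\int b(o_{i_1},\ldots,o_{i_{l-1}},o,o_{i_{l+1}},\ldots,o_{i_m})\,\widehat{f}(o)\,do=0$. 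Writing $\delta(o):=f(o;\theta)-f(o;\widehat{\theta})$ and expanding $\prod_{j\in S}f(o_{i_j};\theta)=\prod_{j\in S}[\widehat{f}(o_{i_j})+\delta(o_{i_j})]$, every term in the expansion that retains at least one factor of $\widehat{f}$ in an integrated argument vanishes by degeneracy. Consequently, for any subset $T\subsetneq\{1,\ldots,m\}$ the $\theta$-conditional expectation $E_\theta[b\mid O_{i_t},t\in T]$ reduces to a single term, namely the integral of $b$ against $\prod_{j\notin T}\delta(o_{i_j})$. This gives the size estimate
\[
\bigl|E_\theta[b\mid O_{i_t},t\in T]\bigr|\;\le\;\|\delta\|_\infty^{m-|T|}\int|b|\,\textstyle\prod_{j\notin T}do_{i_j},
\]
and in particular $E_\theta[b]=O(\|\delta\|_\infty^m\int|b|)$.

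For equation (\ref{lemma_1}), my plan is to write $\mathrm{Var}_\theta[\widehat{\mathbb{IF}}_{m,m}]=\sum_{s=1}^{m}\binom{m}{s}^2\binom{n-m}{m-s}\binom{n}{m}^{-1}\zeta_s^\theta$, where $\zeta_s^\theta$ is the variance of the $s$-th $\theta$-Hoeffding projection $\pi_s b$. Under $\widehat{\theta}$, $\zeta_s^{\widehat{\theta}}=0$ for $s<m$ while $\zeta_m^{\widehat{\theta}}=\mathrm{Var}_{\widehat{\theta}}[b]=E_{\widehat{\theta}}[b^2]$. The $s=m$ contribution to the $\theta$-variance is $n^{-m}m!\,\zeta_m^\theta$, and Parseval gives $\zeta_m^\theta=E_\theta[b^2]-\sum_{s=0}^{m-1}\zeta_s^\theta$, with $\zeta_0^\theta=(E_\theta[b])^2$. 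Subtracting the corresponding $\widehat{\theta}$-quantity produces the first advertised term $O(n^{-m}[E_\theta(b^2)-E_{\widehat{\theta}}(b^2)])$ up to the lower-order $\zeta_s^\theta$ corrections. For the $s=1$ contribution, the projection $\pi_1 b(o)=E_\theta[b\mid O_1=o]-E_\theta[b]$ is, by the degeneracy argument above, itself $O(\|\delta\|_\infty^{m-1}\int|b|)$ with a dominant mean-driven piece of order $E_\theta[b]$; squaring, integrating, and dividing by $n$ yields precisely the second advertised term $O(n^{-1}\{E_\theta[b]\}^2)$, after absorbing the $O(\|\delta\|_\infty^{2(m-1)})$ residual into the $o(\cdot)$ term.

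For intermediate orders $1<s<m$, each $\zeta_s^\theta$ is bounded by $\|\delta\|_\infty^{2(m-s)}$ times an integral of $b^2$. Using Theorem \ref{var_multi} to bound the kernel moments of $b$ (which are at most of order $k^{m-1}$ for the degenerate wavelet kernels of Theorem \ref{DRHOIF}) and the hypothesis $\|\delta\|_\infty=o_p(1)$, each such contribution is $\binom{n}{s}^{-1}\zeta_s^\theta=o_p(n^{-s}\|\delta\|_\infty^{2(m-s)}k^{m-1-s})$; by the $m=s$ endpoint comparison this is absorbed into $o(\max(n^{-1},k^{m-2}/n^{m-1}))$. For equation (\ref{lemma_2}), I will use that degenerate $\widehat{\theta}$-U-statistics of different orders are $\widehat{\theta}$-orthogonal, so that $\mathrm{Cov}_{\widehat{\theta}}=0$, and the $\theta$-covariance decomposes as $\sum_{s=1}^{m_1}\binom{m_1}{s}\binom{m_2}{s}\binom{n-m_1}{m_1-s}\binom{n}{m_1}^{-1}\mathrm{Cov}_\theta[\pi_s b_{m_1},\pi_s b_{m_2}]$ (matching $s$-th Hoeffding components); the $s=1$ term yields $n^{-1}E_\theta[b_{m_1}]E_\theta[b_{m_2}]$ by the same mean-dominance analysis, and all higher-$s$ cross terms fall into the $o(\cdot)$ remainder by the same kernel-moment/$\|\delta\|_\infty$ trade-off.

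The principal obstacle will be the intermediate-order bookkeeping in the penultimate step: since the kernel moments $E_\theta[b^2]$ grow polynomially in $k$ (Theorem \ref{var_multi}) while the density-difference factors shrink only as $\|\delta\|_\infty^{2(m-s)}$ with no guaranteed rate, I must show that \emph{every} intermediate $\zeta_s^\theta$ is strictly dominated by $\max(n^{-1},k^{m-2}/n^{m-1})$ after the $\binom{n}{s}^{-1}$ scaling. The hardest case is $s=2$ combined with $m$ large, where both competitors $n^{-1}$ and $k^{m-2}/n^{m-1}$ must be beaten simultaneously; here I expect to need a refined bound on $\pi_2 b$ that uses the product structure of the wavelet projection kernel to factor one of the $\delta$-integrals against a rank-$k$ projection (gaining a $k^{-1}$ factor relative to the naive sup-norm bound), closing the gap uniformly in $k$.
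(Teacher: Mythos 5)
Your overall architecture matches the paper's: expand the variance of the degenerate U-statistic by the overlap pattern of index sets (equivalently, by Hoeffding order), use degeneracy under $\widehat{\theta}$ so that the diagonal pairs produce the $n^{-m}\{E_{\theta}[(\widehat{IF}^{(s)})^{2}]-E_{\widehat{\theta}}[(\widehat{IF}^{(s)})^{2}]\}$ term, the disjoint pairs (together with the centering) produce the $n^{-1}\{E_{\theta}[\widehat{IF}^{(s)}]\}^{2}$ term, and the partially overlapping pairs must be shown negligible. The difficulty is entirely in that last step, and there your argument has a genuine gap. The estimate actually needed — and the one the paper supplies through its two preliminary propositions, proved by induction on $m$ — is $E_{\theta}\bigl[\bigl\{E_{\theta}\bigl(\widehat{IF}_{m,m,\overline{i}_{m}}\mid O_{i_{s_{1}}},\ldots,O_{i_{s_{t}}}\bigr)\bigr\}^{2}\bigr]=o\left(k^{t-1}\right)$ for every $1\leq t<m$; after the $\binom{n}{t}^{-1}$ scaling this is exactly what is absorbed into $o\bigl(\max(n^{-1},k^{m-2}/n^{m-1})\bigr)$. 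Your degeneracy identity (the conditional expectation equals the integral of the kernel against $\prod_{j\notin T}\delta$) is correct, but the sup-norm bound it yields is only $\|\delta\|_{\infty}^{2(m-t)}$ times a full kernel second moment, i.e.\ $o(k^{m-1})$ for \emph{every} intermediate order $t$, which overshoots the required $o(k^{t-1})$ by a factor as large as $k^{m-1}$ (worst at small $t$), and $\|\delta\|_{\infty}$ carries no rate that could compensate. You flag this yourself and say you "expect" a refined bound exploiting the product structure of the wavelet projection kernel; but that refinement \emph{is} the substance of the lemma. The paper obtains it by keeping the matrix/projection structure of the kernel when integrating out arguments: removing an end argument produces a projection term such as $\widehat{\Pi}\bigl[\tfrac{f}{\widehat{f}}\tfrac{Q^{2}}{\widehat{Q}}\tfrac{P-\widehat{P}}{\dot{P}}\mid\widehat{Q}\overline{Z}_{k}\bigr]$, small because $\|P-\widehat{P}\|$ is $o_{p}(1)$; removing an interior argument produces the matrix $\widehat{E}[\delta g\,\widehat{Q}^{2}\overline{Z}_{k}\overline{Z}_{k}^{T}]$, whose operator norm is bounded by $\|\delta g\|_{\infty}$, and the residual product kernel is then controlled by the moment computation of Theorem \ref{var_multi} (applied to modified kernels $K_{k}^{\ddag}$), giving the extra $o(1)$ factor against $k^{t-1}$ rather than $k^{m-1}$. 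Until you carry out this argument for all $1\leq t\leq m-1$ (the paper does it by induction, treating the removed-argument positions separately), the intermediate-overlap and the covariance cross terms in your proof are not actually bounded.

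A secondary, reconcilable point: you attribute the $O(n^{-1}\{E_{\theta}[\widehat{IF}^{(s)}]\}^{2})$ term to a "mean-driven piece" of the first Hoeffding projection, but $\pi_{1}b$ has mean zero under $\theta$; in the paper this term arises from the disjoint-index pairs, whose count ratio $\binom{n-m}{m}/\binom{n}{m}$ differs from $1$ by $O(n^{-1})$, multiplied against $(E_{\theta}[b])^{2}$ after subtracting the square of the mean (equivalently, from the centering inside $\sigma_{1}^{2}$ in the Hoeffding formulation). That bookkeeping should be corrected, but it is not the substantive obstacle; the missing conditional-moment propositions are.
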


\bigskip The proof of this lemma is delayed as some technical details are
involved. We first use this lemma to prove Theorem \ref{3.19}.

\begin{proof}
(Theorem \ref{3.19}). By the degeneracy of $\widehat{\mathbb{IF}%
}_{t,t,\widetilde{\psi}_{k}\left(  \cdot\right)  }$ for any $t\leq m$ under
$F\left(  \cdot;\widehat{\theta}\right)  ,$
\begin{align*}
&  \frac{Var_{\theta}\left[  \widehat{\mathbb{IF}}_{m,\widetilde{\psi}%
_{k}\left(  \cdot\right)  }|\widehat{\theta}\right]  -Var_{\widehat{\theta}%
}\left[  \widehat{\mathbb{IF}}_{m,\widetilde{\psi}_{k}\left(  \cdot\right)
}|\widehat{\theta}\right]  }{Var_{\widehat{\theta}}\left[
\widehat{\mathbb{IF}}_{m,\widetilde{\psi}_{k}\left(  \cdot\right)
}|\widehat{\theta}\right]  }\\
&  =\frac{\left\{
\begin{array}
[c]{c}%
%TCIMACRO{\dsum \limits_{t=1}^{m}}%
%BeginExpansion
{\displaystyle\sum\limits_{t=1}^{m}}
%EndExpansion
\left(  Var_{\theta}\left[  \widehat{\mathbb{IF}}_{t,t,\widetilde{\psi}%
_{k}\left(  \cdot\right)  }\right]  -Var_{\widehat{\theta}}\left[
\widehat{\mathbb{IF}}_{t,t,\widetilde{\psi}_{k}\left(  \cdot\right)  }\right]
\right) \\
+2%
%TCIMACRO{\dsum \limits_{1\leq t_{1}<t_{2}\leq m}}%
%BeginExpansion
{\displaystyle\sum\limits_{1\leq t_{1}<t_{2}\leq m}}
%EndExpansion
Cov_{\theta}\left[  \widehat{\mathbb{IF}}_{t_{1},t_{1},\widetilde{\psi}%
_{k}\left(  \cdot\right)  },\widehat{\mathbb{IF}}_{t_{2},t_{2},\widetilde{\psi
}_{k}\left(  \cdot\right)  }\right]
\end{array}
\right\}  }{Var_{\widehat{\theta}}\left[  \widehat{\mathbb{IF}}%
_{m,\widetilde{\psi}_{k}\left(  \cdot\right)  }|\widehat{\theta}\right]  },
\end{align*}

which equals
\begin{align*}
&  \frac{%
%TCIMACRO{\dsum \limits_{t=1}^{m}}%
%BeginExpansion
{\displaystyle\sum\limits_{t=1}^{m}}
%EndExpansion
n^{-t}\left(  E_{\theta}\left[  \left(  \widehat{IF}_{t,t,\overline{i}_{m}%
}^{\left(  s\right)  }\right)  ^{2}\right]  -E_{\widehat{\theta}}\left[
\left(  \widehat{IF}_{t,t,\overline{i}_{m}}^{\left(  s\right)  }\right)
^{2}\right]  \right)  }{%
%TCIMACRO{\dsum \limits_{t=1}^{m}}%
%BeginExpansion
{\displaystyle\sum\limits_{t=1}^{m}}
%EndExpansion
n^{-t}E_{\widehat{\theta}}\left[  \left(  \widehat{IF}_{t,t,\overline{i}_{m}%
}^{\left(  s\right)  }\right)  ^{2}\right]  }\left(  1+o\left(  1\right)
\right) \\
&  +\frac{o\left(  \max\left(  \frac{1}{n},\frac{k^{m-2}}{n^{m-1}}\right)
\right)  }{Var_{\widehat{\theta}}\left[  \widehat{\mathbb{IF}}%
_{m,\widetilde{\psi}_{k}\left(  \cdot\right)  }|\widehat{\theta}\right]  }%
\end{align*}

by Lemma \ref{3.19_lemma}.

By assumption, $\underset{o\in\mathcal{O}}{sup}\left\vert f\left(
o;\widehat{\theta}\right)  -f\left(  o;\theta\right)  \right\vert
\rightarrow0$ as $||\widehat{\theta}-\theta||\rightarrow0$, hence
\begin{align*}
&  \frac{E_{\theta}\left[  \left(  \widehat{IF}_{t,t,\overline{i}_{m}%
}^{\left(  s\right)  }\right)  ^{2}\right]  -E_{\widehat{\theta}}\left[
\left(  \widehat{IF}_{t,t,\overline{i}_{m}}^{\left(  s\right)  }\right)
^{2}\right]  }{E_{\widehat{\theta}}\left[  \left(  \widehat{IF}_{t,t,\overline
{i}_{m}}^{\left(  s\right)  }\right)  ^{2}\right]  }\\
&  =\frac{E_{\widehat{\theta}}\left[  \left(  \widehat{IF}_{t,t,\overline
{i}_{m}}^{\left(  s\right)  }\right)  ^{2}\left(
%TCIMACRO{\dprod \limits_{s=1}^{t}}%
%BeginExpansion
{\displaystyle\prod\limits_{s=1}^{t}}
%EndExpansion
\frac{f\left(  O_{i_{s}};\theta\right)  }{f\left(  O_{i_{s}};\widehat{\theta
}\right)  }-1\right)  \right]  }{E_{\widehat{\theta}}\left[  \left(
\widehat{IF}_{t,t,\overline{i}_{m}}^{\left(  s\right)  }\right)  ^{2}\right]
}\\
&  \leq\sup_{\mathbf{O}_{\overline{i}_{t}}}\left\vert
%TCIMACRO{\dprod \limits_{s=1}^{t}}%
%BeginExpansion
{\displaystyle\prod\limits_{s=1}^{t}}
%EndExpansion
\frac{f\left(  O_{i_{s}};\theta\right)  }{f\left(  O_{i_{s}};\widehat{\theta
}\right)  }-1\right\vert \\
&  =O\left(  \underset{o\in\mathcal{O}}{sup}\left\vert f\left(
o;\widehat{\theta}\right)  -f\left(  o;\theta\right)  \right\vert \right)
=o\left(  1\right)  .
\end{align*}

Furthermore, $\frac{o\left(  \max\left(  \frac{1}{n},\frac{k^{m-2}}{n^{m-1}%
}\right)  \right)  }{Var_{\widehat{\theta}}\left[  \widehat{\mathbb{IF}%
}_{m,\widetilde{\psi}_{k}\left(  \cdot\right)  }|\widehat{\theta}\right]
}=o\left(  1\right)  $ as well since $Var_{\widehat{\theta}}\left[
\widehat{\mathbb{IF}}_{m,\widetilde{\psi}_{k}\left(  \cdot\right)
}|\widehat{\theta}\right]  \times\max\left(  \frac{1}{n},\frac{k^{m-1}}{n^{m}%
}\right)  $ from Theorem \ref{var_multi}. Thus the proof of Thoerem \ref{3.19}
is complete.
\end{proof}

Before giving out the proof of Lemma \ref{3.19_lemma}, we first introduce two
useful propositions.

\begin{proposition}
For any $m\geq2$ and $1\leq t\leq m,$ \
\begin{equation}
E_{\theta}\left[  \left\{  E_{\theta}\left(  \widehat{IF}_{m,m,\overline
{i}_{m}}|\mathbf{O}_{-i_{t}}\right)  \right\}  ^{2}\right]  =o\left(
k^{m-2}\right) \label{3.19_1}%
\end{equation}

\end{proposition}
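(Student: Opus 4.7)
The plan is to exploit the explicit product form of $\widehat{IF}_{m,m,\overline{i}_m}$ from Theorem~\ref{DRHOIF}. Writing $u_r \equiv (\widehat{\epsilon}\dot{B}\overline{Z}_k)_{i_r}$, $\delta_r \equiv (\overline{Z}_k\widehat{\Delta}\dot{P})_{i_r}$, and $M_r \equiv (\dot{P}\dot{B}H_1\overline{Z}_k\overline{Z}_k^T)_{i_r}-I$, the kernel factors (up to sign) as $u_1^T M_3 M_4\cdots M_m\,\delta_2$. By the choice of $\overline{Z}_k$ in Eq.~(\ref{ON}) together with Lemma~\ref{condE}, we have $E_{\widehat{\theta}}[u_1]=0$, $E_{\widehat{\theta}}[\delta_2]=0$, and $E_{\widehat{\theta}}[M_r]=0$, which encode the fact that $\widehat{\mathbb{IF}}_{m,m}$ is degenerate under $\widehat{\theta}$. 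Under $\theta$ these expectations do not vanish, but they are quantitatively small because $\widehat{\theta}$ is close to $\theta$ (by the hypothesis $\sup_{o\in\mathcal{O}}|f(o;\widehat{\theta})-f(o;\theta)|\to 0$ of Theorem~\ref{3.19}, in whose proof this proposition is invoked).

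The inner conditional expectation in three cases reduces to: for $t=1$, replace $u_1^T$ by the deterministic vector $v^T\equiv E_\theta[u_1^T]$; for $t=2$, replace $\delta_2$ by $w\equiv E_\theta[\delta_2]$; for $3\leq t\leq m$, replace $M_t$ by the deterministic matrix $A\equiv E_\theta[M_t]=E_\theta[\dot{P}\dot{B}H_1\overline{Z}_k\overline{Z}_k^T]-I$. Using Lemma~\ref{condE} one computes $E_\theta[u_1|X]=\dot{B}^2 E_\theta[H_1|X](\widehat{P}-P)\overline{Z}_k$, so
\[
\|v\|^2=\sum_l\bigl(E_\theta[\dot{B}^2 E_\theta[H_1|X](\widehat{P}-P)\,z_{k,l}]\bigr)^2\leq C\int(\widehat{p}-p)^2\,dx=o_p(1),
\]
since $v$ is (up to bounded weights) the coefficient vector of an $L_2$-projection of $\widehat{P}-P$ onto $\mathrm{lin}\{\overline{z}_k\}$. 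The same argument gives $\|w\|^2\leq C\int(\widehat{b}-b)^2\,dx=o_p(1)$. For the matrix $A$, a change of measure gives $A=\int \dot{B}\dot{P}(g-\widehat{g})\,\overline{\varphi}_k\overline{\varphi}_k^T/\widehat{f}\,dx$, and the supremum-norm bound on the weight yields $\|A\|_{\mathrm{op}}\leq\bigl\|\dot{B}\dot{P}(g-\widehat{g})/\widehat{f}\bigr\|_{\infty}=o_p(1)$.

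To propagate these small factors through the remaining matrix product, I would iterate the PSD operator-norm inequality $\|E[MYM^T]\|_{\mathrm{op}}\leq\|E[MM^T]\|_{\mathrm{op}}\|Y\|_{\mathrm{op}}$. The compact-wavelet pointwise bound $\|\overline{Z}_k(X)\|^2\leq Ck$ together with the rank-one identity $(\overline{Z}_k\overline{Z}_k^T)^2=\|\overline{Z}_k\|^2\overline{Z}_k\overline{Z}_k^T$ gives $\|E[M_r M_r^T]\|_{\mathrm{op}}=O(k)$, $\|E[\delta_r\delta_r^T]\|_{\mathrm{op}}=O(1)$, and $E[\|u_r\|^2]=O(k)$ (consistent with Theorems~\ref{var_multi}--\ref{var_if}). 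Squaring the kernel and taking the outer expectation, the case $t=1$ yields a quadratic form $v^T Q\,v$ where $Q=E[M_3\cdots M_m\delta_2\delta_2^T M_m^T\cdots M_3^T]$ satisfies $\|Q\|_{\mathrm{op}}\leq O(k^{m-2})$ by iteration, so $v^T Q\,v\leq\|v\|^2\cdot O(k^{m-2})=o(k^{m-2})$; the case $t=2$ is symmetric. For $3\leq t\leq m$, a similar iteration produces
\[
E\bigl[u_1^T N\delta_2\delta_2^T N^T u_1\bigr]\leq E[\|u_1\|^2]\cdot O(k^{t-3})\cdot\|A\|_{\mathrm{op}}^2\cdot O(k^{m-t})\cdot O(1)=O(k)\cdot o(1)\cdot O(k^{m-3})=o(k^{m-2}),
\]
where $N=M_3\cdots M_{t-1}A\,M_{t+1}\cdots M_m$ and the single factor $\|A\|_{\mathrm{op}}^2=o_p(1)$ sitting in the middle is what brings the crude $O(k^{m-2})$ bound down to $o(k^{m-2})$.

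The main obstacle will be the careful operator-norm bookkeeping: one must verify that each insertion of an $M_r$ contributes exactly one factor of $k$ rather than $k^2$, which relies on the pointwise wavelet bound $\|\overline{Z}_k\|^2=O(k)$ and on the rank-one structure of $\overline{Z}_k\overline{Z}_k^T$ (a generic orthonormal basis would not suffice). A subordinate technicality is that, because $\overline{Z}_k$ is orthonormal with respect to the $\widehat{Q}^2$-weighted measure rather than Lebesgue measure, the reduction of $\|v\|^2$ and $\|w\|^2$ to the $L_2$-norms $\int(\widehat{p}-p)^2\,dx$ and $\int(\widehat{b}-b)^2\,dx$ requires absorbing the weight $\widehat{Q}^{-2}$ (bounded above and below by Assumption Aiv) into a uniform constant; once this is dispatched, the three bounds $\|v\|^2,\|w\|^2,\|A\|_{\mathrm{op}}^2=o_p(1)$ follow uniformly from the hypothesis $\sup_o|f(o;\widehat{\theta})-f(o;\theta)|\to 0$, and all three cases of the proposition collapse to the same conclusion.
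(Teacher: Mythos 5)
Your proposal is correct, but it takes a genuinely different route from the paper. The paper proves the proposition by induction on $m$: it first treats $m=2$ by writing $E_\theta[\widehat{IF}_{2,2}\mid O_{i_2}]$ as a weighted projection $T_c(O_{i_2})$ of $(p-\widehat p)/\dot P$ onto $\mathrm{lin}\{\widehat Q\overline Z_k\}$, and then, for general $m$, peels off one factor of the product kernel: the conditional expectation splits into a term carrying $T_c(O_{i_3})$ (for $t=1,2$) or the matrix $\widehat E[\delta g\,\widehat Q^2\overline Z_k\overline Z_k^T]$ (for $3\le t\le m$) times the remaining product, plus $-E_\theta[\widehat{IF}_{m-1,m-1}\mid\cdot]$ which is handled by the induction hypothesis; the leading term's second moment is bounded by $\|P-\widehat P\|_\infty$ or $\|\delta g\|_\infty^2$ times $O(k^{m-2})$, the latter via a modified kernel $K_k^{\ddag}$ and the positive semidefiniteness of $\|\delta g\|_\infty I-\widehat E[\delta g\,\widehat Q^2\overline Z_k\overline Z_k^T]$, invoking the machinery of Theorem \ref{var_multi}. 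You instead argue directly, with no induction: integrating out the single observation $O_{i_t}$ replaces exactly one factor of $u_1^TM_3\cdots M_m\delta_2$ by its $\theta$-mean, and you quantify the three smallness inputs ($\|E_\theta[u_1]\|^2\lesssim\|\widehat p-p\|_2^2$, $\|E_\theta[\delta_2]\|^2\lesssim\|\widehat b-b\|_2^2$ via a Bessel bound in the $\widehat Q^2\widehat f$-weighted inner product, and $\|E_\theta[M_t]\|_{\mathrm{op}}\lesssim\|\delta g\|_\infty$ by the same PSD comparison the paper uses), then propagate through the remaining independent factors by iterating $\|E[MYM^T]\|_{\mathrm{op}}\le\|E[MM^T]\|_{\mathrm{op}}\|Y\|_{\mathrm{op}}$ with the per-factor bound $\|E[M_rM_r^T]\|_{\mathrm{op}}=O(k)$ from the pointwise wavelet bound $\|\overline Z_k(x)\|^2=O(k)$. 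This treats all three cases of $t$ uniformly and makes the crucial ``one factor of $k$ per $M_r$'' bookkeeping explicit, where the paper leaves it implicit in the $K^{\ddag}$/Theorem \ref{var_multi} computations; the price is that you must supply that operator-norm growth bound yourself, and your smallness inputs really come from the rate-optimality of the training-sample estimators in Assumption Aiv rather than from the hypothesis $\sup_o|f(o;\widehat\theta)-f(o;\theta)|\to0$ alone (a harmless attribution slip, as is the typo $\dot B^2$ in place of $\dot B$ in $E_\theta[u_1\mid X]$). Note also that your $t=1,2$ bounds use $L_2$-norms of $\widehat p-p$, $\widehat b-b$ where the paper sups out and uses $\|P-\widehat P\|_\infty$; both yield the needed $o_p(1)$ factor.
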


\begin{proof}
As proved in Theorem \ref{ON}, for any $m\geq2,$
\begin{align*}
&  \widehat{IF}_{m,m,\overline{i}_{m}}\\
&  =\left(  -1\right)  ^{j-1}\left\{
\begin{array}
[c]{c}%
\left[  \left(  H_{1}\widehat{P}+H_{2}\right)  \dot{B}\overline{Z}_{k}%
^{T}\right]  _{i_{1}}\left[
%TCIMACRO{\dprod \limits_{s=3}^{m}}%
%BeginExpansion
{\displaystyle\prod\limits_{s=3}^{m}}
%EndExpansion
\left\{  \left(  \dot{P}\dot{B}H_{1}\overline{Z}_{k}\overline{Z}_{k}%
^{T}\right)  _{i_{s}}-I_{k\times k}\right\}  \right] \\
\times\left[  \overline{Z}_{k}\left(  H_{1}\widehat{B}+H_{3}\right)  \dot
{P}\right]  _{i_{2}}%
\end{array}
\right\}
\end{align*}

We first consider the case when $m=2.$
\begin{align*}
&  -E_{\theta}\left[  \widehat{IF}_{2,2,i_{1}i_{2}}|O_{i_{2}}\right] \\
&  =E_{\theta}\left[  Q^{2}\left(  \frac{P-\widehat{P}}{\dot{P}}\right)
\overline{Z}_{k}^{T}\right]  \left[  \overline{Z}_{k}\left(  H_{1}%
\widehat{B}+H_{3}\right)  \dot{P}\right]  _{i_{2}}\\
&  =\widehat{E}\left[  \frac{f\left(  X\right)  }{\widehat{f}\left(  X\right)
}\frac{Q^{2}}{\widehat{Q}}\left(  \frac{P-\widehat{P}}{\dot{P}}\right)
\widehat{Q}\overline{Z}_{k}^{T}\right]  \left[  \frac{\widehat{Q}}%
{\widehat{Q}}\overline{Z}_{k}\left(  H_{1}\widehat{B}+H_{3}\right)  \dot
{P}\right]  _{i_{2}}\\
&  =\widehat{\Pi}\left[  \left(  \frac{f\left(  X\right)  }{\widehat{f}\left(
X\right)  }\frac{Q^{2}}{\widehat{Q}}\left(  \frac{P-\widehat{P}}{\dot{P}%
}\right)  \right)  |\left(  \widehat{Q}\overline{Z}_{k}\right)  _{i_{2}%
}\right]  \left(  \frac{\left(  H_{1}\widehat{B}+H_{3}\right)  \dot{P}%
}{\widehat{Q}}\right)  _{i_{2}}\\
&  =\left\vert \left\vert P-\widehat{P}\right\vert \right\vert _{2}\frac
{T_{c}\left(  O_{i_{2}}\right)  }{\left\vert \left\vert P-\widehat{P}%
\right\vert \right\vert _{2}}\left(  \frac{\left(  H_{1}\widehat{B}%
+H_{3}\right)  \dot{P}}{\widehat{Q}}\right)  _{i_{2}}%
\end{align*}

where
\[
T_{c}\left(  O\right)  \equiv\widehat{\Pi}\left[  \left(  \frac{f\left(
X\right)  }{\widehat{f}\left(  X\right)  }\frac{Q^{2}}{\widehat{Q}}%
\frac{P-\widehat{P}}{\dot{P}}\right)  |\left(  \widehat{Q}\overline{Z}%
_{k}\right)  \right]
\]

Since assumptions $\left(  \text{\ref{dot1}}\right)  -\left(  \text{\ref{dot3}%
}\right)  $ and $Ai)-Aiv)$ are satisfied, it is easy to show that $E\left[
\left(  \frac{T_{c}\left(  O_{i_{2}}\right)  }{\left\vert \left\vert
P-\widehat{P}\right\vert \right\vert _{2}}\left(  \frac{\left(  H_{1}%
\widehat{B}+H_{3}\right)  \dot{P}}{\widehat{Q}}\right)  _{i_{2}}\right)
^{2}\right]  =O\left(  1\right)  $, and thus
\[
E\left[  \left(  E_{\theta}\left[  \widehat{IF}_{2,2,i_{1}i_{2}}|O_{i_{2}%
}\right]  \right)  ^{2}\right]  =o\left(  1\right)  .
\]

Similarly, we can prove that $E\left[  \left(  E_{\theta}\left[
\widehat{IF}_{2,2,i_{1}i_{2}}|O_{i_{1}}\right]  \right)  ^{2}\right]
=o\left(  1\right)  .$

Next, we proceed by induction. We assume eq. (\ref{3.19_1}) holds for $m-1$
and prove it is also true for $m$ by considering different values of $t.$

i) If $t=1,$ \ then
\begin{align*}
&  \left(  -1\right)  ^{m-1}E_{\theta}\left(  \widehat{IF}_{m,m,\overline
{i}_{m}}|\mathbf{O}_{-i_{1}}\right) \\
&  =E_{\theta}\left[  Q^{2}\left(  \frac{P-\widehat{P}}{\dot{P}}\right)
\overline{Z}_{k}^{T}\right]  \left[
%TCIMACRO{\dprod \limits_{s=3}^{m}}%
%BeginExpansion
{\displaystyle\prod\limits_{s=3}^{m}}
%EndExpansion
\left\{
\begin{array}
[c]{c}%
\left(  \dot{P}\dot{B}H_{1}\overline{Z}_{k}\overline{Z}_{k}^{T}\right)
_{i_{s}}\\
-I_{k\times k}%
\end{array}
\right\}  \right] \\
&  \times\left[  \overline{Z}_{k}\left(  H_{1}\widehat{B}+H_{3}\right)
\dot{P}\right]  _{i_{2}}\\
&  =\left\{
\begin{array}
[c]{c}%
\widehat{E}\left[  \frac{f\left(  X\right)  }{\widehat{f}\left(  X\right)
}Q^{2}\left(  \frac{P-\widehat{P}}{\dot{P}}\right)  \overline{Z}_{k}%
^{T}\right]  \left(  \dot{P}\dot{B}H_{1}\overline{Z}_{k}\overline{Z}_{k}%
^{T}\right)  _{i_{3}}\\
-E_{\theta}\left[  Q^{2}\left(  \frac{P-\widehat{P}}{\dot{P}}\right)
\overline{Z}_{k}^{T}\right]
\end{array}
\right\}  \times\\
&
%TCIMACRO{\dprod \limits_{s=4}^{m}}%
%BeginExpansion
{\displaystyle\prod\limits_{s=4}^{m}}
%EndExpansion
\left\{
\begin{array}
[c]{c}%
\left(  \dot{P}\dot{B}H_{1}\overline{Z}_{k}\overline{Z}_{k}^{T}\right)
_{i_{s}}\\
-I_{k\times k}%
\end{array}
\right\}  \left[  \overline{Z}_{k}\left(  H_{1}\widehat{B}+H_{3}\right)
\dot{P}\right]  _{i_{2}}%
\end{align*}%
\[
=
\]%
\begin{align*}
&  \left\{  \frac{\dot{P}\dot{B}H_{1}}{\widehat{Q}}\widehat{\Pi}\left[
\left(  \frac{f\left(  X\right)  }{\widehat{f}\left(  X\right)  }\frac{Q^{2}%
}{\widehat{Q}}\left(  \frac{P-\widehat{P}}{\dot{P}}\right)  \right)
|\widehat{Q}\overline{Z}_{k}\right]  \overline{Z}_{k}^{T}\right\}  _{i_{3}%
}\times\\
&
%TCIMACRO{\dprod \limits_{s=4}^{m}}%
%BeginExpansion
{\displaystyle\prod\limits_{s=4}^{m}}
%EndExpansion
\left\{
\begin{array}
[c]{c}%
\left(  \dot{P}\dot{B}H_{1}\overline{Z}_{k}\overline{Z}_{k}^{T}\right)
_{i_{s}}\\
-I_{k\times k}%
\end{array}
\right\}  \left[  \overline{Z}_{k}\left(  H_{1}\widehat{B}+H_{3}\right)
\dot{P}\right]  _{i_{2}}\\
&  -E_{\theta}\left(  \widehat{IF}_{m-1,m-1,i_{1}i_{2}i_{4}...i_{m}%
}|\mathbf{O}_{-i_{1}}\right) \\
&  =\left(  \frac{\dot{P}\dot{B}H_{1}}{\widehat{Q}}\right)  _{i_{3}}%
T_{c}\left(  O_{i_{3}}\right)  \overline{Z}_{k,i_{3}}^{T}%
%TCIMACRO{\dprod \limits_{s=4}^{m}}%
%BeginExpansion
{\displaystyle\prod\limits_{s=4}^{m}}
%EndExpansion
\left\{
\begin{array}
[c]{c}%
\left(  \dot{P}\dot{B}H_{1}\overline{Z}_{k}\overline{Z}_{k}^{T}\right)
_{i_{s}}\\
-I_{k\times k}%
\end{array}
\right\} \\
&  \times\left[  \overline{Z}_{k}\left(  H_{1}\widehat{B}+H_{3}\right)
\dot{P}\right]  _{i_{2}}-E_{\theta}\left(  \widehat{IF}_{m-1,m-1,i_{1}%
i_{2}i_{4}...i_{m}}|\mathbf{O}_{-i_{1}}\right)
\end{align*}

From the fact that $\left(  a-b\right)  ^{2}\leq2\left(  a^{2}+b^{2}\right)
,$ we have
\begin{align*}
&  E_{\theta}\left[  \left(  E_{\theta}\left[  \widehat{IF}_{m,m,\overline
{i}_{m}}\right]  |\mathbf{O}_{-i_{1}}\right)  ^{2}\right] \\
&  \leq2E\left[  \left(
\begin{array}
[c]{c}%
\left(  \frac{\dot{P}\dot{B}H_{1}}{\widehat{Q}}\right)  _{i_{3}}T_{c}\left(
O_{i_{3}}\right)  \overline{Z}_{k,i_{3}}^{T}\times\\%
%TCIMACRO{\dprod \limits_{s=4}^{m}}%
%BeginExpansion
{\displaystyle\prod\limits_{s=4}^{m}}
%EndExpansion
\left\{
\begin{array}
[c]{c}%
\left(  \dot{P}\dot{B}H_{1}\overline{Z}_{k}\overline{Z}_{k}^{T}\right)
_{i_{s}}\\
-I_{k\times k}%
\end{array}
\right\}  \left[  \overline{Z}_{k}\left(  H_{1}\widehat{B}+H_{3}\right)
\dot{P}\right]  _{i_{2}}%
\end{array}
\right)  ^{2}\right] \\
&  +2E\left[  \left(  E_{\theta}\left[  \widehat{IF}_{m-1,m-1,i_{1}i_{2}%
i_{4}...i_{m}}|\mathbf{O}_{-i_{1}}\right]  \right)  ^{2}\right]
\end{align*}
\newline

From Theorem \ref{var_multi}, it can be shown that
\begin{align*}
&  E\left[  \left(
\begin{array}
[c]{c}%
\left(  \frac{\dot{P}\dot{B}H_{1}}{\widehat{Q}}\right)  _{i_{3}}T_{c}\left(
O_{i_{3}}\right)  \overline{Z}_{k,i_{3}}^{T}\times\\%
%TCIMACRO{\dprod \limits_{s=4}^{m}}%
%BeginExpansion
{\displaystyle\prod\limits_{s=4}^{m}}
%EndExpansion
\left\{
\begin{array}
[c]{c}%
\left(  \dot{P}\dot{B}H_{1}\overline{Z}_{k}\overline{Z}_{k}^{T}\right)
_{i_{s}}\\
-I_{k\times k}%
\end{array}
\right\}  \left[  \overline{Z}_{k}\left(  H_{1}\widehat{B}+H_{3}\right)
\dot{P}\right]  _{i_{2}}%
\end{array}
\right)  ^{2}\right] \\
&  =\left\vert \left\vert P-\widehat{P}\right\vert \right\vert _{\infty
}O\left(  k^{m-2}\right) \\
&  =o\left(  k^{m-2}\right)
\end{align*}

By the induction assumption, $E_{\theta}\left[  \left\{  E_{\theta}\left(
\widehat{IF}_{m-1,m-1,i_{1}i_{2}i_{4}...i_{m}}|O_{i_{1}}\right)  \right\}
^{2}\right]  =o\left(  k^{m-3}\right)  .$ Therefore eq. (\ref{3.19_1}) holds
when $t=1.$

ii) Following the same argument as above, we can prove that eq. (\ref{3.19_1})
also holds for $t=2.$

iii) If $3\leq t\leq m,$ \ WLOG, assume $t=3,$ then%
\begin{align*}
&  \left(  -1\right)  ^{m-1}E\left[  \widehat{IF}_{m,m,\overline{i}_{m}%
}|\mathbf{O}_{-i_{3}}\right] \\
&  =\left[  \left(  H_{1}\widehat{P}+H_{2}\right)  \dot{B}\overline{Z}_{k}%
^{T}\right]  _{i_{1}}\left(  E_{\theta}\left[  Q^{2}\overline{Z}_{k}%
\overline{Z}_{k}^{T}\right]  -I\right)  \times\\
&
%TCIMACRO{\dprod \limits_{s=4}^{m}}%
%BeginExpansion
{\displaystyle\prod\limits_{s=4}^{m}}
%EndExpansion
\left\{
\begin{array}
[c]{c}%
\left(  \dot{P}\dot{B}H_{1}\overline{Z}_{k}\overline{Z}_{k}^{T}\right)
_{i_{s}}\\
-I_{k\times k}%
\end{array}
\right\}  \left[  \overline{Z}_{k}\left(  H_{1}\widehat{B}+H_{3}\right)
\dot{P}\right]  _{i_{2}}\\
&  =\left[  \left(  H_{1}\widehat{P}+H_{2}\right)  \dot{B}\overline{Z}_{k}%
^{T}\right]  _{i_{1}}\widehat{E}\left[  \delta g\text{ }\widehat{Q}%
^{2}\overline{Z}_{k}\overline{Z}_{k}^{T}\right]  \left(  \dot{P}\dot{B}%
H_{1}\overline{Z}_{k}\overline{Z}_{k}^{T}\right)  _{i_{4}}\\
&  \times%
%TCIMACRO{\dprod \limits_{s=5}^{m}}%
%BeginExpansion
{\displaystyle\prod\limits_{s=5}^{m}}
%EndExpansion
\left\{  \left(  \dot{P}\dot{B}H_{1}\overline{Z}_{k}\overline{Z}_{k}%
^{T}\right)  _{i_{s}}-I_{k\times k}\right\}  \left[  \overline{Z}_{k}\left(
H_{1}\widehat{B}+H_{3}\right)  \dot{P}\right]  _{i_{2}}\left(  \equiv
\widehat{T}\right) \\
&  -E_{\theta}\left[  \widehat{IF}_{m-1,m-1,i_{1}i_{2}i_{3}i_{5}..i_{m}%
}|\mathbf{O}_{-i_{3}}\right]
\end{align*}

moreover, it can be shown that $E_{\theta}\left[  \widehat{T}^{2}\right]
=O\left(  \left\vert \left\vert \delta g\right\vert \right\vert _{\infty}%
^{2}k^{m-2}\right)  =o\left(  k^{m-2}\right)  $ following the proof of Theorem
\ref{var_multi} but replacing $K_{k}\left(  X_{i_{1}},X_{i_{2}}\right)  $ with
$K_{k}^{\ddag}\left(  X_{i_{1}},X_{i_{2}}\right)  \equiv\overline{\phi}%
_{0}^{k}\left(  X_{i_{1}}\right)  ^{T}\widehat{E}\left[  \delta g\text{
}\widehat{Q}^{2}\overline{Z}_{k}\overline{Z}_{k}^{T}\right]  \overline{\phi
}_{0}^{k}\left(  X_{i_{2}}\right)  .$ Specifically,
\begin{align*}
&  \int\left(  K_{k}^{\ddag}\left(  X_{i_{1}},X_{i_{2}}\right)  \right)
^{2}d\mu\left(  O_{i_{1}};\theta\right) \\
&  =tr\left(  \widehat{E}\left[  \delta g\text{ }\widehat{Q}^{2}\overline
{Z}_{k}\overline{Z}_{k}^{T}\right]  \overline{\phi}_{0}^{k}\left(  X_{i_{2}%
}\right)  \overline{\phi}_{0}^{k,T}\left(  X_{i_{2}}\right)  \widehat{E}%
\left[  \delta g\text{ }\widehat{Q}^{2}\overline{Z}_{k}\overline{Z}_{k}%
^{T}\right]  \right) \\
&  =\overline{\phi}_{0}^{k,T}\left(  X_{i_{2}}\right)  \left(  \widehat{E}%
\left[  \delta g\text{ }\widehat{Q}^{2}\overline{Z}_{k}\overline{Z}_{k}%
^{T}\right]  \right)  ^{2}\overline{\phi}_{0}^{k}\left(  X_{i_{2}}\right) \\
&  \leq\left\vert \left\vert \delta g\right\vert \right\vert _{\infty}%
^{2}\overline{\phi}_{0}^{k,T}\left(  X_{i_{2}}\right)  \overline{\phi}_{0}%
^{k}\left(  X_{i_{2}}\right)
\end{align*}

The last inequality holds because $\left\vert \left\vert \delta g\right\vert
\right\vert _{\infty}I_{k\times k}-\widehat{E}\left[  \delta g\text{
}\widehat{Q}^{2}\overline{Z}_{k}\overline{Z}_{k}^{T}\right]  $ is a
semi-positive definite symmetric matrix. $\ \ E_{\theta}\left[  \left(
E_{\theta}\left[  \widehat{IF}_{m-1,m-1,i_{1}i_{2}i_{3}i_{5}..i_{m}%
}|\mathbf{O}_{-i_{3}}\right]  \right)  ^{2}\right]  $ is of order $o\left(
k^{m-3}\right)  $ by induction assumption. \ Now, the proof of this
proposition is complete. moreover, by arguing similarly, the result above can
be generalized in a straightforward manner to the following proposition.
\end{proof}

\begin{proposition}
For any $m\geq2$ and $1\leq t<m,$%
\[
E_{\theta}\left[  \left\{  E_{\theta}\left(  \widehat{IF}_{m,m,\overline
{i}_{m}}|O_{i_{s_{1}}},...O_{i_{s_{t}}}\right)  \right\}  ^{2}\right]
=o\left(  k^{t-1}\right)
\]

\end{proposition}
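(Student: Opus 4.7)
The plan is to establish this bound by induction, leveraging the explicit product-form formula for $\widehat{IF}_{m,m,\overline{i}_{m}}$ given in the theorem following equation $(\ref{ON})$, together with the previous proposition as a building block. The preceding proposition handles the extreme case $t=m-1$ (conditioning on $m-1$ out of $m$ observations); the current statement fills in all intermediate $t$. Because the symmetrized kernel is invariant under permutations of its indices, it is enough to bound $E_{\theta}[\{E_{\theta}(\widehat{IF}_{m,m,\overline{i}_m}\mid O_{i_{1}},\ldots,O_{i_{t}})\}^{2}]$, then split into a small number of cases depending on whether the ``boundary'' positions $i_1$ (carrying the $\varepsilon_{b}$ factor $(H_{1}\widehat{P}+H_{2})\dot{B}\overline{Z}_k^T$) and $i_2$ (carrying the $\varepsilon_{p}$ factor $\overline{Z}_k(H_{1}\widehat{B}+H_{3})\dot{P}$) lie in the conditioning set.

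In each case I would carry out the conditional expectation factor by factor. Whenever a middle-position index $i_{s}$ for $s\in\{3,\ldots,m\}$ lies \emph{outside} the conditioning set, the corresponding factor $(\dot{P}\dot{B}H_{1}\overline{Z}_k\overline{Z}_k^T)_{i_{s}}-I_{k\times k}$ integrates, under $F(\cdot;\theta)$, to $E_{\theta}[\dot{P}\dot{B}H_{1}\overline{Z}_k\overline{Z}_k^T]-I_{k\times k}=\widehat{E}[\delta g\,\widehat{Q}^{2}\overline{Z}_k\overline{Z}_k^T]$, where the cancellation of the identity uses the normalization $(\ref{ON})$ to ensure $\widehat{E}[\widehat{Q}^{2}\overline{Z}_k\overline{Z}_k^T]=I_{k\times k}$. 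This matrix has operator norm $O(\|\delta g\|_\infty)=o(1)$ (exactly the step used in part (iii) of the preceding proof, with the spectral-comparison inequality $\|\delta g\|_\infty I_{k\times k}-\widehat{E}[\delta g\,\widehat{Q}^{2}\overline{Z}_k\overline{Z}_k^T]\succeq 0$). When a boundary position $i_{1}$ or $i_{2}$ is integrated, the relevant factor collapses to a projection of the form $T_{c}(O)$ (as in the $t=1,2$ subcases of the preceding proposition), contributing a factor controlled by $\|P-\widehat{P}\|_{2}$ or $\|B-\widehat{B}\|_{2}$, both of which are $o(1)$ by Aiv).

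After all $m-t$ non-conditioned positions are integrated out, what remains is a $t$-dimensional U-statistic kernel built from products of the surviving basis functions $\overline{Z}_{k}$ contracted against $k\times k$ matrices, at least one of which is guaranteed by the case analysis above to carry an $o(1)$ spectral factor. The second moment of this surviving kernel can then be estimated by the same wavelet-variance calculation used to prove Theorem \ref{var_multi}, which yields the ``raw'' rate $O(k^{t-1})$ for a $t$-fold symmetric kernel with unit matrices in the middle. Combining the raw rate with the $o(1)$ factors extracted during integration yields the required $o(k^{t-1})$.

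The main technical obstacle is the bookkeeping: one must verify that in every case of the split (both boundary indices conditioned; only one conditioned; neither conditioned; conditioning set containing various mixtures of middle and boundary indices) at least one $o(1)$ factor is produced so that the final bound is strictly $o(k^{t-1})$ rather than $O(k^{t-1})$. An induction on $m-t$ (base case $m-t=1$ being the preceding proposition, inductive step integrating out one additional non-conditioned middle index and appealing to the previous level) keeps this bookkeeping tractable, since at each step only one new factor needs to be tracked and the cancellation using $(\ref{ON})$ is always the same.
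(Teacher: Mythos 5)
Your proposal is correct and is essentially the argument the paper intends: the paper gives no separate proof of this proposition, stating only that it follows "by arguing similarly" to the preceding one, and your plan is exactly that extension — integrate out each unconditioned factor of the product-form kernel, obtaining an $o(1)$ vector $E_{\theta}\left[  Q^{2}\left(  \frac{P-\widehat{P}}{\dot{P}}\right)  \overline{Z}_{k}\right]$ or $E_{\theta}\left[  Q^{2}\left(  \frac{B-\widehat{B}}{\dot{B}}\right)  \overline{Z}_{k}\right]$ at a boundary position and the matrix $\widehat{E}\left[  \delta g\,\widehat{Q}^{2}\overline{Z}_{k}\overline{Z}_{k}^{T}\right]$ of operator norm $O\left(  \left\Vert \delta g\right\Vert _{\infty}\right)  =o(1)$ at a middle position, then bound the surviving $t$-factor random kernel by the Theorem \ref{var_multi}-type calculation (the $K^{\ddag}$ device) to get $O\left(  k^{t-1}\right)$ times at least one $o(1)$ factor. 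The only cosmetic slip is the appeal to symmetry to reduce to conditioning on $O_{i_{1}},\ldots,O_{i_{t}}$ — the kernel is not symmetric, so the reduction is really your subsequent case analysis over which of the boundary positions $i_{1},i_{2}$ lie in the conditioning set, which you carry out anyway.
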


Finally, we are now ready to prove Lemma \ref{3.19_lemma}.

\begin{proof}
(Lemma \ref{3.19_lemma}) Throughout the proof, we repeatedly use the result
that $\widehat{\mathbb{IF}}_{m,m,\widetilde{\psi}_{k}\left(  \cdot\right)  }$
for any $m\geq2$ is degenerate under $F\left(  \cdot;\widehat{\theta}\right)
. $ We first prove eq. (\ref{lemma_1}). $\ \ $%
\begin{align*}
&  Var_{\theta}\left[  \left(  \widehat{\mathbb{IF}}_{m,m,\widetilde{\psi}%
_{k}\left(  \cdot\right)  }\right)  ^{2}\right]  -Var_{\widehat{\theta}%
}\left[  \left(  \widehat{\mathbb{IF}}_{m,m,\widetilde{\psi}_{k}\left(
\cdot\right)  }\right)  ^{2}\right] \\
&  =E_{\theta}\left[  \left(  \widehat{\mathbb{IF}}_{m,m,\widetilde{\psi}%
_{k}\left(  \cdot\right)  }\right)  ^{2}\right]  -E_{\widehat{\theta}}\left[
\left(  \widehat{\mathbb{IF}}_{m,m,\widetilde{\psi}_{k}\left(  \cdot\right)
}\right)  ^{2}\right]  -\left(  E_{\theta}\left[  \widehat{\mathbb{IF}%
}_{m,m,\widetilde{\psi}_{k}\left(  \cdot\right)  }\right]  \right)  ^{2}\\
&  =E_{\widehat{\theta}}\left[  \left(  \widehat{\mathbb{IF}}%
_{m,m,\widetilde{\psi}_{k}\left(  \cdot\right)  }\right)  ^{2}\left(
%TCIMACRO{\tprod \limits_{i=1}^{n}}%
%BeginExpansion
{\textstyle\prod\limits_{i=1}^{n}}
%EndExpansion
\frac{f\left(  O_{i}\right)  }{\widehat{f}\left(  O_{i}\right)  }-1\right)
\right]  -\left(  E_{\theta}\left[  \widehat{\mathbb{IF}}_{m,m,\widetilde{\psi
}_{k}\left(  \cdot\right)  }\right]  \right)  ^{2}%
\end{align*}
can be written as a sum of four terms as below:%
\begin{align*}
&  E_{\widehat{\theta}}\left[  \left(  \widehat{\mathbb{IF}}%
_{m,m,\widetilde{\psi}_{k}\left(  \cdot\right)  }\right)  ^{2}\left(
%TCIMACRO{\tprod \limits_{i=1}^{n}}%
%BeginExpansion
{\textstyle\prod\limits_{i=1}^{n}}
%EndExpansion
\frac{f\left(  O_{i}\right)  }{\widehat{f}\left(  O_{i}\right)  }-1\right)
\right]  -\left(  E_{\theta}\left[  \widehat{\mathbb{IF}}_{m,m,\widetilde{\psi
}_{k}\left(  \cdot\right)  }\right]  \right)  ^{2}\\
&  =E_{\widehat{\theta}}\left\{
\begin{array}
[c]{c}%
\left[  \binom{n}{m}\right]  ^{-2}\left(
%TCIMACRO{\dsum \limits_{i_{1}<i_{2}..<i_{m}}}%
%BeginExpansion
{\displaystyle\sum\limits_{i_{1}<i_{2}..<i_{m}}}
%EndExpansion
\widehat{IF}_{m,m,\overline{i}_{m}}^{\left(  s\right)  }\right) \\
\times\left(
%TCIMACRO{\dsum \limits_{r_{1}<r_{2}..<r_{m}}}%
%BeginExpansion
{\displaystyle\sum\limits_{r_{1}<r_{2}..<r_{m}}}
%EndExpansion
\widehat{IF}_{m,m,\overline{r}_{m}}^{\left(  s\right)  }\right)  \left(
%TCIMACRO{\tprod \limits_{i=1}^{n}}%
%BeginExpansion
{\textstyle\prod\limits_{i=1}^{n}}
%EndExpansion
\frac{f\left(  O_{i}\right)  }{\widehat{f}\left(  O_{i}\right)  }-1\right)
\end{array}
\right\} \\
&  -\left(  E_{\theta}\left[  \widehat{\mathbb{IF}}_{m,m,\widetilde{\psi}%
_{k}\left(  \cdot\right)  }\right]  \right)  ^{2}\\
&  =E_{\widehat{\theta}}\left\{
\begin{array}
[c]{c}%
\left[  \binom{n}{m}\right]  ^{-2}\left[
\begin{array}
[c]{c}%
%TCIMACRO{\dsum \limits_{i_{1}<i_{2}..<i_{m}}}%
%BeginExpansion
{\displaystyle\sum\limits_{i_{1}<i_{2}..<i_{m}}}
%EndExpansion
\left(  \widehat{IF}_{m,m,\overline{i}_{m}}^{\left(  s\right)  }\right)
^{2}\\
+%
%TCIMACRO{\dsum \limits_{\overline{i}_{m}\cap\overline{r}_{m}=\emptyset}}%
%BeginExpansion
{\displaystyle\sum\limits_{\overline{i}_{m}\cap\overline{r}_{m}=\emptyset}}
%EndExpansion
\widehat{IF}_{m,m,\overline{i}_{m}}^{\left(  s\right)  }\widehat{IF}%
_{m,m,\overline{r}_{m}}^{\left(  s\right)  }\\
+%
%TCIMACRO{\dsum \limits_{1\leq\#\left(  \overline{i}_{m}\cap\overline{r}%
%_{m}\right)  <m}}%
%BeginExpansion
{\displaystyle\sum\limits_{1\leq\#\left(  \overline{i}_{m}\cap\overline{r}%
_{m}\right)  <m}}
%EndExpansion
\widehat{IF}_{m,m,\overline{i}_{m}}^{\left(  s\right)  }\widehat{IF}%
_{m,m,\overline{r}_{m}}^{\left(  s\right)  }%
\end{array}
\right] \\
\times\left(
%TCIMACRO{\tprod \limits_{i=1}^{n}}%
%BeginExpansion
{\textstyle\prod\limits_{i=1}^{n}}
%EndExpansion
\frac{f\left(  O_{i}\right)  }{\widehat{f}\left(  O_{i}\right)  }-1\right)
\end{array}
\right\} \\
&  -\left(  E_{\theta}\left[  \widehat{\mathbb{IF}}_{m,m,\widetilde{\psi}%
_{k}\left(  \cdot\right)  }\right]  \right)  ^{2}%
\end{align*}
where $\#\left(  \overline{i}_{m}\cap\overline{r}_{m}\right)  $ is the number
of elements in the intersection set $\left\{  i_{1},...,i_{m}\right\}
\cap\left\{  r_{1},...,r_{m}\right\}  .$
\ \ \ \ \ \ \ \ \ \ \ \ \ \ \ \ \ \ \ \ \ \ \ \ \ \ \ \ \ \ \ \ \ \ \ \ \ \ \ \ The
first term:
\begin{align*}
&  E_{\widehat{\theta}}\left[  \left[  \binom{n}{m}\right]  ^{-2}%
%TCIMACRO{\dsum \limits_{i_{1}<i_{2}..<i_{m}}}%
%BeginExpansion
{\displaystyle\sum\limits_{i_{1}<i_{2}..<i_{m}}}
%EndExpansion
\widehat{IF}_{m,m,\overline{i}_{m}}^{\left(  s\right)  2}\left(
%TCIMACRO{\tprod \limits_{i=1}^{n}}%
%BeginExpansion
{\textstyle\prod\limits_{i=1}^{n}}
%EndExpansion
\frac{f\left(  O_{i}\right)  }{\widehat{f}\left(  O_{i}\right)  }-1\right)
\right] \\
&  =\left[  \binom{n}{m}\right]  ^{-1}E_{\widehat{\theta}}\left[
\widehat{IF}_{m,m,\overline{i}_{m}}^{\left(  s\right)  2}\left(
%TCIMACRO{\tprod \limits_{i=1}^{n}}%
%BeginExpansion
{\textstyle\prod\limits_{i=1}^{n}}
%EndExpansion
\frac{f\left(  O_{i}\right)  }{\widehat{f}\left(  O_{i}\right)  }-1\right)
\right] \\
&  =\left[  \binom{n}{m}\right]  ^{-1}\left(  E_{\theta}\left[  \widehat{IF}%
_{m,m,\overline{i}_{m}}^{\left(  s\right)  2}\right]  -E_{\widehat{\theta}%
}\left[  \widehat{IF}_{m,m,\overline{i}_{m}}^{\left(  s\right)  2}\right]
\right)
\end{align*}
The second term:
\begin{align*}
&  E_{\widehat{\theta}}\left[  \left[  \binom{n}{m}\right]  ^{-2}\left(
%TCIMACRO{\dsum \limits_{\overline{i}_{m}\cap\overline{r}_{m}=\emptyset}}%
%BeginExpansion
{\displaystyle\sum\limits_{\overline{i}_{m}\cap\overline{r}_{m}=\emptyset}}
%EndExpansion
\widehat{IF}_{m,m,\overline{i}_{m}}^{\left(  s\right)  }\widehat{IF}%
_{m,m,\overline{r}_{m}}^{\left(  s\right)  }\right)  \left(
%TCIMACRO{\tprod \limits_{i=1}^{n}}%
%BeginExpansion
{\textstyle\prod\limits_{i=1}^{n}}
%EndExpansion
\frac{f\left(  O_{i}\right)  }{\widehat{f}\left(  O_{i}\right)  }-1\right)
\right] \\
&  =\left[  \binom{n}{m}\right]  ^{-2}E_{\theta}\left[
%TCIMACRO{\dsum \limits_{\overline{i}_{m}\cap\overline{r}_{m}=\emptyset}}%
%BeginExpansion
{\displaystyle\sum\limits_{\overline{i}_{m}\cap\overline{r}_{m}=\emptyset}}
%EndExpansion
\widehat{IF}_{m,m,\overline{i}_{m}}^{\left(  s\right)  }\widehat{IF}%
_{m,m,\overline{r}_{m}}^{\left(  s\right)  }\right] \\
&  =\left[  \binom{n}{m}\right]  ^{-2}\binom{n}{m}\binom{n-m}{m}E_{\theta
}\left[  \widehat{IF}_{m,m,\overline{i}_{m}}^{\left(  s\right)  }%
\widehat{IF}_{m,m,\overline{r}_{m}}^{\left(  s\right)  }\right] \\
&  =\frac{\left(  n-m\right)  ...\left(  n-2m+1\right)  }{n\left(  n-1\right)
...\left(  n-m+1\right)  }\left(  E_{\theta}\left[  \widehat{IF}%
_{m,m,\overline{i}_{m}}^{\left(  s\right)  }\right]  \right)  ^{2}%
\end{align*}
Substracting the fourth term from the second term, we have
\begin{align*}
&  \left[  \frac{\left(  n-m\right)  ...\left(  n-2m+1\right)  }{n\left(
n-1\right)  ...\left(  n-m+1\right)  }-1\right]  \left(  E_{\theta}\left[
\widehat{IF}_{m,m,\overline{i}_{m}}^{\left(  s\right)  }\right]  \right)
^{2}\\
&  =O\left(  \frac{1}{n}\left(  E_{\theta}\left[  \widehat{IF}_{m,m,\overline
{i}_{m}}^{\left(  s\right)  }\right]  \right)  ^{2}\right)
\end{align*}
The third term:%
\begin{align*}
&  E_{\widehat{\theta}}\left[  \left[  \binom{n}{m}\right]  ^{-2}\left(
%TCIMACRO{\dsum \limits_{1\leq\#\left(  \overline{i}_{m}\cap\overline{r}%
%_{m}\right)  <m}}%
%BeginExpansion
{\displaystyle\sum\limits_{1\leq\#\left(  \overline{i}_{m}\cap\overline{r}%
_{m}\right)  <m}}
%EndExpansion
\widehat{IF}_{m,m,\overline{i}_{m}}^{\left(  s\right)  }\widehat{IF}%
_{m,m,\overline{r}_{m}}^{\left(  s\right)  }\right)  \left(
%TCIMACRO{\tprod \limits_{i=1}^{n}}%
%BeginExpansion
{\textstyle\prod\limits_{i=1}^{n}}
%EndExpansion
\frac{f\left(  O_{i}\right)  }{\widehat{f}\left(  O_{i}\right)  }-1\right)
\right] \\
&  =\sum_{t=1}^{m-1}E_{\theta}\left[  \left[  \binom{n}{m}\right]
^{-2}\left(
%TCIMACRO{\dsum \limits_{\#\left(  \overline{i}_{m}\cap\overline{r}_{m}\right)
%=t}}%
%BeginExpansion
{\displaystyle\sum\limits_{\#\left(  \overline{i}_{m}\cap\overline{r}%
_{m}\right)  =t}}
%EndExpansion
\widehat{IF}_{m,m,\overline{i}_{m}}^{\left(  s\right)  }\widehat{IF}%
_{m,m,\overline{r}_{m}}^{\left(  s\right)  }\right)  \right] \\
&  =\sum_{t=1}^{m-1}\left[  \binom{n}{m}\right]  ^{-2}\binom{n}{2m-t}\left[
\binom{m}{t}\right]  ^{2}E_{\theta}\left[  \widehat{IF}_{m,m,i_{1}i_{2}%
..i_{t}i_{t+1}..i_{m}}^{\left(  s\right)  }\widehat{IF}_{m,m,i_{1}i_{2}%
..i_{t}i_{m+1}..i_{2m-t}}^{\left(  s\right)  }\right] \\
&  =O\left(  \sum_{t=1}^{m-1}n^{-t}E_{\theta}\left[  \left(  E_{\theta}\left[
\widehat{IF}_{m,m,\overline{i}_{m}}^{\left(  s\right)  }|\mathbf{O}%
_{\overline{i}_{t}}\right]  \right)  ^{2}\right]  \right) \\
&  =O\left(  \sum_{t=1}^{m-1}n^{-t}\left\{  E_{\theta}\left[  \left(
\widehat{IF}_{m,m,\overline{i}_{m}}|O_{s_{1}},...O_{s_{t}}\right)
^{2}\right]  E_{\theta}\left[  \left(  \widehat{IF}_{m,m,\overline{i}_{m}%
}|O_{v_{1}},...O_{v_{t}}\right)  ^{2}\right]  \right\}  ^{1/2}\right) \\
&  =o\left(  \max\left(  \frac{1}{n},\frac{k^{m-2}}{n^{m-1}}\right)  \right)
\end{align*}
The last two equalities follow from Cauchy-Shwartz inequality and Lemma
\ref{3.19_lemma}. Specifically,
\begin{align*}
&  E_{\theta}\left(  E_{\theta}\left[  \widehat{IF}_{m,m,\overline{i}_{m}%
}^{\left(  s\right)  }|\mathbf{O}_{\overline{i}_{t}}\right]  \right)  ^{2}\\
&  =E_{\theta}\left[  E_{\theta}\left(  \widehat{IF}_{m,m,\overline{i}%
_{m}^{\ast}}|O_{i_{1}},...O_{i_{t}}\right)  E_{\theta}\left(  \widehat{IF}%
_{m,m,\overline{r}_{m}^{\ast}}|O_{i_{1}},...O_{i_{t}}\right)  \right] \\
&  \leq\left\{  E_{\theta}\left[  \left(  E_{\theta}\left(  \widehat{IF}%
_{m,m,\overline{i}_{m}^{\ast}}|O_{i_{1}},...O_{i_{t}}\right)  \right)
^{2}\right]  \right\}  ^{1/2}\\
&  \times\left\{  E_{\theta}\left[  \left(  E_{\theta}\left(  \widehat{IF}%
_{m,m,\overline{r}_{m}^{\ast}}|O_{i_{1}},...O_{i_{t}}\right)  \right)
^{2}\right]  \right\}  ^{1/2}.
\end{align*}

where $\overline{i}_{m}^{\ast}$ and $\overline{r}_{m}^{\ast}$ are two
permutations of $\left(  i_{1},i_{2},...i_{m}\right)  .$

Next, we prove eq. \ref{lemma_2} for any $1\leq m_{1}<m_{2}.$ \ Here we also
rewrite $Cov_{\theta}\left(  \widehat{\mathbb{IF}}_{m_{1},m_{1}%
,\widetilde{\psi}_{k}\left(  \cdot\right)  },\text{ }\widehat{\mathbb{IF}%
}_{m_{2},m_{2},\widetilde{\psi}_{k}\left(  \cdot\right)  }\right)  $ as a sum
of four terms.
\begin{align*}
&  Cov_{\theta}\left(  \widehat{\mathbb{IF}}_{m_{1},m_{1},\widetilde{\psi}%
_{k}\left(  \cdot\right)  },\text{ }\widehat{\mathbb{IF}}_{m_{2}%
,m_{2},\widetilde{\psi}_{k}\left(  \cdot\right)  }\right) \\
&  =E_{\theta}\left[  \widehat{\mathbb{IF}}_{m_{1},m_{1},\widetilde{\psi}%
_{k}\left(  \cdot\right)  }\widehat{\mathbb{IF}}_{m_{2},m_{2},\widetilde{\psi
}_{k}\left(  \cdot\right)  }\right]  -E_{\theta}\left[  \widehat{IF}%
_{m_{1},m_{1},\overline{i}_{m}}^{\left(  s\right)  }\right]  E_{\theta}\left[
\widehat{IF}_{m_{2},m_{2},\overline{i}_{m}}^{\left(  s\right)  }\right] \\
&  =E_{\theta}\left[  \frac{1}{\binom{n}{m_{1}}\binom{n}{m_{2}}}\sum
_{i_{1}<i_{2}..<i_{m_{1}}}\widehat{IF}_{m_{1},m_{1},\overline{i}_{m_{1}}%
}^{\left(  s\right)  }\sum_{r_{1}<r_{2}..<r_{m_{2}}}\widehat{IF}_{m_{2}%
,m_{2},\overline{r}_{m_{2}}}^{\left(  s\right)  }\right] \\
&  -E_{\theta}\left[  \widehat{IF}_{m_{1},m_{1},\overline{i}_{m}}^{\left(
s\right)  }\right]  E_{\theta}\left[  \widehat{IF}_{m_{2},m_{2},\overline
{i}_{m}}^{\left(  s\right)  }\right] \\
&  =E_{\theta}\left[  \frac{1}{\binom{n}{m_{1}}\binom{n}{m_{2}}}\left\{
\begin{array}
[c]{c}%
\left(
%TCIMACRO{\dsum \limits_{\overline{i}_{m_{1}}\subset\overline{r}_{m_{2}}}}%
%BeginExpansion
{\displaystyle\sum\limits_{\overline{i}_{m_{1}}\subset\overline{r}_{m_{2}}}}
%EndExpansion
+%
%TCIMACRO{\dsum \limits_{\overline{i}_{m_{1}}\cap\overline{r}_{m_{2}}%
%=\emptyset}}%
%BeginExpansion
{\displaystyle\sum\limits_{\overline{i}_{m_{1}}\cap\overline{r}_{m_{2}%
}=\emptyset}}
%EndExpansion
+%
%TCIMACRO{\dsum \limits_{1\leq\#\left(  \overline{i}_{m_{1}}\cap\overline
%{r}_{m_{2}}\right)  <m_{1}}}%
%BeginExpansion
{\displaystyle\sum\limits_{1\leq\#\left(  \overline{i}_{m_{1}}\cap\overline
{r}_{m_{2}}\right)  <m_{1}}}
%EndExpansion
\right) \\
\times\widehat{IF}_{m_{1},m_{1},\overline{i}_{m_{1}}}^{\left(  s\right)
}\widehat{IF}_{m_{2},m_{2},\overline{r}_{m_{2}}}^{\left(  s\right)  }%
\end{array}
\right\}  \right] \\
&  -E_{\theta}\left[  \widehat{IF}_{m_{1},m_{1},\overline{i}_{m}}^{\left(
s\right)  }\right]  E_{\theta}\left[  \widehat{IF}_{m_{2},m_{2},\overline
{i}_{m}}^{\left(  s\right)  }\right]
\end{align*}
The first term:
\begin{align*}
&  E_{\theta}\left[  \frac{1}{\binom{n}{m_{1}}\binom{n}{m_{2}}}%
%TCIMACRO{\dsum \limits_{\overline{i}_{m_{1}}\subset\overline{r}_{m_{2}}}}%
%BeginExpansion
{\displaystyle\sum\limits_{\overline{i}_{m_{1}}\subset\overline{r}_{m_{2}}}}
%EndExpansion
\widehat{IF}_{m_{1},m_{1},\overline{i}_{m_{1}}}^{\left(  s\right)
}\widehat{IF}_{m_{2},m_{2},\overline{r}_{m_{2}}}^{\left(  s\right)  }\right]
\\
&  =\frac{1}{\binom{n}{m_{1}}\binom{n}{m_{2}}}\binom{n}{m_{2}}\binom{m_{2}%
}{m_{1}}E_{\theta}\left[  \widehat{IF}_{m_{1},m_{1},\overline{i}_{m_{1}}%
}^{\left(  s\right)  }\widehat{IF}_{m_{2},m_{2},\overline{r}_{m_{2}}}^{\left(
s\right)  }\right] \\
&  =\frac{\binom{m_{2}}{m_{1}}}{\binom{n}{m_{1}}}E_{\theta}\left[
\widehat{IF}_{m_{1},m_{1},\overline{i}_{m_{1}}}^{\left(  s\right)  }E_{\theta
}\left[  \widehat{IF}_{m_{2},m_{2},\overline{r}_{m_{2}}}^{\left(  s\right)
}|\mathbf{O}_{\overline{i}_{m_{1}}}\right]  \right] \\
&  \leq\frac{\binom{m_{2}}{m_{1}}}{\binom{n}{m_{1}}}\left\{  E_{\theta}\left[
\left(  \widehat{IF}_{m_{1},m_{1},\overline{i}_{m_{1}}}^{\left(  s\right)
}\right)  ^{2}\right]  E_{\theta}\left[  \left(  E_{\theta}\left[
\widehat{IF}_{m_{2},m_{2},\overline{r}_{m_{2}}}^{\left(  s\right)
}|\mathbf{O}_{\overline{i}_{m_{1}}}\right]  \right)  ^{2}\right]  \right\}
^{1/2}\\
&  =o\left(  \frac{k^{m_{1}-1}}{n^{m_{1}}}\right)  ,
\end{align*}
which follows from Cauchy-Shwartz inequality, Theorem (\ref{var_multi}), and
Lemma \ref{3.19_lemma}.

The difference between the second and the fourth terms equals%
\begin{align*}
&  E_{\theta}\left[  \frac{1}{\binom{n}{m_{1}}\binom{n}{m_{2}}}%
%TCIMACRO{\dsum \limits_{\overline{i}_{m_{1}}\cap\overline{r}_{m_{2}}%
%=\emptyset}}%
%BeginExpansion
{\displaystyle\sum\limits_{\overline{i}_{m_{1}}\cap\overline{r}_{m_{2}%
}=\emptyset}}
%EndExpansion
\widehat{IF}_{m_{1},m_{1},\overline{i}_{m_{1}}}^{\left(  s\right)
}\widehat{IF}_{m_{2},m_{2},\overline{r}_{m_{2}}}^{\left(  s\right)  }\right]
\\
&  -E_{\theta}\left[  \widehat{IF}_{m_{1},m_{1},\overline{i}_{m}}^{\left(
s\right)  }\right]  E_{\theta}\left[  \widehat{IF}_{m_{2},m_{2},\overline
{i}_{m}}^{\left(  s\right)  }\right] \\
&  =\left(  \frac{1}{\binom{n}{m_{1}}\binom{n}{m_{2}}}\binom{n}{m_{1}}%
\binom{n-m_{1}}{m_{2}}-1\right)  E_{\theta}\left[  \widehat{IF}_{m_{1}%
,m_{1},\overline{i}_{m}}^{\left(  s\right)  }\right]  E_{\theta}\left[
\widehat{IF}_{m_{2},m_{2},\overline{i}_{m}}^{\left(  s\right)  }\right] \\
&  =O\left(  \frac{1}{n}E_{\theta}\left[  \widehat{IF}_{m_{1},m_{1}%
,\overline{i}_{m}}^{\left(  s\right)  }\right]  E_{\theta}\left[
\widehat{IF}_{m_{2},m_{2},\overline{i}_{m}}^{\left(  s\right)  }\right]
\right)
\end{align*}
The third term:%
\begin{align*}
&  E_{\theta}\left[  \frac{1}{\binom{n}{m_{1}}\binom{n}{m_{2}}}%
%TCIMACRO{\dsum \limits_{1\leq\#\left(  \overline{i}_{m_{1}}\cap\overline
%{r}_{m_{2}}\right)  <m_{1}}}%
%BeginExpansion
{\displaystyle\sum\limits_{1\leq\#\left(  \overline{i}_{m_{1}}\cap\overline
{r}_{m_{2}}\right)  <m_{1}}}
%EndExpansion
\widehat{IF}_{m_{1},m_{1},\overline{i}_{m_{1}}}^{\left(  s\right)
}\widehat{IF}_{m_{2},m_{2},\overline{r}_{m_{2}}}^{\left(  s\right)  }\right]
\\
&  =%
%TCIMACRO{\dsum \limits_{t=1}^{m_{1}-1}}%
%BeginExpansion
{\displaystyle\sum\limits_{t=1}^{m_{1}-1}}
%EndExpansion
\frac{1}{\binom{n}{m_{1}}\binom{n}{m_{2}}}E_{\theta}\left[
%TCIMACRO{\dsum \limits_{\#\left(  \overline{i}_{m_{1}}\cap\overline{r}_{m_{2}%
%}\right)  =t}}%
%BeginExpansion
{\displaystyle\sum\limits_{\#\left(  \overline{i}_{m_{1}}\cap\overline
{r}_{m_{2}}\right)  =t}}
%EndExpansion
\widehat{IF}_{m_{1},m_{1},\overline{i}_{m_{1}}}^{\left(  s\right)
}\widehat{IF}_{m_{2},m_{2},\overline{r}_{m_{2}}}^{\left(  s\right)  }\right]
\\
&  =%
%TCIMACRO{\dsum \limits_{t=1}^{m_{1}-1}}%
%BeginExpansion
{\displaystyle\sum\limits_{t=1}^{m_{1}-1}}
%EndExpansion
\left\{
\begin{array}
[c]{c}%
\frac{\binom{n}{m_{2}+m_{1}-t}\binom{m_{2}+m_{1}-t}{m_{2}}\binom{m_{2}}{t}%
}{\binom{n}{m_{1}}\binom{n}{m_{2}}}\times\\
E_{\theta}\left[  \widehat{IF}_{m_{1},m_{1},i_{1}i_{2}..i_{t}i_{t+1}%
..i_{m_{1}}}^{\left(  s\right)  }\widehat{IF}_{m_{2},m_{2},i_{1}%
..i_{t}i_{m_{1}+1}..i_{m_{1}+m_{2}-t}}^{\left(  s\right)  }\right]
\end{array}
\right\} \\
&  =o\left(  \max\left(  \frac{1}{n},\frac{k^{m_{1}-2}}{n^{m_{1}-1}}\right)
\right)  ,
\end{align*}
\newline which also follows from Cauchy-Shwartz inequality and Lemma
\ref{3.19_lemma}.
\end{proof}

\bigskip

\begin{proof}
(Eq. (\ref{multirob_bias})) \ We prove the bias property of $\widehat{\psi
}_{m,k}^{\operatorname{mod}}$ by induction.

For $m=2,$The estimation bias is given by%
\begin{align*}
&  -\left\{  E\left[  Q^{2}\left(  \frac{P-\widehat{P}}{\dot{P}}\right)
\overline{Z}_{k}^{T}\right]  E\left[  Q^{2}\left(  \frac{B-\widehat{B}}%
{\dot{B}}\right)  \overline{Z}_{k}\right]  \right\} \\
&  +\left\{
\begin{array}
[c]{c}%
E\left[  Q^{2}\left(  \frac{P-\widehat{P}}{\dot{P}}\right)  \overline{Z}%
_{k}^{T}\right]  \left\{  E\left[  \dot{P}\dot{B}H_{1}\overline{Z}%
_{k}\overline{Z}_{k}^{T}\right]  \right\}  ^{-1}\\
\times E\left[  \overline{Z}_{k}Q^{2}\left(  \frac{B-\widehat{B}}{\dot{B}%
}\right)  \right]
\end{array}
\right\}
\end{align*}%
\[
=
\]%
\begin{align*}
&  \left\{
\begin{array}
[c]{c}%
E\left[  \left(  H_{1}\widehat{P}+H_{2}\right)  \dot{B}\overline{Z}_{k}%
^{T}\right]  \left[  \left\{  E\left[  \dot{P}\dot{B}H_{1}\overline{Z}%
_{k}\overline{Z}_{k}^{T}\right]  \right\}  ^{-1}-I\right] \\
\times E\left[  \overline{Z}_{k}\left(  H_{1}\widehat{B}+H_{3}\right)  \dot
{P}\right]
\end{array}
\right\} \\
&  =-\left\{
\begin{array}
[c]{c}%
E\left[  \left(  H_{1}\widehat{P}+H_{2}\right)  \dot{B}\overline{Z}_{k}%
^{T}\right]  \left\{  E\left[  \dot{P}\dot{B}H_{1}\overline{Z}_{k}\overline
{Z}_{k}^{T}\right]  -I\right\} \\
\times E\left[  \dot{P}\dot{B}H_{1}\overline{Z}_{k}\overline{Z}_{k}%
^{T}\right]  ^{-1}E\left[  \overline{Z}_{k}\left(  H_{1}\widehat{B}%
+H_{3}\right)  \dot{P}\right]
\end{array}
\right\}
\end{align*}

Suppose the bias formula holds for $m$, then the bias at $m+1$ is%
\begin{align*}
&  \left(  -1\right)  ^{m-1}\left\{
\begin{array}
[c]{c}%
E\left[  Q^{2}\left(  \frac{P-\widehat{P}}{\dot{P}}\right)  \overline{Z}%
_{k}^{T}\right]  \left\{  E\left[  Q^{2}\overline{Z}_{k}\overline{Z}_{k}%
^{T}\right]  -I\right\} \\
\times%
%TCIMACRO{\dprod \limits_{s=3}^{m}}%
%BeginExpansion
{\displaystyle\prod\limits_{s=3}^{m}}
%EndExpansion
\left\{  \widehat{E}_{s}\left[  \dot{P}\dot{B}H_{1}\overline{Z}_{k}%
\overline{Z}_{k}^{T}\right]  \right\}  ^{-1}\left\{  E\left[  Q^{2}%
\overline{Z}_{k}\overline{Z}_{k}^{T}\right]  -\widehat{E}_{s}\left[
Q^{2}\overline{Z}_{k}\overline{Z}_{k}^{T}\right]  \right\} \\
\times\left\{  E\left[  Q^{2}\overline{Z}_{k}\overline{Z}_{k}^{T}\right]
\right\}  ^{-1}E\left[  Q^{2}\overline{Z}_{k}\left(  \frac{B-\widehat{B}}%
{\dot{B}}\right)  \right]
\end{array}
\right\} \\
&  +\left(  -1\right)  ^{m}E\left[  Q^{2}\left(  \frac{P-\widehat{P}}{\dot{P}%
}\right)  \overline{Z}_{k}^{T}\right]  \left(  E\left[  \left(  \dot{P}\dot
{B}H_{1}\overline{Z}_{k}\overline{Z}_{k}^{T}\right)  _{i_{2}}\right]
-I\right)  \times\\
&  \left[
%TCIMACRO{\dprod \limits_{s=3}^{m}}%
%BeginExpansion
{\displaystyle\prod\limits_{s=3}^{m}}
%EndExpansion
\left\{  \widehat{E}_{s}\left[  \dot{P}\dot{B}H_{1}\overline{Z}_{k}%
\overline{Z}_{k}^{T}\right]  \right\}  ^{-1}\left\{  E\left[  \left(  \dot
{P}\dot{B}H_{1}\overline{Z}_{k}\overline{Z}_{k}^{T}\right)  _{i_{s}}\right]
-\widehat{E}_{s}\left[  \dot{P}\dot{B}H_{1}\overline{Z}_{k}\overline{Z}%
_{k}^{T}\right]  \right\}  \right] \\
&  \times\left\{  \widehat{E}_{m+1}\left[  \dot{P}\dot{B}H_{1}\overline{Z}%
_{k}\overline{Z}_{k}^{T}\right]  \right\}  ^{-1}E\left[  \overline{Z}_{k}%
Q^{2}\left(  \frac{B-\widehat{B}}{\dot{B}}\right)  \right]
\end{align*}%
\[
=
\]%
\begin{align*}
&  \left(  -1\right)  ^{m}\left\{
\begin{array}
[c]{c}%
E\left[  Q^{2}\left(  \frac{P-\widehat{P}}{\dot{P}}\right)  \overline{Z}%
_{k}^{T}\right]  \left\{  E\left[  Q^{2}\overline{Z}_{k}\overline{Z}_{k}%
^{T}\right]  -I\right\} \\
\times%
%TCIMACRO{\dprod \limits_{s=3}^{m}}%
%BeginExpansion
{\displaystyle\prod\limits_{s=3}^{m}}
%EndExpansion
\left\{  \widehat{E}_{s}\left[  \dot{P}\dot{B}H_{1}\overline{Z}_{k}%
\overline{Z}_{k}^{T}\right]  \right\}  ^{-1}\left\{
\begin{array}
[c]{c}%
E\left[  Q^{2}\overline{Z}_{k}\overline{Z}_{k}^{T}\right] \\
-\widehat{E}_{s}\left[  Q^{2}\overline{Z}_{k}\overline{Z}_{k}^{T}\right]
\end{array}
\right\} \\
\times\left[  \left\{  \widehat{E}_{m+1}\left[  \dot{P}\dot{B}H_{1}%
\overline{Z}_{k}\overline{Z}_{k}^{T}\right]  \right\}  ^{-1}-\left\{  E\left[
Q^{2}\overline{Z}_{k}\overline{Z}_{k}^{T}\right]  \right\}  ^{-1}\right] \\
\times E\left[  \overline{Z}_{k}Q^{2}\left(  \frac{B-\widehat{B}}{\dot{B}%
}\right)  \right]
\end{array}
\right\} \\
&  =\left(  -1\right)  ^{m}\left\{
\begin{array}
[c]{c}%
E\left[  Q^{2}\left(  \frac{P-\widehat{P}}{\dot{P}}\right)  \overline{Z}%
_{k}^{T}\right]  \left\{  E\left[  Q^{2}\overline{Z}_{k}\overline{Z}_{k}%
^{T}\right]  -I\right\} \\
\times%
%TCIMACRO{\dprod \limits_{s=3}^{m+1}}%
%BeginExpansion
{\displaystyle\prod\limits_{s=3}^{m+1}}
%EndExpansion
\left\{  \widehat{E}_{s}\left[  \dot{P}\dot{B}H_{1}\overline{Z}_{k}%
\overline{Z}_{k}^{T}\right]  \right\}  ^{-1}\left\{
\begin{array}
[c]{c}%
E\left[  Q^{2}\overline{Z}_{k}\overline{Z}_{k}^{T}\right] \\
-\widehat{E}_{s}\left[  Q^{2}\overline{Z}_{k}\overline{Z}_{k}^{T}\right]
\end{array}
\right\} \\
\times\left\{  E\left[  Q^{2}\overline{Z}_{k}\overline{Z}_{k}^{T}\right]
\right\}  ^{-1}E\left[  \overline{Z}_{k}Q^{2}\left(  \frac{B-\widehat{B}}%
{\dot{B}}\right)  \right]
\end{array}
\right\}
\end{align*}
which completes the proof.
\end{proof}

\textbf{Motivation and proofs for Section \ref{case2}.}

Before proving theorem \ref{beyond4th}, we provide some calculations as
motivation and several preliminary lemmas.

Since $E_{\theta}\left(  H_{1}B+H_{3}|X\right)  =E_{\theta}\left(
H_{1}P+H_{2}|X\right)  =0,$ we can show that
\begin{align*}
&  E_{\theta}\left(  \widehat{\mathbb{U}}_{m}\left(  \left(  l\right)
_{k\left(  l,0\right)  }^{k\left(  l,1\right)  },1\leq l\leq m-1\right)
\right) \\
&  =\left(  E_{\theta}\left(  \widehat{\epsilon}\overline{Z}_{k\left(
1,0\right)  }^{k\left(  1,1\right)  T}\right)
%TCIMACRO{\tprod \limits_{u=2}^{m-1}}%
%BeginExpansion
{\textstyle\prod\limits_{u=2}^{m-1}}
%EndExpansion
\left[  E_{\theta}\left(  \dot{B}\dot{P}H_{1}\overline{Z}_{k\left(
u-1,0\right)  }^{k\left(  u-1,1\right)  }\overline{Z}_{k\left(  u,0\right)
}^{k\left(  u,1\right)  T}-I_{k_{u-1}\times k_{u}}\right)  \right]  E_{\theta
}\left(  \overline{Z}_{k\left(  m-1,0\right)  }^{k\left(  m-1,1\right)
}\widehat{\Delta}\right)  \right) \\
&  =\left\{  \widehat{E}\left(  \left(  \delta g+1\right)  \delta
b\overline{Z}_{k\left(  1,0\right)  }^{k\left(  1,1\right)  T}\right)
%TCIMACRO{\tprod \limits_{u=2}^{m-1}}%
%BeginExpansion
{\textstyle\prod\limits_{u=2}^{m-1}}
%EndExpansion
\widehat{E}\left(  \delta g\widehat{Q}^{2}\overline{Z}_{k\left(  u-1,0\right)
}^{k\left(  u-1,1\right)  }\overline{Z}_{k\left(  u,0\right)  }^{k\left(
u,1\right)  T}\right)  \widehat{E}\left(  \left(  \delta g+1\right)  \delta
p\overline{Z}_{k\left(  m-1,0\right)  }^{k\left(  m-1,1\right)  }\right)
\right\}
\end{align*}
with $\delta b$ $=\dot{P}\widehat{E}\left(  H_{1}|X\right)  \left(
\widehat{B}-B\right)  ,$ $\delta p=\dot{B}\widehat{E}\left(  H_{1}|X\right)
\left(  \widehat{P}-P\right)  ,$ $\delta g=\frac{g-\widehat{g}}{\widehat{g}}$
and $\widehat{Q}^{2}=\dot{B}\dot{P}\widehat{E}\left(  H_{1}|X\right)  .$

Below we give a useful representation of $E_{\theta}\left[
\widehat{\mathbb{U}}_{m}\left(  \left(  l\right)  _{k\left(  l,0\right)
}^{k\left(  l,1\right)  },1\leq l\leq m-1\right)  \right]  .\mathit{\ \ }$Let
\begin{align*}
&  B_{m}\left(  \widehat{\mathbb{U}}_{m}\left(  \left(  l\right)  _{k\left(
l,0\right)  }^{k\left(  l,1\right)  },1\leq l\leq m-1\right)  \right) \\
&  \equiv\left\{  \widehat{E}\left(  \delta b\overline{Z}_{k\left(
1,0\right)  }^{k\left(  1,1\right)  T}\right)  \left[
%TCIMACRO{\tprod \limits_{u=2}^{m-1}}%
%BeginExpansion
{\textstyle\prod\limits_{u=2}^{m-1}}
%EndExpansion
\widehat{E}\left(  \delta g\widehat{Q}^{2}\overline{Z}_{k\left(  u-1,0\right)
}^{k\left(  u-1,1\right)  }\overline{Z}_{k\left(  u,0\right)  }^{k\left(
u,1\right)  T}\right)  \right]  \widehat{E}\left(  \delta p\overline
{Z}_{k\left(  m-1,0\right)  }^{k\left(  m-1,1\right)  }\right)  \right\} \\
&  B_{m+1}^{bg}\left(  \widehat{\mathbb{U}}_{m}\left(  \left(  l\right)
_{k\left(  l,0\right)  }^{k\left(  l,1\right)  },1\leq l\leq m-1\right)
\right) \\
&  \equiv\left\{  \widehat{E}\left(  \delta g\delta b\overline{Z}_{k\left(
1,0\right)  }^{k\left(  1,1\right)  T}\right)
%TCIMACRO{\tprod \limits_{u=2}^{m-1}}%
%BeginExpansion
{\textstyle\prod\limits_{u=2}^{m-1}}
%EndExpansion
\widehat{E}\left(  \delta g\widehat{Q}^{2}\overline{Z}_{k\left(  u-1,0\right)
}^{k\left(  u-1,1\right)  }\overline{Z}_{k\left(  u,0\right)  }^{k\left(
u,1\right)  T}\right)  \widehat{E}\left(  \delta p\overline{Z}_{k\left(
m-1,0\right)  }^{k\left(  m-1,1\right)  }\right)  \right\}
\end{align*}%
\begin{align*}
&  B_{m+1}^{pg}\left(  \widehat{\mathbb{U}}_{m}\left(  \left(  l\right)
_{k\left(  l,0\right)  }^{k\left(  l,1\right)  },1\leq l\leq m-1\right)
\right) \\
&  \equiv\left\{  \widehat{E}\left(  \delta b\overline{Z}_{k\left(
1,0\right)  }^{k\left(  1,1\right)  T}\right)
%TCIMACRO{\tprod \limits_{u=2}^{m-1}}%
%BeginExpansion
{\textstyle\prod\limits_{u=2}^{m-1}}
%EndExpansion
\widehat{E}\left(  \delta g\widehat{Q}^{2}\overline{Z}_{k\left(  u-1,0\right)
}^{k\left(  u-1,1\right)  }\overline{Z}_{k\left(  u,0\right)  }^{k\left(
u,1\right)  T}\right)  \widehat{E}\left(  \delta g\delta p\overline
{Z}_{k\left(  m-1,0\right)  }^{k\left(  m-1,1\right)  }\right)  \right\} \\
&  B_{m+2}\left(  \widehat{\mathbb{U}}_{m}\left(  \left(  l\right)  _{k\left(
l,0\right)  }^{k\left(  l,1\right)  },1\leq l\leq m-1\right)  \right) \\
&  \equiv\left\{  \widehat{E}\left(  \delta g\delta b\overline{Z}_{k\left(
1,0\right)  }^{k\left(  1,1\right)  T}\right)
%TCIMACRO{\tprod \limits_{u=2}^{m-1}}%
%BeginExpansion
{\textstyle\prod\limits_{u=2}^{m-1}}
%EndExpansion
\widehat{E}\left(  \delta g\widehat{Q}^{2}\overline{Z}_{k\left(  u-1,0\right)
}^{k\left(  u-1,1\right)  }\overline{Z}_{k\left(  u,0\right)  }^{k\left(
u,1\right)  T}\right)  \widehat{E}\left(  \delta g\delta p\overline
{Z}_{k\left(  m-1,0\right)  }^{k\left(  m-1,1\right)  }\right)  \right\}
\end{align*}

Then we may write
\begin{align*}
&  E_{\theta}\left(  \widehat{\mathbb{U}}_{m}\left(  \left(  l\right)
_{k\left(  l,0\right)  }^{k\left(  l,1\right)  },1\leq l\leq m-1\right)
\right) \\
&  =\left(  B_{m}+B_{m+1}^{bg}+B_{m+1}^{pg}+B_{m+2}\right)  \left(
\widehat{\mathbb{U}}_{m}\left(  \left(  l\right)  _{k\left(  l,0\right)
}^{k\left(  l,1\right)  },1\leq l\leq m-1\right)  \right)
\end{align*}

We shall require the following:

\begin{lemma}
\label{cp}Assume $k_{1}=\left(  k_{1}\left(  l,s\right)  \right)  _{\left(
t-1\right)  \times2}$ is a $\left(  t-1\right)  \times2$ dimensional matrix
with $l\in\left\{  0,1,..,t-2\right\}  ,$ $s\in\left\{  0,1\right\}  ,$ and
$k_{1}\left(  0,0\right)  =k_{1}\left(  t-2,0\right)  \equiv0.$ ${}$For
$\forall$ $t>3,$ if
\begin{gather*}
\chi\left(  t,k_{1};\theta\right)  \equiv\\
\left\{
\begin{array}
[c]{c}%
\left(
\begin{array}
[c]{c}%
B_{t}\left(  \widehat{\mathbb{U}}_{t-2}\left(  \left(  l\right)
_{k_{1}\left(  l,0\right)  }^{k_{1}\left(  l,1\right)  },1\leq l\leq
t-3\right)  \right) \\
-B_{t}^{bg}\left(  \widehat{\mathbb{U}}_{t-1}\left(  \left(  l\right)
_{k_{1}\left(  l,0\right)  }^{k_{1}\left(  l,1\right)  },\left(  t-2\right)
_{k_{1}\left(  t-2,0\right)  }^{k_{1}\left(  t-2,0\right)  },1\leq l\leq
t-3\right)  \right)
\end{array}
\right) \\
-\left(
\begin{array}
[c]{c}%
B_{t}^{pg}\left(  \widehat{\mathbb{U}}_{t-1}\left(  \left(  1\right)
_{k_{1}\left(  0,0\right)  }^{k_{1}\left(  0,1\right)  },\left(  l+1\right)
_{k_{1}\left(  l,0\right)  }^{k_{1}\left(  l,1\right)  },1\leq l\leq
t-3\right)  \right) \\
-B_{t}\left(  \widehat{\mathbb{U}}_{t}\left(  \left(  1\right)  _{k_{1}\left(
0,0\right)  }^{k_{1}\left(  0,1\right)  },\left(  l+1\right)  _{k_{1}\left(
l,0\right)  }^{k_{1}\left(  l,1\right)  },\left(  t-1\right)  _{k_{1}\left(
t-2,0\right)  }^{k_{1}\left(  t-2,0\right)  },1\leq l\leq t-3\right)  \right)
\end{array}
\right)
\end{array}
\right\}
\end{gather*}

then%
\[
\left\vert \chi\left(  t,k_{1};\theta\right)  \right\vert =O_{p}\left(
\left(  \frac{\log n}{n}\right)  ^{-\frac{\beta_{g}}{d+2\beta_{g}}}\left(
k_{1}\left(  0,1\right)  \right)  ^{-\frac{\beta_{b}}{d}}\left(  k_{1}\left(
t-2,1\right)  \right)  ^{-\frac{\beta p}{d}}\right)
\]

\end{lemma}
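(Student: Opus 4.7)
\textbf{Proof proposal for Lemma \ref{cp}.}

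The plan is to rewrite each of the four bias-like quantities that make up $\chi(t,k_{1};\theta)$ in terms of projection operators on the basis span $\mathrm{lin}\{\overline{Z}_{k_{1}(l,0)}^{k_{1}(l,1)}\}$ and then show that, after taking the two differences, the four contributions telescope into a single expression of the schematic form
\[
\widehat{E}\Bigl[\Pi^{\perp}\!\bigl[\delta b\,\widehat{Q}\,\big|\,\widehat{Q}\overline{Z}_{k_{1}(0,1)}\bigr]\,\cdot\,R(\delta g)\,\cdot\,\Pi^{\perp}\!\bigl[\delta p\,\widehat{Q}\,\big|\,\widehat{Q}\overline{Z}_{k_{1}(t-2,1)}\bigr]\Bigr],
\]
where $R(\delta g)$ is a product of $t-3$ operators of the form $\widehat{E}[\delta g\,\widehat{Q}^{2}\overline{Z}\,\overline{Z}^{T}]$ acting by matrix multiplication. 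The point is that the combination $B_{t}(\widehat{\mathbb{U}}_{t-2})-B_{t}^{bg}(\widehat{\mathbb{U}}_{t-1})$ isolates the $\Pi^{\perp}$ part of $\delta b$ with respect to $\overline{Z}_{k_{1}(0,1)}$, while the combination $B_{t}^{pg}(\widehat{\mathbb{U}}_{t-1})-B_{t}(\widehat{\mathbb{U}}_{t})$ plays the analogous role for $\delta p$ with respect to $\overline{Z}_{k_{1}(t-2,1)}$. Taking their difference kills the two crossed terms and leaves only the doubly-orthogonal residual above. This is the same algebraic mechanism that drove the proof of Theorem \ref{TBformula}, applied twice in sequence.

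Once this representation is in hand, I would bound $\chi$ by Cauchy--Schwarz, factoring out the sup-norm of the interior operator $R(\delta g)$ and then bounding each $\Pi^{\perp}$ factor via the $L_{2}$ approximation rate for H\"older balls. Under assumption $Aiv)$, $\delta b\in H(\beta_{b},C^{\prime})$ and $\delta p\in H(\beta_{p},C^{\prime})$ up to constants, and $\overline{\varphi}_{k}$ achieves optimal approximation, so
\[
\bigl\|\Pi^{\perp}[\delta b\,\widehat{Q}\,|\,\widehat{Q}\overline{Z}_{k_{1}(0,1)}]\bigr\|_{2}=O_{p}\bigl(k_{1}(0,1)^{-\beta_{b}/d}\bigr)
\]
and symmetrically for $\delta p$ with exponent $\beta_{p}/d$. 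The operator-norm contribution of $R(\delta g)$ is controlled by $\|\delta g\|_{\infty}^{t-3}$ multiplied by one further factor of $\delta g$ that survives the projection; using the rate-optimal estimator of $g$ guaranteed by $Aiv)$, one has $\|\delta g\|_{\infty}=O_{p}((\log n/n)^{\beta_{g}/(d+2\beta_{g})})$, which supplies the $(\log n/n)$ factor in the stated bound.

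The main obstacle I anticipate is the bookkeeping in the first step: the four $\widehat{\mathbb{U}}$'s live in different dimensions (orders $t-2$, $t-1$, $t-1$, $t$), and the extra index marked $(t-2)_{k_{1}(t-2,0)}^{k_{1}(t-2,0)}$ (respectively at position $1$) must be interpreted as a singleton that triggers a projection, not as a vanishing range. The combinatorics of matching up which factor of $\delta g$ in the expansion of each $E_{\theta}[\dot B\dot P H_{1}\overline{Z}\,\overline{Z}^{T}-I]$ term cancels against which factor in the neighboring $\widehat{\mathbb{U}}$ of different order is the only genuinely delicate step; once the projection-residual representation is established, the rest is an immediate Cauchy--Schwarz together with the H\"older approximation estimate already used repeatedly in Theorems \ref{TBrate} and \ref{EBrate}.
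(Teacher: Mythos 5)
Your proposal follows essentially the same route as the paper's own proof: the four bias terms are combined so that the differences replace $\delta b$ and $\delta p$ by their projection residuals $\widehat{\Pi}^{\bot}\left(\widehat{Q}^{-1}\delta b\,\middle|\,\widehat{Q}\overline{Z}_{k_{1}\left(0,0\right)}^{k_{1}\left(0,1\right)}\right)$ and $\widehat{\Pi}^{\bot}\left(\widehat{Q}^{-1}\delta p\,\middle|\,\widehat{Q}\overline{Z}_{k_{1}\left(t-2,0\right)}^{k_{1}\left(t-2,1\right)}\right)$, with the $t-3$ intermediate factors acting as the operators $R_{u}\left(H\right)=\widehat{\Pi}\left(\delta g\,H\,\middle|\,\widehat{Q}\overline{Z}_{k_{1}\left(u,0\right)}^{k_{1}\left(u,1\right)}\right)$, and the bound then follows from Cauchy--Schwarz, the unit operator norm of the projections, $\left\Vert \delta g\right\Vert _{\infty}$, and the H\"{o}lder approximation rates of Theorem \ref{TBrate}. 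This matches the paper's argument in both the algebraic representation and the estimation step, so the proposal is correct.
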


This lemma explains how to use higher order U-statistics to estimate the $t$th
order contribution of $E_{\theta}\left(  \widehat{\mathbb{U}}_{t-2}\left(
\left(  l\right)  _{k_{1}\left(  l,0\right)  }^{k_{1}\left(  l,1\right)
},1\leq l\leq t-3\right)  \right)  $ with a residual bias not exceeding
$\left(  \frac{\log n}{n}\right)  ^{-\frac{\beta_{g}}{d+2\beta_{g}}}\left(
k_{1}\left(  0,1\right)  \right)  ^{-\frac{\beta_{b}}{d}}\left(  k_{1}\left(
t-2,1\right)  \right)  ^{-\frac{\beta p}{d}}.$ Our estimator uses this idea to
reduce fourth and higher order estimation bias to the optimal rate.

\begin{proof}
(Lemma \ref{cp})%
\begin{align*}
&  \chi\left(  t,k_{1};\theta\right) \\
&  =\left\{
\begin{array}
[c]{c}%
\widehat{E}\left(  \delta b\delta g\overline{Z}_{k_{1}\left(  1,0\right)
}^{k_{1}\left(  1,1\right)  T}\right)
%TCIMACRO{\tprod \limits_{u=2}^{t-3}}%
%BeginExpansion
{\textstyle\prod\limits_{u=2}^{t-3}}
%EndExpansion
\widehat{E}\left(  \delta g\widehat{Q}^{2}\overline{Z}_{k_{1}\left(
u-1,0\right)  }^{k_{1}\left(  u-1,1\right)  }\overline{Z}_{k_{1}\left(
u,0\right)  }^{k_{1}\left(  u,1\right)  T}\right)  \times\\
\widehat{E}\left(  \delta g\widehat{Q}\overline{Z}_{k_{1}\left(  t-3,0\right)
}^{k_{1}\left(  t-3,1\right)  }\left(  \widehat{Q}^{-1}\delta p-\widehat{\Pi
}\left(  \widehat{Q}^{-1}\delta p|\widehat{Q}\overline{Z}_{k_{1}\left(
t-2,0\right)  }^{k_{1}\left(  t-2,1\right)  }\right)  \right)  \right)
\end{array}
\right\} \\
&  -\left\{
\begin{array}
[c]{c}%
\widehat{E}\left(  \widehat{Q}\delta g\overline{Z}_{k_{1}\left(  1,0\right)
}^{k_{1}\left(  1,1\right)  T}\widehat{\Pi}\left(  \widehat{Q}^{-1}\delta
b|\left(  \widehat{Q}\overline{Z}_{k_{1}\left(  0,0\right)  }^{k_{1}\left(
0,1\right)  }\right)  \right)  \right)
%TCIMACRO{\tprod \limits_{u=2}^{t-3}}%
%BeginExpansion
{\textstyle\prod\limits_{u=2}^{t-3}}
%EndExpansion
\widehat{E}\left(  \delta g\widehat{Q}^{2}\overline{Z}_{k_{1}\left(
u-1,0\right)  }^{k_{1}\left(  u-1,1\right)  }\overline{Z}_{k_{1}\left(
u,0\right)  }^{k_{1}\left(  u,1\right)  T}\right) \\
\times\widehat{E}\left(  \delta g\widehat{Q}\overline{Z}_{k_{1}\left(
t-3,0\right)  }^{k_{1}\left(  t-3,1\right)  }\left(  \widehat{Q}^{-1}\delta
p-\widehat{\Pi}\left(  \widehat{Q}^{-1}\delta p|\left(  \widehat{Q}%
\overline{Z}_{k_{1}\left(  t-2,0\right)  }^{k_{1}\left(  t-2,1\right)
}\right)  \right)  \right)  \right)
\end{array}
\right\} \\
&  =\left\{
\begin{array}
[c]{c}%
\widehat{E}\left(  \widehat{Q}\delta g\overline{Z}_{k_{1}\left(  1,0\right)
}^{k_{1}\left(  1,1\right)  T}\widehat{\Pi}^{\bot}\left(  \widehat{Q}%
^{-1}\delta b|\left(  \widehat{Q}\overline{Z}_{k_{1}\left(  0,0\right)
}^{k_{1}\left(  0,1\right)  }\right)  \right)  \right)
%TCIMACRO{\tprod \limits_{u=2}^{t-3}}%
%BeginExpansion
{\textstyle\prod\limits_{u=2}^{t-3}}
%EndExpansion
\widehat{E}\left(  \delta g\widehat{Q}^{2}\overline{Z}_{k_{1}\left(
u-1,0\right)  }^{k_{1}\left(  u-1,1\right)  }\overline{Z}_{k_{1}\left(
u,0\right)  }^{k_{1}\left(  u,1\right)  T}\right) \\
\times\widehat{E}\left(  \delta g\widehat{Q}\overline{Z}_{k_{1}\left(
t-3,0\right)  }^{k_{1}\left(  t-3,1\right)  }\widehat{\Pi}^{\bot}\left(
\widehat{Q}^{-1}\delta p|\left(  \widehat{Q}\overline{Z}_{k_{1}\left(
t-2,0\right)  }^{k_{1}\left(  t-2,1\right)  }\right)  \right)  \right)
\end{array}
\right\} \\
&  =\left\{  \widehat{E}\left(  \delta g\left[  \left(
%TCIMACRO{\tprod \limits_{u=1}^{t-3}}%
%BeginExpansion
{\textstyle\prod\limits_{u=1}^{t-3}}
%EndExpansion
R_{u}\right)  \left(  \widehat{\Pi}^{\bot}\left(  \widehat{Q}^{-1}\delta
b|\left(  \widehat{Q}\overline{Z}_{k_{1}\left(  0,0\right)  }^{k_{1}\left(
0,1\right)  }\right)  \right)  \right)  \right]  \widehat{\Pi}^{\bot}\left(
\widehat{Q}^{-1}\delta p|\left(  \widehat{Q}\overline{Z}_{k_{1}\left(
t-2,0\right)  }^{k_{1}\left(  t-2,1\right)  }\right)  \right)  \right)
\right\}
\end{align*}
where$R_{u}\left(  H\right)  =\widehat{E}\left(  \delta g\widehat{Q}%
\overline{Z}_{k_{1}\left(  u,0\right)  }^{k_{1}\left(  u,1\right)  T}H\right)
\widehat{Q}\overline{Z}_{k_{1}\left(  u,0\right)  }^{k_{1}\left(  u,1\right)
}=\widehat{\Pi}\left(  \delta gH|\left(  \widehat{Q}\overline{Z}_{k_{1}\left(
u,0\right)  }^{k_{1}\left(  u,1\right)  }\right)  \right)  ,$ $\left(
%TCIMACRO{\tprod \limits_{u=1}^{s}}%
%BeginExpansion
{\textstyle\prod\limits_{u=1}^{s}}
%EndExpansion
R_{u}\right)  \left(  H\right)  =R_{s}\left[  \left(
%TCIMACRO{\tprod \limits_{u=1}^{s-1}}%
%BeginExpansion
{\textstyle\prod\limits_{u=1}^{s-1}}
%EndExpansion
R_{u}\right)  \left(  H\right)  \right]  ,$ $\widehat{\Pi}\left(  \cdot
|\cdot\right)  =\Pi_{\widehat{\theta}}\left(  \cdot|\cdot\right)  , $ and
$\widehat{\Pi}^{\bot}\left(  H|\Gamma\right)  =H-$ $\widehat{\Pi}\left(
H|\Gamma\right)  .$

Since projection operator has operator norm of 1, we have
\begin{align*}
&  \left\vert \chi\left(  t,k_{1};\theta\right)  \right\vert \\
&  \leq\left[
\begin{array}
[c]{c}%
\left\vert \left\vert \delta g\right\vert \right\vert _{\infty}^{t-2}\left\{
\widehat{E}\left(  \widehat{\Pi}^{\bot}\left(  \widehat{Q}^{-1}\delta
b|\left(  \widehat{Q}\overline{Z}_{0}^{k_{1}\left(  0,1\right)  }\right)
\right)  \right)  ^{2}\right\}  ^{1/2}\\
\times\left\{  \widehat{E}\left(  \widehat{\Pi}^{\bot}\left(  \widehat{Q}%
^{-1}\delta p|\left(  \widehat{Q}\overline{Z}_{0}^{k_{1}\left(  t-2,1\right)
}\right)  \right)  \right)  ^{2}\right\}  ^{1/2}%
\end{array}
\right] \\
&  =O_{p}\left(  \left(  \frac{\log n}{n}\right)  ^{-\frac{\left(  t-2\right)
\beta_{g}}{d+2\beta_{g}}}\left(  k_{1}\left(  0,1\right)  \right)
^{-\frac{\beta_{b}}{d}}\left(  k_{1}\left(  t-2,1\right)  \right)
^{-\frac{\beta p}{d}}\right)
\end{align*}

The last equality holds from theorem (\ref{TBrate}).
\end{proof}

Before proving theorem \ref{beyond4th}, let's first introduce a lemma which
will be used in the proof multiple times.

\begin{lemma}
\label{ON2}$\forall$ {}$2\times2$ matrix $k\equiv\left(  k\left(  l,s\right)
\right)  _{2\times2}$\smallskip, $l\in\left\{  1,2\right\}  ,$ $s\in\left\{
0,1\right\}  ,$ and $0<k\left(  l,0\right)  <k\left(  l,1\right)  $,%
\[
\widehat{E}\left[  \left(  \widehat{Q}\overline{Z}_{k\left(  l,0\right)
}^{k\left(  l,1\right)  }\right)  \left(  \widehat{Q}\overline{Z}_{k\left(
l,0\right)  }^{k\left(  l,1\right)  T}\right)  \right]  =I,\text{ }l=1,2
\]

Moreover%
\begin{align*}
&  \widehat{E}\left(  \delta b\overline{Z}_{k\left(  1,0\right)  }^{k\left(
1,1\right)  T}\right)  \widehat{E}\left(  \delta g\widehat{Q}^{2}\overline
{Z}_{k\left(  1,0\right)  }^{k\left(  1,1\right)  }\overline{Z}_{k\left(
2,0\right)  }^{k\left(  2,1\right)  T}\right)  \widehat{E}\left[  \delta
p\overline{Z}_{k\left(  2,0\right)  }^{k\left(  2,1\right)  }\right] \\
&  =\widehat{E}\left(  \delta g\widehat{\Pi}\left(  \widehat{Q}^{-1}\delta
b|\left(  \widehat{Q}\overline{Z}_{k\left(  1,0\right)  }^{k\left(
1,1\right)  }\right)  \right)  \widehat{\Pi}\left(  \widehat{Q}^{-1}\delta
p|\left(  \widehat{Q}\overline{Z}_{k\left(  2,0\right)  }^{k\left(
2,1\right)  }\right)  \right)  \right)
\end{align*}

and $\widehat{\Pi}\left(  H|\left(  \widehat{Q}\overline{Z}_{k\left(
1,0\right)  }^{k\left(  1,1\right)  }\right)  \right)  =\widehat{\Pi}\left(
H|\left(  \widehat{Q}\overline{Z}_{k\left(  1,0\right)  }^{c_{k}}\right)
\right)  +\widehat{\Pi}\left(  H|\left(  \widehat{Q}\overline{Z}_{c_{k}%
}^{k\left(  1,1\right)  }\right)  \right)  $

if $k\left(  1,0\right)  <c_{k}<k\left(  1,1\right)  .$
\end{lemma}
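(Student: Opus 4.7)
The plan is to verify each of the three claims by straightforward algebra, leveraging the normalization built into Eq.~(\ref{ON}). Since the final statement is essentially a collection of projection identities in the Hilbert space $L_{2}(\widehat{F})$ under the reweighting by $\widehat{Q}^{2}$, there is no genuine new content beyond unfolding definitions; the task is to record the identities so that they can be cited cleanly in the proofs of Lemma \ref{cp} and Theorem \ref{beyond4th}.

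First I would establish the orthonormality claim $\widehat{E}[(\widehat{Q}\overline{Z}_{k(l,0)}^{k(l,1)})(\widehat{Q}\overline{Z}_{k(l,0)}^{k(l,1)T})]=I$. By the definition of $\overline{Z}_{k}$ in Eq.~(\ref{ON}), $\widehat{Q}\overline{Z}_{k}=\widehat{E}[\widehat{Q}^{2}\overline{\varphi}_{k}\overline{\varphi}_{k}^{T}]^{-1/2}\widehat{Q}\overline{\varphi}_{k}$, so $\widehat{E}[\widehat{Q}^{2}\overline{Z}_{k}\overline{Z}_{k}^{T}]=I_{k\times k}$ by direct computation. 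The submatrix indexed by coordinates $k(l,0)+1,\ldots,k(l,1)$ is exactly the corresponding diagonal block of $I_{k\times k}$, hence the identity. (Here I am using that the basis $\overline{\varphi}_{k}$ inherits an orthonormal block structure from the wavelet multiresolution analysis, so truncating to a contiguous range of indices preserves orthonormality.)

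Next I would prove the product identity. Using Part 1 and the fact that for any vector-valued orthonormal basis $\widehat{Q}\overline{Z}$ in $L_{2}(\widehat{F})$ the projection admits the closed form $\widehat{\Pi}(h|\widehat{Q}\overline{Z})=\widehat{E}[h\,\widehat{Q}\overline{Z}^{T}]\,\widehat{Q}\overline{Z}$, I would compute
\[
\widehat{\Pi}(\widehat{Q}^{-1}\delta b|\widehat{Q}\overline{Z}_{k(1,0)}^{k(1,1)})=\widehat{E}[\delta b\,\overline{Z}_{k(1,0)}^{k(1,1)T}]\,\widehat{Q}\overline{Z}_{k(1,0)}^{k(1,1)},
\]
and analogously for $\widehat{Q}^{-1}\delta p$. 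Substituting both expressions into the right-hand side of the claimed identity and pulling the scalar matrices $\widehat{E}[\delta b\,\overline{Z}_{k(1,0)}^{k(1,1)T}]$ and $\widehat{E}[\delta p\,\overline{Z}_{k(2,0)}^{k(2,1)}]$ outside the outer $\widehat{E}$, the middle factor becomes precisely $\widehat{E}(\delta g\,\widehat{Q}^{2}\overline{Z}_{k(1,0)}^{k(1,1)}\overline{Z}_{k(2,0)}^{k(2,1)T})$, matching the left-hand side.

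Finally, for the projection decomposition $\widehat{\Pi}(H|\widehat{Q}\overline{Z}_{k(1,0)}^{k(1,1)})=\widehat{\Pi}(H|\widehat{Q}\overline{Z}_{k(1,0)}^{c_{k}})+\widehat{\Pi}(H|\widehat{Q}\overline{Z}_{c_{k}}^{k(1,1)})$ when $k(1,0)<c_{k}<k(1,1)$, I would observe that the index sets $\{k(1,0){+}1,\ldots,c_{k}\}$ and $\{c_{k}{+}1,\ldots,k(1,1)\}$ are disjoint, hence by Part 1 the corresponding blocks of $\widehat{Q}\overline{Z}$ are orthogonal in $L_{2}(\widehat{F})$. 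The claim is then the standard decomposition of a projection onto an orthogonal direct sum. No step is a substantive obstacle; the only minor care required is to confirm that the wavelet basis chosen in the remark before Theorem \ref{TBrate} indeed produces an orthonormal block decomposition under $\widehat{F}$ after the normalization of Eq.~(\ref{ON}), which is immediate from its tensor-product MRA structure.
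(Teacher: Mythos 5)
Your proposal is correct and follows essentially the same route as the paper's own proof: orthonormality of $\widehat{Q}\overline{Z}$ under $\widehat{E}$ from the normalization in Eq.~(\ref{ON}), the closed-form projection formula substituted into the right-hand side to recover the matrix product, and the orthogonal direct-sum decomposition of $lin(\widehat{Q}\overline{Z}_{k(1,0)}^{k(1,1)})$ for the last identity. The only superfluous element is your appeal to the wavelet MRA block structure in Part 1: once the full vector $\widehat{Q}\overline{Z}_{k}$ is orthonormal in $L_{2}(\widehat{F})$, any subvector of its components is automatically orthonormal, so no property of the underlying basis is needed.
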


\begin{proof}
The orthonormality of $\left\{  \widehat{Q}Z_{l}:l=1,2,...\right\}  $ under
$E_{\widehat{\theta}}$ comes directly from definition. Hence $\widehat{\Pi
}\left(  \widehat{Q}^{-1}\delta b|\left(  \widehat{Q}\overline{Z}_{k\left(
1,0\right)  }^{k\left(  1,1\right)  }\right)  \right)  =\widehat{E}\left(
\widehat{Q}^{-1}\delta b\widehat{Q}\overline{Z}_{k\left(  1,0\right)
}^{k\left(  1,1\right)  T}\right)  \widehat{Q}\overline{Z}_{k\left(
1,0\right)  }^{k\left(  1,1\right)  }=\widehat{E}\left(  \delta b\overline
{Z}_{k\left(  1,0\right)  }^{k\left(  1,1\right)  T}\right)  \widehat{Q}%
\overline{Z}_{k\left(  1,0\right)  }^{k\left(  1,1\right)  }$. {}And
$lin\left(  \widehat{Q}\overline{Z}_{k\left(  1,0\right)  }^{k\left(
1,1\right)  }\right)  ,$ the linear space generated by $\widehat{Q}%
\overline{Z}_{k\left(  1,0\right)  }^{k\left(  1,1\right)  },$ ${}$equals
$\left(  \widehat{Q}\overline{Z}_{k\left(  1,0\right)  }^{c_{k}}\right)
\oplus\left(  \widehat{Q}\overline{Z}_{c_{k}}^{k\left(  1,1\right)  }\right)
.$
\end{proof}

\begin{proof}
(Theorem \ref{beyond4th}\smallskip) {}Let $\mathbb{H}_{v}^{\ast}%
=\mathbb{G}\left(  s,v\right)  =\mathbb{Q}_{v}\equiv0$ for $v>m\left(
\beta_{g},\beta_{b},\beta_{p}\right)  ,$ and $\mathbb{G}\left(  s,2\right)
=\mathbb{Q}_{2}\equiv0$. For $\forall$ $3<t,$ {}define
\begin{align*}
B_{t}^{\left(  H^{\ast}\right)  }  &  =\left(  -1\right)  ^{t-1}\left(
B_{t}\left(  \mathbb{H}_{t-2}^{\ast}\right)  -B_{t}^{bg}\left(  \mathbb{H}%
_{t-1}^{\ast}\right)  -B_{t}^{pg}\left(  \mathbb{H}_{t-1}^{\ast}\right)
+B_{t}\left(  \mathbb{H}_{t}^{\ast}\right)  \right) \\
B_{t}^{\left(  G\right)  }  &  =%
%TCIMACRO{\dsum \limits_{s=1}^{J}}%
%BeginExpansion
{\displaystyle\sum\limits_{s=1}^{J}}
%EndExpansion
B_{t}^{\left(  G,s\right)  }\\
B_{t}^{\left(  G,s\right)  }  &  =\left(  -1\right)  ^{t-1}\left(
\begin{array}
[c]{c}%
B_{t}\left(  \mathbb{G}\left(  s,t-2\right)  \right)  -B_{t}^{bg}\left(
\mathbb{G}\left(  s,t-1\right)  \right) \\
-B_{t}^{pg}\left(  \mathbb{G}\left(  s,t-1\right)  \right)  +B_{t}\left(
\mathbb{G}\left(  s,t\right)  \right)
\end{array}
\right) \\
B_{t}^{\left(  Q\right)  }  &  =\left(  -1\right)  ^{t-1}\left(  B_{t}\left(
\mathbb{Q}_{t-2}\right)  -B_{t}^{bg}\left(  \mathbb{Q}_{t-1}\right)
-B_{t}^{pg}\left(  \mathbb{Q}_{t-1}\right)  +B_{t}\left(  \mathbb{Q}%
_{t}\right)  \right)
\end{align*}
then
\begin{align*}
&  E\left(  \widehat{\psi}_{\mathcal{K}_{J}}^{eff}\left(  \beta_{g},\beta
_{b},\beta_{p}\right)  \right)  -\psi\left(  \theta\right) \\
&  =\left[  E_{\theta}\left(  H_{1}\widehat{B}\widehat{P}+H_{2}\widehat{B}%
+H_{3}\widehat{P}+H_{4}\right)  -\psi\left(  \theta\right)  \right]
_{-\left(  L4.1\right)  }\\
&  -B_{2}\left(  \mathbb{H}_{2}^{\ast}\right)  +B_{3}\left(  \mathbb{H}%
_{3}^{\ast}-\mathbb{H}_{2}^{\ast\ }\right)  +%
%TCIMACRO{\dsum \limits_{s=1}^{J}}%
%BeginExpansion
{\displaystyle\sum\limits_{s=1}^{J}}
%EndExpansion
B_{3}\left(  \mathbb{G}\left(  s,3\right)  \right)  +B_{3}\left(
\mathbb{Q}_{3}\right) \\
&  +\sum_{t=4}^{m\left(  \beta_{g},\beta_{b},\beta_{p}\right)  +2}\left(
B_{t}^{\left(  H^{\ast}\right)  }+B_{t}^{\left(  G\right)  }+B_{t}^{\left(
Q\right)  }\right)
\end{align*}%
\begin{gather*}
\left(  L4.1\right)  =E_{\theta}\left(  H_{1}\left(  B-\widehat{B}\right)
\left(  P-\widehat{P}\right)  \right) \\
=\widehat{E}\left[  \delta g\widehat{E}\left(  H_{1}|X\right)  \left(
B-\widehat{B}\right)  \left(  P-\widehat{P}\right)  \right] \\
+\widehat{E}\left(  \widehat{E}\left(  H_{1}|X\right)  \left(  B-\widehat{B}%
\right)  \left(  P-\widehat{P}\right)  \right) \\
=\widehat{E}\left(  \delta g\widehat{Q}^{-2}\delta b\delta p\right)
_{-\left(  L4.2\right)  }+\widehat{E}\left(  \widehat{Q}^{-2}\delta b\delta
p\right)  _{-\left(  L4.3\right)  }%
\end{gather*}

\begin{align*}
&  \left(  L4.3\right)  -B_{2}\left(  H_{2}^{\ast}\right) \\
&  =\widehat{E}\left(  \widehat{Q}^{-2}\delta b\delta p\right)  -\widehat{E}%
\left(  \delta b\overline{Z}_{0}^{k_{-1}T}\right)  \widehat{E}\left(  \delta
p\overline{Z}_{0}^{k_{-1}}\right) \\
&  =\widehat{E}\left(  \widehat{Q}^{-2}\delta b\delta p\right)  -\widehat{E}%
\left[  \widehat{\Pi}\left(  \widehat{Q}^{-1}\delta b|\widehat{Q}\overline
{Z}_{0}^{k_{-1}}\right)  \widehat{\Pi}\left(  \widehat{Q}^{-1}\delta
p|\widehat{Q}\overline{Z}_{0}^{k_{-1}}\right)  \right] \\
&  =\widehat{E}\left[  \widehat{\Pi}^{\bot}\left(  \widehat{Q}^{-1}\delta
b|\left(  \widehat{Q}\overline{Z}_{0}^{k_{-1}}\right)  \right)  \widehat{\Pi
}^{\bot}\left(  \widehat{Q}^{-1}\delta p|\left(  \widehat{Q}\overline{Z}%
_{0}^{k_{-1}}\right)  \right)  \right]
\end{align*}
Then
\[
\left\vert \left(  L4.3\right)  -B_{2}\left(  H_{2}^{\ast}\right)  \right\vert
=O_{p}\left(  k_{-1}^{2\beta/d}\right)
\]%
\begin{align*}
&  \left(  L4.2\right)  -B_{3}\left(  \mathbb{H}_{2}^{\ast}\right)
+B_{3}\left(  \mathbb{H}_{2}^{\ast}\right)  +%
%TCIMACRO{\dsum \limits_{s=1}^{J}}%
%BeginExpansion
{\displaystyle\sum\limits_{s=1}^{J}}
%EndExpansion
B_{3}\left(  \mathbb{G}\left(  s,3\right)  \right)  +B_{3}\left(
\mathbb{Q}_{3}\right) \\
&  =\left\{
\begin{array}
[c]{c}%
\widehat{E}\left(  \widehat{Q}^{-2}\delta b\delta p\delta g\right)
-\widehat{E}\left(  \delta b\delta g\overline{Z}_{0}^{k_{-1}T}\right)
\widehat{E}\left(  \delta p\overline{Z}_{0}^{k_{-1}}\right) \\
-\widehat{E}\left(  \delta b\overline{Z}_{0}^{k_{-1}T}\right)  \widehat{E}%
\left(  \delta p\delta g\overline{Z}_{0}^{k_{-1}}\right)
\end{array}
\right\}  _{-\left(  L5.1\right)  }\\
&  +\widehat{E}\left(  \delta b\overline{Z}_{0}^{k_{0}T}\right)
\widehat{E}\left(  \delta g\widehat{Q}^{2}\overline{Z}_{0}^{k_{0}}\overline
{Z}_{0}^{k_{0}T}\right)  \widehat{E}\left[  \delta p\overline{Z}_{0}^{k_{0}%
}\right] \\
&  +\widehat{E}\left(  \delta b\overline{Z}_{k_{0}}^{k_{-1}T}\right)
\widehat{E}\left(  \delta g\widehat{Q}^{2}\overline{Z}_{k_{0}}^{k_{-1}%
}\overline{Z}_{0}^{k_{0}T}\right)  \widehat{E}\left[  \delta p\overline{Z}%
_{0}^{k_{0}}\right] \\
&  +\widehat{E}\left(  \delta b\overline{Z}_{0}^{k_{0}T}\right)
\widehat{E}\left(  \delta g\widehat{Q}^{2}\overline{Z}_{0}^{k_{0}}\overline
{Z}_{k_{0}}^{k_{-1}T}\right)  \widehat{E}\left[  \delta p\overline{Z}_{k_{0}%
}^{k_{-1}}\right] \\
&  +\sum_{s=1}^{J}\left\{
\begin{array}
[c]{c}%
\widehat{E}\left(  \delta b\overline{Z}_{k_{2s-2}}^{k_{2s-1}T}\right)
\widehat{E}\left(  \delta g\widehat{Q}^{2}\overline{Z}_{k_{2s-2}}^{k_{2s-1}%
}\overline{Z}_{k_{2s-2}}^{k_{2s}T}\right)  \widehat{E}\left[  \delta
p\overline{Z}_{k_{2s-2}}^{k_{2s}}\right] \\
+\widehat{E}\left(  \delta b\overline{Z}_{k_{2s-2}}^{k_{2s}T}\right)
\widehat{E}\left(  \delta g\widehat{Q}^{2}\overline{Z}_{k_{2s-2}}^{k_{2s}%
}\overline{Z}_{k_{2s}}^{k_{2s-1}T}\right)  \widehat{E}\left[  \delta
p\overline{Z}_{k_{2s}}^{k_{2s-1}}\right]
\end{array}
\right\} \\
&  +\widehat{E}\left(  \delta b\overline{Z}_{k_{2J}}^{k_{2J+1}T}\right)
\widehat{E}\left(  \delta g\widehat{Q}^{2}\overline{Z}_{k_{2J}}^{k_{2J+1}%
}\overline{Z}_{k_{2J}}^{k_{2J+1}T}\right)  \widehat{E}\left[  \delta
p\overline{Z}_{k_{2J}}^{k_{2J+1}}\right]
\end{align*}
because
\begin{align*}
&  \left(  L5.1\right) \\
&  =\widehat{E}\left(
\begin{array}
[c]{c}%
\delta g\left\{  \widehat{\Pi}\left(  \widehat{Q}^{-1}\delta b|\left(
\widehat{Q}\overline{Z}_{0}^{k_{-1}}\right)  \right)  +\widehat{\Pi}^{\bot
}\left(  \widehat{Q}^{-1}\delta b|\left(  \widehat{Q}\overline{Z}_{0}^{k_{-1}%
}\right)  \right)  \right\} \\
\times\left\{  \widehat{\Pi}\left(  \widehat{Q}^{-1}\delta p|\left(
\widehat{Q}\overline{Z}_{0}^{k_{-1}}\right)  \right)  +\widehat{\Pi}^{\bot
}\left(  \widehat{Q}^{-1}\delta p|\left(  \widehat{Q}\overline{Z}_{0}^{k_{-1}%
}\right)  \right)  \right\}
\end{array}
\right) \\
&  -\widehat{E}\left(  \delta g\delta p\widehat{Q}^{-1}\widehat{\Pi}\left(
\widehat{Q}^{-1}\delta b|\left(  \widehat{Q}\overline{Z}_{0}^{k_{-1}}\right)
\right)  \right) \\
&  -\widehat{E}\left(  \delta g\delta b\widehat{Q}^{-1}\widehat{\Pi}\left(
\widehat{Q}^{-1}\delta p|\left(  \widehat{Q}\overline{Z}_{0}^{k_{-1}}\right)
\right)  \right) \\
&  =\widehat{E}\left(  \delta g\widehat{\Pi}^{\bot}\left(  \widehat{Q}%
^{-1}\delta b|\left(  \widehat{Q}\overline{Z}_{0}^{k_{-1}}\right)  \right)
\widehat{\Pi}^{\bot}\left(  \widehat{Q}^{-1}\delta p|\left(  \widehat{Q}%
\overline{Z}_{0}^{k_{-1}}\right)  \right)  \right) \\
&  -\widehat{E}\left(  \delta g\widehat{\Pi}\left(  \widehat{Q}^{-1}\delta
b|\left(  \widehat{Q}\overline{Z}_{0}^{k_{-1}}\right)  \right)  \widehat{\Pi
}\left(  \widehat{Q}^{-1}\delta p|\left(  \widehat{Q}\overline{Z}_{0}^{k_{-1}%
}\right)  \right)  \right)
\end{align*}
and by lemma \ref{ON2}, we now have
\begin{align*}
&  \left(  L4.2\right)  -B_{3}\left(  \mathbb{H}_{2}^{\ast}\right)
+B_{3}\left(  \mathbb{H}_{2}^{\ast}\right)  +%
%TCIMACRO{\dsum \limits_{s=1}^{J}}%
%BeginExpansion
{\displaystyle\sum\limits_{s=1}^{J}}
%EndExpansion
B_{3}\left(  \mathbb{G}\left(  s,3\right)  \right)  +B_{3}\left(
\mathbb{Q}_{3}\right) \\
&  =\widehat{E}\left(  \delta g\widehat{\Pi}^{\bot}\left(  \widehat{Q}%
^{-1}\delta b|\left(  \widehat{Q}\overline{Z}_{0}^{k_{-1}}\right)  \right)
\widehat{\Pi}^{\bot}\left(  \widehat{Q}^{-1}\delta p|\left(  \widehat{Q}%
\overline{Z}_{0}^{k_{-1}}\right)  \right)  \right)  _{-\left(  L5.2\right)
}\\
&  -%
%TCIMACRO{\dsum \limits_{s=0}^{J}}%
%BeginExpansion
{\displaystyle\sum\limits_{s=0}^{J}}
%EndExpansion%
\begin{array}
[c]{c}%
\widehat{E}\left[  \delta g\widehat{\Pi}\left(  \widehat{Q}^{-1}\delta
b|\left(  \widehat{Q}\overline{Z}_{k_{2s}}^{k_{2s-1}}\right)  \right)
\widehat{\Pi}\left(  \widehat{Q}^{-1}\delta p|\left(  \widehat{Q}\overline
{Z}_{k_{2s+1}}^{k_{2s-1}}\right)  \right)  \right]  _{-\left(  L5.3\right)
}\\
+\widehat{E}\left[  \delta g\widehat{\Pi}\left(  \widehat{Q}^{-1}\delta
b|\left(  \widehat{Q}\overline{Z}_{k_{2s+1}}^{k_{2s-1}}\right)  \right)
\widehat{\Pi}\left(  \widehat{Q}^{-1}\delta p|\left(  \widehat{Q}\overline
{Z}_{k_{2s}}^{k_{2s+1}}\right)  \right)  \right]  _{-\left(  L5.4\right)  }%
\end{array}
\end{align*}
Therefore it is easy to show from theorem \ref{TBrate} that
\begin{align*}
\left\vert \left(  L5.2\right)  \right\vert  &  =O_{p}\left(  \left(
\frac{\log n}{n}\right)  ^{-\frac{\beta_{g}}{d+2\beta_{g}}}k_{-1}^{2\beta
/d}\right) \\
\left\vert \left(  L5.3\right)  \right\vert  &  =O_{p}\left(  \left(
\frac{\log n}{n}\right)  ^{-\frac{\beta_{g}}{d+2\beta_{g}}}k_{2s}^{-\beta
_{b}/d}k_{2s+1}^{-\beta_{p}/d}\right) \\
\left\vert \left(  L5.4\right)  \right\vert  &  =O_{p}\left(  \left(
\frac{\log n}{n}\right)  ^{-\frac{\beta_{g}}{d+2\beta_{g}}}k_{2s+1}%
^{-\beta_{b}/d}k_{2s}^{-\beta_{p}/d}\right)
\end{align*}

$\forall$ {}$3<t\leq m\left(  \beta_{g},\beta_{b},\beta_{p}\right)  ,$%
\begin{align*}
&  \left(  -1\right)  ^{t-1}B_{t}^{\left(  H^{\ast}\right)  }\\
&  =\left(  B_{t}\left(  \mathbb{H}_{t-2}^{\ast}\right)  -B_{t}^{bg}\left(
\mathbb{H}_{t-1}^{\ast}\right)  -B_{t}^{pg}\left(  \mathbb{H}_{t-1}^{\ast
}\right)  +B_{t}\left(  \mathbb{H}_{t}^{\ast}\right)  \right) \\
&  =\left\{
\begin{array}
[c]{c}%
B_{t}\left(  \widehat{\mathbb{U}}_{t-2}\left(  _{0}^{k_{0}}\right)  \right)
-B_{t}^{bg}\left(  \widehat{\mathbb{U}}_{t-1}\left(  _{0}^{k_{0}}\right)
\right) \\
-\left[  B_{t}^{pg}\left(  \widehat{\mathbb{U}}_{t-1}\left(  _{0}^{k_{0}%
}\right)  \right)  -B_{t}\left(  \widehat{\mathbb{U}}_{t}\left(  _{0}^{k_{0}%
}\right)  \right)  \right]
\end{array}
\right\}  _{-\left(  L6.1\right)  }\\
&  +%
%TCIMACRO{\dsum \limits_{u=1}^{t-3}}%
%BeginExpansion
{\displaystyle\sum\limits_{u=1}^{t-3}}
%EndExpansion
\left\{
\begin{array}
[c]{c}%
\left[  B_{t}\left(  \widehat{\mathbb{U}}_{t-2}^{\left(  u\right)  }\left(
_{k_{0}}^{k_{-1}},_{0}^{k_{0}}\right)  \right)  -B_{t}^{bg}\left(
\widehat{\mathbb{U}}_{t-1}^{\left(  u\right)  }\left(  _{k_{0}}^{k_{-1}}%
,_{0}^{k_{0}}\right)  \right)  \right] \\
-\left[  B_{t}^{pg}\left(  \widehat{\mathbb{U}}_{t-1}^{\left(  u+1\right)
}\left(  _{k_{0}}^{k_{-1}},_{0}^{k_{0}}\right)  \right)  -B_{t}\left(
\widehat{\mathbb{U}}_{t}^{\left(  u+1\right)  }\left(  _{k_{0}}^{k_{-1}}%
,_{0}^{k_{0}}\right)  \right)  \right]
\end{array}
\right\}  _{-\left(  L6.2.u\right)  }\\
&  -B_{t}^{pg}\left(  \widehat{\mathbb{U}}_{t-1}^{\left(  1\right)  }\left(
_{k_{0}}^{k_{-1}},_{0}^{k_{0}}\right)  \right)  -B_{t}\left(
\widehat{\mathbb{U}}_{t}^{\left(  1\right)  }\left(  _{k_{0}}^{k_{-1}}%
,_{0}^{k_{0}}\right)  \right)  _{-\left(  L6.3\right)  }\\
&  -B_{t}^{bg}\left(  \widehat{\mathbb{U}}_{t-1}^{\left(  t-2\right)  }\left(
_{k_{0}}^{k_{-1}},_{0}^{k_{0}}\right)  \right)  -B_{t}\left(
\widehat{\mathbb{U}}_{t}^{\left(  t-1\right)  }\left(  _{k_{0}}^{k_{-1}}%
,_{0}^{k_{0}}\right)  \right)  _{-\left(  L6.4\right)  }%
\end{align*}
It can be shown that $\left(  L6.1\right)  =\chi\left(  t,k_{1};\theta\right)
$ with $k_{1}\left(  l,0\right)  \equiv0,$ $k_{1}\left(  l,1\right)  \equiv
k_{0},$ $0\leq l\leq t-2;$ $\left(  L6.2.u\right)  =\chi\left(  t,k_{1}%
;\theta\right)  $ with $k_{1}\left(  l,0\right)  =0,$ $k_{1}\left(
l,1\right)  =k_{0}$ for $l\neq u$ and $k_{1}\left(  u,0\right)  =k_{0},$
$k_{1}\left(  u,1\right)  =k_{-1}.$ And
\begin{align*}
&  \left\vert \left(  L6.3\right)  \right\vert \\
&  =\left\vert \left\{
\begin{array}
[c]{c}%
\widehat{E}\left(  \delta g\widehat{Q}\widehat{\Pi}\left(  \widehat{Q}%
^{-1}\delta b|\left(  \widehat{Q}\overline{Z}_{k_{0}}^{k_{-1}}\right)
\right)  \overline{Z}_{0}^{k_{0}T}\right)  \left(  \widehat{E}\left(  \delta
g\widehat{Q}^{2}\overline{Z}_{0}^{k_{0}}\overline{Z}_{0}^{k_{0}T}\right)
\right)  ^{t-4}\\
\times E\left(  \delta g\widehat{Q}\overline{Z}_{0}^{k_{0}}\widehat{\Pi}%
^{\bot}\left(  \widehat{Q}^{-1}\delta b|\left(  \widehat{Q}\overline{Z}%
_{0}^{k_{0}}\right)  \right)  \right)
\end{array}
\right\}  \right\vert \\
&  =O_{p}\left(  \left(  \frac{\log n}{n}\right)  ^{-\frac{\left(  t-2\right)
\beta_{g}}{d+2\beta_{g}}}k_{0}^{-2\beta/d}\right)
\end{align*}
Similarly
\[
\left\vert \left(  L6.4\right)  \right\vert =O_{p}\left(  \left(  \frac{\log
n}{n}\right)  ^{-\frac{\left(  t-2\right)  \beta_{g}}{d+2\beta_{g}}}%
k_{0}^{-2\beta/d}\right)
\]
From lemma \ref{cp}, we now have$\left\vert \left(  L6.1\right)  \right\vert
=O_{p}\left(  \left(  \frac{\log n}{n}\right)  ^{-\frac{\left(  t-2\right)
\beta_{g}}{d+2\beta_{g}}}k_{0}^{-2\beta/d}\right)  $ therefore
\[
\left\vert B_{t}^{\left(  H^{\ast}\right)  }\right\vert =O_{p}\left(  \left(
\frac{\log n}{n}\right)  ^{-\frac{\left(  t-2\right)  \beta_{g}}{d+2\beta_{g}%
}}k_{0}^{-2\beta/d}\right)
\]
$\forall$ $t\geq5,$ following the same argument as above, we know as well
\begin{align*}
\left\vert B_{t}^{\left(  G\right)  }\right\vert  &  =O_{p}\left(  \left(
\frac{\log n}{n}\right)  ^{-\frac{\left(  t-2\right)  \beta_{g}}{d+2\beta_{g}%
}}k_{0}^{-2\beta/d}\right) \\
\left\vert B_{t}^{\left(  Q\right)  }\right\vert  &  =O_{p}\left(  \left(
\frac{\log n}{n}\right)  ^{-\frac{\left(  t-2\right)  \beta_{g}}{d+2\beta_{g}%
}}k_{0}^{-2\beta/d}\right)
\end{align*}
$\forall$ {}$1\leq s\leq J$%
\begin{align*}
B_{4}^{\left(  G,s\right)  }  &  =B_{4}\left(  \widehat{\mathbb{U}}%
_{3}^{\left(  1,2\right)  }\left(  _{k_{2s-2}}^{k_{2s-1}},_{k_{2s-2}}^{k_{2s}%
},_{0}^{k_{0}}\right)  \right)  +B_{4}\left(  \widehat{\mathbb{U}}%
_{3}^{\left(  1,2\right)  }\left(  _{k_{2s-2}}^{k_{2s}},_{k_{2s}}^{k_{2s-1}%
},_{0}^{k_{0}}\right)  \right) \\
&  -B_{4}\left(  \widehat{\mathbb{U}}_{4}^{\left(  1,2\right)  }\left(
_{k_{2s-2}}^{k_{2s-1}},_{k_{2s-2}}^{k_{2s}},_{0}^{k_{0}}\right)
+\widehat{\mathbb{U}}_{4}^{\left(  2,3\right)  }\left(  _{k_{2s-2}}^{k_{2s-1}%
},_{k_{2s-2}}^{k_{2s}},_{0}^{k_{0}}\right)  \right) \\
&  -B_{4}\left(  \widehat{\mathbb{U}}_{4}^{\left(  1,2\right)  }\left(
_{k_{2s-2}}^{k_{2s}},_{k_{2s}}^{k_{2s-1}},_{0}^{k_{0}}\right)
+\widehat{\mathbb{U}}_{4}^{\left(  2,3\right)  }\left(  _{k_{2s-2}}^{k_{2s}%
},_{k_{2s}}^{k_{2s-1}},_{0}^{k_{0}}\right)  \right)
\end{align*}
From lemma \ref{ON2}
\begin{align*}
&  B_{4}\left(  \widehat{\mathbb{U}}_{3}^{\left(  1,2\right)  }\left(
_{k_{2s-2}}^{k_{2s-1}},_{k_{2s-2}}^{k_{2s}},_{0}^{k_{0}}\right)  \right)
-B_{4}\left(  \widehat{\mathbb{U}}_{4}^{\left(  1,2\right)  }\left(
_{k_{2s-2}}^{k_{2s-1}},_{k_{2s-2}}^{k_{2s}},_{0}^{k_{0}}\right)
+\widehat{\mathbb{U}}_{4}^{\left(  2,3\right)  }\left(  _{k_{2s-2}}^{k_{2s-1}%
},_{k_{2s-2}}^{k_{2s}},_{0}^{k_{0}}\right)  \right) \\
&  =\widehat{E}\left(  \delta b\delta g\overline{Z}_{k_{2s-2}}^{k_{2s-1}%
T}\right)  \widehat{E}\left(  \delta g\widehat{Q}^{2}\overline{Z}_{k_{2s-2}%
}^{k_{2s-1}}\overline{Z}_{k_{2s-2}}^{k_{2s}T}\right)  \widehat{E}\left(
\delta p\overline{Z}_{k_{2s-2}}^{k_{2s}}\right) \\
&  -\widehat{E}\left(  \delta b\overline{Z}_{0}^{k_{0}T}\right)
\widehat{E}\left(  \delta g\widehat{Q}^{2}\overline{Z}_{0}^{k_{0}}\overline
{Z}_{k_{2s-2}}^{k_{2s-1}T}\right)  \widehat{E}\left(  \delta g\widehat{Q}%
^{2}\overline{Z}_{k_{2s-2}}^{k_{2s-1}}\overline{Z}_{k_{2s-2}}^{k_{2s}%
T}\right)  \widehat{E}\left[  \delta p\overline{Z}_{k_{2s-2}}^{k_{2s}}\right]
\end{align*}%
\begin{align*}
&  +\widehat{E}\left(  \delta b\overline{Z}_{k_{2s-2}}^{k_{2s-1}T}\right)
\widehat{E}\left(  \delta g\widehat{Q}^{2}\overline{Z}_{k_{2s-2}}^{k_{2s-1}%
}\overline{Z}_{k_{2s-2}}^{k_{2s}T}\right)  \widehat{E}\left(  \delta p\delta
g\overline{Z}_{k_{2s-2}}^{k_{2s}}\right) \\
&  -\widehat{E}\left(  \delta b\overline{Z}_{k_{2s-2}}^{k_{2s-1}T}\right)
\widehat{E}\left(  \delta g\widehat{Q}^{2}\overline{Z}_{k_{2s-2}}^{k_{2s-1}%
}\overline{Z}_{k_{2s-2}}^{k_{2s}T}\right)  \widehat{E}\left(  \delta
g\widehat{Q}^{2}\overline{Z}_{k_{2s-2}}^{k_{2s}}\overline{Z}_{0}^{k_{0}%
T}\right)  \widehat{E}\left[  \delta p\overline{Z}_{0}^{k_{0}}\right] \\
&  =\widehat{E}\left(  \widehat{\Pi}^{\bot}\left(  \widehat{Q}^{-1}\delta
b|\left(  \widehat{Q}\overline{Z}_{0}^{k_{0}}\right)  \right)  \delta
g\overline{Z}_{k_{2s-2}}^{k_{2s-1}T}\right)  \widehat{E}\left(  \delta
g\widehat{Q}\overline{Z}_{k_{2s-2}}^{k_{2s-1}}\widehat{\Pi}\left(
\widehat{Q}^{-1}\delta p|\left(  \widehat{Q}\overline{Z}_{k_{2s-2}}^{k_{2s}%
}\right)  \right)  \right) \\
&  +\widehat{E}\left(  \widehat{\Pi}\left(  \widehat{Q}^{-1}\delta b|\left(
\widehat{Q}\overline{Z}_{k_{2s-2}}^{k_{2s-1}}\right)  \right)  \delta
g\overline{Z}_{k_{2s-2}}^{k_{2s}T}\right)  \widehat{E}\left(  \delta
g\widehat{Q}\overline{Z}_{k_{2s-2}}^{k_{2s}}\widehat{\Pi}^{\bot}\left(
\widehat{Q}^{-1}\delta p|\left(  \widehat{Q}\overline{Z}_{0}^{k_{0}}\right)
\right)  \right)
\end{align*}
Similarly,%
\begin{align*}
&  B_{4}\left(  \widehat{\mathbb{U}}_{3}^{\left(  1,2\right)  }\left(
_{k_{2s-2}}^{k_{2s}},_{k_{2s}}^{k_{2s-1}},_{0}^{k_{0}}\right)  \right) \\
&  -B_{4}\left(  \widehat{\mathbb{U}}_{4}^{\left(  1,2\right)  }\left(
_{k_{2s-2}}^{k_{2s}},_{k_{2s}}^{k_{2s-1}},_{0}^{k_{0}}\right)
+\widehat{\mathbb{U}}_{4}^{\left(  2,3\right)  }\left(  _{k_{2s-2}}^{k_{2s}%
},_{k_{2s}}^{k_{2s-1}},_{0}^{k_{0}}\right)  \right) \\
&  =\widehat{E}\left(  \widehat{\Pi}^{\bot}\left(  \widehat{Q}^{-1}\delta
b|\left(  \widehat{Q}\overline{Z}_{0}^{k_{0}}\right)  \right)  \delta
g\overline{Z}_{k_{2s-2}}^{k_{2s}T}\right)  \widehat{E}\left(  \delta
g\widehat{Q}\overline{Z}_{k_{2s-2}}^{k_{2s}}\widehat{\Pi}\left(
\widehat{Q}^{-1}\delta p|\left(  \widehat{Q}\overline{Z}_{k_{2s}}^{k_{2s-1}%
}\right)  \right)  \right) \\
&  +\widehat{E}\left(  \widehat{\Pi}\left(  \widehat{Q}^{-1}\delta b|\left(
\widehat{Q}\overline{Z}_{k_{2s-2}}^{k_{2s}}\right)  \right)  \delta
g\overline{Z}_{k_{2s}}^{k_{2s-1}T}\right)  \widehat{E}\left(  \delta
g\widehat{Q}\overline{Z}_{k_{2s}}^{k_{2s-1}}\widehat{\Pi}^{\bot}\left(
\widehat{Q}^{-1}\delta p|\left(  \widehat{Q}\overline{Z}_{0}^{k_{0}}\right)
\right)  \right)
\end{align*}
Therefore%
\[
\left\vert B_{4}^{\left(  G,s\right)  }\right\vert =O_{p}\left(  \left(
\frac{\log n}{n}\right)  ^{\frac{2\beta_{g}}{d+2\beta_{g}}}\max\left(
k_{2s-2}^{-\beta_{b}/d}k_{0}^{-\beta_{p}/d},k_{0}^{-\beta_{b}/d}%
k_{2s-2}^{-\beta_{p}/d}\right)  \right)
\]
Applying the above argument to $B_{4}^{\left(  Q\right)  },$ we can show
that:
\[
\left\vert B_{4}^{\left(  Q\right)  }\right\vert =O_{p}\left(  \left(
\frac{\log n}{n}\right)  ^{\frac{2\beta_{g}}{d+2\beta_{g}}}\max\left(
k_{2J}^{-\beta_{b}/d}k_{0}^{-\beta_{p}/d},k_{0}^{-\beta_{b}/d}k_{2J}%
^{-\beta_{p}/d}\right)  \right)
\]
In addition, {}$\forall$ {}$L\in\left\{  H^{\ast},G,Q\right\}  ,$ $t>m\left(
\beta_{g},\beta_{b},\beta_{p}\right)  ,$
\[
\left\vert B_{t}^{\left(  L\right)  }\right\vert =O_{p}\left(  \left\vert
\left\vert \delta g\right\vert \right\vert _{\infty}^{\left(  t-2\right)
}\left\vert \left\vert \delta b\right\vert \right\vert _{2}\left\vert
\left\vert \delta p\right\vert \right\vert _{2}\right)
\]
which completes the proof of bias. {}The order of variance follows directly
from theorem \ref{var_multi}.
\end{proof}

\begin{proof}
(Theorem \ref{tt} (iii)) As proved in (ii), $\ $%
\begin{equation}
\psi_{\tau}if_{1,\tau\left(  \cdot\right)  }\left(  o;\theta\right)
=-if_{1,\psi\left(  \tau^{\dagger},\cdot\right)  }\left(  o;\tau
,\theta\right)  .\label{if1_tau}%
\end{equation}
By part 5c) of Theorem \ref{eift}, consider any suitably smooth one
dimensional parametric submodel $\widetilde{\theta}\left(  \zeta\right)  $
with range containing $\theta$ and contained in $\Theta\left(  \tau^{\dagger
}\right)  $, and differentiate both sides of eq. (\ref{if1_tau}) wrt. $\zeta.$
Then,
\begin{align*}
&  \psi_{\tau}\frac{\partial if_{1,\tau\left(  \cdot\right)  }\left(
o;\theta\right)  }{\partial\zeta}+\left(  \psi_{\tau\tau}\frac{\partial\tau
}{\partial\zeta}+\frac{\partial\psi_{\tau}\left(  \tau^{\dagger}%
,\widetilde{\theta}\left(  \zeta\right)  \right)  }{\partial\zeta
}|_{\widetilde{\theta}\left(  \zeta\right)  =\theta}\right)  if_{1,\tau\left(
\cdot\right)  }\left(  o;\theta\right) \\
&  =-\frac{\partial if_{1,\psi\left(  \tau^{\dagger},\cdot\right)  }\left(
o;\tau,\theta\right)  }{\partial\tau}|_{\tau=\tau^{\dagger}}\frac{\partial
\tau}{\partial\zeta}-\frac{\partial if_{1,\psi\left(  \tau^{\dagger}%
,\cdot\right)  }\left(  o;\tau^{\dagger},\theta\right)  }{\partial\zeta}\\
&  =-\frac{\partial if_{1,\psi\left(  \tau^{\dagger},\cdot\right)  }\left(
o;\theta\right)  }{\partial\tau}E_{\theta}\left[  if_{1,\tau\left(
\cdot\right)  }\left(  O_{2};\theta\right)  S_{\zeta}\left(  \theta\right)
\right] \\
&  -E_{\theta}\left[  if_{1,if_{1,\psi}\left(  o;\cdot\right)  }\left(
O_{2};\theta\right)  S_{\zeta}\left(  \theta\right)  \right]  .
\end{align*}

Further, $\frac{\partial\tau}{\partial\zeta}=E_{\theta}\left[  if_{1,\tau
\left(  \cdot\right)  }\left(  O_{2},\theta\right)  S_{\zeta}\left(
\theta\right)  \right]  $ and
\begin{align*}
\frac{\partial\psi_{\tau}\left(  \tau^{\dagger},\widetilde{\theta}\left(
\zeta\right)  \right)  }{\partial\zeta}|_{\widetilde{\theta}\left(
\zeta\right)  =\theta}  &  =\frac{\partial E_{\theta}\left[  if_{1,\psi\left(
\tau,\cdot\right)  }\left(  O_{2};\theta\right)  S_{\zeta}\left(
\theta\right)  \right]  }{\partial\tau}|_{\tau=\tau^{\ast}}\\
&  =E_{\theta}\left[  \frac{\partial if_{1,\psi\left(  \tau^{\dagger}%
,\cdot\right)  }\left(  O_{2};\theta\right)  }{\partial\tau}S_{\zeta}\left(
\theta\right)  \right]
\end{align*}

Thus,
\begin{align*}
&  \psi_{\tau}\frac{\partial if_{1,\tau\left(  \cdot\right)  }\left(
o;\theta\right)  }{\partial\zeta}\\
&  =-\frac{\partial if_{1,\psi\left(  \tau^{\dagger},\cdot\right)  }\left(
o;\theta\right)  }{\partial\tau}E_{\theta}\left[  if_{1,\tau\left(
\cdot\right)  }\left(  O_{2};\theta\right)  S_{\zeta}\left(  \theta\right)
\right] \\
&  -E_{\theta}\left[  if_{1,if_{1,\psi}\left(  o;\cdot\right)  }\left(
O_{2},\theta\right)  S_{\zeta}\left(  \theta\right)  \right] \\
&  -if_{1,\tau\left(  \cdot\right)  }\left(  o;\theta\right)  \left(
\begin{array}
[c]{c}%
\psi_{\tau\tau}E_{\theta}\left[  if_{1,\tau\left(  \cdot\right)  }\left(
O_{2},\theta\right)  S_{\zeta}\left(  \theta\right)  \right] \\
+E_{\theta}\left[  \frac{\partial if_{1,\psi\left(  \tau^{\dagger}%
,\cdot\right)  }\left(  O_{2};\theta\right)  }{\partial\tau}S_{\zeta}\left(
\theta\right)  \right]
\end{array}
\right)
\end{align*}
for any $o\in\mathcal{O}$ \ wp. 1. \ That is, there exists a first order
influence function for $if_{1,\tau\left(  \cdot\right)  }\left(
o;\theta\right)  $, and
\begin{align*}
&  \mathbb{IF}_{2,2,\tau\left(  \cdot\right)  }\left(  \theta\right) \\
&  =\frac{1}{2}\mathbb{V}\left\{  d_{2,\theta}\left[  if_{1,if_{1,\tau\left(
\cdot\right)  }\left(  O_{1};\cdot\right)  }\left(  O_{2};\theta\right)
\right]  \right\} \\
&  =-\psi_{\tau}^{-1}\left\{
\begin{array}
[c]{c}%
\mathbb{IF}_{2,2,\tau\left(  \cdot\right)  }\left(  \theta\right)  +\frac
{1}{2}\psi_{\tau\tau}\mathbb{V}\left(  if_{1,\tau\left(  \cdot\right)
}\left(  O_{1};\theta\right)  if_{1,\tau\left(  \cdot\right)  }\left(
O_{2},\theta\right)  \right) \\
+\frac{1}{2}\mathbb{V}\left(
\begin{array}
[c]{c}%
if_{1,\tau\left(  \cdot\right)  }\left(  O_{1};\theta\right)  d_{1,\theta
}\left(  \frac{\partial if_{1,\psi\left(  \tau^{\dagger},\cdot\right)
}\left(  O_{2};\theta\right)  }{\partial\tau}\right) \\
+d_{1,\theta}\left(  \frac{\partial if_{1,\psi\left(  \tau^{\dagger}%
,\cdot\right)  }\left(  O_{1};\theta\right)  }{\partial\tau}\right)
if_{1,\tau\left(  \cdot\right)  }\left(  O_{2};\theta\right)
\end{array}
\right)
\end{array}
\right\}
\end{align*}
which completes the proof. \ \ Note that $d_{m,\theta}\left(  \cdot\right)  $
is defined in eq. (\ref{deg}).
\end{proof}

\begin{proof}
(Part ii and iii of Theorem \ref{aa})

(iii) The formulae for $\mathbb{IF}_{1,\tau\left(  \cdot\right)  }\left(
\widehat{\theta}\right)  =\mathbb{IF}_{1,\widetilde{\tau}_{k}\left(
\cdot\right)  }\left(  \widehat{\theta}\right)  $ and $\mathbb{IF}%
_{2,2,\widetilde{\tau}_{k}\left(  \cdot\right)  }\left(  \widehat{\theta
}\right)  $ follow from Theorem \ref{tt}. To verify the formula for
$Q_{2,2,\widetilde{\tau}_{k}\left(  \cdot\right)  }\left(  \widehat{\theta
}\right)  \ $, substitute $\psi_{\backslash\tau^{2}}\left(  \tau^{\dagger
},\widehat{\theta}\right)  =0\ $and $\partial IF_{1,\widetilde{\psi}%
_{k}\left(  \tau,\cdot\right)  ,i_{1}}\left(  \widehat{\theta}\right)
/\partial\tau=-\left\{  A-\widehat{p}\left(  X\right)  \right\}  _{i_{1}}%
^{2}\ $in the formula for $Q_{2,2,\widetilde{\tau}_{k}\left(  \cdot\right)
}\left(  \widehat{\theta}\right)  $ in Theorem \ref{tt}.

(ii): To obtain Eq (\ref{joel}), note by Theorem \ref{gg}, we have
\begin{align*}
&  var_{\widehat{\theta}}\left\{  \mathbb{ES}_{2,\widetilde{\tau}_{k}\left(
\cdot\right)  }^{test}\left(  \widehat{\theta}\left(  \tau^{\dagger}\right)
\right)  \right\}  ^{-1}\mathbb{ES}_{2,\widetilde{\tau}_{k}\left(
\cdot\right)  }^{test}\left(  \widehat{\theta}\left(  \tau^{\dagger}\right)
\right) \\
&  =E_{\widehat{\theta}\left(  \tau^{\dagger}\right)  }\left[  \mathbb{IF}%
_{2,\widetilde{\psi}_{k}\left(  \tau^{\dagger},\cdot\right)  }\left(
\widehat{\theta}\left(  \tau^{\dagger}\right)  \right)  \mathbb{ES}%
_{1,\widetilde{\tau}_{k}\left(  \cdot\right)  }^{test}\left(  \widehat{\theta
}\left(  \tau^{\dagger}\right)  \right)  \right]  ^{-1}\times\\
&  \Pi_{\widehat{\theta}\left(  \tau^{\dagger}\right)  }\left[  \mathbb{IF}%
_{2,\widetilde{\psi}_{k}\left(  \tau^{\dagger},\cdot\right)  }\left(
\widehat{\theta}\left(  \tau^{\dagger}\right)  \right)  |\Gamma_{2}%
^{test}\left(  \widehat{\theta}\left(  \tau^{\dagger}\right)  ,\tau^{\dagger
}\right)  \right]
\end{align*}
But by Theorem \ref{tt} and the definition of $\mathbb{ES}_{1}^{test},$ we
have
\begin{gather*}
\mathbb{ES}_{1,\widetilde{\tau}_{k}\left(  \cdot\right)  }^{test}\left(
\widehat{\theta}\left(  \tau^{\dagger}\right)  \right)  =\mathbb{ES}%
_{1,\tau\left(  \cdot\right)  }^{test}\left(  \widehat{\theta}\left(
\tau^{\dagger}\right)  \right) \\
=v\left(  \widehat{\theta}\left(  \tau^{\dagger}\right)  \right)
E_{\widehat{\theta}\left(  \tau^{\dagger}\right)  }\left[  \left\{  Y^{\ast
}\left(  \tau^{\dagger}\right)  -\widehat{b}\left(  X,\tau^{\dagger}\right)
\right\}  ^{2}\left\{  A-\widehat{p}\left(  X\right)  \right\}  ^{2}\right]
^{-1}\\
\times\left\{  Y^{\ast}\left(  \tau^{\dagger}\right)  -\widehat{b}\left(
X,\tau^{\dagger}\right)  \right\}  \left\{  A-\widehat{p}\left(  X\right)
\right\}
\end{gather*}
thus, we obtain $E_{\widehat{\theta}\left(  \tau^{\dagger}\right)  }\left[
\mathbb{IF}_{2,\widetilde{\psi}_{k}\left(  \tau^{\dagger},\cdot\right)
}\left(  \widehat{\theta}\left(  \tau^{\dagger}\right)  \right)
\mathbb{ES}_{1,\widetilde{\tau}_{k}\left(  \cdot\right)  }^{test}\left(
\widehat{\theta}\left(  \tau^{\dagger}\right)  \right)  \right]
^{-1}=v\left(  \widehat{\theta}\left(  \tau^{\dagger}\right)  \right)  ^{-1}$

Now
\begin{align*}
&  \Pi_{\widehat{\theta}}\left[  \mathbb{IF}_{2,\widetilde{\psi}_{k}\left(
\tau^{\dagger},\cdot\right)  }\left(  \widehat{\theta}\left(  \tau^{\dagger
}\right)  \right)  |\Gamma_{2}^{test}\left(  \widehat{\theta}\left(
\tau^{\dagger}\right)  ,\tau^{\dagger}\right)  \right] \\
&  =\mathbb{IF}_{2,\widetilde{\psi}_{k}\left(  \tau^{\dagger},\cdot\right)
}\left(  \widehat{\theta}\left(  \tau^{\dagger}\right)  \right)
-\Pi_{\widehat{\theta}}\left[  \mathbb{IF}_{2,\widetilde{\psi}_{k}\left(
\tau^{\dagger},\cdot\right)  }\left(  \widehat{\theta}\left(  \tau^{\dagger
}\right)  \right)  |\left\{  \mathbb{U}_{2,2,\widetilde{\tau}_{k}\left(
\cdot\right)  }^{test,\perp}\left(  \widehat{\theta}\left(  \tau^{\dagger
}\right)  ,\tau^{\dagger}\right)  \right\}  \right]
\end{align*}
Let $\widehat{\epsilon}$ denote $Y-\widehat{b}\left(  X\right)  ,$ and
$\widehat{\Delta}$ denote $A-\widehat{p}\left(  X\right)  .$ \ Next, we show
that
\[
\Pi_{\widehat{\theta}}\left[  \mathbb{IF}_{2,\widetilde{\psi}_{k}\left(
\tau^{\dagger},\cdot\right)  }\left(  \widehat{\theta}\left(  \tau^{\dagger
}\right)  \right)  |\left\{  \mathbb{U}_{2,2,\widetilde{\tau}_{k}\left(
\cdot\right)  }^{test,\perp}\left(  \widehat{\theta}\left(  \tau^{\dagger
}\right)  ,\tau^{\dagger}\right)  \right\}  \right]  =\mathbb{U}%
_{2,2,\widetilde{\tau}_{k}\left(  \cdot\right)  }^{\ast,test,\perp}\left(
\widehat{\theta}\left(  \tau^{\dagger}\right)  ,\tau^{\dagger}\right)
\]
where%
\begin{align*}
&  U_{2,2,\widetilde{\tau}_{k}\left(  \cdot\right)  ,ij}^{\ast,test,\perp
}\left(  \widehat{\theta}\left(  \tau^{\dagger}\right)  ,\tau^{\dagger}\right)
\\
&  =\left(  E_{\widehat{\theta}}\left[  \widehat{\epsilon}_{i}^{2}%
\widehat{\Delta}_{i}^{2}\right]  \right)  ^{-1}\times\widehat{\epsilon}%
_{i}\widehat{\Delta}_{i}\\
&  \left\{
\begin{array}
[c]{c}%
-\left\{
\begin{array}
[c]{c}%
\left(  E_{\widehat{\theta}}\left[  \widehat{\epsilon}_{i}^{2}\widehat{\Delta
}_{i}^{2}\right]  \right)  ^{-1}E_{\widehat{\theta}}\left[  \widehat{\epsilon
}\widehat{\Delta}^{2}\overline{Z}_{k}^{T}\right] \\
\times E_{\widehat{\theta}}\left[  \widehat{\epsilon}^{2}\widehat{\Delta
}\overline{Z}_{k}^{T}\right]  \widehat{\epsilon}_{j}\widehat{\Delta}_{j}%
\end{array}
\right\} \\
+E_{\widehat{\theta}}\left[  \widehat{\epsilon}_{i}^{2}\widehat{\Delta}%
_{i}\overline{Z}_{k,i}^{T}\right]  \overline{Z}_{k,j}\widehat{\Delta}_{j}\\
+E_{\widehat{\theta}}\left[  \widehat{\epsilon}_{i}\widehat{\Delta}_{i}%
^{2}\overline{Z}_{k,i}^{T}\right]  \overline{Z}_{k,j}\widehat{\epsilon}_{j}%
\end{array}
\right\}
\end{align*}
As proved in Theorem \ref{ff},%
\[
\left\{  \mathbb{U}_{2,2,\widetilde{\tau}_{k}\left(  \cdot\right)
}^{test,\perp}\left(  \theta\left(  \tau^{\dagger}\right)  ,\tau^{\dagger
}\right)  \right\}  =\left\{  \mathbb{V}\left\{  IF_{1,\tau\left(
\cdot\right)  ,i}^{eff}\left(  \theta\right)  h\left(  O_{j};\theta\right)
\right\}  :\text{ }\forall\text{ \ }E_{\theta}\left[  h\left(  O_{j}%
;\theta\right)  \right]  =0\right\}
\]

We assume that
\begin{align*}
&  \Pi_{\widehat{\theta}}\left[  \mathbb{IF}_{2,\widetilde{\psi}_{k}\left(
\tau^{\dagger},\cdot\right)  }\left(  \widehat{\theta}\left(  \tau^{\dagger
}\right)  \right)  |\left\{  \mathbb{U}_{2,2,\widetilde{\tau}_{k}\left(
\cdot\right)  }^{test,\perp}\left(  \widehat{\theta}\left(  \tau^{\dagger
}\right)  ,\tau^{\dagger}\right)  \right\}  \right] \\
&  =\mathbb{V}\left\{  IF_{1,\tau\left(  \cdot\right)  ,i}^{eff}\left(
\theta\right)  h^{\ast}\left(  O_{j};\theta\right)  \right\}  ,
\end{align*}
then by the definition of the projection, for any $h\left(  O_{j}%
;\theta\right)  $ such that $E_{\theta}\left[  h\left(  O_{j};\theta\right)
\right]  =0,$ we have
\begin{align*}
&  E_{\widehat{\theta}}\left[  \mathbb{IF}_{2,\widetilde{\psi}_{k}\left(
\tau^{\dagger},\cdot\right)  }\mathbb{V}\left\{  IF_{1,\tau\left(
\cdot\right)  ,i}^{eff}\left(  \theta\right)  h\left(  O_{j};\theta\right)
\right\}  \right] \\
&  =E_{\widehat{\theta}}\left[  \mathbb{V}\left\{  IF_{1,\tau\left(
\cdot\right)  ,i}^{eff}\left(  \theta\right)  h^{\ast}\left(  O_{j}%
;\theta\right)  \right\}  \mathbb{V}\left\{  IF_{1,\tau\left(  \cdot\right)
,i}^{eff}\left(  \theta\right)  h\left(  O_{j};\theta\right)  \right\}
\right]  ,
\end{align*}
which is equivalent to
\begin{align*}
&  v\left(  \widehat{\theta}\right)  ^{-1}\left\{  E_{\widehat{\theta}}\left[
\widehat{\epsilon}_{i}^{2}\widehat{\Delta}_{i}\overline{Z}_{k,i}^{T}%
\overline{Z}_{k,j}\widehat{\Delta}_{j}h\left(  O_{j}\right)  \right]
+E_{\widehat{\theta}}\left[  \widehat{\epsilon}_{i}\widehat{\Delta}_{i}%
^{2}\overline{Z}_{k,i}^{T}\overline{Z}_{k,j}\widehat{\epsilon}_{j}h\left(
O_{j}\right)  \right]  \right\} \\
&  =v\left(  \widehat{\theta}\right)  ^{-2}\left\{  E_{\widehat{\theta}%
}\left[  \widehat{\epsilon}_{i}^{2}\widehat{\Delta}_{i}^{2}h^{\ast}\left(
O_{j}\right)  h\left(  O_{j}\right)  \right]  +E_{\widehat{\theta}}\left[
\widehat{\epsilon}_{i}\widehat{\Delta}_{i}h\left(  O_{i}\right)
\widehat{\epsilon}_{j}\widehat{\Delta}_{j}h^{\ast}\left(  O_{j}\right)
\right]  \right\}  .
\end{align*}

As the equation above holds for any mean zero function $h\left(
O;\theta\right)  ,$ therefore%
\begin{align*}
&  \left\{  E_{\widehat{\theta}}\left[  \widehat{\epsilon}_{i}^{2}%
\widehat{\Delta}_{i}^{2}\right]  h^{\ast}\left(  O\right)  +\widehat{\epsilon
}\widehat{\Delta}E_{\widehat{\theta}}\left[  \widehat{\epsilon}_{j}%
\widehat{\Delta}_{j}h^{\ast}\left(  O_{j}\right)  \right]  \right\} \\
&  =v\left(  \widehat{\theta}\right)  \left\{  E_{\widehat{\theta}}\left[
\widehat{\epsilon}_{i}^{2}\widehat{\Delta}_{i}\overline{Z}_{k,i}^{T}\right]
\overline{Z}_{k}\widehat{\Delta}+E_{\widehat{\theta}}\left[  \widehat{\epsilon
}_{i}\widehat{\Delta}_{i}^{2}\overline{Z}_{k,i}^{T}\right]  \overline{Z}%
_{k}\widehat{\epsilon}\right\}
\end{align*}%
\[
\Leftrightarrow
\]%
\begin{align*}
&  h^{\ast}\left(  O\right) \\
&  =\left(  E_{\widehat{\theta}}\left[  \widehat{\epsilon}_{i}^{2}%
\widehat{\Delta}_{i}^{2}\right]  \right)  ^{-1}\left\{
\begin{array}
[c]{c}%
c_{h}\left(  \widehat{\theta}\right)  \widehat{\epsilon}\widehat{\Delta}\\
+v\left(  \widehat{\theta}\right)  E_{\widehat{\theta}}\left[
\widehat{\epsilon}_{i}^{2}\widehat{\Delta}_{i}\overline{Z}_{k,i}^{T}\right]
\overline{Z}_{k}\widehat{\Delta}\\
+v\left(  \widehat{\theta}\right)  E_{\widehat{\theta}}\left[
\widehat{\epsilon}_{i}\widehat{\Delta}_{i}^{2}\overline{Z}_{k,i}^{T}\right]
\overline{Z}_{k}\widehat{\epsilon}%
\end{array}
\right\}
\end{align*}
and $c_{h}\left(  \widehat{\theta}\right)  $ is determined by the following
equation
\begin{align*}
&  c_{h}\left(  \widehat{\theta}\right)  \widehat{\epsilon}\widehat{\Delta
}+v\left(  \widehat{\theta}\right)  E_{\widehat{\theta}}\left[
\widehat{\epsilon}_{i}^{2}\widehat{\Delta}_{i}\overline{Z}_{k,i}^{T}\right]
\overline{Z}_{k}\widehat{\Delta}\\
&  +v\left(  \widehat{\theta}\right)  E_{\widehat{\theta}}\left[
\widehat{\epsilon}_{i}\widehat{\Delta}_{i}^{2}\overline{Z}_{k,i}^{T}\right]
\overline{Z}_{k}\widehat{\epsilon}\\
&  +\widehat{\epsilon}\widehat{\Delta}E_{\widehat{\theta}}\left[
\widehat{\epsilon}\widehat{\Delta}\left(  E_{\widehat{\theta}}\left[
\widehat{\epsilon}_{i}^{2}\widehat{\Delta}_{i}^{2}\right]  \right)
^{-1}\left\{
\begin{array}
[c]{c}%
c_{h}\left(  \widehat{\theta}\right)  \widehat{\epsilon}\widehat{\Delta}\\
+v\left(  \widehat{\theta}\right)  E_{\widehat{\theta}}\left[
\widehat{\epsilon}_{i}^{2}\widehat{\Delta}_{i}\overline{Z}_{k,i}^{T}\right]
\overline{Z}_{k}\widehat{\Delta}\\
+v\left(  \widehat{\theta}\right)  E_{\widehat{\theta}}\left[
\widehat{\epsilon}_{i}\widehat{\Delta}_{i}^{2}\overline{Z}_{k,i}^{T}\right]
\overline{Z}_{k}\widehat{\epsilon}%
\end{array}
\right\}  \right] \\
&  =v\left(  \widehat{\theta}\right)  \left\{  E_{\widehat{\theta}}\left[
\widehat{\epsilon}_{i}^{2}\widehat{\Delta}_{i}\overline{Z}_{k,i}^{T}\right]
\overline{Z}_{k}\widehat{\Delta}+E_{\widehat{\theta}}\left[  \widehat{\epsilon
}_{i}\widehat{\Delta}_{i}^{2}\overline{Z}_{k,i}^{T}\right]  \overline{Z}%
_{k}\widehat{\epsilon}\right\}
\end{align*}%
\[
\Leftrightarrow
\]%
\[
\widehat{\epsilon}\widehat{\Delta}\left[
\begin{array}
[c]{c}%
c_{h}\left(  \widehat{\theta}\right)  +\left(  E_{\widehat{\theta}}\left[
\widehat{\epsilon}_{i}^{2}\widehat{\Delta}_{i}^{2}\right]  \right)
^{-1}\times\\
\left\{
\begin{array}
[c]{c}%
E_{\widehat{\theta}}\left[  \widehat{\epsilon}^{2}\widehat{\Delta}^{2}\right]
c_{h}\left(  \widehat{\theta}\right) \\
+2v\left(  \widehat{\theta}\right)  E_{\widehat{\theta}}\left[
\widehat{\epsilon}\widehat{\Delta}^{2}\overline{Z}_{k}^{T}\right]
E_{\widehat{\theta}}\left[  \widehat{\epsilon}^{2}\widehat{\Delta}\overline
{Z}_{k}^{T}\right]
\end{array}
\right\}
\end{array}
\right]  =0
\]%
\[
\Leftrightarrow
\]%
\begin{align*}
&  c_{h}\left(  \widehat{\theta}\right) \\
&  =-v\left(  \widehat{\theta}\right)  \left(  E_{\widehat{\theta}}\left[
\widehat{\epsilon}_{i}^{2}\widehat{\Delta}_{i}^{2}\right]  \right)
^{-1}E_{\widehat{\theta}}\left[  \widehat{\epsilon}\widehat{\Delta}%
^{2}\overline{Z}_{k}^{T}\right]  E_{\widehat{\theta}}\left[  \widehat{\epsilon
}^{2}\widehat{\Delta}\overline{Z}_{k}^{T}\right]
\end{align*}

In summary,%
\begin{align*}
&  \mathbb{U}_{2,2,\widetilde{\tau}_{k}\left(  \cdot\right)  }^{\ast
,test,\perp}\left(  \widehat{\theta}\left(  \tau^{\dagger}\right)
,\tau^{\dagger}\right) \\
&  =\left(  E_{\widehat{\theta}}\left[  \widehat{\epsilon}_{i}^{2}%
\widehat{\Delta}_{i}^{2}\right]  \right)  ^{-1}\times\\
&  \mathbb{V}\left\{
\begin{array}
[c]{c}%
\widehat{\epsilon}_{i}\widehat{\Delta}_{i}\times\\
\left\{
\begin{array}
[c]{c}%
-\left\{
\begin{array}
[c]{c}%
\left(  E_{\widehat{\theta}}\left[  \widehat{\epsilon}_{i}^{2}\widehat{\Delta
}_{i}^{2}\right]  \right)  ^{-1}E_{\widehat{\theta}}\left[  \widehat{\epsilon
}\widehat{\Delta}^{2}\overline{Z}_{k}^{T}\right] \\
\times E_{\widehat{\theta}}\left[  \widehat{\epsilon}^{2}\widehat{\Delta
}\overline{Z}_{k}^{T}\right]  \widehat{\epsilon}_{j}\widehat{\Delta}_{j}%
\end{array}
\right\} \\
+E_{\widehat{\theta}}\left[  \widehat{\epsilon}_{i}^{2}\widehat{\Delta}%
_{i}\overline{Z}_{k,i}^{T}\right]  \overline{Z}_{k,j}\widehat{\Delta}_{j}\\
+E_{\widehat{\theta}}\left[  \widehat{\epsilon}_{i}\widehat{\Delta}_{i}%
^{2}\overline{Z}_{k,i}^{T}\right]  \overline{Z}_{k,j}\widehat{\epsilon}_{j}%
\end{array}
\right\}
\end{array}
\right\}
\end{align*}

To obtain $\mathbb{ES}_{2,\widetilde{\tau}_{k}\left(  \cdot\right)  }%
^{test}\left(  \widehat{\theta}\left(  \tau^{\dagger}\right)  \right)  ,$ we
divide Eq. (\ref{joel}) by $var_{\widehat{\theta}}\left\{  \mathbb{ES}%
_{2,\widetilde{\tau}_{k}\left(  \cdot\right)  }^{test}\left(  \widehat{\theta
}\left(  \tau^{\dagger}\right)  \right)  \right\}  ^{-1}$. To obtain
$var_{\widehat{\theta}}\left\{  \mathbb{ES}_{2,\widetilde{\tau}_{k}\left(
\cdot\right)  }^{test}\left(  \widehat{\theta}\left(  \tau^{\dagger}\right)
\right)  \right\}  ^{-1}$, we take the variance of both sides of Eq.
(\ref{joel}) under law $\widehat{\theta}\left(  \tau^{\dagger}\right)  $
giving
\begin{align*}
&  var_{\widehat{\theta}\left(  \tau^{\dagger}\right)  }\left\{
\mathbb{ES}_{2,\widetilde{\tau}_{k}\left(  \cdot\right)  }^{test}\left(
\widehat{\theta}\left(  \tau^{\dagger}\right)  \right)  \right\}  ^{-1}\\
&  =v\left(  \widehat{\theta}\left(  \tau^{\dagger}\right)  \right)
^{-2}var_{\widehat{\theta}\left(  \tau^{\dagger}\right)  }\left[
\Pi_{\widehat{\theta}\left(  \tau^{\dagger}\right)  }\left[  \mathbb{IF}%
_{2,\widetilde{\psi}_{k}\left(  \tau^{\dagger},\cdot\right)  }\left(
\widehat{\theta}\left(  \tau^{\dagger}\right)  \right)  |\Gamma_{2}%
^{test}\left(  \widehat{\theta}\left(  \tau^{\dagger}\right)  ,\tau^{\dagger
}\right)  \right]  \right] \\
&  =v\left(  \widehat{\theta}\left(  \tau^{\dagger}\right)  \right)
^{-2}\left\{  var_{\widehat{\theta}\left(  \tau^{\dagger}\right)  }\left[
\mathbb{IF}_{2,\widetilde{\psi}_{k}\left(  \tau^{\dagger},\cdot\right)
}\left(  \widehat{\theta}\left(  \tau^{\dagger}\right)  \right)  \right]
-var\left[  \mathbb{U}_{2,2,\widetilde{\tau}_{k}\left(  \cdot\right)  }%
^{\ast,test,\perp}\left(  \widehat{\theta}\left(  \tau^{\dagger}\right)
,\tau^{\dagger}\right)  \right]  \right\}
\end{align*}

\end{proof}

\begin{proof}
(Theorem \ref{bb})\textbf{\ }except part (iii) which was proved in Theorem
\ref{EBrate}.

Parts (i) and (ii):\ That $var_{\theta}\left[  \mathbb{U}_{2,2,\widetilde{\tau
}_{k}\left(  \cdot\right)  }^{\ast test,\perp}\left(  \widehat{\theta}\left(
\tau^{\dagger}\right)  ,\tau^{\dagger}\right)  \right]  =o_{P}\left(
1/n\right)  \ \ $and $var_{\theta}\left[  \mathbb{Q}_{2,2,\widetilde{\tau}%
_{k}\left(  \cdot\right)  }\ \left(  \widehat{\theta}\right)  \right]
=o_{P}\left(  1/n\right)  \ $is a straightforward calculation. The remainder
of $\left(  i\right)  $ and $\left(  ii\right)  $ follows from the fact that
$var_{\theta}\left[  \psi_{2,k\ }\left(  \tau,\widehat{\theta}\right)
\right]  \asymp\max\left(  \frac{1}{n},\frac{k}{n^{2}}\right)  .$

Part (iv): By part (ii) of Theorem \ref{aa}, it is sufficient to show that
\begin{align*}
&  E_{\theta}\left[  U_{2,2,\widetilde{\tau}_{k}\left(  \cdot\right)  }^{\ast
test,\perp}\left(  \widehat{\theta}\left(  \tau^{\dagger}\right)
,\tau^{\dagger}\right)  \right] \\
&  =O_{p}\left\{  \left(  P-\widehat{P}\right)  \left(  B\left(  \tau
^{\dagger}\right)  -\widehat{B}\left(  \tau^{\dagger}\right)  \right)  \left[
\left(  P-\widehat{P}\right)  +\left(  B\left(  \tau^{\dagger}\right)
-\widehat{B}\left(  \tau^{\dagger}\right)  \right)  \right]  \right\}
\end{align*}
Below we show
\begin{align*}
&  E_{\theta}\left[  \mathbb{U}_{2,2,\widetilde{\tau}_{k}\left(  \cdot\right)
}^{\ast,test,\perp}\left(  \widehat{\theta}\left(  \tau^{\dagger}\right)
,\tau^{\dagger}\right)  \right] \\
&  =\left(  E_{\widehat{\theta}}\left[  \widehat{\epsilon}_{i}^{2}%
\widehat{\Delta}_{i}^{2}\right]  \right)  ^{-1}\times\\
&  \left\{
\begin{array}
[c]{c}%
E_{\theta}\left[  \left(  B\left(  \tau^{\dagger}\right)  -\widehat{B}\left(
\tau^{\dagger}\right)  \right)  \left(  P-\widehat{P}\right)  \right]
\times\\
\left\{
\begin{array}
[c]{c}%
E_{\theta}\left[  \left(  P-\widehat{P}\right)  \overline{Z}_{k}^{T}\right]
E_{\widehat{\theta}}\left[  \widehat{\epsilon}_{i}^{2}\widehat{\Delta}%
_{i}\overline{Z}_{k,i}^{T}\right] \\
+E_{\theta}\left[  \left(  B\left(  \tau^{\dagger}\right)  -\widehat{B}\left(
\tau^{\dagger}\right)  \right)  \overline{Z}_{k}^{T}\right]
E_{\widehat{\theta}}\left[  \widehat{\epsilon}_{i}\widehat{\Delta}_{i}%
^{2}\overline{Z}_{k,i}^{T}\right] \\
-\left\{
\begin{array}
[c]{c}%
\left(  E_{\widehat{\theta}}\left[  \widehat{\epsilon}_{i}^{2}\widehat{\Delta
}_{i}^{2}\right]  \right)  ^{-1}E_{\widehat{\theta}}\left[  \widehat{\epsilon
}\widehat{\Delta}^{2}\overline{Z}_{k}^{T}\right] \\
\times E_{\widehat{\theta}}\left[  \widehat{\epsilon}^{2}\widehat{\Delta
}\overline{Z}_{k}^{T}\right]  \times\\
E_{\theta}\left[  \left(  B\left(  \tau^{\dagger}\right)  -\widehat{B}\left(
\tau^{\dagger}\right)  \right)  \left(  P-\widehat{P}\right)  \right]
\end{array}
\right\}
\end{array}
\right\}
\end{array}
\right\}
\end{align*}
which is
\[
O_{p}\left\{  \left(  P-\widehat{P}\right)  \left(  B\left(  \tau^{\dagger
}\right)  -\widehat{B}\left(  \tau^{\dagger}\right)  \right)  \left[  \left(
P-\widehat{P}\right)  +\left(  B\left(  \tau^{\dagger}\right)  -\widehat{B}%
\left(  \tau^{\dagger}\right)  \right)  \right]  \right\}
\]
when, as is the case under our assumptions $E_{\widehat{\theta}\left(
\tau^{\dagger}\right)  }\left[  \left\{  Y\left(  \tau^{\dagger}\right)
-\widehat{B}\left(  \tau^{\dagger}\right)  \right\}  \left\{  A-\widehat{P}%
\right\}  ^{2}\overline{Z}_{k}^{T}\right]  $ and $E_{\widehat{\theta}\left(
\tau^{\dagger}\right)  }\left[  \left\{  Y\left(  \tau^{\dagger}\right)
-\widehat{B}\left(  \tau^{\dagger}\right)  \right\}  \left\{  A-\widehat{P}%
\right\}  ^{2}\overline{Z}_{k}^{T}\right]  $ are both order $O_{p}\left(
1\right)  ,$ but would be
\[
O_{p}\left\{  \left(  P-\widehat{P}\right)  \left(  B\left(  \tau^{\dagger
}\right)  -\widehat{B}\left(  \tau^{\dagger}\right)  \right)  \left[  \left(
P-\widehat{P}\right)  ^{2}+\left(  B\left(  \tau^{\dagger}\right)
-\widehat{B}\left(  \tau^{\dagger}\right)  \right)  ^{2}\right]  \right\}
\]
in the (unlikely) special case in which the semiparametric regression model
was precisely true, since then
\[
E_{\widehat{\theta}}\left[  Y\left(  \tau^{\dagger}\right)  A|X\right]
/\left\{  E_{\widehat{\theta}}\left[  Y\left(  \tau^{\dagger}\right)
|X\right]  E_{\widehat{\theta}}\left[  A|X\right]  \right\}  =1+o_{p}\left(
1\right)
\]
so
\[
E_{\widehat{\theta}\left(  \tau^{\dagger}\right)  }\left[  \left\{  Y\left(
\tau^{\dagger}\right)  -\widehat{B}\left(  \tau^{\dagger}\right)  \right\}
^{2}\left\{  A-\widehat{P}\right\}  \overline{Z}_{k}^{T}\right]  =O_{p}\left(
P-\widehat{P}\right)
\]
and
\[
E_{\widehat{\theta}\left(  \tau^{\dagger}\right)  }\left[  \left\{  Y\left(
\tau^{\dagger}\right)  -\widehat{B}\left(  \tau^{\dagger}\right)  \right\}
\left\{  A-\widehat{P}\right\}  ^{2}\overline{Z}_{k}^{T}\right]  =O_{p}\left(
B\left(  \tau^{\dagger}\right)  -\widehat{B}\right)  .
\]
The expression for $E_{\theta}\left[  \mathbb{U}_{2,2,\widetilde{\tau}%
_{k}\left(  \cdot\right)  }^{\ast,test,\perp}\left(  \widehat{\theta}\left(
\tau^{\dagger}\right)  ,\tau^{\dagger}\right)  \right]  $ is obtained from the
formula for $U_{2,2,\widetilde{\tau}_{k}\left(  \cdot\right)  }^{\ast
test,\perp}\left(  \widehat{\theta}\left(  \tau^{\dagger}\right)
,\tau^{\dagger}\right)  $ in Theorem \ref{aa} by noting that, because
$\mathbb{V}\left[  \left(  Y\left(  \tau^{\dagger}\right)  -\widehat{B}\left(
\tau^{\dagger}\right)  \right)  \left(  A-\widehat{P}\right)  \right]
=\psi_{1,k}\left(  \tau^{\dagger},\widehat{\theta}\left(  \tau^{\dagger
}\right)  \right)  $ and
\begin{gather*}
\widetilde{\psi}_{k}\left(  \tau^{\dagger},\theta\right)  =0,\\
E_{\theta}\left[  \left(  Y\left(  \tau^{\dagger}\right)  -\widehat{B}\left(
\tau^{\dagger}\right)  \right)  \left(  A-\widehat{P}\right)  \right] \\
=E_{\theta}\left[  \psi_{1,k}\left(  \tau^{\dagger},\widehat{\theta}\left(
\tau^{\dagger}\right)  \right)  -\widetilde{\psi}_{k}\left(  \tau^{\dagger
},\theta\right)  \right] \\
=E_{\theta}\left[  \left(  P-\widehat{P}\right)  \overline{Z}_{k}^{T}\right]
E_{\theta}\left[  \overline{Z}_{k}\overline{Z}_{k}^{T}\right]  E_{\theta
}\left[  \left(  B\left(  \tau^{\dagger}\right)  -\widehat{B}\left(
\tau^{\dagger}\right)  \right)  \overline{Z}_{k}\right]
\end{gather*}
by Theorem \ref{EBrate}.

Part (v): We first note that by Theorem \ref{eiet}, $\frac{E_{\theta}\left[
\tau_{2,k\ }\left(  \widehat{\theta}\right)  -\widetilde{\tau}_{k}\left(
\theta\right)  \right]  }{E_{\theta}\left[  \mathbb{IF}_{3,3,\widetilde{\tau
}_{k}\left(  \cdot\right)  }\left(  \widehat{\theta}\right)  \right]
}=-\left(  1+o_{p}\left(  1\right)  \right)  .$ {}It can be shown that
\begin{gather*}
\mathbb{IF}_{3,3,\widetilde{\tau}_{k}\left(  \cdot\right)  }\left(
\theta\right)  =\left(  -\psi_{\tau}\right)  ^{-1}\times\\
\mathbb{V}\left\{
\begin{array}
[c]{c}%
IF_{3,3,\psi\left(  \tau^{\dag},\cdot\right)  ,i_{1}i_{2}i_{3}}\left(
\theta\right)  +\frac{1}{6}\psi_{\backslash\tau^{3}}IF_{1,\tau\left(
\cdot\right)  ,i_{1}}\left(  \theta\right)  IF_{1,\tau\left(  \cdot\right)
,i_{2}}\left(  \theta\right)  IF_{1,\tau\left(  \cdot\right)  ,i_{3}}\left(
\theta\right) \\
+\frac{1}{3}\psi_{\backslash\tau^{2}}\left(
\begin{array}
[c]{c}%
IF_{1,\tau\left(  \cdot\right)  ,i_{1}}\left(  \theta\right)  IF_{2,2,\tau
\left(  \cdot\right)  ,i_{2}i_{3}}\left(  \theta\right) \\
+IF_{1,\tau\left(  \cdot\right)  ,i_{2}}\left(  \theta\right)  IF_{2,2,\tau
\left(  \cdot\right)  ,i_{1}i_{3}}\left(  \theta\right) \\
+IF_{1,\tau\left(  \cdot\right)  ,i_{3}}\left(  \theta\right)  IF_{2,2,\tau
\left(  \cdot\right)  ,i_{1}i_{2}}\left(  \theta\right)
\end{array}
\right) \\
+\frac{1}{3}\left(
\begin{array}
[c]{c}%
d_{1,\theta}\left(  \frac{\partial IF_{1,\psi\left(  \tau^{\dag},\cdot\right)
,i_{1}}\left(  \theta\right)  }{\partial\tau}\right)  IF_{2,2,\tau\left(
\cdot\right)  ,i_{2}i_{3}}\left(  \theta\right) \\
+d_{1,\theta}\left(  \frac{\partial IF_{1,\psi\left(  \tau^{\dag}%
,\cdot\right)  ,i_{2}}\left(  \theta\right)  }{\partial\tau}\right)
IF_{2,2,\tau\left(  \cdot\right)  ,i_{1}i_{3}}\left(  \theta\right) \\
+d_{1,\theta}\left(  \frac{\partial IF_{1,\psi\left(  \tau^{\dag}%
,\cdot\right)  ,i_{3}}\left(  \theta\right)  }{\partial\tau}\right)
IF_{2,2,\tau\left(  \cdot\right)  ,i_{1}i_{2}}\left(  \theta\right)
\end{array}
\right) \\
+\frac{1}{3}\left(
\begin{array}
[c]{c}%
IF_{1,\tau\left(  \cdot\right)  ,i_{1}}\left(  \theta\right)  d_{2,\theta
}\left(  \frac{\partial IF_{2,2,\psi\left(  \tau^{\dag},\cdot\right)
i_{2}i_{3}}\left(  \theta\right)  }{\partial\tau}\right) \\
+IF_{1,\tau\left(  \cdot\right)  ,i_{2}}\left(  \theta\right)  d_{2,\theta
}\left(  \frac{\partial IF_{2,2,\psi\left(  \tau^{\dag},\cdot\right)
i_{1}i_{3}}\left(  \theta\right)  }{\partial\tau}\right) \\
+IF_{1,\tau\left(  \cdot\right)  ,i_{3}}\left(  \theta\right)  d_{2,\theta
}\left(  \frac{\partial IF_{2,2,\psi\left(  \tau^{\dag},\cdot\right)
i_{1}i_{2}}\left(  \theta\right)  }{\partial\tau}\right)
\end{array}
\right) \\
+\frac{1}{6}\left(
\begin{array}
[c]{c}%
d_{1,\theta}\left(  \frac{\partial^{2}IF_{1,\psi\left(  \tau^{\dag}%
,\cdot\right)  ,i_{1}}\left(  \theta\right)  }{\partial\tau^{2}}\right)
IF_{1,\tau\left(  \cdot\right)  ,i_{2}}\left(  \theta\right)  IF_{1,\tau
\left(  \cdot\right)  ,i_{3}}\left(  \theta\right) \\
+d_{1,\theta}\left(  \frac{\partial^{2}IF_{1,\psi\left(  \tau^{\dag}%
,\cdot\right)  ,i_{2}}\left(  \theta\right)  }{\partial\tau^{2}}\right)
IF_{1,\tau\left(  \cdot\right)  ,i_{1}}\left(  \theta\right)  IF_{1,\tau
\left(  \cdot\right)  ,i_{3}}\left(  \theta\right) \\
+d_{1,\theta}\left(  \frac{\partial^{2}IF_{1,\psi\left(  \tau^{\dag}%
,\cdot\right)  ,i_{3}}\left(  \theta\right)  }{\partial\tau^{2}}\right)
IF_{1,\tau\left(  \cdot\right)  ,i_{1}}\left(  \theta\right)  IF_{1,\tau
\left(  \cdot\right)  ,i_{2}}\left(  \theta\right)
\end{array}
\right)
\end{array}
\right\}
\end{gather*}

From the fact that $\partial^{3}\widetilde{\psi}_{k}\left(  \tau
,\theta\right)  /\partial\tau^{3}=\partial^{2}\widetilde{\psi}_{k}\left(
\tau,\theta\right)  /\partial\tau^{2}=\partial^{2}IF_{1,\widetilde{\psi}%
_{k}\left(  \tau,\cdot\right)  }\left(  \widehat{\theta}\right)  /\partial
\tau^{2}\mathbb{=}0,$ we conclude that the order of $E_{\theta}\left[
\tau_{2,k\ }\left(  \widehat{\theta}\right)  -\widetilde{\tau}_{k}\left(
\theta\right)  \right]  $ is equal to the order of%
\begin{align*}
&  E_{\theta}\left[  IF_{3,3,\widetilde{\psi}_{k}\left(  \tau,\cdot\right)
}\left(  \widehat{\theta}\right)  \ \right]  +E_{\theta}\left[  d_{1,\theta
}\left(  \partial IF_{1,\widetilde{\psi}_{k}\left(  \tau,\cdot\right)
}\left(  \widehat{\theta}\right)  /\partial\tau\right)  \right]  E_{\theta
}\left[  IF_{2,2,\widetilde{\tau}_{k}\left(  \cdot\right)  }\left(
\widehat{\theta}\right)  \right] \\
&  +E_{\theta}\left[  d_{2,\theta}\left(  \partial IF_{2,2,\widetilde{\psi
}_{k}\left(  \tau,\cdot\right)  }\left(  \widehat{\theta}\right)
/\partial\tau\right)  \right]  E_{\theta}\left[  IF_{1,\widetilde{\tau}%
_{k}\left(  \cdot\right)  }\left(  \widehat{\theta}\right)  \right]
\end{align*}
Now
\begin{gather*}
E_{\theta}\left[  IF_{3,3,\widetilde{\psi}_{k}\left(  \tau^{\dagger}%
,\cdot\right)  }\left(  \widehat{\theta}\right)  \ \right]  =O_{p}\left[
\left(  P-\widehat{P}\right)  \left(  B\ -\widehat{B}\ \right)  \left(
\frac{g\left(  X\right)  }{\widehat{g}\left(  X\right)  }-1\right)  \right]
,\\
E_{\theta}\left[  d_{1,\theta}\left(  \partial IF_{1,\widetilde{\psi}%
_{k}\left(  \tau,\cdot\right)  }\left(  \widehat{\theta}\right)  /\partial
\tau\right)  \right]  =E_{\theta}\left[  \left(  A-\widehat{P}\right)
^{2}\right]  -E_{\widehat{\theta}}\left[  \left(  A-\widehat{P}\right)
^{2}\right] \\
=E_{\widehat{\theta}}\left[  \left(  \frac{f\left(  A|X\right)  }%
{\widehat{f}\left(  A|X\right)  }\frac{g\left(  X\right)  }{\widehat{g}\left(
X\right)  }-1\right)  \left(  A-\widehat{P}\right)  ^{2}\right] \\
=E_{\widehat{\theta}}\left[  \left(  \left[  \left(  \frac{f\left(
A|X\right)  }{\widehat{f}\left(  A|X\right)  }-1\right)  \frac{g\left(
X\right)  }{\widehat{g}\left(  X\right)  }\right]  \right)  \left(
A-\widehat{P}\right)  ^{2}\right] \\
+E_{\widehat{\theta}}\left[  \left(  \frac{g\left(  X\right)  }{\widehat{g}%
\left(  X\right)  }-1\right)  \left(  A-\widehat{P}\right)  ^{2}\right]
=O_{p}\left[  \left(  P-\widehat{P}\right)  +\left(  \frac{g\left(  X\right)
}{\widehat{g}\left(  X\right)  }-1\right)  \right]
\end{gather*}
by $A$ binary,
\begin{gather*}
E_{\theta}\left[  \partial IF_{2,2,\widetilde{\psi}_{k}\left(  \tau
,\cdot\right)  }\left(  \widehat{\theta}\right)  /\partial\tau\right]
=-\left\{  E_{\theta}\left[  \left(  A-\widehat{P}\right)  \right]  \right\}
^{2}=O_{p}\left[  \left(  P-\widehat{P}\right)  ^{2}\right] \\
E_{\theta}\left[  IF_{1,\widetilde{\tau}_{k}\left(  \cdot\right)  }\left(
\widehat{\theta}\right)  \right]  =E_{\theta}\left[  \left(  A-\widehat{P}%
\right)  \left(  Y\ -\widehat{B}\ \right)  \right] \\
=E_{\widehat{\theta}}\left[  \left(  P-\widehat{P}\right)  \left(
B\ -\widehat{B}\ \right)  \right]
\end{gather*}
and using $\partial^{2}\widetilde{\psi}_{k}\left(  \tau,\theta\right)
/\partial\tau^{2}=0$ and the explicit expression for $\mathbb{IF}%
_{2,2,\widetilde{\tau}_{k}\left(  \cdot\right)  }\left(  \theta\right)  $ in
Eq (\ref{q22})
\begin{align*}
&  E_{\theta}\left[  IF_{2,2,\widetilde{\tau}_{k}\left(  \cdot\right)
}\left(  \widehat{\theta}\right)  \right] \\
&  =O_{p}\left[  E_{\theta}\left[  IF_{2,2,\widetilde{\psi}_{k}\left(
\tau^{\dagger},\cdot\right)  }\left(  \widehat{\theta}\right)  \right]
\right]  +E_{\theta}\left[  \partial IF_{1,\widetilde{\psi}_{k}\left(
\tau,\cdot\right)  }\left(  \widehat{\theta}\right)  /\partial\tau\right] \\
&  \times E_{\theta}\left[  IF_{1,\widetilde{\tau}_{k}\left(  \cdot\right)
}\left(  \widehat{\theta}\right)  \right] \\
&  =O_{p}\left[  \left(  P-\widehat{P}\right)  \left(  B\ -\widehat{B}%
\ \right)  \right] \\
&  +O_{p}\left[  \left(  P-\widehat{P}\right)  +\left(  \frac{g\left(
X\right)  }{\widehat{g}\left(  X\right)  }-1\right)  \right]  \times\\
&  O_{p}\left[  \left(  P-\widehat{P}\right)  +\left(  \frac{g\left(
X\right)  }{\widehat{g}\left(  X\right)  }-1\right)  +\left(  B\ -\widehat{B}%
\ \right)  \right]
\end{align*}
Combining terms completes the proof.
\end{proof}

Next, we motivate and derive the formula of $\mathbb{IF}_{m,m,\tau\left(
\cdot\right)  }\left(  \theta\right)  $ for an assumed unique functional
$\tau\left(  \theta\right)  $ implicitly defined by $0=\psi\left(  \tau\left(
\theta\right)  ,\theta\right)  ,$ $\theta\in\Theta.$ \ To motivate the general
formula of $\mathbb{IF}_{m,m,\tau\left(  \cdot\right)  }\left(  \theta\right)
$ for arbitrary $m,$ we first consider the following formula for
$\mathbb{IF}_{44,\tau\left(  \cdot\right)  }$ which was derived from
$\mathbb{IF}_{1,\tau\left(  \cdot\right)  }$ following part 5c) of Theorem
\ref{eift}.%

\begin{align*}
&  -\psi_{\backslash\tau}\mathbb{IF}_{44,\tau\left(  \cdot\right)  }\\
&  =\mathbb{IF}_{4,4,\psi\left(  \tau,\cdot\right)  }\\
&  +\frac{1}{24}\mathbb{V}\left\{
\begin{array}
[c]{c}%
\psi_{\backslash\tau^{4}}IF_{1,\tau}\left(  i\right)  IF_{1,\tau}\left(
j\right)  IF_{1,\tau}\left(  s\right)  IF_{1,\tau}\left(  t\right) \\
+d_{1,\theta}\left(  \frac{\partial IF_{1,\tau}\left(  i\right)  }%
{\partial\tau^{3}}\right)  IF_{1,\tau}\left(  j\right)  IF_{1,\tau}\left(
s\right)  IF_{1,\psi\left(  \tau,\cdot\right)  }\left(  t\right) \\
+IF_{1,\tau}\left(  i\right)  d_{1,\theta}\left(  \frac{\partial IF_{1,\tau
}\left(  j\right)  }{\partial\tau^{3}}\right)  IF_{1,\tau}\left(  s\right)
IF_{1,\psi\left(  \tau,\cdot\right)  }\left(  t\right) \\
+IF_{1,\tau}\left(  i\right)  IF_{1,\tau}\left(  j\right)  d_{1,\theta}\left(
\frac{\partial IF_{1,\tau}\left(  s\right)  }{\partial\tau^{3}}\right)
IF_{1,\psi\left(  \tau,\cdot\right)  }\left(  t\right) \\
+IF_{1,\tau}\left(  i\right)  IF_{1,\tau}\left(  j\right)  IF_{1,\tau}\left(
s\right)  d_{1,\theta}\left(  \frac{\partial IF_{1,\psi\left(  \tau
,\cdot\right)  }\left(  t\right)  }{\partial\tau^{3}}\right)
\end{array}
\right\} \\
&  +\frac{1}{2}\mathbb{V}\left\{
\begin{array}
[c]{c}%
\psi_{\backslash\tau^{3}}IF_{22,\tau}\left(  ij\right)  IF_{1,\tau}\left(
s\right)  IF_{1,\tau}\left(  t\right) \\
+d_{1,\theta}\left(  \frac{\partial IF_{1,\psi}\left(  t\right)  }%
{\partial\tau^{2}}\right)  IF_{22,\tau}\left(  ij\right)  IF_{1,\tau}\left(
s\right) \\
+d_{1,\theta}\left(  \frac{\partial IF_{1,\psi}\left(  s\right)  }%
{\partial\tau^{2}}\right)  IF_{22,\tau}\left(  ij\right)  IF_{1,\tau}\left(
t\right) \\
+d_{1,\theta}\left(  \frac{\partial IF_{22,\psi}\left(  ij\right)  }%
{\partial\tau^{2}}\right)  IF_{1,\tau}\left(  t\right)  IF_{1,\tau}\left(
s\right)
\end{array}
\right\} \\
&  +\frac{1}{2}\mathbb{V}\left\{
\begin{array}
[c]{c}%
\psi_{\backslash\tau^{2}}IF_{22,\tau}\left(  ij\right)  IF_{22,\tau}\left(
st\right)  +\\
d_{2,\theta}\left(  \frac{\partial IF_{22,\psi}\left(  ij\right)  }%
{\partial\tau}\right)  IF_{22,\tau}\left(  st\right) \\
+IF_{22,\tau}\left(  ij\right)  d_{2,\theta}\left(  \frac{\partial
IF_{22,\psi}\left(  st\right)  }{\partial\tau}\right)
\end{array}
\right\} \\
&  +\mathbb{V}\left\{
\begin{array}
[c]{c}%
\psi_{\backslash\tau^{2}}IF_{1,\tau}\left(  i\right)  IF_{33,\tau}\left(
jst\right)  +\\
d_{1,\theta}\left(  \frac{\partial IF_{1,\psi}\left(  i\right)  }{\partial
\tau}\right)  IF_{33,\tau}\left(  jst\right)  +d_{3,\theta}\left(
\frac{\partial IF_{33,\psi}\left(  jst\right)  }{\partial\tau}\right)
IF_{1,\tau}\left(  i\right)
\end{array}
\right\}
\end{align*}

This formula for $-\psi_{\backslash\tau}\mathbb{IF}_{44,\tau\left(
\cdot\right)  }$ reveals a very nice pattern. Note that in addition to
$\mathbb{IF}_{4,4,\psi\left(  \tau,\cdot\right)  },$ the RHS consists of four
pieces with leading terms $\psi_{\backslash\tau^{4}}IF_{1,\tau}\left(
i\right)  IF_{1,\tau}\left(  j\right)  IF_{1,\tau}\left(  s\right)
IF_{1,\tau}\left(  t\right)  ,$ $\psi_{\backslash\tau^{3}}IF_{22,\tau}\left(
ij\right)  IF_{1,\tau}\left(  s\right)  IF_{1,\tau}\left(  t\right)  ,$
$~\psi_{\backslash\tau^{2}}IF_{22,\tau}\left(  ij\right)  IF_{22,\tau}\left(
st\right)  ,$ and $\psi_{\backslash\tau^{2}}IF_{1,\tau}\left(  i\right)
IF_{33,\tau}\left(  jst\right)  $ respectively. \ Within each piece, the
remaining terms can be constructed simply by applying the algorithm
\textbf{TE} described below.

\begin{algorithm}
\textbf{TE }i) Remove the partial derivative of $\psi$ wrt. $\tau;$ ii) for
each factor of the leading term, replace the $\tau\left(  \cdot\right)  $ in
the subscript with $\psi\left(  \tau,\cdot\right)  ;$ and iii) partially
differentiate the newly replaced factor wrt. $\tau$, and make the partial
derivative degenerate. Here the order of the partial derivative equals the
total number of the factors in the leading term minus 1.
\end{algorithm}

Moreover, each piece corresponds to one of the following ways of representing
the number 4 as a sum: $1+1+1+1,$ $1+1+2$, $2+2$, and $1+3.$ Furthermore,
assume $m$ can be written as $m=%
%TCIMACRO{\dsum \limits_{r=1}^{m-1}}%
%BeginExpansion
{\displaystyle\sum\limits_{r=1}^{m-1}}
%EndExpansion
\varkappa_{m,r}\times r$ $\ \ $with $\ \varkappa_{m,r}\geq0,$ e.g.,
$4=4\times1+0\times2+0\times3,$ then the number in front of the piece
corresponding to the sum representation $\left(  \varkappa_{m,1}%
,\varkappa_{m,2},...,\varkappa_{m,m-1}\right)  $ equals $\left(
%TCIMACRO{\dprod \limits_{r=1}^{m-1}}%
%BeginExpansion
{\displaystyle\prod\limits_{r=1}^{m-1}}
%EndExpansion
\varkappa_{m,r}!\right)  ^{-1}.$ Note that $0!=1.$

Now we are ready to generalize this expression to arbitrary $m,$ and prove it
by induction.

\begin{lemma}
\label{testing} Let $\tau\left(  \theta\right)  $ be the assumed unique
functional defined by $0=\psi\left(  \tau\left(  \theta\right)  ,\theta
\right)  ,\theta\in\Theta.$ Then, for $\theta\in\Theta\left(  \tau^{\dagger
}\right)  $, {}whenever $\mathbb{IF}_{m,\psi\left(  \tau^{\dagger}%
,\cdot\right)  }\left(  \theta\right)  $ and $\mathbb{IF}_{\ m,\tau\left(
\cdot\right)  }\left(  \theta\right)  $ exist ,
\begin{gather}
\mathbb{IF}_{1,\tau\left(  \cdot\right)  }\left(  \theta\right)
=-\psi_{\backslash\tau}^{-1}\mathbb{IF}_{1,\psi\left(  \tau,\cdot\right)
}\left(  \theta\right) \label{ts1}\\
-\psi_{\backslash\tau}\mathbb{IF}_{m,m,\tau\left(  \cdot\right)  }\left(
\theta\right) \nonumber\\
=\mathbb{IF}_{m,m,\psi\left(  \tau,\cdot\right)  }\left(  \theta\right)
\nonumber\\
+\sum_{\left(  \varkappa_{m,1},\varkappa_{m,2},...,\varkappa_{m,m-1}\right)
}\left(
%TCIMACRO{\dprod \limits_{r=1}^{m-1}}%
%BeginExpansion
{\displaystyle\prod\limits_{r=1}^{m-1}}
%EndExpansion
\varkappa_{m,r}!\right)  ^{-1}\mathbb{V}\left\{
\begin{array}
[c]{c}%
\psi_{\backslash\tau^{sum\left(  \overline{\varkappa}_{m}\right)  }}%
%TCIMACRO{\dprod \limits_{r=1}^{m-1}}%
%BeginExpansion
{\displaystyle\prod\limits_{r=1}^{m-1}}
%EndExpansion
\left(
%TCIMACRO{\dprod \limits_{s=1}^{\varkappa_{m,r}}}%
%BeginExpansion
{\displaystyle\prod\limits_{s=1}^{\varkappa_{m,r}}}
%EndExpansion
IF_{r,r,\tau\left(  \cdot\right)  ,\overline{i}_{r,s}}^{\left(  s\right)
}\left(  \theta\right)  \right) \\
+%
%TCIMACRO{\dsum \limits_{r,s}^{m-1,\varkappa_{m,r}}}%
%BeginExpansion
{\displaystyle\sum\limits_{r,s}^{m-1,\varkappa_{m,r}}}
%EndExpansion
\left(
\begin{array}
[c]{c}%
d_{r,\theta}\left(  \frac{\partial^{\left[  sum\left(  \overline{\varkappa
}_{m}\right)  -1\right]  }\left(  IF_{r,r,\psi\left(  \tau,\cdot\right)
,\overline{i}_{r,s}}^{\left(  s\right)  }\right)  }{\partial\tau^{\left[
sum\left(  \overline{\varkappa}_{m}\right)  -1\right]  }}\right)  \times\\%
%TCIMACRO{\dprod \limits_{\left(  r_{1},s_{1}\right)  \neq\left(  r,s\right)
%}}%
%BeginExpansion
{\displaystyle\prod\limits_{\left(  r_{1},s_{1}\right)  \neq\left(
r,s\right)  }}
%EndExpansion
IF_{r_{1},r_{1},\tau\left(  \cdot\right)  ,\overline{i}_{r_{1},s_{1}}%
}^{\left(  s\right)  }\left(  \theta\right)
\end{array}
\right)
\end{array}
\right\} \label{tsm}%
\end{gather}

where
\begin{align*}
\overline{\varkappa}_{m}  &  \equiv\left(  \varkappa_{m,1},\varkappa
_{m,2},...,\varkappa_{m,m-1}\right)  ,\text{ and }m=%
%TCIMACRO{\dsum \limits_{r=1}^{m-1}}%
%BeginExpansion
{\displaystyle\sum\limits_{r=1}^{m-1}}
%EndExpansion
\varkappa_{m,r}\times r,\text{ }\varkappa_{m,r}\geq0\text{ \ \ }\forall\text{
}1\leq r\leq m-1\\
sum\left(  \overline{\varkappa}_{m}\right)   &  \equiv\sum_{r=1}%
^{m-1}\varkappa_{m,r}\\
\overline{i}_{r,s}  &  \equiv\left(  i_{l_{r,s}+1},...,i_{l_{r,s}+r}\right)
\text{ \ where \ }l_{r,s}\equiv\sum_{q=1}^{r-1}\varkappa_{m,q}\times q+\left(
s-1\right)  r
\end{align*}

Note that
\[
\mathbb{V}\left\{
%TCIMACRO{\dsum \limits_{r,s}^{m-1,\varkappa_{m,r}}}%
%BeginExpansion
{\displaystyle\sum\limits_{r,s}^{m-1,\varkappa_{m,r}}}
%EndExpansion
\left(
\begin{array}
[c]{c}%
d_{r,\theta}\left(  \frac{\partial^{\left[  sum\left(  \overline{\varkappa
}_{m}\right)  -1\right]  }\left(  IF_{r,r,\psi\left(  \tau,\cdot\right)
,\overline{i}_{r,s}}^{\left(  s\right)  }\right)  }{\partial\tau^{\left[
sum\left(  \overline{\varkappa}_{m}\right)  -1\right]  }}\right)  \times\\%
%TCIMACRO{\dprod \limits_{\left(  r_{1},s_{1}\right)  \neq\left(  r,s\right)
%}}%
%BeginExpansion
{\displaystyle\prod\limits_{\left(  r_{1},s_{1}\right)  \neq\left(
r,s\right)  }}
%EndExpansion
IF_{r_{1},r_{1},\tau\left(  \cdot\right)  ,\overline{i}_{r_{1},s_{1}}%
}^{\left(  s\right)  }\left(  \theta\right)
\end{array}
\right)  \right\}
\]
can be constructed by applying Algorithm \textbf{TE }to the leading term
\[
\mathbb{V}\left\{  \psi_{\backslash\tau^{sum\left(  \overline{\varkappa}%
_{m}\right)  }}%
%TCIMACRO{\dprod \limits_{r=1}^{m-1}}%
%BeginExpansion
{\displaystyle\prod\limits_{r=1}^{m-1}}
%EndExpansion
\left(
%TCIMACRO{\dprod \limits_{s=1}^{\varkappa_{m,r}}}%
%BeginExpansion
{\displaystyle\prod\limits_{s=1}^{\varkappa_{m,r}}}
%EndExpansion
IF_{r,r,\tau\left(  \cdot\right)  ,\overline{i}_{r,s}}^{\left(  s\right)
}\left(  \theta\right)  \right)  \right\}  .
\]

\end{lemma}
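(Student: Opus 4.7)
The plan is to prove Lemma~\ref{testing} by induction on $m$. The base case $m=1$ is just Theorem~\ref{tt}(ii), and the case $m=2$ is already derived in the proof of Theorem~\ref{tt}(iii) (reproduced earlier in this appendix). For the inductive step I would assume the claimed formula for all orders $2,\dots,m-1$ and derive the formula for $m$ by invoking Part~5c(i) of Theorem~\ref{eift}: in a locally nonparametric model the existence of $\mathbb{IF}_{m-1,m-1,\tau(\cdot)}$ together with the first-order influence function of its symmetric kernel yields $\mathbb{IF}_{m,m,\tau(\cdot)}$ explicitly via the degenerating operator $d_{m,\theta}$ applied to $if_{1,if_{m-1,m-1,\tau(\cdot)}(\cdot)}(O_{i_m};\theta)$.

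The core analytic step is to produce a first-order influence function for the kernel $if_{m-1,m-1,\tau(\cdot)}$ supplied by the induction hypothesis. I would do this by repeating the trick used for $m=2$: start from the identity $\psi(\tau(\theta),\theta)=0$, specialize to a suitably smooth one-dimensional submodel $\widetilde\theta(\zeta)$ whose range is \emph{not} restricted to $\Theta(\tau^\dagger)$, and differentiate the inductive formula term-by-term in $\zeta$. Each factor of the form $\mathbb{IF}_{r,r,\tau(\cdot)}$ or $\psi_{\backslash \tau^{s}}$ contributes via the product rule, and I would rewrite the $\zeta$-derivatives using (a) the defining relation $\partial\tau/\partial\zeta = E_\theta[if_{1,\tau(\cdot)}\,S_\zeta]$, (b) the extended information equality $\partial_\zeta E_\theta[if_{r,r,\psi(\tau,\cdot)}\,\cdot\,] = E_\theta[d_{1,\theta}(\partial if_{r,r,\psi(\tau,\cdot)}/\partial\tau \cdot \partial\tau/\partial\zeta + \partial_\zeta\text{-terms})\cdot\,]$, and (c) the inductive formulas already available for $\mathbb{IF}_{r,r,\tau(\cdot)}$, $r<m$. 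Because the submodel score $S_\zeta$ is arbitrary, matching coefficients of $S_\zeta$ on both sides gives a closed-form expression for $if_{1,if_{m-1,m-1,\tau(\cdot)}(\cdot)}$, and hence, after symmetrization and application of $d_{m,\theta}$, for $if_{m,m,\tau(\cdot)}$.

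What remains is to verify that the expression produced by this recursive mechanism coincides with the combinatorial formula claimed in~(\ref{tsm}). This is the bookkeeping heart of the proof. I would organize the terms by the composition structure of the $m$-th total derivative of $\psi(\tau(\theta),\theta)$: by Fa\`a di Bruno's formula, each $m$-th derivative decomposes into a sum indexed by partitions $(\varkappa_{m,1},\dots,\varkappa_{m,m-1})$ of $m$ with $m=\sum r\varkappa_{m,r}$, weighted by $\bigl(\prod_r \varkappa_{m,r}!\bigr)^{-1}$. Each partition contributes a ``leading'' term with $\varkappa_{m,r}$ factors of $\mathbb{IF}_{r,r,\tau(\cdot)}$ multiplied by $\psi_{\backslash\tau^{\mathrm{sum}(\overline\varkappa_m)}}$. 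The inductive step produces exactly one extra derivative in $\zeta$, and I would show this derivative, when propagated through $\psi$ and through each $\mathbb{IF}_{r,r,\tau(\cdot)}$ factor, reproduces precisely the ``correction'' terms generated by Algorithm~\textbf{TE}. A short combinatorial check then confirms that both the multinomial coefficient and the set of correction terms at order $m$ match those predicted by extending the partition formula one level further.

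The main obstacle is this last combinatorial matching: in principle the recursion could produce additional cross terms (for example, a $d_{r,\theta}(\partial\,\cdot/\partial\tau^s)$ factor multiplied by several $\mathbb{IF}_{r_i,r_i,\tau(\cdot)}$ factors for $s\ge 2$). I expect to show that such terms either vanish by the extended information equality (Theorem~\ref{eiet}) applied to $\mathbb{IF}_{r,r,\psi(\tau,\cdot)}$ on $\Theta(\tau^\dagger)$, or collapse into the canonical form in~(\ref{tsm}) after symmetrizing under the $U$-statistic operator $\mathbb{V}$ and using the orthogonality of degenerate $U$-statistics of different orders. Once this bookkeeping is in place, the proof of the induction step closes immediately, and the formula $\mathbb{IF}_{m,m,\tau(\cdot)} = -\psi_{\backslash\tau}^{-1}\{\mathbb{IF}_{m,m,\psi(\tau,\cdot)} + \mathbb{Q}_{m,m,\tau(\cdot)}\}$ of Theorem~\ref{tt}(iii) follows with $\mathbb{Q}_{m,m}$ identified as the sum over partitions in~(\ref{tsm}).
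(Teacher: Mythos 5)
Your plan follows essentially the same route as the paper's proof: induction on $m$, term-by-term differentiation of the inductive formula along an arbitrary one-dimensional submodel of the full model (not restricted to $\Theta(\tau^{\dagger})$) to obtain the first-order influence function of the kernel, degeneration via part 5c of Theorem \ref{eift}, and partition-indexed bookkeeping with the weights $\bigl(\prod_{r}\varkappa_{m,r}!\bigr)^{-1}$. The only step where the paper does more than you sketch is the final combinatorial check, carried out by relating each composition of $m+1$ to adjacent compositions of $m$ and verifying the multiplicity identity $\varkappa_{m+1,1}+\sum_{r\geq 2} r\,\varkappa_{m+1,r}=m+1$, which cancels the factor $m+1$ coming from part 5c; this is precisely the bookkeeping you anticipate, and the cross terms you worry about do not vanish but are exactly the Algorithm \textbf{TE} corrections.
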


\begin{proof}
Eq. (\ref{ts1}) has been proved in Theorem \ref{tt}. \ Next, we shall prove
eq. (\ref{tsm}) by induction. \ The case where $m=2$ was proved in Theorem
\ref{tt} as well. \ Now, we assume eq. (\ref{tsm}) holds for $m\,\ $and prove
it is also true for $m+1$ by part 5c) of Theorem \ref{eift}. Specifically, by
induction assumption,
\begin{align*}
&  -\psi_{\backslash\tau}if_{m,m,\tau\left(  \cdot\right)  }^{\left(
s\right)  }\left(  \mathbf{o}_{i_{m}};\theta\right) \\
&  =\ if_{m,m,\psi\left(  \tau,\cdot\right)  }^{\left(  s\right)  }\left(
\mathbf{o}_{i_{m}};\theta\right)  +\\
&  \sum_{\left(  \varkappa_{m,1},\varkappa_{m,2},...,\varkappa_{m,m-1}\right)
}\left(
%TCIMACRO{\dprod \limits_{r=1}^{m-1}}%
%BeginExpansion
{\displaystyle\prod\limits_{r=1}^{m-1}}
%EndExpansion
\varkappa_{m,r}!\right)  ^{-1}\times\\
&  \left\{
\begin{array}
[c]{c}%
\psi_{\backslash\tau^{sum\left(  \overline{\varkappa}_{m}\right)  }}%
%TCIMACRO{\dprod \limits_{r=1}^{m-1}}%
%BeginExpansion
{\displaystyle\prod\limits_{r=1}^{m-1}}
%EndExpansion
\left(
%TCIMACRO{\dprod \limits_{s=1}^{\varkappa_{m,r}}}%
%BeginExpansion
{\displaystyle\prod\limits_{s=1}^{\varkappa_{m,r}}}
%EndExpansion
IF_{r,r,\tau\left(  \cdot\right)  ,\overline{i}_{r,s}}^{\left(  s\right)
}\left(  \theta\right)  \right) \\
+%
%TCIMACRO{\dsum \limits_{r,s}^{m-1,\varkappa_{m,r}}}%
%BeginExpansion
{\displaystyle\sum\limits_{r,s}^{m-1,\varkappa_{m,r}}}
%EndExpansion
\left(
\begin{array}
[c]{c}%
d_{r,\theta}\left(  \frac{\partial^{\left[  sum\left(  \overline{\varkappa
}_{m}\right)  -1\right]  }\left(  IF_{r,r,\psi\left(  \tau,\cdot\right)
,\overline{i}_{r,s}}^{\left(  s\right)  }\right)  }{\partial\tau^{\left[
sum\left(  \overline{\varkappa}_{m}\right)  -1\right]  }}\right)  \times\\%
%TCIMACRO{\dprod \limits_{\left(  r_{1},s_{1}\right)  \neq\left(  r,s\right)
%}}%
%BeginExpansion
{\displaystyle\prod\limits_{\left(  r_{1},s_{1}\right)  \neq\left(
r,s\right)  }}
%EndExpansion
IF_{r_{1},r_{1},\tau\left(  \cdot\right)  ,\overline{i}_{r_{1},s_{1}}%
}^{\left(  s\right)  }\left(  \theta\right)
\end{array}
\right)
\end{array}
\right\}
\end{align*}

Consider any sufficiently smooth $1-$dimensional parametric submodel
$\theta_{t}$ mapping $R$ to $\Theta$. \ For any $\theta$ in the range of
$\theta_{t},$ differentiate both sides of the above equation w.r.t $t$ and
evaluate at $t^{\ast}\equiv$ $\theta_{t}^{-1}\left(  \theta\right)  ,$ then
\begin{align}
&  -\psi_{\backslash\tau}if_{m,m,\tau\left(  \cdot\right)  ,\backslash
t}^{\left(  s\right)  }\left(  \mathbf{o}_{i_{m}};\theta\right)  -\left(
\psi_{\backslash\tau^{2}}\tau_{\backslash t}\left(  \theta\right)
+\frac{\partial}{\partial t}\psi_{\backslash\tau}\left(  \tau,\theta
_{t}\right)  |_{t=t^{\ast}}\right)  if_{m,m,\tau\left(  \cdot\right)
}^{\left(  s\right)  }\left(  \mathbf{o}_{i_{m}};\theta\right) \nonumber\\
&  =\ \frac{\partial IF_{m,m,\psi\left(  \tau,\cdot\right)  }^{\left(
s\right)  }\left(  \tau,\theta\right)  }{\partial\tau}\tau_{\backslash
t}\left(  \theta\right)  +\ \frac{\partial IF_{m,m,\psi\left(  \tau
,\cdot\right)  }^{\left(  s\right)  }\left(  \tau,\theta_{t}\right)
}{\partial t}|_{t=t^{\ast}}\nonumber\\
&  +\sum_{\left(  \varkappa_{m,1},\varkappa_{m,2},...,\varkappa_{m,m-1}%
\right)  }\left(
%TCIMACRO{\dprod \limits_{r=1}^{m-1}}%
%BeginExpansion
{\displaystyle\prod\limits_{r=1}^{m-1}}
%EndExpansion
\varkappa_{m,r}!\right)  ^{-1}\times\nonumber\\
&  \left\{
\begin{array}
[c]{c}%
\left[
\begin{array}
[c]{c}%
\psi_{\backslash\tau^{sum\left(  \overline{\varkappa}_{m}\right)  +1}}%
\tau_{\backslash t}\left(  \theta\right) \\
+\frac{\partial\psi_{\backslash\tau^{sum\left(  \overline{\varkappa}%
_{m}\right)  }}\left(  \tau,\theta_{t}\right)  }{\partial t}|_{t=t^{\ast}}%
\end{array}
\right]
%TCIMACRO{\dprod \limits_{r=1}^{m-1}}%
%BeginExpansion
{\displaystyle\prod\limits_{r=1}^{m-1}}
%EndExpansion
\left(
%TCIMACRO{\dprod \limits_{s=1}^{\varkappa_{m,r}}}%
%BeginExpansion
{\displaystyle\prod\limits_{s=1}^{\varkappa_{m,r}}}
%EndExpansion
IF_{r,r,\tau\left(  \cdot\right)  ,\overline{i}_{r,s}}^{\left(  s\right)
}\left(  \theta\right)  \right) \\
+\psi_{\backslash\tau^{sum\left(  \overline{\varkappa}_{m}\right)  }}%
%TCIMACRO{\dsum \limits_{\left(  r,s\right)  }^{sum\left(  \overline{\varkappa
%}_{m}\right)  }}%
%BeginExpansion
{\displaystyle\sum\limits_{\left(  r,s\right)  }^{sum\left(  \overline
{\varkappa}_{m}\right)  }}
%EndExpansion
IF_{r,r,\tau\left(  \cdot\right)  ,\overline{i}_{r,s},\backslash t}^{\left(
s\right)  }\left(  \theta\right)
%TCIMACRO{\dprod \limits_{\left(  r_{1},s_{1}\right)  \neq\left(  r,s\right)
%}}%
%BeginExpansion
{\displaystyle\prod\limits_{\left(  r_{1},s_{1}\right)  \neq\left(
r,s\right)  }}
%EndExpansion
IF_{r_{1},r_{1},\tau\left(  \cdot\right)  ,\overline{i}_{r_{1},s_{1}}%
}^{\left(  s\right)  }\left(  \theta\right) \\
+%
%TCIMACRO{\dsum \limits_{\left(  r,s\right)  }^{sum\left(  \overline{\varkappa
%}_{m}\right)  }}%
%BeginExpansion
{\displaystyle\sum\limits_{\left(  r,s\right)  }^{sum\left(  \overline
{\varkappa}_{m}\right)  }}
%EndExpansion
\left\{
\begin{array}
[c]{c}%
\left[
\begin{array}
[c]{c}%
\frac{\partial^{sum\left(  \overline{\varkappa}_{m}\right)  }\left(
IF_{r,r,\psi\left(  \tau,\cdot\right)  ,\overline{i}_{r,s}}^{\left(  s\right)
}\right)  }{\partial\tau^{sum\left(  \overline{\varkappa}_{m}\right)  }}%
\tau_{\backslash t}\left(  \theta\right) \\
+\frac{\partial^{sum\left(  \overline{\varkappa}_{m}\right)  }\left(
IF_{r,r,\psi\left(  \tau,\cdot\right)  ,\overline{i}_{r,s}}^{\left(  s\right)
}\left(  \tau,\theta_{t}\right)  \right)  }{\partial\tau^{\left[  sum\left(
\overline{\varkappa}_{m}\right)  -1\right]  }\partial t}|_{t=t^{\ast}}%
\end{array}
\right] \\
\times%
%TCIMACRO{\dprod \limits_{\left(  r_{1},s_{1}\right)  \neq\left(  r,s\right)
%}}%
%BeginExpansion
{\displaystyle\prod\limits_{\left(  r_{1},s_{1}\right)  \neq\left(
r,s\right)  }}
%EndExpansion
IF_{r_{1},r_{1},\tau\left(  \cdot\right)  ,\overline{i}_{r_{1},s_{1}}%
}^{\left(  s\right)  }\left(  \theta\right) \\
+\frac{\partial^{\left[  sum\left(  \overline{\varkappa}_{m}\right)
-1\right]  }\left(  IF_{r,r,\psi\left(  \tau,\cdot\right)  ,\overline{i}%
_{r,s}}^{\left(  s\right)  }\right)  }{\partial\tau^{\left[  sum\left(
\overline{\varkappa}_{m}\right)  -1\right]  }}\\
\times%
%TCIMACRO{\dsum \limits_{\left(  r_{1},s_{1}\right)  \neq\left(  r,s\right)
%}}%
%BeginExpansion
{\displaystyle\sum\limits_{\left(  r_{1},s_{1}\right)  \neq\left(  r,s\right)
}}
%EndExpansion
IF_{r_{1},r_{1},\tau\left(  \cdot\right)  ,\overline{i}_{r_{1},s_{1}%
},\backslash t}^{\left(  s\right)  }\left(  \theta\right) \\
\times%
%TCIMACRO{\dprod \limits_{\left(  r_{2},s_{2}\right)  \neq\left(  r_{1}%
%,s_{1}\right)  \neq\left(  r,s\right)  }}%
%BeginExpansion
{\displaystyle\prod\limits_{\left(  r_{2},s_{2}\right)  \neq\left(
r_{1},s_{1}\right)  \neq\left(  r,s\right)  }}
%EndExpansion
IF_{r_{2},r_{2},\tau\left(  \cdot\right)  ,\overline{i}_{r_{2},s_{2}}%
}^{\left(  s\right)  }\left(  \theta\right)
\end{array}
\right\}
\end{array}
\right\} \label{m2m+1}%
\end{align}

Note that $\frac{\partial^{p}\left(  IF_{r,r,\psi\left(  \tau,\cdot\right)
,\overline{i}_{r,s}}^{\left(  s\right)  }\left(  \tau,\theta_{t}\right)
\right)  }{\partial\tau^{p-1}\partial t}|_{t=t^{\ast}}$ is the derivative of
$\frac{\partial^{p-1}\left(  IF_{r,r,\psi\left(  \tau,\cdot\right)
,\overline{i}_{r,s}}^{\left(  s\right)  }\right)  }{\partial\tau^{p-1}}$
w.r.t. $t$ while fixing $\tau$ at $\tau\left(  \theta\right)  .$ Therefore,
\begin{align*}
&  \frac{\partial^{p}\left(  IF_{r,r,\psi\left(  \tau,\cdot\right)
,\overline{i}_{r,s}}^{\left(  s\right)  }\left(  \tau,\theta_{t}\right)
\right)  }{\partial\tau^{p-1}\partial t}|_{t=t^{\ast}}\\
&  =\frac{\partial^{p-1}}{\partial\tau^{p-1}}\left(  \frac{\partial}{\partial
t}IF_{r,r,\psi\left(  \tau,\cdot\right)  ,\overline{i}_{r,s}}^{\left(
s\right)  }\left(  \tau,\theta_{t}\right)  |_{t=t^{\ast}}\right) \\
&  =\frac{\partial^{p-1}}{\partial\tau^{p-1}}E_{\theta}\left[
if_{1,IF_{r,r,\psi\left(  \tau,\cdot\right)  ,\overline{i}_{r,s}}^{\left(
s\right)  }\left(  \cdot\right)  }\left(  O_{m+1};\theta\right)
s_{1,t}\left(  O_{m+1}\right)  \right] \\
&  =E_{\theta}\left[  \frac{\partial^{p-1}if_{1,IF_{r,r,\psi\left(  \tau
,\cdot\right)  ,\overline{i}_{r,s}}^{\left(  s\right)  }\left(  \cdot\right)
}\left(  O_{m+1};\theta\right)  }{\partial\tau^{p-1}}s_{1,t}\left(
O_{m+1}\right)  \right]  ,
\end{align*}

and%
\begin{align*}
&  -\psi_{\backslash\tau}if_{1,if_{m,m,\tau\left(  \cdot\right)  ,\backslash
t}^{\left(  s\right)  }\left(  \mathbf{o}_{i_{m}};\cdot\right)  }\left(
O_{i_{m+1}};\theta\right) \\
&  =if_{1,if_{m,m,\psi\left(  \tau,\cdot\right)  }^{\left(  s\right)  }\left(
\mathbf{o}_{i_{m}};\cdot\right)  }\left(  O_{i_{m+1}};\theta\right) \\
&  +\left\{
\begin{array}
[c]{c}%
\ \psi_{\backslash\tau^{2}}IF_{1,\tau\left(  \cdot\right)  ,i_{m+1}}\left(
\theta\right)  if_{m,m,\tau\left(  \cdot\right)  }^{\left(  s\right)  }\left(
\mathbf{o}_{i_{m}};\theta\right) \\
+\frac{\partial}{\partial\tau}IF_{1,\psi\left(  \tau,\cdot\right)  ,i_{m+1}%
}if_{m,m,\tau\left(  \cdot\right)  }^{\left(  s\right)  }\left(
\mathbf{o}_{i_{m}};\theta\right) \\
+IF_{m,m,\psi\left(  \tau,\cdot\right)  ,\backslash\tau}^{\left(  s\right)
}IF_{1,\tau\left(  \cdot\right)  ,i_{m+1}}%
\end{array}
\right\} \\
&  +\sum_{\left(  \varkappa_{m,1},\varkappa_{m,2},...,\varkappa_{m,m-1}%
\right)  }\left(
%TCIMACRO{\dprod \limits_{r=1}^{m-1}}%
%BeginExpansion
{\displaystyle\prod\limits_{r=1}^{m-1}}
%EndExpansion
\varkappa_{m,r}!\right)  ^{-1}\times\\
&  \left\{
\begin{array}
[c]{c}%
\left\{
\begin{array}
[c]{c}%
\left\{
\begin{array}
[c]{c}%
\left[
\begin{array}
[c]{c}%
\psi_{\backslash\tau^{sum\left(  \overline{\varkappa}_{m}\right)  +1}%
}IF_{1,\tau\left(  \cdot\right)  ,i_{m+1}}\left(  \theta\right) \\
+\frac{\partial^{sum\left(  \overline{\varkappa}_{m}\right)  }}{\partial
\tau^{sum\left(  \overline{\varkappa}_{m}\right)  }}\left(  IF_{1,\psi\left(
\tau,\cdot\right)  ,i_{m+1}}\right)
\end{array}
\right] \\
\times%
%TCIMACRO{\dprod \limits_{r=1}^{m-1}}%
%BeginExpansion
{\displaystyle\prod\limits_{r=1}^{m-1}}
%EndExpansion
\left(
%TCIMACRO{\dprod \limits_{s=1}^{\varkappa_{m,r}}}%
%BeginExpansion
{\displaystyle\prod\limits_{s=1}^{\varkappa_{m,r}}}
%EndExpansion
IF_{r,r,\tau\left(  \cdot\right)  ,\overline{i}_{r,s}}^{\left(  s\right)
}\left(  \theta\right)  \right)
\end{array}
\right\} \\
+%
%TCIMACRO{\dsum \limits_{\left(  r,s\right)  }^{sum\left(  \overline{\varkappa
%}_{m}\right)  }}%
%BeginExpansion
{\displaystyle\sum\limits_{\left(  r,s\right)  }^{sum\left(  \overline
{\varkappa}_{m}\right)  }}
%EndExpansion
\left[
\begin{array}
[c]{c}%
\frac{\partial^{sum\left(  \overline{\varkappa}_{m}\right)  }\left(
IF_{r,r,\psi\left(  \tau,\cdot\right)  ,\overline{i}_{r,s}}^{\left(  s\right)
}\right)  }{\partial\tau^{sum\left(  \overline{\varkappa}_{m}\right)  }%
}IF_{1,\tau\left(  \cdot\right)  ,i_{m+1}}\left(  \theta\right) \\
\times%
%TCIMACRO{\dprod \limits_{\left(  r_{1},s_{1}\right)  \neq\left(  r,s\right)
%}}%
%BeginExpansion
{\displaystyle\prod\limits_{\left(  r_{1},s_{1}\right)  \neq\left(
r,s\right)  }}
%EndExpansion
IF_{r_{1},r_{1},\tau\left(  \cdot\right)  ,\overline{i}_{r_{1},s_{1}}%
}^{\left(  s\right)  }\left(  \theta\right)
\end{array}
\right]
\end{array}
\right\} \\
+%
%TCIMACRO{\dsum \limits_{\left(  r,s\right)  }^{sum\left(  \overline{\varkappa
%}_{m}\right)  }}%
%BeginExpansion
{\displaystyle\sum\limits_{\left(  r,s\right)  }^{sum\left(  \overline
{\varkappa}_{m}\right)  }}
%EndExpansion
\left(
\begin{array}
[c]{c}%
\left[
\begin{array}
[c]{c}%
\psi_{\backslash\tau^{sum\left(  \overline{\varkappa}_{m}\right)  }%
}if_{1,IF_{r,r,\tau\left(  \cdot\right)  ,\overline{i}_{r,s}}^{\left(
s\right)  }\left(  \cdot\right)  }\left(  O_{i_{m+1}};\theta\right) \\
\times%
%TCIMACRO{\dprod \limits_{\left(  r_{1},s_{1}\right)  \neq\left(  r,s\right)
%}}%
%BeginExpansion
{\displaystyle\prod\limits_{\left(  r_{1},s_{1}\right)  \neq\left(
r,s\right)  }}
%EndExpansion
IF_{r_{1},r_{1},\tau\left(  \cdot\right)  ,\overline{i}_{r_{1},s_{1}}%
}^{\left(  s\right)  }\left(  \theta\right)
\end{array}
\right] \\
+\left[
\begin{array}
[c]{c}%
\frac{\partial^{sum\left(  \overline{\varkappa}_{m}\right)  }\left(
if_{1,IF_{r,r,\psi\left(  \tau,\cdot\right)  ,\overline{i}_{r,s}}^{\left(
s\right)  }\left(  \cdot\right)  }\left(  O_{i_{m+1}};\theta\right)  \right)
}{\partial\tau^{\left[  sum\left(  \overline{\varkappa}_{m}\right)  -1\right]
}}\\
\times%
%TCIMACRO{\dprod \limits_{\left(  r_{1},s_{1}\right)  \neq\left(  r,s\right)
%}}%
%BeginExpansion
{\displaystyle\prod\limits_{\left(  r_{1},s_{1}\right)  \neq\left(
r,s\right)  }}
%EndExpansion
IF_{r_{1},r_{1},\tau\left(  \cdot\right)  ,\overline{i}_{r_{1},s_{1}}%
}^{\left(  s\right)  }\left(  \theta\right)
\end{array}
\right]
\end{array}
\right) \\
+%
%TCIMACRO{\dsum \limits_{\left(  r,s\right)  }^{sum\left(  \overline{\varkappa
%}_{m}\right)  }}%
%BeginExpansion
{\displaystyle\sum\limits_{\left(  r,s\right)  }^{sum\left(  \overline
{\varkappa}_{m}\right)  }}
%EndExpansion
\left\{
\begin{array}
[c]{c}%
\frac{\partial^{\left[  sum\left(  \overline{\varkappa}_{m}\right)  -1\right]
}\left(  IF_{r,r,\psi\left(  \tau,\cdot\right)  ,\overline{i}_{r,s}}^{\left(
s\right)  }\right)  }{\partial\tau^{\left[  sum\left(  \overline{\varkappa
}_{m}\right)  -1\right]  }}\times\\%
%TCIMACRO{\dsum \limits_{\left(  r_{1},s_{1}\right)  \neq\left(  r,s\right)
%}}%
%BeginExpansion
{\displaystyle\sum\limits_{\left(  r_{1},s_{1}\right)  \neq\left(  r,s\right)
}}
%EndExpansion
\left(
\begin{array}
[c]{c}%
if_{1,IF_{r_{1},r_{1},\tau\left(  \cdot\right)  ,\overline{i}_{r_{1},s_{1}}%
}^{\left(  s\right)  }\left(  \cdot\right)  }\left(  O_{i_{m+1}}%
;\theta\right)  \times\\%
%TCIMACRO{\dprod \limits_{\left(  r_{2},s_{2}\right)  \neq\left(  r_{1}%
%,s_{1}\right)  \neq\left(  r,s\right)  }}%
%BeginExpansion
{\displaystyle\prod\limits_{\left(  r_{2},s_{2}\right)  \neq\left(
r_{1},s_{1}\right)  \neq\left(  r,s\right)  }}
%EndExpansion
IF_{r_{2},r_{2},\tau\left(  \cdot\right)  ,\overline{i}_{r_{2},s_{2}}%
}^{\left(  s\right)  }\left(  \theta\right)
\end{array}
\right)
\end{array}
\right\}
\end{array}
\right\}
\end{align*}

Consider the last term
\[%
%TCIMACRO{\dsum \limits_{\left(  r,s\right)  }^{sum\left(  \overline{\varkappa
%}_{m}\right)  }}%
%BeginExpansion
{\displaystyle\sum\limits_{\left(  r,s\right)  }^{sum\left(  \overline
{\varkappa}_{m}\right)  }}
%EndExpansion
\left\{
\begin{array}
[c]{c}%
\frac{\partial^{\left[  sum\left(  \overline{\varkappa}_{m}\right)  -1\right]
}\left(  IF_{r,r,\psi\left(  \tau,\cdot\right)  ,\overline{i}_{r,s}}^{\left(
s\right)  }\right)  }{\partial\tau^{\left[  sum\left(  \overline{\varkappa
}_{m}\right)  -1\right]  }}%
%TCIMACRO{\dsum \limits_{\left(  r_{1},s_{1}\right)  \neq\left(  r,s\right)
%}}%
%BeginExpansion
{\displaystyle\sum\limits_{\left(  r_{1},s_{1}\right)  \neq\left(  r,s\right)
}}
%EndExpansion
if_{1,IF_{r_{1},r_{1},\tau\left(  \cdot\right)  ,\overline{i}_{r_{1},s_{1}}%
}^{\left(  s\right)  }\left(  \cdot\right)  }\left(  O_{i_{m+1}};\theta\right)
\\
\times%
%TCIMACRO{\dprod \limits_{\left(  r_{2},s_{2}\right)  \neq\left(  r_{1}%
%,s_{1}\right)  \neq\left(  r,s\right)  }}%
%BeginExpansion
{\displaystyle\prod\limits_{\left(  r_{2},s_{2}\right)  \neq\left(
r_{1},s_{1}\right)  \neq\left(  r,s\right)  }}
%EndExpansion
IF_{r_{2},r_{2},\tau\left(  \cdot\right)  ,\overline{i}_{r_{2},s_{2}}%
}^{\left(  s\right)  }\left(  \theta\right)
\end{array}
\right\}
\]

WLOG, we exchange $\left(  r,s\right)  $ with $\left(  r_{1},s_{1}\right)  ,$
then we have
\[%
%TCIMACRO{\dsum \limits_{\left(  r,s\right)  \neq\left(  r_{1},s_{1}\right)
%}^{sum\left(  \overline{\varkappa}_{m}\right)  ,sum\left(  \overline
%{\varkappa}_{m}\right)  }}%
%BeginExpansion
{\displaystyle\sum\limits_{\left(  r,s\right)  \neq\left(  r_{1},s_{1}\right)
}^{sum\left(  \overline{\varkappa}_{m}\right)  ,sum\left(  \overline
{\varkappa}_{m}\right)  }}
%EndExpansion
\left[
\begin{array}
[c]{c}%
\frac{\partial^{\left[  sum\left(  \overline{\varkappa}_{m}\right)  -1\right]
}\left(  IF_{r_{1},r_{1},\psi\left(  \tau,\cdot\right)  ,\overline{i}%
_{r_{1},s_{1}}}^{\left(  s\right)  }\right)  }{\partial\tau^{\left[
sum\left(  \overline{\varkappa}_{m}\right)  -1\right]  }}\times\\
if_{1,IF_{r,r,\tau\left(  \cdot\right)  ,\overline{i}_{r,s}}^{\left(
s\right)  }\left(  \cdot\right)  }\left(  O_{i_{m+1}};\theta\right)
%TCIMACRO{\dprod \limits_{\left(  r_{2},s_{2}\right)  \neq\left(  r_{1}%
%,s_{1}\right)  \neq\left(  r,s\right)  }}%
%BeginExpansion
{\displaystyle\prod\limits_{\left(  r_{2},s_{2}\right)  \neq\left(
r_{1},s_{1}\right)  \neq\left(  r,s\right)  }}
%EndExpansion
IF_{r_{2},r_{2},\tau\left(  \cdot\right)  ,\overline{i}_{r_{2},s_{2}}%
}^{\left(  s\right)  }\left(  \theta\right)
\end{array}
\right]
\]
and the sum of the last two terms equals
\[%
%TCIMACRO{\dsum \limits_{\left(  r,s\right)  }^{sum\left(  \overline{\varkappa
%}_{m}\right)  }}%
%BeginExpansion
{\displaystyle\sum\limits_{\left(  r,s\right)  }^{sum\left(  \overline
{\varkappa}_{m}\right)  }}
%EndExpansion
\left(
\begin{array}
[c]{c}%
\left[
\begin{array}
[c]{c}%
\psi_{\backslash\tau^{sum\left(  \overline{\varkappa}_{m}\right)  }%
}if_{1,IF_{r,r,\tau\left(  \cdot\right)  ,\overline{i}_{r,s}}^{\left(
s\right)  }\left(  \cdot\right)  }\left(  O_{i_{m+1}};\theta\right) \\
\times%
%TCIMACRO{\dprod \limits_{\left(  r_{1},s_{1}\right)  \neq\left(  r,s\right)
%}}%
%BeginExpansion
{\displaystyle\prod\limits_{\left(  r_{1},s_{1}\right)  \neq\left(
r,s\right)  }}
%EndExpansion
IF_{r_{1},r_{1},\tau\left(  \cdot\right)  ,\overline{i}_{r_{1},s_{1}}%
}^{\left(  s\right)  }\left(  \theta\right)
\end{array}
\right] \\
+\left[
\begin{array}
[c]{c}%
\frac{\partial^{sum\left(  \overline{\varkappa}_{m}\right)  }\left(
if_{1,IF_{r,r,\psi\left(  \tau,\cdot\right)  ,\overline{i}_{r,s}}^{\left(
s\right)  }\left(  \cdot\right)  }\left(  O_{i_{m+1}};\theta\right)  \right)
}{\partial\tau^{\left[  sum\left(  \overline{\varkappa}_{m}\right)  -1\right]
}}\\
\times%
%TCIMACRO{\dprod \limits_{\left(  r_{1},s_{1}\right)  \neq\left(  r,s\right)
%}}%
%BeginExpansion
{\displaystyle\prod\limits_{\left(  r_{1},s_{1}\right)  \neq\left(
r,s\right)  }}
%EndExpansion
IF_{r_{1},r_{1},\tau\left(  \cdot\right)  ,\overline{i}_{r_{1},s_{1}}%
}^{\left(  s\right)  }\left(  \theta\right)
\end{array}
\right] \\
+\left(
\begin{array}
[c]{c}%
if_{1,IF_{r,r,\tau\left(  \cdot\right)  ,\overline{i}_{r,s}}^{\left(
s\right)  }\left(  \cdot\right)  }\left(  O_{i_{m+1}};\theta\right)  \times\\%
%TCIMACRO{\dsum \limits_{\left(  r_{1},s_{1}\right)  \neq\left(  r,s\right)
%}}%
%BeginExpansion
{\displaystyle\sum\limits_{\left(  r_{1},s_{1}\right)  \neq\left(  r,s\right)
}}
%EndExpansion
\left[
\begin{array}
[c]{c}%
\frac{\partial^{\left[  sum\left(  \overline{\varkappa}_{m}\right)  -1\right]
}\left(  IF_{r_{1},r_{1},\psi\left(  \tau,\cdot\right)  ,\overline{i}%
_{r_{1},s_{1}}}^{\left(  s\right)  }\right)  }{\partial\tau^{\left[
sum\left(  \overline{\varkappa}_{m}\right)  -1\right]  }}\times\\%
%TCIMACRO{\dprod \limits_{\left(  r_{2},s_{2}\right)  \neq\left(  r_{1}%
%,s_{1}\right)  \neq\left(  r,s\right)  }}%
%BeginExpansion
{\displaystyle\prod\limits_{\left(  r_{2},s_{2}\right)  \neq\left(
r_{1},s_{1}\right)  \neq\left(  r,s\right)  }}
%EndExpansion
IF_{r_{2},r_{2},\tau\left(  \cdot\right)  ,\overline{i}_{r_{2},s_{2}}%
}^{\left(  s\right)  }\left(  \theta\right)
\end{array}
\right]
\end{array}
\right)
\end{array}
\right)
\]

Now, we have shown that, in addition to $if_{1,if_{m,m,\psi\left(  \tau
,\cdot\right)  }^{\left(  s\right)  }\left(  \mathbf{o}_{i_{m}};\cdot\right)
}\left(  O_{i_{m+1}};\theta\right)  ,$ the RHS of eq.(\ref{m2m+1}) can be
written as the sum of three pieces with the following leading terms%
\begin{align}
&  \psi_{\backslash\tau^{2}}IF_{1,\tau\left(  \cdot\right)  ,i_{m+1}}\left(
\theta\right)  if_{m,m,\tau\left(  \cdot\right)  }^{\left(  s\right)  }\left(
\mathbf{o}_{i_{m}};\theta\right) \nonumber\\
&  +\sum_{\left(  \varkappa_{m,1},\varkappa_{m,2},...,\varkappa_{m,m-1}%
\right)  }\left(
%TCIMACRO{\dprod \limits_{r=1}^{m-1}}%
%BeginExpansion
{\displaystyle\prod\limits_{r=1}^{m-1}}
%EndExpansion
\varkappa_{m,r}!\right)  ^{-1}\times\nonumber\\
&  \left\{
\begin{array}
[c]{c}%
\left[  \psi_{\backslash\tau^{sum\left(  \overline{\varkappa}_{m}\right)  +1}%
}IF_{1,\tau\left(  \cdot\right)  ,i_{m+1}}\left(  \theta\right)  \right]
%TCIMACRO{\dprod \limits_{r=1}^{m-1}}%
%BeginExpansion
{\displaystyle\prod\limits_{r=1}^{m-1}}
%EndExpansion
\left(
%TCIMACRO{\dprod \limits_{s=1}^{\varkappa_{m,r}}}%
%BeginExpansion
{\displaystyle\prod\limits_{s=1}^{\varkappa_{m,r}}}
%EndExpansion
IF_{r,r,\tau\left(  \cdot\right)  ,\overline{i}_{r,s}}^{\left(  s\right)
}\left(  \theta\right)  \right) \\
+%
%TCIMACRO{\dsum \limits_{\left(  r,s\right)  }^{sum\left(  \overline{\varkappa
%}_{m}\right)  }}%
%BeginExpansion
{\displaystyle\sum\limits_{\left(  r,s\right)  }^{sum\left(  \overline
{\varkappa}_{m}\right)  }}
%EndExpansion
\left(
\begin{array}
[c]{c}%
\psi_{\backslash\tau^{sum\left(  \overline{\varkappa}_{m}\right)  }%
}if_{1,IF_{r,r,\tau\left(  \cdot\right)  ,\overline{i}_{r,s}}^{\left(
s\right)  }\left(  \cdot\right)  }\left(  O_{i_{m+1}};\theta\right) \\
\times%
%TCIMACRO{\dprod \limits_{\left(  r_{1},s_{1}\right)  \neq\left(  r,s\right)
%}}%
%BeginExpansion
{\displaystyle\prod\limits_{\left(  r_{1},s_{1}\right)  \neq\left(
r,s\right)  }}
%EndExpansion
IF_{r_{1},r_{1},\tau\left(  \cdot\right)  ,\overline{i}_{r_{1},s_{1}}%
}^{\left(  s\right)  }\left(  \theta\right)
\end{array}
\right)
\end{array}
\right\}  ,\label{lead_m+1}%
\end{align}
while the remaining terms can be constructed by applying the algorithm
\textbf{TE }to the above leading terms.

By part 5c) of Theorem \ref{eift} and the induction assumption, next, we only
need to prove that eq. (\ref{lead_m+1}) is actually a kernel of the following
$\left(  m+1\right)  $th order U-statistic
\begin{align}
&  \left(  m+1\right)  \sum_{\left(  \varkappa_{m+1,1},\varkappa
_{m+1,2},...,\varkappa_{m+1,m}\right)  }\left(
%TCIMACRO{\dprod \limits_{r=1}^{m}}%
%BeginExpansion
{\displaystyle\prod\limits_{r=1}^{m}}
%EndExpansion
\varkappa_{m+1,r}!\right)  ^{-1}\times\nonumber\\
&  \mathbb{V}\left\{  \psi_{\backslash\tau^{sum\left(  \overline{\varkappa
}_{m+1}\right)  }}%
%TCIMACRO{\dprod \limits_{r=1}^{m}}%
%BeginExpansion
{\displaystyle\prod\limits_{r=1}^{m}}
%EndExpansion
\left(
%TCIMACRO{\dprod \limits_{s=1}^{\varkappa_{m+1,r}}}%
%BeginExpansion
{\displaystyle\prod\limits_{s=1}^{\varkappa_{m+1,r}}}
%EndExpansion
IF_{r,r,\tau\left(  \cdot\right)  ,\overline{i}_{r,s}^{\ast}}^{\left(
s\right)  }\left(  \theta\right)  \right)  \right\} \label{lead2}%
\end{align}

\ where \newline%
\begin{align*}
\overline{\varkappa}_{m+1}  &  \equiv\left(  \varkappa_{m+1,1},\varkappa
_{m+1,2},...,\varkappa_{m+1,m}\right)  ,\text{ }\\
\text{and }m+1  &  =%
%TCIMACRO{\dsum \limits_{r=1}^{m}}%
%BeginExpansion
{\displaystyle\sum\limits_{r=1}^{m}}
%EndExpansion
\varkappa_{m+1,r}\times r,\text{ }\varkappa_{m+1,r}\geq0\text{ \ \ }%
\forall\text{ }1\leq r\leq m\\
sum\left(  \overline{\varkappa}_{m+1}\right)   &  \equiv\sum_{r=1}%
^{m}\varkappa_{m+1,r}\\
\overline{i}_{r,s}^{\ast}  &  \equiv\left(  i_{l_{r,s}+1}^{\ast}%
,...,i_{l_{r,s}+r}^{\ast}\right)  \text{ \ where \ }l_{r,s}^{\ast}\equiv
\sum_{q=1}^{r-1}\varkappa_{m+1,q}\times q+\left(  s-1\right)  r
\end{align*}

This can be proved following a simple but important fact that, for any sum
representation of$\ \ m+1=%
%TCIMACRO{\dsum \limits_{r=1}^{m}}%
%BeginExpansion
{\displaystyle\sum\limits_{r=1}^{m}}
%EndExpansion
\varkappa_{m+1,r}\times r,$ either i) $\varkappa_{m+1,1}=\varkappa_{m+1,m}=1,$
and $\varkappa_{m+1,r}=0$ for $\forall$ $1<r<m;$ or ii) $\varkappa_{m+1,m}=0$
and there exists a sum representation of $m=$ $%
%TCIMACRO{\dsum \limits_{r=1}^{m-1}}%
%BeginExpansion
{\displaystyle\sum\limits_{r=1}^{m-1}}
%EndExpansion
\varkappa_{m,r}\times r$ such that,

iia) $\varkappa_{m+1,1}=\varkappa_{m,1}+1$ and $\varkappa_{m+1,r}%
=\varkappa_{m,r}$ for $1<r<m,$ or

iib) $\exists$ $1\leq r^{\ast}\leq m-2,$ \ such that $\varkappa_{m+1,r^{\ast}%
}=\varkappa_{m,r^{\ast}}-1,$ $\varkappa_{m+1,r^{\ast}+1}=\varkappa_{m,r^{\ast
}+1}+1,$ and $\varkappa_{m+1,r}=\varkappa_{m,r}$ for $\forall$ $r\neq r^{\ast
},r^{\ast}+1.$

Define%
\begin{align*}
&  LT1\left(  \varkappa_{m,1},...,\varkappa_{m,m-1}\right) \\
&  =\left(
%TCIMACRO{\dprod \limits_{r=1}^{m-1}}%
%BeginExpansion
{\displaystyle\prod\limits_{r=1}^{m-1}}
%EndExpansion
\varkappa_{m,r}!\right)  ^{-1}\left[  \psi_{\backslash\tau^{sum\left(
\overline{\varkappa}_{m}\right)  +1}}IF_{1,\tau\left(  \cdot\right)  ,i_{m+1}%
}\left(  \theta\right)  \right] \\
&  \times%
%TCIMACRO{\dprod \limits_{r=1}^{m-1}}%
%BeginExpansion
{\displaystyle\prod\limits_{r=1}^{m-1}}
%EndExpansion
\left(
%TCIMACRO{\dprod \limits_{s=1}^{\varkappa_{m,r}}}%
%BeginExpansion
{\displaystyle\prod\limits_{s=1}^{\varkappa_{m,r}}}
%EndExpansion
IF_{r,r,\tau\left(  \cdot\right)  ,\overline{i}_{r,s}}^{\left(  s\right)
}\left(  \theta\right)  \right) \\
&  LT2\left(  \varkappa_{m,1},...,\varkappa_{m,m-1};r\right) \\
&  =\left(
%TCIMACRO{\dprod \limits_{r=1}^{m-1}}%
%BeginExpansion
{\displaystyle\prod\limits_{r=1}^{m-1}}
%EndExpansion
\varkappa_{m,r}!\right)  ^{-1}\sum_{s=1}^{\varkappa_{m,r}}\left(
\begin{array}
[c]{c}%
\psi_{\backslash\tau^{sum\left(  \overline{\varkappa}_{m}\right)  }%
}if_{1,IF_{r,r,\tau\left(  \cdot\right)  ,\overline{i}_{r,s}}^{\left(
s\right)  }\left(  \cdot\right)  }\left(  O_{i_{m+1}};\theta\right) \\
\times%
%TCIMACRO{\dprod \limits_{\left(  r_{1},s_{1}\right)  \neq\left(  r,s\right)
%}}%
%BeginExpansion
{\displaystyle\prod\limits_{\left(  r_{1},s_{1}\right)  \neq\left(
r,s\right)  }}
%EndExpansion
IF_{r_{1},r_{1},\tau\left(  \cdot\right)  ,\overline{i}_{r_{1},s_{1}}%
}^{\left(  s\right)  }\left(  \theta\right)
\end{array}
\right)  .
\end{align*}

Note that for any $\left(  \varkappa_{m+1,1},\varkappa_{m+1,2},...,\varkappa
_{m+1,m}=0\right)  ,$ if $\varkappa_{m+1,1}>0,$ then
\begin{align*}
&  \mathbb{V}\left\{  LT1\left(  \varkappa_{m+1,1}-1,\varkappa_{m+1,2}%
...,\varkappa_{m+1,m-1}\right)  \right\} \\
&  =\varkappa_{m+1,1}\left(
%TCIMACRO{\dprod \limits_{r=1}^{m}}%
%BeginExpansion
{\displaystyle\prod\limits_{r=1}^{m}}
%EndExpansion
\varkappa_{m+1,r}!\right)  ^{-1}\mathbb{V}\left\{  \psi_{\backslash
\tau^{sum\left(  \overline{\varkappa}_{m+1}\right)  }}%
%TCIMACRO{\dprod \limits_{r=1}^{m}}%
%BeginExpansion
{\displaystyle\prod\limits_{r=1}^{m}}
%EndExpansion
\left(
%TCIMACRO{\dprod \limits_{s=1}^{\varkappa_{m+1,r}}}%
%BeginExpansion
{\displaystyle\prod\limits_{s=1}^{\varkappa_{m+1,r}}}
%EndExpansion
IF_{r,r,\tau\left(  \cdot\right)  ,\overline{i}_{r,s}^{\ast}}^{\left(
s\right)  }\left(  \theta\right)  \right)  \right\}  ,
\end{align*}
and for any $\ 2\leq r\leq m-1$ such that $\varkappa_{m+1,r}>0,$
\begin{align*}
&  \mathbb{V}\left\{  d_{m+1}\left(  LT2\left(  \varkappa_{m+1,1}%
,..,\varkappa_{m+1,r-1}+1,\varkappa_{m+1,r}-1,..,\varkappa_{m+1,m-1},r\right)
\right)  \right\} \\
&  =\varkappa_{m+1,r}r\left(
%TCIMACRO{\dprod \limits_{r=1}^{m}}%
%BeginExpansion
{\displaystyle\prod\limits_{r=1}^{m}}
%EndExpansion
\varkappa_{m+1,r}!\right)  ^{-1}\mathbb{V}\left\{  \psi_{\backslash
\tau^{sum\left(  \overline{\varkappa}_{m+1}\right)  }}%
%TCIMACRO{\dprod \limits_{r=1}^{m}}%
%BeginExpansion
{\displaystyle\prod\limits_{r=1}^{m}}
%EndExpansion
\left(
%TCIMACRO{\dprod \limits_{s=1}^{\varkappa_{m+1,r}}}%
%BeginExpansion
{\displaystyle\prod\limits_{s=1}^{\varkappa_{m+1,r}}}
%EndExpansion
IF_{r,r,\tau\left(  \cdot\right)  ,\overline{i}_{r,s}^{\ast}}^{\left(
s\right)  }\left(  \theta\right)  \right)  \right\}  .
\end{align*}
As $%
%TCIMACRO{\dsum \limits_{r=1}^{m-1}}%
%BeginExpansion
{\displaystyle\sum\limits_{r=1}^{m-1}}
%EndExpansion
r\varkappa_{m+1,r}=m+1.$ \ Now, it is obvious that the term with $\left(
\varkappa_{m+1,1},\varkappa_{m+1,2},...,\varkappa_{m+1,m}=0\right)  $ in eq.
(\ref{lead2}) comes from the following terms in eq. (\ref{lead_m+1})
\begin{align*}
&  I\left\{  \varkappa_{m+1,1}>0\right\}  LT1\left(  \varkappa_{m+1,1}%
-1,\varkappa_{m+1,2}...,\varkappa_{m+1,m-1}\right) \\
&  +\sum_{r=2}^{m-1}I\left\{  \varkappa_{m+1,r}>0\right\}  LT2\left(
\varkappa_{m+1,1},..,\varkappa_{m+1,r-1}+1,\varkappa_{m+1,r}-1,..,\varkappa
_{m+1,m-1},r\right)  ,
\end{align*}
while the\ term with $\left(  \varkappa_{m+1,1}=1,0,...,0,\varkappa
_{m+1,m}=1\right)  $ in eq. (\ref{lead2}) comes from the following terms in
eq. (\ref{lead_m+1})
\begin{align*}
&  \psi_{\backslash\tau^{2}}IF_{1,\tau\left(  \cdot\right)  ,i_{m+1}}\left(
\theta\right)  if_{m,m,\tau\left(  \cdot\right)  }^{\left(  s\right)  }\left(
\mathbf{o}_{i_{m}};\theta\right) \\
&  +\psi_{\backslash\tau^{2}}if_{1,IF_{m-1,m-1,\tau\left(  \cdot\right)
,\overline{i}_{m-1,1}}^{\left(  s\right)  }\left(  \cdot\right)  }\left(
O_{i_{m+1}};\theta\right)  IF_{1,\tau\left(  \cdot\right)  ,i_{1}}^{\left(
s\right)  }\left(  \theta\right)  .
\end{align*}

\end{proof}

\begin{proof}
(Theorem \ref{HOIPROD}) We proceed by induction. \ For $j=1,$
\begin{align*}
\psi_{\backslash l_{1}}\left(  \theta\right)   &  =\sum_{r=1}^{\zeta}\left\{
\psi_{r,\backslash l_{1}}\left(  \theta\right)  \times\left[  \prod
\limits_{s\leq\zeta,\zeta\neq r}\psi_{s}\left(  \theta\right)  \right]
\right\} \\
&  =\sum_{r=1}^{\zeta}\left\{  E_{\theta}\left[  \mathbb{IF}_{\psi_{r}%
,1}\left(  \theta\right)  \mathbb{\zeta}_{1;\backslash l_{1}}\left(
\theta\right)  \right]  \times\left[  \prod\limits_{s\leq\zeta,s\neq r}%
\psi_{s}\left(  \theta\right)  \right]  \right\}
\end{align*}
therefore
\begin{align*}
\mathbb{IF}_{\psi\left(  \theta;\zeta\right)  ,1,1}\left(  \theta\right)   &
=\mathbb{V}\left\{  \sum_{r=1}^{\zeta}IF_{\psi_{r};1,1;i_{1}}\left(
\theta\right)  \times\left[  \prod\limits_{s\leq\zeta,s\neq r}\psi_{s}\left(
\theta\right)  \right]  \right\} \\
&  =\mathbb{V}\left[
%TCIMACRO{\dsum \limits_{\left\{  t_{1},...t_{\zeta}\right\}  \in
%\Upsilon_{\zeta;1}}}%
%BeginExpansion
{\displaystyle\sum\limits_{\left\{  t_{1},...t_{\zeta}\right\}  \in
\Upsilon_{\zeta;1}}}
%EndExpansion%
%TCIMACRO{\dprod \limits_{s=1}^{\zeta}}%
%BeginExpansion
{\displaystyle\prod\limits_{s=1}^{\zeta}}
%EndExpansion
IF_{\psi_{s}\left(  \theta\right)  ;t_{s},t_{s};\overline{i}_{s,t_{s}}}\left(
\theta\right)  \right]
\end{align*}
Assume that the lemma holds for $j,$i.e.:
\[
\mathbb{IF}_{\psi\left(  \theta;\zeta\right)  ,j,j}\left(  \theta\right)
=\mathbb{V}\left[
%TCIMACRO{\dsum \limits_{\left\{  t_{1},...t_{\zeta}\right\}  \in
%\Upsilon_{\zeta;j}}}%
%BeginExpansion
{\displaystyle\sum\limits_{\left\{  t_{1},...t_{\zeta}\right\}  \in
\Upsilon_{\zeta;j}}}
%EndExpansion%
%TCIMACRO{\dprod \limits_{s=1}^{\zeta}}%
%BeginExpansion
{\displaystyle\prod\limits_{s=1}^{\zeta}}
%EndExpansion
IF_{\psi_{s}\left(  \theta\right)  ;t_{s},t_{s};\overline{i}_{s,t_{s}}}\left(
\theta\right)  \right]
\]
we now show that it holds for $j+1.$ Now,
\begin{align*}
&  \left(  j+1\right)  \mathbb{IF}_{\psi\left(  \theta;\zeta\right)
;j+1,j+1}\left(  \theta\right) \\
&  =\mathbb{V}\left[  if_{if_{\psi\left(  \theta;\zeta\right)  ;j,j}\left(
O_{i_{1}},...,O_{i_{j}};\cdot\right)  ;1,1\ }\left(  O_{i_{j+1}}%
;\theta\right)  \right] \\
&  -\Pi_{\theta,m}\left[  \mathbb{V}\left[  if_{1,if_{\psi\left(  \theta
;\zeta\right)  ;j,j}\left(  O_{i_{1}},...,O_{i_{j}};\cdot\right)  \ }\left(
O_{i_{j+1}};\theta\right)  \right]  |\mathcal{U}_{j}\left(  \theta\right)
\right]
\end{align*}
so that $\left(  j+1\right)  \mathbb{IF}_{\psi\left(  \theta;\zeta\right)
;j+1,j+1}\left(  \theta\right)  =$
\begin{align*}
&  \mathbb{V}\left[
%TCIMACRO{\dsum \limits_{\left\{  t_{1},...t_{\zeta}\right\}  \in
%\Upsilon_{\zeta;j}}}%
%BeginExpansion
{\displaystyle\sum\limits_{\left\{  t_{1},...t_{\zeta}\right\}  \in
\Upsilon_{\zeta;j}}}
%EndExpansion%
%TCIMACRO{\dsum \limits_{r=1}^{\zeta}}%
%BeginExpansion
{\displaystyle\sum\limits_{r=1}^{\zeta}}
%EndExpansion
\left\{
\begin{array}
[c]{c}%
\left\{
\begin{array}
[c]{c}%
if_{IF_{\psi_{r}\left(  \theta\right)  ;t_{r},t_{r};\overline{i}_{r,t_{r}}%
}\left(  \theta\right)  ;1,1\ }\left(  O_{i_{r,t_{r}+1}};\theta\right) \\
-\Pi_{\theta,t_{r}}\left[  \mathbb{V}\left[  if_{IF_{\psi_{r}\left(
\theta\right)  ;t_{r},t_{r};\overline{i}_{r,t_{r}}}\left(  \theta\right)
;1,1\ }\left(  O_{i_{r,t_{r}+1}};\theta\right)  \right]  |\mathcal{U}_{t_{r}%
}\left(  \theta\right)  \right]
\end{array}
\right\} \\
\times\left(
%TCIMACRO{\dprod \limits_{s\leq\zeta,s\neq r}}%
%BeginExpansion
{\displaystyle\prod\limits_{s\leq\zeta,s\neq r}}
%EndExpansion
IF_{\psi_{s}\left(  \theta\right)  ;t_{s},t_{s};\overline{i}_{s,t_{s}}}\left(
\theta\right)  \right)
\end{array}
\right\}  \right] \\
&  =\mathbb{V}\left[
%TCIMACRO{\dsum \limits_{\left\{  t_{1},...t_{\zeta}\right\}  \in
%\Upsilon_{\zeta;j}}}%
%BeginExpansion
{\displaystyle\sum\limits_{\left\{  t_{1},...t_{\zeta}\right\}  \in
\Upsilon_{\zeta;j}}}
%EndExpansion%
%TCIMACRO{\dsum \limits_{r=1}^{\zeta}}%
%BeginExpansion
{\displaystyle\sum\limits_{r=1}^{\zeta}}
%EndExpansion
\left\{
\begin{array}
[c]{c}%
\left(  t_{r}+1\right)  IF_{\psi_{r}\left(  \theta\right)  ;t_{r}%
+1,t_{r}+1;\overline{i}_{r,t_{r}+1}}\left(  \theta\right) \\
\times\left(
%TCIMACRO{\dprod \limits_{s\leq\zeta,s\neq r}}%
%BeginExpansion
{\displaystyle\prod\limits_{s\leq\zeta,s\neq r}}
%EndExpansion
IF_{\psi_{s}\left(  \theta\right)  ;t_{s},t_{s};\overline{i}_{s,t_{s}}}\left(
\theta\right)  \right)
\end{array}
\right\}  \right]
\end{align*}
Now, consider an arbitrary term in the double sum corresponding to the index
vector
\[
\left(  t_{1}^{\prime},...,t_{r^{\ast}}^{\prime},...t_{\zeta}^{\prime}\right)
\]
where the star indicates the index of the second summation . Then this term
and terms corresponding to
\begin{align*}
&  \left(  t_{1^{\ast}}^{\prime}-1,t_{2}^{\prime},...,t_{r}^{\prime
}+1,...,t_{\zeta}^{\prime}\right)  ,\\
&  \left(  t_{1}^{\prime},t_{2^{\ast}}^{\prime}-1,...,t_{r}^{\prime
}+1,...,t_{\zeta}^{\prime}\right)  ,\\
&  \vdots\\
&  \left(  t_{1}^{\prime},...t_{r-1^{\ast}}^{\prime}-1,t_{r}^{\prime
}+1,...,t_{\zeta}^{\prime}\right)  ,\\
&  \left(  t_{1}^{\prime},...t_{r}^{\prime}+1,t_{r+1^{\ast}}^{\prime
}-1,...,t_{\zeta}^{\prime}\right)  ,\\
&  \vdots\\
&  \left(  t_{1}^{\prime},...,t_{r}^{\prime}+1,...,t_{\zeta}^{\prime
}-1\right)
\end{align*}
all share the common factor
\[
IF_{\psi_{r^{\ast}}\left(  \theta\right)  ;t_{r^{\ast}}+1,t_{r^{\ast}%
}+1;\overline{i}_{r^{\ast},t_{r^{\ast}}+1}}\left(  \theta\right)
\times\left(
%TCIMACRO{\dprod \limits_{s\leq\zeta,s\neq r^{\ast}}}%
%BeginExpansion
{\displaystyle\prod\limits_{s\leq\zeta,s\neq r^{\ast}}}
%EndExpansion
IF_{\psi_{s}\left(  \theta\right)  ;t_{s}^{\prime},t_{s}^{\prime};\overline
{i}_{s,t_{s}^{^{\prime}}}}\left(  \theta\right)  \right)
\]
but with respective multiplicative constants $\left(  t_{r^{\ast}}^{\prime
}+1\right)  ,t_{1^{\ast}}^{\prime},t_{2^{\ast}}^{\prime},...,t_{r-1^{\ast}%
}^{\prime},t_{r+1^{\ast}}^{\prime},...,t_{\zeta}^{\prime}$. \ So that the
total contribution of terms in the double summation that share this common
factor is given by:%
\begin{align*}
&  \left(  \left(  t_{r^{\ast}}^{\prime}+1\right)  +\sum_{r\neq r^{\ast}}%
t_{r}^{\prime}\right)  \left(
\begin{array}
[c]{c}%
IF_{\psi_{r^{\ast}}\left(  \theta\right)  ;t_{r^{\ast}}^{\prime}+1,t_{r^{\ast
}}^{\prime}+1;\overline{i}_{r^{\ast},t_{r^{\ast}}^{\prime}+1}}\left(
\theta\right) \\
\times\left(
%TCIMACRO{\dprod \limits_{s\leq\zeta,s\neq r^{\ast}}}%
%BeginExpansion
{\displaystyle\prod\limits_{s\leq\zeta,s\neq r^{\ast}}}
%EndExpansion
IF_{\psi_{s}\left(  \theta\right)  ;t_{s}^{\prime},t_{s}^{\prime};\overline
{i}_{s,t_{s}^{^{\prime}}}}\left(  \theta\right)  \right)
\end{array}
\right) \\
&  =(j+1)\left(
\begin{array}
[c]{c}%
IF_{\psi_{r^{\ast}}\left(  \theta\right)  ;t_{r^{\ast}}^{\prime}+1,t_{r^{\ast
}}^{\prime}+1;\overline{i}_{r^{\ast},t_{r^{\ast}}^{\prime}+1}}\left(
\theta\right) \\
\times\left(
%TCIMACRO{\dprod \limits_{s\leq\zeta,s\neq r^{\ast}}}%
%BeginExpansion
{\displaystyle\prod\limits_{s\leq\zeta,s\neq r^{\ast}}}
%EndExpansion
IF_{\psi_{s}\left(  \theta\right)  ;t_{s}^{\prime},t_{s}^{\prime};\overline
{i}_{s,t_{s}^{^{\prime}}}}\left(  \theta\right)  \right)
\end{array}
\right)
\end{align*}
Repeating this argument over all common factors in the set
\[
\mathcal{A=}\left\{  \left(
\begin{array}
[c]{c}%
IF_{\psi_{r^{\ast}}\left(  \theta\right)  ;t_{r^{\ast}}+1,t_{r^{\ast}%
}+1;\overline{i}_{r^{\ast},t_{r^{\ast}}+1}}\left(  \theta\right) \\
\times\left(
%TCIMACRO{\dprod \limits_{s\leq\zeta,s\neq r^{\ast}}}%
%BeginExpansion
{\displaystyle\prod\limits_{s\leq\zeta,s\neq r^{\ast}}}
%EndExpansion
IF_{\psi_{s}\left(  \theta\right)  ;t_{s}^{\prime},t_{s}^{\prime};\overline
{i}_{s,t_{s}^{^{\prime}}}}\left(  \theta\right)  \right)
\end{array}
\right)  :\left\{  t_{1},...t_{\zeta}\right\}  \in\Upsilon_{\zeta;j}\right\}
\]
that appear in the double sum, we recover the desired sum%

\[
\left(  j+1\right)  \mathbb{IF}_{\psi\left(  \theta;\zeta\right)
;j+1,j+1}\left(  \theta\right)  =\left(  j+1\right)  \mathbb{V}\left[
%TCIMACRO{\dsum \limits_{\left\{  t_{1},...t_{\zeta}\right\}  \in
%\Upsilon_{\zeta;j+1}}}%
%BeginExpansion
{\displaystyle\sum\limits_{\left\{  t_{1},...t_{\zeta}\right\}  \in
\Upsilon_{\zeta;j+1}}}
%EndExpansion
\left(
%TCIMACRO{\dprod \limits_{s\leq\zeta}}%
%BeginExpansion
{\displaystyle\prod\limits_{s\leq\zeta}}
%EndExpansion
IF_{\psi_{s}\left(  \theta\right)  ;t_{s},t_{s};\overline{i}_{s,t_{s}}}\left(
\theta\right)  \right)  \right]
\]
This is because the set of all common factors in $\mathcal{A}$ is precisely :%
\[
\left\{
%TCIMACRO{\dprod \limits_{s\leq\zeta}}%
%BeginExpansion
{\displaystyle\prod\limits_{s\leq\zeta}}
%EndExpansion
IF_{\psi_{s}\left(  \theta\right)  ;t_{s},t_{s};\overline{i}_{s,t_{s}}}\left(
\theta\right)  :\left\{  t_{1},...t_{\zeta}\right\}  \in\Upsilon_{\zeta
;j+1}\right\}
\]
This concludes the proof.
\end{proof}

\begin{proof}
(Theorem \ref{BTOMM}) \bigskip for $m=1,$%
\begin{align*}
&  E_{\theta}\left(  \widehat{\psi}_{1}\right)  -\psi\left(  \theta\right) \\
&  =E_{\theta}\left[  \frac{R_{1}}{\widehat{\pi}_{1}}\frac{R_{0}}%
{\widehat{\pi}_{0}}\left(  Y-\widehat{B}_{1}\right)  \right]  +E_{\theta
}\left[  \frac{R_{0}}{\widehat{\pi}_{0}}\left(  \widehat{B}_{1}-\widehat{B}%
_{0}\right)  \right]  +E_{\theta}\left(  \widehat{B}_{0}-B_{0}\right) \\
&  =E_{\theta}\left[  \frac{R_{0}}{\widehat{\pi}_{0}}\delta P_{1}\delta
B_{1}+\delta P_{0}\delta B_{0}\right]
\end{align*}
Next we proceed to prove eq.(\ref{ebm}) and eq.(\ref{tbm}) by induction,

First of all,
\begin{align*}
&  E_{\theta}\left[  \frac{R_{1}}{\widehat{\pi}_{1}}\frac{R_{0}}{\widehat{\pi
}_{0}}\left(  Y-\widehat{B}_{1}\right)  +\frac{R_{0}}{\widehat{\pi}_{0}%
}\left(  \widehat{B}_{1}-\widehat{B}_{0}\right)  |L_{0}\right] \\
&  =E_{\theta}\left(  \frac{R_{0}}{\widehat{\pi}_{0}}\left(  B_{1}%
-\widehat{B}_{1}\right)  \left(  \frac{R_{1}}{\widehat{\pi}_{1}}-1\right)
+\frac{R_{0}}{\widehat{\pi}_{0}}\left(  B_{1}-\widehat{B}_{1}+\widehat{B}%
_{1}-\widehat{B}_{0}\right)  |L_{0}\right) \\
&  =E_{\theta}\left(  \frac{R_{0}}{\widehat{\pi}_{0}}\delta B_{1}\delta
P_{1}+\frac{R_{0}}{\widehat{\pi}_{0}}\delta B_{0}|L_{0}\right)
\end{align*}
and
\begin{gather*}
E_{\theta}\left[  \frac{R_{0}}{\widehat{\pi}_{0}}\delta P_{1}\delta
B_{1}\right] \\
=E_{\theta}\left(  \Pi_{\theta}\left(  q_{01}^{1/2}\delta B_{1}|\left(
q_{01}^{1/2}\overline{W}_{k_{1}}\right)  \right)  \Pi_{\theta}\left(
q_{0}q_{01}^{-1/2}\delta P_{1}|\left(  q_{01}^{1/2}\overline{W}_{k_{1}%
}\right)  \right)  \right)  +\\
E_{\theta}\left(  \Pi_{\theta}^{\bot}\left(  q_{01}^{1/2}\delta B_{1}|\left(
q_{01}^{1/2}\overline{W}_{k_{1}}\right)  \right)  \Pi_{\theta}^{\bot}\left(
q_{0}q_{01}^{-1/2}\delta P_{1}|\left(  q_{01}^{1/2}\overline{W}_{k_{1}%
}\right)  \right)  \right)
\end{gather*}
with
\begin{align*}
&  E_{\theta}\left(  \Pi_{\theta}\left(  q_{01}^{1/2}\delta B_{1}|\left(
q_{01}^{1/2}\overline{W}_{k_{1}}\right)  \right)  \Pi_{\theta}\left(
q_{0}q_{01}^{-1/2}\delta P_{1}|\left(  q_{01}^{1/2}\overline{W}_{k_{1}%
}\right)  \right)  \right) \\
&  =E_{\theta}\left(  q_{01}\delta B_{1}\overline{W}_{k_{1}}^{T}\right)
E_{\theta}\left(  q_{01}\overline{W}_{k_{1}}\overline{W}_{k_{1}}^{T}\right)
^{-1}E_{\theta}\left(  q_{0}\delta P_{1}\overline{W}_{k_{1}}\right)
\end{align*}
For $m=2,$%
\begin{align*}
&  E_{\theta}\left(  \widehat{\psi}_{2}\right)  -\psi\left(  \theta\right) \\
&  =E_{\theta}\left[  \frac{R_{0}}{\widehat{\pi}_{0}}\delta P_{1}\delta
B_{1}+\delta P_{0}\delta B_{0}\right]  +E_{\theta}\left(  \widehat{\psi}%
_{2,2}\right) \\
&  =E_{\theta}\left[  \frac{R_{0}}{\widehat{\pi}_{0}}\delta P_{1}\delta
B_{1}+\delta P_{0}\delta B_{0}\right]  -E_{\theta}\left[  \frac{R_{0}\pi_{1}%
}{\widehat{\pi}_{0}\widehat{\pi}_{1}}\delta B_{1}\overline{W}_{k_{1}}%
^{T}\right]  E_{\theta}\left[  \overline{W}_{k_{1}}\frac{R_{0}}{\widehat{\pi
}_{0}}\delta P_{1}\right] \\
&  -E_{\theta}\left\{  \left(  \frac{R_{0}}{\widehat{\pi}_{0}}\delta
P_{1}\delta B_{1}+\frac{R_{0}}{\widehat{\pi}_{0}}\delta B_{0}\right)
\overline{Z}_{k_{0}}^{T}\right\}  E_{\theta}\left[  \overline{Z}_{k_{0}}\delta
P_{0}\right] \\
&  =-E_{\theta}\left(  q_{0}\delta B_{0}\overline{Z}_{k_{0}}^{T}\right)
E_{\theta}\left(  q_{0}\overline{Z}_{k_{0}}\overline{Z}_{k_{0}}^{T}\right)
^{-1}\left[  E_{\theta}\left(  q_{0}\overline{Z}_{k_{0}}\overline{Z}_{k_{0}%
}^{T}\right)  -I\right]  E_{\theta}\left(  \overline{Z}_{k_{0}}\delta
P_{0}\right) \\
&  -E_{\theta}\left(  q_{01}\delta B_{1}\overline{W}_{k_{1}}^{T}\right)
E_{\theta}\left(  q_{01}\overline{W}_{k_{1}}\overline{W}_{k_{1}}^{T}\right)
^{-1}\left(  E_{\theta}\left(  q_{01}\overline{W}_{k_{1}}\overline{W}_{k_{1}%
}^{T}\right)  -I\right)  E_{\theta}\left(  \overline{W}_{k_{1}}q_{0}\delta
P_{1}\right) \\
&  -E_{\theta}\left(  q_{0}\delta P_{1}\delta B_{1}\overline{Z}_{k_{0}}%
^{T}\right)  E_{\theta}\left[  \overline{Z}_{k_{0}}\delta P_{0}\right] \\
&  +E_{\theta}\left(  \Pi_{\theta}^{\bot}\left(  q_{0}^{1/2}\delta
B_{0}|\left(  q_{0}^{1/2}\overline{Z}_{k_{0}}\right)  \right)  \Pi_{\theta
}^{\bot}\left(  q_{0}^{-1/2}\delta P_{0}|\left(  q_{0}^{1/2}\overline
{Z}_{k_{0}}\right)  \right)  \right) \\
&  +E_{\theta}\left(  \Pi_{\theta}^{\bot}\left(  q_{01}^{1/2}\delta
B_{1}|\left(  q_{01}^{1/2}\overline{W}_{k_{1}}\right)  \right)  \Pi_{\theta
}^{\bot}\left(  q_{0}q_{01}^{-1/2}\delta P_{1}|\left(  q_{01}^{1/2}%
\overline{W}_{k_{1}}\right)  \right)  \right)
\end{align*}
If e.q(\ref{ebm}) holds for $m-1\geq2,$ we next show it also holds for $m,$%
\begin{align*}
&  E_{\theta}\left(  \widehat{\psi}_{m}\right)  -\psi\left(  \theta\right) \\
&  =E_{\theta}\left(  \widehat{\psi}_{m,m}\right)  +\left(  E_{\theta}\left(
\widehat{\psi}_{m-1}\right)  -\widehat{\psi}_{m}\right) \\
&  =BI_{m-1,2}+\left(  -1\right)  ^{m-1}\times\\
&  E_{\theta}\left\{
\begin{array}
[c]{c}%
E_{\theta}\left[  q_{0}\delta P_{1}\delta B_{1}\overline{Z}_{k_{0}}%
^{T}\right]  \left[  E_{\theta}\left(  q_{0}\overline{Z}_{k_{0}}\overline
{Z}_{k_{0}}^{T}-I\right)  \right]  ^{m-2}E_{\theta}\left(  \overline{Z}%
_{k_{0}}\delta P_{0}\right) \\
+E_{\theta}\left[  q_{0}\delta B_{0}\overline{Z}_{k_{0}}^{T}\right]  \left[
E_{\theta}\left(  q_{0}\overline{Z}_{k_{0}}\overline{Z}_{k_{0}}^{T}-I\right)
\right]  ^{m-2}E_{\theta}\left(  \overline{Z}_{k_{0}}\delta P_{0}\right)
_{\_\left(  L1.1\right)  }\\
+E_{\theta}\left(  q_{01}\delta B_{1}\overline{W}_{k_{1}}^{T}\right)  \left[
E_{\theta}\left(  q_{01}\overline{W}_{k_{1}}\overline{W}_{k_{1}}^{T}-I\right)
\right]  ^{m-2}E_{\theta}\left(  \overline{W}_{k_{1}}q_{0}\delta P_{1}\right)
\__{\left(  L1.2\right)  }\\
+%
%TCIMACRO{\tsum \limits_{j=2}^{m-1}}%
%BeginExpansion
{\textstyle\sum\limits_{j=2}^{m-1}}
%EndExpansion%
\begin{array}
[c]{c}%
E_{\theta}\left[  \overline{Z}_{k_{0}}\delta P_{0}^{T}\right]  \left[
E_{\theta}\left(  q_{0}\overline{Z}_{k_{0}}\overline{Z}_{k_{0}}^{T}-I\right)
\right]  ^{j-2}E_{\theta}\left[  q_{01}\delta B_{1}\overline{Z}_{k_{0}%
}\overline{W}_{k_{1}}^{T}\right] \\
\times E_{\theta}\left(  q_{01}\overline{W}_{k_{1}}\overline{W}_{k_{1}}%
^{T}-I\right)  ^{m-1-j}E_{\theta}\left[  \overline{W}_{k_{1}}q_{0}\delta
P_{1}\right]  \__{\left(  L1.3.j\right)  }%
\end{array}
\end{array}
\right\}
\end{align*}%
\[
-\left(  -1\right)  ^{m-1}\times
\]%
\[
\left\{
\begin{array}
[c]{c}%
\left\{
\begin{array}
[c]{c}%
E_{\theta}\left\{  \left[  q_{0}\delta B_{0}\right]  \overline{Z}_{k_{0}}%
^{T}\right\}  E_{\theta}\left[  q_{0}\overline{Z}_{k_{0}}\overline{Z}_{k_{0}%
}^{T}\right]  ^{-1}\times\\
\left[  E_{\theta}\left(  q_{0}\overline{Z}_{k_{0}}\overline{Z}_{k_{0}}%
^{T}\right)  -I\right]  ^{m-2}E_{\theta}\left[  \overline{Z}_{k_{0}}\delta
P_{0}\right]
\end{array}
\right\}  _{-\left(  L2.1\right)  }\\
+\left\{
\begin{array}
[c]{c}%
E_{\theta}\left\{  \left[  q_{01}\delta B_{1}\right]  \overline{W}_{k_{1}}%
^{T}\right\}  E_{\theta}\left[  q_{01}\overline{W}_{k_{1}}\overline{W}_{k_{1}%
}^{T}\right]  ^{-1}\times\\
\left[  E_{\theta}\left(  q_{01}\overline{W}_{k_{1}}\overline{W}_{k_{1}}%
^{T}\right)  -I\right]  ^{m-2}E_{\theta}\left[  \overline{W}_{k_{1}}%
q_{0}\delta P_{1}\right]
\end{array}
\right\}  _{-\left(  L2.2\right)  }\\
+%
%TCIMACRO{\dsum \limits_{j=2}^{m-2}}%
%BeginExpansion
{\displaystyle\sum\limits_{j=2}^{m-2}}
%EndExpansion
\left\{
\begin{array}
[c]{c}%
E_{\theta}\left[  \overline{Z}_{k_{0}}\delta P_{0}^{T}\right]  \left[
E_{\theta}\left(  q_{0}\overline{Z}_{k_{0}}\overline{Z}_{k_{0}}^{T}-I\right)
\right]  ^{j-2}\times\\
E_{\theta}\left[  q_{01}\delta B_{1}\overline{Z}_{k_{0}}\overline{W}_{k_{1}%
}^{T}\right]  E_{\theta}\left[  q_{01}\overline{W}_{k_{1}}\overline{W}_{k_{1}%
}^{T}\right]  ^{-1}\times\\
E_{\theta}\left(  q_{01}\overline{W}_{k_{1}}\overline{W}_{k_{1}}^{T}-I\right)
^{m-1-j}E_{\theta}\left[  \overline{W}_{k_{1}}q_{0}\delta P_{1}\right]
\end{array}
\right\}  _{\_\left(  L2.3.j\right)  }\\
+E_{\theta}\left\{  \left[  q_{0}\delta P_{1}\delta B_{1}\right]  \overline
{Z}_{k_{0}}^{T}\right\}  \left[  E_{\theta}\left(  q_{0}\overline{Z}_{k_{0}%
}\overline{Z}_{k_{0}}^{T}\right)  -I\right]  ^{m-3}E_{\theta}\left[
\overline{Z}_{k_{0}}\delta P_{0}\right]  _{\_\left(  L2.4\right)  }%
\end{array}
\right\}
\]
It can be shown
\begin{align*}
&  \left(  L1.1\right)  -\left(  L2.1\right) \\
&  =\left(  -1\right)  ^{m-1}\left\{
\begin{array}
[c]{c}%
E_{\theta}\left\{  \left[  q_{0}\delta B_{0}\right]  \overline{Z}_{k_{0}}%
^{T}\right\}  E_{\theta}\left[  q_{0}\overline{Z}_{k_{0}}\overline{Z}_{k_{0}%
}^{T}\right]  ^{-1}\times\\
\left[  E_{\theta}\left(  q_{0}\overline{Z}_{k_{0}}\overline{Z}_{k_{0}}%
^{T}\right)  -I\right]  ^{m-1}E_{\theta}\left[  \overline{Z}_{k_{0}}\delta
P_{0}\right]
\end{array}
\right\}
\end{align*}%
\begin{align*}
&  \left(  L1.2\right)  -\left(  L2.2\right) \\
&  =\left(  -1\right)  ^{m-1}\left\{
\begin{array}
[c]{c}%
E_{\theta}\left\{  \left[  q_{01}\delta B_{1}\right]  \overline{W}_{k_{1}}%
^{T}\right\}  E_{\theta}\left[  q_{01}\overline{W}_{k_{1}}\overline{W}_{k_{1}%
}^{T}\right]  ^{-1}\times\\
\left[  E_{\theta}\left(  q_{01}\overline{W}_{k_{1}}\overline{W}_{k_{1}}%
^{T}\right)  -I\right]  ^{m-1}E_{\theta}\left[  \overline{W}_{k_{1}}%
q_{0}\delta P_{1}\right]
\end{array}
\right\}
\end{align*}
$\forall$ $2\leq j<m-1,$%
\begin{align*}
&  \left(  L1.3.j\right)  -\left(  L2.3.j\right) \\
&  =\left(  -1\right)  ^{m-1}\left\{
\begin{array}
[c]{c}%
E_{\theta}\left[  \overline{Z}_{k_{0}}\delta P_{0}^{T}\right]  \left[
E_{\theta}\left(  q_{0}\overline{Z}_{k_{0}}\overline{Z}_{k_{0}}^{T}-I\right)
\right]  ^{j-2}\times\\
E_{\theta}\left[  q_{01}\delta B_{1}\overline{Z}_{k_{0}}\overline{W}_{k_{1}%
}^{T}\right]  E_{\theta}\left[  q_{01}\overline{W}_{k_{1}}\overline{W}_{k_{1}%
}^{T}\right]  ^{-1}\times\\
E_{\theta}\left(  q_{01}\overline{W}_{k_{1}}\overline{W}_{k_{1}}^{T}-I\right)
^{m-j}E_{\theta}\left[  \overline{W}_{k_{1}}q_{0}\delta P_{1}\right]
\end{array}
\right\}
\end{align*}
If $\zeta_{m}\left(  L_{0},\theta\right)  \equiv E_{\theta}\left[  \delta
P_{0}\overline{Z}_{k_{0}}^{T}\right]  \left[  E_{\theta}\left(  q_{0}%
\overline{Z}_{k_{0}}\overline{Z}_{k_{0}}^{T}-I\right)  \right]  ^{m-3}%
\overline{Z}_{k_{0}}, $ then
\begin{align*}
&  \left(  L1.3.m-1\right)  -\left(  L2.4\right) \\
&  =E_{\theta}\left[  q_{01}\delta B_{1}\zeta_{m}\left(  L_{0},\theta\right)
\overline{W}_{k_{1}}^{T}\right]  E_{\theta}\left[  \overline{W}_{k_{1}}%
q_{0}\delta P_{1}\right]  -E_{\theta}\left\{  \zeta_{m}\left(  L_{0}%
,\theta\right)  q_{0}\delta P_{1}\delta B_{1}\right\} \\
&  =\left(  -1\right)  ^{m-1}\left\{
\begin{array}
[c]{c}%
E_{\theta}\left(  q_{01}\delta B_{1}\zeta_{m}\left(  L_{0},\theta\right)
\overline{W}_{k_{1}}^{T}\right)  E_{\theta}\left(  q_{01}\overline{W}_{k_{1}%
}\overline{W}_{k_{1}}^{T}\right)  ^{-1}\times\\
\left(  E_{\theta}\left(  q_{01}\overline{W}_{k_{1}}\overline{W}_{k_{1}}%
^{T}-I\right)  \right)  E_{\theta}\left(  \overline{W}_{k_{1}}q_{0}\delta
P_{1}\right)
\end{array}
\right\} \\
&  +\left(  -1\right)  ^{m}E_{\theta}\left(
\begin{array}
[c]{c}%
\Pi_{\theta}^{\bot}\left[  \left(
\begin{array}
[c]{c}%
q_{01}^{1/2}\delta B_{1}\zeta_{m}\left(  L_{0},\theta\right)
\end{array}
\right)  |\left(  q_{01}^{1/2}\overline{W}_{k_{1}}\right)  \right] \\
\times\Pi_{\theta}^{\bot}\left[  q_{0}q_{01}^{-1/2}\delta P_{1}|\left(
q_{01}^{1/2}\overline{W}_{k_{1}}\right)  \right]
\end{array}
\right)  _{-\left(  TB_{m-1,m-1}^{\left(  2\right)  }\right)  }%
\end{align*}
In summary%
\begin{gather*}
E_{\theta}\left(  \widehat{\psi}_{m}\right)  -\psi\left(  \theta\right) \\
=\left(  -1\right)  ^{m-1}BI_{m,1}+BI_{m-1,2}+\left(  TB_{m-1,m-1}^{\left(
2\right)  }\right)
\end{gather*}
Next we show $\left(  TB_{m-1,m-1}^{\left(  2\right)  }\right)  =BI_{m,2}%
-BI_{m-1,2},$%
\begin{align*}
&  \zeta_{m}\left(  L_{0},\theta\right) \\
&  =E_{\theta}\left[  \delta P_{0}\overline{Z}_{k_{0}}^{T}\right]  \left[
E_{\theta}\left(  q_{0}\overline{Z}_{k_{0}}\overline{Z}_{k_{0}}^{T}-I\right)
\right]  ^{m-3}\overline{Z}_{k_{0}}\\
&  =\left\{
\begin{array}
[c]{c}%
E_{\theta}\left[  \delta P_{0}\overline{Z}_{k_{0}}^{T}\right]  \left(  \left(
\begin{array}
[c]{c}%
E_{\theta}\left(  q_{0}\overline{Z}_{k_{0}}\overline{Z}_{k_{0}}^{T}\right)
^{-1}+\\
I-E_{\theta}\left(  q_{0}\overline{Z}_{k_{0}}\overline{Z}_{k_{0}}^{T}\right)
^{-1}%
\end{array}
\right)  \right) \\
\times\left[  E_{\theta}\left(  q_{0}\overline{Z}_{k_{0}}\overline{Z}_{k_{0}%
}^{T}-I\right)  \right]  ^{m-3}\overline{Z}_{k_{0}}%
\end{array}
\right\} \\
&  =\left\{  E_{\theta}\left[  \delta P_{0}\overline{Z}_{k_{0}}^{T}\right]
E_{\theta}\left(  q_{0}\overline{Z}_{k_{0}}\overline{Z}_{k_{0}}^{T}\right)
^{-1}\left[  E_{\theta}\left(  q_{0}\overline{Z}_{k_{0}}\overline{Z}_{k_{0}%
}^{T}-I\right)  \right]  ^{m-3}\overline{Z}_{k_{0}}\right\} \\
&  +\left\{
\begin{array}
[c]{c}%
E_{\theta}\left[  \delta P_{0}\overline{Z}_{k_{0}}^{T}\right]  E_{\theta
}\left(  q_{0}\overline{Z}_{k_{0}}\overline{Z}_{k_{0}}^{T}\right)
^{-1}E_{\theta}\left(  q_{0}\overline{Z}_{k_{0}}\overline{Z}_{k_{0}}%
^{T}-I\right) \\
\times\left[  E_{\theta}\left(  q_{0}\overline{Z}_{k_{0}}\overline{Z}_{k_{0}%
}^{T}-I\right)  \right]  ^{m-3}\overline{Z}_{k_{0}}%
\end{array}
\right\}
\end{align*}
Applying the above expression of $\zeta_{m}\left(  L_{0},\theta\right)  $ to
e.q$\left(  L3.1\right)  ,$ we find that
\[
\left(  TB_{m-1,m-1}^{\left(  2\right)  }\right)  +BI_{m-1,2}=BI_{m,2}%
\]

which completes the induction.

We want to mention that $TB_{j,j}^{\left(  2\right)  }$ $\left(  2\leq
j<m\right)  $ equals $\sum_{l=1}^{k_{0}^{j-1}}\left(
%TCIMACRO{\tprod \limits_{t=1}^{j}}%
%BeginExpansion
{\textstyle\prod\limits_{t=1}^{j}}
%EndExpansion
\widetilde{\tau}_{l,t}^{\left(  j\right)  }\left(  \theta\right)  -%
%TCIMACRO{\tprod \limits_{t=1}^{j}}%
%BeginExpansion
{\textstyle\prod\limits_{t=1}^{j}}
%EndExpansion
\tau_{l,t}^{\left(  j\right)  }\left(  \theta\right)  \right)  $,
$BI_{2,2}=\left(  \widetilde{\psi}^{\dag}\left(  \theta\right)  -\psi\left(
\theta\right)  \right)  +\widetilde{\tau}_{1,1}^{\left(  1\right)  }\left(
\theta\right)  -$ $\tau_{1,1}^{\left(  1\right)  }\left(  \theta\right)  ,$
$EB_{1}^{\left(  1\right)  }=$ $E_{\theta}\left(  \mathbb{IF}%
_{m,\widetilde{\psi}^{\dag}}\left(  \widehat{\theta}\right)  +\psi\left(
\widehat{\theta}\right)  \right)  -\widetilde{\psi}^{\dag}\left(
\theta\right)  ,$ $EB_{1}^{\left(  2\right)  }=E\left(  \mathbb{IF}%
_{m,\widetilde{\tau}_{1,1}^{\left(  1\right)  }\left(  \theta\right)  }\left(
\widehat{\theta}\right)  \right)  -\widetilde{\tau}_{1,1}^{\left(  1\right)
}\left(  \theta\right)  ,$ and $EB_{jj}^{\left(  2\right)  }=E\left(
\mathbb{IF}_{m,\widetilde{\widetilde{\psi}}_{jj}\left(  \theta\right)
}\left(  \widehat{\theta}\right)  \right)  -\widetilde{\widetilde{\psi}}%
_{jj}\left(  \theta\right)  .$ Therefore the results from this theorem is
consistent with eq.(\ref{bsp}). Technical details are not presented here.

Eq.(\ref{ebm2}) follows from eq.(\ref{ebm}) similarly as in the proof of
theorem \ref{BTOMM} and eq.(\ref{ebm3}) can be derived from eq.(\ref{ebm2}) by our
assumption of rate optimality of the initial estimator. Next we prove
eq.(\ref{tbm2}),%
\begin{align*}
&  \left\vert BI_{2,2}\right\vert \\
&  \leq\left\vert E\left(  \Pi_{\theta}^{\bot}\left(  q_{0}^{1/2}\delta
B_{0}|\left(  q_{0}^{1/2}\overline{Z}_{k_{0}}\right)  \right)  \Pi_{\theta
}^{\bot}\left(  q_{0}^{-1/2}\delta P_{0}|\left(  q_{0}^{1/2}\overline
{Z}_{k_{0}}\right)  \right)  \right)  \right\vert \\
&  +\left\vert E\left(  \Pi_{\theta}^{\bot}\left(  q_{01}^{1/2}\delta
B_{1}|\left(  q_{01}^{1/2}\overline{W}_{k_{1}}\right)  \right)  \Pi_{\theta
}^{\bot}\left(  q_{0}q_{01}^{-1/2}\delta P_{1}|\left(  q_{01}^{1/2}%
\overline{W}_{k_{1}}\right)  \right)  \right)  \right\vert \\
&  \leq\left\{  E\left(  \Pi_{\theta}^{\bot}\left(  q_{0}^{1/2}\delta
B_{0}|\left(  q_{0}^{1/2}\overline{Z}_{k_{0}}\right)  \right)  \right)
^{2}\right\}  ^{1/2}\left\{  E\left(  \Pi_{\theta}^{\bot}\left(  q_{0}%
^{-1/2}\delta P_{0}|\left(  q_{0}^{1/2}\overline{Z}_{k_{0}}\right)  \right)
\right)  ^{2}\right\}  ^{1/2}\\
&  +\left\{  E\left(  \Pi_{\theta}^{\bot}\left(  q_{01}^{1/2}\delta
B_{1}|\left(  q_{01}^{1/2}\overline{W}_{k_{1}}\right)  \right)  \right)
^{2}\right\}  ^{1/2}\left\{  E\left(  \Pi_{\theta}^{\bot}\left(  q_{0}%
q_{01}^{-1/2}\delta P_{1}|\left(  q_{01}^{1/2}\overline{W}_{k_{1}}\right)
\right)  \right)  ^{2}\right\}  ^{1/2}\\
&  =O_{p}\left(  \max\left[  k_{0}^{-\left(  \beta_{b_{0}}+\beta_{\pi_{0}%
}\right)  /d_{0}},k_{0}^{-\left(  \beta_{b_{1}}+\beta_{\pi_{1}}\right)
/d_{1}}\right]  \right)
\end{align*}
(Cauchy-Shwartz inequality. The last equality can easily be shown as in the
proof of theorem \ref{ON})%
\begin{align*}
&  \left\vert \left(  TB\left(  m,2\right)  \right)  \right\vert \\
&  =\left\vert E\left[
\begin{array}
[c]{c}%
\Pi_{\theta}^{\bot}\left[  \left(
\begin{array}
[c]{c}%
\frac{\pi_{1}^{1/2}}{\widehat{\pi}_{1}^{1/2}}\delta B_{1}\Pi_{\theta}\left(
q_{0}^{1/2}\delta P_{0}|\left(  q_{0}^{1/2}\overline{Z}_{k_{0}}\right)
\right)
\end{array}
\right)  |\left(  q_{01}^{1/2}\overline{W}_{k_{1}}\right)  \right] \\
\times\Pi_{\theta}^{\bot}\left[  q_{0}q_{01}^{-1/2}\delta P_{1}|\left(
q_{01}^{1/2}\overline{W}_{k_{1}}\right)  \right]
\end{array}
\right]  \right\vert \\
&  \leq\left\vert E\left[
\begin{array}
[c]{c}%
\Pi_{\theta}^{\bot}\left[  \left(
\begin{array}
[c]{c}%
\delta B_{1}q_{01}^{1/2}\delta P_{0}%
\end{array}
\right)  |\left(  q_{01}^{1/2}\overline{W}_{k_{1}}\right)  \right] \\
\times\Pi_{\theta}^{\bot}\left[  q_{0}q_{01}^{-1/2}\delta P_{1}|\left(
q_{01}^{1/2}\overline{W}_{k_{1}}\right)  \right]
\end{array}
\right]  \right\vert \\
&  +\left\vert E\left[
\begin{array}
[c]{c}%
\Pi_{\theta}^{\bot}\left[  \left(
\begin{array}
[c]{c}%
\frac{\pi_{1}^{1/2}}{\widehat{\pi}_{1}^{1/2}}\delta B_{1}\Pi_{\theta}^{\bot
}\left(  q_{0}^{1/2}\delta P_{0}|\left(  q_{0}^{1/2}\overline{Z}_{k_{0}%
}\right)  \right)
\end{array}
\right)  |\left(  q_{01}^{1/2}\overline{W}_{k_{1}}\right)  \right] \\
\times\Pi_{\theta}^{\bot}\left[  q_{0}q_{01}^{-1/2}\delta P_{1}|\left(
q_{01}^{1/2}\overline{W}_{k_{1}}\right)  \right]
\end{array}
\right]  \right\vert \\
&  \leq O_{p}\left(  \max\left(  k_{1}^{-\left(  \min\left(  \beta_{\pi_{0}%
},\beta_{b_{1}}\right)  +\beta_{\pi_{1}}\right)  /d_{1}},k_{1}^{-\beta
_{\pi_{1}}/d_{1}}k_{0}^{-\beta_{\pi_{0}}/d_{0}}\left(  \frac{\log n}%
{n}\right)  ^{\frac{\beta_{b_{1}}}{d_{1}+\beta_{b_{1}}}}\right)  \right)
\end{align*}
The last inequality holds because
\begin{align*}
&  E\left(  \Pi_{\theta}^{\bot}\left[  \left(
\begin{array}
[c]{c}%
\frac{\pi_{1}^{1/2}}{\widehat{\pi}_{1}^{1/2}}\delta B_{1}\Pi_{\theta}^{\bot
}\left(  q_{0}^{1/2}\delta P_{0}|\left(  q_{0}^{1/2}\overline{Z}_{k_{0}%
}\right)  \right)
\end{array}
\right)  |\left(  q_{01}^{1/2}\overline{W}_{k_{1}}\right)  \right]  \right)
^{2}\\
&  \leq E\left(  \frac{\pi_{1}^{1/2}}{\widehat{\pi}_{1}^{1/2}}\delta B_{1}%
\Pi_{\theta}^{\bot}\left(  q_{0}^{1/2}\delta P_{0}|\left(  q_{0}%
^{1/2}\overline{Z}_{k_{0}}\right)  \right)  \right)  ^{2}\\
&  \leq\left\vert \left\vert \frac{\pi_{1}^{1/2}}{\widehat{\pi}_{1}^{1/2}%
}\delta B_{1}\right\vert \right\vert _{\infty}^{2}E\left(  \Pi_{\theta}^{\bot
}\left(  q_{0}^{1/2}\delta P_{0}|\left(  q_{0}^{1/2}\overline{Z}_{k_{0}%
}\right)  \right)  \right)  ^{2}%
\end{align*}
(The first inequality holds because the orthogonal projection operator has a
norm no greater than 1)%
\begin{align*}
&  \left\vert TB\left(  m,3\right)  \right\vert \\
&  =\left\vert E\left[
\begin{array}
[c]{c}%
\Pi_{\theta}^{\bot}\left[  \left(
\begin{array}
[c]{c}%
q_{01}^{1/2}\delta B_{1}E\left[  \delta P_{0}\overline{Z}_{k_{0}}^{T}\right]
E\left[  q_{0}\overline{Z}_{k_{0}}\overline{Z}_{k_{0}}^{T}\right]  ^{-1}\\
\times\left(  E\left[  q_{0}\overline{Z}_{k_{0}}\overline{Z}_{k_{0}}%
^{T}-I\right]  \right)  ^{m-2}\overline{Z}_{k_{0}}%
\end{array}
\right)  |\left(  q_{01}^{1/2}\overline{W}_{k_{1}}\right)  \right] \\
\times\Pi_{\theta}^{\bot}\left[  q_{0}q_{01}^{-1/2}\delta P_{1}|\left(
q_{01}^{1/2}\overline{W}_{k_{1}}\right)  \right]
\end{array}
\right]  \right\vert \\
&  \leq\left\{
\begin{array}
[c]{c}%
\left\{  E\left(
\begin{array}
[c]{c}%
q_{01}^{1/2}\delta B_{1}E\left[  \delta P_{0}\overline{Z}_{k_{0}}^{T}\right]
E\left[  q_{0}\overline{Z}_{k_{0}}\overline{Z}_{k_{0}}^{T}\right]  ^{-1}\\
\times\left(  E\left[  q_{0}\overline{Z}_{k_{0}}\overline{Z}_{k_{0}}%
^{T}-I\right]  \right)  ^{m-2}\overline{Z}_{k_{0}}%
\end{array}
\right)  ^{2}\right\}  ^{1/2}\\
\times\left\{  E\left(  \Pi_{\theta}^{\bot}\left[  q_{0}q_{01}^{-1/2}\delta
P_{1}|\left(  q_{01}^{1/2}\overline{W}_{k_{1}}\right)  \right]  \right)
^{2}\right\}  ^{1/2}%
\end{array}
\right\}
\end{align*}
(Cauchy-Shwartz inequality and projection operator has operator norm of 1)%
\begin{align*}
&  \leq\left\{
\begin{array}
[c]{c}%
\left\vert \left\vert q_{01}^{1/2}\frac{f_{1}}{\widehat{f}_{1}}\right\vert
\right\vert _{\infty}\left\vert \left\vert \delta B_{1}\right\vert \right\vert
_{\infty}\left\vert \left\vert q_{0}^{-1/2}\frac{\widehat{f}_{0}}{f_{0}%
}\right\vert \right\vert _{\infty}\left\vert \left\vert \frac{f_{0}}{\pi
_{0}\widehat{\pi}_{0}}\right\vert \right\vert _{\infty}\\
\times\left\vert \left\vert \delta g_{0}\right\vert \right\vert _{\infty
}^{m-2}\left\{  \int\left(  \pi_{0}-\widehat{\pi}_{0}\right)  ^{2}%
dL_{0}\right\}  ^{1/2}\times\\
\left\{  E\left(  \Pi_{\theta}^{\bot}\left[  q_{0}q_{01}^{-1/2}\delta
P_{1}|\left(  q_{01}^{1/2}\overline{W}_{k_{1}}\right)  \right]  \right)
^{2}\right\}  ^{1/2}%
\end{array}
\right\} \\
&  =O_{p}\left(  \left(  \frac{\log n}{n}\right)  ^{-\frac{\beta_{b_{1}}%
}{d_{1}+\beta_{b_{1}}}-\frac{\left(  m-2\right)  \beta_{g_{0}}}{d_{0}%
+\beta_{g_{0}}}}n^{-\frac{\beta_{\pi_{0}}}{d_{0}+\beta_{\pi_{0}}}}%
k_{1}^{-\beta_{\pi_{1}}/d_{1}}\right)
\end{align*}

\end{proof}

\end{document}